\theoremstyle{plain}
\newtheorem{thmcounter}{Theorem}
\newtheorem{theorem}[thmcounter]{Theorem}
\newcommand{\theoremref}[1]{Theorem~\protect\ref{#1}}
\theoremstyle{plain}
\theoremstyle{plain}
\theoremstyle{plain}
\newtheorem{definitioncounter}{Theorem}
\newtheorem{definition}[definitioncounter]{Definition}
\newcommand{\defref}[1]{Definition~\protect\ref{#1}}
\newcommand{\QEDsymbol}{$\hfill\square$}
\newcommand\eq{{=}}
\newcounter{doublecolumnequation}
\preto\subequations{\ifhmode\unskip\fi}
\newcommand{\figref}[1]{Fig.~\protect\ref{#1}}
\newcommand{\secref}[1]{Section~\protect\ref{#1}}
\def\suscript(#1,#2,#3){{#1}^{#2}_{#3}}
\newcommand{\fracparams}[2]{\genfrac{}{}{0pt}{}{{#1}}{{#2}}}
\newcommand\scalemath[3]{\scalebox{#2}[#1]{\mbox{\ensuremath{\displaystyle{#3}}}}}
\newcommand{\emptycoefficient}{\raisebox{0.75mm}{\rule{16pt}{0.4pt}}}
\newcommand{\emptycoefficientSHORT}{\raisebox{0.75mm}{\rule{10pt}{0.4pt}}}
\newcommand{\FoxHDefinition}[3]{\suscript(\rm{H},{#1},{#2}){\left[#3\right]}}
\newcommand{\FoxH}[6][right]{
	\ifthenelse{\equal{#1}{right}}{\suscript(\rm{H},{#2},{#3}){\left[{#4}\left|\fracparams{#5}{#6}\right.\right]}}{
		\ifthenelse{\equal{#1}{left}}{\suscript(\rm{H},{#2},{#3}){\left[\left.{#4}\right|\fracparams{#5}{#6}\right]}}{
			\suscript(\rm{H},{#2},{#3}){\left[{#4}\left|\fracparams{#5}{#6}\right.\right]}
		}
	}
}
\newcommand{\MeijerGDefinition}[3]{\suscript(\rm{G},{#1},{#2}){\left[#3\right]}}
\newcommand{\MeijerG}[6][right]{
	\ifthenelse{\equal{#1}{right}}{\suscript(\rm{G},{#2},{#3}){\left[{#4}\left|\fracparams{#5}{#6}\right.\right]}}{
		\ifthenelse{\equal{#1}{left}}{\suscript(\rm{G},{#2},{#3}){\left[\left.{#4}\right|\fracparams{#5}{#6}\right]}}{
			\suscript(\rm{G},{#2},{#3}){\left[{#4}\left|\fracparams{#5}{#6}\right.\right]}
		}
	}
}
\newcommand{\FoxIDefinition}[3]{\suscript(\rm{I},{#1},{#2}){\left[#3\right]}}
\newcommand{\FoxI}[6][right]{
	\ifthenelse{\equal{#1}{right}}{\suscript(\rm{I},{#2},{#3}){\left[{#4}\left|\fracparams{#5}{#6}\right.\right]}}{
		\ifthenelse{\equal{#1}{left}}{\suscript(\rm{I},{#2},{#3}){\left[\left.{#4}\right|\fracparams{#5}{#6}\right]}}{
			\suscript(\rm{I},{#2},{#3}){\left[{#4}\left|\fracparams{#5}{#6}\right.\right]}
		}
	}
}
\newcommand{\BesselI}[2][0]{
	\suscript({I},{},{#1})\left({#2}\right)
}
\newcommand{\BesselK}[2][0]{
	\suscript({K},{},{#1})\!\left({#2}\right)
}
\newcommand{\HeavisideTheta}[1]{
	{\mathord{\theta}\!}\left({#1}\right)
}
\newcommand{\DiracDelta}[1]{
	{\mathord{\delta}\!\left({#1}\right)}
}
\newcommand{\KroneckerDelta}[2]{
	{\delta_{{#1},{#2}}}
}
\newcommand{\Binomial}[2]{
 	\genfrac{(}{)}{0pt}{}{#1}{#2}
}
\newcommand{\RealPart}[1]{
	\Re\!\left\{{#1}\right\}
}
\newcommand{\ImagPart}[1]{
	\Im\!\left\{{#1}\right\}
}
\DeclareMathOperator*{\argmax}{
	{\operatorname{arg}\operatorname{max}}
}
\DeclareMathOperator*{\argmin}{
	{\operatorname{arg}\operatorname{min}}
}
\newcommand{\trace}{
	{\operatorname{Tr}}
}
\newcommand{\rank}{
	{\operatorname{rank}}
}
\newcommand{\diag}{
	{\,\operatorname{diag}}
}
\newcommand{\abs}[1]{
	\left|{#1}\right|
}
\newcommand{\defmat}[1]{
	{\boldsymbol{\mathrm{#1}}}
}
\newcommand{\defvec}[1]{
	{\boldsymbol{#1}}
}
\newcommand{\defrmat}[1]{
	{\boldsymbol{\MakeUppercase{#1}}}
}
\newcommand{\defrvec}[1]{
	{\boldsymbol{\MakeUppercase{#1}}}
}
\newcommand{\Erfc}[1]{
	{{\rm{erfc}}\!\left({#1}\right)}
}
\newcommand{\Pochhammer}[2]{
	{\left({#1}\right)_{#2}}
}
\newcommand{\Kurtosis}[1]{
	{{\mathrm{Kurt}}\!\left[{#1}\right]}
}
\newcommand{\Skewness}[1]{
	{{\mathrm{Skew}}\!\left[{#1}\right]}
}
\newcommand{\Expected}[1]{
	{{\mathbb{E}}\!\left[{#1}\right]}
}
\newcommand{\Correlation}[2]{
	{\,{\mathbb{R}}\!\left[{#1};{#2}\right]}
}
\newcommand{\iseven}[1]{
	{{\mathrm{en}}\!\left({#1}\right)}
}
\newcommand{\Variance}[1]{
	{{\mathrm{Var}}\!\left[{#1}\right]}
}
\newcommand{\Covariance}[2]{
	{{\mathrm{Cov}}\!\left[{#1},{#2}\right]}
}
\newcommand{\PseudoVariance}[1]{
	{{\mathrm{PVar}}\!\left[{#1}\right]}
}
\newcommand{\imaginary}{\jmath}
\newcommand{\mathsym}[1]{{}}
\newcounter{acroplace}
\newrobustcmd{\acrochoice}[2]{\trim@pre@space{
    \ifthenelse{\equal{\value{acroplace}}{0}}
        {{#2}}
        {\ifthenelse{\equal{\value{acroplace}}{1}}
            {{#1}}
            {\!{\MakeUppercase#2}}
        }
    }} 
\begin{document}
\markboth{F. YILMAZ, McLeish Distribution: Performance of Digital Communications over ~\ldots~ (AWMN) Channels}{IEEE Access}


\title{McLeish Distribution: Performance of Digital Communications over Additive White McLeish Noise (AWMN) Channels}
\author{
    \IEEEauthorblockN{{\large Ferkan Yilmaz}}\\
    \IEEEauthorblockA{{\normalsize Y{\i}ld{\i}z Technical University, Faculty of Electrical \& Electronics Engineering, Department of Computer Engineering, Istanbul, Turkey (e-mail: {\fontfamily{qcr}\selectfont ferkan@yildiz.edu.tr}).}}
    \vspace{-15mm}
}

\maketitle

\begin{abstract}
The objective of this article is to propose and statistically validate a more general additive non-Gaussian noise distribution, which we term McLeish distribution, whose random nature can model different impulsive noise environments commonly encountered in practice and provides a robust alternative to Gaussian noise distribution. In particular, for the first time in the literature, we establish the laws of McLeish distribution and therefrom derive the laws of the sum of McLeish distributions by obtaining closed-form expressions for their probability density function (PDF), cumulative distribution function (CDF), complementary CDF (C\textsuperscript{2}DF), moment-generating function (MGF) and higher-order moments. Further, for certain problems related to the envelope of complex random signals, we extend McLeish distribution to complex McLeish distribution and thereby propose circularly\,/\,elliptically symmetric (CS\,/\,ES) complex McLeish distributions with closed-form PDF, CDF, MGF and higher-order moments. For generalization of one-dimensional distribution to multi-dimensional distribution, we develop and propose both multivariate McLeish distribution and multivariate complex CS\,/\,ES (CCS\,/\,CES) McLeish distribution with analytically tractable and closed-form PDF, CDF, C\textsuperscript{2}DF and MGF.
In addition to the proposed McLeish distribution framework and for its practical illustration, we theoretically investigate and prove the existence of McLeish distribution as additive noise in communication systems. Accordingly, we introduce additive white McLeish noise (AWMN) channels. For coherent\,/\,non-coherent signaling over AWMN channels, we propose novel expressions for maximum \textit{a priori} (MAP) and maximum likelihood (ML) symbol decisions and thereby obtain closed-form expressions for both bit error rate (BER) of binary modulation schemes and symbol error rate (SER) of various M-ary modulation schemes. Further, we verify the validity and accuracy of our novel BER\,/\,SER expressions with some selected numerical examples and some computer-based simulations.
\end{abstract}

\begin{IEEEkeywords}
Additive white McLeish noise channels,
Coherent\,/\,non-coherent signaling,
Conditional bit error rate, 
Conditional symbol error rate, 
McLeish distribution,
McLeish Q-function, 
Multivariate McLeish distribution,
Non-Gaussian noise.
\end{IEEEkeywords}

\bstctlcite{IEEEexample:BSTcontrol}

\section*{List of Acronyms}
\acresetall
\begin{acronym}[Q-function]
\newacro{i.i.d.}[\emph{i.i.d.}]{independent and identically distributed}
\newacro{i.n.i.d.}[\emph{i.n.i.d.}]{independent and non-identically distributed}
\newacro{c.i.d.}[\emph{c.i.d.}]{correlated and identically distributed}
\newacro{c.n.i.d.}[\emph{c.n.i.d.}]{correlated and non-identically distributed}
\newacro{u.i.d.}[\emph{u.i.d.}]{uncorrelated and identically distributed}
\newacro{u.n.i.d.}[\emph{u.n.i.d.}]{uncorrelated and non-identically distributed}
\newacro{RMS}[RMS]{root mean square}
\newacro{UWB}[UWB]{ultra-wide band communications}
\newacro{WCC}[WCC]{wireless chip-to-chip communications}
\newacro{WPC}[WPC]{wireless-powered communications}
\newacro{SIRP}[SIRP]{spherically invariant random process}
\acro{ASE}[ASE]{\acrochoice{Amplified Spontaneous Emission}{amplified spontaneous emission}}
\acro{ASK}[ASK]{\acrochoice{Amplitude Shift Keying}{amplitude shift keying}}
\acro{AWGN}[AWGN]{\acrochoice{Additive White Gaussian Noise}{additive white Gaussian noise}}
\acro{AWLN}[AWLN]{\acrochoice{Additive White Laplacian Noise}{additive white Laplacian noise}}
\acro{AWMN}[AWMN]{\acrochoice{Additive White McLeish Noise}{additive white McLeish noise}}
\acro{BER}[BER]{\acrochoice{Bit Error Rate}{bit error rate}}
\acro{BDPSK}[BDPSK]{\acrochoice{Binary Differential Phase Shift Keying}{binary differential phase shift keying}}
\acro{BFSK}[BFSK]{\acrochoice{Binary Frequency Shift Keying}{binary frequency shift keying}}
\acro{BNCFSK}[BNCFSK]{\acrochoice{Binary Non-Coherent Frequency Shift Keying}{binary non-coherent frequency shift keying}}
\acro{BPSK}[BPSK]{\acrochoice{Binary Phase Shift Keying}{binary phase shift keying}}
\acro{CDMA}[CDMA]{\acrochoice{Code Division Multiple Access}{code division multiple access}}
\acro{CR}[CR]{\acrochoice{Cognitive Radio}{cognitive radio}}
\acro{CS}[CS]{\acrochoice{Circularly Symmetric}{circularly symmetric}}
\acro{CSI}[CSI]{\acrochoice{Channel-Side Information}{channel-side information}}
\acro{CCDF}[C\textsuperscript{2}DF]{\acrochoice{Complementary CDF}{complementary CDF}}
\acro{CCS}[CCS]{\acrochoice{Complex and Circularly Symmetric}{complex and circularly symmetric}}
\acro{CDF}[CDF]{\acrochoice{Cumulative Distribution Function}{cumulative distribution function}}
\acro{CES}[CES]{\acrochoice{Complex and Elliptically Symmetric}{complex and elliptically symmetric}}
\acro{CLT}[CLT]{\acrochoice{Central Limit Theorem}{central limit theorem}}
\acro{DPSK}[DPSK]{\acrochoice{Differential Phase Shift Keying}{differential phase shift keying}}
\acro{DS}[DS]{\acrochoice{Direct Sequence}{direct sequence}}
\acro{DSL}[DSL]{\acrochoice{Digital Subscriber Line}{digital subscriber line}}
\acro{ES}[ES]{\acrochoice{Elliptically Symmetric}{elliptically symmetric}}
\acro{FSO}[FSO]{\acrochoice{Free-Space Optical Communications}{free-space optical communications}}
\acro{ILT}[ILT]{\acrochoice{Inverse Laplace Transform}{inverse Laplace transform}}
\acro{IQR}[IQR]{\acrochoice{Inphase-to-Quadrature Ratio}{inphase-to-quadrature ratio}}
\acro{LT}[LT]{\acrochoice{Laplace Transform}{Laplace transform}}
\acro{M-ASK}[M-ASK]{\acrochoice{M-ary Amplitude Shift Keying}{M-ary amplitude shift keying}}
\acro{M-DPSK}[M-DPSK]{\acrochoice{M-ary Differential Phase Shift Keying}{M-ary differential phase shift keying}}
\acro{M-PSK}[M-PSK]{\acrochoice{M-ary Phase Shift Keying}{M-ary phase shift keying}}
\acro{M-QAM}[M-QAM]{\acrochoice{M-ary Quadrature Amplitude Modulation}{M-ary quadrature amplitude modulation}}
\acro{MAI}[MAI]{\acrochoice{Multiple Access Interference}{multiple access interference}}
\acro{MAP}[MAP]{\acrochoice{Maximum A Posteriori Decision}{maximum a posteriori decision}}
\acro{MGF}[MGF]{\acrochoice{Moment-Generating Function}{moment-generating function}}
\acro{ML}[ML]{\acrochoice{Maximum Likelihood Decision}{maximum likelihood decision}}
\acro{MUI}[MUI]{\acrochoice{Multiple User Interference}{multiple user interference}}
\acro{MOM}[MOM]{\acrochoice{Method of Moments Estimation}{method of moments estimation}}
\acro{OOK}[OOK]{\acrochoice{On-Off Keying}{on-off keying}}
\acro{PDF}[PDF]{\acrochoice{Probability Density Function}{probability density function}}
\acro{PLC}[PLC]{\acrochoice{Power-Line Communications}{power-line communication}}
\acro{PMF}[PMF]{\acrochoice{Probability Mass Function}{probability mass function}}
\acro{Q-function}[Q-function]{\acrochoice{Quantile-function}{quantile-function}}
\acro{QAM}[QAM]{\acrochoice{Quadrature Amplitude Modulation}{quadrature amplitude modulation}}
\acro{QPSK}[QPSK]{\acrochoice{Quadrature Phase Shift Keying}{quadrature phase shift keying}}
\acro{RF}[RF]{\acrochoice{Radio Frequency}{radio frequency}}
\acro{SER}[SER]{\acrochoice{Symbol Error Rate}{symbol error rate}}
\acro{SNR}[SNR]{\acrochoice{Signal-to-Noise Ratio}{signal-to-noise ratio}}
\acro{WSS}[WSS]{\acrochoice{Wide Sense Stationary}{wide sense stationary}}
\end{acronym}

\section{Introduction}
\label{Section:Introduction} 
\IEEEPARstart{T}{he additive white noise} in communication systems\cite[and references therein]{BibAlouiniBook,BibGoldsmithBook,BibProakisBook,BibRappaportBook,BibLapidothBook2017} is commonly defined as an arbitrarily varying undesired signal that additively corrupts signal transmission over communication channels. In the last few decades, many modern techniques have been developed or improved to overcome and eliminate the problem of reliable transmission over noisy communication channels. Such techniques constitute both theoretical information and experimental results on source\,/\,channel coding and modulation schemes. In order to bring the theory and practice together concerning reliable transmission, scientists and researchers have evaluated the analyses of most of the techniques for various noisy channels where it is widely agreed to have signal transmission corrupted additively by thermal noise. The most significant property of thermal noise is that it is abstracted by a complex Gaussian distribution in consequence of the application of \acf{CLT} on the sum of infinitely small noise sources  \cite{BibProakisBook,BibRappaportBook,BibLapidothBook2017,BibFellerBook1968,BibBillingsleyBook1979}. The Gaussian abstraction of additive noise, usually termed as additive Gaussian noise, provides an insight into the underlying behavior of communication channels, while it ignores some other impairments that are common in the nature of various communication channels. For instance, rather than the thermal noise, the presence of undesirable interference signals, which arise in the form of random bursts for a short period of time, induces random fluctuations in the power~of additive noise. Such additive noise with random power fluctuations is called additive non-Gaussian noise, sometimes termed as impulsive additive noise and is of particular concern in many communication systems.   

From the experimental point of view, there~are~many~communication systems in which signal transmission is exposed to additive non-Gaussian noise. For example, in \ac{DSL} communication system, the~random~noise-power fluctuations, predominantly caused by electromagnetic interference due to electrical switches~and~home appliances, are an example source of \mbox{additive non-Gaussian noise} \cite{BibKerpezGottliebTCOM1995,BibNedevThesis2003,BibToumpakarisCioffiGardanTCOM2004,BibAlNaffouriQuadeerCaireISIT2011}. Also, \ac{PLC} is another communication system suffering from additive non-Gaussian noise. As such, in \ac{PLC} system, the impulsive nature of additive non-Gaussian noise inherently forms due to switching transients among different appliances and devices  \cite{BibFerreiraLampeNewburySwartBook,BibHanBook2017,BibGotzRappDostertTCM2004,BibKatayamaYamazatoOkadaJSAC2006,BibHanStoicaKaiserOtterbachDostertDSP2017,BibMengGuanChenTPD2005}. Even if signal transmission over \ac{PLC} networks has been verified as a good technique, the impulsive~nature of non-Gaussian noise is often observed as a hindrance for more efficient \ac{PLC}-based transmission \cite{BibZimmermannDostertTEC2002,BibMengGuanChenTPD2005,BibLinNassarEvansJSAC2013}. 
Additive non-Gaussian noise is also experienced in underwater acoustics channels, which results from interference and malicious jamming \cite{BibWenzASA1962,BibBrockettHinichWilsonASA1987,BibMiddletonOE1987,BibPowellWilsonBook1989,BibSteinASA1995,BibMachellPenrodEllis1989,BibEtterBook2018,BibAbrahamCollection2019}. Other types of communication channels, where signal transmission is subjected to additive \mbox{non-Gaussian} noise, typically include wireless fading channels such as urban and indoor radio channels \cite{BibMiddletonTEC1972,BibMiddletonTCOM1973,BibMiddletonTEC1977,BibMiddletonSpauldingBook1986,BibBlackardRappaportBostianJSAC1993,BibBlankenshipKriztmanRappaportVTC1997,BibBlankenshipRappaportTAP1998}, \ac{UWB} \cite{BibFluryBoudecICUW2006,BibDhibiKaiserMNA2006,BibBaturKocaDundar2008}, frequency\,/\,time-hopping with jamming \cite{BibMoonWongSheaTCOM2006,BibHuBeaulieuRWS2008}, millimeter-wave (around 60 GHz or higher) radio channels \cite{BibCheffenaCM2016,BibShhabRizanerUlusoyAmcaUCMMT2017,BibIqbalLuoMullerSteinbockSchneiderDupleichHafnerThoma2019}, \ac{WCC} \cite{BibMatolakTWC2012,BibNossekEtAlCollection2013,BibSwarbrickPatent2016,BibDrostHopkinsHoSutherlandJSSC2004,BibSuzukiHayakawaPRB2017}, and wireless transmissions under strong interference conditions \cite{BibThompsonChang1994,BibGowdaAnnampeduViswanathanICC1998,BibDhibiTSPKaiser2006,BibDhibiKaiserMNA2006,BibFiorinaGLOBECOM2006,BibHuBeaulieuIEEERWS2008,BibBeaulieuNiranjayan2010}. Further, some impulsive scenarios such as engine ignition, rotating machinery, lighting, as well as some impulsive multi-user interference and multi-path propagation can also produce additive non-Gaussian noise in wireless channels \cite{BibMiddletonTCOM1973,BibMiddletonTEC1977,BibMiddletonSpauldingBook1986,BibSousaIT1992,BibBlackardRappaportBostianJSAC1993,BibGonzalezPhDThesis1997,BibIlowHatzinakosTSP1998,BibWangPoorTSP1999,BibMiddletonTIT1999}. Impulsive effects that introduce additive non-Gaussian noise can also be found in \ac{CR} \cite{BibMitolaMitolaPhDThesis2000,BibHaykinJSAC2005,BibHuWillkommAbusubaihGrossVlantisGerlaWoliszICM2007,BibGoldsmithJafarMaricSrinivasa2009,BibHossainNiyatoHan2009,BibWangLiuJSTSP2011,BibAkyildizLeeVuranMohanty2006} due to the simultaneous spectrum access under miss-detection events  \cite{BibWillkommGrossWolisz2005,BibLiangZengPehHoangICC2007,BibGeirhoferTongSadlerICM2007,BibJafarSrinivasaJSAC2007}. The miss-detection event occurs when a cognitive user fails to detect an active primary user. In this event, collisions happens and generates additive non-Gaussian noise in the signal transmission, which considerably strikes the performance of cognitive links. In addition, in recent years, the impulsive nature of additive noise in \ac{FSO} has received much attention\cite[and\!~references\!~therein]{BibJakemanIOP1980,BibDorfBook2006,BibKambojMallikAgrawalSchoberSPCOM2012}. An essential aspect in optical communications is the existence of the \ac{ASE} noise \cite{BibVaninJacobsenBerntsonOSA2007,BibWitzensMullerMoscosoMartirIPJ2018}. It has been experimentally shown in  \cite{BibChanConradiJLT1997} and theoretically predicted in  \cite{BibMarcuseJLT1990,BibHumbletAzizogluJLT1991,BibMarcuseJLT1991,BibLeeShimJLT1994} that the \ac{ASE} noise follows a non-Gaussian distribution. It is also worth mentioning that, in \ac{WPC} \cite{BibBiHoZhangICM2015,BibLuWangNiyatoKimHanICAST2015,BibKrikidisTimotheouNikolaouZhengNgSchober2014,BibMohjaziMuhaidatDianatiAlQutayriAlDhahir2018,BibMohjaziPhDThesis2018}, we typically observe that wireless power transmission causes some random fluctuations in the power supply voltage of wireless powered radio circuits, which arbitrarily shifts the optimum circuit operating point. Thus, additive noise in \ac{WPC} typically follows a non-Gaussian distribution. 

Consequently, due to the facts and observations mentioned above, we can undoubtedly notice and easily deduce that additive non-Gaussian noise is extremely common in communication channels. Thus, to design different communication techniques and protocols properly, this ubiquitous presence also makes the performance evaluation  more challenging for different coherent\,/\,non-coherent signalling over additive non-Gaussian noise channels. However, to the best of our knowledge, there is no statistical framework in the literature to investigate the performance evaluation  for additive non-Gaussian noise channels. 
\subsection{Non-Gaussian Noise Distributions}
\label{Section:Introduction:NonGaussianDistributions}
From the theoretical point of view, it is worth noting that additive noise following Gaussian distribution has been shown in  \cite{BibShomoronyAvestimehrISIT2012,BibShomoronyAvestimehrTIT2013} 
and operationally justified in \cite{BibLapidothTIT1996} to be the worst noise distribution for communication channels while minimizing the capacity of signal transmission with respect to a noise variance constraint. Hence, the nature of additive non-Gaussian noise in communication channels has impulsive effects that can be properly characterized by its excess-Kurtosis \cite{BibKayBook1998}, where the excess-Kurtosis is zero for Gaussian noise distribution. A noise distribution with a positive excess-Kurtosis has a heavier tail than the Gaussian distribution and hence is identified \textit{(strictly considered)} as a non-Gaussian distribution. In order to adequately capture different impulsive noise effects, many non-Gaussian distributions such as Bernoulli-Gaussian, Middleton Class-A, Class-B and Class-C, Laplacian, symmetric $\alpha$-stable (${S}\alpha{S}$) and generalized Gaussian distributions are proposed in literature. The fact that non-Gaussian distribution may or may not provide mathematically tractable and analytically closed-form statistical results has attracted less attention from research community.

In literature, Bernoulli-Gaussian distribution has been used as an approximation of impulsive noise in communication channels  \cite{BibGhoshTCOM1996,BibShongweFerreiraHanVinck1999,BibMaSoGunawanTPD2005,BibPighiFranceschiniFerrariRaheliTCOM2009,BibAlNaffouriQuadeerCaireISIT2011,BibHerathTranLeNgocICC2012,BibShongweyVinckFerreiraISPLCA2014}. Also, Middleton Class-A, Class-B and Class-C distributions \cite{BibMiddletonTIT1999} distinguish impulsive noise according to the frequency range occupied by~the impulsive effects compared to the receiver bandwidth and have been extensively studied in the literature \cite{BibTepedelenliogluGLOBECOM2004,BibTepedelenliogluGaoTVT2005}. Laplacian distribution is another non-Gaussian distribution~used~to~model the additive impulsive noise effects in signal processing\,/\,detection and communication studies \cite{BibBernsteinIEEE1974,BibDurisiUWST2002,BibHu2004Accurate,BibFiorinaGLOBECOM2006,BibDhibiMNA2006,BibDhibiTSP2006Interference,BibBeaulieuTVT2008,BibHu2008Characterizing,BibBeaulieuIEEE2009,BibChianiIEEE2009Coexistence,BibJiang2010BER,BibKambojMallikAgrawalSchoberSPCOM2012}. Another popular non-Gaussian distribution is ${S}\alpha{S}$ distribution providing a considerably accurate model for impulsive noise \cite{BibIlowHatzinakosTSP1998,BibYangPetropuluTSP2003,BibHaenggiAndrewsBaccelliDousseFranceschettiJSAC2009,BibBrownZoubirTSP2000,BibTsihrintzisMILCOM1996,BibKuruogluRaynerFitzgeraldSSAP1998,BibRajanTepedelenliogluTWC2010}. On the top of Laplacian and ${S}\alpha{S}$ distributions, the generalized Gaussian distribution is one of the most versatile non-Gaussian distributions in the literature. It is commonly used to model noises in several digital communication systems  \cite{BibViswanathanAnsariASSP1989,BibZahabiTadaionICT2010,BibSouryYilmazAlouiniCOML2012,BibSouryYilmazAlouiniCOML2013,BibSouryYilmazAlouiniISIT2013,BibBiglieriYaoYangWCOM2015}. Each non-Gaussian noise distribution mentioned above can be considered as an alternative \textit{(but feeble alternative)} to Gaussian noise distribution and cannot be appropriately interpreted as the sum of large number of independent and identically distributed impulsive noise sources with small power. From the experimental point of view, unlike Gaussian distribution, each non-Gaussian distribution has heavy-tail behavior modeled by positive excess-Kurtosis as mentioned previously. The presence of positive excess-Kurtosis makes some statistical moments infinite and therefrom makes it impossible to fit into many real-world phenomena. For instance, the variance of ${S}\alpha{S}$ distribution is infinite for all $\alpha\!<\!2$. The lack of characterizing the real-world phenomena of impulsive noise sources from Gaussian distribution to non-Gaussian distribution is the crucial weakness of non-Gaussian noise distributions mentioned above. In this context, we propose that \textit{the distribution proposed by McLeish in}  \cite{BibMcLeishCJS1982,BibMcLeishTechReport1982} \textit{can be used as non-Gaussian distribution as \textit{a robust alternative} to Gaussian distribution}. As such, this distribution closely resembles that of the Gaussian distribution; it is symmetric and unimodal and not only has support the whole real line but also has tails that are at least as heavy as those of Gaussian distribution. More importantly, it has all moments finite, and its excess-Kurtosis is always positive (i.e., its Kurtosis is greater than or equal to that of the Gaussian distribution). \textit{We readily deduce from these features that, possessing the important features of Gaussian distribution, this distribution is very useful in modelling impulsive noise phenomena with somewhat heavier tails than the Gaussian distribution has}. However, to the best of our knowledge, the laws of this distribution have so far \textit{not attracted} the attention of theoreticians, practitioners and researchers not only in the field of wireless communications but also in other fields of engineering.

\subsection{McLeish Noise Distribution} 
\label{Section:Introduction:McLeishDistribution}
Suggested in \cite{BibMcLeishCJS1982,BibMcLeishTechReport1982} as a robust alternative to Gaussian distribution is the generalization of Laplace distribution and therefore inherently called generalized Laplacian distribution \cite{BibKozubowskiMeerschaertPodgorski2006,BibMolzKozubowskiPodgorskiCastleHJ2007,BibWuPhDThesis2008,BibKozubowskiBCP2010,BibPodgorskiWegenerCSTM2011,BibKotzKozubowskiPodgorskiBook2012,BibSchluterTredeJAP2016}. However, in literature, generalized Gaussian distribution is also called generalized Laplacian distribution
\cite[Sec.\!~4.4.2]{BibKotzKozubowskiPodgorskiBook2012},
\cite[Sec.\!~6]{BibJohnsonBalakrishnanKotzBookVol2}, \cite{BibSubbotinMS1923,BibZeckhauserThompsonRES1970,BibJakuszenkowDM1979,BibSharma1984,BibTaylorCSDA1992,BibAgroCSSC1995,BibNakamuraMatsuiECSCT1999,BibNakamuraECIJ2002,BibAyeboKozubowskiJPSS2003,BibAissaAbedMeraimESPC2008,BibDubeauElMashoubiIJPAM2011,BibSelimPJSOR2015,BibLassamiAissaAbedMeraimASILOMAR2018}. In order to avoid this confusion and in honor of D.\,\,L. McLeish for his excellent paper \cite{BibMcLeishCJS1982} and his technical report \cite{BibMcLeishTechReport1982}, the distribution suggested in  \cite{BibMcLeishCJS1982,BibMcLeishTechReport1982} has been recently renamed by us as McLeish distribution in  \cite{BibYilmazAlouiniSIU2018} and by Marichev and Trott in Wolfram's blog posts \cite{BibMcLeishDistributionWolfram2018}. Particularly, we propose McLeish distribution as a versatile additive non-Gaussian noise distribution whose statistical description is typically defined on two main observations,~one~of~which~is that~additive noise spontaneously emerges~as the~sum of~many impulsive noise sources with small power, where each impulsive noise source is found to be properly characterized~by a~Laplacian~distribution. The other main observation~is~that, according to the \ac{CLT} \cite{BibPapoulisBook}, the summation of many impulsive noise sources converges to Gaussian distribution as their number infinitely increases. Thus, we conclude that McLeish distribution provides an excellent fit not only for~Gaussian~distribution but also heavy-tailed non-Gaussian distribution and thereby captures different impulse noise environments (i.e., different impulsive noise distributions are of all special cases or approximations of McLeish distribution) \cite{BibYilmazAlouiniSIU2018}. The evolution of its impulsive nature from Gaussian distribution to non-Gaussian distribution is explicitly parameterized in a more nature-inspired way, especially than those of Laplacian, ${S}\alpha{S}$ and generalized Gaussian distributions.


\subsection{Our Motivations and Contributions}
\label{Section:Introduction:OurMotivationAndContributions}
In this article, before explaining the motivation behind our contributions, it is worth mentioning that we propose novel contributions starting from \theoremref{Theorem:McLeishMoments} to \theoremref{Theorem:MLDecisionErrorProbabilityForBDPSKCoherentSignallingOverAWMNChannels} with exact and closed-form (analytical) expressions.

Gaussian distribution has indeed emerged in almost all scientific problems. It is therefore fundamental to all branches~of science and engineering and has been well studied within the literature of probability and statistics. Since it provides closed-form expressions, it allows us to better understand the technical and conceptual problems inherent in science and engineering. On the other hand, although it has been used over and over to solve the scientific problems, it cannot provide solutions for the problems where impulsive statistics (effects) leading to heavy-tailed non-Gaussian distribution are well observed. While paying attention to non-Gaussian distributions mentioned in \secref{Section:Introduction:NonGaussianDistributions} for compatible analysis and synthesis of impulsive effects, we subsequently provide evidence that a non-Gaussian distribution in which the properties of Gaussian distribution are desirable is mostly needed in the literature. This strong piece of evidence motivates us to propose McLeish distribution \cite{BibMcLeishCJS1982,BibMcLeishTechReport1982} in \secref{Section:StatisticalBackground:McLeishDistribution} as a non-Gaussian distribution which has the well-known desirable properties of Gaussian distribution yielding closed-form results \cite{BibYilmazAlouiniSIU2018}. After showing in \secref{Section:StatisticalBackground:McLeishDistribution} that some special cases of McLeish distribution are Dirac's distribution, Laplacian distribution and Gaussian distribution, for the~first~time~in~the~literature, we present the principles behind the laws of univariate McLeish distribution. Accordingly, we propose closed-form expressions for the moments in \theoremref{Theorem:McLeishMoments}. After introducing McLeish's \ac{Q-function} and in \theoremref{Theorem:McLeishQFunctionUsingFoxHAndMeijerG} deriving its closed-from expression using Meijer's G and Fox's H functions \cite{BibPrudnikovBookVol3,BibKilbasSaigoBook,BibMathaiSaxenaHauboldBook}, we propose~the \ac{CDF} in \theoremref{Theorem:McLeishCDF}~and \ac{CCDF} in \theoremref{Theorem:McLeishCCDF}. Moreover, we obtain the lower- and upper-bound approximations for McLeish's \ac{Q-function}. As our other contributions, we propose a closed-form expression for the \ac{MGF} and~compare~its special cases with the results in the literature. To the best of our knowledge, there is no statistical framework in the literature for comparative analysis and synthesis of a univariate non-Gaussian distribution. Accordingly, we bridge the gap by proposing the framework for the laws of univariate McLeish distribution in \secref{Section:StatisticalBackground:McLeishDistribution}.

It is worth noting that many situations arise~in~all~branches of science and engineering, where the sum of distributions is inevitable. For instance, for a reliable signal transmission through additive noise channels, the additive noise can be typically explored to be the sum of noise distributions. The two most important of these situations are diversity combining and cooperative communications \cite{BibAlouiniBook,BibProakisBook,BibGoldsmithBook}. In case of impulsive effects which yields heavy-tailed non-Gaussian noise distribution, there is a demanding need to investigate the statistical laws of the sum of non-Gaussian distributions. This fact highly motivates us to propose in \secref{Section:StatisticalBackground:McLeishSumDistribution} closed-form expressions for the laws of the sums of mutually independent McLeish distributions,~each~of which is typically derived for arbitrary parameters for statistical~characterization purpose.~In particular,~we~propose the \ac{MGF} in \theoremref{Theorem:McLeishSumMGF} and thereby propose the \ac{PDF} and \ac{CDF} in \theoremref{Theorem:McLeishSumPDF} and \theoremref{Theorem:McLeishSumCDF}, respectively. Moreover, we propose the moments in \theoremref{Theorem:McLeishSumMoments} using \theoremref{Theorem:McLeishMoments}. As our other contributions, we derive the special cases of the novel expressions and compare them with the ones available in the literature.    

For our motivation behind the novel contributions~in~both  \secref{Section:StatisticalBackground:CCSMcLeishDistribution} and \secref{Section:StatisticalBackground:CESMcLeishDistribution}, it should be~mentioned~that, for the first time, complex Gaussian distribution was introduced by It\^{o} in \cite{BibItoJJM1952}. Later, the trend in the design~and~analysis of future concepts, novel ideas and new applications have led~to widespread use~of complex~Gaussian distribution in almost all branches of science and engineering. For example,~in~the branch of electrical engineering, the received signal in both \ac{RF} communications\cite[and references therein]{BibAlouiniBook,BibGoldsmithBook,BibProakisBook,BibRappaportBook,BibLapidothBook2017} and~optical communications\cite[and references therein]{BibEssiambreKramerWinzerFoschiniGoebel2010,BibVacondioRivalSimonneauGrellierBononiLorcyAntonaBigoOE2012,BibHagerPhDThesis2014} is represented by a complex signal whose inphase and quadrature parts are jointly subject to bivariate Gaussian distribution\cite[Eq. (2.3-78)]{BibProakisBook} with a simple linear correlation structure that is either \ac{CS} with zero correlation or \ac{ES}~with non-zero correlation between their real and imaginary parts. The \ac{CS} and \ac{ES} features have attracted the attention of many theoreticians, practitioners and researchers and led~to~an~active research~area~for reliable transmission over additive noise channels. However, to the best of our knowledge, for~such problems associated with~a complex signal whose inphase and quadrature parts are subject to non-Gaussian noise, there is a demand in the literature for a complex non-Gaussian distribution yielding closed-form distribution laws. This fact motivates us to propose complex (bivariate) McLeish distribution. According to the correlation structure between its inphase and quadrature parts, our other contributions in~both \secref{Section:StatisticalBackground:CCSMcLeishDistribution} and \secref{Section:StatisticalBackground:CESMcLeishDistribution} can be particularly summarized as follows.
\begin{itemize}
\setlength\itemsep{1mm}
    \item In \secref{Section:StatisticalBackground:CCSMcLeishDistribution}, we introduce in \theoremref{Theorem:CCSMcLeishDefinition} a complex McLeish distribution, similar to complex Gaussian distribution, whose inphase and quadrature~parts~are~jointly uncorrelated while its envelope and phase are mutually independent. A complex distribution is called~\ac{CS}~or~circular if rotating the complex distribution by any angle does not change its \ac{PDF}\cite[P. 64-66]{BibProakisBook}. In accordance~with~that, we propose \ac{CCS} McLeish distribution and further obtain the laws of \ac{CCS} McLeish distribution with closed-form expressions for the \ac{PDF}, \ac{CDF}, \ac{MGF} and~joint~moments in particular from \theoremref{Theorem:CCSMcLeishPDF} to \theoremref{Theorem:CCSMcLeishMoments} as our other contributions.
    \item In \secref{Section:StatisticalBackground:CESMcLeishDistribution}, we extend \ac{CCS} McLeish distribution to a complex McLeish distribution whose inphase and quadrature parts are jointly correlated with a simple linear correlation structure similar to the one found in complex (bivariate) Gaussian distribution\cite[and references therein]{BibAlouiniBook,BibGoldsmithBook,BibProakisBook,BibRappaportBook,BibLapidothBook2017}. Accordingly,~we~introduce and define in \theoremref{Theorem:CESMcLeishDefinition} \ac{CES} McLeish distribution. Thereon, as our other contributions from \theoremref{Theorem:CESMcLeishPDF} to \theoremref{Theorem:CESMcLeishMGF}, we propose the laws of \ac{CES} McLeish distribution with closed-form~\ac{PDF}, \ac{CDF} and \ac{MGF} expressions, respectively.
\end{itemize}

For our motivation in \secref{Section:StatisticalBackground:MultivariateMcLeishDistribution},~it~is~worth~noting that multivariate Gaussian~distribution, which is a generalization of one-dimensional (univariate) Gaussian distribution to the higher dimensions, plays~an~essential~role in all branches of science and engineering. For example, in the field of wireless communications, the usage of multidimensional signaling makes multivariate Gaussian distribution attractive for modeling additive noise in communication channels. On the other hand, to the best of our knowledge, there~is no multivariate non-Gaussian distribution in the literature, which is mathematically tractable and possesses the desirable properties of multivariate Gaussian distribution yielding closed-form results. To bridge this gap, we present in \secref{Section:StatisticalBackground:MultivariateMcLeishDistribution} our following novel contributions.
\begin{itemize}
\setlength\itemsep{1mm}
    \item As a robust alternative to standard multivariate Gaussian distribution \cite{BibAndersenBook1995,BibTongBook1989,BibWangKotzNgBook1989,BibBilodeauBrennerBook1999,BibKotzBalakrishnanBook2004}, we introduce and propose \textit{standard multivariate McLeish distribution} by generalizing univariate \textit{(one-dimensional)} McLeish distribution to the higher dimensions in such a~way that~we define it in \theoremref{Theorem:StandardMultivariateMcLeishDefinition} as the vector (collection) of mutually uncorrelated and identically distributed McLeish distributions with zero mean, unit variance and the same normality. We show that, similar to standard multivariate Gaussian distribution, standard multivariate McLeish distribution maintains its shape under orthogonal transformations since its covariance matrix is a unit matrix. For the first time in the literature, from \theoremref{Theorem:StandardMultivariateMcLeishPDF} to \theoremref{Theorem:StandardMultivariateMcLeishMGF}, we \textit{establish the laws of standard multivariate McLeish distribution} and check their special cases for consistency and completeness. 
    \item Further, for the vector of mutually uncorrelated~and~non-identically distributed McLeish distributions with distinct variances, we find out how the covariance matrix turns from a unit matrix into a positive definite diagonal~one. As our other contribution, we propose in \theoremref{Theorem:INIDMultivariateMcLeishDefinition} \textit{multivariate McLeish distribution with a positive definite diagonal covariance matrix}. For the first time in the literature, from \theoremref{Theorem:INIDMultivariateMcLeishPDF} to \theoremref{Theorem:INIDMultivariateMcLeishMGF}, we \textit{establish the laws of multivariate McLeish distribution with a diagonal covariance matrix}.  
    \item It is also worth noting that a measure of how multivariate Gaussian distribution varies randomly is the correlation structure among marginal Gaussian distributions, known as the covariance matrix, which allows obtaining closed-form and \textit{unique} expressions that facilitate the solutions of many problems in science and engineering. This undeniable fact motivates us to generalize~the~correlation structure of multivariate McLeish distribution from one diagonal matrix to a full-rank positive definite matrix. Accordingly, our other contribution in \secref{Section:StatisticalBackground:MultivariateMcLeishDistribution} is to discuss~the properties of covariance matrix and in \theoremref{Theorem:MultivariateMcLeishDecomposition} to propose \textit{multivariate McLeish distribution with a positive definite covariance matrix} whose distribution laws are established by closed-form expressions from \theoremref{Theorem:MultivariateMcLeishPDF} to \theoremref{Theorem:MultivariateMcLeishMGF}. Furthermore, not only in \theoremref{Theorem:MultivariateMcLeishLinearityProperty}, where we show  that multivariate McLeish distribution is closed under any non-degenerate affine transformation but also in  \theoremref{Theorem:MultivariateMcLeishPartioningProperty}, we show that its conditional and marginal distributions are also jointly multivariate McLeish distribution.
\end{itemize}

Besides, for our motivation behind the novel contributions in \secref{Section:StatisticalBackground:MultivariateComplexMcLeishDistribution}, it should be mentioned that multivariate complex distributions are predominantly used. For instance, in electrical engineering, the theory of wireless transmission mostly deals with complex distributions. In \cite{BibWoodingBiometrika1956}, Wooding proposed multivariate complex Gaussian distribution and studied its correlation structure. Later, Goodman \cite{BibGoodmanAMS1963} discussed its statistical properties with the analogue of the Wishart distribution by considering multiple and partial correlations. After Goodman \cite{BibGoodmanAMS1963} and others  \cite{BibGiriAMS1965,BibKabeAJS1966A,BibKabeAJS1966B,BibBrillinger1969,BibGoodmanDubmanISMA1969,BibCaponGoodmanProcIEEE1970,BibYoungTechReport1971,BibEatonBook1983,BibvandenBosIT1995,BibAndersenHojbjerreSorensenEriksenBook1995,BibvandenBosIT1998}, research and studies on multivariate complex Gaussian statistical analysis got an impetus. As such, the trend in the design and analysis of transmission technologies provoked the widespread use~of multivariate complex Gaussian distribution to model random fluctuations in \ac{RF} communications\cite[and references therein]{BibAlouiniBook,BibGoldsmithBook,BibProakisBook,BibRappaportBook,BibLapidothBook2017} and~optical communications\cite[and references therein]{BibEssiambreKramerWinzerFoschiniGoebel2010,BibVacondioRivalSimonneauGrellierBononiLorcyAntonaBigoOE2012,BibHagerPhDThesis2014,BibMarcuseJLT1990}.
It was later either explicitly or implicitly often assumed and experimentally verified that multivariate additive noise in wireless transmissions follows a multivariate \ac{CCS}\,/\,\ac{CES} Gaussian distribution with a \ac{CS}\,/\,\ac{ES} correlation structure\cite[and references therein]{BibAlouiniBook,BibGoldsmithBook,BibProakisBook,BibRappaportBook,BibLapidothBook2017}. On the other hand, due to the purposes mentioned previously, there exists a demand for multivariate \ac{CCS}\,/\,\ac{CES} non-Gaussian distribution that yields closed-form distribution laws. This fact motivates us to propose in \secref{Section:StatisticalBackground:MultivariateComplexMcLeishDistribution} the extension of multivariate McLeish distribution to \textit{multivariate \ac{CCS}\,/\,\ac{CES} McLeish distribution} with the following novel contributions.  
\begin{itemize}
\setlength\itemsep{1mm}
    \item For the vector of \textit{uncorrelated and identically} distributed CCS McLeish distributions with zero mean, unit variance, and the same normality, we introduce in \theoremref{Theorem:StandardMultivariateCCSMcLeishDefinition} \textit{standard multivariate CCS McLeish distribution} and establish its distribution laws by obtaining closed-form \ac{PDF}, \ac{CDF}, \ac{CCDF}, and \ac{MGF} expressions from \theoremref{Theorem:StandardMultivariateCCSMcLeishPDF} to \theoremref{Theorem:StandardMultivariateCCSMcLeishMGF}, respectively. Further, we show that standard multivariate CCS McLeish distribution is closed under unitary transformation. 
    \item For the vector of \textit{uncorrelated and non-identically}~distri\-buted \ac{CCS} McLeish distributions with different~variances, we introduce in \theoremref{Theorem:INIDMultivariateCESMcLeishDefinition} \textit{multivariate \ac{CES} McLeish distribution with a diagonal covariance matrix} whose distribution laws are established by closed-form \ac{PDF}, \ac{CDF}, \ac{CCDF}, and \ac{MGF} expressions from \theoremref{Theorem:INIDMultivariateCESMcLeishPDF} to \theoremref{Theorem:INIDMultivariateCESMcLeishMGF}, respectively. 

    \item As our other contributions in \secref{Section:StatisticalBackground:MultivariateComplexMcLeishDistribution}, for~the~vector of \textit{correlated and non-identically} distributed \ac{CES} McLeish distributions with different variances, we introduce in \theoremref{Theorem:MultivariateCESMcLeishDecomposition} \textit{multivariate \ac{CES} McLeish distribution with a complex covariance matrix}. After investigating the circular symmetry and positive definite properties of complex covariance~matrices,~we~obtain closed-form \ac{PDF}, \ac{CDF}, \ac{CCDF}, and \ac{MGF} expressions from \theoremref{Theorem:MultivariateCESMcLeishPDF} to \theoremref{Theorem:MultivariateCESMcLeishMGF}, respectively, establishing the distribution laws of \textit{multivariate \ac{CES} McLeish distribution} in general.   
\end{itemize}

In consequence with our above-mentioned contributions, to study impulsive statistics using McLeish distribution, for the first time, we propose in \secref{Section:StatisticalBackground} a general framework both in scalar as well as in vector version. With the aid of this framework, we propose \textit{\ac{AWMN} channels} in \secref{Section:AWMNChannels} and investigate the impulsive effects within wireless communication systems. We explain the motivation behind our contributions as follows.  
\begin{itemize}
\setlength\itemsep{1mm}
    \item In the literature, it is widely assumed that noise variance (i.e., noise power) is constant and precisely known to the receiver \cite{BibProakisBook,BibAlouiniBook,BibGoldsmithBook,BibRappaportBook,BibLapidothBook2017}. However, this is practically impossible since noise variance in any wireless communication indeed fluctuates randomly over time due to temperature change, ambient interference, and filtering \cite{BibGardnerTCOM1988,BibShellhammerTandraIEEEStd2006,BibCabricBrodersenTechReport2007,BibYesteOjedaGrajalWSSP2007,BibLarssonThobabenWangWCNC2010,BibMarianiGiorgettiChiani2018}. The noise variance fluctuations are known as impulsive effects. Depending on the presence of impulsive effects in the communication channel, the variance of noise variance fluctuations changes from one communication system to the other communication system; sometimes, it can be very severe to be considered and sometimes very weak to be ignored. In \secref{Section:AWMNChannels:RandomNoiseVarianceFluctuations}, we investigate noise variance fluctuations. In \theoremref{Theorem:VarianceCorrelation}, we propose the usage of Allan's variance to determine the correlation within noise variance fluctuations and obtain in \theoremref{Theorem:VarianceCorrelationCoefficient} the corresponding  auto-correlation coefficient. From these results, we propose in \theoremref{Theorem:VarianceCoherenceWindow} the coherence time for noise variance fluctuations. According to the uncertainty of noise variance and the comparison of this coherence time both with the coherence time of fading conditions and the symbol duration, we introduce the classification of additive noise channels as (i) constant variance, (ii) slow-variance uncertainty, (iii) fast-variance uncertainty. Subsequently, we emphasize that the McLeish distribution can model these three classes of noise variance uncertainties with the aid of its normality parameter.
    
    \item In \secref{Section:AWMNChannels:McLeishNoiseExistence}, we investigate~the~existence~of~McLeish noise distribution in wireless communications. In more detail, for the first time in the literature, we show in \secref{Section:AWMNChannels:McLeishNoiseExistence:JohnsonNoise} that the thermal noise in electronic materials follows McLeish distribution rather than Gaussian distribution. Further, we show in 
    \secref{Section:AWMNChannels:McLeishNoiseExistence:MAIAndMUIInterference} that \ac{MAI}\,/\,\ac{MUI} also follow McLeish distribution rather than Laplacian distribution. To represent how McLeish noise distribution can model wide range of realistic impulsive effects (uncertainty of noise variance), we emphasize in \secref{Section:AWMNChannels:McLeishNoiseExistence:Versatility} that the McLeish distribution demonstrates a superior fit to the different impulsive noise from non-Gaussian to Gaussian distribution.
\end{itemize}

From the important findings highlighted in \secref{Section:AWMNChannels},~it~is obviously more appropriate to model additive white noise in communication channels by McLeish distribution rather~than Gaussian distribution. The impulsive effects follow a non-Gaussian distribution, requiring the use of McLeish distribution as a convenient non-Gaussian model for additive white noise in communication channels. Therefore, the performance analysis of communication systems is critical if they are exposed to additive non-Gaussian noise that follows McLeish distribution. Thanks to our contributions as mentioned earlier, and thanks to the statistical framework that we propose for McLeish distribution from \secref{Section:StatisticalBackground} to \secref{Section:AWMNChannels}, we propose in \secref{Section:SignallingOverAWMNChannels} \textit{complex correlated \ac{AWMN} vector channels}. For coherent and non-coherent signaling over complex correlated \ac{AWMN} vector channels, we have the following contribution sets about the performance of binary and M-ary modulation schemes. 
\begin{itemize}
\setlength\itemsep{1mm}
    \item In \secref{Section:SignallingOverAWMNChannels:CoherentSignalling}, we investigate the \ac{CSI} requirements for coherent signaling over complex correlated \ac{AWMN} vector channels and thereby propose closed-form \ac{MAP} and \ac{ML} decision rules for M-ary modulation schemes from \theoremref{Theorem:MAPDecisionRuleForComplexAWMNVectorChannel} to \theoremref{Theorem:MLDecisionRuleForPrecodedComplexAWMNVectorChannel}. Thanks to the closed-form \ac{MAP} and \ac{ML} decision rules, we analyze in \secref{Section:SignallingOverAWMNChannels:CoherentSignalling:SymbolErrorProbability} the \ac{BER}\,/\,\ac{SER} performance of coherent signaling. As such, in \theoremref{Theorem:UnionUpperBoundSEPForPrecodedComplexAWMNVectorChannel}, we obtain \textit{closed-form upper-bound expressions} for \ac{SER} of M-ary modulation schemes. From \theoremref{Theorem:MAPDecisionRuleForBinaryCoherentSignallingOverAWMNChannels} to
    \theoremref{Theorem:MLDecisionRuleForBinaryCoherentSignallingOverAWMNChannels}, we obtain the \ac{MAP} and \ac{ML} decision rules for binary modulation schemes. Furthermore, from \theoremref{Theorem:MLDecisionErrorProbabilityForBinaryCoherentSignallingOverAWMNChannels} to \theoremref{Theorem:ConditionalBEPForOOKSignaling}, we analyze the \ac{BER} performance~of~binary modulation schemes and therein propose \textit{exact closed-form \ac{BER}~performance~expressions}~of~\ac{BPSK}, \ac{BFSK}, and \ac{OOK} modulation schemes. As our other contributions, from \theoremref{Theorem:MLDecisionErrorProbabilityForASKCoherentSignallingOverAWMNChannels} to \theoremref{Theorem:MLDecisionErrorProbabilityForMPSKCoherentSignallingOverAWMNChannelsII}, we obtain \textit{exact closed-form \ac{SER} expressions of M-ary modulation schemes} such as \ac{M-ASK}, \ac{M-QAM}, \ac{M-PSK} and \ac{QPSK}, 
    \item For our motivation in \secref{Section:SignallingOverAWMNChannels:NonCoherentSignalling},~it~is worth mentioning that, in wireless communications, when~the~phase~of the received signal cannot be accurately recovered at the receiver,~coherent signaling cannot be performed. In such scenarios, communication systems must rely upon non-coherent or differentially coherent signal reception. Accordingly, we investigate the \ac{CSI} requirements for non-coherent signaling over complex correlated \ac{AWMN} vector channels and propose the \ac{MAP} and \ac{ML} decision rules from \theoremref{Theorem:NoncoherentMAPDecisionRuleForComplexAWMNVectorChannel} to
    \theoremref{Theorem:NoncoherentMLDecisionRuleForPrecodedComplexAWMNVectorChannel}. After deriving in \theoremref{Theorem:NoncoherentInaccurateComponentPDF} and \theoremref{Theorem:NoncoherentAccurateComponentPDF} the \acp{PDF} for the inphase and quadrature projections of the received complex signal on possible modulation symbols, we analyze the non-coherent signalling over complex correlated \ac{AWMN} vector channels from \theoremref{Theorem:MAPDecisionErrorProbabilityForNoncoherentOrthogonalSignallingOverAWMNChannels} to
    \theoremref{Theorem:MLDecisionErrorProbabilityForBDPSKCoherentSignallingOverAWMNChannels}, where we propose closed-form expressions for non-coherent orthogonal signaling, \ac{BNCFSK}, \ac{M-DPSK}, and  \ac{BDPSK}.
\end{itemize}

As a result, with the extensive aid of the contributions~mentioned above, we can conclude that multivariate \ac{CCS} and \ac{CES} McLeish distribution is more general additive noise distribution that can be readily used in all branches of science and engineering. 

\subsection{Article Organization} 
We organize the remainder of the article as follows. In \secref{Section:Preliminaries}, we introduce the notation and statistical definitions. In \secref{Section:StatisticalBackground}, we establish the laws of McLeish distribution, where we start from its univariate case and continue through to its multivariate case both in real domains and complex domains. In \secref{Section:AWMNChannels}, we investigate the variance-uncertainty of additive noise and then introduce \ac{AWMN} channels with existence examples in the communication technologies. After presenting the complex \ac{AWMN} vector channels in \secref{Section:SignallingOverAWMNChannels}, we study the \ac{BER}\,/\,\ac{SER} performance of modulation schemes in \secref{Section:SignallingOverAWMNChannels:CoherentSignalling} for coherent signaling and \secref{Section:SignallingOverAWMNChannels:NonCoherentSignalling} for non-coherent signaling over \ac{AWMN} channels. Finally, we offer some concluding results in the last section. 

\section{Preliminaries}\label{Section:Preliminaries}
In this section, we introduce the notations used in this article and present some special functions and statistical definitions. 

\subsection{Notations}\label{Section:Preliminaries:Notation}
In general, scalar numbers such as integer, real and complex numbers are denoted by lowercase letters, e.g. $n$, $x$,
$z$. Let $\mathbb{N}$ denote the set of natural numbers, $\mathbb{R}$ the set of real numbers. As such, $\mathbb{R}_{+}$ and $\mathbb{R}_{-}$ denote the sets of positive and negative real numbers, respectively.  Appropriately, the set of complex numbers, denoted by $\mathbb{C}$, is the plane
$\mathbb{R}\!\times\!\mathbb{R}\!=\!\mathbb{R}^2$ equipped with complex addition, complex multiplication, yielding
complex space. The complex conjugate of $z\!=\!(x,y)\!=\!{x}+\imaginary{y}\in\mathbb{C}$ is denoted by
$z^{*}\!=\!(x,-y)\!=\!{x}-\imaginary{y}$, where $x,y\!\in\!\mathbb{R}$ and $\imaginary\!=\!\sqrt{-1}$
denotes the imaginary number. Furthermore, the inphase $x\!=\!\RealPart{z}$ and the quadrature $y\!=\!\ImagPart{z}$, where $\Re\{\cdot\}$ and $\Im\{\cdot\}$ give the real part and
imaginary part of a given complex number, respectively. Any
non-zero complex number has a polar representation $z\!=\!|z|\exp(\imaginary\theta)$, where
$\theta\!=\!\arg(z)\!\in\![-\pi,\pi)$ is called the \emph{argument} of $z$, and $|z|\!=\!d(z,0)$ denotes the
($L_2$-norm) \emph{modulus} of $z$, where $d^2(\cdot,\cdot)\colon\mathbb{C}\times\mathbb{C}\!\rightarrow\!\mathbb{R}$ denotes the Euclidean squared-distance between $z_{k}\!=\!{x}_{k}+\imaginary{y}_{k}\!\in\!\mathbb{C}$ and
$z_{\ell}\!=\!{x}_{\ell}+\imaginary{y}_{\ell}\in\mathbb{C}$, defined as
\begin{equation}
	d^2(z_{k},z_{\ell})=
		{\langle{z_{k}-{z_\ell}},{z_{k}-{z_\ell}}\rangle},
\end{equation}
where $\langle\cdot,\cdot\rangle\colon\mathbb{C}\!\times\!\mathbb{C}\!\rightarrow\!\mathbb{R}$ denotes the Euclidean inner product in complex space, defined as
\begin{equation}\label{Eq:EuclideanInnerProduct}
	\langle{z_{k}},{z_\ell}\rangle=\RealPart{z^{*}_k{z}_\ell}=
		\frac{1}{2}z^{*}_k{z}_\ell+\frac{1}{2}z_k{z}^{*}_\ell=
			x_{k}x_{\ell}+y_{k}y_{\ell}.
\end{equation}
Further, the inphase and quadrature of any $z\!\in\!\mathbb{C}$ are given by $\RealPart{z}\!=\!{\langle{1},{z}\rangle}$, and $\ImagPart{z}\!=\!{\langle{\imaginary},{z}\rangle}$, 
respectively. Also, the modulus is given by $|z|\!=\!\sqrt{\langle{z},{z}\rangle}$.
When the inphase and quadrature numbers of a complex space are correlated, the distance between $z_{k}\!=\!{x}_{k}+\imaginary{y}_{k}\!\in\!\mathbb{C}$ and $z_{\ell}\!=\!{x}_{\ell}+\imaginary{y}_{\ell}\!\in\!\mathbb{C}$ is obtained by Mahalanobis squared-distance, that is
\begin{equation}
	d^2(z_{k},z_{\ell})=
		{\langle{z_{k}-{z_\ell}},{z_{k}-{z_\ell}}\rangle}_{\rho},
\end{equation}
where $\rho\!\in\![-1,1]$ denotes the correlation coefficient between the inphase and quadrature numbers, and $\langle\cdot,\cdot\rangle_{\rho}\colon\mathbb{C}\!\times\!\mathbb{C}\!\rightarrow\!\mathbb{R}$ denotes the  
Mahalanobis inner product in complex space, defined as
\begin{equation}\label{Eq:MahalanobisInnerProduct}
\!\!{\langle{z_{k}},{z_\ell}\rangle}_{\rho}
	={(x_{k}x_{\ell}+y_{k}y_{\ell}-\rho{x_{k}y_{\ell}}-\rho{y_{k}x_{\ell}})}/{(1-\rho^2)},\!\!
\end{equation}
in correlated complex space (i.e, $\rho\!\neq\!0$).
The modulus of $z$ is given by $|z|_{\rho}\!=\!\sqrt{\langle{z},{z}\rangle_{\rho}}$.
Setting $\rho\!=\!{0}$ in \eqref{Eq:MahalanobisInnerProduct} yields \eqref{Eq:EuclideanInnerProduct}, i.e.,
${\langle{z_{k}},{z_\ell}\rangle}_{0}\!=\!{\langle{z_{k}},{z_\ell}\rangle}$. Thus, $|z|_{0}\eq|z|$.

For simplicity in multi-dimensional space,~column~vectors are denoted by boldfaced lowercase letters, e.g. $\defvec{z}\!=\!\defvec{x}+\allowbreak\imaginary\defvec{y}\!\in\!\mathbb{C}^{m}$, where $\defvec{x}\eq[x_1,x_2,\ldots,x_m]\!\in\!\mathbb{R}^{m}$ and $\defvec{y}\eq[y_1,\allowbreak{y}_2,\ldots,\allowbreak y_m]\!\in\!\mathbb{R}^{m}$.
Similarly, matrices are denoted by boldfaced uppercase non-italic letters, e.g. $\defmat{Z}\!=\!\defmat{X}+\imaginary\defmat{Y}\!\in\!\mathbb{C}^{{m}\times{n}}$, where
$\defmat{X},\defmat{Y}\!\in\!\mathbb{R}^{{m}\times{n}}$. Moreover, the identity matrix of size ${m}\!\times\!{m}$ is fixedly
denoted by $\defmat{I}_{m}$, and both zero vector of size ${m}$ and zero matrix of size ${m}\!\times\!{m}$ 
are also fixedly denoted by $\defmat{0}_{m}$.
Further, transpose and hermitian (conjugate) transpose are denoted by $(\cdot)^T$ and $(\cdot)^H$, respectively.
$\det(\cdot)$, $(\cdot)^{-1}$ and $\trace(\cdot)$ denote the determinant, inverse and trace matrix operations,
respectively. $\diag(\cdot)$ yields a square diagonal matrix whose diagonal is formed from an 
vector. Furthermore, in multi-dimensional space whose dimensions are correlated, the Mahalanobis squared-distance 
between $\defvec{x}\!\in\!\mathbb{R}^{m}$ and $\defvec{y}\!\in\!\mathbb{R}^{m}$ is given by 
\begin{equation}\label{Eq:MahalanobisSquaredDistanceInHigherDimensions}
	d^2(\defvec{x},\defvec{y})={\langle{\defvec{x}-\defvec{y},\defvec{x}-\defvec{y}}\rangle}_{\defmat{P}}
\end{equation}
with the correlation coefficient matrix $\defmat{P}\!=\!\begin{bmatrix}\rho_{jk}\end{bmatrix}_{{m}\times{m}}$, where 
$\rho_{jj}\!=\!{1}$, $\rho_{jk}\!=\!\rho_{kj}$ and $-1\!\leq\!\rho_{j,k}\!\leq\!1$ for all $1\!\leq\!{j,k}\!\leq\!m$. Note that  
$\defmat{P}$ must be symmetric and  
positive definite (i.e., $\defvec{x}^T\defmat{P}\defvec{x}\!>\!{0}$ 
for all $\defvec{x}\!\in\!\mathbb{R}^{m}$). Moreover, in \eqref{Eq:MahalanobisSquaredDistanceInHigherDimensions}, 
$\langle\cdot,\cdot\rangle_{\defmat{P}}\colon\mathbb{R}^{m}\!\times\!\mathbb{R}^{m}\!\rightarrow\!\mathbb{R}$ denotes the  
Mahalanobis inner product in higher dimensional space, and is typically defined as   
\begin{equation}\label{Eq:MahalanobisInnerProductInHigherDimensions}
	{\langle{\defvec{x},\defvec{y}}\rangle}_{\defmat{P}}=\defvec{x}^{T}\defmat{P}^{-1}\defvec{y}.
\end{equation}
Herewith, the norm of $\defvec{x}$, defined as $\lVert{\defvec{x}}\rVert_{\defmat{P}}\!=\!d(\defvec{x},\defvec{0})$, is written as 
$\lVert{\defvec{x}}\rVert_{\defmat{P}}\!=\!\sqrt{{\langle{\defvec{x},\defvec{x}}\rangle}_{\defmat{P}}}\!=\!\lVert{\defmat{P}^{-{1}/{2}}\defvec{x}}\rVert$. In case of no correlation, we have $\defmat{P}\!=\!\defmat{I}$, and hence
reduce \eqref{Eq:MahalanobisInnerProductInHigherDimensions} to the well-known Euclidean inner product in higher dimensional space, that is given by
\begin{equation}\label{Eq:EuclideanInnerProductInHigherDimensions}
	{\langle{\defvec{x},\defvec{y}}\rangle}=\defvec{x}^{T}\defvec{y},
\end{equation}  
and the norm of $\defvec{x}$ to 
$\lVert{\defvec{x}}\rVert\!=\!\sqrt{{\langle{\defvec{x},\defvec{x}}\rangle}}$. In multi-dimensional complex spaces, similar notations also exist but treat Hermitian instead of transpose operation. For example, for $\defvec{z},\defvec{w}\!\in\!\mathbb{C}^{m}$ and $\defmat{\Sigma}\!\in\!\mathbb{C}^{{m}\times{m}}$, we have
\begin{equation}
    {\langle{\defvec{z},\defvec{w}}\rangle}_{\defmat{\Sigma}}=
        \defvec{z}^{H}\defmat{\Sigma}^{-1}\defvec{w},
\end{equation}
Moreover, when $\defmat{\Sigma}\!=\!\defmat{I}$, it simplifies more to ${\langle{\defvec{z},\defvec{w}}\rangle}\!=\!\defvec{z}^{H}\defvec{w}$. Appropriately, the norm of $\defvec{z}$ is written as $\lVert{\defvec{z}}\rVert_{\defmat{\Sigma}}\!=\!\sqrt{{\langle{\defvec{z},\defvec{z}}\rangle}_{\defmat{\Sigma}}}\!=\!\lVert{\defmat{\Sigma}^{-{1}/{2}}\defvec{z}}\rVert$. Further, in case of $\defmat{\Sigma}\!=\!\defmat{I}$, it reduces more to $\lVert{\defvec{z}}\rVert\!=\!\sqrt{{\langle{\defvec{z},\defvec{z}}\rangle}}$ as expected.  

In order to make the accomplishments of probability and statistics concise and comprehensible, $\Pr\{A\}$ and $\Pr\{A|B\}$ will denote the probability of event $A$ and the probability of event $A$ given event
$B$, respectively. Random distributions will be denoted by uppercase letters, e.g. $X$, $Y$, $Z$. Random vectors and random matrices will be denoted by calligraphic boldfaced uppercase letters, e.g. $\defrmat{X}$, $\defrmat{Y}$, $\defrmat{Z}$.
Let $X$ be a random distribution, then its \ac{PDF} is defined by
\begin{equation}
	f_X(x)\!=\!\Expected{\DiracDelta{x-X}},
\end{equation}
where $\Expected{\cdot}$ denotes the expectation operator, and 
$\DiracDelta{\cdot}$ denotes the Dirac's delta function\cite[Eq.\!~(1.8.1)]{BibZwillingerBook}.
Besides, its \ac{CDF} is defined by
\begin{equation}
	F_X(x)\!=\!\Expected{\HeavisideTheta{x-X}},
\end{equation}
where $\HeavisideTheta{\cdot}$ is the Heaviside's theta function\cite[Eq.\!~(1.8.3)]{BibZwillingerBook}.
Furthermore, the conditional \ac{PDF} and \ac{CDF} of $X$ given $G$ will also be  denoted by $f_{X|G}(x|g)$ and $F_{X|G}(x|g)$, respectively. Denoted by $\defrmat{Z}\!=\![X,Y]^T$ is a real random vector formed of the real and imaginary parts of complex random distribution $Z\!=\!{X}+\imaginary{Y}$, where $X$ and $Y$ are two real random distributions whose joint \ac{PDF} $f_{\defrmat{Z}}(x,y)$ is 
\begin{equation}
	f_{\defrmat{Z}}(x,y)=\Expected{\DiracDelta{x-X}\DiracDelta{y-Y}}.
\end{equation}
Since $Z\!=\!{X}+\imaginary{Y}$ as a linear combination of $X$ and 
$Y$ \cite{BibHandersenHojbjerreSorensenEriksenBook1995}, the \ac{PDF} of $Z$ is given by
$f_{Z}(z)\!=\!{f}_{\defrmat{Z}}(\Re\{z\},\Im\{z\})$.
Similarly, the joint \ac{CDF} of $X$ and $Y$ is
\begin{equation}
	F_{\defrmat{Z}}(x,y)=\Expected{\HeavisideTheta{x-X}\HeavisideTheta{y-Y}}.
\end{equation}
The \ac{CDF} of $Z$ is readily given by $F_{Z}(z)\!=\!{F}_{\defrmat{Z}}(\Re\{z\},\Im\{z\})$.
In addition, upon considering $Z$ as a linear combination of $X$ and $Y$, the \ac{MGF} is useful
for finding the \ac{PDF} and \ac{CDF} of $Z$.
The \ac{MGF} of $Z$, defined as $M_{Z}(s)\!=\!\Expected{\exp(-\langle{s},{Z}\rangle)}$ for
$s\!=\!{s}_{X}+\imaginary{s}_{Y}\in\mathbb{C}$ and ${s}_{X},{s}_{Y}\in\mathbb{R}$, is equivalent to the 
joint \ac{MGF} of $\defrmat{Z}$, that is
\begin{equation}\label{Eq:JointMGFDefinition}
	M_{\defrmat{Z}}({s}_{X},{s}_{Y})=\Expected{\exp(-{s}_{X}X-{s}_{Y}Y)},
\end{equation}
which is finite in $s\!\in\!\mathbb{D}\!\subset\!\mathbb{C}^2$. Thus, we rewrite 
$M_{Z}(s)\!=\!{M}_{\defrmat{Z}}(\Re\{s\},\Im\{s\})$ exploiting complex notations. 
Similarly, the \acp{MGF} of $X$ and $Y$ are respectively denoted by 
$M_{X}(s)\!=\!\Expected{\exp(-sX)}$ and $M_{Y}(s)\!=\!\Expected{\exp(-sY)}$. 
In statistical analysis, $\Variance{\cdot}$, $\PseudoVariance{\cdot}$,
$\Covariance{\cdot}{\cdot}$, $\Skewness{\cdot}$ and $\Kurtosis{\cdot}$ will represent variance,
pseudovariance, covariance, skewness and Kurtosis operators, respectively. 
Consequently, $\mathbb{E}[Z]$ is written as 
$\mathbb{E}[Z]=\mathbb{E}[X]+\imaginary\mathbb{E}[Y]$. 
Besides, $\Variance{Z}\!=\!\mathbb{E}[|Z-\mathbb{E}[Z]|^2]$ is written as 
\begin{equation}
	\Variance{Z}=\Variance{X}+\Variance{Y} 
\end{equation}
which does not possess any information about  
$\Covariance{X}{Y}\!=\!\mathbb{E}[(X-\mathbb{E}[X])(Y-\mathbb{E}[Y])]$.
However, the pseudovariance of $Z$, defined as  
 $\PseudoVariance{Z}\!=\!\mathbb{E}[(Z-\mathbb{E}[Z])^2]$, contains it, that is 
\begin{equation}
	\PseudoVariance{Z}=\Variance{X}-\Variance{Y}+\imaginary\,2\Covariance{X}{Y}. 
\end{equation}
   
In addition, for shorthand notations of random distributions,
$\mathcal{N}(\mu,\sigma^2)$, $\mathcal{L}(\mu,\sigma^2)$,
and $\mathcal{M}_{\nu}(\mu,\sigma^2)$ denote Gaussian distribution, Laplacian distribution, and McLeish
distribution, respectively, with $\nu$ normality, $\mu$ mean and $\sigma^2$ variance. Their \ac{CCS} distributions are denoted by $\mathcal{CN}(\mu,\sigma^2)$, $\mathcal{CL}(\mu,\sigma^2)$, and $\mathcal{CM}_{\nu}(\mu,\sigma^2)$, respectively. Similarly, their \ac{CES} distributions for a  correlation coefficient $\rho\!\in\![-1,1]$ are similarly denoted by
$\mathcal{EN}(\mu,\sigma^2,\rho)$, $\mathcal{EL}(\mu,\sigma^2,\rho)$, and $\mathcal{EM}_{\nu}(\mu,\sigma^2,\rho)$,
respectively. Further, $\mathcal{E}(\Omega)$ and $\mathcal{G}(m,\Omega)$ denote an exponential distribution and a Gamma
distribution, where $\Omega\in\mathbb{R}_{+}$ denotes the average power and  $m\!\in\!\mathbb{R}_{+}$ denotes the fading figure \textit{(shape parameter)} describing the amount of spread from the average power $\Omega$. In addition, the symbol $\sim$ stands for \textit{``distributed as''}, e.g., $X\!\sim\!\mathcal{M}_{\nu}(\mu,\sigma^2)$.

\begin{figure*}[tp] 
\centering
\begin{subfigure}{0.7\columnwidth}
    \centering
    \includegraphics[clip=true, trim=0mm 0mm 0mm 0mm, width=1.0\columnwidth,height=0.85\columnwidth]{./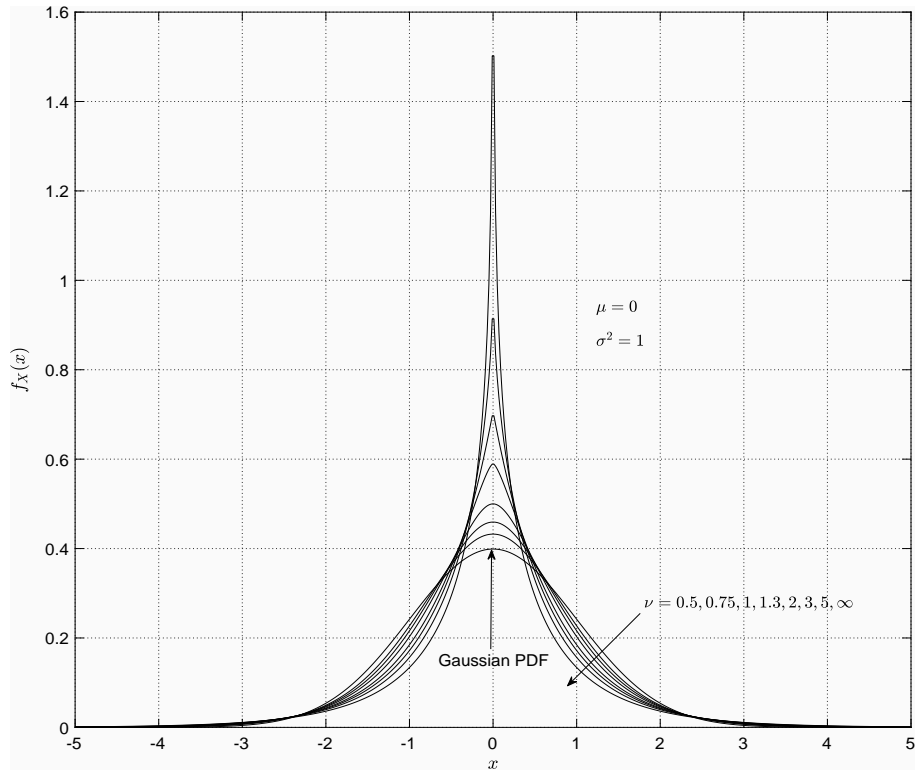}
    \caption{With respect to normality.}
    \label{Figure:McLeishPDFA}
\end{subfigure}
~~~
\begin{subfigure}{0.7\columnwidth}
    \centering
    \includegraphics[clip=true, trim=0mm 0mm 0mm 0mm, width=1.0\columnwidth,height=0.85\columnwidth]{./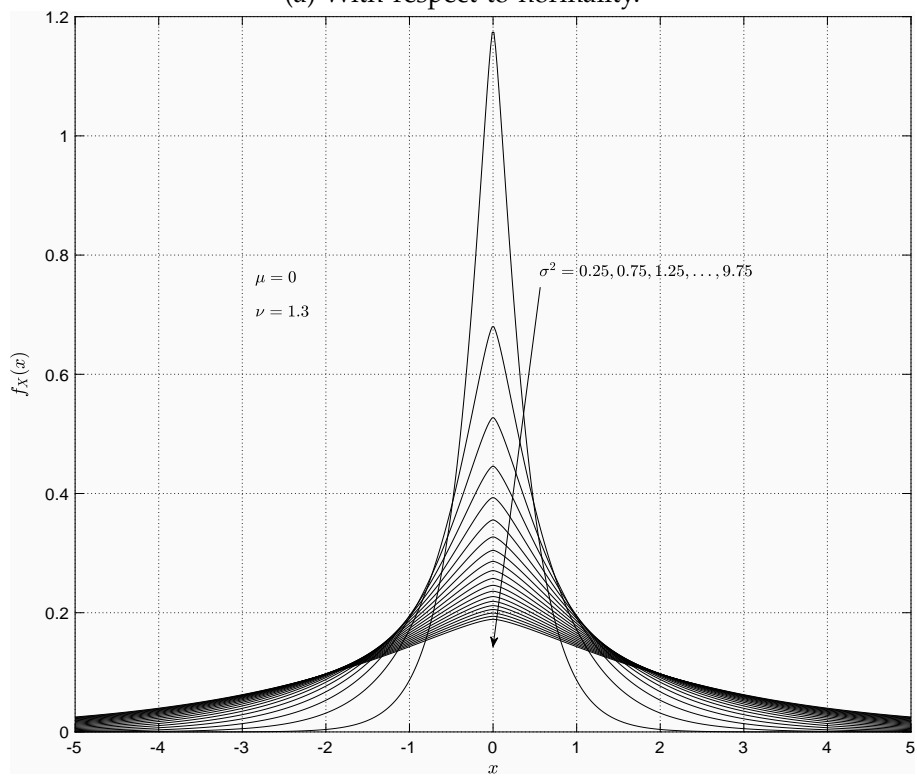}
    \caption{With respect to variance.}
    \label{Figure:McLeishPDFB}
\end{subfigure}
\caption{The \ac{PDF} of $\mathcal{M}_{\nu}(0,\sigma^2)$ with zero mean (i.e., the illustration of \eqref{Eq:McLeishPDF} for $\mu\!=\!0$).}
\label{Figure:McLeishPDF}
\vspace{-2mm} 
\end{figure*} 

In accordance with previously described notation of random matrices, the joint \ac{PDF} and \ac{CDF} of the real random vector $\defrmat{X}\!\in\!\mathbb{R}^{m}$ are respectively expressed by $f_{\defrmat{X}}\colon\mathbb{R}^{m}\rightarrow\mathbb{R}_{+}$ and  $F_{\defrmat{X}}\colon\mathbb{R}^{m}\rightarrow[0,1]$, and are respectively defined by 
\setlength\arraycolsep{1.4pt}   
\begin{eqnarray}
    f_{\defrmat{X}}(\defvec{x})&=&\mathbb{E}\bigl[\DiracDelta{\defvec{x}-\defrmat{X}}\bigr],\\
    F_{\defrmat{X}}(\defvec{x})&=&\mathbb{E}\bigl[\HeavisideTheta{\defvec{x}-\defrmat{X}}\bigr],
\end{eqnarray}
for $\defvec{x}\!\in\!\mathbb{R}^{m}$, where $\forall\defvec{y}\!\in\!\mathbb{R}^{m}$, we have $\DiracDelta{\defvec{y}}\!=\!\prod_{k=1}^{m}\DiracDelta{{y}_{k}}$ and $\HeavisideTheta{\defvec{y}}\!=\!\prod_{k=1}^{m}\HeavisideTheta{{y}_{k}}$. Moreover, the \ac{MGF} of $\defrmat{X}$ is expressed as $M_{\defrmat{X}}\colon\mathbb{R}^{m}\rightarrow[0,1]$ and defined by 
\begin{equation}
    M_{\defrmat{X}}(\defvec{s})=\mathbb{E}\bigl[\exp(-\langle\defvec{s},\defrmat{X}\rangle)\bigr]=\mathbb{E}\bigl[\exp(-\defvec{s}^{T}\defrmat{X}\bigr],
\end{equation}
where $\defvec{s}\!\in\!\mathbb{R}^{m}$. For simplicity, the mean vector of $\defrmat{X}\!\in\!\mathbb{R}^{m}$ is defined by
\begin{equation}
    \defvec{\mu}=\mathbb{E}[\defrmat{X}]=[\mu_1,\mu_2,\ldots,\mu_m]^T,
\end{equation}
where $\mu_{i}\!=\!\mathbb{E}[X_{i}]$, $1\!\leq\!i\!\leq\!m$. In multi-dimensional real space, the covariance matrix of
$\defrmat{X}$ is defined by $\defmat{\Sigma}\!\in\!\mathbb{R}^{{m}\times{m}}$,~that can be rewritten as 
\begin{subequations}\label{Eq:CovarianceMatrix}
\setlength\arraycolsep{1.4pt}   
\begin{eqnarray}
    \label{Eq:CovarianceMatrixA}
	\defmat{\Sigma}&=&
	    \mathbb{E}[(\defrmat{X}-\defvec{\mu})(\defrmat{X}-\defvec{\mu})^T],\\
    \label{Eq:CovarianceMatrixB}	    
	&=&\mathbb{E}[\defrmat{X}\defrmat{X}^T]-\defvec{\mu}\defvec{\mu}^{T},\\
	\label{Eq:CovarianceMatrixC}
	&=&\begin{bmatrix}
    	\sigma_{ij},
	\end{bmatrix}_{{1}\leq{i,j}\leq{m}},
\end{eqnarray}
\end{subequations}
where $\sigma_{ij}\!=\!\Covariance{X_{i}}{X_{j}}$, $1\!\leq\!i,j\!\leq\!m$. There is obviously no 
restriction on $\defvec{\mu}$, but $\defmat{\Sigma}$ must be real, symmetric, full rank,
invertible, and hence positive definite (i.e., $\defvec{x}^T\defmat{\Sigma}\defvec{x}\!>\!{0}$ 
for all $\defvec{x}\!\in\!\mathbb{R}^{m}$). For the shorthand notations of random vectors, let $\mathcal{N}^{m}(\defvec{\mu},\defmat{\Sigma})$, $\mathcal{L}^{m}(\defvec{\mu},\defmat{\Sigma})$, and $\mathcal{M}_{\nu}^{m}(\defvec{\mu},\defmat{\Sigma})$ denote an $m$-dimensional Gaussian random vector, an $m$-dimensional Laplacian random vector, and an $m$-dimensional McLeish random vector, respectively, with $\nu$ normality, $\defvec{\mu}$ mean vector and $\defmat{\Sigma}$ covariance matrix.

In multi-dimensional complex space, the joint \ac{PDF} and \ac{CDF} of the complex random vector $\defrmat{Z}\!\in\!\mathbb{C}^{m}$ are respectively expressed by $f_{\defrmat{Z}}\colon\mathbb{C}^{m}\rightarrow\mathbb{R}_{+}$ and  $F_{\defrmat{Z}}\colon\mathbb{C}^{m}\rightarrow[0,1]$, and are respectively defined by 
\setlength\arraycolsep{1.4pt}   
\begin{eqnarray}
    f_{\defrmat{Z}}(\defvec{z})&=&\mathbb{E}\bigl[
        \DiracDelta{\defvec{z}-\defrmat{Z}}
        \bigr],\\
    F_{\defrmat{Z}}(\defvec{z})&=&\mathbb{E}\bigl[
        \HeavisideTheta{\defvec{z}-\defrmat{Z}}
        \bigr],
\end{eqnarray}
for $\defvec{z}\!\in\!\mathbb{C}^{m}$ and $\defvec{s}\!\in\!\mathbb{C}^{m}$, where, for all $\defvec{z}\!=\!\defvec{x}+\imaginary\defvec{y}\!\in\!\mathbb{C}^{m}$ with $\defvec{x},\defvec{y}\!\in\!\mathbb{R}^{m}$, we have $\DiracDelta{\defvec{z}}\!=\!\DiracDelta{\defvec{x}}\DiracDelta{\defvec{y}}$ and $\HeavisideTheta{\defvec{z}}\!=\!\HeavisideTheta{\defvec{x}}\HeavisideTheta{\defvec{y}}$. Further, the \ac{MGF} of $\defrvec{Z}$ is expressed as $M_{\defrmat{X}}\colon\mathbb{C}^{m}\rightarrow[0,1]$ and defined by 
\begin{equation}
    M_{\defrmat{Z}}(\defvec{s})=\mathbb{E}\bigl[ 
        \exp(-\langle\defvec{s},\defrmat{Z}\rangle)\bigr]=\mathbb{E}\bigl[
        \exp(-\defvec{s}^{H}\defrmat{Z})\bigr],
\end{equation} 
where $\defvec{s}\!\in\!\mathbb{C}^{m}$. The mean vector of $\defrmat{Z}$ is given by $\defvec{\mu}\!=\!\mathbb{E}\bigl[\defrmat{Z}\bigr]$. In distinction from \eqref{Eq:CovarianceMatrix}, the covariance matrix of $\defrvec{Z}$ is defined in multi-dimensional complex space $\defmat{\Sigma}\!\in\!\mathbb{C}^{{m}\times{m}}$, that is
\begin{subequations}
\setlength\arraycolsep{1.4pt}   
\begin{eqnarray}
	\defmat{\Sigma}&=&
	    \mathbb{E}[(\defrmat{Z}-\defvec{\mu})(\defrmat{Z}-\defvec{\mu})^{H}],\\
	&=&\mathbb{E}[\defrmat{Z}\defrmat{Z}^{H}]-\defvec{\mu}\defvec{\mu}^{H},\\
	&=&\begin{bmatrix}
    	\sigma_{ij}
	\end{bmatrix},
\end{eqnarray}
\end{subequations}
where $\sigma_{ij}\!=\!\Covariance{Z_{i}}{Z_{j}}$, $1\!\leq\!i,j\!\leq\!m$. For the shorthand notations of random vectors, $\mathcal{CN}^{m}(\defvec{\mu},\defmat{\Sigma})$, $\mathcal{CL}^{m}(\defvec{\mu},\defmat{\Sigma})$, and $\mathcal{CM}_{\nu}^{m}(\defvec{\mu},\defmat{\Sigma})$ denote an $m$-dimensional \ac{CCS} Gaussian random vector, an $m$-dimensional \ac{CCS} Laplacian random vector, and an $m$-dimensional \ac{CCS} McLeish random vector, respectively, with $\nu$ normality, $\defvec{\mu}$ mean vector and $\defmat{\Sigma}$ covariance matrix. Further, $\mathcal{EN}^{m}(\defvec{\mu},\defmat{\Sigma})$, $\mathcal{EL}^{m}(\defvec{\mu},\defmat{\Sigma})$, and $\mathcal{EM}_{\nu}^{m}(\defvec{\mu},\defmat{\Sigma})$ denote an $m$-dimensional \ac{CES} Gaussian random vector, an $m$-dimensional \ac{CES} Laplacian random vector, and an $m$-dimensional \ac{CES} McLeish random vector. 

\section{Statistical Background}\label{Section:StatisticalBackground}
In this section, as an alternative to the well-known framework for the laws of Gaussian distribution, we develop and propose a conceptually novel framework for the laws of McLeish distribution, both in scalar and vector versions, contributing to the literature of probability and statistics necessary to all branches of science and engineering.

\subsection{McLeish Distribution}
\label{Section:StatisticalBackground:McLeishDistribution}
Let $X$ be $\mathcal{M}_{\nu}(\mu,\sigma^2)$ whose \ac{PDF} is given by \cite[Eq.\!~(3)]{BibMcLeishCJS1982}
\begin{equation}\label{Eq:McLeishPDF}
	f_{X}(x)=\frac{2}{\sqrt{\pi}}
		\frac{\abs{x-\mu}^{\nu-\frac{1}{2}}}{\Gamma(\nu)\,\lambda^{\nu+\frac{1}{2}}}
			\BesselK[\nu-\frac{1}{2}]{\frac{2\abs{x-\mu}}{\lambda}},
\end{equation}
defined over $x\!\in\!\mathbb{R}$, where $\nu\!\in\!\mathbb{R}_{+}$ and $\sigma^2\!\in\!\mathbb{R}_{+}$ denote the normality and variance, respectively, and $\lambda\!=\!\sigma\lambda_{0}\!=\!\sqrt{{2\sigma^2}/{\nu}}$ denotes the component deviation (power normalizing) factor. Further, $\Gamma(x)\!=\!\int_{0}^{\infty}\!u^{x-1}\allowbreak\exp(-u)\,du$ is the Gamma function\cite[Eq.\!~(6.1.1)]{BibAbramowitzStegunBook}, and  $\BesselK[n]{x}\!=\!\int_{0}^{\infty}\!{e}^{-x\cosh(u)}\allowbreak\cosh(nu)\,du$ is the modified Bessel function of the second kind\cite[Eq.\!~(9.6.2)]{BibAbramowitzStegunBook}. In order to illustrate the versatility and heavy-tail behaviour of $\mathcal{M}_{\nu}(\mu,\sigma^2)$, the \ac{PDF}, given in \eqref{Eq:McLeishPDF}, is aptly illustrated with respect to $\nu\!\in\!\mathbb{R}_{+}$ and $\sigma^2\!\in\!\mathbb{R}_{+}$ for a certain $\mu\!\in\!\mathbb{R}$ in \figref{Figure:McLeishPDF} on the top of the this page. 

The special cases of $\mathcal{M}_{\nu}(\mu,\sigma^2)$
consist of Dirac, Laplacian and Gaussian distributions. In more detail, as $\nu\!\rightarrow\!0$,
\eqref{Eq:McLeishPDF} reduces to 
\begin{equation}\label{Eq:DiracPDF}
 	f_{X}(x)=\DiracDelta{x-\mu},
\end{equation}
which is the \ac{PDF} of Dirac's distribution, where $\DiracDelta{\cdot}$ denotes the Dirac's delta
function\cite[Eq.\!~(1.8.1)]{BibZwillingerBook}. Further, substituting $\nu\!=\!{1}$ into \eqref{Eq:McLeishPDF} and then
utilizing \cite[Eq.\!~(9.7.8)]{BibAbramowitzStegunBook} yields the \ac{PDF} of $\mathcal{L}(\mu, \sigma^2)$, that is
\begin{equation}\label{Eq:LaplacianPDF}
	f_{X}(x)=\frac{1}{\sqrt{2\sigma^2}}\exp\bigl(-\scalemath{0.9}{0.9}{\sqrt{{2}/{\sigma^2}}}\abs{x-\mu}\bigr),
\end{equation}
Besides, limiting $\nu\!\rightarrow\!\infty$ in \eqref{Eq:McLeishPDF} and 
using \cite[Eq.\!~(9.7.8)]{BibAbramowitzStegunBook} yields 
\begin{equation}\label{Eq:SumTwoLaplacianPDF}
	f_{X}(x)=\frac{1}{\sqrt{2\pi\sigma^2}}\exp\Bigl(-\frac{(x-\mu)^2}{2\sigma^2}\Bigr),
\end{equation}
which is the \ac{PDF} of $X\!\sim\!\mathcal{N}(\mu,\sigma^2)$. In addition,   
$\mathcal{M}_{\nu}(\mu,\sigma^2)$ demonstrates a superior fit to different impulsive noise characteristics
with respect to $\nu\!\in\!\mathbb{R}_{+}$, and therefore
it is reasonably fit to any noise distribution, especially by estimating $\nu$, $\mu$, and $\sigma^2$ with the aid of 
\ac{MOM} in which sample moments are 
equated with theoretical moments of $\mathcal{M}_{\nu}(\mu,\sigma^2)$, that is  
\begin{equation}\label{Eq:VarianceAndGaussianityMOMEstimation}
	\hat{\mu}=\Expected{X},{~}\hat{\sigma}^2=\Variance{X},{~}\text{and}{~}\hat\nu=\frac{3}{\Kurtosis{X}-3}. 
\end{equation}
For that purpose, the higher-order moments of $\mathcal{M}_{\nu}(\mu,\sigma^2)$ are given in the following theorem.

\begin{theorem}\label{Theorem:McLeishMoments}
The moments of $X\!\sim\!\mathcal{M}_{\nu}(\mu,\sigma^2)$ is given by
\begin{equation}\label{Eq:McLeishMoments}
\!\!\!\!\!\mathbb{E}\bigl[X^n\bigr]\eq
	\mu^{n}\!\sum_{k=0}^{n}\!
		\Binomial{n}{k}
		\frac{\Gamma(\nu+{k}/{2})\Gamma({1}/{2}+{k}/{2})}{\Gamma(\nu)\Gamma({1}/{2})}
				\Bigl(\!\frac{\lambda}{\mu}\!\Bigr)^{\!k}\!\iseven{k}\!\!\!\!
\end{equation}
defined for $n\!\in\!\mathbb{N}$, where $\iseven{k}$ returns $1$ if $k$ is an even number, otherwise returns $0$.
\end{theorem}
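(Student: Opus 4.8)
The plan is to reduce the raw moments $\mathbb{E}[X^n]$ to central moments and to evaluate the latter through a single Mellin-type integral of the modified Bessel function $\BesselK[\nu-\frac{1}{2}]{\cdot}$ appearing in \eqref{Eq:McLeishPDF}. First I would write $X^n = (\mu + (X-\mu))^n$ and expand by the binomial theorem, giving $\mathbb{E}[X^n] = \sum_{k=0}^{n}\binom{n}{k}\mu^{n-k}\mathbb{E}[(X-\mu)^k]$. Factoring $\mu^n$ out front then reproduces the $(\lambda/\mu)^k$ weighting of \eqref{Eq:McLeishMoments}, so the whole claim follows once the central moment is shown to equal $\iseven{k}\,\lambda^{k}\Gamma(\nu+k/2)\Gamma(1/2+k/2)/(\Gamma(\nu)\Gamma(1/2))$.

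Next, substituting $u = x-\mu$ in \eqref{Eq:McLeishPDF} exhibits the integrand of $\mathbb{E}[(X-\mu)^k]$ as $u^k$ times a function of $|u|$ alone. For odd $k$ the integrand is odd, so the integral vanishes by antisymmetry; this is exactly the indicator $\iseven{k}$. For even $k$ the integrand is even, so I would fold the integral onto $(0,\infty)$, pick up a factor of $2$, and reduce the problem to evaluating $\int_{0}^{\infty} u^{k+\nu-\frac{1}{2}}\,\BesselK[\nu-\frac{1}{2}]{2u/\lambda}\,du$.

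The crux is this Bessel integral, which I would dispatch with the standard Mellin transform $\int_{0}^{\infty} x^{s-1}\BesselK[\alpha]{ax}\,dx = 2^{s-2}a^{-s}\Gamma\!\left(\tfrac{s+\alpha}{2}\right)\Gamma\!\left(\tfrac{s-\alpha}{2}\right)$, taken at $s = k+\nu+\tfrac{1}{2}$, $\alpha = \nu-\tfrac{1}{2}$ and $a = 2/\lambda$. A quick check shows the convergence condition $s > |\alpha|$ holds for every $\nu \in \mathbb{R}_+$ and $k \geq 0$, so no restriction on the parameters is needed. The two Gamma arguments then collapse to $\nu + k/2$ and $1/2 + k/2$, and on collecting the surviving powers of $2$ and of $\lambda$ against the prefactor $4/(\sqrt{\pi}\,\Gamma(\nu)\lambda^{\nu+1/2})$ everything cancels down to $\lambda^{k}\Gamma(\nu+k/2)\Gamma(1/2+k/2)/(\sqrt{\pi}\,\Gamma(\nu))$; using $\Gamma(1/2)=\sqrt{\pi}$ gives precisely the desired central moment.

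The main obstacle is purely the bookkeeping of constants: aligning the Mellin exponent $s = k+\nu+\tfrac{1}{2}$ with the Bessel order $\nu-\tfrac{1}{2}$ and confirming that the $2$- and $\lambda$-powers recombine correctly after the transform. I would keep $\lambda$ symbolic throughout, substituting $\lambda = \sqrt{2\sigma^2/\nu}$ only if a variance-parameterized form is wanted. As consistency checks I would verify that $k=0$ returns $1$, that $k=2$ returns $\lambda^2\nu/2 = \sigma^2$ in agreement with the stated variance, and that the limits $\nu\to\infty$ and $\nu=1$ recover the Gaussian and Laplacian moments implied by \eqref{Eq:SumTwoLaplacianPDF} and \eqref{Eq:LaplacianPDF}.
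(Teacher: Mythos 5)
Your proof is correct, and it follows the paper's overall architecture: the same decomposition $X=\mu+W$ with a binomial expansion reducing everything to the central moments $\mathbb{E}[(X-\mu)^k]$, after which the problem is a single Mellin-type integral of $\BesselK[\nu-\frac{1}{2}]{2u/\lambda}$. Where you diverge is in the key lemma used to evaluate that integral. The paper first rewrites the Bessel kernel as a Meijer G-function via $\BesselK[n]{x}\!=\!\MeijerG[right]{2,0}{0,2}{{x^2}/{2}}{\emptycoefficientSHORT}{{n}/{2},-{n}/{2}}$ and then invokes the Mellin-transform calculus for G-functions from Kilbas--Saigo, which also absorbs the odd/even distinction into the formalism. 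You instead apply the classical tabulated Mellin transform $\int_{0}^{\infty} x^{s-1}\BesselK[\alpha]{ax}\,dx = 2^{s-2}a^{-s}\Gamma\bigl(\tfrac{s+\alpha}{2}\bigr)\Gamma\bigl(\tfrac{s-\alpha}{2}\bigr)$ directly at $s=k+\nu+\tfrac{1}{2}$, $\alpha=\nu-\tfrac{1}{2}$, and you extract the factor $\iseven{k}$ by an explicit parity argument on the integrand. Your parameter bookkeeping checks out: the convergence condition $s>\abs{\alpha}$ holds for all $\nu\in\mathbb{R}_{+}$ and $k\geq 0$, the Gamma arguments collapse to $\nu+k/2$ and $1/2+k/2$, and the powers of $2$ and $\lambda$ cancel against the prefactor $4/(\sqrt{\pi}\,\Gamma(\nu)\lambda^{\nu+1/2})$ exactly as you claim (the integral equals $\tfrac{1}{4}\lambda^{k+\nu+1/2}\Gamma(\nu+k/2)\Gamma(1/2+k/2)$ for even $k$). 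What your route buys is elementarity and transparency: no special-function machinery beyond a standard integral table entry, and the vanishing of odd moments is visible as symmetry rather than emerging from G-function identities. What the paper's route buys is uniformity with the rest of its framework, where the same Meijer-G/Fox-H calculus is reused for the CDF, MGF, and Q-function results, so the moment computation comes essentially for free once that machinery is set up.
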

\begin{figure*}[tp] 
\centering
\begin{subfigure}{0.7\columnwidth}
    \centering
    \includegraphics[clip=true, trim=0mm 0mm 0mm 0mm,width=1.0\columnwidth,height=0.85\columnwidth]{./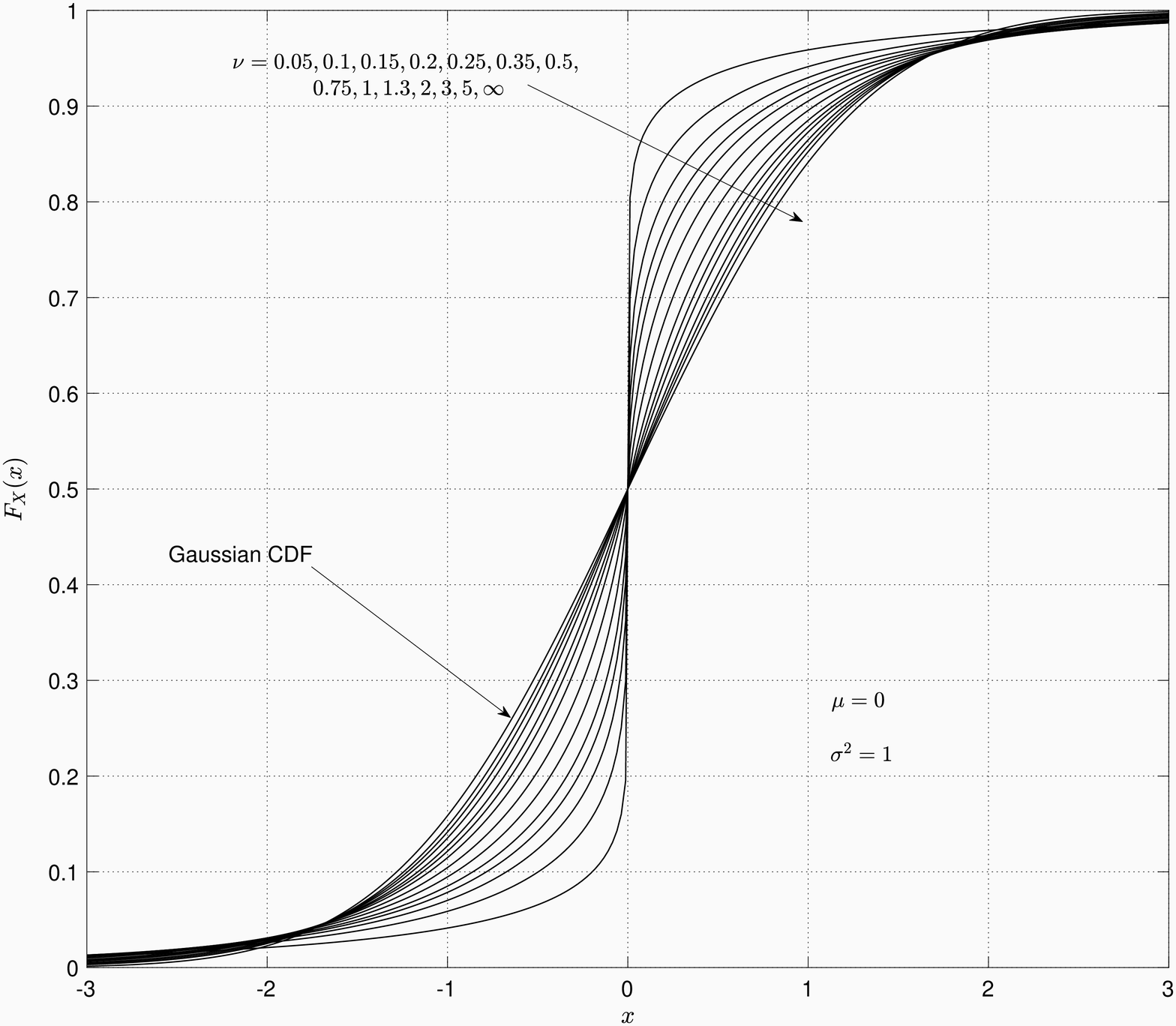}
    \caption{With respect to normality.}
    \label{Figure:McLeishCDFA}
\end{subfigure}
~~~
\begin{subfigure}{0.7\columnwidth}
    \centering
    \includegraphics[clip=true, trim=0mm 0mm 0mm 0mm,width=1.0\columnwidth,height=0.85\columnwidth]{./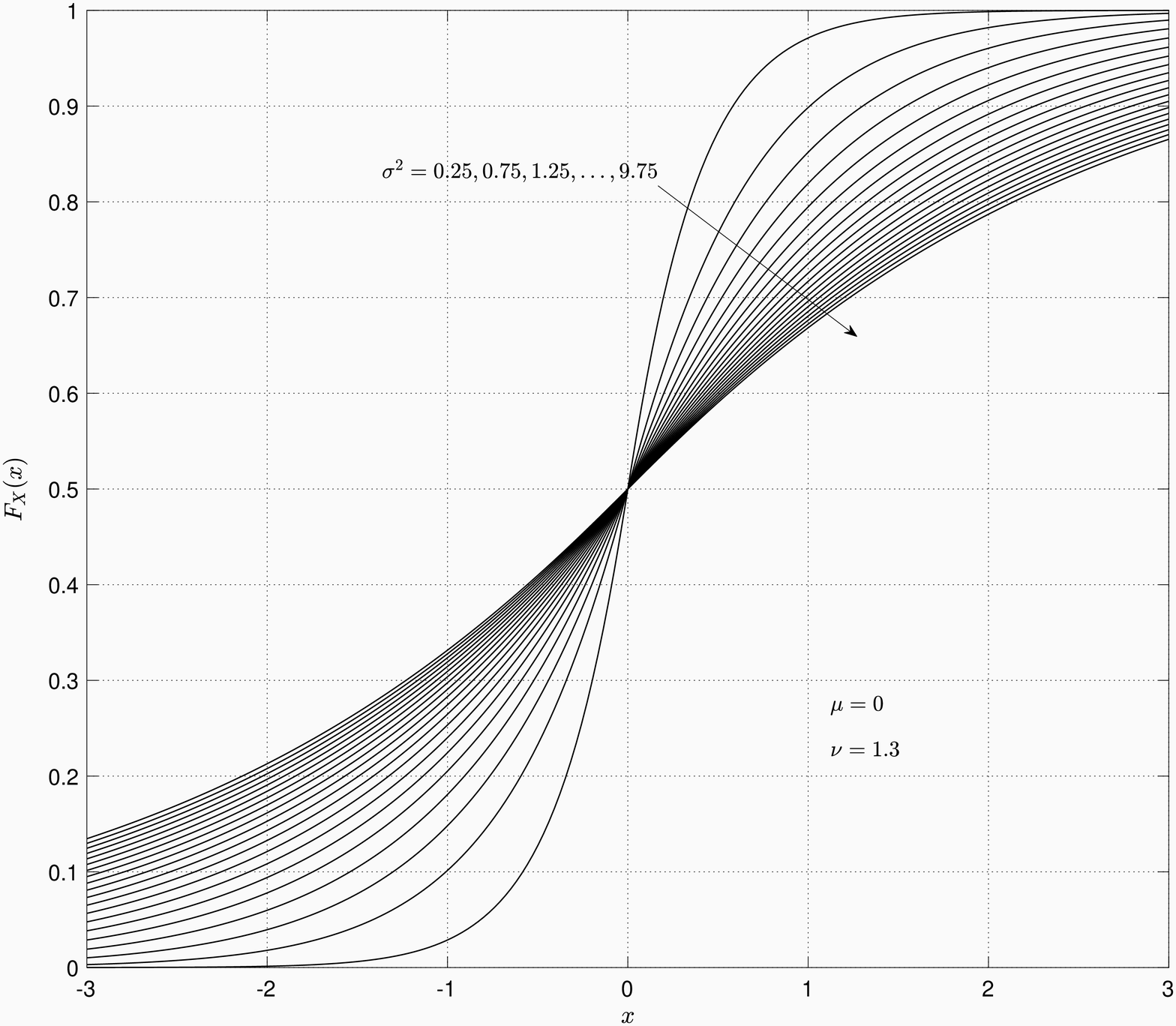}
    \caption{With respect to variance.}
    \label{Figure:McLeishCDFB}
\end{subfigure}
\caption{The \ac{CDF} of $\mathcal{M}_{\nu}(0,\sigma^2)$ with zero mean (i.e., the illustration of \eqref{Eq:McLeishCDF} for $\mu\!=\!0$).}
\label{Figure:McLeishCDF}
\vspace{-2mm} 
\end{figure*} 
\begin{proof}
Note that $X$ is readily expressed as $X\!=\!\mu+W$, where $W\!\sim\!\mathcal{M}_{\nu}(0,\sigma^2)$. Thus,
$\mathbb{E}[X^n]=\mathbb{E}[{(\mu+W)^{n}}]$ can be written using binomial expansion as follows
\begin{equation}
    \mathbb{E}[X^n]=\mu^{n}\sum_{k=0}^{n}\Binomial{n}{k}\frac{\mathbb{E}[{W}^k]}{\mu^k},
\end{equation}
where the binomial coefficient\cite[Eq.\!~(1.1.1)]{BibGradshteynRyzhikBook} is defined as
\begin{equation}
    \Binomial{n}{k}=\frac{n!}{(n-k)!\,k!}=
        \frac{(n+1)^{k}}{k!}\prod_{j=1}^{k}\Bigl(1-\frac{j}{n+1}\Bigr).
\end{equation}
With the aid of utilizing 
$\BesselK[n]{x}\!=\!\MeijerG[right]{2,0}{0,2}{{x^2}/{2}}{\emptycoefficientSHORT}{{n}/{2},-{n}/{2}}$
\cite[Eq.~(03.04.26.0008.01)]{BibWolfram2010Book}, where $\MeijerGDefinition{m,n}{p,q}{\cdot}$ denotes the Meijer's~G function\cite[Eq.\!~(8.2.1/1)]{BibPrudnikovBookVol3}, the \ac{PDF} of $W$ can be given in terms of the Meijer's G function. After endorsing $\mu\!=\!{0}$ and applying \cite[Eqs.\!~(2.9.1)\!~and\!~(2.9.19)]{BibKilbasSaigoBook} on \eqref{Eq:McLeishPDF},
$\mathbb{E}\bigl[W^n\bigr]$ is then expressed for $k\!\in\!\mathbb{N}$ as follows
\begin{equation}\label{Eq:McLeishMomentsUsingCentralMoments}
	\mathbb{E}[W^k]=\int_{-\infty}^{\infty}w^{k}
		\frac{1}{\sqrt{\pi}\lambda\Gamma(\nu)} 
			\scalemath{0.95}{0.95}{
				\MeijerG[right]{2,0}{0,2}{\frac{w^2}{\lambda}}{\emptycoefficient}{0,\nu-\frac{1}{2}}}dw,
\end{equation}
where $\emptycoefficient$ denotes the empty coefficient set. Immediately afterwards, in \eqref{Eq:McLeishMomentsUsingCentralMoments}, changing the variable $x^2\!\rightarrow\!{y}$ and employing \cite[Eqs.~(2.5.1)\!~and\!~(2.9.1)]{BibKilbasSaigoBook} results in   
\begin{equation}\label{Eq:McLeishCentralMoments}
	\mathbb{E}[W^k]=\frac{\Gamma(\nu+{k}/{2})}{\Gamma(\nu)} 
		\frac{\Gamma({1}/{2}+{k}/{2})}{\Gamma({1}/{2})}
			{\lambda}^{k}\iseven{k},
\end{equation}
where $\iseven{k}$ returns $1$ if $k$ is an even number, otherwise returns $0$. Finally, substituting \eqref{Eq:McLeishCentralMoments} into \eqref{Eq:McLeishMomentsUsingCentralMoments} readily results
in \eqref{Eq:McLeishMoments}, which completes the proof of \theoremref{Theorem:McLeishMoments}.
\end{proof}

\begin{definition}[McLeish's Quantile]\label{Definition:McLeishQFunction}
The~McLeish's~\ac{Q-function} is defined by 
\begin{equation}\label{Eq:McLeishQFunctionIntegral}
Q_{\nu}(x)=\int^{\infty}_{x}
	\!\!\frac{2}{\sqrt{\pi}}
	\frac{\abs{w}^{\nu-{1}/{2}}}
		{\Gamma(\nu)\lambda_{0}^{\nu+{1}/{2}}}
			 {K}_{\nu-{1}/{2}}
				 \Bigl(
					\frac{2\abs{w}}{\lambda_{0}}
				 \Bigr)\,{dw},
\end{equation}
for $x\in\mathbb{R}$. Alternatively, it is given for $x\geq{0}$ by 
\begin{subequations}\label{Eq:McLeishQFunction}
\begin{equation}\label{Eq:McLeishQFunctionForPositiveArguments}
\!\!\!Q_{\nu}\bigl(x\bigr)=\frac{2^{1-\nu}}{\pi\Gamma(\nu)}
  		\int_{0}^{\frac{\pi}{2}}\!{\Bigl(\frac{2x}{\lambda_{0}\sin(\theta)}\Bigr)}^{\nu}
  			\!\BesselK[\nu]{\frac{2x}{\lambda_{0}\sin(\theta)}}d\theta,\!
\end{equation}
and given for $x<{0}$ by
\begin{equation}\label{Eq:McLeishQFunctionFornegativeArguments}
	Q_{\nu}\bigl(x\bigr)={1}-Q_{\nu}(\abs{x}).
\end{equation}
\end{subequations}
\end{definition}

In wireless communications\cite[and~references~therein]{BibProakisBook,BibGoldsmithBook,BibAlouiniBook}, the \ac{CDF} of the additive noise is used as a quantile function to compare different systems in the context of~channel~reliability. In this connection, the \ac{CDF} of $X\!\sim\!\mathcal{M}_{\nu}(\mu,\sigma^2)$, i.e., $F_{X}(x)\!=\!\Pr\{X\leq{x}\}$
for $x\!\in\!\mathbb{R}$ is obtained in the following.

\begin{theorem}\label{Theorem:McLeishCDF}
The \ac{CDF} of $X\!\sim\!\mathcal{M}_{\nu}(\mu,\sigma^2)$, which is defined as $F_{X}(x)=\Pr\{X\leq{x}\}$, is given by
\begin{equation}\label{Eq:McLeishCDF}
	F_{X}(x)=1-Q_{\nu}\biggl({\frac{x-\mu}{\sigma}}\biggr),
\end{equation}
where $Q_{\nu}(\cdot)$ is the McLeish's \ac{Q-function} defined in \eqref{Eq:McLeishQFunction}. 
\end{theorem}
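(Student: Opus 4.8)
The plan is to compute $F_X(x)=\Pr\{X\le x\}$ directly from the density \eqref{Eq:McLeishPDF} and recognise the resulting tail integral as the McLeish Q-function of \eqref{Eq:McLeishQFunctionIntegral}. First I would write $F_X(x)=1-\int_{x}^{\infty}f_X(t)\,dt$, which is legitimate because \eqref{Eq:McLeishPDF} is a bona fide density integrating to unity (this normalisation is the $n\!=\!0$ instance of \theoremref{Theorem:McLeishMoments}, or equivalently the known normalisation of the modified-Bessel density). This reduces the claim to showing that the complementary tail $\int_{x}^{\infty}f_X(t)\,dt$ equals $Q_{\nu}\bigl((x-\mu)/\sigma\bigr)$.

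The key step is a standardising change of variable. Setting $w=(t-\mu)/\sigma$, so that $t=\mu+\sigma w$ and $dt=\sigma\,dw$, the lower limit $t=x$ maps to $w=(x-\mu)/\sigma$ while the upper limit is unchanged. Substituting into \eqref{Eq:McLeishPDF} and using $\lambda=\sigma\lambda_{0}$ (with $\sigma>0$), I would track the $\sigma$-powers explicitly: the numerator contributes $\abs{\sigma w}^{\nu-1/2}=\sigma^{\nu-1/2}\abs{w}^{\nu-1/2}$, the denominator contributes $\lambda^{\nu+1/2}=\sigma^{\nu+1/2}\lambda_{0}^{\nu+1/2}$, and the Jacobian contributes an extra $\sigma$, so that the $\sigma$-powers combine to $\sigma^{(\nu-1/2)+1-(\nu+1/2)}=\sigma^{0}=1$. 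Simultaneously the Bessel argument simplifies, since $2\abs{\sigma w}/\lambda=2\abs{w}/\lambda_{0}$. Hence every occurrence of $\sigma$ cancels and $\lambda$ is replaced throughout by $\lambda_{0}$.

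What remains is precisely
\[
\int_{x}^{\infty}f_X(t)\,dt=\int_{(x-\mu)/\sigma}^{\infty}\frac{2}{\sqrt{\pi}}\frac{\abs{w}^{\nu-1/2}}{\Gamma(\nu)\,\lambda_{0}^{\nu+1/2}}\BesselK[\nu-1/2]{\frac{2\abs{w}}{\lambda_{0}}}\,dw,
\]
which is exactly the definition \eqref{Eq:McLeishQFunctionIntegral} of $Q_{\nu}$ evaluated at $(x-\mu)/\sigma$. Substituting back into $F_X(x)=1-\int_{x}^{\infty}f_X(t)\,dt$ then yields \eqref{Eq:McLeishCDF}, completing the argument. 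Conceptually, this establishes that the normalised variable $(X-\mu)/\sigma$ follows the standard McLeish law $\mathcal{M}_{\nu}(0,1)$, whose upper tail is $Q_{\nu}$ by definition. I expect no genuine obstacle: the argument is essentially a one-line substitution, and the only point requiring care is the bookkeeping of the $\sigma$-powers so that the integrand collapses exactly onto the $Q_{\nu}$ integrand, together with the preliminary fact that \eqref{Eq:McLeishPDF} integrates to one.
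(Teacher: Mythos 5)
Your proof is correct. You standardise exactly as the paper does, but the two arguments diverge in what they do with the resulting tail integral, and the difference is worth noting. The paper's proof takes $W=(X-\mu)/\sigma\sim\mathcal{M}_{\nu}(0,1)$, splits into sign cases using symmetry, and then works the tail integral of $f_W$ into the angular (Craig-type) representation \eqref{Eq:McLeishQFunctionForPositiveArguments} via the substitutions $w\rightarrow w^2$ and $w\rightarrow\sin(\theta)$; in doing so it simultaneously substantiates the claim in \defref{Definition:McLeishQFunction} that the angular form is an equivalent expression of the tail integral. You instead stop at the primary definition \eqref{Eq:McLeishQFunctionIntegral}: after the scaling substitution (whose $\sigma$-bookkeeping you carry out correctly, since $(\nu-\tfrac{1}{2})+1-(\nu+\tfrac{1}{2})=0$ and the Bessel argument rescales to $2\abs{w}/\lambda_{0}$), the tail integral \emph{is} the defining integral of $Q_{\nu}$, so nothing further is needed, and you avoid the case split on the sign of $(x-\mu)/\sigma$ entirely because \eqref{Eq:McLeishQFunctionIntegral} is stated for all real arguments. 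What each route buys: yours is shorter and more elementary, reducing the theorem to a one-line change of variables plus the normalisation fact (which you correctly ground in the $n=0$ case of \theoremref{Theorem:McLeishMoments}, proved earlier, so there is no circularity); the paper's longer route earns the equivalence between \eqref{Eq:McLeishQFunctionIntegral} and \eqref{Eq:McLeishQFunction}, which is the form actually cited in the theorem statement and used repeatedly in the later BER\,/\,SER analysis. Since \defref{Definition:McLeishQFunction} already asserts that equivalence, your reliance on it is legitimate, but be aware that in the paper's logical structure that assertion is only actually verified inside the proof you did not see.
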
 

\begin{proof}
Let us define a random variable, $W\!=\!{(X-\mu)}/{\sigma}$, where $W\!\sim\!\mathcal{M}_{\nu}(0,1)$, whose 
\ac{PDF} is given, using \eqref{Eq:McLeishPDF}, by
\begin{equation}\label{Eq:StandardMcLeishPDF}
	f_{W}(w)=\frac{2}{\sqrt{\pi}}
		\frac{\abs{w}^{\nu-{1}/{2}}}{\Gamma(\nu)\,\lambda_{0}^{\nu+{1}/{2}}}
			\BesselK[\nu-{1}/{2}]{\frac{2\abs{w}}{\lambda_{0}}}.
\end{equation}
whose distributional symmetry around $0$ consequences that 
the \ac{CDF} $F_{W}(w)\!=\!\Pr\{W\leq{w}\}\!=\!\int_{-\infty}^{w}f_{W}(w)dw$ can be rewritten 
as $F_{W}(w)=1-F_{W}(|w|)$ for $w\in\mathbb{R}^{-}$.
But for $w\in\mathbb{R}_{+}$, $F_{W}(w)$ is written as
$F_{W}(w)=1-\int_{w^2}^{\infty}\frac{1}{\sqrt{2w}}f_{W}(\sqrt{w})dw$. After some algebraic manipulations, it is rewritten as 
\begin{equation}\nonumber
	F_{W}(w)=1-\frac{2^{1-\nu}}{\pi\Gamma(\nu)}\int_{0}^{1}\!\frac{1}{\sqrt{1-w^2}}
		{\Bigl(\frac{2}{w\lambda_{0}}\Bigr)}^{\nu}
			\BesselK[\nu]{\frac{2}{w\lambda_{0}}}dw,
\end{equation}
where changing the variable as $w\!\rightarrow\!\sin(\theta)$ and utilizing \eqref{Eq:McLeishQFunction} results in $F_{W}(w)=1-Q_{\nu}(w)$. Accordingly, the \ac{CDF} of $X$ can be readily given as in
\eqref{Eq:McLeishCDF}, which proves \theoremref{Theorem:McLeishCDF}.
\end{proof}

The \ac{CDF} of $X\!\sim\!\mathcal{M}_{\nu}(\mu,\sigma^2)$ is described in \figref{Figure:McLeishCDF}~in~detail using \eqref{Eq:McLeishCDF}. It is therefore worth for the consistency and validity of the McLeish's \ac{Q-function} to mention that \eqref{Eq:McLeishQFunction} reduces for $\nu\!\rightarrow\!\infty$ to the well-known result, that is
\begin{equation}\label{Eq:McLeishQFunctionandGaussianQFunctionRelation}
	\lim_{\nu\rightarrow\infty}Q_{\nu}(x)=Q(x)
\end{equation}
where $Q(x)\!=\!\frac{1}{\sqrt{2\pi}}\int_{x}^{\infty}e^{-\frac{1}{2}u^2}du$ denotes the standard Gaussian \ac{Q-function}\cite[Eq.\!~(2.3-10)]{BibProakisBook}. Further, following are some of the fundamental properties
of McLeish's \ac{Q-function}:
\begin{subequations}\label{Eq:McLeishQFunctionProperties}
\setlength\arraycolsep{1.4pt}
\begin{eqnarray}
\label{Eq:McLeishQFunctionPropertiesA}
	Q_{\nu}(-x)=1-Q_{\nu}(x)
		&\text{ and }&
			Q_{\nu}(\pm\infty)=\scalemath{0.85}{0.85}{\frac{1}{2}}(1\mp{1}),~~~\\
\label{Eq:McLeishQFunctionPropertiesB}
	Q_{\nu}(0)=\scalemath{0.85}{0.85}{\frac{1}{2}}
		&\text{ and }&
			Q_{0}(x)\rightarrow{0}^{+},
\end{eqnarray}
\end{subequations}
In addition, It is worth examining not only the special cases of McLeish's \ac{Q-function} for the special non-extreme finite values of the normality $\nu$, but also for lower and upper bounds. Accordingly, setting $\nu\!=\!1$ reduces McLeish's \ac{Q-function} to the Laplacian \ac{Q-function}, that is
\begin{equation}\label{Eq:LaplacianQFunction}
\!\!\!LQ(x)=\begin{cases}
	\displaystyle\frac{1}{2}
 		\exp(-2\sqrt{2}x), & \!\!\text{if}\!~x\geq{0},\\[2mm]
 	\displaystyle{1}-LQ(\abs{x}), & \!\!\text{if}\!~x<{0}.
\end{cases}
\end{equation}
As seen in the following sections, the McLeish's \ac{Q-function} is often used in the \ac{BER}\,/\,\ac{SER} analysis of the signaling using modulation schemes over AWMN channels. The McLeish's \ac{Q-function} can be tabulated, or implemented as a built-in functions in mathematical software tools. However, in many cases it is useful to have closed-form bounds or approximations instead of the exact expression. In fact, these approximations are particularly useful in evaluating the \ac{BER}\,/\,\ac{SER} in many problems of the communication theory. For that purpose, the lower and upper bounds of the McLeish's \ac{Q-function} are found to be obtained for $x\!>\!0$ using Taylor series expansion under some simplification, that is
\begin{equation}
    {Q}^{\text{LB}}_{\nu}(x)\leq{Q}_{\nu}(x)\leq{Q}^{\text{UB}}_{\nu}(x),~\text{for}~x>0,
\end{equation}
where the lower bound approximation ${Q}^{\text{LB}}_{\nu}(x)$ is given by
\begin{equation}\label{Eq:McLeishQFunctionLowerBound}
    {Q}^{\text{LB}}_{\nu}(x)=\frac{1}{\sqrt{\pi}\Gamma(\nu)}
            \Bigl(\frac{x}{\lambda_{0}}\Bigr)^{\nu-\frac{1}{2}}
            \Bigl(
                K_{\nu+\frac{1}{2}}\Bigl(\frac{2x}{\lambda_{0}}\Bigl)
                -
                \frac{\lambda_{0}}{2x}
                    K_{\nu+\frac{3}{2}}\Bigl(\frac{2x}{\lambda_{0}}\Bigl)
            \Bigr),
\end{equation}
and the upper bound approximation ${Q}^{\text{UB}}_{\nu}(x)$ is given by 
\begin{equation}\label{Eq:McLeishQFunctionUpperBound}
    {Q}^{\text{UB}}_{\nu}(x)=\frac{1}{\sqrt{\pi}\Gamma(\nu)}
            \Bigl(\frac{x}{\lambda_{0}}\Bigr)^{\nu-\frac{1}{2}}
                K_{\nu+\frac{1}{2}}\Bigl(\frac{2x}{\lambda_{0}}\Bigl).
\end{equation}
Then, the gap between ${Q}^{\text{LB}}_{\nu}(x)$ and ${Q}^{\text{UB}}_{\nu}(x)$ is given by 
\begin{equation}\label{Eq:McLeishQFunctionGAPBetweenUpperBoundAndLowerBound}
{Q}^{\text{UB}}_{\nu}(x)-{Q}^{\text{LB}}_{\nu}(x)=
    \frac{1}{2\sqrt{\pi}\Gamma(\nu)}
            \Bigl(\frac{x}{\lambda_{0}}\Bigr)^{\nu-\frac{3}{2}}
                K_{\nu+\frac{3}{2}}\Bigl(\frac{2x}{\lambda_{0}}\Bigl).
\end{equation}

Note that H-transforms, also known as Mellin-Barnes integrals\footnote{For further details about both H-transforms and
Fox's H functions, readers are referred to \cite[and\!~references\!~therein]{BibKilbasSaigoBook}.} are the
integral kernels involving Meijer's G and Fox's H functions that have found many applications in such fields
as physics, statistics, and engineering \cite{BibKilbasSaigoBook}.
In the literature of wireless communications, H-transforms have been gained some attention to find closed-form expressions for averaged performance analysis, and also Fox's H function has recently started to be used 
as a possible fading distribution, commonly referred as the Fox's H distribution \cite{BibYilmazAlouiniTCOM2012}. It 
is thus useful to express McLeish's \ac{Q-function} in terms of Meijer's G and Fox's H functions. Such expressions allow the use of Mellin-Barnes integrals to obtain new closed-form expressions.

\begin{theorem}\label{Theorem:McLeishQFunctionUsingFoxHAndMeijerG}
McLeish's \ac{Q-function} can be alternatively expressed in terms of Fox's H function as follows
\begin{equation}\label{Eq:McLeishQFunctionUsingFoxH}
\!\!\!Q_{\nu}(x)=\begin{cases}
	\displaystyle\frac{1}{\Gamma(\nu)}
 		\FoxH[right]{2,0}{1,2}{2\nu{x}^2}{(1,1)}{(0,2),(\nu,1)}, & x\geq{0},\\[2mm]
 	\displaystyle{1}-Q_{\nu}(\abs{x}), & x<{0},
\end{cases}
\end{equation}
where $\FoxHDefinition{m,n}{p,q}{\cdot}$ is the Fox's H function\emph{\cite[Eq.\!~(8.3.1/1)]{BibPrudnikovBookVol3},\cite[Eq.\!~(1.1.1)]{BibKilbasSaigoBook}}; or~in~terms~of~Meijer's~G~function as follows 
\begin{equation}\label{Eq:McLeishQFunctionUsingMeijerG}
\!\!Q_{\nu}(x)=\begin{cases}
	\displaystyle\frac{1}{2\sqrt{\pi}\Gamma(\nu)}
 		\MeijerG[right]{3,0}{1,3}{2\nu{x}^2}{1}{0,\frac{1}{2},\nu}, & x\geq{0},\\[2mm]
 	\displaystyle{1}-Q_{\nu}(\abs{x}), & x<{0}.
\end{cases}
\end{equation}
\end{theorem}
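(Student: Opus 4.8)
The plan is to compute the Mellin transform of $Q_\nu$ and to recognise the result as a Mellin--Barnes integral of the required type; this is the natural route given the H-transform machinery emphasised just above the statement. First I would dispose of the case $x<0$, which is identical in both \eqref{Eq:McLeishQFunctionUsingFoxH} and \eqref{Eq:McLeishQFunctionUsingMeijerG}: the reflection identity $Q_\nu(-x)=1-Q_\nu(x)$ from \eqref{Eq:McLeishQFunctionPropertiesA} reduces everything to the branch $x\geq 0$. For $x\geq 0$ I would write $Q_\nu$ as the upper tail of the standardised density, $Q_\nu(x)=\int_{x}^{\infty}f_{W}(w)\,dw$ with $W\sim\mathcal{M}_{\nu}(0,1)$ and $f_{W}$ as in \eqref{Eq:StandardMcLeishPDF}.

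The key step is the Mellin transform. Interchanging the order of integration (Fubini, justified by the positivity of $f_{W}$) and then invoking the absolute moment, which is the analytic continuation to complex $s$ of the moment formula \eqref{Eq:McLeishCentralMoments} underlying \theoremref{Theorem:McLeishMoments}, namely $\Expected{\abs{W}^{s}}=\Gamma(\nu+s/2)\Gamma(1/2+s/2)\lambda_{0}^{s}/(\Gamma(\nu)\Gamma(1/2))$, I obtain
\begin{equation}
\MellinTransform{x}{Q_\nu}{s}=\frac{1}{s}\int_{0}^{\infty}w^{s}f_{W}(w)\,dw=\frac{1}{2s}\Expected{\abs{W}^{s}}=\frac{\lambda_{0}^{s}}{2\sqrt{\pi}\,\Gamma(\nu)}\frac{\Gamma(\nu+s/2)\,\Gamma(1/2+s/2)}{s}.
\end{equation}
I would cross-check this transform independently by applying the same $\int_{0}^{\infty}w^{\mu-1}\BesselK[\rho]{aw}\,dw$ identity directly to the single-integral form \eqref{Eq:McLeishQFunctionForPositiveArguments}, which reproduces the same right-hand side after evaluating the elementary $\int_{0}^{\pi/2}\sin^{s}\theta\,d\theta$; this guards against sign or normalisation slips.

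To invert, I would substitute $s\to 2u$ and rewrite the simple pole $1/u$ as $\Gamma(u)/\Gamma(u+1)$, so that the integrand becomes $\Gamma(u)\Gamma(u+\frac{1}{2})\Gamma(\nu+u)/\Gamma(u+1)$ against a power of $x^{2}/\lambda_{0}^{2}$. Read as an inverse Mellin--Barnes integral, this is a Meijer's~G function with three unit-coefficient lower parameters $0,\tfrac{1}{2},\nu$ and the single upper parameter $1$, which is exactly the shape of \eqref{Eq:McLeishQFunctionUsingMeijerG}. The Fox's~H form \eqref{Eq:McLeishQFunctionUsingFoxH} then follows by collapsing $\Gamma(u)\Gamma(u+\frac{1}{2})$ into a single $\Gamma(2u)$ through the Legendre duplication formula $\Gamma(u)\Gamma(u+\frac{1}{2})=2^{1-2u}\sqrt{\pi}\,\Gamma(2u)$; the emergent factor $2^{-2u}$ is absorbed into the argument, and the doubling of the coefficient on the parameter $0$ produces the non-unit entry $(0,2)$ that marks a genuine Fox's~H rather than a Meijer's~G.

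The main difficulty is bookkeeping rather than conceptual. Three points need care: (i) the Fubini interchange and the placement of the inversion contour on the convergence strip $\Re\{s\}>0$, to the right of every pole of $\Gamma(\nu+s/2)$ and $\Gamma(1/2+s/2)$ and of the simple pole at $s=0$, so that the Mellin--Barnes integral represents $Q_\nu$ and not an analytic neighbour; (ii) the legitimacy of continuing the integer-order moments of \theoremref{Theorem:McLeishMoments} to complex order, which is valid because the moment integral converges on that same strip; and (iii) tracking the powers of two through the duplication step, since these are precisely what relate the arguments of the Meijer's~G and Fox's~H representations. As final consistency checks I would confirm $Q_\nu(0)=\tfrac{1}{2}$ and the Gaussian limit against \eqref{Eq:McLeishQFunctionProperties} and \eqref{Eq:McLeishQFunctionandGaussianQFunctionRelation}.
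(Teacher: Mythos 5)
Your method is sound, and it is a genuinely different route from the paper's. The paper starts from the trigonometric form \eqref{Eq:McLeishQFunctionForPositiveArguments}, replaces $\BesselK[\nu]{\cdot}$ by its Mellin--Barnes (Fox's~H) representation, interchanges the contour integral with the $\theta$-integral, evaluates $\int_{0}^{\pi/2}\sin^{2s}(\theta)\,d\theta=\sqrt{\pi}\,\Gamma(\tfrac{1}{2}+s)/(2\Gamma(1+s))$ to generate the third Gamma factor, and thus obtains the Fox's~H form \emph{first}; the Meijer's~G form is then produced by a table reduction formula. You instead Mellin-transform the tail integral directly, via Fubini and the analytic continuation of the even-moment formula \eqref{Eq:McLeishCentralMoments}, invert to get the Meijer's~G form \emph{first}, and pass to Fox's~H by Legendre duplication. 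Both routes hinge on the identical Mellin--Barnes integrand $\Gamma(u)\Gamma(u+\tfrac{1}{2})\Gamma(\nu+u)/\Gamma(1+u)$; what yours buys is self-containedness (it reuses \theoremref{Theorem:McLeishMoments} instead of the $\theta$-representation plus a sine-power table integral) and an explicit Mellin transform of $Q_{\nu}$, which is precisely the object needed for the H-transform applications advertised before the theorem.

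One point must be made explicit, however, because as written you assert a match that does not hold. Your inversion produces the argument $x^{2}/\lambda_{0}^{2}$, and since $\lambda_{0}^{2}=2/\nu$ this equals $\nu x^{2}/2$, \emph{not} the $2\nu x^{2}$ printed in \eqref{Eq:McLeishQFunctionUsingMeijerG}. Your own duplication step shows that the two printed displays cannot both carry the argument $2\nu x^{2}$: replacing $\Gamma(u)\Gamma(u+\tfrac{1}{2})$ by $2^{1-2u}\sqrt{\pi}\,\Gamma(2u)$ rescales the argument by exactly $4$, so applied to your G-form with argument $\nu x^{2}/2$ it yields the H-form with argument $4\cdot\nu x^{2}/2=2\nu x^{2}$ and the coefficient pair $(0,2)$ --- i.e., \eqref{Eq:McLeishQFunctionUsingFoxH} verbatim. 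Hence your proof establishes \eqref{Eq:McLeishQFunctionUsingFoxH} exactly, but establishes \eqref{Eq:McLeishQFunctionUsingMeijerG} only after correcting its argument to $\nu x^{2}/2$. The correction is the right one: at $\nu=1$ the corrected G-form collapses to $\tfrac{1}{2}e^{-\sqrt{2}x}$, the tail of the unit-variance Laplacian \eqref{Eq:LaplacianPDF}, whereas the printed argument would give $\tfrac{1}{2}e^{-2\sqrt{2}x}$. (The paper's own proof derives the H-form correctly; the factor of four is lost in its final table-reduction to the G-form, so the defect lies in the printed G-equation rather than in either derivation.) Your write-up should therefore not say the inverted integral ``is exactly the shape of'' \eqref{Eq:McLeishQFunctionUsingMeijerG} and move on; it must state the derived argument and flag the factor-of-four discrepancy, since the printed G-equation cannot be derived as it stands.
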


\begin{proof}
Note that in \eqref{Eq:McLeishQFunctionForPositiveArguments}, taking place of the modified Bessel function of the second by
\cite[Eq.\!~(2.9.19)]{BibKilbasSaigoBook} and then performing some algebraic manipulations yields 
\begin{equation}\nonumber
\!\!Q_{\nu}(x)=\frac{1}{\pi\Gamma(\nu)}\int_{0}^{\frac{\pi}{2}}
	\scalemath{0.95}{0.95}{
		\FoxH[right]{2,0}{0,2}{\frac{x^2}{\lambda_{0}^2\sin^{2}(\theta)}}{\emptycoefficient}{(0,1),(\nu,1)}}d\theta,\!
\end{equation}
where employing \cite[Eq.\!~(1.1.1)]{BibKilbasSaigoBook} results in Mellin-Barnes contour integral in which changing the
order of integrals and using \cite[Eq. (3.621/1)]{BibGradshteynRyzhikBook}
\begin{equation}
    \int_{0}^{{\pi}/{2}}\!\sin^{2s}(\theta)d\theta=
        \frac{\sqrt{\pi}\Gamma(\frac{1}{2}+s)}{2\Gamma(1+s)}    
\end{equation}
for $\RealPart{s}\!>\!-\frac{1}{2}$ yields \eqref{Eq:McLeishQFunctionUsingFoxH}, which readily proves the first step
of \theoremref{Eq:McLeishQFunctionUsingFoxH}. In the second step, after using \cite[Eq.~(8.3.2/22)]{BibPrudnikovBookVol3},
\eqref{Eq:McLeishQFunctionUsingFoxH} reduces to \eqref{Eq:McLeishQFunctionUsingMeijerG}, which completes the
proof of \theoremref{Theorem:McLeishQFunctionUsingFoxHAndMeijerG}.
\end{proof}

Immediately after we examine the results provided in  \cite{BibProakisBook,BibAlouiniBook,BibGoldsmithBook,BibCraigMILCOM1991}, we readily recognize that Craig's partial \ac{Q-function}, defined as $Q\bigl(x,\phi\bigr)\!=\!\frac{1}{2\pi}\int_{0}^{\phi}\exp\bigl(-{x^2}/{\sin^2(\theta)}\bigr)d\theta$, is widely exploited in the \ac{SER} analysis of M-ary modulation and 2-dimensional modulation schemes, for example in  \cite{BibCraigMILCOM1991}, and \cite[Eqs. (4.9), (4.16), (4.17), (4.18), (4.19) and (5.77)]{BibAlouiniBook}. Analogously, we can define the McLeish's partial \ac{Q-function} as it is shown in the following. 

\begin{definition}[McLeish's Partial Quantile]\label{Definition:McLeishPartialQFunction}
For a certain $\phi\in[0,{\pi}/{2}]$, McLeish's partial \ac{Q-function} is defined as 
\begin{subequations}\label{Eq:McLeishPartialQFunction}
\begin{equation}\label{Eq:McLeishPartialQFunctionA}
\!\!\!\!\!Q_{\nu}\bigl(x,\phi\bigr)=\frac{2^{1-\nu}}{\pi\Gamma(\nu)}
  		\int_{0}^{\phi}\!\!{\Bigl(\frac{2x}{\lambda_{0}\sin(\theta)}\Bigr)}^{\nu}
  			\!\BesselK[\nu]{\frac{2x}{\lambda_{0}\sin(\theta)}}\!d\theta\!\!\!
\end{equation}
for $x\geq{0}$; 
\begin{equation}\label{Eq:McLeishPartialQFunctionB}
	Q_{\nu}\bigl(x,\phi\bigr)={1}-Q_{\nu}(\abs{x},\phi),
\end{equation}
for $x<{0}$; such that $Q_{\nu}\bigl(x\bigr)=Q_{\nu}\bigl(x,{\pi}/{2}\bigr)$.  
\end{subequations}
\end{definition}

In wireless communications\cite[and\!~references\!~therein]{BibProakisBook,BibGoldsmithBook,BibAlouiniBook}, the \ac{CCDF} of the additive noise is used as a quantile function to compare different systems
in the context of \ac{BER} or \ac{SER}. In this connection, the \ac{CCDF} of $X\!\sim\!\mathcal{M}_{\nu}(\mu,\sigma^2)$ is 
obtained in the following.

\begin{theorem}\label{Theorem:McLeishCCDF}
The \ac{CCDF} of $X\!\sim\!\mathcal{M}_{\nu}(\mu,\sigma^2)$, which is defined as $\widehat{F}_{X}(x)=\Pr\{X>{x}\}$, is given by
\begin{equation}\label{Eq:McLeishCCDF}
	\widehat{F}_{X}(x)=Q_{\nu}\biggl({\frac{x-\mu}{\sigma}}\biggr).
\end{equation}
\end{theorem}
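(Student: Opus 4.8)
The plan is to derive this result directly from the \ac{CDF} already established in \theoremref{Theorem:McLeishCDF}, exploiting the elementary complementarity relationship between the \ac{CDF} and the \ac{CCDF}. Since $X\!\sim\!\mathcal{M}_{\nu}(\mu,\sigma^2)$ is an absolutely continuous random variable (its \ac{PDF} is given in closed form by \eqref{Eq:McLeishPDF}), we have $\Pr\{X\eq{x}\}\eq{0}$ for every $x\!\in\!\mathbb{R}$. Consequently, the event $\{X>x\}$ is the exact complement of the event $\{X\leq{x}\}$ up to a null set, so that $\widehat{F}_{X}(x)\eq\Pr\{X>{x}\}\eq{1}-\Pr\{X\leq{x}\}\eq{1}-F_{X}(x)$.

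First I would invoke \theoremref{Theorem:McLeishCDF}, which supplies the closed-form expression $F_{X}(x)\eq{1}-Q_{\nu}\bigl({(x-\mu)}/{\sigma}\bigr)$ from \eqref{Eq:McLeishCDF}. Substituting this into the complementarity relation above gives
\begin{equation}\nonumber
	\widehat{F}_{X}(x)=1-\biggl(1-Q_{\nu}\Bigl({\frac{x-\mu}{\sigma}}\Bigr)\biggr)=Q_{\nu}\Bigl({\frac{x-\mu}{\sigma}}\Bigr),
\end{equation}
which is exactly the claimed expression \eqref{Eq:McLeishCCDF}. This completes the argument.

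There is essentially no hard step here: the entire content of the theorem is encapsulated in the prior derivation of the \ac{CDF}, and the remaining work is the one-line algebraic cancellation of the two unit terms. The only point meriting explicit mention is the absolute continuity of $\mathcal{M}_{\nu}(\mu,\sigma^2)$, which justifies writing $\Pr\{X>{x}\}\eq{1}-\Pr\{X\leq{x}\}$ as an exact equality rather than an inequality; this is immediate from the existence of the \ac{PDF} in \eqref{Eq:McLeishPDF}. Optionally, as a consistency check, one could verify the result against the symmetry property $Q_{\nu}(-x)\eq{1}-Q_{\nu}(x)$ recorded in \eqref{Eq:McLeishQFunctionPropertiesA}, confirming that $\widehat{F}_{X}(x)$ and $F_{X}(x)$ indeed sum to unity as required.
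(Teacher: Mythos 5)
Your proof is correct and follows exactly the paper's own argument: apply the complementarity $\widehat{F}_{X}(x)={1}-{F}_{X}(x)$ and substitute the \ac{CDF} from \theoremref{Theorem:McLeishCDF}. Note only that the appeal to absolute continuity is superfluous, since $\{X>x\}$ is the exact complement of $\{X\leq{x}\}$ for any random variable, so $\Pr\{X>{x}\}={1}-\Pr\{X\leq{x}\}$ holds without reference to the \ac{PDF}.
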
 

\begin{proof}
Note that $\widehat{F}_{X}(x)\!=\!{1}-{F}_{X}(x)$ since $\Pr\{X>{x}\}=1-\Pr\{X\leq{x}\}$. 
The proof is thus obvious using \theoremref{Theorem:McLeishCDF}. 
\end{proof}

As mentioned in \cite{BibSimonAlouiniProcIEEE1998,BibAlouiniSimonEL1998,BibAlouiniGoldsmithTCOM1999,BibAlouiniBook}, 
the \ac{MGF} is an efficient mathematical instrument not only to derive inequalities on tail probabilities of distributions but to achieve their statistical characterisations, and therefore is extremely common in performance results for communication problems related to partially coherent, differentially coherent, and non-coherent communications and is very useful in statistics. We derive the \ac{MGF} of McLeish distribution as it is given in the following.

\begin{theorem}\label{Theorem:McLeishMGF}
The \ac{MGF} of $X\!\sim\!\mathcal{M}_{\nu}(\mu,\sigma^2)$ is given by
\begin{equation}\label{Eq:McLeishMGF}
	M_{X}(s)={e}^{-s\mu}\Bigl(1-\scalemath{0.9}{0.9}{\frac{\lambda^2}{4}}s^2\Bigr)^{-\nu}
\end{equation}
with the existence region ${-S_0}\!<\!{\RealPart{s}}\!<\!{S_0}$, where ${S_0}\!\in\!\mathbb{R}_{+}$ is given by  
${S_0}=2/{\lambda}$. 
\end{theorem}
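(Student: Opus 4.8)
The plan is to reduce to the zero-mean case and then build the transform from the moments already established in \theoremref{Theorem:McLeishMoments}. Writing $X = \mu + W$ with $W \sim \mathcal{M}_{\nu}(0,\sigma^2)$, linearity of the exponent gives $M_{X}(s) = \mathbb{E}[e^{-s(\mu+W)}] = e^{-s\mu}M_{W}(s)$, so the $e^{-s\mu}$ prefactor in \eqref{Eq:McLeishMGF} peels off immediately and the whole problem collapses to evaluating $M_{W}(s) = \mathbb{E}[e^{-sW}]$. Since $W$ has all moments finite (the feature the paper stresses about McLeish distribution), I would expand the exponential and interchange expectation with summation to obtain $M_{W}(s) = \sum_{n=0}^{\infty}\tfrac{(-s)^{n}}{n!}\mathbb{E}[W^{n}]$. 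The central moments $\mathbb{E}[W^{n}]$ are exactly what \eqref{Eq:McLeishCentralMoments} supplies; the factor $\iseven{n}$ there annihilates every odd term, so only $n=2m$ survives and the sign ambiguity from $(-s)^{n}$ disappears.

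The second step is to collapse the surviving even series into a binomial form. Setting $n=2m$ in \eqref{Eq:McLeishCentralMoments} gives $\mathbb{E}[W^{2m}] = \tfrac{\Gamma(\nu+m)}{\Gamma(\nu)}\tfrac{\Gamma(\frac{1}{2}+m)}{\Gamma(\frac{1}{2})}\lambda^{2m}$. Here $\tfrac{\Gamma(\nu+m)}{\Gamma(\nu)} = \Pochhammer{\nu}{m}$ is the Pochhammer symbol, while the Legendre duplication identity yields $\tfrac{\Gamma(\frac{1}{2}+m)}{\Gamma(\frac{1}{2})} = \tfrac{(2m)!}{4^{m}\,m!}$. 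The $(2m)!$ cancels the $1/(2m)!$ coming from the exponential series, leaving $M_{W}(s) = \sum_{m=0}^{\infty}\tfrac{\Pochhammer{\nu}{m}}{m!}\bigl(\tfrac{\lambda^2}{4}s^2\bigr)^{m}$, which I recognise as the generalized binomial series for $(1-z)^{-\nu}$ with $z = \tfrac{\lambda^2}{4}s^2$. This identifies $M_{W}(s) = (1-\tfrac{\lambda^2}{4}s^2)^{-\nu}$ and hence establishes \eqref{Eq:McLeishMGF}.

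Finally, the existence region. The binomial series converges for $|z|<1$, i.e. $|s|<2/\lambda$, which already fixes $S_0 = 2/\lambda$ on the real axis; to justify the full vertical strip $-S_0<\RealPart{s}<S_0$ I would argue directly from absolute convergence of the bilateral integral $\int_{-\infty}^{\infty} e^{-sw}f_{W}(w)\,dw$, using the large-argument asymptotic $\BesselK[\nu-\frac{1}{2}]{\tfrac{2\abs{w}}{\lambda}} \sim \sqrt{\tfrac{\pi\lambda}{4\abs{w}}}\,e^{-2\abs{w}/\lambda}$. The integrand then behaves like $\abs{w}^{\nu-1}e^{-(2/\lambda)\abs{w}-\RealPart{s}w}$; inspecting $w\to+\infty$ and $w\to-\infty$ separately gives the two half-line conditions $\RealPart{s}>-2/\lambda$ and $\RealPart{s}<2/\lambda$, which combine to $\abs{\RealPart{s}}<2/\lambda=S_0$. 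The main obstacle is precisely this last rigor point: the moment series only certifies the formula on the real interval $|s|<S_0$, so recovering the stated complex strip (and analyticity there) genuinely requires the tail-decay estimate rather than the series alone. As an independent sanity check one could instead realize $W$ as a normal variance mixture $W = \sqrt{V}\,Z$ with $Z\sim\mathcal{N}(0,1)$ and $V$ Gamma-distributed with rate $2/\lambda^2$; conditioning gives $\mathbb{E}[e^{-sW}\mid V] = e^{Vs^2/2}$, and the Gamma moment-generating function returns the same $(1-\tfrac{\lambda^2}{4}s^2)^{-\nu}$, at the cost of first verifying that this mixture reproduces the \ac{PDF} in \eqref{Eq:McLeishPDF}.
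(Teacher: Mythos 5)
Your proof is correct, but it follows a genuinely different route from the paper's. The paper never touches the moment series: it writes $M_{X}(s)={s}\int_{-\infty}^{\infty}e^{-sx}F_{X}(x)\,dx$, splits this into two half-line integrals, substitutes the Fox's~H representation of McLeish's \ac{Q-function} from \theoremref{Theorem:McLeishQFunctionUsingFoxHAndMeijerG} together with Meijer's~G representations of the exponential, and evaluates the resulting Mellin--Barnes contour integral; the strip $-2/\lambda<\RealPart{s}<2/\lambda$ falls out as the convergence region of that contour integral. You instead peel off $e^{-s\mu}$, feed the central moments \eqref{Eq:McLeishCentralMoments} from \theoremref{Theorem:McLeishMoments} into the exponential series, and use $\Gamma(m+\tfrac{1}{2})/\Gamma(\tfrac{1}{2})=(2m)!/(4^{m}m!)$ to collapse the even terms into the generalized binomial series $\sum_{m}\Pochhammer{\nu}{m}\,(\lambda^{2}s^{2}/4)^{m}/m!$ --- a computation that is correct and considerably more elementary, since it reuses a result already proved rather than the H-transform machinery. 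What each approach buys: the paper's Mellin--Barnes route stays inside the special-function toolkit it deploys throughout and hands over the complex strip in one step, whereas your series argument only certifies the formula on the real interval $|s|<2/\lambda$ and needs the supplementary work you correctly identify --- absolute convergence of the bilateral integral via the large-argument asymptotics of $\BesselK[\nu-1/2]{\cdot}$, plus (to be fully rigorous) the observation that both sides are analytic in the open strip and agree on a real interval, so the identity theorem extends the equality; you should also note that the sum/expectation interchange itself is justified by Tonelli once $\mathbb{E}[e^{|s||W|}]<\infty$, which follows from the same tail estimate. Your closing ``sanity check'' via the Gamma variance mixture $W=\sqrt{V}Z$ is in fact not peripheral: it is precisely the decomposition the paper itself establishes and uses to prove the \ac{MGF} in the complex \ac{CCS} case (\theoremref{Theorem:CCSMcLeishMGF}) and the multivariate case (\theoremref{Theorem:StandardMultivariateMcLeishMGF}), so promoting it from a check to the main argument would give the shortest self-contained proof, at the cost of first verifying that the mixture reproduces \eqref{Eq:McLeishPDF} via \cite[Eq.~(3.471/9)]{BibGradshteynRyzhikBook}.
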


\begin{proof}
Note that $M_{X}(s)\!=\!\mathbb{E}[\exp(-sX)]$ can be expressed as 
$M_{X}(s)\!=\!{s}\int_{-\infty}^{\infty}\exp(-sx)F_{X}(x)dx$, where susing \eqref{Eq:McLeishCDF} yields  
\begin{equation} 
	M_{X}(s)=s\int_{-\infty}^{\infty}\exp(-sx)\,Q_{\nu}\bigl(\frac{x-\mu}{\sigma}\bigr)dx.
\end{equation}
which can be divided two integration, i.e., $M_{X}(s)=sI_{+}(s)+sI_{-}(s)$, where $I_{\pm}(s)$ is written as  
\begin{equation}\label{Eq:McLeishMGFSemiIntegration}
	I_{\pm}(s)=\pm\int_{0}^{\infty}\exp(\mp{s}x)\,Q_{\nu}\bigl(\frac{\pm{x}-\mu}{\sigma}\bigr)dx,
\end{equation} 
Subsequently, substituting \eqref{Eq:McLeishQFunctionUsingFoxH} in \eqref{Eq:McLeishMGFSemiIntegration} and then using both
$\exp(-x)\!=\!\MeijerG[right]{1,0}{0,1}{x}{\emptycoefficientSHORT}{0}$\cite[Eq.\!~(8.4.3/1)]{BibPrudnikovBookVol3}, and
$\exp(x)\!=\!\frac{\pi}{\sin(\pi{c})}\MeijerG[right]{1,0}{1,2}{x}{1-c}{0,1-c}$\cite[Eq.\!~(8.4.3/5)]{BibPrudnikovBookVol3} 
results in a Mellin-Barnes integration\cite[Theorem\!~2.9]{BibKilbasSaigoBook} that readily reduces to
\begin{equation}
	I_{\pm}(s)={e}^{-s\mu}\Bigl(1-\scalemath{0.9}{0.9}{\frac{\lambda^2}{4}}s^2\Bigr)^{-\nu}
	\biggl(\frac{1}{2s}
	\pm
	\frac{\lambda}{4\pi}\sin(\pi\nu)\Pochhammer{\scalemath{0.7}{0.7}{\frac{1}{2}}}{\!\nu}
		\MeijerG[right]{1,2}{2,2}{-\scalemath{0.8}{0.8}{\frac{\lambda^2}{4}}s^2}{{1}/{2},\nu}{0,-{1}/{2}}
		\biggr)
\end{equation} 
within the convergence region $-{2}/{\lambda}\!\leq\!\RealPart{s}\!\leq\!{2}/{\lambda}$, where
$\Pochhammer{a}{n}\!=\!{\Gamma(a+n)}/{\Gamma(a)}$ denotes Pochhammer's
symbol\cite[Eq.\!~(1.2.6)]{BibWolfram2010Book}. Consequently, $M_{X}(s)=sI_{+}(s)+sI_{-}(s)$ simplifies to
\eqref{Eq:McLeishMGF}, which completes the proof of \theoremref{Theorem:McLeishMGF}.
\end{proof}

For consistency, letting $\nu\!\rightarrow\!{0}$ in \eqref{Eq:McLeishMGF} results in $\exp(-s\mu)$, which
is the \ac{MGF} of the Dirac's distribution with mean $\mu$. For $\nu\!=\!{1}$, \eqref{Eq:McLeishMGF} simplifies to the \ac{MGF} of $\mathcal{L}(\mu,\sigma^2)$, that is
$M_{X}(s)\!=\!{e}^{-s\mu}(1-\sigma^2{s}^2/2)^{-1}$ \cite{BibForbesEvansHastingsPeacockBook,BibZwillingerKokoskaBook,BibKrishnamoorthyBook,BibKotzKozubowskiPodgorskiBook2012}.
In addition, when letting $\nu\!\rightarrow\!\infty$ and then using
$\lim_{n\rightarrow\infty}(1+\frac{x}{n})^{n}\!=\!\exp(x)$ \cite[Eq.\!~(01.03.09.0001.01)]{BibWolfram2010Book},
\eqref{Eq:McLeishMGF} simplifies to
$M_{X}(s)\!=\!\exp\bigl(-s\mu+\sigma^2{s}^2/2\bigr)$ \cite{BibProakisBook,BibKrishnamoorthyBook,BibForbesEvansHastingsPeacockBook,BibZwillingerKokoskaBook} which is the well-known \ac{MGF} of $\mathcal{N}(\mu,\sigma^2)$. Notice that the MGF is also used to derive the moments \cite{BibPapoulisBook}. Hence, the analytical correctness of \eqref{Eq:McLeishMGF} can also be checked using \eqref{Eq:McLeishMoments}. Using \cite[Eq. (8.4.2/5)]{BibPrudnikovBookVol3}, we can express \eqref{Eq:McLeishMGF} in terms of Meijer's G function as
\begin{equation}\label{Eq:McLeishMGFUsingMeijerG}
    M_{X}(s)=\frac{{e}^{-s\mu}}{\Gamma(\nu)}
        \MeijerG[right]{1,1}{1,1}{\frac{\lambda^2}{4}s^2}{1-\nu}{0},
\end{equation}
whose $n$th derivation with respect to $s$, i.e. $\bigl({\partial}/{\partial{s}}\bigr)^{n}M_{X}(s)$ can be attained using Leibniz's rule\cite[Eq. (0.42)]{BibGradshteynRyzhikBook} and \cite[Eqs. (8.3.2/21) and (8.3.2/21)]{BibPrudnikovBookVol3}, and therein setting $s\!\rightarrow\!{0}$ yields \eqref{Eq:McLeishMoments} as expected. It is also worth mentioning that the \acp{MGF} are very useful for the analysis of sums of the McLeish distributions as exemplified in the following.

\subsection{Sum of McLeish Distributions}
\label{Section:StatisticalBackground:McLeishSumDistribution}
Let $X_{\ell}\!\sim\!\mathcal{M}_{\nu_\ell}(\mu_{\ell},\sigma^2_{\ell})$, $\ell\!=\!{1,2,\ldots,L}$ be $L$ \ac{i.n.i.d.} distributions. Then, their sum is written as
\begin{equation}\label{Eq:McLeishSumDistribution}
	X_{\Sigma}=\textstyle\sum_{\ell=1}^{L}X_{\ell},
\end{equation} 
whose statistically characterization is given in the following. 

\begin{theorem}\label{Theorem:McLeishSumMGF}
The \ac{MGF} of \eqref{Eq:McLeishSumDistribution} is given by
\begin{equation}\label{Eq:McLeishSumMGF}
	M_{X_{\Sigma}}(s)=
		{e}^{-s{\sum_{\ell=1}^{L}\mu_{\ell}}}
			\prod_{\ell=1}^{L}
				\Bigl(1-{\frac{\lambda_{\ell}^2}{4}}s^2\Bigr)^{-\nu_{\ell}}
\end{equation}
with the existence region ${-S_0}\!<\!{\RealPart{s}}\!<\!{S_0}$, where ${S_0}\!\in\!\mathbb{R}_{+}$ is
given by ${S_0}=2/\max_{\ell\in\{1..L\}}{\lambda_{\ell}}$.
\end{theorem}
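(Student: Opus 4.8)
The plan is to exploit the multiplicative property of the moment-generating function under the summation of mutually independent distributions. First I would write the \ac{MGF} of the sum directly from its definition as $M_{X_{\Sigma}}(s)\eq\Expected{\exp(-sX_{\Sigma})}$ and substitute \eqref{Eq:McLeishSumDistribution} to obtain $\Expected{\exp(-s\sum_{\ell=1}^{L}X_{\ell})}\eq\Expected{\prod_{\ell=1}^{L}\exp(-sX_{\ell})}$. Since $X_{1},X_{2},\ldots,X_{L}$ are \ac{i.n.i.d.}, the expectation of the product factorizes into the product of the expectations, so that $M_{X_{\Sigma}}(s)\eq\prod_{\ell=1}^{L}\Expected{\exp(-sX_{\ell})}\eq\prod_{\ell=1}^{L}M_{X_{\ell}}(s)$.

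The next step is to insert the closed-form single-variable \ac{MGF} established in \theoremref{Theorem:McLeishMGF}, namely $M_{X_{\ell}}(s)\eq{e}^{-s\mu_{\ell}}(1-\frac{\lambda_{\ell}^2}{4}s^2)^{-\nu_{\ell}}$, for each $\ell$. Collecting the exponential prefactors gives $\prod_{\ell=1}^{L}{e}^{-s\mu_{\ell}}\eq{e}^{-s\sum_{\ell=1}^{L}\mu_{\ell}}$, while the remaining algebraic factors combine directly into $\prod_{\ell=1}^{L}(1-\frac{\lambda_{\ell}^2}{4}s^2)^{-\nu_{\ell}}$, which is precisely \eqref{Eq:McLeishSumMGF}. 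Everything here is routine once independence has been invoked.

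The one point that deserves genuine care—and essentially the only obstacle—is the existence region of the product. By \theoremref{Theorem:McLeishMGF}, the $\ell$-th factor converges exactly on the strip $|\RealPart{s}|<2/\lambda_{\ell}$. The product is finite only where every factor is simultaneously finite, i.e. on the intersection $\bigcap_{\ell=1}^{L}\{s:|\RealPart{s}|<2/\lambda_{\ell}\}$. Since this intersection is governed by the smallest half-width $\min_{\ell}2/\lambda_{\ell}$, which corresponds to the largest deviation factor, I would conclude that the common convergence strip is $-S_{0}<\RealPart{s}<S_{0}$ with $S_{0}\eq2/\max_{\ell\in\{1..L\}}\lambda_{\ell}$, matching the claimed region and completing the proof.
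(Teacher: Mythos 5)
Your proposal is correct and follows essentially the same route as the paper: invoke mutual independence to factor $M_{X_{\Sigma}}(s)$ into $\prod_{\ell=1}^{L}M_{X_{\ell}}(s)$ and then substitute the single-variable MGF of \theoremref{Theorem:McLeishMGF}. Your explicit justification of the convergence strip as the intersection of the individual strips, giving $S_{0}=2/\max_{\ell}\lambda_{\ell}$, is in fact slightly more careful than the paper's proof, which states the region without argument.
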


\begin{proof}
Since $\{X_{\ell}\}_{\ell=1}^{L}$ are mutually independent, the \ac{MGF} of $X_{\Sigma}$ is defined as the
product of their \acp{MGF}, that is 
$M_{X_{\Sigma}}(s)\!=\!\mathbb{E}[\exp(-s\sum_{\ell=1}^{L}X_{\ell})]\!=\!\prod_{\ell=1}^{L}M_{X_{\ell}}(s)$,
where using \eqref{Eq:McLeishMGF} yields \eqref{Eq:McLeishSumMGF}, which proves \theoremref{Theorem:McLeishSumMGF}. 
\end{proof}

Let us now consider some special cases of \eqref{Eq:McLeishSumMGF}. In case of $\nu_{\ell}\in\mathbb{Z}^{+}$ and
$\lambda_{\ell}\!\neq\!\lambda_{m}$ for all $\ell\!\neq\!m$, $X_{\Sigma}$ follows a hyper McLeish distribution, which is also called a mixture McLeish distribution. Simplifying \eqref{Eq:McLeishSumMGF} using pole factorization (partial fraction decomposition) of rational polynomials\cite[Sec. 2.2.4]{BibZwillingerBook}, we obtain the \ac{MGF} as
\begin{equation}\label{Eq:McLeishSumMGFSpecialCaseI}
	\!\!M_{X_{\Sigma}}(s)=
		{e}^{-s{\sum_{\ell=1}^{L}\mu_{\ell}}}
		\sum_{\ell=1}^{L}
		\sum_{m=0}^{\nu_{\ell}-1}
			w_{\ell{m}}
				\Bigl(1-{\frac{\lambda_{\ell}^2}{4}}s^2\Bigr)^{m-\nu_{\ell}}\!\!\!\!,
\end{equation}
where the weight coefficients $\{w_{\ell{m}}\}$, which certainly support that
$\sum_{\ell=1}^{L}\sum_{m=0}^{\nu_{\ell}-1}w_{\ell{m}}\!=\!{1}$, are defined as
\begin{equation}\label{Eq:McLeishSumMGFCoefficient}
	w_{\ell{m}}=
		\frac{{4}^{\!m}}{\lambda^{2m}_{\ell}m!}
			\biggl.
			\left(\frac{\partial}{\partial{s}}\right)^{\!m}\!\!\!\!
				\prod_{{j=1,j\neq\ell}}^{L}
					\!\!\scalemath{0.9}{0.9}{
					\biggl(1-\frac{\lambda_{j}^2}{\lambda_{\ell}^2}+{\frac{\lambda_{j}^2}{4}}s\biggr)^{-\nu_{j}}}
					\biggr|_{s\rightarrow{0}},
\end{equation}
where the $m$th order derivative can be mathematically defined in several ways
\cite[and\!~references\!~therein]{BibCafagnaMIE2007,BibSabatierAgrawalMachadoBook2007,BibMeerschaertBook2012}. We find the Gr\"{u}nwald-Letnikov derivative to be convenient for its numerical computation. In addition, the other special case of \eqref{Eq:McLeishSumMGF} is obtained when $\lambda_{\ell}\!=\!\lambda_{\Sigma}$
with distinct $\sigma^2_{\ell}$ for $\ell\!=\!{1,2,\ldots,n}$; $X_{\Sigma}$ follows a McLeish
distribution, i.e., $X_{\Sigma}\!\sim\!\mathcal{M}_{\nu_\Sigma}(\mu_{\Sigma},\sigma^2_{\Sigma})$, whose \ac{MGF} is readily
deduced similar to \eqref{Eq:McLeishMGF}, that is 
\begin{equation}\label{Eq:McLeishSumMGFSpecialCaseII}
	M_{X_{\Sigma}}(s)=
		{e}^{-s\mu_{\Sigma}}
			\Bigl(1-{\frac{\lambda^2_{\Sigma}}{4}}s^2\Bigr)^{-\nu_{\Sigma}},
\end{equation}
where the normality $\nu_{\Sigma}\!=\!\sum_{\ell=1}^{n}\nu_{\ell}$, the mean
$\mu_{\Sigma}\!=\!\sum_{\ell=1}^{n}\mu_{\ell}$ and the variance
$\sigma^2_\Sigma\!=\!{\nu_{\Sigma}\lambda^2_{\Sigma}}/{2}$. In addition, the other special cases can be deduced
for certain normalities $\nu_{\ell}\!\rightarrow\!{0}$, $\nu_{\ell}\!\rightarrow\!{1}$ and
$\nu_{\ell}\!\rightarrow\!\infty$ in \eqref{Eq:McLeishSumMGF}. Specifically, when $\forall\nu_{\ell}\!\rightarrow\!{0}$, \eqref{Eq:McLeishSumMGF} and \eqref{Eq:McLeishSumMGFSpecialCaseI} reduces to $M_{X_{\Sigma}}(s)={e}^{-s{\mu_{\Sigma}}}$, which is the \ac{MGF} of the Dirac's distribution. Further, when $\forall\nu_{\ell}\!\rightarrow\!{1}$, \eqref{Eq:McLeishSumMGF} turns to the \ac{MGF} of sum of independent and not identically distributed Laplace distributions, that is \cite[Sec.~10.4]{BibWalckReport1996}
\begin{equation}
    M_{X_{\Sigma}}(s)=
        {e}^{-s{\sum_{\ell=1}^{L}\mu_{\ell}}}
            \prod_{\ell=1}^{L}
                \Bigl(1-{\frac{\sigma_{\ell}^2}{2}}s^2\Bigr)^{-1}.
\end{equation} 
In addition, when $\forall\nu_{\ell}\!\rightarrow\!{\infty}$, \eqref{Eq:McLeishSumMGF} turns to the \ac{MGF} of sum of \emph{i.n.i.d} Gaussian distributions, that is \cite[Sec.~34.5]{BibWalckReport1996}
\begin{equation}
    M_{X_{\Sigma}}(s)=
        \exp\Bigl(-s{\mu_{\Sigma}}+\frac{s^2}{2}{\sigma^2_{\Sigma}}\Bigr).
\end{equation} 
Speaking of statistically characterization, we efficiently exploit the \ac{MGF} to find the \ac{PDF} of the sums of independent random distributions \cite{BibPapoulisBook}. Accordingly, the \ac{PDF} of $X_{\Sigma}$ is obtained in the following.

\begin{theorem}\label{Theorem:McLeishSumPDF}
The \ac{PDF} of \eqref{Eq:McLeishSumDistribution} is given by 
\begin{equation}\label{Eq:McLeishSumPDF}
	f_{X_{\Sigma}}(x)=\FoxI[right]{L,L}{2L,2L}
		{\frac{\exp(-x)}{\exp(-\mu_{\Sigma})}}
			{\boldsymbol{\Xi}_{L}^{(1)},\boldsymbol{\Xi}_{L}^{(3)}}
				{\boldsymbol{\Xi}_{L}^{(2)},\boldsymbol{\Xi}_{L}^{(0)}}
\end{equation}
with mean $\mu_\Sigma\!=\!\mu_{1}+\mu_{2}+\ldots+\mu_{L}$, where the coefficient set
$\boldsymbol{\Xi}_{n}^{(\alpha)}$, consisting of 3-tuples of size $n$, is defined as
\begin{equation}
	\boldsymbol{\Xi}_{n}^{(\alpha)}=\textstyle
		{\bigl({\alpha-1,{\frac{\lambda_{1}}{2}},\nu_{1}}\bigr)},
		\cdots,
		{\bigl({\alpha-1,{\frac{\lambda_{n}}{2}},\nu_{n}}\bigr)},
\end{equation}
for $n\!\in\!\mathbb{N}$ and $\alpha\!\in\!\mathbb{R}$. Moreover in \eqref{Eq:McLeishSumPDF},
$\FoxIDefinition{m,n}{p,q}{\cdot}$ denotes Fox's I function\emph{\cite[Eq.\!~(3.1)]{BibRrathieLeMath1997}}.
\end{theorem}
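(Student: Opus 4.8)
The plan is to recover the density as the inverse two-sided Laplace transform of the moment-generating function established in \theoremref{Theorem:McLeishSumMGF}. Since $M_{X_{\Sigma}}(s)\!=\!\mathbb{E}[\exp(-sX_{\Sigma})]$ is the bilateral Laplace transform of $f_{X_{\Sigma}}$, the Bromwich inversion reads
\begin{equation}
f_{X_{\Sigma}}(x)=\frac{1}{2\pi\imaginary}\int_{c-\imaginary\infty}^{c+\imaginary\infty}M_{X_{\Sigma}}(s)\,{e}^{sx}\,ds,
\end{equation}
for any $c$ inside the existence strip $-S_0\!<\!\RealPart{s}\!<\!S_0$ of \eqref{Eq:McLeishSumMGF}. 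First I would substitute \eqref{Eq:McLeishSumMGF} and isolate the mean factor ${e}^{-s\mu_{\Sigma}}$, which merely translates the argument, so that the integrand depends on $x$ only through $x-\mu_{\Sigma}$. The change of variable $z\!=\!{e}^{-(x-\mu_{\Sigma})}\!=\!\exp(-x)/\exp(-\mu_{\Sigma})$, for which ${e}^{s(x-\mu_{\Sigma})}\!=\!z^{-s}$, then converts the Bromwich integral into a Mellin--Barnes contour integral whose argument is exactly the one appearing in \eqref{Eq:McLeishSumPDF}.

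Next I would rewrite the even part $\prod_{\ell}(1-\tfrac{\lambda_{\ell}^2}{4}s^2)^{-\nu_{\ell}}$ of the kernel as a ratio of powered Gamma functions, so as to match the integrand that defines Fox's I function \cite{BibRrathieLeMath1997}. Factoring each quadratic as $(1-\tfrac{\lambda_{\ell}}{2}s)^{-\nu_{\ell}}(1+\tfrac{\lambda_{\ell}}{2}s)^{-\nu_{\ell}}$ and invoking the functional equation $\Gamma(2\pm\tfrac{\lambda_{\ell}}{2}s)=(1\pm\tfrac{\lambda_{\ell}}{2}s)\Gamma(1\pm\tfrac{\lambda_{\ell}}{2}s)$ yields the key identity
\begin{equation}
\Bigl(1\pm\tfrac{\lambda_{\ell}}{2}s\Bigr)^{-\nu_{\ell}}=\biggl(\frac{\Gamma(1\pm\tfrac{\lambda_{\ell}}{2}s)}{\Gamma(2\pm\tfrac{\lambda_{\ell}}{2}s)}\biggr)^{\!\nu_{\ell}},
\end{equation}
so that the whole kernel becomes
\begin{equation}
\prod_{\ell=1}^{L}\frac{\{\Gamma(1-\tfrac{\lambda_{\ell}}{2}s)\}^{\nu_{\ell}}\{\Gamma(1+\tfrac{\lambda_{\ell}}{2}s)\}^{\nu_{\ell}}}{\{\Gamma(2-\tfrac{\lambda_{\ell}}{2}s)\}^{\nu_{\ell}}\{\Gamma(2+\tfrac{\lambda_{\ell}}{2}s)\}^{\nu_{\ell}}}.
\end{equation}
This is precisely Rathie's I-function integrand with power exponents $\nu_{\ell}$ carried on each Gamma factor. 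Reading off the triples $(b,\beta,B)$ and $(a,\alpha,A)$ gives the common scale $\lambda_{\ell}/2$, the exponent $\nu_{\ell}$, and the four shift values $0,1,2,-1$ encoded by $\boldsymbol{\Xi}_{L}^{(1)},\boldsymbol{\Xi}_{L}^{(2)},\boldsymbol{\Xi}_{L}^{(3)},\boldsymbol{\Xi}_{L}^{(0)}$ respectively; counting numerator and denominator factors fixes the orders $m\!=\!n\!=\!L$ and $p\!=\!q\!=\!2L$, which assembles \eqref{Eq:McLeishSumPDF}.

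The main obstacle is the analytic bookkeeping rather than the algebra. The Gamma-ratio representation is not unique, since both arguments may be shifted by any common integer, so I must verify that the particular choice with shifts $\{0,1,2,-1\}$ places the contour $L$ correctly, separating the poles of the numerator factors $\{\Gamma(1\mp\tfrac{\lambda_{\ell}}{2}s)\}^{\nu_{\ell}}$ from those of the denominator while keeping $c$ inside the strip $-S_0\!<\!\RealPart{s}\!<\!S_0$; this is what guarantees that the contour integral is genuinely the I function of \cite{BibRrathieLeMath1997} and not a merely formal symbol. I would likewise confirm the order and convergence conditions of the I function, so that the inversion and the implicit interchange of product and integral are justified, and finally check consistency by specializing to $L\!=\!1$, where \eqref{Eq:McLeishSumPDF} must collapse to the single-variate McLeish density \eqref{Eq:McLeishPDF}.
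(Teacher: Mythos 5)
Your proposal is correct and follows essentially the same route as the paper's own proof: factor $1-\tfrac{\lambda_{\ell}^2}{4}s^2$ into $(1\mp\tfrac{\lambda_{\ell}}{2}s)$ terms, use $\Gamma(1+x)\!=\!x\,\Gamma(x)$ to rewrite the MGF of \theoremref{Theorem:McLeishSumMGF} as a product of powered Gamma-function ratios, invert via the Bromwich contour integral, and identify the resulting Mellin--Barnes integral with Rathie's I-function \cite[Eq.\!~(3.1)]{BibRrathieLeMath1997}. Your added remarks on the change of variable $z=\exp(-(x-\mu_{\Sigma}))$, the contour placement within the strip $-S_0\!<\!\RealPart{s}\!<\!S_0$, and the $L\!=\!1$ consistency check are elaborations of details the paper leaves implicit, not a different argument.
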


\begin{proof}
For $\ell\!\in\!\{1,2,\ldots,n\}$, the \ac{MGF} of $X_{\ell}$, i.e.,
$M_{X_{\ell}}(s)\!=\!\mathbb{E}[\exp(-sX_{\ell})]$ can be rewritten as
\begin{equation}\nonumber
	M_{X_{\ell}}(s)=
		{e}^{-s\mu_{\ell}}
			\Bigl(1-{\frac{\lambda_{\ell}}{2}}s\Bigr)^{-\nu_{\ell}}
				\Bigl(1+{\frac{\lambda_{\ell}}{2}}s\Bigr)^{-\nu_{\ell}}
\end{equation}
by utilizing $1-x^2\eq(1-x)(1+x)$ on \eqref{Eq:McLeishMGF}.
Then, exploiting the relation $\Gamma(1+{x})\!=\!{x}\Gamma({x})$
 \cite{BibAbramowitzStegunBook,BibZwillingerBook,BibKilbasSaigoBook}, $M_{X_{\Sigma}}(s)$ has been already obtained in
\eqref{Eq:McLeishSumMGF} and can be rewritten as
\begin{equation}\label{Eq:McLeishSumFactoizedMGF}
	\!\!\!\!M_{X_{\Sigma}}(s)=
		{e}^{-s{\sum_{\ell=1}^{L}\mu_{\ell}}}
			\prod_{\ell=1}^{L} 
				\frac{\Gamma^{\nu_\ell}\bigl(1+\frac{\lambda_{\ell}}{2}s\bigr)}
					 {\Gamma^{\nu_\ell}\bigl(2+\frac{\lambda_{\ell}}{2}s\bigr)}
				\frac{\Gamma^{\nu_\ell}\bigl(1-\frac{\lambda_{\ell}}{2}s\bigr)}
					 {\Gamma^{\nu_\ell}\bigl(2-\frac{\lambda_{\ell}}{2}s\bigr)}.
\end{equation}
Note that by means of \eqref{Eq:McLeishSumFactoizedMGF}, we express the \ac{PDF} of $X_{\Sigma}$ via the \ac{ILT} \cite{BibDishonWeissJCP1978},\cite[Chap.\!~3]{BibDebnathBhattaBook} as
\begin{equation}\label{Eq:McLeishSumPDFUsingILT}
	f_{X_{\Sigma}}(x)=\frac{1}{2\pi\imaginary}\int_{c-\imaginary\infty}^{c+\imaginary\infty}M_{X_{\Sigma}}(s)\exp(sx)ds
\end{equation}
within the existence region ${-S_0}\!<\!{\RealPart{s}}\!<\!{S_0}$, where ${S_0}\!\in\!\mathbb{R}_{+}$ is defined by  
${S_0}\!=\!{2}/\max_{\ell\in\{1..n\}}{\lambda_{\ell}}$.
Finally, substituting \eqref{Eq:McLeishSumFactoizedMGF} into \eqref{Eq:McLeishSumPDFUsingILT} and then using the
mathematical formalism given in \cite[Eq.\!~(3.1)]{BibRrathieLeMath1997} results in \eqref{Eq:McLeishSumPDF}, which
proves \theoremref{Theorem:McLeishSumPDF}.
\end{proof}
\begin{figure}[tp] 
\centering
\begin{subfigure}{0.7\columnwidth}
    \centering
    \includegraphics[clip=true, trim=0mm 0mm 0mm 0mm,width=1.0\columnwidth,height=0.85\columnwidth]{./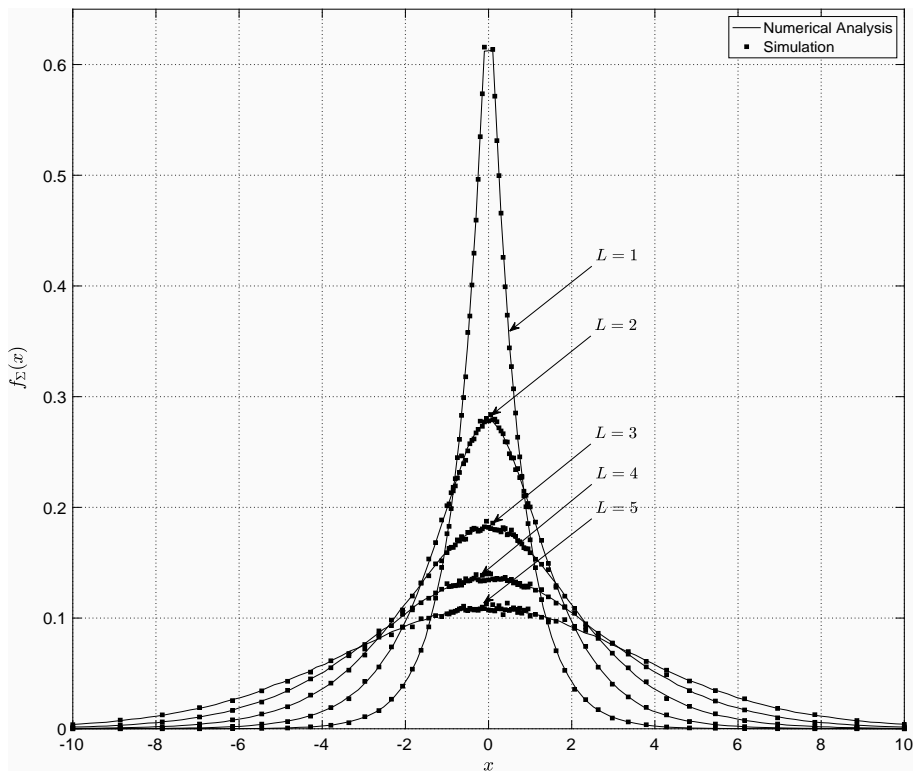}
    \caption{PDF (the number of samples for simulation is chosen as $10^6$).}
    \vspace{5mm}
    \label{Figure:McLeishSumPDF}
\end{subfigure}
\begin{subfigure}{0.7\columnwidth}
    \centering
    \includegraphics[clip=true, trim=0mm 0mm 0mm 0mm, width=1.0\columnwidth,height=0.85\columnwidth]{./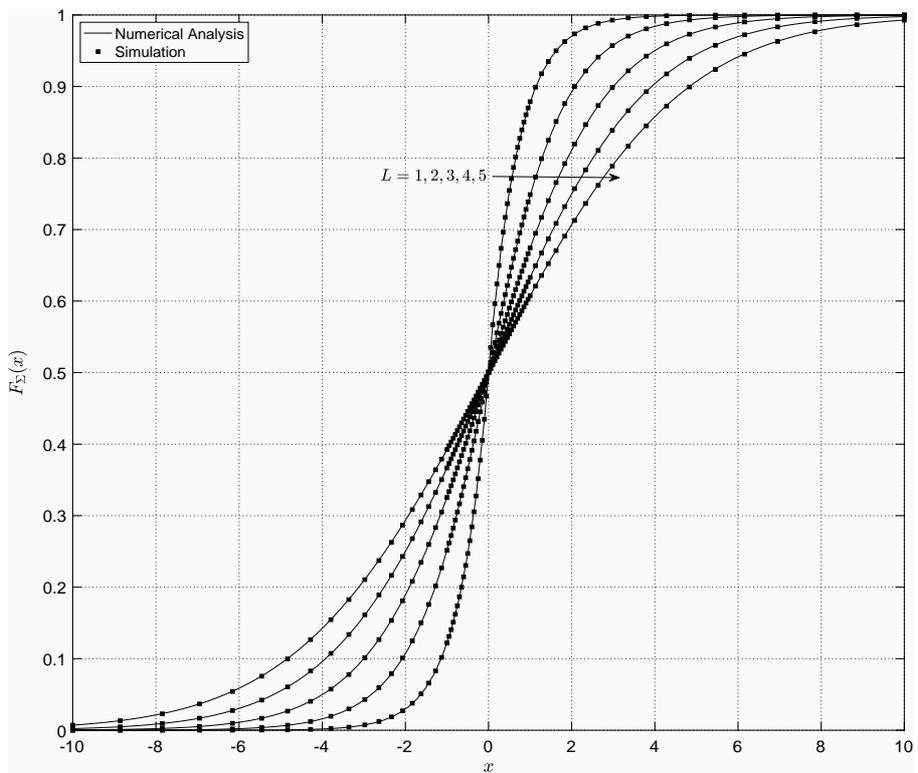}
    \caption{CDF (the number of samples for simulation is chosen as $10^6$).}
    \label{Figure:McLeishSumCDF}
\end{subfigure}
\caption{The \ac{PDF} and \ac{CDF} of sum of $L$ McLeish distributions with means $\mu_\ell\!=\!0$, and normalities $\nu_\ell\!=\!\ell$, and variances $\sigma^2_\ell\!=\!L-\ell+1$ for all ${1}\!\leq\!{\ell}\!\leq\!{L}$.}
\label{Figure:McLeishSumPDFCDF}
\vspace{-2mm} 
\end{figure} 
The PDF of $X_{\Sigma}$ is depicted in \figref{Figure:McLeishSumPDF} for different number of variables. Referring to \theoremref{Theorem:McLeishSumPDF}, some special cases are given for consistency in the following. In case of $\nu_{\ell}\!\in\!\mathbb{Z}^{+}$ and $\lambda_{\ell}\!\neq\!\lambda_{m}$ for all $\ell\!\neq\!m$, \eqref{Eq:McLeishSumDistribution} follows a hyper McLeish distribution whose \ac{PDF} can be deduced from \theoremref{Theorem:McLeishSumPDF} as     
\begin{equation}\label{Eq:McLeishSumPDFSpecialCaseI}
	f_{X_{\Sigma}}(x)=
		\sum_{\ell=1}^{L}
			\sum_{m=0}^{\nu_{\ell}-1}
				\frac{2w_{\ell{m}}}{\sqrt{\pi}\Gamma(\nu_{\ell}-{m})}
					\frac{\abs{x-\mu_{\Sigma}}^{\nu-\frac{1}{2}}}{\,\lambda^{\nu_{\ell}-{m}+\frac{1}{2}}}
						\BesselK[\nu_{\ell}-{m}-\frac{1}{2}]{\frac{2\abs{x-\mu_{\Sigma}}}{\lambda}},
\end{equation}
Further, when $\lambda_{\ell}\!=\!\lambda_{\Sigma}$ with distinct $\sigma^2_{\ell}$ for $\ell\!=\!{1,2,\ldots,n}$, \eqref{Eq:McLeishSumDistribution} certainly follows $\mathcal{M}_{\nu_\Sigma}(\mu_\Sigma,\sigma^2_\Sigma)$, whose
\ac{PDF} has been already given in \eqref{Eq:McLeishPDF}, that is 
\begin{equation}\label{Eq:McLeishSumPDFSpecialCaseII}
	f_{X_{\Sigma}}(x)=
		\frac{2\abs{x-\mu_{\Sigma}}^{\nu_{\Sigma}-\frac{1}{2}}}
			{\sqrt{\pi}\,\Gamma(\nu_{\Sigma})\,\lambda_{\Sigma}^{\nu_{\Sigma}+\frac{1}{2}}}
		\BesselK[\nu_{\Sigma}-\frac{1}{2}]{\frac{2\abs{x-\mu_{\Sigma}}}{\lambda_{\Sigma}}}.
\end{equation}
Additionally, the other special cases can be easily deduced for the certain normalities $\nu_{\ell}\!\rightarrow\!{0}$, $\nu_{\ell}\!\rightarrow\!{1}$ and $\nu_{\ell}\!\rightarrow\!\infty$ in \eqref{Eq:McLeishSumPDF}.
Accordingly, setting $\forall\nu_{\ell}\!\rightarrow\!{0}$ in \eqref{Eq:McLeishSumPDFUsingILT} and using  $\FoxI[right]{0,0}{0,0}{\exp(-x)}{\emptycoefficientSHORT}{\emptycoefficientSHORT}\!=\!\allowbreak\DiracDelta{x}$ with the aid of \cite[Eq.\!~(2.1)]{BibRrathieLeMath1997} and \cite[Eq.\!~(1.8.1/8)]{BibZwillingerBook}, we readily notice that \eqref{Eq:McLeishSumPDF} evolves into $f_{X_{\Sigma}}(x)\!=\!\delta\bigl(x-\mu_{\Sigma}\bigr)$. Further, setting 
$\forall\nu_{\ell}\!\rightarrow\!{1}$, \eqref{Eq:McLeishSumPDF} simplifies to the \ac{PDF} of 
the sum of \ac{i.n.i.d.} Laplace distributions, that is 
\begin{equation}\label{Eq:LaplaceSumPDF}
 	f_{X_{\Sigma}}(x)=
 	\frac{2^L}{\prod_{\ell=1}^{L}\sigma^2_{\ell}}
 	\MeijerG[right]{L,L}{2L,2L}
 		{\frac{\exp(-x)}{\exp(-\mu_{\Sigma})}}
 			{\boldsymbol{\Phi}_{L}^{(1)},\boldsymbol{\Phi}_{L}^{(3)}}				{\boldsymbol{\Phi}_{L}^{(2)},\boldsymbol{\Phi}_{L}^{(0)}},
\end{equation}
where the coefficient set $\boldsymbol{\Phi}_{n}^{(\alpha)}$ is given by
\begin{equation}
 	\boldsymbol{\Phi}_{n}^{(\alpha)}=\textstyle
 		{{\sqrt{2}(\alpha-1)}/{\sigma^2_1}},
 		\cdots,
 		{{\sqrt{2}(\alpha-1)}/{\sigma^2_n}}.
\end{equation}
In addition, when we choose all normalities to be infinity  (i.e., while having $\forall\ell\in\{1,2,\ldots,L\},\nu_{\ell}\!\rightarrow\!\infty$), we readily deduce $M_{X_{\ell}}(s)\!=\!\exp(-s{\mu_{\Sigma}}+{s^2}{\sigma^2_{\Sigma}}/{2})$ and accordingly reduce \eqref{Eq:McLeishSumPDF} to the \ac{PDF} of $\mathcal{N}(\mu_\Sigma,\sigma^2_\Sigma)$ as expected.

\begin{theorem}\label{Theorem:McLeishSumCDF}
The \ac{CDF} of \eqref{Eq:McLeishSumDistribution} is given by
\begin{equation}\label{Eq:McLeishSumCDF}
	\!\!F_{X_{\Sigma}}(x)=\FoxI[right]{n+1,n}{2n+1,2n}
		{\frac{\exp(-x)}{\exp(-\mu_{\Sigma})}}
			{{\boldsymbol{\Xi}}_{n}^{(1)},{\boldsymbol{\Xi}}_{n}^{(3)},(1,1,1)}
				{(0,1,1),{\boldsymbol{\Xi}}_{n}^{(2)},{\boldsymbol{\Xi}}_{n}^{(0)}}.\!\!
\end{equation}
\end{theorem}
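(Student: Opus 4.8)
The plan is to read $F_{X_\Sigma}$ off the same Mellin--Barnes (inverse Laplace) machinery already used for the density in \theoremref{Theorem:McLeishSumPDF}, rather than trying to integrate the Fox's~I function directly in the $x$-variable. First I would write $F_{X_\Sigma}(x)=\int_{-\infty}^{x}f_{X_\Sigma}(t)\,dt$ and substitute the Bromwich representation \eqref{Eq:McLeishSumPDFUsingILT} of the density, so that
\[
F_{X_\Sigma}(x)=\int_{-\infty}^{x}\frac{1}{2\pi\imaginary}\int_{c-\imaginary\infty}^{c+\imaginary\infty}M_{X_{\Sigma}}(s)\exp(st)\,ds\,dt.
\]
With the contour pinned in the strip $0<c<S_0$, where $S_0=2/\max_{\ell}\lambda_{\ell}$ is the abscissa of \theoremref{Theorem:McLeishSumMGF}, the integrand is absolutely integrable and Fubini permits exchanging the two integrations. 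The inner integral is then elementary, $\int_{-\infty}^{x}\exp(st)\,dt=\exp(sx)/s$ for $\Re(s)>0$, and I obtain the compact transform
\[
F_{X_\Sigma}(x)=\frac{1}{2\pi\imaginary}\int_{c-\imaginary\infty}^{c+\imaginary\infty}\frac{M_{X_{\Sigma}}(s)}{s}\exp(sx)\,ds,
\]
which is precisely the density's Bromwich integral carrying one extra algebraic factor $1/s$.

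The next step is to absorb $1/s$ into the Gamma-function product already exhibited in \eqref{Eq:McLeishSumFactoizedMGF}. Using $\Gamma(1+s)=s\,\Gamma(s)$ I would write $1/s=\Gamma(s)/\Gamma(1+s)$, so that $M_{X_{\Sigma}}(s)/s$ becomes a ratio of $2L+1$ Gamma powers over $2L+1$ Gamma powers, namely the $2L$ factors of \eqref{Eq:McLeishSumFactoizedMGF} augmented by $\Gamma(s)$ in the numerator and $\Gamma(1+s)$ in the denominator, multiplied by $\exp\!\bigl(s(x-\mu_{\Sigma})\bigr)$. Identifying this Mellin kernel through \cite[Eq.~(3.1)]{BibRrathieLeMath1997} exactly as in the proof of \theoremref{Theorem:McLeishSumPDF}, the new numerator factor $\Gamma(s)=\Gamma(0+s)$ contributes the lower-row triple $(0,1,1)$ and is counted among the separated poles, while the new denominator factor $\Gamma(1+s)$ contributes the upper-row triple $(1,1,1)$; the remaining $2L$ triples stay exactly the sets $\boldsymbol{\Xi}_{n}^{(1)},\boldsymbol{\Xi}_{n}^{(2)},\boldsymbol{\Xi}_{n}^{(3)},\boldsymbol{\Xi}_{n}^{(0)}$ of \theoremref{Theorem:McLeishSumPDF}. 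Reading off the orders, $m$ rising from $L$ to $L+1$ and $p$ from $2L$ to $2L+1$, then reproduces \eqref{Eq:McLeishSumCDF}.

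The delicate point is the placement of the contour relative to the pole that $1/s$ introduces at $s=0$. Because the inner $t$-integral converges only for $\Re(s)>0$, the contour must keep $s=0$ strictly to its left, which is precisely the pole-separation convention that makes the augmented Mellin--Barnes integral a well-defined Fox's~I function and fixes the correct normalisation. I would confirm this by closing the contour: for $x\to+\infty$ the residue of $M_{X_{\Sigma}}(s)\exp(sx)/s$ at $s=0$ is $M_{X_{\Sigma}}(0)=1$, giving $F_{X_\Sigma}(+\infty)=1$, whereas closing to the right for $x\to-\infty$ yields $0$, so the representation has the correct tail behaviour and can be cross-checked against the special cases (e.g. $\forall\nu_\ell\!\rightarrow\!\infty$ recovering the Gaussian \ac{CDF}). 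I expect the main obstacle to be the triple-bookkeeping together with the pole-separation rule, so that the orders and parameter lists in \eqref{Eq:McLeishSumCDF} match the kernel exactly; in particular a literal count of the listed lower-row triples suggests the lower order should read $2n+1$. Beyond this, the only analytic content is the Fubini justification, which is already furnished by the existence strip of \theoremref{Theorem:McLeishSumMGF}.
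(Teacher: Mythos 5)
Your proposal is correct and takes essentially the same route as the paper: integrate the Bromwich representation \eqref{Eq:McLeishSumPDFUsingILT} of the density, swap the integrals within the existence strip, evaluate $\int_{-\infty}^{x}e^{su}du=e^{sx}/s$ for $\RealPart{s}>0$, rewrite $1/s=\Gamma(s)/\Gamma(1+s)$, and read the result off the Fox's~I formalism of \cite[Eq.\!~(3.1)]{BibRrathieLeMath1997}. Your side remark is also right: counting the listed lower-row triples $(0,1,1),\boldsymbol{\Xi}_{n}^{(2)},\boldsymbol{\Xi}_{n}^{(0)}$ gives $2n+1$, so the lower order in \eqref{Eq:McLeishSumCDF} should read $2n+1$ rather than $2n$.
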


\begin{proof}
Note that $F_{X_{\Sigma}}(x)\!=\!\Pr(X_{\Sigma}<x)$ is readily computed by using
$F_{X_{\Sigma}}(x)\!=\!\int_{-\infty}^{x}p_{X_{\Sigma}}(u)du$, where utilizing \eqref{Eq:McLeishSumPDFUsingILT} yields
\begin{equation}\label{Eq:McLeishSumCDFUsingILT}
	F_{X_{\Sigma}}(x)=\frac{1}{2\pi\imaginary}\int_{c-\imaginary\infty}^{c+\imaginary\infty}
		\biggl\{\int_{-\infty}^{x}{e}^{su}du\biggr\}M_{X_{\Sigma}}(s)ds
\end{equation}
within the existence region ${-S_0}\!<\!{\RealPart{s}}\!<\!{S_0}$. 
Accordingly, using $\int_{-\infty}^{x}{e}^{su}du\!=\!{e}^{sx}/{s}$ for $\RealPart{s}\!>\!0$\cite[Eq.\!~(3.310)]{BibGradshteynRyzhikBook}, 
\eqref{Eq:McLeishSumCDFUsingILT} can be easily rewritten as  
\begin{equation}\label{Eq:McLeishSumCDFUsingILTII}
	F_{X_{\Sigma}}(x)=\frac{1}{2\pi\imaginary}\int_{c-\imaginary\infty}^{c+\imaginary\infty}
		\frac{\Gamma(s)}{\Gamma(1+s)}M_{X_{\Sigma}}(s)ds
\end{equation}
within the existence region $0\!<\!{\RealPart{s}}\!<\!{S_0}$.
Finally, using the mathematical formalism given in \cite[Eq.\!~(3.1)]{BibRrathieLeMath1997} results in
\eqref{Eq:McLeishSumCDF}, which proves \theoremref{Theorem:McLeishSumCDF}.
\end{proof}

The CDF of $X_{\Sigma}$ is depicted in \figref{Figure:McLeishSumCDF} for different number of variables. 
Note that for $\nu_{\ell}\!\in\!\mathbb{Z}^{+}$ and $\lambda_{\ell}\!\neq\!\lambda_{m}$ for all $\ell\!\neq\!m$,
\eqref{Eq:McLeishSumCDF} reduces by using \eqref{Eq:McLeishSumPDFSpecialCaseI} with \theoremref{Theorem:McLeishCDF} as
follows
\begin{equation}\label{Eq:McLeishSumCDFSpecialCaseI}
F_{X_{\Sigma}}(x)=\sum_{\ell=1}^{L}\sum_{m=0}^{\nu_{\ell}-1}w_{\ell{m}}
	Q_{\nu_{\ell}-{m}}\Bigl(\frac{x-\mu_{\Sigma}}{\sigma_{\Sigma}}\Bigr).
\end{equation}
For $\lambda_{\ell}\!=\!\lambda$ with distinct $\nu_{\ell}$ and $\sigma^2_{\ell}$ for $\ell\!=\!{1,2,\ldots,n}$,
\eqref{Eq:McLeishSumDistribution} certainly follows a McLeish distribution whose \ac{PDF}  
is already obtained in \eqref{Eq:McLeishSumPDFSpecialCaseII}, and whose \ac{CDF} is then deduced as 
\begin{equation}\label{Eq:McLeishSumCDFSpecialCaseII}
	F_{X_{\Sigma}}(x)={Q}_{\nu_{\Sigma}}\Bigl(\frac{x-\mu_{\Sigma}}{\sigma_{\Sigma}}\Bigr).
\end{equation}
Further, the other special cases for $\nu_{\ell}\!\rightarrow\!{0}$, $\nu_{\ell}\!\rightarrow\!{1}$ and
$\nu_{\ell}\!\rightarrow\!\infty$ are herein ignored since being well-predicted utilizing the results that are previously obtained above. 

\begin{theorem}\label{Theorem:McLeishSumMoments}
The $n$th moment of \eqref{Eq:McLeishSumDistribution} is given by
\begin{equation}\label{Eq:McLeishSumMoments}
\Expected{X_{\Sigma}^n}=\sum_{k_1+k_2+\ldots+k_L=n}^{n}
	\frac{n!}{\prod_{\ell=1}^{L}k_{\ell}!}
		\prod_{\ell=1}^{L}\mathbb{E}\bigl[X_{\ell}^{k_{\ell}}\bigr],
\end{equation}
where $\mathbb{E}\bigl[X_{\ell}^{n}\bigr]$ is given in \eqref{Eq:McLeishMoments}
\end{theorem}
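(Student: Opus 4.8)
The plan is to combine the multinomial theorem with the mutual independence of the summands, so the argument parallels the classical Gaussian case. First I would recall that $X_{\Sigma}=\sum_{\ell=1}^{L}X_{\ell}$ and raise both sides to the $n$th power, applying the multinomial theorem to obtain
\begin{equation}
X_{\Sigma}^{n}=\Bigl(\textstyle\sum_{\ell=1}^{L}X_{\ell}\Bigr)^{n}=\sum_{k_1+k_2+\ldots+k_L=n}\frac{n!}{\prod_{\ell=1}^{L}k_{\ell}!}\prod_{\ell=1}^{L}X_{\ell}^{k_{\ell}},
\end{equation}
where the outer sum ranges over all non-negative integer tuples $(k_1,\ldots,k_L)$ with $k_1+\ldots+k_L=n$.

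Next I would take expectations of both sides. Since each $X_{\ell}\!\sim\!\mathcal{M}_{\nu_\ell}(\mu_{\ell},\sigma^2_{\ell})$ possesses finite moments of all orders---a defining feature of McLeish distribution and explicitly evidenced by the closed-form expression in \theoremref{Theorem:McLeishMoments}---every term on the right-hand side is integrable, and the finiteness of the index set lets me interchange $\mathbb{E}[\cdot]$ with the summation by linearity of expectation, giving
\begin{equation}
\Expected{X_{\Sigma}^{n}}=\sum_{k_1+k_2+\ldots+k_L=n}\frac{n!}{\prod_{\ell=1}^{L}k_{\ell}!}\,\mathbb{E}\Bigl[\textstyle\prod_{\ell=1}^{L}X_{\ell}^{k_{\ell}}\Bigr].
\end{equation}

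The key step is then to factor the joint moment. Because $\{X_{\ell}\}_{\ell=1}^{L}$ are mutually independent, so are the powers $\{X_{\ell}^{k_{\ell}}\}_{\ell=1}^{L}$, whence the expectation of their product equals the product of their expectations,
\begin{equation}
\mathbb{E}\Bigl[\textstyle\prod_{\ell=1}^{L}X_{\ell}^{k_{\ell}}\Bigr]=\prod_{\ell=1}^{L}\mathbb{E}\bigl[X_{\ell}^{k_{\ell}}\bigr].
\end{equation}
Substituting this back reproduces exactly \eqref{Eq:McLeishSumMoments}, with each factor $\mathbb{E}[X_{\ell}^{k_{\ell}}]$ furnished in closed form by \theoremref{Theorem:McLeishMoments}.

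I expect no serious obstacle: the whole argument is a direct application of the multinomial theorem together with the product rule for expectations of independent random variables. The only point deserving a word of care is the justification for exchanging expectation and summation, which rests entirely on the finiteness of all moments of the McLeish distribution; this same property is what guarantees that each $\mathbb{E}[X_{\ell}^{k_{\ell}}]$ is finite and well defined, so the final expression is genuinely meaningful rather than formal.
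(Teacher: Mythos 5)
Your proof is correct and follows essentially the same route as the paper, which likewise invokes the multinomial expansion of $\mathbb{E}[(\sum_{\ell=1}^{L}X_{\ell})^{n}]$ and implicitly relies on mutual independence to factor the mixed moments. Your version merely spells out the independence-based factorization and the interchange of expectation and (finite) summation, which the paper leaves as "obvious."
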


\begin{proof}
The proof is obvious by applying multinomial expansion\cite[Eq.\!~(24.1.2)]{BibAbramowitzStegunBook} on
$\Expected{X_{\Sigma}^n}\!=\!\mathbb{E}[(\sum_{\ell=1}^{n}X_{\ell})^{n}]$.
\end{proof}

For the statistical characterization of a McLeish distribution, such as its central tendency, dispersion,
skewness and Kurtosis, \eqref{Eq:McLeishSumMoments} can be easily used, and its special cases
can be obtained by setting its parameters.

\subsection{Complex and Circularly-Symmetric McLeish Distribution}
\label{Section:StatisticalBackground:CCSMcLeishDistribution}
Let $Z\!\sim\!\mathcal{CM}_{\nu}(\mu,\sigma^2)$ be a \ac{CCS} distribution, defined as 
\begin{equation}\label{Eq:ComplexMcLeishDefinition}
	Z\!=\!X_1+\imaginary{X_2},
\end{equation}
which is also, as mentioned before, deduced as a vector $\defrmat{Z}\!=\![X_1,~X_2]^T$, where $X_1\!\sim\!\mathcal{M}_{\nu_{1}}(\mu_{1},\sigma^2)$ and $X_2\!\sim\!\mathcal{M}_{\nu_{2}}(\mu_{2},\sigma^2)$
are, without loss of generality, such two mutually \ac{c.i.d.} distributions that $\mu\!=\!\mu_{1}+\imaginary\mu_{2}$ and $\nu\!=\!\nu_{1}\!=\!\nu_{2}$.

\begin{theorem}\label{Theorem:CCSMcLeishDefinition}
Under the condition of being \ac{CCS}, the definition of $Z\!\sim\!\mathcal{CM}_{\nu}(\mu,\sigma^2)$ can be decomposed as 
\begin{equation}\label{Eq:CCSMcLeishDistributionDefinition}
	Z=\sqrt{G}{Z_0}+\mu=\sqrt{G}({X_0}+\imaginary{Y_0})+\mu,
\end{equation}  
where $Z_{0}\!\sim\!\mathcal{CN}(0,\sigma^2)$, $X_{0}\!\sim\!\mathcal{N}(0,\sigma^2)$,
$Y_{0}\!\sim\!\mathcal{N}(0,\sigma^2)$, and ${G}\!\sim\!\mathcal{G}(\nu,1)$.
\end{theorem}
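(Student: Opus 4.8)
The plan is to verify the claimed representation by showing that the right-hand side $\sqrt{G}\,(X_0+\imaginary Y_0)+\mu$ reproduces exactly the defining features of $\mathcal{CM}_{\nu}(\mu,\sigma^2)$: that it is circularly symmetric about $\mu$ and that its inphase and quadrature marginals are each $\mathcal{M}_{\nu}(\cdot,\sigma^2)$ with a common normality $\nu$ and variance $\sigma^2$. The workhorse throughout is a univariate Gaussian--Gamma scale-mixture identity, which I would establish first and then apply to each component.

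First I would prove the scalar identity: if $N\!\sim\!\mathcal{N}(0,\sigma^2)$ and $G\!\sim\!\mathcal{G}(\nu,1)$ are independent, then $\sqrt{G}\,N\!\sim\!\mathcal{M}_{\nu}(0,\sigma^2)$. The cleanest route is to match moment-generating functions by conditioning on $G$. Since $\sqrt{G}\,N\mid G\sim\mathcal{N}(0,G\sigma^2)$, I have $\Expected{\exp(-s\sqrt{G}\,N)}=\Expected{\exp(\tfrac{1}{2}s^2\sigma^2 G)}$; because $\mathcal{G}(\nu,1)$ carries shape $\nu$ and scale $1/\nu$, its MGF gives $\Expected{\exp(tG)}=(1-t/\nu)^{-\nu}$, whence $M_{\sqrt{G}N}(s)=(1-\tfrac{\sigma^2}{2\nu}s^2)^{-\nu}$. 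Since $\lambda^2/4=\sigma^2/(2\nu)$, this coincides with the McLeish MGF $(1-\tfrac{\lambda^2}{4}s^2)^{-\nu}$ of \theoremref{Theorem:McLeishMGF} at $\mu\!=\!0$, and uniqueness of the MGF on its convergence strip completes the claim.

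Next I would assemble the complex statement. Taking $X_0,Y_0$ independent and each $\mathcal{N}(0,\sigma^2)$ makes $Z_0\!=\!X_0+\imaginary Y_0$ a circularly symmetric complex Gaussian $\mathcal{CN}(0,\sigma^2)$, and I take $G\!\sim\!\mathcal{G}(\nu,1)$ independent of $Z_0$. For circular symmetry, for any rotation angle $\phi$ I would note $e^{\imaginary\phi}(Z-\mu)=\sqrt{G}\,(e^{\imaginary\phi}Z_0)\stackrel{d}{=}\sqrt{G}\,Z_0=Z-\mu$, since the Gaussian $Z_0$ is rotation-invariant and $G$ is real and independent of the rotation; hence $Z$ is circularly symmetric about $\mu$. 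For the marginals, $\RealPart{Z}=\mu_1+\sqrt{G}\,X_0$ and $\ImagPart{Z}=\mu_2+\sqrt{G}\,Y_0$, so the scalar identity gives $\RealPart{Z}\!\sim\!\mathcal{M}_{\nu}(\mu_1,\sigma^2)$ and $\ImagPart{Z}\!\sim\!\mathcal{M}_{\nu}(\mu_2,\sigma^2)$, matching the definition of $Z\!\sim\!\mathcal{CM}_{\nu}(\mu,\sigma^2)$ with $\mu=\mu_1+\imaginary\mu_2$ and common $\nu$. A short consistency check confirms $\Variance{Z}=\Expected{G}\,\Expected{\abs{Z_0}^2}=2\sigma^2$ and $\PseudoVariance{Z}=\Expected{G}\,\Expected{Z_0^2}=0$, the vanishing pseudovariance being the algebraic signature of the \ac{CCS} property.

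The delicate point, and the one I would argue most carefully, is that the \emph{same} scale variable $G$ multiplies both $X_0$ and $Y_0$. This single shared $G$ is precisely what simultaneously leaves the inphase and quadrature marginals identically McLeish, renders them linearly uncorrelated yet statistically dependent through the common magnitude (i.e.\ \ac{c.i.d.}\ rather than independent), and enforces the circular symmetry used above. If instead the two components carried independent Gamma scalings, the resulting law would fail to be rotation-invariant, so the \ac{CCS} hypothesis is exactly what collapses the two scalings into one; this is why the decomposition is stated under the condition of being \ac{CCS}. The main obstacle is certifying that no other coupling of scalings is compatible with rotational invariance, which I would settle by computing the joint transform $\Expected{\exp(-\langle s,Z-\mu\rangle)}=\Expected{\exp(\tfrac{1}{2}\abs{s}^2\sigma^2 G)}=(1-\tfrac{\lambda^2}{4}\abs{s}^2)^{-\nu}$ and observing that it depends on $s$ only through $\abs{s}$, thereby pinning down the shared-$G$ structure.
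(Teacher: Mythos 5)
Your proposal is correct, but it proves the theorem in the opposite direction from the paper, with different machinery. The paper's proof is an \emph{analysis}: starting from a \ac{CCS} $Z\!\sim\!\mathcal{CM}_{\nu}(\mu,\sigma^2)$, it uses circular symmetry to get a uniform phase and Cauchy tangent, defines $\sqrt{G}$ as the ratio $\abs{X_1-\mu_1}/\abs{X_0}$, and then extracts the law of $G$ by ratio-distribution calculus --- writing the \acp{PDF} of $\abs{X_1-\mu_1}$ and $\abs{X_0}$ as Fox H-functions and invoking the Carter--Springer theorem with Mellin-transform identities to conclude $G\!\sim\!\mathcal{G}(\nu,1)$. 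Your proof is a \emph{synthesis}: you show the candidate construction $\sqrt{G}Z_0+\mu$ has McLeish marginals (via the Gaussian--Gamma scale-mixture \ac{MGF} identity matched against \theoremref{Theorem:McLeishMGF}) and is circularly symmetric, then argue the \ac{CCS} hypothesis pins down this law uniquely. Your route is considerably more elementary --- conditioning plus the Gamma \ac{MGF}, no H-function machinery --- and it makes explicit the role of the shared scale variable $G$, which the paper leaves implicit. The paper's route, in exchange, is constructive: it identifies $G$ directly from the given distribution as a ratio of envelopes, which feeds the Gamma/Nakagami interpretation used later in the article. One point you should tighten: the final uniqueness step needs to be stated as a two-line argument rather than a gesture, namely that for any \ac{CCS} law the joint \ac{MGF} is a function of $\abs{s}$ alone, so it is completely determined by its restriction to the real axis, which by hypothesis is the McLeish \ac{MGF}; since your construction is \ac{CCS} with the same marginal, \ac{MGF} uniqueness on the common existence strip forces equality in distribution. (Also be aware that your joint transform $(1-\tfrac{\lambda^2}{4}\abs{s}^2)^{-\nu}$ differs by a factor of two in the exponent argument from \theoremref{Theorem:CCSMcLeishMGF}'s $(1-\tfrac{\lambda^2}{8}\langle{s,s}\rangle)^{-\nu}$; this is a variance-convention inconsistency internal to the paper --- per-component variance $\sigma^2$ versus $\sigma^2/2$ --- not an error in your computation, but it is worth flagging which convention you adopt.)
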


\begin{proof} 
By the definition of \ac{CCS} random distributions \cite{BibGallagerPRPRT2008}, both $(Z-\mu)$ and ${e}^{\imaginary\phi}(Z-\mu)$ follow the same distribution for any rotation $\phi\!\in\![-\pi,\pi)$. Accordingly, we affirm that the phase of $Z$ around its mean $\mu$ is typically given by 
\begin{equation}
	\Phi=
	    \arctan\bigl({X_1-\mu_{1}},{X_2-\mu_{2}}\bigr),
\end{equation}
where $\arctan(\cdot,\cdot)$ denotes the two-argument inverse tangent function 
\cite[Eq. (01.15.02.0001.01)]{BibWolfram2010Book}, and $\Phi$ is uniformly dis\-tri\-buted 
over $[-\pi,\pi)$ and independent of both $X$ and $Y$ (i.e., $\Covariance{\Phi}{X_1}\!=\!{0}$ and $\Covariance{\Phi}{X_2}\!=\!{0}$), Therefore, $W\!=\!\tan(\Phi)$ follows a zero-mean
Cauchy distribution whose \ac{PDF} is given by $f_{W}(w)\!=\!{\pi^{-1}(1+w^2)^{-1}}$ over
$w\!\in\!\mathbb{R}$ \cite{BibKrishnamoorthyBook,BibForbesEvansHastingsPeacockBook,BibJohnsonBalakrishnanKotzBookVol1}.
Upon $Z_{0}\!=\!{X_0}+\imaginary{Y_0}$, where $X_0\!\sim\!\mathcal{N}(0,\sigma^2_{Z}/2)$ and
$Y_0\!\sim\!\mathcal{N}(0,\sigma^2_{Z}/2)$, $Y_0/X_0$ follows a Cauchy distribution with zero mean 
and unit variance. Accordingly, $W$ is rewritten as
\begin{equation}
	W=\frac{X_2-\mu_{2}}{X_1-\mu_{1}}=\frac{\sqrt{G}Y_0}{\sqrt{G}X_0},
\end{equation} 
where without loss of generality, $G$ will follow a non-negative distribution characterized by 
\begin{equation}\label{Eq:PowerFluctuationDistribution}
	\sqrt{G}=\frac{\abs{X_1-\mu_{1}}}{\abs{X_0}}=\frac{\abs{X_2-\mu_{2}}}{\abs{Y_0}}.
\end{equation}
Utilizing \cite[Eq.\!~(2.9.19)]{BibKilbasSaigoBook} after performing absolute-value transformation on
\eqref{Eq:McLeishPDF}, we can deduce the \ac{PDF} of $\abs{X_1-\mu_{1}}$ in terms of Fox's H function as follows
\begin{equation}\label{Eq:NumeratorPDF}
	f_{\abs{X_1-\mu_{1}}}(x)=\frac{1}{\sqrt{\pi}\Gamma(\nu)}
		\scalemath{0.95}{0.95}{
		\FoxH[left]{2,0}{0,2}{\frac{2x^2}{\lambda^2}}{\emptycoefficient}{(0,1),(\nu-\frac{1}{2},1)}}
\end{equation}
defined over $x\!\in\!\mathbb{R}_{+}$. 
Similarly, using \cite[Eq.\!~(2.9.4)]{BibKilbasSaigoBook}, we can also deduce the \ac{PDF} of $\abs{X_0}$, that is 
\begin{equation}\label{Eq:DenominatorPDF}
	f_{\abs{X_{0}}}(x)=
	\sqrt{\frac{2}{\pi\sigma^2}}\FoxH[left]{1,0}{0,1}{
		\frac{x^2}{\sigma^2}
		}{\emptycoefficient}{(0,1)}
\end{equation}
defined over $x\!\in\!\mathbb{R}_{+}$. Immediately, embedding both \eqref{Eq:NumeratorPDF} and \eqref{Eq:DenominatorPDF} within
\cite[Theorem~4.3]{BibCarterSpringerSIAM1977} and thereon exercising \cite[Eqs.~(2.1.1),~(2.1.4)~and~(2.1.4)]{BibKilbasSaigoBook}, we derive the \ac{PDF} of $G$ as
\begin{equation}\label{Eq:ProportionPDF}
	f_{G}(g)=\frac{\nu^{\nu}}{\Gamma(\nu)}g^{\nu-1}\exp\left(-\nu{g}\right),
\end{equation}
defined over $g\!\in\!\mathbb{R}_{+}$. This consequence can also be reached from the ratio of $\abs{X_2-\mu_2}$ and $\abs{Y_0}$ in conformity with \eqref{Eq:PowerFluctuationDistribution}. Eventually, with the aid of \cite[Eq. (2.3-67)]{BibProakisBook} and \cite[Eqs. (2.20) and (2.21)]{BibAlouiniBook}, we notice that $G$ is a non-negative distribution following Gamma (squared Nakagami-\emph{m}) distribution. Therefore, $G\!\sim\!\mathcal{G}(\nu,1)$, where the diversity figure is given by $\nu\!=\!\Expected{G}^2/\Variance{G}$\cite[Eq.\!~(2.3-69)]{BibProakisBook} and the average power is
by $\Expected{G}\!=\!{1}$\cite[Eq.\!~(2.3-68)]{BibProakisBook}.
Consequently, the definition of \ac{CCS} McLeish distribution, given in \eqref{Eq:ComplexMcLeishDefinition}, is rewritten as
in \eqref{Eq:CCSMcLeishDistributionDefinition}, which proves \theoremref{Theorem:CCSMcLeishDefinition}.
\end{proof} 

With the aid of \theoremref{Theorem:CCSMcLeishDefinition}, the \ac{PDF} of $Z$ (i.e, the joint \ac{PDF} $f_{\defrmat{Z}}(x,y)$
of $\defrmat{Z}$) is given in the following theorem.

\begin{theorem}\label{Theorem:CCSMcLeishPDF}
Under the condition of being \ac{CCS}, the \ac{PDF} of $Z\!\sim\!\mathcal{CM}_{\nu}(\mu,\sigma^2)$ is given by
\begin{equation}\label{Eq:CCSMcLeishPDF}
	f_{Z}(z)=\frac{2}{\pi}
		\frac{\abs{z-\mu}^{\nu-1}}{\Gamma(\nu)\,\lambda^{\nu+1}}
			\BesselK[\nu-1]{\frac{2\abs{z-\mu}}{\lambda}},
\end{equation}
defined over $z\!\in\!\mathbb{C}$, where the factor $\lambda=\sqrt{2\sigma^2/\nu}$. 
\end{theorem}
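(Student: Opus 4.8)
The plan is to exploit the stochastic representation established in \theoremref{Theorem:CCSMcLeishDefinition}, namely $Z\!=\!\sqrt{G}Z_{0}+\mu$ with $Z_{0}\!\sim\!\mathcal{CN}(0,\sigma^2)$ and $G\!\sim\!\mathcal{G}(\nu,1)$, and thereby to obtain $f_{Z}(z)$ as a Gaussian mixture over the power-fluctuation variable $G$. First I would observe that, conditioned on $G\!=\!g$, the inphase and quadrature parts $X_{1}-\mu_{1}=\sqrt{g}X_{0}$ and $X_{2}-\mu_{2}=\sqrt{g}Y_{0}$ are mutually independent zero-mean Gaussians, each with variance $g\sigma^2$, so that the conditional \ac{PDF} of $Z$ is the circularly symmetric complex Gaussian density
\begin{equation*}
    f_{Z|G}(z|g)=\frac{1}{2\pi{g}\sigma^2}\exp\Bigl(-\frac{\abs{z-\mu}^2}{2g\sigma^2}\Bigr),
\end{equation*}
in which only the radial coordinate $r\!=\!\abs{z-\mu}$ enters, which itself reconfirms the circular symmetry.

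Next I would marginalize over $G$ using its Gamma \ac{PDF} from \eqref{Eq:ProportionPDF}, writing
\begin{equation*}
    f_{Z}(z)=\int_{0}^{\infty}f_{Z|G}(z|g)f_{G}(g)\,dg
        =\frac{\nu^{\nu}}{2\pi\sigma^2\Gamma(\nu)}\int_{0}^{\infty}g^{\nu-2}
            \exp\Bigl(-\frac{r^2}{2\sigma^2g}-\nu{g}\Bigr)dg.
\end{equation*}
The crux of the argument is then to recognize the remaining integral as a standard Bessel-type integral \cite[Eq.~(3.471.9)]{BibGradshteynRyzhikBook}, $\int_{0}^{\infty}x^{\mu-1}\exp(-a/x-bx)\,dx=2(a/b)^{\mu/2}\BesselK[\mu]{2\sqrt{ab}}$, applied with $\mu\!=\!\nu-1$, $a\!=\!{r^2}/{(2\sigma^2)}$, and $b\!=\!\nu$, which gives $2\bigl(r^2/(2\sigma^2\nu)\bigr)^{(\nu-1)/2}\BesselK[\nu-1]{2\sqrt{r^2\nu/(2\sigma^2)}}$.

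Finally I would simplify the prefactor and the Bessel argument using the component-deviation relation $\lambda\!=\!\sqrt{2\sigma^2/\nu}$ (equivalently $2\sigma^2\!=\!\nu\lambda^2$). Under this substitution the argument collapses to $2\sqrt{r^2\nu/(2\sigma^2)}\!=\!2r/\lambda$ and the power factor to $\bigl(r/(\nu\lambda)\bigr)^{\nu-1}$, so that after cancelling $\nu^{\nu}$ against $\nu^{\nu-1}$ and replacing $\nu/\sigma^2\!=\!2/\lambda^2$ one recovers exactly \eqref{Eq:CCSMcLeishPDF} with $r\!=\!\abs{z-\mu}$. I expect the only delicate step to be this parameter matching in the Bessel integral: the index $\mu\!=\!\nu-1$ together with the two constants $a$ and $b$ must be tracked carefully so that both the Bessel order $\nu-1$ and the power $\lambda^{\nu+1}$ emerge with the correct exponents, since the integrand carries $g^{\nu-2}$ rather than $g^{\nu-1}$. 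As consistency checks I would confirm that letting $\nu\!\to\!\infty$ recovers the circularly symmetric complex Gaussian density and that $\int_{\mathbb{C}}f_{Z}(z)\,dz\!=\!1$, both of which follow transparently from the mixture representation.
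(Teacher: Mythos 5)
Your proof is correct and follows essentially the same route as the paper's: condition on the Gamma-distributed power variable $G$ from \theoremref{Theorem:CCSMcLeishDefinition}, write $f_{Z}(z)$ as the resulting complex-Gaussian scale mixture $\int_{0}^{\infty}f_{Z|G}(z|g)f_{G}(g)\,dg$, and evaluate that integral with \cite[Eq.~(3.471/9)]{BibGradshteynRyzhikBook}. If anything, your conditional density $\frac{1}{2\pi g\sigma^{2}}\exp\bigl(-\abs{z-\mu}^{2}/(2g\sigma^{2})\bigr)$ uses the normalization that is actually consistent with the theorem's convention (per-component variance $\sigma^{2}$, hence $\lambda=\sqrt{2\sigma^{2}/\nu}$), whereas the paper's displayed conditional \ac{PDF} in \eqref{Eq:ConditionedCCSMcLeishPDF} carries a factor-of-two slip that only cancels because the same parameter identification you spell out is performed implicitly.
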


\begin{proof}
Referring to \theoremref{Theorem:CCSMcLeishDefinition}, 
the \ac{PDF} of $Z\!\sim\!\mathcal{CM}_{\nu}(\mu,\sigma^2)$ conditioned on $G$ is therefore written as \cite[Eq.~(2.6-1)]{BibProakisBook}
\begin{equation}\label{Eq:ConditionedCCSMcLeishPDF}
	f_{Z|G}(z|g)=\frac{1}{\pi{g}\,\sigma^2}
		\exp\Bigl(-\frac{1}{g}\Bigl\langle{\frac{z-\mu}{\sigma},\frac{z-\mu}{\sigma}}\Bigr\rangle\Bigr),
\end{equation} 
for $g\!\in\!\mathbb{R}_{+}$. In accordance, the \ac{PDF} of $Z$ can be expressed as
$f_{Z}(z)\!=\!\int_{0}^{\infty}f_{Z|G}(z|g)f_{G}(g)dg$,  
where substituting \eqref{Eq:ProportionPDF} and \eqref{Eq:ConditionedCCSMcLeishPDF}, and subsequently employing
\cite[Eq.\!~(3.471/9)]{BibGradshteynRyzhikBook} results in \eqref{Eq:CCSMcLeishPDF}, 
which proves \theoremref{Theorem:CCSMcLeishPDF}. 
\end{proof}

The \ac{PDF} of $Z\!\sim\!\mathcal{CM}_{\nu}(\mu,\sigma^2)$ and its contour plot are well described in \figref{Figure:CCSMcLeishPDFA} and \figref{Figure:CCSMcLeishPDFB}, respectively. Further worth noting that the \ac{CS} property of $Z\!\sim\!\mathcal{CM}_{\nu}(\mu,\sigma^2)$ is observed in \figref{Figure:CCSMcLeishPDFB} such that $\forall\theta\!\in\![-\pi,\pi)$, $f_{Z}(z)\!=\!f_{Z}(z\exp(\imaginary\theta))$ for $\mu\!=\!0$. Accordingly, for a given contour value $c\!\in\!\mathbb{R}_{+}$, the contours, presented in \figref{Figure:CCSMcLeishPDFB}, can be obtained by
\begin{eqnarray}
    \nonumber
    (z|c)&=&\bigl\{z=\widehat{\xi}\exp(\imaginary\theta)\,\,
        \bigl|\,\,
        \theta\in[-\pi,\pi),\text{~and~}\\ 
        &~&{~~~~~~~~~~~~~~~~~~}\widehat{\xi}=
            \argmin_{\xi\in\mathbb{R}_{+}}
                \lVert
                    {f^{2}_{Z}(\xi)-c}
                \rVert^2
        \bigr.
    \bigr\}.{~~~}
\end{eqnarray}

For consistency, let us now consider some special cases of \theoremref{Theorem:CCSMcLeishPDF}. 
Substituting $\nu\!=\!{1}$ into \eqref{Eq:CCSMcLeishPDF} yields
the \ac{PDF} of $\mathcal{CL}(\mu,\sigma^2)$\cite[Eq.~(6)]{BibSouryElsawyAlouiniWCOM2017}, that is
\begin{equation}\label{Eq:CCSLaplacePDF}
    f_{Z}(z)=\frac{2}{\pi\lambda^2}{K}_{0}\left(\frac{2}{\lambda}\left|z-\mu\right|\right).
\end{equation}
Moreover, substituting $\nu\!\rightarrow\!\infty$ results in 
\begin{equation}\label{Eq:CCSGaussianPDF}
	f_{Z}(z)=
	    \frac{1}{2\pi\sigma^2}
	        \exp\Bigl(-\frac{1}{2\sigma^2}\left|z-\mu\right|\Bigr),
\end{equation}
which is the \ac{PDF} of $\mathcal{CN}(\mu,\sigma^2)$\cite[Eq.~(2.6-1)]{BibProakisBook}. 
\begin{figure*}[tp] 
\centering
\begin{subfigure}{0.7\columnwidth}
    \centering
    \includegraphics[clip=true, trim=0mm 0mm 0mm 0mm,width=1.0\columnwidth,height=0.85\columnwidth]{./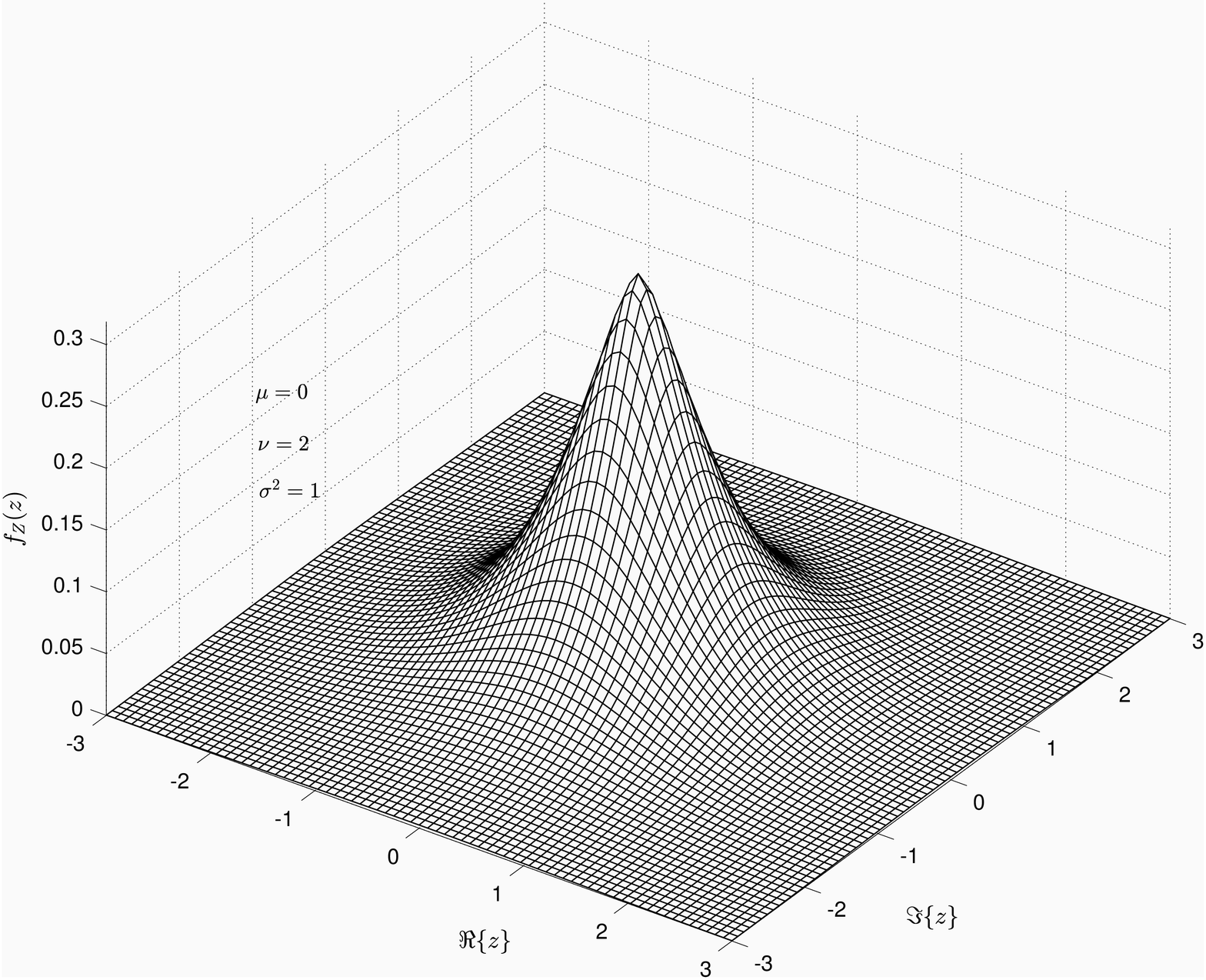}
    \caption{\ac{PDF} illustration in complex space.}
    \label{Figure:CCSMcLeishPDFA}
\end{subfigure}
~~~
\begin{subfigure}{0.7\columnwidth}
    \centering
    \includegraphics[clip=true, trim=0mm 0mm 0mm 0mm,width=1.0\columnwidth,height=0.85\columnwidth]{./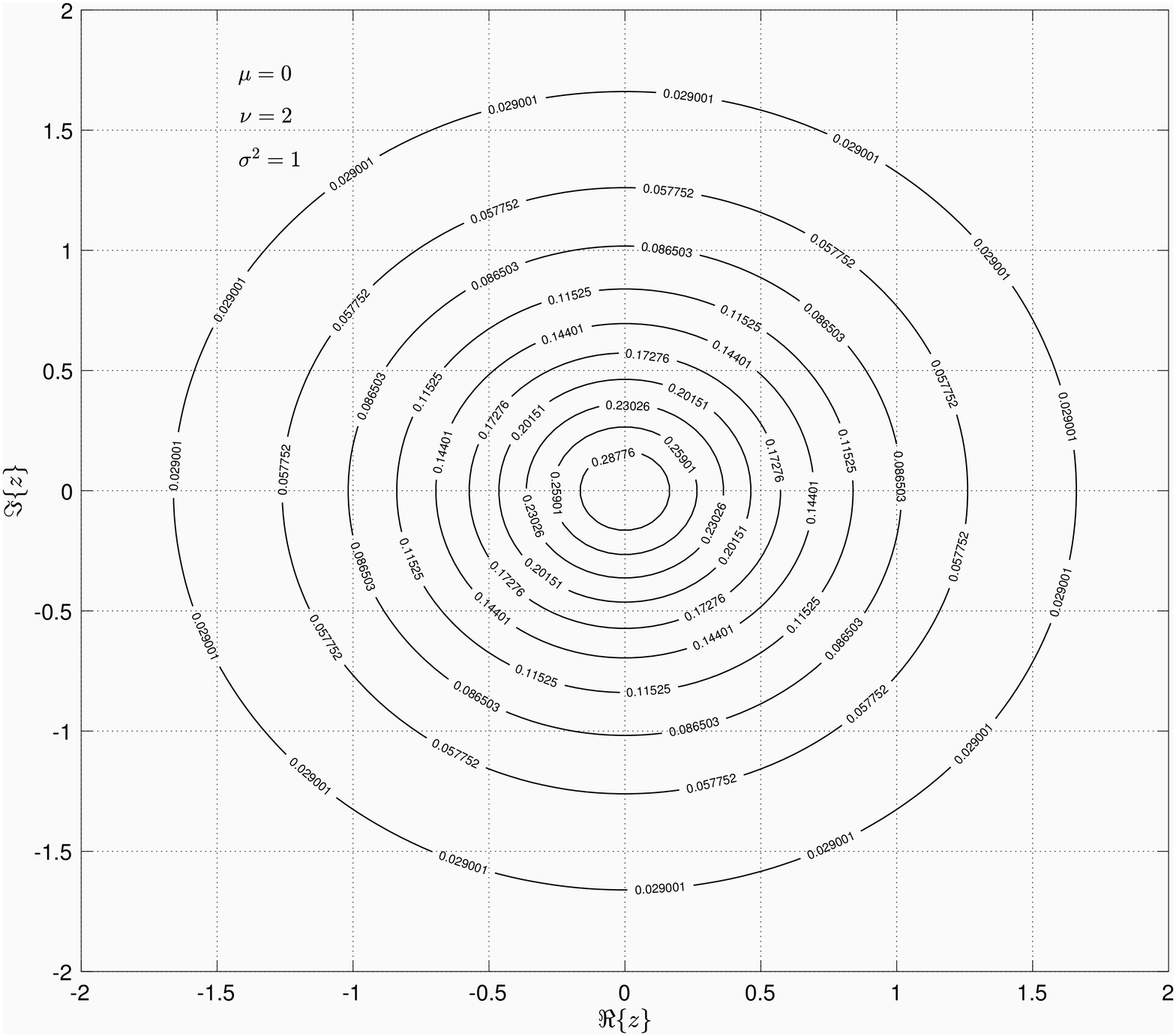}
    \caption{\ac{PDF} contour curves.}
    \label{Figure:CCSMcLeishPDFB}
\end{subfigure}
\caption{The \ac{PDF} and contour of $\mathcal{CM}_{\nu}(0,\sigma^2)$ (i.e., the illustration of \eqref{Eq:CCSMcLeishPDF} for $\mu\!=\!0$).}
\label{Figure:CCSMcLeishPDF}
\vspace{-2mm} 
\end{figure*}

\begin{theorem}\label{Theorem:CCSMcLeishCDF}
Under the condition of being \ac{CCS}, the \ac{CDF} of $Z\!\sim\!\mathcal{CM}_{\nu}(\mu,\sigma^2)$ is given for the complex quadrants, that is 
\begin{subequations}\label{Eq:CCSMcLeishCDF}
\begin{multline}\label{Eq:CCSMcLeishCDFA} 
	\!\!F_{Z}(z)=1
		-Q_{\nu}\Bigl(\sqrt{2}\Bigl\langle{1,\frac{z-\mu}{\sigma}}\Bigr\rangle\Bigr)
		-Q_{\nu}\Bigl(\sqrt{2}\Bigl\langle{\imaginary,\frac{z-\mu}{\sigma}}\Bigr\rangle\Bigr)~~~~~~~~~~~~~\\
		+\frac{1}{2}Q_{\nu}\Bigl(\sqrt{2\Bigl\langle{\frac{z-\mu}{\sigma},\frac{z-\mu}{\sigma}}\Bigr\rangle\sin^2(\phi)},\phi\Bigr)\\
		+\frac{1}{2}Q_{\nu}\Bigl(\sqrt{2\Bigl\langle{\frac{z-\mu}{\sigma},\frac{z-\mu}{\sigma}}\Bigr\rangle\cos^2(\phi)},\frac{\pi}{2}-\phi\Bigr),
\end{multline}
for the upper right quadrant (i.e., $\RealPart{z}\!\geq\!0$ and $\ImagPart{z}\!\geq\!0$);
\begin{multline}\label{Eq:CCSMcLeishCDFB} 
	\!\!F_{Z}(z)=
		Q_{\nu_Z}\Bigl(\sqrt{2}\Bigl\langle{1,\frac{\mu-z}{\sigma}}\Bigr\rangle\Bigr)
		-\frac{1}{2}Q_{\nu}\Bigl(\sqrt{2\Bigl\langle{\frac{z-\mu}{\sigma},\frac{z-\mu}{\sigma}}\Bigr\rangle\sin^2(\phi)},\phi\Bigr)\\
		-\frac{1}{2}Q_{\nu}\Bigl(\sqrt{2\Bigl\langle{\frac{z-\mu}{\sigma},\frac{z-\mu}{\sigma}}\Bigr\rangle\cos^2(\phi)},\frac{\pi}{2}-\phi\Bigr),
\end{multline}
for the upper left quadrant (i.e., $\RealPart{z}\!<\!0$ and $\ImagPart{z}\!\geq\!0$);
\begin{multline}\label{Eq:CCSMcLeishCDFC} 
	\!\!F_{Z}(z)=
		\frac{1}{2}Q_{\nu}\Bigl(\sqrt{2\Bigl\langle{\frac{z-\mu}{\sigma},\frac{z-\mu}{\sigma}}\Bigr\rangle\sin^2(\phi)},\phi\Bigr)\\
		+\frac{1}{2}Q_{\nu}\Bigl(\sqrt{2\Bigl\langle{\frac{z-\mu}{\sigma},\frac{z-\mu}{\sigma}}\Bigr\rangle\cos^2(\phi)},\frac{\pi}{2}-\phi\Bigr),
\end{multline}
for the lower left quadrant (i.e., $\RealPart{z}\!<\!0$ and $\ImagPart{z}\!<\!0$);
\begin{multline}\label{Eq:CCSMcLeishCDFD} 
	\!\!F_{Z}(z)=
		Q_{\nu_Z}\Bigl(\sqrt{2}\Bigl\langle{\imaginary,\frac{\mu-z}{\sigma}}\Bigr\rangle\Bigr)
		-\frac{1}{2}Q_{\nu}\Bigl(\sqrt{2\Bigl\langle{\frac{z-\mu}{\sigma},\frac{z-\mu}{\sigma}}\Bigr\rangle\sin^2(\phi)},\phi\Bigr)\\
		-\frac{1}{2}Q_{\nu}\Bigl(\sqrt{2\Bigl\langle{\frac{z-\mu}{\sigma},\frac{z-\mu}{\sigma}}\Bigr\rangle\cos^2(\phi)},\frac{\pi}{2}-\phi\Bigr),
\end{multline}
\end{subequations}
for the lower right quadrant (i.e., $\RealPart{z}\!\geq\!0$ and $\ImagPart{z}\!<\!0$); where $\phi\in[0,\frac{\pi}{2})$
is given by $\phi\!=\!\arctan\bigl(\abs{\RealPart{z}},\abs{\ImagPart{z}}\bigr)$. 
\end{theorem}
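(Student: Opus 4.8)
The plan is to identify $F_{Z}(z)=\Pr\{X_{1}\le\RealPart{z},\,X_{2}\le\ImagPart{z}\}$ with the joint \ac{CDF} of the inphase and quadrature parts and to evaluate it by conditioning on the Gamma mixing variable $G$. Invoking \theoremref{Theorem:CCSMcLeishDefinition} I would write $Z=\sqrt{G}Z_{0}+\mu$ with $Z_{0}\!\sim\!\mathcal{CN}(0,\sigma^2)$ and $G\!\sim\!\mathcal{G}(\nu,1)$, so that conditioned on $G\!=\!g$ the vector $\defrmat{Z}$ is a \ac{CCS} Gaussian whose inphase and quadrature components are independent, each of variance $g\sigma^2/2$; this is exactly the conditional density \eqref{Eq:ConditionedCCSMcLeishPDF}. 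Then $F_{Z}(z)=\Expected{F_{Z|G}(z|g)}$ with the expectation taken over the Gamma density \eqref{Eq:ProportionPDF}. Placing the mean at $\mu\!=\!0$ and setting $a\!=\!\RealPart{z}-\mu_{1}$, $b\!=\!\ImagPart{z}-\mu_{2}$, I would dispatch the four quadrants according to the signs of $a$ and $b$.

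For the upper right quadrant ($a,b\ge 0$) I would expand the conditional joint \ac{CDF} by inclusion--exclusion, $F_{Z|G}=1-\Pr\{X_{1}>a\,|\,g\}-\Pr\{X_{2}>b\,|\,g\}+\Pr\{X_{1}>a,X_{2}>b\,|\,g\}$. The two conditional marginal tails are the Gaussian tail probabilities $\QFunc{\sqrt{2}\,a/(\sigma\sqrt{g})}$ and $\QFunc{\sqrt{2}\,b/(\sigma\sqrt{g})}$, read off from the component variance $g\sigma^2/2$ in \eqref{Eq:ConditionedCCSMcLeishPDF}; averaging each over \eqref{Eq:ProportionPDF} returns the univariate McLeish \ac{CCDF} of \theoremref{Theorem:McLeishCCDF}, i.e.\ the marginal terms $Q_{\nu}(\sqrt{2}\langle 1,(z-\mu)/\sigma\rangle)$ and $Q_{\nu}(\sqrt{2}\langle\imaginary,(z-\mu)/\sigma\rangle)$ of \eqref{Eq:CCSMcLeishCDFA}.

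The crux is the joint corner term. Conditioned on $G\!=\!g$, $\{X_{1}>a,X_{2}>b\}$ is a rectangular-corner event for a circularly symmetric bivariate Gaussian, so in polar coordinates $(r,\theta)$ the lower radial limit is $\max(a/\cos\theta,\,b/\sin\theta)$, which switches at $\theta=\phi=\arctan(a,b)$. Performing the elementary radial integral of $\tfrac{1}{\pi g\sigma^2}e^{-r^2/(g\sigma^2)}$ and applying $\theta\mapsto\tfrac{\pi}{2}-\theta$ to the piece governed by $a/\cos\theta$ yields the two Craig-type partial integrals $\tfrac{1}{2\pi}\int_{0}^{\phi}\exp(-b^2/(g\sigma^2\sin^2\theta))\,d\theta$ and $\tfrac{1}{2\pi}\int_{0}^{\pi/2-\phi}\exp(-a^2/(g\sigma^2\sin^2\theta))\,d\theta$. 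I would then interchange the $G$-average with the $\theta$-integral and evaluate $\Expected{\exp(-c/G)}$ in closed form through \cite[Eq.~(3.471/9)]{BibGradshteynRyzhikBook}, which converts each integrand into the Bessel-$K$ kernel $(2X/(\lambda_{0}\sin\theta))^{\nu}\BesselK[\nu]{2X/(\lambda_{0}\sin\theta)}$ of Definition~\ref{Definition:McLeishPartialQFunction}, with $X=\sqrt{2}\,b/\sigma$ and $X=\sqrt{2}\,a/\sigma$ respectively. Matching the emergent constant $2^{-\nu}$ against the $2^{1-\nu}$ in \eqref{Eq:McLeishPartialQFunctionA} produces precisely the factor $\tfrac12$, and using $r_{0}\sin\phi=b$, $r_{0}\cos\phi=a$ with $r_{0}^2/\sigma^2=\langle(z-\mu)/\sigma,(z-\mu)/\sigma\rangle$ recovers the two partial \ac{Q-function} terms of \eqref{Eq:CCSMcLeishCDFA}.

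Finally, I would obtain \eqref{Eq:CCSMcLeishCDFB}--\eqref{Eq:CCSMcLeishCDFD} by repeating the inclusion--exclusion in the remaining quadrants. Because the density is circularly symmetric about $\mu$, the joint corner probability depends only on $|a|$, $|b|$ and $\phi=\arctan(|\RealPart{z}|,|\ImagPart{z}|)$, while the reflection identity $Q_{\nu}(-x)=1-Q_{\nu}(x)$ from \eqref{Eq:McLeishQFunctionPropertiesA} recasts the left-tail marginal events into the stated $Q_{\nu}$ terms and fixes their signs. I expect the main obstacle to be this joint-corner step: justifying the interchange of the $G$-expectation and the angular integral within the convergence strip, carrying out the Gamma integral to the exact Bessel-$K$ kernel, and tracking the $2^{-\nu}$ versus $2^{1-\nu}$ normalization together with the angle convention of the two-argument $\arctan$; the per-quadrant sign bookkeeping is then routine.
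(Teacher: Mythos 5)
Your proposal is correct and follows essentially the same route as the paper's proof: condition on the Gamma mixing variable $G$, expand the conditional Gaussian \ac{CDF} per quadrant by inclusion--exclusion, reduce the rectangular corner probability to two Craig-type angular integrals, and carry out the Gamma average via \cite[Eq.~(3.471/9)]{BibGradshteynRyzhikBook} to obtain the marginal $Q_{\nu}$ terms and the two $\tfrac{1}{2}Q_{\nu}(\cdot,\cdot)$ partial terms. The only cosmetic difference is that the paper imports the product representation $Q(x)Q(y)$ directly from Simon--Divsalar \cite[Eqs.~(4.6) and (4.8)]{BibSimonDivsalarTCOM1998}, whereas you re-derive that same decomposition in-line by the polar-coordinate argument with radial limit $\max(a/\cos\theta,\,b/\sin\theta)$ switching at $\phi$.
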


\begin{proof}
Note that the \ac{CDF} of
$Z_{0}\!\sim\!\mathcal{CN}(0,\sigma^2)$ is defined by
$F_{Z_0}(z_{\ell}|\sigma)\!=\!\Pr\{{X_{0}\!\leq\!\langle{1,z_{\ell}}\rangle}\,\cap\,{Y_{0}\!\leq\!\langle{\imaginary,z_{\ell}}\rangle}\,|\,\sigma\}$ conditioned on $\sigma$. Utilizing \cite[Eqs.\!~(2.3-10)\!~and\!~(2.3-11)]{BibProakisBook}
and \cite[Eqs.\!~(4.3)]{BibAlouiniBook} with $\langle{1,z}\rangle\!=\!\RealPart{z}$ and
$\langle{\imaginary,z}\rangle\!=\!\ImagPart{z}$,  $F_{Z_0}(z_{\ell}|\sigma)$ can be readily expressed for a certain $z\!=\!{x}+\imaginary{y}\in\mathbb{C}$ as follows
\begin{subequations}\label{Eq:ComplexGaussianCDF}
\ifCLASSOPTIONtwocolumn
\begin{multline}\label{Eq:ComplexGaussianCDFA} 
	\!\!F_{Z_0}(z|\sigma)=1
		-Q\bigl(\sqrt{2}\bigl\langle{1,{z}/{\sigma}}\bigr\rangle\bigr)
		-Q\bigl(\sqrt{2}\bigl\langle{\imaginary,{z}/{\sigma}}\bigr\rangle\bigr)\\
		+Q\bigl(\sqrt{2}\bigl\langle{1,{z}/{\sigma}}\bigr\rangle\bigr)\,
		 Q\bigl(\sqrt{2}\bigl\langle{\imaginary,{z}/{\sigma}}\bigr\rangle\bigr),
\end{multline}
\else
\begin{equation}\label{Eq:ComplexGaussianCDFA} 
	\!\!F_{Z_0}(z|\sigma)=1
		-Q\bigl(\sqrt{2}\bigl\langle{1,{z}/{\sigma}}\bigr\rangle\bigr)
		-Q\bigl(\sqrt{2}\bigl\langle{\imaginary,{z}/{\sigma}}\bigr\rangle\bigr)
		+Q\bigl(\sqrt{2}\bigl\langle{1,{z}/{\sigma}}\bigr\rangle\bigr)\,
		 Q\bigl(\sqrt{2}\bigl\langle{\imaginary,{z}/{\sigma}}\bigr\rangle\bigr),
\end{equation}
\fi
for the upper right quadrant (i.e., $\RealPart{z}\!\geq\!0$ and $\ImagPart{z}\!\geq\!0$);
\ifCLASSOPTIONtwocolumn
\begin{multline}\label{Eq:ComplexGaussianCDFB}
	\!\!F_{Z_0}(z|\sigma)=
		Q\bigl(-\sqrt{2}\bigl\langle{1,{z}/{\sigma}}\bigr\rangle\bigr)\\
		-Q\bigl(-\sqrt{2}\bigl\langle{1,{z}/{\sigma}}\bigr\rangle\bigr)\,
		 Q\bigl(\sqrt{2}\bigl\langle{\imaginary,{z}/{\sigma}}\bigr\rangle\bigr),
\end{multline}
\else
\begin{equation}\label{Eq:ComplexGaussianCDFB}
	\!\!F_{Z_0}(z|\sigma)=
		Q\bigl(-\sqrt{2}\bigl\langle{1,{z}/{\sigma}}\bigr\rangle\bigr)
		-Q\bigl(-\sqrt{2}\bigl\langle{1,{z}/{\sigma}}\bigr\rangle\bigr)\,
		 Q\bigl(\sqrt{2}\bigl\langle{\imaginary,{z}/{\sigma}}\bigr\rangle\bigr),
\end{equation}
\fi
for the upper left quadrant (i.e., $\RealPart{z}\!<\!0$ and $\ImagPart{z}\!\geq\!0$);
\begin{equation}\label{Eq:ComplexGaussianCDFC}
	\!\!F_{Z_0}(z|\sigma)=
		Q\bigl(-\sqrt{2}\bigl\langle{1,{z}/{\sigma}}\bigr\rangle\bigr)
		Q\bigl(-\sqrt{2}\bigl\langle{\imaginary,{z}/{\sigma}}\bigr\rangle\bigr),
\end{equation}
for the lower left quadrant (i.e., $\RealPart{z}\!<\!0$ and $\ImagPart{z}\!<\!0$);
\ifCLASSOPTIONtwocolumn
\begin{multline}\label{Eq:ComplexGaussianCDFD}
	\!\!F_{Z_0}(z|\sigma)=
		Q\bigl(-\sqrt{2}\bigl\langle{\imaginary,{z}/{\sigma}}\bigr\rangle\bigr)\\
		-Q\bigl(\sqrt{2}\bigl\langle{1,{z}/{\sigma}}\bigr\rangle\bigr)
		 Q\bigl(-\sqrt{2}\bigl\langle{\imaginary,{z}/{\sigma}}\bigr\rangle\bigr),
\end{multline}
\else
\begin{equation}\label{Eq:ComplexGaussianCDFD}
	\!\!F_{Z_0}(z|\sigma)=
		Q\bigl(-\sqrt{2}\bigl\langle{\imaginary,{z}/{\sigma}}\bigr\rangle\bigr)
		-Q\bigl(\sqrt{2}\bigl\langle{1,{z}/{\sigma}}\bigr\rangle\bigr)
		 Q\bigl(-\sqrt{2}\bigl\langle{\imaginary,{z}/{\sigma}}\bigr\rangle\bigr),
\end{equation}
\fi
for the lower right quadrant (i.e., $\RealPart{z}\!\geq\!0$ and $\ImagPart{z}\!<\!0$).
\end{subequations}
Worth noticing that \emph{the
argument of all Gaussian Q-functions in \eqref{Eq:ComplexGaussianCDF} is positive}, so the well-known Craig's representation\cite[Eq.\!~(4.2)]{BibAlouiniBook} and Simon-Divsalar's representation\cite[Eq.\!~(4.6)]{BibAlouiniBook} can be easily utilized in all equations from \eqref{Eq:ComplexGaussianCDFA} to \eqref{Eq:ComplexGaussianCDFD}. Then, referring
\eqref{Eq:CCSMcLeishDistributionDefinition}, the \ac{CDF} of $Z\!\sim\!\mathcal{CM}_{\nu}(\mu,\sigma^2)$ is
explicitly written as $F_{Z}(z)\!=\!\int_{0}^{\infty}F_{Z_0}(z-\mu\,|\sqrt{g}\sigma)f_{G}(g)\,dg$,
where substituting \eqref{Eq:ProportionPDF} yields
\begin{subequations}\label{Eq:CCSMcLeishCDFExpression}
\begin{multline}\label{Eq:CCSMcLeishCDFExpressionA} 
	\!\!F_{Z}(z)=1
		-I_{1}\bigl(\sqrt{2}\bigl\langle{1,{(z-\mu)}/{\sigma}}\bigr\rangle\bigr)
		-I_{1}\bigl(\sqrt{2}\bigl\langle{\imaginary,{(z-\mu)}/{\sigma}}\bigr\rangle\bigr)\\
		+I_{2}\bigl(\sqrt{2}\bigl\langle{1,{(z-\mu)}/{\sigma}}\bigr\rangle,\bigl\langle{\imaginary,{(z-\mu)}/{\sigma}}\bigr\rangle\bigr),
\end{multline}
for the upper right quadrant (i.e., $\RealPart{z}\!\geq\!0$ and $\ImagPart{z}\!\geq\!0$);
\ifCLASSOPTIONtwocolumn
\begin{multline}\label{Eq:CCSMcLeishCDFExpressionB} 
	\!\!F_{Z}(z)=
		 I_{1}\bigl(\sqrt{2}\bigl\langle{1,{(\mu-z)}/{\sigma}}\bigr\rangle\bigr)\\
		-I_{2}\bigl(\sqrt{2}\bigl\langle{1,{(\mu-z)}/{\sigma}}\bigr\rangle,\bigl\langle{\imaginary,{(z-\mu)}/{\sigma}}\bigr\rangle\bigr),
\end{multline}
\else
\begin{equation}\label{Eq:CCSMcLeishCDFExpressionB} 
	\!\!F_{Z}(z)=
		 I_{1}\bigl(\sqrt{2}\bigl\langle{1,{(\mu-z)}/{\sigma}}\bigr\rangle\bigr)
		-I_{2}\bigl(\sqrt{2}\bigl\langle{1,{(\mu-z)}/{\sigma}}\bigr\rangle,\bigl\langle{\imaginary,{(z-\mu)}/{\sigma}}\bigr\rangle\bigr),
\end{equation}
\fi
for the upper left quadrant (i.e., $\RealPart{z}\!<\!0$ and $\ImagPart{z}\!\geq\!0$);
\begin{equation}\label{Eq:CCSMcLeishCDFExpressionC}
	\!\!F_{Z}(z)=I_{2}\bigl(\sqrt{2}\bigl\langle{1,{(\mu-z)}/{\sigma}}\bigr\rangle,\bigl\langle{\imaginary,{(\mu-z)}/{\sigma}}\bigr\rangle\bigr),\!
\end{equation}
for the lower left quadrant (i.e., $\RealPart{z}\!<\!0$ and $\ImagPart{z}\!<\!0$);
\begin{equation}\label{Eq:CCSMcLeishCDFExpressionD}
	F_{Z}(z)=
		 I_{1}\bigl(\sqrt{2}\bigl\langle{\imaginary,{(\mu-z)}/{\sigma}}\bigr\rangle\bigr)
		-I_{2}\bigl(\sqrt{2}\bigl\langle{1,{(z-\mu)}/{\sigma}}\bigr\rangle,\bigl\langle{\imaginary,{(\mu-z)}/{\sigma}}\bigr\rangle\bigr),
\end{equation}
\end{subequations}
for the lower right quadrant (i.e., $\RealPart{z}\!\geq\!0$ and $\ImagPart{z}\!<\!0$), 
where $I_1(x)$ and $I_2(x,y)$ are respectively defined as
\setlength\arraycolsep{1.4pt}
\begin{eqnarray}
    \label{Eq:McLeishQFunctionIntegralWithoutCorrelation}
    I_1(x)&=&\int_{0}^{\infty}Q\bigl(\sqrt{x^2/g}\bigr)f_{G}(g)\,dg,\\
    \label{Eq:McLeishBivariateQFunctionIntegralWithoutCorrelation}
    I_2(x,y)&=&\int_{0}^{\infty}Q\bigl(\sqrt{x^2/g}\bigr)Q\bigl(\sqrt{y^2/g}\bigr)f_{G}(g)\,dg,
\end{eqnarray}
for $x,y\!\in\!\mathbb{R}_{+}$. Eventually, substituting 
$Q(x)\!=\!\frac{1}{2}\Erfc{x/\sqrt{2}}$\cite[Eq.\!~(2.3-18)]{BibProakisBook} and \cite[Eq.\!~(06.27.26.0006.01)]{BibWolfram2010Book} 
into \eqref{Eq:McLeishQFunctionIntegralWithoutCorrelation}, and then using 
\cite[Eqs.\!~(2.8.4)\!~and\!~(2.9.1)]{BibKilbasSaigoBook}, $I_1(x)$ results in \eqref{Eq:McLeishQFunctionUsingFoxH}.
In addition, substituting \cite[Eq.\!~(4.6)\!~and\!~(4.8)]{BibSimonDivsalarTCOM1998} into \eqref{Eq:McLeishBivariateQFunctionIntegralWithoutCorrelation} and 
using \cite[Eq. (3.471/9)]{BibGradshteynRyzhikBook}, $I_2(x,y)$ is obtained as
\ifCLASSOPTIONtwocolumn
\begin{multline}
I_2(x,y)=\frac{1}{2}Q_{\nu}\Bigl(\sqrt{(x^2+y^2)\sin(\phi)^2},\phi\Bigr)\\
	+\frac{1}{2}Q_{\nu}\Bigl(\sqrt{(x^2+y^2)\cos(\phi)^2},\frac{\pi}{2}-\phi\Bigr),
\end{multline}
\else
\begin{equation}
I_2(x,y)=\frac{1}{2}Q_{\nu}\Bigl(\sqrt{(x^2+y^2)\sin(\phi)^2},\phi\Bigr)
	+\frac{1}{2}Q_{\nu}\Bigl(\sqrt{(x^2+y^2)\cos(\phi)^2},\frac{\pi}{2}-\phi\Bigr),
\end{equation}
\fi
where $\phi\!=\!\arctan(x,y)$. Consequently, substituting $I_1(x)$ and $I_2(x,y)$ into \eqref{Eq:CCSMcLeishCDFExpression} yields \eqref{Eq:CCSMcLeishCDF}, which proves   
\theoremref{Theorem:CCSMcLeishCDF}.
\end{proof}   

The \ac{CDF} of $Z\!\sim\!\mathcal{CM}_{\nu}(\mu,\sigma^2)$ and its contour plot are well described in \figref{Figure:CCSMcLeishCDFA} and \figref{Figure:CCSMcLeishCDFB}, respectively. For a given contour value $c\!\in\![0,1)$, the contours, presented in \figref{Figure:CCSMcLeishCDFB}, can be obtained by
\begin{equation}
    \nonumber
    (z|c)=\bigl\{z=\widehat{\xi}\exp(\imaginary\theta)\,\,
        \bigl|\,\,
        \theta\in[-\pi,\pi),\text{~and~} 
        \widehat{\xi}=
            \argmin_{\xi\in\mathbb{R}_{+}}
                \lVert
                    {F^{2}_{Z}(\xi)-c}
                \rVert^2
        \bigr.
    \bigr\}.{~~~}
\end{equation}

\begin{figure}[tp] 
\centering
\begin{subfigure}{0.7\columnwidth}
    \centering
    \includegraphics[clip=true, trim=0mm 0mm 0mm 0mm,width=1.0\columnwidth,height=0.85\columnwidth]{./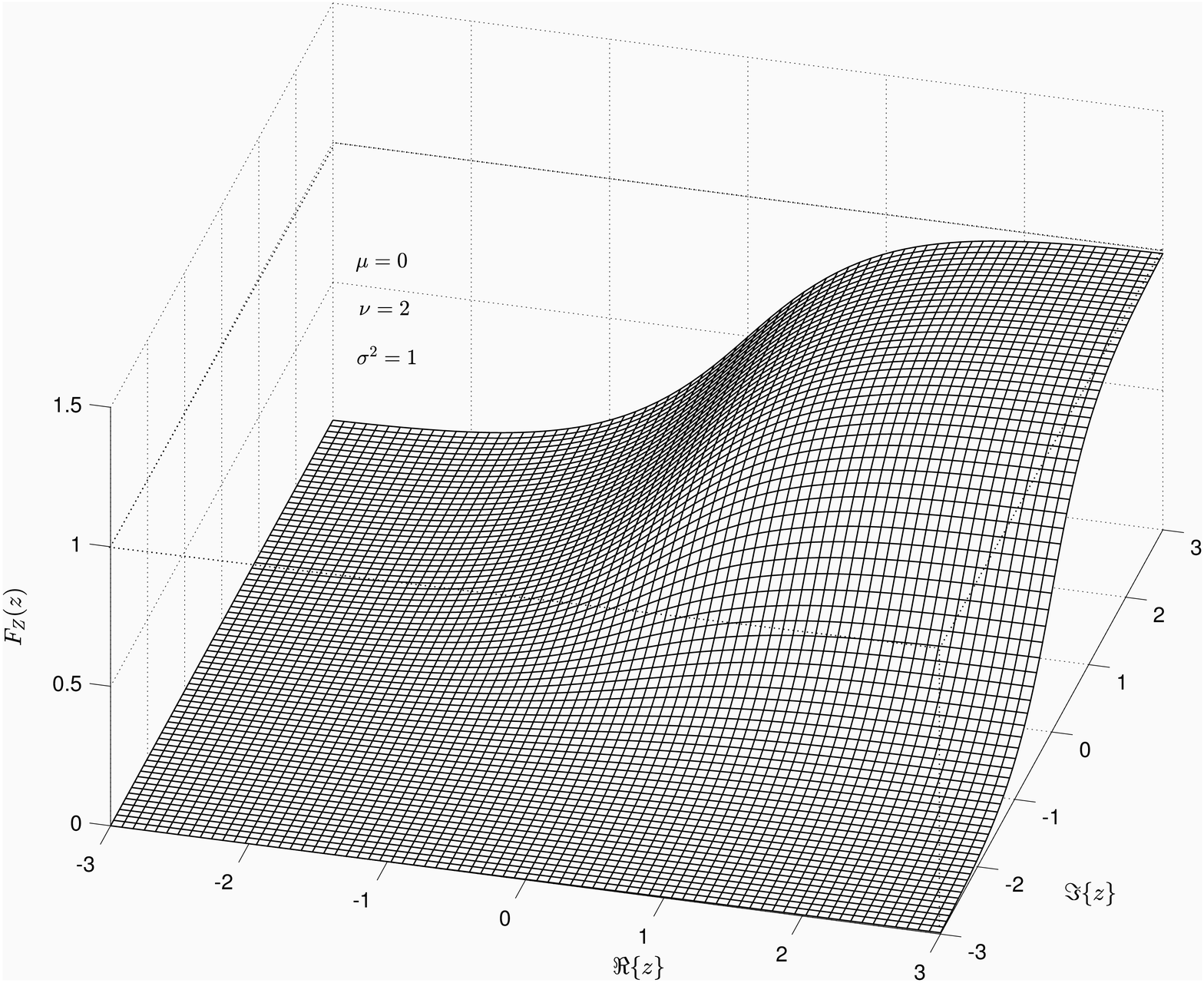}
    \caption{\ac{CDF} illustration in complex space.}
    \vspace{5mm}
    \label{Figure:CCSMcLeishCDFA}
\end{subfigure}
~~~
\begin{subfigure}{0.7\columnwidth}
    \centering
    \includegraphics[clip=true, trim=0mm 0mm 0mm 0mm,width=1.0\columnwidth,height=0.85\columnwidth]{./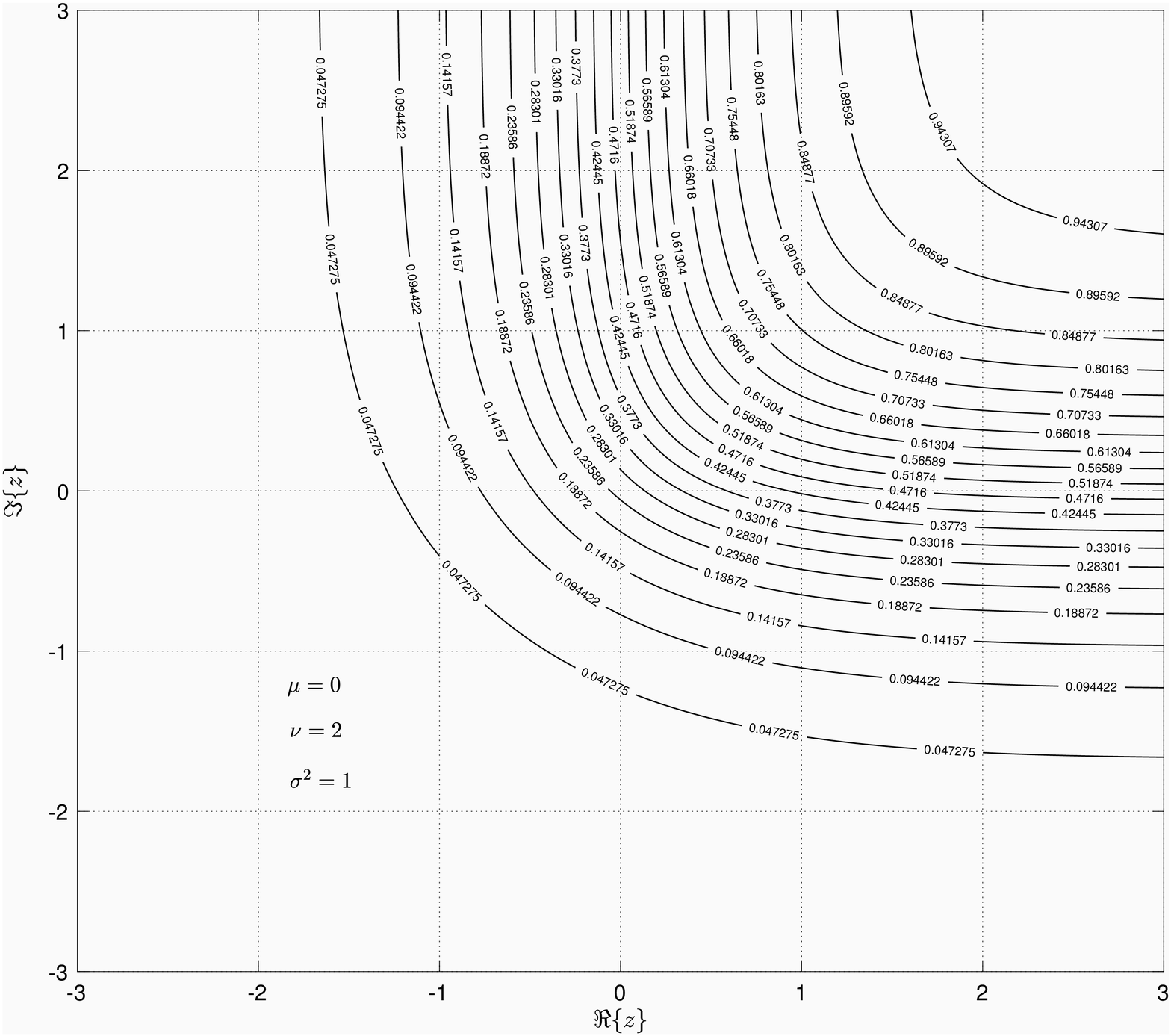}
    \caption{\ac{CDF} contour curves.}
    \label{Figure:CCSMcLeishCDFB}
\end{subfigure}
\caption{The \ac{CDF} and contour of $\mathcal{CM}_{\nu}(0,\sigma^2)$ (i.e., the illustration of \eqref{Eq:CCSMcLeishCDF} for $\mu\!=\!0$).}
\label{Figure:CCSMcLeishCDF}
\vspace{-2mm} 
\end{figure} 
\begin{theorem}\label{Theorem:CCSMcLeishMGF}
Under the condition of being \ac{CCS}, the \ac{MGF} of $Z\!\sim\!\mathcal{CM}_{\nu}(\mu,\sigma^2)$ is given by
\begin{equation}\label{Eq:CCSMcLeishMGF}
	M_{Z}(s)={e}^{-\langle{s,\mu}\rangle}\Bigl(1-\scalemath{0.9}{0.9}{\frac{\lambda^2}{8}}\langle{s,s}\rangle\Bigr)^{-\nu},
\end{equation}
where $s\!=\!{s}_{X}+\imaginary{s}_{Y}\in\mathbb{C}$ within the existence region $s\!\in\!\mathbb{C}_{0}$, 
and the region $\mathbb{C}_{0}$ is given by 
$\mathbb{C}_0=\bigl\{
        s\,    
        \bigl|\,
        \bigl\langle{s,s}\bigr\rangle
        \leq
        {8}/{\lambda^2}
        \bigr.
        \bigr\}$.
\end{theorem}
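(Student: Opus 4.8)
The plan is to exploit the Gaussian scale-mixture representation established in \theoremref{Theorem:CCSMcLeishDefinition}, namely $Z\!=\!\sqrt{G}Z_0+\mu$ from \eqref{Eq:CCSMcLeishDistributionDefinition}, where $Z_0\!\sim\!\mathcal{CN}(0,\sigma^2)$ is a circularly symmetric complex Gaussian whose real and imaginary parts are independent each with variance $\sigma^2/2$, and $G\!\sim\!\mathcal{G}(\nu,1)$ is independent of $Z_0$ with \ac{PDF} \eqref{Eq:ProportionPDF}. Starting from $M_{Z}(s)\!=\!\Expected{\exp(-\langle{s},{Z}\rangle)}$, I would first condition on $G$ and then average over its Gamma law. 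The essential point is that, conditioned on $G\!=\!g$, the distribution of $Z$ is precisely the complex Gaussian $\mathcal{CN}(\mu,g\sigma^2)$ whose conditional \ac{PDF} is already recorded in \eqref{Eq:ConditionedCCSMcLeishPDF}; hence the conditional \ac{MGF} is elementary, and the whole problem reduces to a single Gamma-type integral.

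For the conditional step, I would use the real-bilinearity of $\langle\cdot,\cdot\rangle$ to write $\langle{s},{Z}\rangle\!=\!\langle{s},{\mu}\rangle+\langle{s},{Z-\mu}\rangle$, where, given $G\!=\!g$, the quantity $\langle{s},{Z-\mu}\rangle\!=\!s_{X}(\RealPart{Z}-\mu_1)+s_{Y}(\ImagPart{Z}-\mu_2)$ is a zero-mean real Gaussian with variance $\tfrac{g\sigma^2}{2}\langle{s},{s}\rangle$, recalling that $\langle{s},{s}\rangle\!=\!\RealPart{s^{*}s}\!=\!s_{X}^2+s_{Y}^2$ is real. Applying the scalar Gaussian identity $\Expected{e^{-U}}\!=\!e^{\tau^2/2}$ for $U\!\sim\!\mathcal{N}(0,\tau^2)$ then yields the conditional \ac{MGF}
\begin{equation}\nonumber
\Expected{\exp(-\langle{s},{Z}\rangle)\,\big|\,G\!=\!g}=e^{-\langle{s},{\mu}\rangle}\exp\Bigl(\tfrac{g\sigma^2}{4}\langle{s},{s}\rangle\Bigr).
\end{equation}

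Next I would remove the conditioning by integrating against $f_{G}(g)\!=\!\tfrac{\nu^{\nu}}{\Gamma(\nu)}g^{\nu-1}e^{-\nu{g}}$ from \eqref{Eq:ProportionPDF}, which leaves the Gamma integral
\begin{equation}\nonumber
\frac{\nu^{\nu}}{\Gamma(\nu)}\int_{0}^{\infty}g^{\nu-1}\exp\Bigl(-g\bigl(\nu-\tfrac{\sigma^2}{4}\langle{s},{s}\rangle\bigr)\Bigr)\,dg=\Bigl(1-\tfrac{\sigma^2}{4\nu}\langle{s},{s}\rangle\Bigr)^{-\nu},
\end{equation}
evaluated via $\int_{0}^{\infty}g^{\nu-1}e^{-ag}dg\!=\!\Gamma(\nu)/a^{\nu}$, which converges exactly when $a\!=\!\nu-\tfrac{\sigma^2}{4}\langle{s},{s}\rangle\!>\!0$, i.e. $\langle{s},{s}\rangle\!<\!4\nu/\sigma^2$. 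Substituting the power-normalising factor $\lambda\!=\!\sqrt{2\sigma^2/\nu}$ converts $\tfrac{\sigma^2}{4\nu}$ into $\tfrac{\lambda^2}{8}$ and the convergence bound into $\langle{s},{s}\rangle\!<\!8/\lambda^2$, thereby reproducing \eqref{Eq:CCSMcLeishMGF} together with the region $\mathbb{C}_0$.

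There is no deep obstacle here; the only step demanding care is the variance bookkeeping in the conditional Gaussian, where splitting the total conditional variance $g\sigma^2$ of $Z-\mu$ equally between its real and imaginary parts is exactly what produces the coefficient $\tfrac{\lambda^2}{8}$, rather than the $\tfrac{\lambda^2}{4}$ appearing in the univariate \ac{MGF} \eqref{Eq:McLeishMGF}. As consistency checks I would verify that letting $\nu\!\rightarrow\!\infty$, via $\lim_{n\to\infty}(1+x/n)^{n}\!=\!e^{x}$, recovers $e^{-\langle{s},{\mu}\rangle}\exp(\tfrac{\sigma^2}{4}\langle{s},{s}\rangle)$, the \ac{MGF} of $\mathcal{CN}(\mu,\sigma^2)$, and that $\nu\!=\!1$ collapses to the corresponding \ac{CCS} Laplacian \ac{MGF}.
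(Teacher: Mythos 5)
Your proposal is correct and follows essentially the same route as the paper's proof: both exploit the decomposition $Z=\sqrt{G}Z_{0}+\mu$ from \theoremref{Theorem:CCSMcLeishDefinition}, reduce the problem to the conditional complex Gaussian \ac{MGF} $\exp\bigl(\tfrac{g\sigma^2}{4}\langle{s,s}\rangle\bigr)$, and evaluate the resulting Gamma integral via $\int_{0}^{\infty}x^{a-1}e^{-bx}dx=\Gamma(a)b^{-a}$, with the convergence condition $1-\lambda^2\langle{s,s}\rangle/8>0$ yielding the region $\mathbb{C}_{0}$. Your explicit variance bookkeeping for the conditional step (which the paper only cites) even confirms the correct positive sign in the exponent, where the paper's prose contains a small sign typo.
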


\begin{proof}
The \ac{MGF} of $Z$ (i.e., the joint \ac{MGF} $M_{\defrmat{Z}}(s_{X},s_{Y})$
of $\defrmat{Z}$) is defined as $M_{Z}(s)=\Expected{\exp(-\langle{s,Z}\rangle)}$,
where utilizing \theoremref{Theorem:CCSMcLeishDefinition} yields  
\begin{equation}\label{Eq:CCSMcLeishMGFIntegral}
	M_{Z}(s)=e^{\langle{s,\mu}\rangle}\int_{0}^{\infty}\Expected{\exp(-\langle{s,\sqrt{g}Z_0}\rangle)}\,f_{G}(g)dg,
\end{equation}
where $\Expected{\exp(-\langle{s,\sqrt{g}Z_0}\rangle)}$ is the \ac{MGF} of $\mathcal{CN}(0,g\sigma^2)$ given by
$\exp(-g\frac{\sigma^2}{4}\langle{s,s}\rangle)$
 \cite{BibHandersenHojbjerreSorensenEriksenBook1995,BibKrishnamoorthyBook,BibForbesEvansHastingsPeacockBook,BibZwillingerKokoskaBook}.
Then, substituting \eqref{Eq:ProportionPDF} into \eqref{Eq:CCSMcLeishMGFIntegral}, we have
\begin{equation}\label{Eq:CCSMcLeishMGFIntegralII}
		M_{Z}(s)=\frac{\nu^{\nu}e^{\langle{s,\mu}\rangle}}{\Gamma(\nu)}
			\int_{0}^{\infty}g^{\nu-1}{e}^{-g\nu\bigl(1-{\lambda^2}\langle{s,s}\rangle/8\bigr)}\,dg.
\end{equation}
Consequently, utilizing $\int_{0}^{\infty}x^{a-1}\exp(-bx)\!=\!{b}^{-a}\Gamma(a)$ for any
$\RealPart{a},\RealPart{b}\!>\!0$\cite[Eq.\!~(3.381/4)]{BibGradshteynRyzhikBook}, and correspondingly in a certain  existence region $1-\lambda^2\langle{s,s}\rangle/{8}\!>\!{0}$, we simplify \eqref{Eq:CCSMcLeishMGFIntegralII} into \eqref{Eq:CCSMcLeishMGF}, which proves \theoremref{Theorem:CCSMcLeishMGF}.  
\end{proof}

For consistency, setting $\nu\!\rightarrow\!{0}$ simplifies \eqref{Eq:CCSMcLeishMGF} into the \ac{MGF} of Dirac's distribution, that is $M_{Z}(s)\!=\!\exp(-\langle{s,\mu}\rangle)$. Further, setting $\nu\!=\!{1}$ simplifies \eqref{Eq:CCSMcLeishMGF} into the \ac{MGF} of $\mathcal{CL}(\mu,\sigma^2)$, that is  $M_{Z}(s)\!=\!{e}^{-\langle{s,\mu}\rangle}(1-{\sigma^2}\langle{s,s}\rangle/{4})^{-1}$.
In addition, setting the limit $\nu\!\rightarrow\!\infty$  on \eqref{Eq:CCSMcLeishMGF} and applying \cite[Eq.\!~(1.211/4)]{BibGradshteynRyzhikBook} results in $M_{Z}(s)\!=\!\exp(-\langle{s,\mu}\rangle-\frac{1}{4}\sigma^2\langle{s,s}\rangle)$, which is the \ac{MGF} of $\mathcal{CN}(\mu,\sigma^2)$ \cite{BibHandersenHojbjerreSorensenEriksenBook1995,BibKrishnamoorthyBook,BibForbesEvansHastingsPeacockBook,BibZwillingerKokoskaBook} as expected. 

For the purpose of achieving statistical characterization, the moment of $Z\!\sim\!\mathcal{CM}_{\nu}(\mu,\sigma^2)$ (i.e., the joint moment $\Expected{{X}_1^m{X}_2^n}$ of $\defrmat{Z}\!=\![{X}_1,{X}_2]^T$ as referring to \eqref{Eq:ComplexMcLeishDefinition}, where $m\!\in\!\mathbb{N}$ and $n\!\in\!\mathbb{N}$) are needed in a closed form, and for which the \ac{MGF} is a very useful instrument\cite[Eqs.\!~(3.79)\!~and\!~(3.80)]{BibZwillingerKokoskaBook} as follows
\begin{equation}\label{Eq:CCSMcLeishMomentsUsingJointMGF}
	\Expected{X_{1}^mX_{2}^n}=
	    \biggl.{(-1)}^{m+n}
	        \frac{\partial^{m+n}}
	            {\partial{s_{1}}^{m}\partial{s_{2}}^{n}}M_{Z}(s)
		            \biggr|_{\substack{s_{1}\rightarrow{0}\\s_{2}\rightarrow{0}}}
\end{equation}
where $s\!=\!{s}_{1}+\imaginary{s}_{2}\!\in\!\mathbb{C}$. Hence, replacing \eqref{Eq:CCSMcLeishMGF} into \eqref{Eq:CCSMcLeishMomentsUsingJointMGF} and thereon applying two times the Leibniz's higher order derivative rule\cite[Eq.\!~(0.42)]{BibGradshteynRyzhikBook} yields \eqref{Eq:CCSMcLeishMoments} as shown below.

\begin{theorem}\label{Theorem:CCSMcLeishMoments}
Under the condition of being \ac{CCS}, the joint moment $\Expected{{X}_1^m{X}_2^n}$, $m,n\!\in\!\mathbb{N}$, of $Z\!\sim\!\mathcal{CM}_{\nu}(\mu,\sigma^2)$ is given as referring to \eqref{Eq:ComplexMcLeishDefinition} by
\begin{equation}\label{Eq:CCSMcLeishMoments}
\!\!\!\!\Expected{{X}_1^m{X}_2^n}=\mu_{1}^m\mu_{2}^n
		\sum_{k=0}^{m}
		\sum_{l=0}^{n}\!
			\Binomial{m}{k}\!\Binomial{n}{l}\,
			\Xi_{k,l}
			     \frac{\lambda^{k+l}}{\mu_{1}^k\mu_{2}^l}\iseven{k,l},\!\!
\end{equation}
where $\iseven{k,l}\!=\!\iseven{k}\iseven{l}$, and the weight $\Xi_{k,l}$~is~defined~as
\begin{equation}
\Xi_{k,l}=\sqrt{2^{k+l}}\,{\textstyle
			\Pochhammer{\frac{1}{2}}{{k}/{2}}\,
			\Pochhammer{\frac{1}{2}}{{l}/{2}}\,
			\Pochhammer{\nu}{{(k+l)}/{2}}\,}.
\end{equation}
where $\Pochhammer{a}{n}\!=\!a(a+1)\cdots(a+n-1)$ denotes Pochhammer's symbol (or shifted factorial) \emph{\cite{BibGradshteynRyzhikBook,BibAbramowitzStegunBook}}.
\end{theorem}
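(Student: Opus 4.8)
The plan is to avoid differentiating the \ac{MGF} directly and instead exploit the Gaussian scale-mixture representation already secured in \theoremref{Theorem:CCSMcLeishDefinition}. Writing $Z=\sqrt{G}({X_0}+\imaginary{Y_0})+\mu$ with $X_0,Y_0\!\sim\!\mathcal{N}(0,\sigma^2)$ mutually independent and $G\!\sim\!\mathcal{G}(\nu,1)$ independent of both, the two real components are simply $X_1=\sqrt{G}X_0+\mu_1$ and $X_2=\sqrt{G}Y_0+\mu_2$. First I would apply the binomial theorem to each factor of $X_1^mX_2^n$, expand, and take expectations term by term; because $G$, $X_0$ and $Y_0$ are mutually independent the mixed expectation factors completely, giving
\begin{equation}\nonumber
\Expected{X_1^mX_2^n}=\sum_{k=0}^m\sum_{l=0}^n\Binomial{m}{k}\Binomial{n}{l}\,\mu_1^{m-k}\mu_2^{n-l}\,\Expected{G^{(k+l)/2}}\,\Expected{X_0^k}\,\Expected{Y_0^l}.
\end{equation}
This already isolates the three elementary moments whose closed forms drive the whole result.

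The second step is to evaluate those three factors. The Gaussian central moments $\Expected{X_0^k}$ and $\Expected{Y_0^l}$ vanish as soon as $k$ or $l$ is odd, which is exactly the origin of the indicator $\iseven{k,l}\!=\!\iseven{k}\iseven{l}$; for even orders they equal $\sigma^k(k-1)!!$ and $\sigma^l(l-1)!!$, and I would rewrite the double factorials through $(k-1)!!=2^{k/2}\Pochhammer{\frac{1}{2}}{k/2}$ so that the Pochhammer symbols $\Pochhammer{\frac{1}{2}}{k/2}$ and $\Pochhammer{\frac{1}{2}}{l/2}$ emerge directly. For the Gamma factor I would insert $f_{G}(g)$ from \eqref{Eq:ProportionPDF} and use \cite[Eq.\!~(3.381/4)]{BibGradshteynRyzhikBook} to get $\Expected{G^{p}}=\Gamma(\nu+p)/(\Gamma(\nu)\nu^{p})=\Pochhammer{\nu}{p}/\nu^{p}$, specialized at $p=(k+l)/2$, which supplies the remaining Pochhammer $\Pochhammer{\nu}{(k+l)/2}$.

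The third step is purely algebraic assembly. Substituting the three expressions, eliminating $\sigma$ in favour of $\lambda$ through the relation fixing the component deviation, and reindexing $\mu_1^{m-k}\mu_2^{n-l}=\mu_1^m\mu_2^n\,\mu_1^{-k}\mu_2^{-l}$, the $\nu$-powers from $\Expected{G^{(k+l)/2}}$ should cancel against those hidden in $\sigma^{k+l}$, leaving $\lambda^{k+l}$ together with the power-of-two constant recorded in $\Xi_{k,l}$, so that the combined weight assembles into $\Xi_{k,l}=\sqrt{2^{k+l}}\,\Pochhammer{\frac{1}{2}}{k/2}\Pochhammer{\frac{1}{2}}{l/2}\Pochhammer{\nu}{(k+l)/2}$ and reproduces \eqref{Eq:CCSMcLeishMoments}. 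As an equivalent route one may follow the \ac{MGF} differentiation announced before the statement: expand the power factor of $M_Z(s)$ in \eqref{Eq:CCSMcLeishMGF} as a binomial series in $\langle s,s\rangle=s_1^2+s_2^2$, extract the coefficient of $s_1^m s_2^n$ per \eqref{Eq:CCSMcLeishMomentsUsingJointMGF}, and identify it with the above sum; the series form makes the even-order constraint manifest and sidesteps the iterated Leibniz bookkeeping.

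I expect the main obstacle to be clerical rather than conceptual, namely reconciling the several competing powers of $2$ and $\nu$ — those generated by $(k-1)!!$, by $\sigma^{k+l}$, and by the normalization linking $\sigma^2$, $\lambda$ and the per-component variance — so that they collapse into the single factor $\sqrt{2^{k+l}}$ of $\Xi_{k,l}$. Pinning down this normalization convention at the outset is essential, since a misallocated factor of $2$ between the component variance and $\lambda^2$ would rescale every weight. The one genuinely delicate point is to restrict the summation to even $(k,l)$ and fold the vanishing odd terms into $\iseven{k,l}$, because a parity slip there would propagate into all higher joint moments.
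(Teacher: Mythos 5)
Your proposal is correct and follows essentially the same route as the paper's own proof: decompose via \theoremref{Theorem:CCSMcLeishDefinition} as $X_1=\sqrt{G}X_0+\mu_1$, $X_2=\sqrt{G}Y_0+\mu_2$, apply binomial expansion, factor the expectation into $\Expected{G^{(k+l)/2}}\Expected{X_0^k}\Expected{Y_0^l}$ by independence, insert the Gaussian and Gamma moment formulas (with the parity indicator $\iseven{k,l}$ arising from the vanishing odd Gaussian moments), and reassemble into $\Xi_{k,l}\lambda^{k+l}$. Your concluding caution about reconciling the powers of $2$ and $\nu$ hidden in the $\sigma$-versus-$\lambda$ normalization is exactly the step the paper compresses into ``simple algebraic manipulations,'' so no gap remains.
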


\begin{proof}
Based on \theoremref{Theorem:CCSMcLeishDefinition}, the joint moment $\Expected{{X}_1^m{X}_2^n}$ can be
readily rewritten as
\begin{equation}\label{Eq:ComplexMcLeishMomentIntegral}
	\Expected{{X}_1^m{X}_2^n}=\mathbb{E}\bigl[(\sqrt{G}X_0-\mu_{1})^m(\sqrt{G}Y_0-\mu_{2})^n\bigr].
\end{equation}
Afterwards, applying binomial expansion on \eqref{Eq:ComplexMcLeishMomentIntegral}, we have
\begin{equation}
	\Expected{{X}_1^m{X}_2^n}=
		\mu_{1}^{m}\mu_{2}^{n}
		\sum_{k=0}^{m}\sum_{l=0}^{n}
		\Binomial{m}{k}
		\Binomial{n}{l}
		\frac{1}{\mu_{1}^{k}\mu_{2}^{l}}
		\mathbb{E}\bigl[G^{\frac{k+l}{2}}\bigr]
		\mathbb{E}\bigl[X_{0}^{k}\bigr]
		\mathbb{E}\bigl[Y_{0}^{l}\bigr],
\end{equation}
where substituting \cite[Eq.\!~(2.3-20)]{BibProakisBook} and \cite[Eq.\!~(2.23)]{BibAlouiniBook} 
\setlength\arraycolsep{1.4pt}
\begin{eqnarray}
\mathbb{E}\bigl[X_{0}^{n}\bigr]
    &=&\mathbb{E}\bigl[Y_{0}^{n}\bigr]
        =\frac{\Gamma({1}/{2}+n)}{2\Gamma({1}/{2})}\sigma^2\,\iseven{n}\\
\mathbb{E}\bigl[G^{n}\bigr]
    &=&\frac{\Gamma(\nu+n)}{\Gamma(\nu)\nu^{n}},
\end{eqnarray}
and then performing simple algebraic manipulations results in \eqref{Eq:CCSMcLeishMoments}, which proves
\theoremref{Theorem:CCSMcLeishMoments}.
\end{proof}

\subsection{Complex and Elliptically-Symmetric McLeish Distribution}
\label{Section:StatisticalBackground:CESMcLeishDistribution}
The bivariate Gaussian \ac{PDF} has several beneficial and elegant properties and, for this reason, it is a conventionally used model in the literature. Regarding this fact while to have more than the previous subsection, we infer many such properties, so let us consider a more generalized case, i.e., that the  mixture $Z\!=\!{X}_1+\imaginary{X}_2$ follows a \ac{CES} distribution whose inphase ${X}_1\!\sim\!\mathcal{M}_{\nu_{1}}(\mu_{1},\sigma^2_{1})$ and quadrature ${X}_2\!\sim\!\mathcal{M}_{\nu_{2}}(\mu_{2},\sigma^2_{2})$ are \ac{c.i.d.} two distributions correlated by $\rho\!\in\![-1,1]$. It is denoted by $Z\!\sim\!\allowbreak\mathcal{EM}_{\nu}(\mu,\sigma^2,\rho)$, the mean is $\mu\!=\!\mu_{1}+\imaginary\mu_{2}$, the normality is $\nu\!=\!\nu_{1}\!=\!\nu_{2}$, 
the variance is $\sigma^2\!=\!{2}\sigma^2_{1}\!=\!2\sigma^2_{2}$, and the correlation is  
\begin{equation}
	\rho=\frac{\Covariance{{X}_1}{{X}_2}}{\sqrt{\Variance{{X}_1}\Variance{{X}_2}}}
	    =\frac{2}{\sigma^2}(\mathbb{E}[{X}_1{X}_2]-\mu_{{X}_1}\mu_{{X}_2}).
\end{equation}
Accordingly, we present the definition of the \ac{CES} MacLeish distribution in the following theorem.

\begin{theorem}\label{Theorem:CESMcLeishDefinition}
Under the condition of being \ac{CES}, the definition of $Z\!\sim\!\mathcal{EM}_{\nu}(\mu,\sigma^2,\rho)$ can be decomposed as
\begin{subequations}
\setlength\arraycolsep{1.4pt}
\begin{eqnarray}\label{Eq:CESMcLeishDistributionDefinition}
\label{Eq:CESMcLeishDistributionDefinitionA}
Z&=&\sqrt{G}{Z_0}+\mu,\\
\label{Eq:CESMcLeishDistributionDefinitionB}
&=&\sqrt{G}\bigl({X_0}+\imaginary({\rho{X}_{0}+\sqrt{1-\rho^2}Y_0})\bigr)+\mu,
\end{eqnarray}
\end{subequations}
where $Z_{0}\!\sim\!\mathcal{EN}(0,\sigma^2,\rho)$,  
$X_{0}\!\sim\!\mathcal{N}(0,\sigma^2_{1})$ and $Y_{0}\!\sim\!\mathcal{N}(0,\sigma^2_{2})$.
$X_{0}$ and $Y_{0}$ are independent and identically distributed random distributions (i.e., $2\sigma^2_{1}\!=\!2\sigma^2_{2}\!=\!\sigma^2$). Further, ${G}\!\sim\!Gamma(\nu,1)$.
\end{theorem}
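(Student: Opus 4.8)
The plan is to reduce the \ac{CES} case to the \ac{CCS} case already settled in \theoremref{Theorem:CCSMcLeishDefinition} by conjugating with a deterministic \emph{whitening} (decorrelating) linear map on the complex plane, applying the circularly symmetric result in the whitened coordinates, and then mapping back. First I would introduce the invertible linear map $L_\rho\colon\mathbb{C}\rightarrow\mathbb{C}$, $L_\rho(x+\imaginary{y})=x+\imaginary(\rho{x}+\sqrt{1-\rho^2}\,y)$, together with its inverse $L_\rho^{-1}(u+\imaginary{v})=u+\imaginary(v-\rho{u})/\sqrt{1-\rho^2}$; the Jacobian factor $\sqrt{1-\rho^2}$ is nonzero for $\rho\in(-1,1)$, so $L_\rho$ is a bijection, and $L_\rho^{-1}$ is precisely the transformation that reduces the Mahalanobis inner product $\langle\cdot,\cdot\rangle_\rho$ of \eqref{Eq:MahalanobisInnerProduct} to the Euclidean inner product $\langle\cdot,\cdot\rangle$ of \eqref{Eq:EuclideanInnerProduct}.

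Next I would apply $L_\rho^{-1}$ to the centred variable, setting $\widetilde{Z}=L_\rho^{-1}(Z-\mu)$. Because the \ac{CES} law has a density depending on $z-\mu$ only through the Mahalanobis form $\langle{z-\mu},{z-\mu}\rangle_\rho$, the whitened variable $\widetilde{Z}$ has a density depending only on the Euclidean form $\langle\widetilde{z},\widetilde{z}\rangle$ and is therefore circularly symmetric. Its inphase marginal is $X_1-\mu_1\sim\mathcal{M}_{\nu}(0,\sigma_1^2)$, and for a circularly symmetric bivariate density either marginal determines the radial profile uniquely; hence $\widetilde{Z}$ is a \ac{CCS} McLeish distribution $\mathcal{CM}_{\nu}(0,\sigma^2)$. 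At this point \theoremref{Theorem:CCSMcLeishDefinition} applies and furnishes the Gamma scale-mixture representation $\widetilde{Z}=\sqrt{G}(X_0+\imaginary{Y}_0)$, with $X_0,Y_0$ independent $\mathcal{N}(0,\sigma^2/2)$ and $G\sim\mathcal{G}(\nu,1)$ independent of them.

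The decomposition then follows by undoing the whitening. Since $L_\rho$ is linear and $\sqrt{G}$ enters as a common scalar, the two operations commute, so $Z-\mu=L_\rho(\widetilde{Z})=\sqrt{G}\,L_\rho(X_0+\imaginary{Y}_0)=\sqrt{G}\bigl(X_0+\imaginary(\rho{X}_0+\sqrt{1-\rho^2}\,Y_0)\bigr)$, which is exactly \eqref{Eq:CESMcLeishDistributionDefinitionB} after adding $\mu$; the compact form \eqref{Eq:CESMcLeishDistributionDefinitionA} is recovered by writing $Z_0=L_\rho(X_0+\imaginary{Y}_0)\sim\mathcal{EN}(0,\sigma^2,\rho)$. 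A short second-order check confirms this last identification: $\Variance{X_0}=\Variance{\rho{X}_0+\sqrt{1-\rho^2}\,Y_0}=\sigma^2/2$ while $\Covariance{X_0}{\rho{X}_0+\sqrt{1-\rho^2}\,Y_0}=\rho\sigma^2/2$, so the real and imaginary parts of $Z_0$ have correlation coefficient exactly $\rho$, as required.

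I expect the principal obstacle to lie in the step asserting that the whitened variable $\widetilde{Z}$ belongs to the \ac{CCS} McLeish family rather than to some other circularly symmetric family. The resolution I would use is that circular symmetry forces the joint density to be a function of the Euclidean radius alone, and that this radial profile is recovered from a single (McLeish$_\nu$) marginal by an Abel-type inversion; since the marginal is McLeish with normality $\nu$, the radial profile must coincide with that of $\mathcal{CM}_{\nu}(0,\sigma^2)$. Equivalently, one may bypass the uniqueness argument by adopting the constructive reading of the theorem: \emph{define} $Z$ by \eqref{Eq:CESMcLeishDistributionDefinitionB} and verify directly, via the common Gamma mixing $G$, that conditioned on $G=g$ the pair $(X_1,X_2)$ is jointly Gaussian with correlation $\rho$, so that each centred marginal is a $\mathcal{G}(\nu,1)$ scale mixture of a zero-mean Gaussian and hence $\mathcal{M}_{\nu}$, matching the prescribed inphase and quadrature laws. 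The degenerate endpoints $\rho=\pm1$, where $L_\rho$ is singular, are then handled as limits in which the quadrature collapses onto the inphase component.
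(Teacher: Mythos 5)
Your proposal is correct, and it runs the argument in the opposite direction from the paper's proof; the paper's proof is, almost word for word, the ``constructive reading'' you offer only as a fallback. Concretely, the paper carries the scale-mixture form $Z=\sqrt{G}Z_{0}+\mu$ over from \theoremref{Theorem:CCSMcLeishDefinition} with $Z_{0}\!\sim\!\mathcal{EN}(0,\sigma^2,\rho)$, writes the correlated Gaussian as $\RealPart{Z_0}\!=\!X_0$ and $\ImagPart{Z_0}\!=\!\rho{X}_0+\sqrt{1-\rho^2}\,Y_0$ with $X_0,Y_0$ independent, and then verifies that the common scalar $\sqrt{G}$ leaves the correlation coefficient unchanged, so the inphase/quadrature correlation of $Z$ equals that of $Z_0$, namely $\rho$; your closing covariance check reproduces this verification exactly, and the paper then declares the proof obvious. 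Your main line instead \emph{derives} the decomposition: whiten $Z-\mu$ by $L_\rho^{-1}$, observe that ellipticity becomes circular symmetry, identify the whitened variable with $\mathcal{CM}_{\nu}(0,\sigma^2)$, invoke \theoremref{Theorem:CCSMcLeishDefinition}, and unwhiten, using linearity to commute $L_\rho$ with the scalar $\sqrt{G}$. What this buys is a genuine forward implication (the paper only exhibits a construction consistent with the prescribed correlation), including a uniqueness step---circular symmetry plus one McLeish marginal determines the radial profile---that the paper never supplies; what the paper's route buys is brevity. Two cautions on your main line. First, the closed-form CES density \eqref{Eq:CESMcLeishPDF} is obtained in \theoremref{Theorem:CESMcLeishPDF} \emph{from} the present decomposition, so to avoid circularity your assertion that the CES law depends on $z-\mu$ only through $\langle{z-\mu},{z-\mu}\rangle_{\rho}$ must be read as the definition of elliptical symmetry (with shape matrix fixed by the correlation $\rho$ and the equal marginal variances), not as a citation of that formula. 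Second, the Abel-inversion uniqueness argument can be bypassed altogether: by circular symmetry the quadrature marginal of the whitened variable coincides with its inphase marginal, hence both are McLeish, and under the paper's definition a circularly symmetric complex law with McLeish marginals \emph{is} a CCS McLeish distribution, so \theoremref{Theorem:CCSMcLeishDefinition} applies to it directly.
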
 

\begin{proof}
Referring to \theoremref{Theorem:CCSMcLeishDefinition}, the correlation between the inphase and quadrature of $Z\!\sim\!\mathcal{EM}_{\nu}(\mu,\sigma^2,\rho)$ is certainly determined by that between the inphase and quadrature of $Z_{0}\!\sim\!\mathcal{EN}(0,\sigma^2,\rho)$. For a certain correlation
$\rho\!\in\![-1,1]$, the inphase and quadrature of $Z_{0}\!\sim\!\mathcal{EN}(0,\sigma^2,\rho)$ are respectively written as
\setlength\arraycolsep{1.4pt}
\begin{eqnarray}
	\RealPart{Z_{0}}&=&{X}_{0} \\
	\ImagPart{Z_{0}}&=&\rho{X}_{0}+\sqrt{1-\rho^2}{Y}_{0},
\end{eqnarray}
such that $\Covariance{\RealPart{Z_{0}}}{\ImagPart{Z_{0}}}\!=\!\rho\,\sigma^2/2$ and $\Variance{\RealPart{Z_{0}}}\!=\!\Variance{\ImagPart{Z_{0}}}\!=\!\sigma^2/2$.
Accordingly, the correlation between the inphase and quadrature of $Z\!\sim\!\mathcal{EM}_{\nu_{Z}}(\mu_{Z},\sigma^2_{Z},\rho)$
is written in terms of that between $\RealPart{Z_{0}}$ and $\ImagPart{Z_{0}}$, that is
\begin{equation}
	 \!\!\!\!\rho=\frac{\Covariance{{X}_1}{{X}_2}}{\sqrt{\Variance{{X}_1}\Variance{{X}_2}}}=
		\frac{\Covariance{\RealPart{Z_{0}}}{\ImagPart{Z_{0}}}}{\sqrt{\Variance{\RealPart{Z_{0}}}\Variance{\ImagPart{Z_{0}}}}}.\!\!
\end{equation}
Accordingly, the proof is obvious.
\end{proof}

With the aid of \theoremref{Theorem:CESMcLeishDefinition}, the \ac{PDF} of $Z$ is given in the following theorem.
 
\begin{theorem}\label{Theorem:CESMcLeishPDF}
Under the condition of being \ac{CES}, the \ac{PDF} of $Z\!\sim\!\mathcal{EM}_{\nu}(\mu,\sigma^2,\rho)$ is given by
\begin{equation}\label{Eq:CESMcLeishPDF}
\!\!f_{Z}(z)=\frac{2}{\pi\Gamma(\nu)}
		\frac{\abs{z-\mu}_{\rho}^{\nu-1}}{\sqrt{1-\rho^2}\,\lambda^{\nu+1}}\BesselK[\nu-1]{\frac{2\abs{z-\mu}_{\rho}}{\lambda}}\!\!
\end{equation}
defined over $z\!\in\!\mathbb{C}$, where the deviation factor $\lambda=\sqrt{2\sigma^2/\nu}$. 
\end{theorem}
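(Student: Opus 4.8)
The plan is to mimic the proof of the circularly-symmetric case (Theorem~\ref{Theorem:CCSMcLeishPDF}) but now carry along the correlation parameter $\rho$ throughout. The essential structure is the Gaussian-mixture representation furnished by \theoremref{Theorem:CESMcLeishDefinition}: we write $Z\!=\!\sqrt{G}Z_0+\mu$ with $G\!\sim\!\mathcal{G}(\nu,1)$ and $Z_0\!\sim\!\mathcal{EN}(0,\sigma^2,\rho)$, and then obtain the \ac{PDF} of $Z$ by averaging the conditional (bivariate Gaussian) \ac{PDF} over the Gamma-distributed power fluctuation $G$. First I would record the conditional \ac{PDF} $f_{Z|G}(z|g)$. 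Since $\defrmat{Z}_0\!=\![X_0,\rho X_0+\sqrt{1-\rho^2}Y_0]^T$ is a bivariate Gaussian vector with zero mean and covariance matrix $\tfrac{\sigma^2}{2}\begin{bmatrix}1&\rho\\\rho&1\end{bmatrix}$, its conditional density naturally involves the Mahalanobis squared-distance $\langle z-\mu,z-\mu\rangle_\rho=|z-\mu|_\rho^2$ defined in \eqref{Eq:MahalanobisInnerProduct}, together with the determinant factor $1-\rho^2$. Concretely I expect
\begin{equation}\nonumber
	f_{Z|G}(z|g)=\frac{1}{\pi g\sigma^2\sqrt{1-\rho^2}}
		\exp\Bigl(-\frac{1}{g}\Bigl\langle{\frac{z-\mu}{\sigma},\frac{z-\mu}{\sigma}}\Bigr\rangle_{\rho}\Bigr),
\end{equation}
which reduces exactly to \eqref{Eq:ConditionedCCSMcLeishPDF} upon setting $\rho\!=\!0$.

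Next I would form the mixture integral $f_Z(z)=\int_0^\infty f_{Z|G}(z|g)\,f_G(g)\,dg$, inserting the Gamma \ac{PDF} \eqref{Eq:ProportionPDF}, namely $f_G(g)=\tfrac{\nu^\nu}{\Gamma(\nu)}g^{\nu-1}\exp(-\nu g)$. Collecting powers of $g$ and the two exponentials, the integrand takes the form $g^{\nu-2}\exp(-a g - b/g)$ with $a=\nu$ and $b=\tfrac{1}{\sigma^2}|z-\mu|_\rho^2$. This is precisely the integral representation of the modified Bessel function of the second kind; I would evaluate it via \cite[Eq.~(3.471/9)]{BibGradshteynRyzhikBook}, exactly as in the proof of \theoremref{Theorem:CCSMcLeishPDF}, obtaining a $\BesselK[\nu-1]{\cdot}$ term whose argument is $2\sqrt{ab}=2|z-\mu|_\rho/\lambda$ after substituting $\lambda=\sqrt{2\sigma^2/\nu}$. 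Simplifying the prefactors and tracking the $\sqrt{1-\rho^2}$ in the denominator should yield \eqref{Eq:CESMcLeishPDF} directly.

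\emph{The main obstacle} I anticipate is purely bookkeeping rather than conceptual: getting the exponent of $g$, the argument of the Bessel function, and the constant prefactor all to collapse cleanly, while correctly propagating the correlation structure through the quadratic form. In particular, I must verify that replacing the Euclidean inner product $\langle\cdot,\cdot\rangle$ by the Mahalanobis form $\langle\cdot,\cdot\rangle_\rho$ is the only change from the circular case, and that the extra determinant factor $\det^{1/2}=\sqrt{1-\rho^2}$ appears exactly once in the normalization. As a consistency check I would confirm that setting $\rho\!=\!0$ recovers \eqref{Eq:CCSMcLeishPDF} (since then $|z-\mu|_\rho=|z-\mu|$ and $\sqrt{1-\rho^2}=1$), and that $\nu\!=\!1$ and $\nu\!\to\!\infty$ reproduce the correlated Laplacian and elliptically-symmetric Gaussian densities, respectively; these limits give confidence that no stray factor of $2$ or $\sqrt{1-\rho^2}$ has been dropped.
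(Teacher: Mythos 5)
Your proposal is correct and follows essentially the same route as the paper's own proof: the paper likewise writes the conditional bivariate-Gaussian \ac{PDF} $f_{Z|G}(z|g)=\frac{1}{\pi{g}\sigma^2\sqrt{1-\rho^2}}\exp\bigl(-|z-\mu|^2_{\rho}/(g\sigma^2)\bigr)$ (citing Proakis Eq.~(2.3-78)), checks that $\rho=0$ recovers \eqref{Eq:ConditionedCCSMcLeishPDF}, forms the mixture $f_{Z}(z)=\int_{0}^{\infty}f_{Z|G}(z|g)f_{G}(g)\,dg$ with the Gamma density \eqref{Eq:ProportionPDF}, and then invokes the same steps as in \theoremref{Theorem:CCSMcLeishPDF}, i.e.\ the Bessel-integral identity \cite[Eq.~(3.471/9)]{BibGradshteynRyzhikBook}. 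The only difference is presentational: the paper compresses the final evaluation into a reference to the CCS proof, whereas you spell out the $g^{\nu-2}e^{-ag-b/g}$ bookkeeping and the limiting consistency checks explicitly.
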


\begin{proof}
With the aid of \eqref{Eq:CESMcLeishPDF}, the \ac{PDF} of $Z$ conditioned on $G$ is readily written as \cite[Eq.\!~(2.3-78)]{BibProakisBook}  
\begin{equation}\label{Eq:CESMcLeishConditionedPDF}
	f_{Z|G}(z|g)=\frac{1}{\pi{g}\sqrt{1-\rho^2}\sigma^2}
		\exp\Bigl(-\scalemath{0.95}{0.95}{\frac{\abs{z-\mu}^2_{\rho}}{g\sigma^2}}\Bigr),
\end{equation}
for $g\!\in\!\mathbb{R}_{+}$, where setting the correlation $\rho\!=\!0$ yields 
into \eqref{Eq:ConditionedCCSMcLeishPDF} as expected. Accordingly, 
the \ac{PDF} of $Z$ can be expressed as $f_{Z}(z)\!=\!\int_{0}^{\infty}f_{Z|G}(z|g)f_{G}(g)dg$. 
Then, the proof is obvious following the same steps 
in the proof of \theoremref{Theorem:CCSMcLeishPDF}.
\end{proof} 

The \ac{PDF} contour curves of $Z\!\sim\!\mathcal{CM}_{\nu}(\mu,\sigma^2)$ are clearly illustrated in \figref{Figure:CESMcLeishPDF} for $\rho\!=\!\pm{3}/{4}$. In addition to them, let us consider the consistency of \eqref{Eq:CESMcLeishPDF}. Setting the correlation $\rho\!=\!0$ yields \eqref{Eq:CCSMcLeishPDF} as expected. Furthermore, setting $\nu\!=\!1$ reduces \eqref{Eq:CESMcLeishPDF} to the \ac{PDF} of \ac{CES} Laplacian distribution, and equivalently so does $\nu\!\rightarrow\!\infty$ to the \ac{PDF} of the bivariate correlated Gaussian distribution\cite[Eq.\!~(2.3-78)]{BibProakisBook}, whose inphase and quadrature are mutually correlated with $\rho\!\neq\!0$. In contrast to the evidence that zero correlation implies independence between Gaussian distributions, the two uncorrelated McLeish distributions are not independent of each other unless $\nu\!\rightarrow\!\infty$.
Eventually, having treated the correlation, it is useful to define the McLeish's bivariate \ac{Q-function} with the aid of \eqref{Eq:CESMcLeishPDF}. 
\begin{figure}[tp] 
\centering
\begin{subfigure}{0.7\columnwidth}
    \centering
    \includegraphics[clip=true, trim=0mm 0mm 0mm 0mm, width=1.0\columnwidth,height=0.85\columnwidth]{./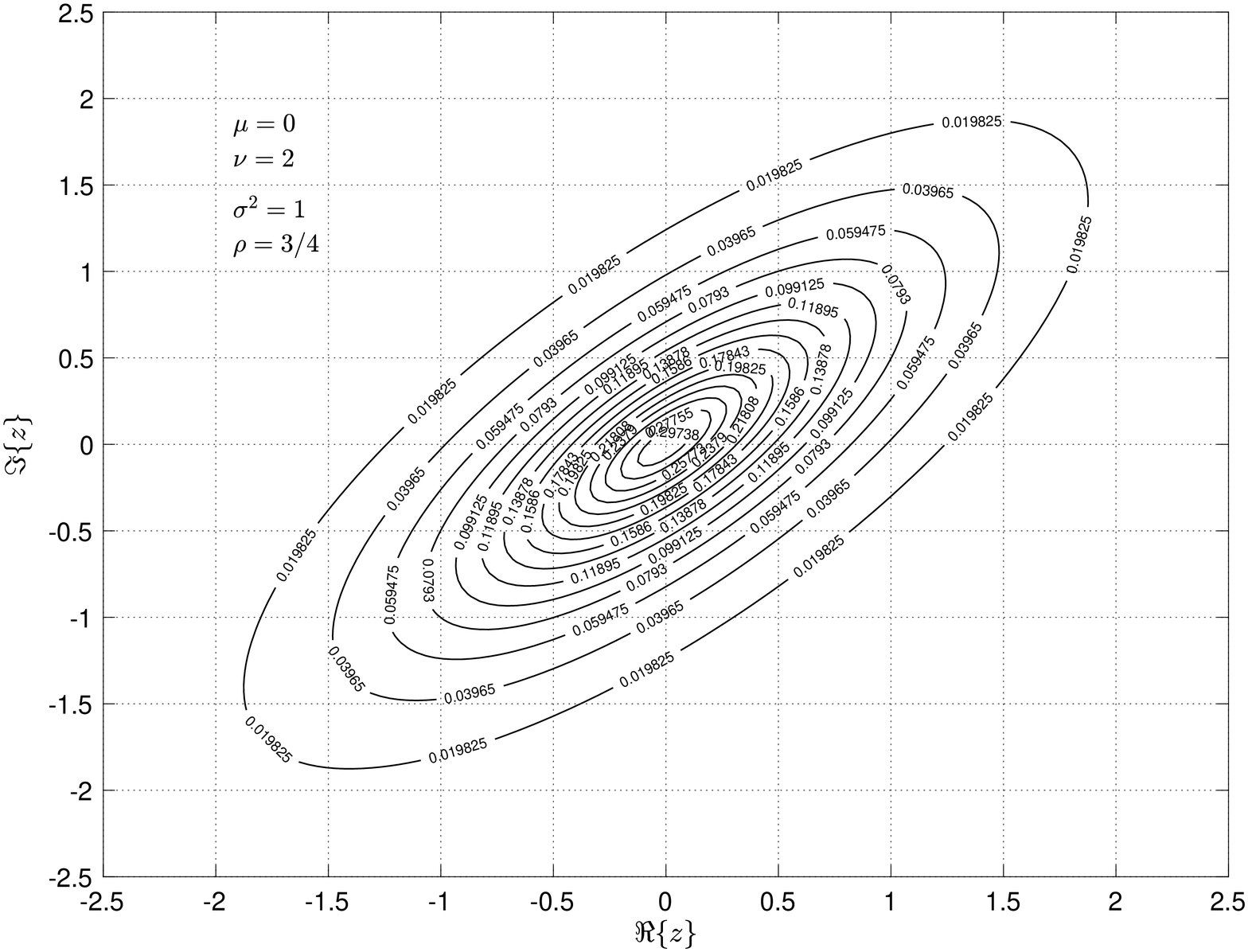} 
    \caption{The contour curves for $\rho\!=\!3/4$.}
    \vspace{5mm}
    \label{Figure:CESMcLeishPDFA}
\end{subfigure}
~~~
\begin{subfigure}{0.7\columnwidth}
    \centering
    \includegraphics[clip=true, trim=0mm 0mm 0mm 0mm, width=1.0\columnwidth,height=0.85\columnwidth]{./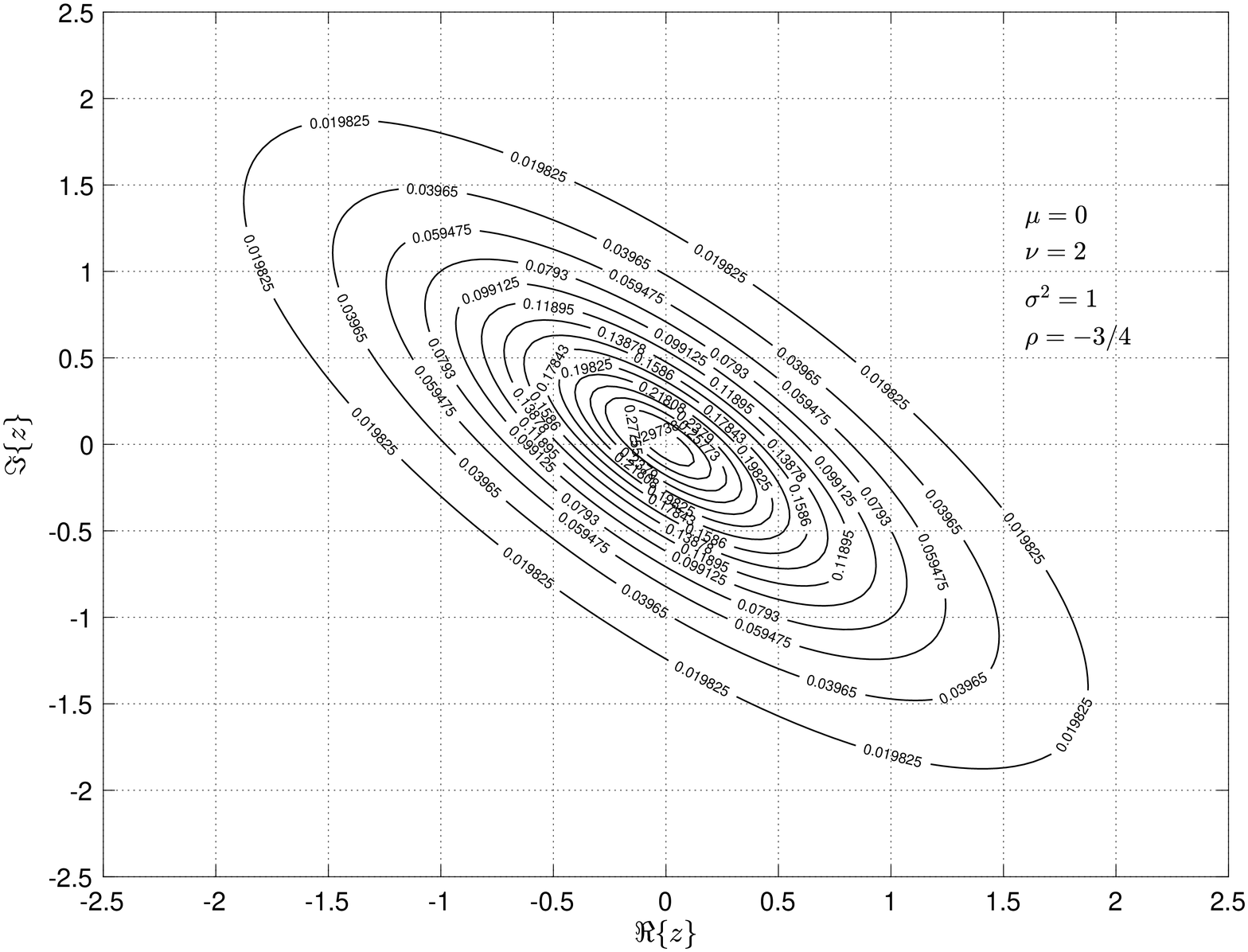}
    \caption{The contour curves for $\rho\!=\!-3/4$.}
    \label{Figure:CESMcLeishPDFB}
\end{subfigure}
\caption{The \ac{PDF} contour curves of $\mathcal{EM}_{\nu}(0,\sigma^2,\rho)$ (i.e., the illustration of \eqref{Eq:CESMcLeishPDF} for $\mu\!=\!0$).}
\label{Figure:CESMcLeishPDF}
\vspace{-2mm} 
\end{figure}

\begin{definition}[McLeish's Bivariate Quantile]\label{Definition:McLeishBivariateQFunction}
The McLeish's bivariate \ac{Q-function} is defined for $x\!\in\!\mathbb{R}$ and $y\!\in\!\mathbb{R}$ by
\begin{equation}\label{Eq:McLeishBivariateQFunction}
Q_{\nu}(x,y,\rho)=\int_{x}^{\infty}\int_{y}^{\infty}
\frac{2}{\pi\Gamma(\nu)}
	\frac{\abs{z_{\ell}}_{\rho}^{\nu-1}}{\sqrt{1-\rho^2}\,\lambda_{0}^{\nu+1}}
	    \BesselK[\nu-1]{\frac{2\abs{z_{\ell}}_{\rho}}{\lambda_{0}}}\,dx_{\ell}\,dy_{\ell},
\end{equation}
where $z_{\ell}\!=\!{x}_{\ell}+\imaginary{y}_{\ell}\in\mathbb{C}$.
\end{definition}
\begin{theorem}\label{Theorem:CESMcLeishCDF}
Under the condition of being \ac{CES}, the \ac{CDF} of $Z\!\sim\!\mathcal{EM}_{\nu_{Z}}(\mu,\sigma^2,\rho)$ is given by
\begin{subequations}\label{Eq:CESMcLeishCDF}
\begin{multline}\label{Eq:CESMcLeishCDFA} 
	\!\!F_{Z}(z)=1
		-Q_{\nu}\Bigl(\sqrt{2}\Bigl\langle{1,\frac{z-\mu}{\sigma}}\Bigr\rangle\Bigr)
		-Q_{\nu}\Bigl(\sqrt{2}\Bigl\langle{\imaginary,\frac{z-\mu}{\sigma}}\Bigr\rangle\Bigr)\\
		+Q_{\nu}\Bigl(
			\sqrt{2}\Bigl\langle{1,\frac{z-\mu}{\sigma}}\Bigr\rangle,
			\sqrt{2}\Bigl\langle{\imaginary,\frac{z-\mu}{\sigma}}\Bigr\rangle,
			\rho\Bigr),
\end{multline}
for the upper right quadrant (i.e., $\RealPart{z}\!\geq\!0$ and $\ImagPart{z}\!\geq\!0$);
\begin{equation}\label{Eq:CESMcLeishCDFB} 
	\!\!F_{Z}(z)=
		Q_{\nu}\Bigl(\sqrt{2}\Bigl\langle{1,\frac{\mu-z}{\sigma}}\Bigr\rangle\Bigr)
		-Q_{\nu}\Bigl(
			\sqrt{2}\Bigl\langle{1,\frac{\mu-z}{\sigma}}\Bigr\rangle,
			\sqrt{2}\Bigl\langle{\imaginary,\frac{z-\mu}{\sigma}}\Bigr\rangle,
			\rho\Bigr),
\end{equation}
for the upper left quadrant (i.e., $\RealPart{z}\!<\!0$ and $\ImagPart{z}\!\geq\!0$);
\begin{equation}\label{Eq:CESMcLeishCDFC} 
	\!\!F_{Z}(z)=
		Q_{\nu}\Bigl(
			\sqrt{2}\Bigl\langle{1,\frac{\mu-z}{\sigma}}\Bigr\rangle,
			\sqrt{2}\Bigl\langle{\imaginary,\frac{\mu-z}{\sigma}}\Bigr\rangle,
			\rho\Bigr),
\end{equation}
for the lower left quadrant (i.e., $\RealPart{z}\!<\!0$ and $\ImagPart{z}\!<\!0$);
\begin{equation}\label{Eq:CESMcLeishCDFD} 
	\!\!F_{Z}(z)=
		Q_{\nu}\Bigl(\sqrt{2}\Bigl\langle{\imaginary,\frac{\mu-z}{\sigma}}\Bigr\rangle\Bigr)
		-Q_{\nu}\Bigl(
			\sqrt{2}\Bigl\langle{1,\frac{z-\mu}{\sigma}}\Bigr\rangle,
			\sqrt{2}\Bigl\langle{\imaginary,\frac{\mu-z}{\sigma}}\Bigr\rangle,
			\rho\Bigr),
\end{equation}
\end{subequations}
for the lower right quadrant (i.e., $\RealPart{z}\!\geq\!0$ and $\ImagPart{z}\!<\!0$).
\end{theorem}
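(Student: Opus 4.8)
The plan is to follow the template of the proof of \theoremref{Theorem:CCSMcLeishCDF}, replacing the uncorrelated bivariate Gaussian building block by its correlated counterpart and carrying the correlation $\rho$ through every step. First I would invoke \theoremref{Theorem:CESMcLeishDefinition} to write $Z\!=\!\sqrt{G}Z_{0}+\mu$ with $Z_{0}\!\sim\!\mathcal{EN}(0,\sigma^2,\rho)$ and ${G}\!\sim\!\mathcal{G}(\nu,1)$, so that conditioned on $G\!=\!g$ the centred pair $(\RealPart{Z-\mu},\ImagPart{Z-\mu})$ is a zero-mean bivariate Gaussian with correlation $\rho$ and per-component variance $g\sigma^2/2$. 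Writing $F_{Z_0}(z\,|\,\sqrt{g}\sigma)$ quadrant by quadrant through pure inclusion--exclusion on the joint event, exactly as in \eqref{Eq:ComplexGaussianCDF} but now retaining $\rho\!\neq\!0$, each quadrant becomes a combination of the two marginal Gaussian $Q$-functions evaluated at $\sqrt{2}\langle{1,(z-\mu)/(\sqrt{g}\sigma)}\rangle$ and $\sqrt{2}\langle{\imaginary,(z-\mu)/(\sqrt{g}\sigma)}\rangle$, together with the joint upper-tail probability $Q(\cdot,\cdot,\rho)$, the bivariate Gaussian $Q$-function of the same two arguments. The $\sqrt{2}$ factor here arises because each marginal of $Z_{0}$ has variance $\sigma^2/2$, independently of $\rho$.

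Second, I would average over $G$ via $F_{Z}(z)\!=\!\int_{0}^{\infty}F_{Z_0}(z-\mu\,|\,\sqrt{g}\sigma)\,f_{G}(g)\,dg$ with $f_{G}$ from \eqref{Eq:ProportionPDF}, which splits every quadrant into single-$Q$ contributions and one bivariate-$Q$ contribution. Each single-$Q$ contribution is precisely the integral $I_{1}(x)\!=\!\int_{0}^{\infty}Q(\sqrt{x^2/g})\,f_{G}(g)\,dg$ already evaluated in the proof of \theoremref{Theorem:CCSMcLeishCDF}; since the marginals of $Z_{0}$ do not depend on $\rho$, this step is verbatim the \ac{CCS} computation and returns the univariate McLeish $Q$-function $Q_{\nu}(\cdot)$ with the marginal arguments $\sqrt{2}\langle{1,(z-\mu)/\sigma}\rangle$ and $\sqrt{2}\langle{\imaginary,(z-\mu)/\sigma}\rangle$.

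Third --- and this is the crux --- the joint contribution is $I_{2}^{\rho}(x,y)\!=\!\int_{0}^{\infty}Q(\sqrt{x^2/g},\sqrt{y^2/g},\rho)\,f_{G}(g)\,dg$, and I would identify it with the McLeish bivariate $Q$-function $Q_{\nu}(x,y,\rho)$. The clean route is the scale-mixture argument: by \eqref{Eq:CESMcLeishDistributionDefinition} the pair $(\RealPart{Z-\mu},\ImagPart{Z-\mu})$ is a Gamma-scaled correlated Gaussian, so $\Pr\{\RealPart{Z-\mu}\!>\!u,\,\ImagPart{Z-\mu}\!>\!v\}$ is exactly the Gamma mixture of the bivariate Gaussian tail, which is the defining double integral of \defref{Definition:McLeishBivariateQFunction} in \eqref{Eq:McLeishBivariateQFunction} after standardization; hence $I_{2}^{\rho}(x,y)\!=\!Q_{\nu}(x,y,\rho)$. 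Alternatively one may substitute a Craig-type single-integral form of $Q(\cdot,\cdot,\rho)$ and integrate $g$ out against $f_{G}$ using \cite[Eq.~(3.471/9)]{BibGradshteynRyzhikBook}, as done for $I_{2}$ in the \ac{CCS} proof, with $\rho$ passing untouched through the $g$-integration. The main obstacle is bookkeeping: unlike the \ac{CCS} case the joint tail no longer factors into a product of two $Q$-functions, so I must track the correct signs of the two arguments and the placement of $\rho$ across all four quadrants so that they align with the upper-tail convention of \eqref{Eq:McLeishBivariateQFunction}.

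Finally I would reassemble the averaged single- and bivariate-$Q$ pieces quadrant by quadrant to recover \eqref{Eq:CESMcLeishCDFA}--\eqref{Eq:CESMcLeishCDFD}, and verify consistency by setting $\rho\!=\!0$: then $Q_{\nu}(x,y,0)$ collapses onto the product-type term $I_{2}$ of the \ac{CCS} proof and \eqref{Eq:CESMcLeishCDF} reduces to \eqref{Eq:CCSMcLeishCDF}, which completes the proof of \theoremref{Theorem:CESMcLeishCDF}.
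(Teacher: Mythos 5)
Your proposal is correct and follows essentially the same route as the paper's proof: decompose $Z\!=\!\sqrt{G}Z_{0}+\mu$ via \theoremref{Theorem:CESMcLeishDefinition}, write the conditional correlated bivariate Gaussian CDF quadrant by quadrant in terms of $Q(\cdot)$ and $Q(\cdot,\cdot,\rho)$, and average over $G$, recognizing the Gamma mixtures of the univariate and bivariate Gaussian Q-functions as $Q_{\nu}(\cdot)$ and $Q_{\nu}(\cdot,\cdot,\rho)$, exactly as the paper does when it rewrites \eqref{Eq:McLeishQFunction} and \eqref{Eq:McLeishBivariateQFunction} as scale mixtures. Your scale-mixture justification of $I_{2}^{\rho}(x,y)\!=\!Q_{\nu}(x,y,\rho)$ is in fact a slightly more explicit argument than the paper's terse appeal to \theoremref{Theorem:CCSMcLeishCDF}, but it is the same idea.
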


\begin{proof}
Note that the \ac{CDF} of $Z_{0}\!\sim\!\mathcal{EN}(0,\sigma^2_{Z},\rho)$ is defined by $F_{Z_0}(z_{\ell}|\sigma_{Z})\!=\!\Pr\{{X_{0}\leq\langle{1,z_{\ell}}\rangle}\,\cap\,{Y_{0}\leq\langle{\imaginary,z_{\ell}}\rangle}\,|\,\sigma_{Z}\}$ conditioned on $\sigma_{Z}$ and expressed for~a~certain~$z\!=\!{x}+\imaginary{y}\in\mathbb{C}$~as
\begin{equation}\label{Eq:ComplexGaussianCCDF}
	F_{Z_{0}}(z|\sigma_{Z})=
		\int_{-\infty}^{x}
			\int_{-\infty}^{y}
				\frac{\exp\bigl(-{{\langle{z_{\ell},z_{\ell}}\rangle}_{\rho}}/{\sigma^2}\bigr)}{\pi\sigma^2\sqrt{1-\rho^2}}\,dx_{\ell}\,dy_{\ell},
\end{equation}
where $z_{\ell}\!=\!{x}_{\ell}+\imaginary{y}_{\ell}\in\mathbb{C}$. Utilizing \cite[Eqs.\!~(2.3-10)\!~and\!~(2.3-11)]{BibProakisBook}
and \cite[Eqs.\!~(4.3)]{BibAlouiniBook} with $\langle{1,z}\rangle\!=\!\RealPart{z}$ and $\langle{\imaginary,z}\rangle\!=\!\ImagPart{z}$, \eqref{Eq:ComplexGaussianCCDF} simplifies for the quadrants of complex plane, that is 
\begin{subequations}\label{Eq:CESGaussianCDF} 
\begin{equation}\label{Eq:CESGaussianCDFA} 
	\!\!{F}_{Z_0}(z|\sigma)=1-
		Q\bigl(\sqrt{2}\bigl\langle{1,{z}/{\sigma}}\bigr\rangle\bigr)-
		Q\bigl(\sqrt{2}\bigl\langle{\imaginary,{z}/{\sigma}}\bigr\rangle\bigr)+
		Q\bigl(\sqrt{2}\bigl\langle{1,{z}/{\sigma}}\bigr\rangle,\sqrt{2}\bigl\langle{\imaginary,{z}/{\sigma}}\bigr\rangle,\rho\bigr),
\end{equation}
for the upper right quadrant (i.e., $\RealPart{z}\!\geq\!0$ and $\ImagPart{z}\!\geq\!0$);
\begin{equation}\label{Eq:CESGaussianCDFB}
	\!\!{F}_{Z_0}(z|\sigma)=
		Q\bigl(\sqrt{2}\bigl\langle{\imaginary,{z}/{\sigma}}\bigr\rangle\bigr)
		-Q\bigl(-\sqrt{2}\bigl\langle{1,{z}/{\sigma}}\bigr\rangle,\sqrt{2}\bigl\langle{\imaginary,{z}/{\sigma}}\bigr\rangle,\rho\bigr),
\end{equation}
for the upper left quadrant (i.e., $\RealPart{z}\!<\!0$ and $\ImagPart{z}\!\geq\!0$);
\begin{equation}\label{Eq:CESGaussianCDFC}
	\!\!{F}_{Z_0}(z|\sigma)=Q\bigl(-\sqrt{2}\bigl\langle{1,{z}/{\sigma}}\bigr\rangle,-\sqrt{2}\bigl\langle{\imaginary,{z}/{\sigma}}\bigr\rangle,\rho\bigr),
\end{equation}
for the lower left quadrant (i.e., $\RealPart{z}\!<\!0$ and $\ImagPart{z}\!<\!0$);
\begin{equation}\label{Eq:CESGaussianCDFD}
	\!\!{F}_{Z_0}(z|\sigma)=
		Q\bigl(\sqrt{2}\bigl\langle{1,{z}/{\sigma}}\bigr\rangle\bigr)
		-Q\bigl(\sqrt{2}\bigl\langle{1,{z}/{\sigma}}\bigr\rangle,-\sqrt{2}\bigl\langle{\imaginary,{z}/{\sigma}}\bigr\rangle,\rho\bigr),
\end{equation}
\end{subequations}
for the lower right quadrant (i.e., $\RealPart{z}\!\geq\!0$ and $\ImagPart{z}\!<\!0$). Accordingly, referring to \eqref{Eq:CESMcLeishDistributionDefinition}, the \ac{CDF} of $Z\!\sim\!\mathcal{CM}_{\nu}(\mu,\sigma^2,\rho)$ is explicitly written as $F_{Z}(z)\!=\!\int_{0}^{\infty}F_{Z_0}(z-\mu|\sqrt{g}\sigma)f_{G}(g)\,dg$. With the aid of \theoremref{Theorem:CCSMcLeishCDF}, we rewrite \eqref{Eq:McLeishQFunction} and \eqref{Eq:McLeishBivariateQFunction} as
\setlength\arraycolsep{1.4pt}
\begin{eqnarray}
	\label{Eq:McLeishQFunctionIntegralWithCorrelation}
	Q_{\nu}(x)&=&\int_{0}^{\infty}Q\bigl(\sqrt{2g}x\bigr)f_{G}(g)\,dg,\\
	\label{Eq:McLeishBivariateQFunctionIntegralWithCorrelation}
	Q_{\nu}(x,y,\rho)&=&\int_{0}^{\infty}Q\bigl(\sqrt{2g}x,\sqrt{2g}y,\rho\bigr)f_{G}(g)\,dg,
\end{eqnarray}
the \ac{CDF} $F_{Z}(z)$ is readily obtained as \eqref{Eq:CESMcLeishCDF}, which completes the proof of \theoremref{Theorem:CESMcLeishCDF}.
\end{proof}

\begin{theorem}\label{Theorem:CESMcLeishMGF}
Under the condition of being \ac{CES}, the \ac{MGF} of $Z\!\sim\!\mathcal{EM}_{\nu}(\mu,\sigma^2,\rho)$ is given by
\begin{equation}\label{Eq:CESMcLeishMGF}
    M_{Z}(s)={e}^{-\langle{s,\mu}\rangle}\Bigl(1-\frac{\lambda^2}{8}(1-\rho^2)\bigl\langle{s,s}\bigr\rangle_{-\rho}\Bigr)^{-\nu},
\end{equation}
where $s\!=\!{s}_{X}+\imaginary{s}_{Y}\in\mathbb{C}$ within the existence region $s\!\in\!\mathbb{C}_{0}$, and the region $\mathbb{C}_{0}$ is given by 
\begin{equation}\label{Eq:CESMcLeishMGFExistenceRegion}
    \mathbb{C}_0=\Bigl\{
        s\,    
        \Bigl|\,
        {\lambda^2} 
        (1-\rho^2)
        \bigl\langle{s,s}\bigr\rangle_{-\rho}
        \leq
        {8}
        \Bigr.
        \Bigr\}.
\end{equation}
\end{theorem}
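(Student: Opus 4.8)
The plan is to follow the pattern of the proof of \theoremref{Theorem:CCSMcLeishMGF}, departing from it only in the Gaussian moment-generating step where the correlation $\rho$ enters. First I would invoke \theoremref{Theorem:CESMcLeishDefinition} to write $Z\!=\!\sqrt{G}Z_{0}+\mu$ with $Z_{0}\!\sim\!\mathcal{EN}(0,\sigma^2,\rho)$ and $G\!\sim\!\mathcal{G}(\nu,1)$. Starting from $M_{Z}(s)\!=\!\Expected{\exp(-\langle{s,Z}\rangle)}$, the deterministic mean factors out as $e^{-\langle{s,\mu}\rangle}$, and conditioning on $G$ yields
\begin{equation}
    M_{Z}(s)=e^{-\langle{s,\mu}\rangle}\int_{0}^{\infty}\Expected{\exp(-\langle{s,\sqrt{g}Z_0}\rangle)}\,f_{G}(g)\,dg,
\end{equation}
in which the inner expectation is the \ac{MGF} of the correlated complex Gaussian $\mathcal{EN}(0,g\sigma^2,\rho)$.

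The crux of the argument is to evaluate this inner \ac{MGF} in a form compatible with the Mahalanobis inner product. Writing $s\!=\!s_{X}+\imaginary{s}_{Y}$ and substituting the representation $\RealPart{Z_0}\!=\!X_0$, $\ImagPart{Z_0}\!=\!\rho{X_0}+\sqrt{1-\rho^2}Y_0$ from \theoremref{Theorem:CESMcLeishDefinition}, with $X_0,Y_0\!\sim\!\mathcal{N}(0,\sigma^2/2)$ independent, I would expand the inner product as $\langle{s,\sqrt{g}Z_0}\rangle\!=\!\sqrt{g}\bigl((s_X+\rho{s_Y})X_0+s_Y\sqrt{1-\rho^2}Y_0\bigr)$ and evaluate the two resulting independent real-Gaussian expectations. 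Summing the two exponents and simplifying, the combined exponent reduces (up to the factor $g\sigma^2/4$) to the quadratic form $s_X^2+2\rho{s_X}s_Y+s_Y^2$, which by \eqref{Eq:MahalanobisInnerProduct} evaluated at correlation $-\rho$ equals exactly $(1-\rho^2)\langle{s,s}\rangle_{-\rho}$. Hence the inner \ac{MGF} collapses to $\exp\bigl(g\tfrac{\sigma^2}{4}(1-\rho^2)\langle{s,s}\rangle_{-\rho}\bigr)$.

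Substituting this together with the Gamma density \eqref{Eq:ProportionPDF} and using $\sigma^2/(4\nu)\!=\!\lambda^2/8$ reduces the problem to
\begin{equation}
    M_{Z}(s)=\frac{e^{-\langle{s,\mu}\rangle}\nu^{\nu}}{\Gamma(\nu)}\int_{0}^{\infty}g^{\nu-1}\exp\Bigl(-g\nu\Bigl(1-\tfrac{\lambda^2}{8}(1-\rho^2)\langle{s,s}\rangle_{-\rho}\Bigr)\Bigr)\,dg.
\end{equation}
Applying the standard Gamma integral $\int_{0}^{\infty}x^{a-1}e^{-bx}dx\!=\!b^{-a}\Gamma(a)$, valid for $\RealPart{b}\!>\!0$, cancels the $\nu^{\nu}$ and $\Gamma(\nu)$ factors and leaves $(1-\tfrac{\lambda^2}{8}(1-\rho^2)\langle{s,s}\rangle_{-\rho})^{-\nu}$, establishing \eqref{Eq:CESMcLeishMGF}. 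The convergence requirement $\RealPart{b}\!>\!0$ is precisely the existence region \eqref{Eq:CESMcLeishMGFExistenceRegion}. I expect the main obstacle to be the algebraic identification of the quadratic form with the Mahalanobis inner product at correlation $-\rho$; once that is settled, the remaining Gamma-integral computation is identical to that in \theoremref{Theorem:CCSMcLeishMGF}, and setting $\rho\!=\!0$ recovers \eqref{Eq:CCSMcLeishMGF} as a consistency check.
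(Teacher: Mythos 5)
Your proposal is correct and follows essentially the same route as the paper: decompose $Z$ via \theoremref{Theorem:CESMcLeishDefinition}, write the conditional \ac{MGF} given $G$ (which is the correlated Gaussian \ac{MGF} with exponent $\frac{g}{4}\sigma^2(1-\rho^2)\langle{s,s}\rangle_{-\rho}$), and then average over $G\!\sim\!\mathcal{G}(\nu,1)$ with the Gamma integral exactly as in \theoremref{Theorem:CCSMcLeishMGF}. The only difference is that you explicitly carry out the algebraic identification $(s_X+\rho{s_Y})^2+(1-\rho^2)s_Y^2=(1-\rho^2)\langle{s,s}\rangle_{-\rho}$, a step the paper asserts without derivation, and your identification is correct.
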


\begin{proof}
Note that, referring to \theoremref{Theorem:CESMcLeishDefinition}, the \ac{MGF} of $Z\!\sim\!\mathcal{EM}_{\nu}(\mu,\sigma^2,\rho)$ conditioned on $G$ is written as 
\begin{equation}
    M_{Z|G}(s|g)=\exp\Bigl(
        -\langle{s,\mu}\rangle
        +\frac{g}{4}\sigma^2(1-\rho^2)\langle{s,s}\rangle_{-\rho}\Bigr).
\end{equation}
Then, performing the almost same steps followed in the proof of \theoremref{Theorem:CCSMcLeishMGF}, the \ac{MGF} of $Z\!\sim\!\mathcal{EM}_{\nu}(\mu,\sigma^2,\rho)$ is obtained as \eqref{Eq:CESMcLeishMGF}, which completes the proof of \theoremref{Theorem:CESMcLeishMGF}.
\end{proof}

\subsection{Multivariate McLeish Distribution}
\label{Section:StatisticalBackground:MultivariateMcLeishDistribution}
In this subsection, we deal with random vectors instead~of just individual random distributions. We define~multivariate McLeish distribution and derive its statistical characterization, where we begin with~a~vector~of~independent McLeish distributions and work ourselves up to the general case where they are no longer~mutually~independent. Let us start with a vector that consists of \emph{uncorrelated and identically distributed} random distributions of the same family, that is
\begin{equation}\label{Eq:StandardMultivariateMcLeishRandomVector}
    \defrmat{S}=[S_1,S_2,\ldots,S_L]^T,
\end{equation}
where $S_\ell$ denotes a random distribution with zero mean and unit variance, i.e., $\mathbb{E}[S_\ell]\!=\!{0}$ and $\mathbb{V}[S_\ell]\!=\!{1}$,  ${1}\!\leq\!\ell\!\leq\!{L}$ such that any pair of ${S}_k$ and ${S}_\ell$, $k\!\neq\!\ell$ must be uncorrelated (i.e., $\mathbb{E}[S_{k}S_{\ell}]\!=\!{0}$). Hence, the mean vector $\defvec{\mu}\!=\!\mathbb{E}[\defrmat{S}]$ is given by 
\begin{equation}
	\defvec{\mu}=[0,0,\ldots,0]^{T},
\end{equation}
and the covariance matrix $\defmat{\Sigma}\!=\!\mathbb{E}[\defrmat{S}\defrmat{S}^T]$ is given by
\begin{equation}
	\defmat{\Sigma}=
     \begin{bmatrix}
    	1      & 0      & \dots  & 0      \\
    	0      & 1      & \dots  & 0      \\
    	\vdots & \vdots & \ddots & \vdots \\
    	0      & 0      & \dots  & 1
	\end{bmatrix}.
\end{equation}
By definition of standard multivariate distribution \cite{BibAndersenBook1995,BibTongBook1989,BibWangKotzNgBook1989,BibBilodeauBrennerBook1999,BibKotzBalakrishnanBook2004}, $\defrmat{S}$ follows a standard multivariate distribution with zero mean vector and unit covariance matrix iff $\forall\defvec{a}\in\mathbb{R}^{L}$, $\defvec{a}^{T}\defrmat{S}$ follows a random distribution of the same family with zero mean and $\defvec{a}^T\defvec{a}$ variance. Accordingly, in case of that all marginal distributions $S_\ell\!\sim\!\mathcal{M}_{\nu_\ell}(0,1)$, ${1}\!\leq\!\ell\!\leq\!{L}$, if $\defrmat{S}$ follows a standard multivariate McLeish distribution with zero mean vector and unit covariance matrix, $\defvec{a}^{T}\defrmat{S}$ should have to follow a McLeish distribution with zero mean and $\defvec{a}^T\defvec{a}$ variance, which surely imposes that there must be a condition among $\nu_\ell$, ${1}\!\leq\!\ell\!\leq\!{L}$. By the uniqueness property of \ac{MGF} \cite{BibCurtissAMS1942}, we know that the \ac{PDF} is uniquely determined by the \ac{MGF}, and therefore the \ac{MGF} of $\defvec{a}^{T}\defrmat{S}$ has to be in the same form of the \ac{MGF} of $S_\ell\!\sim\!\mathcal{M}_{\nu_\ell}(0,1)$ for all ${1}\!\leq\!\ell\!\leq\!{L}$. With the aid of \theoremref{Theorem:McLeishMGF}, the \ac{MGF} of  $\defvec{a}^{T}\defrmat{S}$, i.e., $M_{\defvec{a}^{T}\defrmat{S}}(s)\!=\!\mathbb{E}[\exp(-s\,\defvec{a}^{T}\defrmat{S})]$ can be written as the product of the \acp{MGF} of all marginal distributions $S_\ell\!\sim\!\mathcal{M}_{\nu_\ell}(0,1)$ for all ${1}\!\leq\!\ell\!\leq\!{L}$, that is 
$M_{\defvec{a}^{T}\defrmat{S}}(s)\!=\!\prod_{\ell=1}^{L}\bigl(1-{\frac{1}{4}}\lambda^2_{\ell}s^2\bigr)^{-\nu_{\ell}}$
with $\lambda_{\ell}\!=\!\sqrt{2{a}^2_{\ell}/\nu_{\ell}}$. 
When the all component deviation factors are exactly the same (i.e., 
$\lambda_{\ell}\!=\!\lambda_{\Sigma}$, ${1}\!\leq\!\ell\!\leq\!{L}$), we can rewrite it in the form of \eqref{Eq:McLeishMGF}, that is 
$M_{\defvec{a}^{T}\defrmat{S}}(s)\!=\!\bigl(1-{\frac{1}{4}}\lambda^2_{\Sigma}s^2\bigr)^{-\nu_{\Sigma}}$,
where $\nu_{\Sigma}\!=\!\sum_{\ell=1}^{L}\nu_{\ell}$ and $\sigma^2_{\Sigma}\!=\!\defvec{a}^T\defvec{a}$, and therefore  $\lambda_{\Sigma}\!=\!\sqrt{2\sigma^2_{\Sigma}/\nu_{\Sigma}}$. Eventually, we reach $\nu_{\Sigma}\!=\!L\nu_{\ell}$, ${1}\!\leq\!\ell\!\leq\!{L}$, where each equality can be satisfied when and only when $\nu_\ell\!=\!\nu_k\!=\!\nu$ for any $\ell\!\neq\!k$. Consequently, $\defrmat{S}$ follows a standard multivariate McLeish distribution iff ${S}_\ell\!\sim\!\mathcal{M}_{\nu}(0,1)$ for all $1\!\leq\!{\ell}\!\leq\!{L}$.
There hence, each marginal distribution is decomposed as ${S}_\ell\!=\!\sqrt{G_{\ell}}N_{\ell}$ with $G_{\ell}\!\sim\!\mathcal{G}(\nu,1)$ and ${N}_{\ell}\!\sim\!\mathcal{N}(0,1)$ for all $1\!\leq\!{\ell}\!\leq\!{L}$. Owing to preserving the being \ac{CS}, any given pair of ${S}_k\!\sim\!\mathcal{M}_{\nu}(0,1)$ and
${S}_\ell\!\sim\!\mathcal{M}_{\nu}(0,1)$, $k\!\neq\!\ell$, must be uncorrelated, 
and what is more accordingly, $\Phi_{k,\ell}\!=\!\arctan({S}_k,{S}_\ell)$ 
has to be uniformly distributed over $[-\pi,\pi)$ and independent of
both ${S}_k$ and ${S}_\ell$. Referring to the proof of \theoremref{Theorem:CCSMcLeishDefinition}, we notice that $G_{\ell}$, ${1}\!\leq\!\ell\!\leq\!{L}$, are the same distribution (i.e., 
the correlation between any pair of ${G}_k\!\sim\!\mathcal{G}(\nu,1)$ and
${G}_\ell\!\sim\!\mathcal{G}(\nu,1)$, $k\!\neq\!\ell$ is surely $1$ without loss of generality), and thus $\defrmat{S}$ certainly follows a \ac{CS} standard multivariate distribution, denoted by $\defrmat{S}\!\sim\!\mathcal{M}^L_{\nu}(\defvec{0},\defmat{I})$  and decomposed in the following theorem. 

\begin{theorem}\label{Theorem:StandardMultivariateMcLeishDefinition}
A standard multivariate McLeish distribution, denoted by $\defrmat{S}\!\sim\!\mathcal{M}^L_{\nu}(\defvec{0},\defmat{I})$, is decomposed as
\begin{equation}\label{Eq:StandardMultivariateMcLeishDefinition}
    \defrmat{S}=\sqrt{G}\defrmat{N},
\end{equation}
where $\defrmat{N}\!\sim\!\mathcal{N}^L(\defvec{0},\defmat{I})$. 
\end{theorem}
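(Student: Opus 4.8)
The plan is to verify that the claimed representation $\defrmat{S}=\sqrt{G}\,\defrmat{N}$ reproduces exactly the defining features of the standard multivariate McLeish law accumulated in the discussion preceding the theorem: zero mean vector, unit covariance matrix, circular symmetry, and --- most importantly --- the characterizing property that every linear form $\defvec{a}^{T}\defrmat{S}$ is again a univariate $\mathcal{M}_{\nu}$ distribution with variance $\defvec{a}^{T}\defvec{a}$. Because the moment-generating function determines the law uniquely (as invoked in the preamble through the uniqueness property of the \ac{MGF}), I would work throughout with the \ac{MGF} together with the conditioning-on-$G$ device already exploited in the proof of \theoremref{Theorem:CCSMcLeishDefinition}.

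First I would record the two elementary moment facts about the mixing variable $G\!\sim\!\mathcal{G}(\nu,1)$ whose \ac{PDF} is given in \eqref{Eq:ProportionPDF}, namely $\Expected{G}\!=\!1$ and $\Expected{\sqrt{G}}\!=\!\Gamma(\nu+\tfrac{1}{2})/(\Gamma(\nu)\sqrt{\nu})\!<\!\infty$. With $\defrmat{N}\!\sim\!\mathcal{N}^{L}(\defvec{0},\defmat{I})$ taken independent of $G$, these immediately give $\Expected{\defrmat{S}}\!=\!\Expected{\sqrt{G}}\,\Expected{\defrmat{N}}\!=\!\defvec{0}$ and $\Expected{\defrmat{S}\defrmat{S}^{T}}\!=\!\Expected{G}\,\Expected{\defrmat{N}\defrmat{N}^{T}}\!=\!\defmat{I}$, so $\defrmat{S}$ carries the prescribed mean vector and covariance matrix. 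Next, for an arbitrary $\defvec{a}\!\in\!\mathbb{R}^{L}$, conditioning on $G\!=\!g$ makes $\defvec{a}^{T}\defrmat{S}\!=\!\sqrt{g}\,(\defvec{a}^{T}\defrmat{N})$ a centered Gaussian of variance $g\,\defvec{a}^{T}\defvec{a}$; averaging its conditional \ac{MGF} $\exp(\tfrac{1}{2}g\,\defvec{a}^{T}\defvec{a}\,s^{2})$ against \eqref{Eq:ProportionPDF} and using $\int_{0}^{\infty}g^{\nu-1}e^{-bg}dg\!=\!b^{-\nu}\Gamma(\nu)$ yields $(1-\tfrac{1}{4}\lambda_{\Sigma}^{2}s^{2})^{-\nu}$ with $\lambda_{\Sigma}^{2}\!=\!2\,\defvec{a}^{T}\defvec{a}/\nu$. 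By \theoremref{Theorem:McLeishMGF} this is precisely the \ac{MGF} of $\mathcal{M}_{\nu}(0,\defvec{a}^{T}\defvec{a})$, so every linear form is univariate McLeish of the required variance, which is exactly the standard-multivariate defining condition.

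The step I expect to be the genuine obstacle is the \emph{necessity} direction --- showing that a \emph{single} shared $G$ is forced rather than each marginal carrying its own $G_{\ell}$. The preamble already reduces the family to a common normality $\nu_{\ell}\!=\!\nu$ and decomposes each marginal as $S_{\ell}\!=\!\sqrt{G_{\ell}}\,N_{\ell}$, but if the $G_{\ell}$ were mutually independent then $\defvec{a}^{T}\defrmat{S}$ would be a sum of independent McLeish variables, which by \theoremref{Theorem:McLeishSumMGF} is generically a hyper-McLeish (mixture) law rather than a single $\mathcal{M}_{\nu}$, contradicting the characterizing property. I would close this gap exactly as the discussion indicates: the circular symmetry of each coordinate pair, together with the argument in the proof of \theoremref{Theorem:CCSMcLeishDefinition} that the mixing ratio in \eqref{Eq:PowerFluctuationDistribution} is common to both components, forces the pairwise correlation of $G_{k}$ and $G_{\ell}$ to equal unity, hence $G_{1}\!=\!\cdots\!=\!G_{L}\!=\!G$ almost surely. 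Substituting this common $G$ into $S_{\ell}\!=\!\sqrt{G_{\ell}}\,N_{\ell}$ gives $\defrmat{S}\!=\!\sqrt{G}\,\defrmat{N}$, and since uncorrelated jointly-Gaussian coordinates are independent, $\defrmat{N}\!\sim\!\mathcal{N}^{L}(\defvec{0},\defmat{I})$, completing the proof.
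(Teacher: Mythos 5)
Your proposal is correct and takes essentially the same route as the paper: the paper's own proof of \theoremref{Theorem:StandardMultivariateMcLeishDefinition} is a single sentence deferring to the discussion immediately preceding it, and that discussion is precisely the necessity argument you reconstruct (MGF uniqueness forcing a common normality $\nu$, the marginal decompositions $S_\ell=\sqrt{G_\ell}\,N_\ell$, and the circular-symmetry argument from the proof of \theoremref{Theorem:CCSMcLeishDefinition} forcing the $G_\ell$ to coincide). Your two additions---the explicit sufficiency check obtained by averaging the conditional Gaussian MGF against the Gamma density of $G$, and the observation that mutually independent $G_\ell$ would make $\defvec{a}^{T}\defrmat{S}$ a hyper-McLeish mixture by \theoremref{Theorem:McLeishSumMGF}---are sound and make the argument tighter than the paper's one-line proof.
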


\begin{proof}
The proof is obvious from the pivotal and tractable details mentioned before \theoremref{Theorem:StandardMultivariateMcLeishDefinition}.
\end{proof}

With \theoremref{Theorem:StandardMultivariateMcLeishDefinition}, we conclude that since~any non-empty subset of multivariate Gaussian distribution follows a multivariate Gaussian distribution \cite{BibAndersenBook1995,BibTongBook1989,BibWangKotzNgBook1989,BibBilodeauBrennerBook1999,BibKotzBalakrishnanBook2004}, the random vector $\defrmat{W}\!=\![{S}_{k_1},{S}_{k_2},\ldots,\allowbreak{S}_{k_K}]^T$ constructed from $\defrmat{S}$ for a subset $\{k_1,k_2,\ldots,k_K\}$ of $\{1,2,\ldots,L\}$ with cardinal ${K}\!\leq\!{L}$ follows a standard multivariate CS McLeish distri\-bu\-tion. Eventually, the \ac{PDF} of standard multi\-variate \ac{CS} McLeish distribution denoted by $\defrmat{S}\!\sim\!\mathcal{M}^L_{\nu}(\defvec{0},\defmat{I})$ is given in the following theorem.

\begin{theorem}\label{Theorem:StandardMultivariateMcLeishPDF}
The \ac{PDF} of $\defrmat{S}\!\sim\!\mathcal{M}^{L}_{\nu}(\defvec{0},\defmat{I})$ is given by 
\begin{equation}\label{Eq:StandardMultivariateMcLeishPDF}
f_{\defrmat{S}}(\defvec{x})=
	\frac{2}{\sqrt{\pi^L}}
	\frac{{\lVert{\defvec{x}}\rVert}^{\nu-{L}/{2}}}
		{\Gamma(\nu)\lambda_{0}^{\nu+{L}/{2}}}
					 {K}_{\nu-{L}/{2}}
					 \Bigl(
							\frac{2}{\lambda_{0}}
							{\bigl\lVert{\defvec{x}}\bigr\rVert}
					 \Bigr),
\end{equation}
for a certain $\defvec{x}\!=\![x_1,x_2,\ldots,x_L]^T\in\mathbb{R}^L$. 
\end{theorem}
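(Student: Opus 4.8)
The plan is to exploit the Gaussian--mixture decomposition already secured in \theoremref{Theorem:StandardMultivariateMcLeishDefinition}, namely $\defrmat{S}\!=\!\sqrt{G}\defrmat{N}$ with $\defrmat{N}\!\sim\!\mathcal{N}^{L}(\defvec{0},\defmat{I})$ and $G\!\sim\!\mathcal{G}(\nu,1)$, and to recover the \ac{PDF} of $\defrmat{S}$ by averaging the conditional Gaussian density over the Gamma mixing variable $G$. First I would write down the conditional law: given $G\!=\!g$, the vector $\sqrt{g}\defrmat{N}$ is zero-mean Gaussian with covariance $g\defmat{I}$, so that $f_{\defrmat{S}|G}(\defvec{x}|g)\!=\!(2\pi{g})^{-L/2}\exp(-\norm{\defvec{x}}^2/(2g))$ for $g\!\in\!\mathbb{R}_{+}$. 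The marginal \ac{PDF} is then the total-probability integral $f_{\defrmat{S}}(\defvec{x})\!=\!\int_{0}^{\infty}f_{\defrmat{S}|G}(\defvec{x}|g)\,f_{G}(g)\,dg$, into which I substitute the Gamma density \eqref{Eq:ProportionPDF}.

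Collecting the $g$-dependent factors, the computation reduces to a single scalar integral of the form
\begin{equation}
\int_{0}^{\infty}g^{(\nu-L/2)-1}\exp\Bigl(-\frac{\norm{\defvec{x}}^2/2}{g}-\nu{g}\Bigr)\,dg,
\end{equation}
which is exactly the Bessel-type integral \cite[Eq.~(3.471/9)]{BibGradshteynRyzhikBook}, i.e. $\int_{0}^{\infty}x^{p-1}\exp(-a/x-bx)\,dx\!=\!2(a/b)^{p/2}\BesselK[p]{2\sqrt{ab}}$ with $p\!=\!\nu-L/2$, $a\!=\!\norm{\defvec{x}}^2/2$ and $b\!=\!\nu$. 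Applying it produces the modified Bessel function $\BesselK[\nu-L/2]{\cdot}$ with argument $2\sqrt{ab}\!=\!2\norm{\defvec{x}}\sqrt{\nu/2}$, which collapses to $2\norm{\defvec{x}}/\lambda_{0}$ on recalling that $\lambda_{0}\!=\!\sqrt{2/\nu}$ for the standard (unit-variance) case.

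What remains is purely bookkeeping, and this is where I expect the only real friction: gathering the constant prefactor $\nu^{\nu}/((2\pi)^{L/2}\Gamma(\nu))$ carried by the two densities together with the factor $2(a/b)^{p/2}\!=\!2(\norm{\defvec{x}}^2/(2\nu))^{(\nu-L/2)/2}$ from the integral, and then re-expressing every occurrence of $\nu$ through $\lambda_{0}$. Writing $\nu\!=\!2/\lambda_{0}^{2}$ turns $\nu^{\nu}$ into $2^{\nu}\lambda_{0}^{-2\nu}$ and $(a/b)^{p/2}$ into $(\lambda_{0}\norm{\defvec{x}}/2)^{\nu-L/2}$; after the powers of $2$ cancel, the prefactor consolidates to $2/(\sqrt{\pi^{L}}\,\Gamma(\nu)\,\lambda_{0}^{\nu+L/2})$ and the surviving power of the norm is $\norm{\defvec{x}}^{\nu-L/2}$, yielding precisely \eqref{Eq:StandardMultivariateMcLeishPDF}. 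A useful consistency check is to set $L\!=\!1$, whereupon \eqref{Eq:StandardMultivariateMcLeishPDF} must reduce to the univariate standard density \eqref{Eq:StandardMcLeishPDF}, and to let $\nu\!\rightarrow\!\infty$, recovering the standard multivariate Gaussian; both follow without extra work once the algebra above is in place.
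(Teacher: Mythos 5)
Your proposal is correct and follows essentially the same route as the paper's own proof: conditioning on the Gamma mixing variable $G$ from the decomposition $\defrmat{S}=\sqrt{G}\defrmat{N}$, averaging the conditional Gaussian density against \eqref{Eq:ProportionPDF}, and evaluating the resulting integral with \cite[Eq.~(3.471/9)]{BibGradshteynRyzhikBook}. The only difference is that you carry out the constant bookkeeping in $\lambda_{0}$ explicitly, which the paper leaves implicit, and your algebra checks out.
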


\begin{proof}
Referring to \eqref{Eq:StandardMultivariateMcLeishDefinition}, the \ac{PDF} of 
$\defrmat{S}$ conditioned on $G$, i.e., $f_{\defrmat{S}|G}(\defvec{x}|g)$ can be readily written as\cite[Eq. (2.3-74)]{BibProakisBook}
\begin{equation}\label{Eq:StandardMultivariateMcLeishConditionalPDF}
\!\!f_{\defrmat{S}|G}(\defvec{x}|g)=
	\frac{1}{(2\pi)^{{L}/{2}}g^{{L}/{2}}}
		\exp\Bigl(-\frac{\lVert\defvec{x}\rVert^2}{2g}\Bigr),
\end{equation} 
for $g\!\in\!\mathbb{R}_{+}$. In accordance, the joint \ac{PDF} $f_{\defrmat{S}}(\defvec{x})$ can be readily expressed as
$f_{\defrmat{S}}(\defvec{x})\!=\!\int_{0}^{\infty}f_{\defrmat{S}|G}(\defvec{x}|g)\,f_{G}(g)dg$, that is 
\begin{equation}\label{Eq:StandardMultivariateMcLeishPDFIntegral}
f_{\defrmat{S}}(\defvec{x})= \frac{1}{(2\pi)^{{L}/{2}}}\int_{0}^{\infty}\!
\frac{1}{g^{{L}/{2}}}\exp\Bigl(-\frac{\lVert\defvec{x}\rVert^2}{2g}\Bigr)
\,f_{G}(g)dg,
\end{equation}
where $f_{G}(g)$ denotes the \ac{PDF} of $G\!\sim\!\mathcal{G}(\nu,1)$ ( i.e., given in \eqref{Eq:ProportionPDF}). Subsequently, using
\cite[Eq.\!~(3.471/9)]{BibGradshteynRyzhikBook}, \eqref{Eq:StandardMultivariateMcLeishPDFIntegral} simplifies to \eqref{Eq:StandardMultivariateMcLeishPDF}, which proves \theoremref{Theorem:StandardMultivariateMcLeishPDF}.
\end{proof}

Note that $\defrmat{S}\!\sim\!\mathcal{M}^{L}_{\nu}(\defvec{0},\defmat{I})$ is termed as standard multivariate McLeish distribution which is a collection of identical standard McLeish distributions. As observed in \theoremref{Theorem:StandardMultivariateMcLeishPDF}, the \ac{PDF} of $\defrmat{S}\!\sim\!\mathcal{M}^{L}_{\nu}(\defvec{0},\defmat{I})$ is given by $f_{\defrmat{S}}(\defvec{x})$, and it does only depend on the squared Euclidean distance $\lVert{\defvec{x}}\rVert^2\!=\!\defvec{x}^T\defvec{x}$ of $\defvec{x}$ from the origin.
That is, there exists a circularly symmetry among all ${S}_\ell\!\sim\!\mathcal{M}_{\nu}(0,1)$, $1\!\leq\!{\ell}\!\leq\!{L}$.
However, we cannot partition \eqref{Eq:StandardMultivariateMcLeishPDF} into the product of the \acp{PDF} of marginal distributions even in spite of that they are uncorrelated. However, it simplifies to \eqref{Eq:McLeishPDF} for ${L}\!=\!{1}$ as expected. Furthermore, since an orthogonal transformation $\defmat{O}$ (i.e., $\defmat{O}^T\defmat{O}\!=\!\defmat{O}\defmat{O}^T\!=\!\defmat{I})$ 
preserves the norm of any vector (i.e., $\lVert{\defmat{O}\defvec{x}}\rVert\!=\!\lVert{\defvec{x}}\rVert$), we can immediately conclude $\defmat{O}\defrmat{S}\!\sim\!\mathcal{M}^{L}_{\nu}(\defvec{0},\defmat{I})$, which remarks that 
$\defrmat{S}\!\sim\!\mathcal{M}^{L}_{\nu}(\defvec{0},\defmat{I})$ has the same distribution in any orthonormal basis. Geometrically, it is invariant to rotations and reflections and hence does not prefer any specific direction.

\begin{definition}[McLeish's Multivariate Quantile and Complementary Quantile]
\label{Definition:McLeishMultivariateQAndComplementaryQFunction}
For a fixed $\defvec{x}\in\mathbb{R}^L$ in higher dimensional space, the McLeish's multivariate \ac{Q-function} is defined by 
\ifCLASSOPTIONtwocolumn
\begin{multline}\label{Eq:McLeishMultivariateQFunction}
Q^{L}_{\nu}(\defvec{x})=\int^{\infty}_{x_1}
    \int^{\infty}_{x_2}
    \cdots
    \int^{\infty}_{x_L}
	\frac{2}{\sqrt{\pi^L}}
	\frac{{\lVert{\defvec{u}}\rVert}^{\nu-{L}/{2}}}
		{\Gamma(\nu)\lambda_{0}^{\nu+{L}/{2}}}\\\times
			{K}_{\nu-{L}/{2}}
			 \Bigl(
					\frac{2}{\lambda_{0}}{\bigl\lVert{\defvec{u}}\bigr\rVert}
				 \Bigr){du_1}{du_2}\ldots{du_L},
\end{multline}
\else
\begin{equation}\label{Eq:McLeishMultivariateQFunction}
Q^{L}_{\nu}(\defvec{x})=\int^{\infty}_{x_1}
    \int^{\infty}_{x_2}
    \cdots
    \int^{\infty}_{x_L}
	\frac{2}{\sqrt{\pi^L}}
	\frac{{\lVert{\defvec{u}}\rVert}^{\nu-{L}/{2}}}
		{\Gamma(\nu)\lambda_{0}^{\nu+{L}/{2}}}
			{K}_{\nu-{L}/{2}}
			 \Bigl(
					\frac{2}{\lambda_{0}}{\bigl\lVert{\defvec{u}}\bigr\rVert}
				 \Bigr){du_1}{du_2}\ldots{du_L},
\end{equation}
\fi
and the corresponding complementary \ac{Q-function} by 
\ifCLASSOPTIONtwocolumn
\begin{multline}\label{Eq:McLeishMultivariateComplementaryQFunction}
\widehat{Q}^{L}_{\nu}(\defvec{x})=\int_{-\infty}^{x_1}
    \int_{-\infty}^{x_2}
    \cdots
    \int_{-\infty}^{x_L}
        \frac{2}{\sqrt{\pi^L}}
	    \frac{{\lVert{\defvec{u}}\rVert}^{\nu-{L}/{2}}}
		    {\Gamma(\nu)\lambda_{0}^{\nu+{L}/{2}}}\\\times
			 {K}_{\nu-{L}/{2}}
				 \Bigl(
					\frac{2}{\lambda_{0}}{\bigl\lVert{\defvec{u}}\bigr\rVert}
				 \Bigr){du_1}{du_2}\ldots{du_L}.
\end{multline}
\else
\begin{equation}\label{Eq:McLeishMultivariateComplementaryQFunction}
\widehat{Q}^{L}_{\nu}(\defvec{x})=\int_{-\infty}^{x_1}
    \int_{-\infty}^{x_2}
    \cdots
    \int_{-\infty}^{x_L}
        \frac{2}{\sqrt{\pi^L}}
	    \frac{{\lVert{\defvec{u}}\rVert}^{\nu-{L}/{2}}}
		    {\Gamma(\nu)\lambda_{0}^{\nu+{L}/{2}}}\\\times
			 {K}_{\nu-{L}/{2}}
				 \Bigl(
					\frac{2}{\lambda_{0}}{\bigl\lVert{\defvec{u}}\bigr\rVert}
				 \Bigr){du_1}{du_2}\ldots{du_L}.
\end{equation}
\fi
\end{definition}

The \ac{CDF} of $\defrmat{S}\!\sim\!\mathcal{M}^{L}_{\nu}(\defvec{0},\defmat{I})$ is completely descriptive of
the probability of that $\defrmat{S}$ are less than or equal to $\defvec{x}$, and defined by
$F_{\defrmat{S}}(\defvec{x})\!=\!\Pr\{\defrvec{S}\!\leq\!\defvec{x}\}\!=\!\allowbreak\Pr\{{S}_{1}\!\leq\!x_1,\allowbreak{S}_{2}\!\leq\!x_2,\ldots,{S}_{L}\!\leq\!x_L\}$ and obtained in the following.
It is worth noting the properties of the \ac{CDF} $F_{\defrmat{S}}(\defvec{x})$;
${0}\!\leq\!{F_{\defrmat{S}}(\defvec{x})}\!\leq\!{1}$, ${F_{\defrmat{S}}(-\defvec{\infty})}\!=\!{0}$, and
${F_{\defrmat{S}}(\defvec{\infty})}\!=\!{1}$.
Furthermore, $F_{\defrmat{S}}(\defvec{x})$ is a monotonically increasing function of $\defvec{x}$, that is
$F_{\defrmat{S}}(\defvec{x})\!\leq\!F_{\defrmat{S}}(\defvec{x}+\Delta)$ for $\Delta\in\mathbb{R}_{+}$.

\begin{theorem}\label{Theorem:StandardMultivariateMcLeishCDF}
The \ac{CDF} of $\!\defrmat{S}\!\sim\!\mathcal{M}_{\nu}^L\!\bigl(\defvec{0},\defmat{I}\bigr)$ is given by
\begin{equation}\label{Eq:StandardMultivariateMcLeishCDF}
F_{\defrmat{S}}(\defvec{x})=\widehat{Q}^{L}_{\nu}(\defvec{x}),
\end{equation}
defined over $\boldsymbol{x}\!\in\!\mathbb{R}^L$.
\end{theorem}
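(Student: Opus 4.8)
The plan is to prove the identity directly from the definition of the joint \ac{CDF}, relying on the observation that the complementary multivariate \ac{Q-function} introduced in \defref{Definition:McLeishMultivariateQAndComplementaryQFunction} was deliberately constructed so that its integrand is exactly the joint \ac{PDF} of \theoremref{Theorem:StandardMultivariateMcLeishPDF}. The argument therefore amounts to recognizing the \ac{CDF} integral as the one defining $\widehat{Q}^{L}_{\nu}(\defvec{x})$, so no new closed-form evaluation is needed.

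First, I would start from $F_{\defrmat{S}}(\defvec{x})\!=\!\Pr\{S_1\!\leq\!x_1,\ldots,S_L\!\leq\!x_L\}$ and express this probability as the $L$-fold integral of the joint density over the lower orthant $\{\defvec{u}\!\in\!\mathbb{R}^L : u_k\!\leq\!x_k,\;1\!\leq\!k\!\leq\!L\}$, namely
\begin{equation*}
F_{\defrmat{S}}(\defvec{x})=\int_{-\infty}^{x_1}\int_{-\infty}^{x_2}\cdots\int_{-\infty}^{x_L}f_{\defrmat{S}}(\defvec{u})\,du_1\,du_2\cdots du_L,
\end{equation*}
which is precisely $\mathbb{E}\bigl[\HeavisideTheta{\defvec{x}-\defrmat{S}}\bigr]$ evaluated through the joint density. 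Next, I would substitute the closed-form expression \eqref{Eq:StandardMultivariateMcLeishPDF} for $f_{\defrmat{S}}(\defvec{u})$ into this integral; the resulting integrand coincides verbatim (after renaming the dummy variable to $\defvec{u}$) with the one appearing in \eqref{Eq:McLeishMultivariateComplementaryQFunction}. Comparing against \defref{Definition:McLeishMultivariateQAndComplementaryQFunction} then yields $F_{\defrmat{S}}(\defvec{x})\!=\!\widehat{Q}^{L}_{\nu}(\defvec{x})$, i.e.\ \eqref{Eq:StandardMultivariateMcLeishCDF}.

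Since the \ac{PDF} is already in closed form, there is no genuine analytical obstacle here; the only points worth confirming are that the multiple integral is well-defined and that it delivers a valid \ac{CDF}. The former follows because $f_{\defrmat{S}}$ is a bona fide joint density by \theoremref{Theorem:StandardMultivariateMcLeishPDF}, together with the rapid decay of $\BesselK[\nu-L/2]{\cdot}$ at infinity, which guarantees absolute convergence and, via Fubini, legitimizes the iterated-integral form. For consistency I would also verify the boundary behaviour $\widehat{Q}^{L}_{\nu}(\defvec{\infty})\!=\!1$ and $\widehat{Q}^{L}_{\nu}(-\defvec{\infty})\!=\!0$, and note that for $L\!=\!1$ the expression collapses to \eqref{Eq:McLeishCDF} through \theoremref{Theorem:McLeishCDF}, confirming that the multivariate generalization is internally correct.
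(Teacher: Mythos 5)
Your proposal is correct and follows essentially the same route as the paper's proof: both write $F_{\defrmat{S}}(\defvec{x})$ as the $L$-fold lower-orthant integral of the joint \ac{PDF} from \theoremref{Theorem:StandardMultivariateMcLeishPDF} and then identify that integral verbatim with the definition \eqref{Eq:McLeishMultivariateComplementaryQFunction} of $\widehat{Q}^{L}_{\nu}(\defvec{x})$. Your additional remarks on absolute convergence, boundary limits, and the $L\!=\!1$ reduction to \eqref{Eq:McLeishCDF} are sound consistency checks but not needed beyond what the paper's argument already contains.
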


\begin{proof}
The \ac{CDF} of $\!\defrmat{S}\!\sim\!\mathcal{M}_{\nu}^L\!\bigl(\defvec{0},\defmat{I}\bigr)$ is readily given by
$F_{\defrmat{S}}(\defvec{x})\!=\!\allowbreak\int_{-\infty}^{x_1}
    \int_{-\infty}^{x_2} \cdots \int_{-\infty}^{x_L} f_{\defrmat{S}}(\defvec{u})\,\allowbreak{du_1}{du_2}\ldots{du_L}$
defined over $\defvec{x}\!\in\!\mathbb{R}^L$, where $f_{\defrmat{S}}(\defvec{x})$ is given in
\eqref{Eq:StandardMultivariateMcLeishPDF}.
Therewith, exploiting \eqref{Eq:McLeishMultivariateComplementaryQFunction}, the proof is obvious.
\end{proof}

Note that the \ac{CCDF} of $\defrmat{S}\!\sim\!\mathcal{M}^{L}_{\nu}(\defvec{0},\defmat{I})$ is also useful to derive especially when considering tail probabilities, and defined by $\widehat{F}_{\defrmat{S}}(\defvec{x})\!=\!\Pr\{\defrvec{S}>\!\defvec{x}\}\!=\!\allowbreak\Pr\{{S}_{1}\!>\!x_1,\allowbreak{S}_{2}\!>\!x_2,\ldots,{S}_{L}\!>\!x_L\}$ and obtained in the following. As opposite to the \ac{CDF}, $\widehat{F}_{\defrmat{S}}(\defvec{x})$ has the following properties: ${0}\!\leq\!{\widehat{F}_{\defrmat{S}}(\defvec{x})}\!\leq\!{1}$, ${\widehat{F}_{\defrmat{S}}(-\defvec{\infty})}\!=\!{1}$, and ${\widehat{F}_{\defrmat{S}}(\defvec{\infty})}\!=\!{0}$, and it is a monotonically decreasing function of $\defvec{x}$, that is
$\widehat{F}_{\defrmat{S}}(\defvec{x})\!\geq\!\widehat{F}_{\defrmat{S}}(\defvec{x}+\Delta)$ for $\Delta\in\mathbb{R}_{+}$.

\begin{theorem}\label{Theorem:StandardMultivariateMcLeishCCDF}
The \ac{CCDF} of $\!\defrmat{S}\!\sim\!\mathcal{M}_{\nu}^L\!\bigl(\defvec{0},\defmat{I}\bigr)$ is given by
\begin{equation}\label{Eq:StandardMultivariateMcLeishCCDF}
\widehat{F}_{\defrmat{S}}(\defvec{x})={Q}^{L}_{\nu}(\defvec{x}),
\end{equation}
defined over $\boldsymbol{x}\!\in\!\mathbb{R}^L$.
\end{theorem}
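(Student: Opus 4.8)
The plan is to reduce the statement to a direct definition-chase, exactly parallel to the computation of the \ac{CDF} in \theoremref{Theorem:StandardMultivariateMcLeishCDF}. First I would invoke the fact that $\defrmat{S}\!\sim\!\mathcal{M}^{L}_{\nu}(\defvec{0},\defmat{I})$ admits the continuous joint \ac{PDF} $f_{\defrmat{S}}$ established in \theoremref{Theorem:StandardMultivariateMcLeishPDF}, so that the upper-tail event $\{\defrmat{S}\!>\!\defvec{x}\}\!=\!\{S_1\!>\!x_1\}\cap\cdots\cap\{S_L\!>\!x_L\}$ has probability given by the iterated integral of the density over the corresponding orthant,
\begin{equation}
\widehat{F}_{\defrmat{S}}(\defvec{x})
=\int_{x_1}^{\infty}\!\int_{x_2}^{\infty}\!\cdots\!\int_{x_L}^{\infty}
f_{\defrmat{S}}(\defvec{u})\,{du_1}{du_2}\ldots{du_L}.
\end{equation}
Because the density is continuous, the boundary hyperplanes $\{u_\ell\!=\!x_\ell\}$ carry no mass, so the strict inequalities in the definition $\widehat{F}_{\defrmat{S}}(\defvec{x})\!=\!\Pr\{\defrmat{S}\!>\!\defvec{x}\}$ impose no ambiguity in the lower limits of integration.

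Next I would substitute the closed-form expression from \theoremref{Theorem:StandardMultivariateMcLeishPDF} into this integral. The integrand then becomes precisely $\frac{2}{\sqrt{\pi^L}}\,\frac{\lVert\defvec{u}\rVert^{\nu-L/2}}{\Gamma(\nu)\lambda_{0}^{\nu+L/2}}\,{K}_{\nu-L/2}\bigl(\tfrac{2}{\lambda_{0}}\lVert\defvec{u}\rVert\bigr)$, which matches term-for-term the integrand appearing in the definition of McLeish's multivariate \ac{Q-function} in \defref{Definition:McLeishMultivariateQAndComplementaryQFunction}. Comparing the limits of integration, I observe that the orthant $\{u_\ell\!\in\![x_\ell,\infty),\,1\!\leq\!\ell\!\leq\!L\}$ used above is exactly the domain over which $Q^{L}_{\nu}(\defvec{x})$ is defined. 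Hence $\widehat{F}_{\defrmat{S}}(\defvec{x})\!=\!{Q}^{L}_{\nu}(\defvec{x})$, establishing \eqref{Eq:StandardMultivariateMcLeishCCDF}.

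There is no substantive obstacle: the theorem is a bookkeeping identity between the upper-tail probability and the multivariate quantile, in complete duality with the \ac{CDF} case, where integrating from $-\infty$ to $x_\ell$ in each coordinate reproduces the complementary function $\widehat{Q}^{L}_{\nu}$ of \defref{Definition:McLeishMultivariateQAndComplementaryQFunction}. The only point deserving care is the orientation convention, namely verifying that the upper-tail orthant pairs with $Q^{L}_{\nu}$ (and not with $\widehat{Q}^{L}_{\nu}$); this is settled simply by reading off the lower integration limits $x_\ell$ and the upper limits $\infty$, which coincide with those in the defining integral of $Q^{L}_{\nu}(\defvec{x})$.
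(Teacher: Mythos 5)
Your proposal is correct and follows essentially the same route as the paper: the paper's proof likewise writes the tail probability as the iterated integral of the \ac{PDF} from \theoremref{Theorem:StandardMultivariateMcLeishPDF} over the upper orthant and identifies it with the defining integral of ${Q}^{L}_{\nu}(\defvec{x})$ in \defref{Definition:McLeishMultivariateQAndComplementaryQFunction}, mirroring the \ac{CDF} case. Your added remarks on the null boundary hyperplanes and the orientation convention are sound and merely make explicit what the paper leaves implicit.
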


\begin{proof}
The proof is obvious following almost the same steps performed in the proof of
\theoremref{Theorem:StandardMultivariateMcLeishCDF}.
\end{proof}

Since any (non-empty) subset of multivariate McLeish distribution is a multivariate McLeish distribution, both the \ac{CDF} and \ac{CCDF} of any subset of multivariate McLeish distribution can be obtained by respectively using \eqref{Eq:StandardMultivariateMcLeishCDF} and \eqref{Eq:StandardMultivariateMcLeishCDF}, where setting 
$x_\ell\!=\!0$ for $X_{\ell}$ which is not in the subset of interest, i.e.,  
the \ac{CDF} of $S_1\!\sim\!\mathcal{M}_{\nu}(0,1)$ is $F_{S_1}(x)\!=\!F_{\defrvec{S}}([{x},0,\ldots,0]^T)$ and the corresponding \ac{CCDF} is $\widehat{F}_{S_1}(x)\!=\!\widehat{F}_{\defrvec{S}}([{x}_1,0,\ldots,0]^T)$, which are respectively as expected the special case of \eqref{Eq:McLeishCDF} and \eqref{Eq:McLeishCCDF} with zero mean and unit variance. Besides, in the case of the bivariate distribution of any pair of $S_k$ and $S_\ell$, $k\neq\ell$, we readily obtain the bivariate \ac{CDF} as follows
$F_{S_k,S_\ell}(x_k,x_\ell)\!=\!F_{\defrvec{S}}([0,\ldots,0,\allowbreak{x}_k,0,\ldots,0,{x}_\ell,0,\ldots,0]^T)$ as expected. In the similar manner, the bivariate \ac{CCDF} $\widehat{F}_{S_k,S_\ell}(x_k,x_\ell)$ can also be readily obtained using \theoremref{Theorem:StandardMultivariateMcLeishCCDF}. 

\begin{theorem}\label{Theorem:StandardMultivariateMcLeishMGF}
The \ac{MGF} of $\defrmat{S}\!\sim\!\mathcal{M}^{L}_{\nu}(\defvec{0},\defmat{I})$ is given by 
\begin{equation}\label{Eq:StandardMultivariateMcLeishMGF}
M_{\defrmat{S}}(\defvec{s})=\Bigl(1-\frac{\lambda^2_{0}}{4}\defvec{s}^T\defvec{s}\Bigr)^{-\nu},
\end{equation}
for a certain $\defvec{s}\!\in\!\mathbb{R}^L$ within the existence region $\defvec{s}\!\in\!\mathbb{C}_{0}$, where the region $\mathbb{C}_{0}$ is given by 
\setlength\jot{3pt}
\begin{equation}\label{Eq:StandardMultivariateMcLeishMGFExistenceRegion}
    \mathbb{C}_0=\Bigl\{
        \defvec{s}\,    
        \Bigl|\,
        \lambda^2_{0}\defvec{s}^T\defvec{s}
        \leq
        {4}
        \Bigr.
        \Bigr\}.
\end{equation}
\end{theorem}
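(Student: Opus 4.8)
The plan is to compute the \ac{MGF} directly from the stochastic decomposition $\defrmat{S}\!=\!\sqrt{G}\defrmat{N}$ supplied by \theoremref{Theorem:StandardMultivariateMcLeishDefinition}, where $\defrmat{N}\!\sim\!\mathcal{N}^{L}(\defvec{0},\defmat{I})$ and $G\!\sim\!\mathcal{G}(\nu,1)$ has the \ac{PDF} \eqref{Eq:ProportionPDF}, and to evaluate $M_{\defrmat{S}}(\defvec{s})\!=\!\mathbb{E}[\exp(-\langle\defvec{s},\defrmat{S}\rangle)]\!=\!\mathbb{E}[\exp(-\sqrt{G}\,\defvec{s}^{T}\defrmat{N})]$ by iterated expectation, conditioning first on $G$ and then averaging over $G$. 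This mirrors the structure already used for \theoremref{Theorem:CCSMcLeishMGF}.

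First I would fix $G\!=\!g$. Since $\defvec{s}^{T}\defrmat{N}$ is a linear functional of the standard Gaussian vector $\defrmat{N}\!\sim\!\mathcal{N}^{L}(\defvec{0},\defmat{I})$, it is itself a zero-mean scalar Gaussian distribution with variance $\defvec{s}^{T}\defvec{s}$. Hence the inner expectation is the Gaussian \ac{MGF}, namely $\mathbb{E}[\exp(-\sqrt{g}\,\defvec{s}^{T}\defrmat{N})\,|\,G\!=\!g]\!=\!\exp(\tfrac{g}{2}\defvec{s}^{T}\defvec{s})$.

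It then remains to carry out the outer average $\int_{0}^{\infty}\exp(\tfrac{g}{2}\defvec{s}^{T}\defvec{s})\,f_{G}(g)\,dg$. Substituting \eqref{Eq:ProportionPDF} and invoking $\int_{0}^{\infty}g^{\nu-1}e^{-bg}\,dg\!=\!\Gamma(\nu)\,b^{-\nu}$ for $\RealPart{b}\!>\!0$ \cite[Eq.\!~(3.381/4)]{BibGradshteynRyzhikBook} with $b\!=\!\nu-\tfrac{1}{2}\defvec{s}^{T}\defvec{s}$ cancels the $\nu^{\nu}$ and $\Gamma(\nu)$ prefactors and produces $(1-\tfrac{1}{2\nu}\defvec{s}^{T}\defvec{s})^{-\nu}$. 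Finally I would use $\lambda_{0}^{2}\!=\!2/\nu$ (which follows from $\sigma^{2}\!=\!1$ and $\lambda_{0}\!=\!\sqrt{2/\nu}$) to rewrite $\tfrac{1}{2\nu}\!=\!\tfrac{\lambda_{0}^{2}}{4}$, recovering \eqref{Eq:StandardMultivariateMcLeishMGF}. The same substitution turns the integrability condition $b\!>\!0$, i.e. $\defvec{s}^{T}\defvec{s}\!<\!2\nu$, into the claimed region $\lambda_{0}^{2}\defvec{s}^{T}\defvec{s}\!\leq\!4$ of \eqref{Eq:StandardMultivariateMcLeishMGFExistenceRegion}.

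Once the decomposition is in hand the computation is routine, so there is no serious obstacle; the only point deserving care is tracking the convergence region of the Gamma integral, so that \eqref{Eq:StandardMultivariateMcLeishMGFExistenceRegion} emerges rather than a merely formal identity. As an independent check, one can observe that $\langle\defvec{s},\defrmat{S}\rangle\!=\!\lVert\defvec{s}\rVert\,S_{0}$ with $S_{0}\!\sim\!\mathcal{M}_{\nu}(0,1)$ (because $\defvec{s}^{T}\defrmat{N}$ has the law $\lVert\defvec{s}\rVert\,N_{0}$, $N_{0}\!\sim\!\mathcal{N}(0,1)$, and $\sqrt{G}N_{0}$ is the univariate standard McLeish distribution), whence $M_{\defrmat{S}}(\defvec{s})$ equals the univariate \ac{MGF} of \theoremref{Theorem:McLeishMGF} evaluated at $\lVert\defvec{s}\rVert$; this reproduces \eqref{Eq:StandardMultivariateMcLeishMGF} immediately and confirms the constants.
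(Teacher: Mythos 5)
Your proposal is correct and follows essentially the same route as the paper: both condition on the Gamma mixing variable $G$ (the paper via the mixture representation \eqref{Eq:StandardMultivariateMcLeishPDFIntegral} with an explicit completing-the-square evaluation of the inner Gaussian integral, you via the decomposition \eqref{Eq:StandardMultivariateMcLeishDefinition} and the known \ac{MGF} of the scalar Gaussian $\defvec{s}^{T}\defrmat{N}$), arriving at the same outer integral $\int_{0}^{\infty}\exp(\tfrac{g}{2}\defvec{s}^{T}\defvec{s})f_{G}(g)\,dg$ evaluated with \cite[Eq.\!~(3.381/4)]{BibGradshteynRyzhikBook} and the same convergence condition. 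The reduction to the univariate \ac{MGF} of \theoremref{Theorem:McLeishMGF} is a nice additional sanity check, and your constants and existence region match the paper's.
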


\begin{proof}
The \ac{MGF} of $\defrmat{S}\!\sim\!\mathcal{M}^{L}_{\nu}(\defvec{0},\defmat{I})$ is described by 
$M_{\defrmat{S}}(\defvec{s})\!=\!\mathbb{E}\bigl[\exp(-\defvec{s}^T\defrmat{S})\bigr]\!=\!\int_{-\infty}^{\infty}\!\cdots\int_{-\infty}^{\infty}\!\allowbreak\exp(-\defvec{s}^T\!\defvec{x})f_{\defrmat{S}}(\defvec{x})\,{dx_1}\ldots{dx_L}$, where substituting \eqref{Eq:StandardMultivariateMcLeishPDFIntegral} yields 
\begin{equation}\label{Eq:StandardMultivariateMcLeishMGFIntegral}
M_{\defrmat{S}}(\defvec{s})=\int_{0}^{\infty}\!\frac{1}{g^{{L}/{2}}}I(g){f}_{G}(g)\,dg,
\end{equation}
where $f_{G}(g)$ denotes the \ac{PDF} of $G\!\sim\!\mathcal{G}(\nu,1)$ ( i.e., given in \eqref{Eq:ProportionPDF}) and $I(g)$ is given by 
\begin{equation}
I(g)=\int_{-\infty}^{\infty}\!\cdots\int_{-\infty}^{\infty}\!
        \frac{{e}^{-\frac{1}{2g}\left(\lVert\defvec{x}\rVert^2+g\defvec{s}^T\!\defvec{x}\right)}}{(2\pi)^{{L}/{2}}}\,{dx_1}\ldots{dx_L},
\end{equation}
where achieving the equivalent of completing the square, i.e., substituting  $\lVert\defvec{x}\rVert^2+2g\defvec{s}^T\!\defvec{x}\!=\!{\lVert\defvec{x}+g\defvec{s}\rVert^2}-g^2\defvec{s}^T\!\defvec{s}$ readily results in $I(g)\!=\!\exp(\frac{g}{2}\defvec{s}^T\defvec{s})$. Accordingly, \eqref{Eq:StandardMultivariateMcLeishMGFIntegral} simplifies with the aid of \cite[Eq.\!~(3.381/4)]{BibGradshteynRyzhikBook} to \eqref{Eq:StandardMultivariateMcLeishMGF} with the convergence \eqref{Eq:StandardMultivariateMcLeishMGFExistenceRegion}, which proves \theoremref{Theorem:StandardMultivariateMcLeishMGF}.
\end{proof}

As similar to the \ac{CDF} and C\textsuperscript{2}DF of the subset of multivariate McLeish distribution, the corresponding \ac{MGF} is obtained utilizing \eqref{Eq:StandardMultivariateMcLeishMGF}. For instance, we can easily obtain the \ac{MGF} of $S_1\!\sim\!\mathcal{M}_{\nu}(0,1)$ by means of $M_{S_1}(s)\!=\!M_{\defrvec{S}}([{s}_1,0,\ldots,0]^T)=(1-{\lambda^2_{0}}{s}^2/{4})^{-\nu}$, which is consistent with \eqref{Eq:McLeishMGF} for zero mean and unit variance. Besides, in the case of the bivariate distribution of any given pair of $S_k$ and $S_\ell$, $k\neq\ell$, we readily obtain  $M_{S_k,S_\ell}(s_k,s_\ell)\!=\!M_{\defrvec{S}}([0,\ldots,0,\allowbreak{s}_k,0,\ldots,0,{s}_\ell,0,\ldots,0]^T)=(1-{\lambda^2_{0}}(s_1^2+s_2^2)/{4})^{-\nu}$ as expected. It is lastly worth noting that these results and the ones given above are restricted to the case where all ${S}_\ell\!\sim\!\mathcal{M}_{\nu}(0,1)$, $1\!\leq\!{\ell}\!\leq\!{L}$, are identically distributed. A more general case is investigated in the following.

Let us have a vector of \acf{u.n.i.d.} McLeish distributions, that is
\begin{equation}\label{Eq:INIDMultivariateMcLeishRandomVector}
    \defrmat{X}=[X_1,X_2,\ldots,X_L]^T,
\end{equation}
where $X_{\ell}\!\sim\!\mathcal{M}_{\nu}(0,\sigma^2_{k})$ for all $1\!\leq\!\ell\!\leq\!L$, and~any~given pair of $X_{k}\!\sim\!\mathcal{M}_{\nu}(0,\sigma^2_{\ell})$ and $X_{\ell}\!\sim\!\mathcal{M}_{\nu}(0,\sigma^2_{\ell})$, $k\!\neq\!\ell$ are assumed uncorrelated (i.e., $\Covariance{X_{k}}{X_{\ell}}\!=\!{0}$). 
It is worth noticing that $\defrmat{X}$ follows a multivariate McLeish distribution iff $\defvec{a}^T\defrmat{X}$ for all $\defvec{a}\!\in\!\mathbb{R}^L$ follows a McLeish distribution by the definition of multivariate distribution. Define  $\defvec{\sigma}^2\!=\![\sigma^2_1,\sigma^2_2,\ldots,\sigma^2_L]^T\!$ consisting of variances of marginal distributions, and accordingly $\defvec{\sigma}\!=\![\sigma_1,\sigma_2,\ldots,\sigma_L]^T$. Due to possessing $\Covariance{X_{k}}{X_{\ell}}\!=\!{0}$ for any $k\!\neq\!\ell$, the random vector $\defrmat{X}$ certainly follows a multivariate \acf{ES} McLeish distribution denoted by $\defrmat{X}\!\sim\!\mathcal{M}^L_{\nu}(\defvec{0},\diag(\defvec{\sigma}^2))$ and decomposed as in the following. 

\begin{theorem}\label{Theorem:INIDMultivariateMcLeishDefinition}
A multivariate McLeish distribution of uncorrelated and not identically distributed McLeish distributions, denoted by 
$\defrmat{X}\!\sim\!\mathcal{M}^L_{\nu}(\defvec{0},\diag(\defvec{\sigma}^2))$, is decomposed as
\begin{equation}\label{Eq:INIDMultivariateMcLeishDefinition}
    \defrmat{X}=\diag(\defvec{\sigma})\defrmat{S}.
\end{equation}
where $\defrmat{S}\!\sim\!\mathcal{M}_{\nu}(0,\defvec{I})$. 
\end{theorem}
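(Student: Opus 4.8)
The plan is to reduce the claim to the already-established stochastic representation of the standard multivariate McLeish distribution. First I would invoke \theoremref{Theorem:StandardMultivariateMcLeishDefinition}, which gives $\defrmat{S}=\sqrt{G}\defrmat{N}$ with $\defrmat{N}\!\sim\!\mathcal{N}^{L}(\defvec{0},\defmat{I})$ and $G\!\sim\!\mathcal{G}(\nu,1)$. Substituting this into the proposed decomposition \eqref{Eq:INIDMultivariateMcLeishDefinition} yields $\defrmat{X}=\diag(\defvec{\sigma})\defrmat{S}=\sqrt{G}\,\diag(\defvec{\sigma})\defrmat{N}$. Since $\diag(\defvec{\sigma})\defrmat{N}$ is a linear image of a standard Gaussian vector, it is itself Gaussian with mean $\defvec{0}$ and covariance $\diag(\defvec{\sigma})\defmat{I}\diag(\defvec{\sigma})=\diag(\defvec{\sigma}^2)$; that is, $\diag(\defvec{\sigma})\defrmat{N}\!\sim\!\mathcal{N}^{L}(\defvec{0},\diag(\defvec{\sigma}^2))$. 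Thus $\defrmat{X}$ inherits the same Gamma-mixed Gaussian structure as $\defrmat{S}$, only with a rescaled Gaussian core sharing the single mixing variable $G$.

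Next I would verify that this rescaled object has exactly the marginal laws, the pairwise uncorrelatedness, and the covariance matrix demanded by the statement. For each component, $X_{\ell}=\sigma_{\ell}S_{\ell}$ with $S_{\ell}\!\sim\!\mathcal{M}_{\nu}(0,1)$; applying \theoremref{Theorem:McLeishMGF} together with the elementary scaling $M_{\sigma_{\ell}S_{\ell}}(s)=M_{S_{\ell}}(\sigma_{\ell}s)$ turns \eqref{Eq:McLeishMGF} into $\bigl(1-\frac{\lambda_{\ell}^2}{4}s^2\bigr)^{-\nu}$ with $\lambda_{\ell}=\sigma_{\ell}\lambda_{0}=\sqrt{2\sigma_{\ell}^2/\nu}$, which is precisely the MGF of $\mathcal{M}_{\nu}(0,\sigma_{\ell}^2)$. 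For the cross terms, $\Covariance{X_k}{X_\ell}=\sigma_k\sigma_\ell\Covariance{S_k}{S_\ell}=0$ for $k\neq\ell$ by the uncorrelatedness built into $\defrmat{S}$, and more compactly $\mathbb{E}[\defrmat{X}\defrmat{X}^T]=\diag(\defvec{\sigma})\,\mathbb{E}[\defrmat{S}\defrmat{S}^T]\,\diag(\defvec{\sigma})=\diag(\defvec{\sigma}^2)$, as required.

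Finally, to confirm that $\defrmat{X}$ genuinely qualifies as a multivariate McLeish distribution I would invoke the defining characterization used throughout this subsection, namely that $\defvec{a}^T\defrmat{X}$ must follow a univariate McLeish distribution for every $\defvec{a}\!\in\!\mathbb{R}^L$. Writing $\defvec{a}^T\defrmat{X}=(\diag(\defvec{\sigma})\defvec{a})^T\defrmat{S}=\defvec{b}^T\defrmat{S}$ with $\defvec{b}=\diag(\defvec{\sigma})\defvec{a}$, the standard multivariate McLeish property (equivalently, \theoremref{Theorem:StandardMultivariateMcLeishMGF} evaluated along the direction $\defvec{b}$) guarantees $\defvec{b}^T\defrmat{S}\!\sim\!\mathcal{M}_{\nu}(0,\defvec{b}^T\defvec{b})$, so that $\defvec{a}^T\defrmat{X}\!\sim\!\mathcal{M}_{\nu}(0,\defvec{a}^T\diag(\defvec{\sigma}^2)\defvec{a})$ with the same normality $\nu$. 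This closes the argument and simultaneously pins down the covariance as $\diag(\defvec{\sigma}^2)$.

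The main obstacle will be not any single calculation but ensuring that all three requirements---common normality $\nu$ across marginals, pairwise uncorrelatedness, and the exact diagonal covariance---are certified \emph{simultaneously} by this one decomposition, and that the existence region of the MGF in \theoremref{Theorem:StandardMultivariateMcLeishMGF} transforms correctly under the anisotropic scaling $\defvec{s}\mapsto\diag(\defvec{\sigma})\defvec{s}$, which replaces $\defvec{s}^T\defvec{s}$ by $\defvec{s}^T\diag(\defvec{\sigma}^2)\defvec{s}$. In particular I would take care that sharing a single scalar mixing variable $G$ across all coordinates (which forces unit correlation between the implicit $G_\ell$, as noted before \theoremref{Theorem:StandardMultivariateMcLeishDefinition}) remains consistent with the coordinates being uncorrelated yet \emph{not} independent; this is the precise point at which the McLeish case departs from the Gaussian case and must be flagged explicitly rather than assumed.
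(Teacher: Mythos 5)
Your proposal is correct and follows essentially the same route as the paper: the paper's (one-line) proof rests on exactly your third step, namely that $\defvec{a}^T\defrmat{X}=(\diag(\defvec{\sigma})\defvec{a})^T\defrmat{S}$ is univariate McLeish with zero mean and variance $\defvec{a}^T\diag(\defvec{\sigma}^2)\defvec{a}$ by the defining linear-combination property of $\defrmat{S}\!\sim\!\mathcal{M}_{\nu}^{L}(\defvec{0},\defmat{I})$. Your additional verifications (the shared-$G$ mixture form $\sqrt{G}\,\diag(\defvec{\sigma})\defrmat{N}$, the marginal MGF scaling, and the covariance computation) are consistency checks that elaborate, rather than replace, the paper's argument.
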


\begin{proof}
The proof is obvious since $\defvec{\sigma}^T\defrmat{S}\!\sim\!\mathcal{M}_{\nu}(0,\defvec{\sigma}^T\defvec{\sigma})$. 
\end{proof}

Accordingly, the \ac{PDF} of a multivariate \acf{ES} McLeish distribution, denoted by $\defrmat{X}\!\sim\!\mathcal{M}^L_{\nu}(\defvec{0},\diag(\defvec{\sigma}^2))$, is given in the following.

\begin{theorem}\label{Theorem:INIDMultivariateMcLeishPDF}
\!\!The \ac{PDF} of $\defrmat{X}\!\sim\!\mathcal{M}^L_{\nu}(\defvec{0},\!\diag(\defvec{\sigma}^2))$~is~given~by
\begin{equation}\label{Eq:INIDMultivariateMcLeishPDF}
\!\!\!\!f_{\defrmat{X}}(\defvec{x})=
		\frac{2}{\pi^{L/2}}
		\frac{{\bigl\lVert{\defmat{\Lambda}^{-1}\defvec{x}}\bigr\rVert}^{\nu-{L}/{2}}}
			{\Gamma(\nu)\det(\defmat{\Lambda})}
				{K}_{\nu-{L}/{2}}\Bigl({2}{\bigl\lVert{\defmat{\Lambda}^{-1}\defvec{x}}\bigr\rVert}\Bigr)
\end{equation}
for a certain $\defvec{x}\!=\![x_1,x_2,\ldots,x_L]^T\!\in\!\mathbb{R}^L$, where $\defrmat{\Lambda}\!=\!\diag(\defvec{\lambda})$,
and $\defvec{\lambda}\!=\!\lambda_{0}\,\defvec{\sigma}$ denotes the component deviation vector.
\end{theorem}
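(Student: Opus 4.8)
The plan is to derive $f_{\defrmat{X}}$ directly from the decomposition established in \theoremref{Theorem:INIDMultivariateMcLeishDefinition}, namely $\defrmat{X}=\diag(\defvec{\sigma})\defrmat{S}$ with $\defrmat{S}\!\sim\!\mathcal{M}^L_{\nu}(\defvec{0},\defmat{I})$, together with the closed-form standard \ac{PDF} already obtained in \theoremref{Theorem:StandardMultivariateMcLeishPDF}. Since $\diag(\defvec{\sigma})$ is an invertible linear map (all $\sigma_\ell\!>\!0$), the transformation $\defvec{x}=\diag(\defvec{\sigma})\defvec{s}$ is a bijection of $\mathbb{R}^L$ with inverse $\defvec{s}=\diag(\defvec{\sigma}^{-1})\defvec{x}$ and constant Jacobian $\det(\diag(\defvec{\sigma}))^{-1}=\bigl(\prod_{\ell=1}^{L}\sigma_\ell\bigr)^{-1}$. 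By the standard change-of-variables formula I would write $f_{\defrmat{X}}(\defvec{x})=\bigl(\prod_{\ell}\sigma_\ell\bigr)^{-1}f_{\defrmat{S}}\bigl(\diag(\defvec{\sigma}^{-1})\defvec{x}\bigr)$.

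First I would substitute $\defvec{u}=\diag(\defvec{\sigma}^{-1})\defvec{x}$ into \eqref{Eq:StandardMultivariateMcLeishPDF}. The crucial simplification is to recognize, from $\defmat{\Lambda}=\diag(\defvec{\lambda})=\lambda_{0}\diag(\defvec{\sigma})$, that $\defmat{\Lambda}^{-1}\defvec{x}=\lambda_{0}^{-1}\diag(\defvec{\sigma}^{-1})\defvec{x}=\lambda_{0}^{-1}\defvec{u}$, so that $\norm{\defvec{u}}=\lambda_{0}\norm{\defmat{\Lambda}^{-1}\defvec{x}}$ and the Bessel argument $\tfrac{2}{\lambda_{0}}\norm{\defvec{u}}$ collapses exactly to $2\norm{\defmat{\Lambda}^{-1}\defvec{x}}$, producing the kernel ${K}_{\nu-L/2}\bigl(2\norm{\defmat{\Lambda}^{-1}\defvec{x}}\bigr)$ appearing in the statement.

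It then remains to collect the prefactors. Carrying the substitution through gives a factor $\lambda_{0}^{\nu-L/2}$ from $\norm{\defvec{u}}^{\nu-L/2}$ against the $\lambda_{0}^{\nu+L/2}$ in the denominator of \eqref{Eq:StandardMultivariateMcLeishPDF}, leaving $\lambda_{0}^{-L}$; combined with the Jacobian factor $\bigl(\prod_\ell\sigma_\ell\bigr)^{-1}$ this is precisely $\bigl(\lambda_{0}^{L}\prod_\ell\sigma_\ell\bigr)^{-1}=\det(\defmat{\Lambda})^{-1}$, since $\det(\defmat{\Lambda})=\prod_{\ell}(\lambda_{0}\sigma_\ell)=\lambda_{0}^{L}\prod_\ell\sigma_\ell$. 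With $\sqrt{\pi^L}=\pi^{L/2}$, the constant consolidates to $\tfrac{2}{\pi^{L/2}}\tfrac{1}{\Gamma(\nu)\det(\defmat{\Lambda})}$, which yields \eqref{Eq:INIDMultivariateMcLeishPDF}.

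The computation is essentially routine; the only genuine obstacle is disciplined bookkeeping of the $\lambda_{0}$ powers and the identification $\det(\defmat{\Lambda})=\lambda_{0}^{L}\prod_\ell\sigma_\ell$, which is what converts the separate Jacobian and normalization constants into the single factor $\det(\defmat{\Lambda})^{-1}$. As a consistency check I would note the alternative route of conditioning on $G$: writing $\defrmat{X}=\sqrt{G}\,\diag(\defvec{\sigma})\defrmat{N}$ with $\diag(\defvec{\sigma})\defrmat{N}\!\sim\!\mathcal{N}^L(\defvec{0},\diag(\defvec{\sigma}^2))$ and integrating the resulting Gaussian conditional \ac{PDF} against \eqref{Eq:ProportionPDF} via \cite[Eq.\!~(3.471/9)]{BibGradshteynRyzhikBook}, exactly as in the proof of \theoremref{Theorem:StandardMultivariateMcLeishPDF}, must reproduce the same expression.
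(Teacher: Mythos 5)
Your proposal is correct and follows essentially the same route as the paper's own proof: the change of variables $\defrmat{X}=\diag(\defvec{\sigma})\defrmat{S}$ from \theoremref{Theorem:INIDMultivariateMcLeishDefinition}, the Jacobian $\det(\diag(\defvec{\sigma}))^{-1}$, substitution of \eqref{Eq:StandardMultivariateMcLeishPDF}, and the identification $\diag(\defvec{\sigma})^{-1}=\lambda_{0}\defmat{\Lambda}^{-1}$ with $\det(\defmat{\Lambda})=\lambda_{0}^{L}\prod_{\ell}\sigma_{\ell}$. Your $\lambda_{0}$-power bookkeeping is exactly the algebra the paper leaves implicit, and your closing consistency check via conditioning on $G$ mirrors the alternative proof the paper gives (in a footnote) only for the general covariance case in \theoremref{Theorem:MultivariateMcLeishPDF}.
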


\begin{proof}
Note that, referring to \eqref{Eq:INIDMultivariateMcLeishDefinition}, we express  $\defrmat{S}\!\sim\!\mathcal{M}^{L}_{\nu}(\defvec{0},\defmat{I})$ with the aid of a linear transform, that is $\defrmat{S}\!=\!\diag(\defvec{\sigma})^{-1}\defrmat{X}$, and therefrom we notice the Jacobian $J_{\defrmat{X}|\defrmat{S}}\!=\!\det(\defvec{\sigma})^{-1}$. Hence, we can write the \ac{PDF} of $\defrmat{X}$ as
\begin{equation}\label{Eq:INIDMultivariateMcLeishPDFTransform}
    f_{\defrmat{X}}(\defvec{x})=f_{\defrmat{S}}(\diag(\defvec{\sigma})^{-1}\defvec{x})
        J_{\defrmat{X}|\defrmat{S}}.
\end{equation}
Further, defining the component deviation factor matrix as
\begin{equation}\label{Eq:INIDMultivariateMcLeishDeviationFactorMatrix}
    \defmat{\Lambda}=\lambda_0\diag(\defvec{\sigma})=
    \begin{bmatrix}
    	\lambda_1      & 0      & \dots  & 0      \\
    	0      & \lambda_2      & \dots  & 0      \\
    	\vdots & \vdots & \ddots & \vdots \\
    	0      & 0      & \dots  & \lambda_L
	\end{bmatrix}.
\end{equation}
where $\lambda_{\ell}=\sqrt{2\sigma_{\ell}^2/{\nu}}$, ${1}\!\leq\!{\ell}\!\leq\!{L}$, we directly acknowledge that $\diag(\defvec{\sigma})^{-1}\!=\!\lambda_{0}\,\defrmat{\Lambda}^{-1}$ and $\det(\diag(\defvec{\sigma}))^{-1}\!=\!\lambda_{0}^{L}\det(\defmat{\Lambda})^{-1}$. Finally, with these results, substituting \eqref{Eq:StandardMultivariateMcLeishPDF} into \eqref{Eq:INIDMultivariateMcLeishPDFTransform} results in \eqref{Eq:INIDMultivariateMcLeishPDF}, which proves \theoremref{Theorem:INIDMultivariateMcLeishPDF}. 
\end{proof}

For consistency, accuracy, and clarity, setting $\diag(\defvec{\sigma}^2)\!=\!\sigma^2\defmat{I}$ (i.e., making each component have equal power), we can readily reduce \eqref{Eq:INIDMultivariateMcLeishPDF} to the \ac{PDF} of $\defrmat{X}\!\sim\!\mathcal{M}^L_{\nu}(\defvec{0},\!\sigma^2\defmat{I})$ given by
\begin{equation}\label{Eq:IIDMultivariateMcLeishPDF}
\!\!\!\!f_{\defrmat{X}}(\defvec{x})=
		\frac{2}{\pi^{L/2}}
		\frac{{\bigl\lVert{\defvec{x}}\bigr\rVert}^{\nu-{L}/{2}}}
			{\Gamma(\nu)\lambda^{\nu+{L}/{2}}}
				{K}_{\nu-{L}/{2}}\Bigl(\frac{2}{\lambda}{\bigl\lVert{\defvec{x}}\bigr\rVert}\Bigr)
\end{equation}
where $\lambda\!=\!\sqrt{2\sigma^2/\nu}$ is the component deviation defined before. 

\begin{theorem}\label{Theorem:INIDMultivariateMcLeishCDF}
\!\!The \ac{CDF} of $\defrmat{X}\!\sim\!\mathcal{M}^L_{\nu}(\defvec{0},\!\diag(\defvec{\sigma}^2))$~is~given~by
\begin{equation}\label{Eq:INIDMultivariateMcLeishCDF}
    F_{\defrmat{X}}(\defvec{x})=\widehat{Q}^{L}_{\nu}\bigl(\lambda_{0}\defmat{\Lambda}^{-1}\defvec{x}\bigr),
\end{equation}
defined over $\defvec{x}\!\in\!\mathbb{R}^L$.
\end{theorem}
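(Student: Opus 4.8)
The plan is to derive the CDF directly from the decomposition established in \theoremref{Theorem:INIDMultivariateMcLeishDefinition} together with the standard-case result of \theoremref{Theorem:StandardMultivariateMcLeishCDF}, so that no fresh multiple integration is needed. First I would write out the definition $F_{\defrmat{X}}(\defvec{x})=\Pr\{X_1\!\leq\!x_1,\ldots,X_L\!\leq\!x_L\}$ and substitute $\defrmat{X}\!=\!\diag(\defvec{\sigma})\defrmat{S}$, i.e. $X_\ell\!=\!\sigma_\ell S_\ell$ for each $1\!\leq\!\ell\!\leq\!L$. Because every $\sigma_\ell\!>\!0$, the scaling is componentwise strictly increasing, so each event $\{X_\ell\!\leq\!x_\ell\}$ is equivalent to $\{S_\ell\!\leq\!x_\ell/\sigma_\ell\}$; the joint orthant event therefore transforms without distortion, giving $F_{\defrmat{X}}(\defvec{x})=\Pr\{S_1\!\leq\!x_1/\sigma_1,\ldots,S_L\!\leq\!x_L/\sigma_L\}=F_{\defrmat{S}}\bigl(\diag(\defvec{\sigma})^{-1}\defvec{x}\bigr)$.

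The second step is to invoke \theoremref{Theorem:StandardMultivariateMcLeishCDF}, which identifies $F_{\defrmat{S}}(\cdot)$ with the multivariate complementary quantile $\widehat{Q}^{L}_{\nu}(\cdot)$ of \defref{Definition:McLeishMultivariateQAndComplementaryQFunction}. Evaluating at the argument $\diag(\defvec{\sigma})^{-1}\defvec{x}$ then yields $F_{\defrmat{X}}(\defvec{x})=\widehat{Q}^{L}_{\nu}\bigl(\diag(\defvec{\sigma})^{-1}\defvec{x}\bigr)$. To recast this in the stated form I would reuse the deviation-factor matrix $\defmat{\Lambda}=\lambda_{0}\diag(\defvec{\sigma})$ introduced in the proof of \theoremref{Theorem:INIDMultivariateMcLeishPDF}, from which $\defmat{\Lambda}^{-1}=\lambda_{0}^{-1}\diag(\defvec{\sigma})^{-1}$ and hence $\lambda_{0}\defmat{\Lambda}^{-1}=\diag(\defvec{\sigma})^{-1}$; substituting this identity reproduces exactly \eqref{Eq:INIDMultivariateMcLeishCDF}.

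As an alternative derivation that stays closer to the density, one could begin from $F_{\defrmat{X}}(\defvec{x})=\int_{-\infty}^{x_1}\!\cdots\!\int_{-\infty}^{x_L}f_{\defrmat{X}}(\defvec{u})\,du_1\cdots du_L$, insert the PDF from \theoremref{Theorem:INIDMultivariateMcLeishPDF}, and perform the change of variable $\defvec{v}=\lambda_{0}\defmat{\Lambda}^{-1}\defvec{u}$. Its Jacobian contributes $\det(\defmat{\Lambda})/\lambda_{0}^{L}$, which cancels the $1/\det(\defmat{\Lambda})$ factor of the density and, combined with $\lVert\defmat{\Lambda}^{-1}\defvec{u}\rVert=\lVert\defvec{v}\rVert/\lambda_{0}$, maps the integrand precisely onto the kernel appearing in \eqref{Eq:McLeishMultivariateComplementaryQFunction}; the upper limits transform into the components of $\lambda_{0}\defmat{\Lambda}^{-1}\defvec{x}$, giving the same answer. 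I expect no genuine obstacle: the whole argument is routine once the decomposition is available, and the only point requiring a moment's care is verifying that the positive diagonal scaling preserves the componentwise orthant event $\{\defrmat{S}\!\leq\!\cdot\}$ so that the CDFs compose cleanly, together with the bookkeeping identity $\lambda_{0}\defmat{\Lambda}^{-1}=\diag(\defvec{\sigma})^{-1}$.
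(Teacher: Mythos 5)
Your proposal is correct and follows essentially the same route as the paper: invert the decomposition $\defrmat{X}=\diag(\defvec{\sigma})\defrmat{S}$ of \theoremref{Theorem:INIDMultivariateMcLeishDefinition}, apply \theoremref{Theorem:StandardMultivariateMcLeishCDF} to $\defrmat{S}$, and use the identity $\lambda_{0}\defmat{\Lambda}^{-1}=\diag(\defvec{\sigma})^{-1}$. Your write-up merely makes explicit the orthant-preservation step that the paper's one-line proof leaves implicit.
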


\begin{proof}
Using \eqref{Eq:INIDMultivariateMcLeishDefinition} and $\diag(\defvec{\sigma})^{-1}\!=\!\lambda_{0}\,\defrmat{\Lambda}^{-1}$, we have $\defrvec{S}\!=\!\lambda_{0}\,\defrmat{\Lambda}^{-1}\!\defrvec{S}$. The proof is then obvious using \theoremref{Theorem:StandardMultivariateMcLeishCDF}.
\end{proof}

\begin{theorem}\label{Theorem:INIDMultivariateMcLeishCCDF}
\!\!The \ac{CCDF} of $\defrmat{X}\!\sim\!\mathcal{M}^L_{\nu}(\defvec{0},\!\diag(\defvec{\sigma}^2))$~is~given~by
\begin{equation}\label{Eq:INIDMultivariateMcLeishCCDF}
    F_{\defrmat{X}}(\defvec{x})={Q}^{L}_{\nu}\bigl(\lambda_{0}\defmat{\Lambda}^{-1}\defvec{x}\bigr),
\end{equation}
defined over $\defvec{x}\!\in\!\mathbb{R}^L$.
\end{theorem}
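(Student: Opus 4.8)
The plan is to mirror the one-line argument used for the CDF in \theoremref{Theorem:INIDMultivariateMcLeishCDF}, but to invoke the complementary multivariate quantile $Q^{L}_{\nu}$ of \theoremref{Theorem:StandardMultivariateMcLeishCCDF} in place of $\widehat{Q}^{L}_{\nu}$. First I would start from the definition of the complementary distribution, $\widehat{F}_{\defrmat{X}}(\defvec{x})\!=\!\Pr\{\defrmat{X}>\defvec{x}\}\!=\!\Pr\{X_1\!>\!x_1,\ldots,X_L\!>\!x_L\}$, and substitute the decomposition $\defrmat{X}\!=\!\diag(\defvec{\sigma})\defrmat{S}$ furnished by \theoremref{Theorem:INIDMultivariateMcLeishDefinition}, where $\defrmat{S}\!\sim\!\mathcal{M}^L_{\nu}(\defvec{0},\defmat{I})$. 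Equivalently, this exhibits $\defrmat{S}$ as the image of $\defrmat{X}$ under the positive diagonal scaling $\defrmat{S}\!=\!\diag(\defvec{\sigma})^{-1}\defrmat{X}$, which is the same linear transform that drives the CDF proof.

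The only point meriting a line of justification is that the scaling preserves the direction of each inequality. Because $\sigma_\ell\!>\!0$ for every $1\!\leq\!\ell\!\leq\!L$ (the marginal variances $\sigma^2_\ell$ are strictly positive), the componentwise event $\{X_\ell\!>\!x_\ell\}$ is equivalent to $\{S_\ell\!>\!x_\ell/\sigma_\ell\}$, so the joint upper-tail event transfers verbatim through the map and gives $\widehat{F}_{\defrmat{X}}(\defvec{x})\!=\!\widehat{F}_{\defrmat{S}}\bigl(\diag(\defvec{\sigma})^{-1}\defvec{x}\bigr)$. This is precisely where the complementary (rather than ordinary) quantile enters, since we are propagating $>$ rather than $\leq$; it is the analogue of the step in \theoremref{Theorem:INIDMultivariateMcLeishCDF} that produced $\widehat{Q}^{L}_{\nu}$.

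Then I would apply \theoremref{Theorem:StandardMultivariateMcLeishCCDF}, namely $\widehat{F}_{\defrmat{S}}(\defvec{y})\!=\!Q^{L}_{\nu}(\defvec{y})$, to obtain $\widehat{F}_{\defrmat{X}}(\defvec{x})\!=\!Q^{L}_{\nu}\bigl(\diag(\defvec{\sigma})^{-1}\defvec{x}\bigr)$. Finally I would recast the scaling through the component-deviation factor matrix \eqref{Eq:INIDMultivariateMcLeishDeviationFactorMatrix}: since $\defmat{\Lambda}\!=\!\lambda_{0}\diag(\defvec{\sigma})$, we have $\diag(\defvec{\sigma})^{-1}\!=\!\lambda_{0}\defmat{\Lambda}^{-1}$, which yields \eqref{Eq:INIDMultivariateMcLeishCCDF} and completes the argument. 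I do not anticipate a genuine obstacle here; the derivation is a routine transfer-of-variable argument already templated by the CDF proof, and the substantive content (the $K$-Bessel form of the standard CCDF) is entirely absorbed into \theoremref{Theorem:StandardMultivariateMcLeishCCDF}. The closest thing to a subtlety is the sign/direction bookkeeping of the inequalities under the positive diagonal map, which is immediate from $\sigma_\ell\!>\!0$.
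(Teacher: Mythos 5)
Your proposal is correct and follows essentially the same route as the paper: decompose $\defrmat{X}=\diag(\defvec{\sigma})\defrmat{S}$ via \theoremref{Theorem:INIDMultivariateMcLeishDefinition}, transfer the upper-tail event to the standard vector $\defrmat{S}$, invoke \theoremref{Theorem:StandardMultivariateMcLeishCCDF}, and rewrite $\diag(\defvec{\sigma})^{-1}=\lambda_{0}\defmat{\Lambda}^{-1}$. Your explicit remark that the positive diagonal scaling preserves the direction of the componentwise inequalities is a detail the paper leaves implicit, but it is the same argument.
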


\begin{proof}
The proof is obvious following almost the same steps performed in the proof of \theoremref{Theorem:INIDMultivariateMcLeishCDF}.
\end{proof}

\begin{theorem}\label{Theorem:INIDMultivariateMcLeishMGF}
\!\!The \ac{MGF} of $\defrmat{X}\!\sim\!\mathcal{M}^L_{\nu}(\defvec{0},\!\diag(\defvec{\sigma}^2))$~is~given~by
\begin{equation}\label{Eq:INIDMultivariateMcLeishMGF}
M_{\defrmat{S}}(\defvec{s})=\Bigl(1-\frac{1}{4}\defvec{s}^T\defmat{\Lambda}^{2}\defvec{s}\Bigr)^{-\nu},
\end{equation}
for a certain $\defvec{s}\!\in\!\mathbb{R}^L$ within the existence region $\defvec{s}\!\in\!\mathbb{C}_{0}$, where the region $\mathbb{C}_{0}$ is given by 
\begin{equation}\label{Eq:INIDMultivariateMcLeishMGFExistenceRegion}
    \mathbb{C}_0=\Bigl\{
        \defvec{s}\,    
        \Bigl|\,
        \defvec{s}^T\defmat{\Lambda}^{2}\defvec{s}
        \leq
        {4}
        \Bigr.
        \Bigr\}.
\end{equation}
\end{theorem}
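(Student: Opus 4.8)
The plan is to exploit the decomposition $\defrmat{X}=\diag(\defvec{\sigma})\defrmat{S}$ established in \theoremref{Theorem:INIDMultivariateMcLeishDefinition} together with the already-proven \ac{MGF} of the standard vector in \theoremref{Theorem:StandardMultivariateMcLeishMGF}, so that the whole computation reduces to a single linear change of the transform variable. First I would write the definition $M_{\defrmat{X}}(\defvec{s})\!=\!\mathbb{E}[\exp(-\defvec{s}^T\defrmat{X})]$ and substitute $\defrmat{X}=\diag(\defvec{\sigma})\defrmat{S}$. Since $\diag(\defvec{\sigma})$ is symmetric, the inner product rearranges as $\defvec{s}^T\defrmat{X}=\defvec{s}^T\diag(\defvec{\sigma})\defrmat{S}=(\diag(\defvec{\sigma})\defvec{s})^T\defrmat{S}$, whence $M_{\defrmat{X}}(\defvec{s})=M_{\defrmat{S}}(\diag(\defvec{\sigma})\defvec{s})$.

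Next I would insert the closed form \eqref{Eq:StandardMultivariateMcLeishMGF}, evaluated at the transformed argument $\diag(\defvec{\sigma})\defvec{s}$, obtaining the factor $\bigl(1-\tfrac{\lambda_0^2}{4}(\diag(\defvec{\sigma})\defvec{s})^T(\diag(\defvec{\sigma})\defvec{s})\bigr)^{-\nu}$. The only algebra needed is the diagonal-matrix identity $(\diag(\defvec{\sigma})\defvec{s})^T(\diag(\defvec{\sigma})\defvec{s})=\defvec{s}^T\diag(\defvec{\sigma})^2\defvec{s}$, combined with $\defmat{\Lambda}=\lambda_0\diag(\defvec{\sigma})$ from \eqref{Eq:INIDMultivariateMcLeishDeviationFactorMatrix}, which gives $\lambda_0^2\diag(\defvec{\sigma})^2=\defmat{\Lambda}^2$. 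Substituting this collapses the base of the exponent to $1-\tfrac14\defvec{s}^T\defmat{\Lambda}^2\defvec{s}$, producing \eqref{Eq:INIDMultivariateMcLeishMGF} at once.

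Finally, the convergence region transfers under the same substitution: the standard region $\lambda_0^2\defvec{s}^T\defvec{s}\le 4$ from \eqref{Eq:StandardMultivariateMcLeishMGFExistenceRegion}, evaluated at $\diag(\defvec{\sigma})\defvec{s}$, becomes $\lambda_0^2\defvec{s}^T\diag(\defvec{\sigma})^2\defvec{s}=\defvec{s}^T\defmat{\Lambda}^2\defvec{s}\le 4$, which is precisely \eqref{Eq:INIDMultivariateMcLeishMGFExistenceRegion}. There is no genuine obstacle here: the statement is a corollary of a linear (diagonal) reparametrization of \theoremref{Theorem:StandardMultivariateMcLeishMGF}, and the only point requiring care is the bookkeeping of the diagonal matrices, in particular confirming that $\defmat{\Lambda}^2=\lambda_0^2\diag(\defvec{\sigma}^2)$ so that the $\lambda_0^2$ factors cancel cleanly and the variance scaling is absorbed correctly into $\defmat{\Lambda}$.
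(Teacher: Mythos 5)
Your proposal is correct and follows essentially the same route as the paper's own proof: both reduce $M_{\defrmat{X}}(\defvec{s})$ to $M_{\defrmat{S}}(\diag(\defvec{\sigma})\defvec{s})$ via the decomposition of \theoremref{Theorem:INIDMultivariateMcLeishDefinition}, invoke \theoremref{Theorem:StandardMultivariateMcLeishMGF}, and then substitute $\defmat{\Lambda}=\lambda_0\diag(\defvec{\sigma})$ to absorb the $\lambda_0^2$ factor and obtain both the closed form and the existence region.
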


\begin{proof}
Note that, with the aid of \eqref{Eq:INIDMultivariateMcLeishDefinition}, we can readily rewrite
$M_{\defrmat{X}}(\defvec{s})\!=\!\mathbb{E}[\exp(-\defvec{s}^T\!\defrvec{X})]$ as
$M_{\defrmat{X}}(\defvec{s})\!=\!M_{\defrmat{S}}(\diag(\defvec{\sigma})\defvec{s})$. Then, using
\theoremref{Theorem:StandardMultivariateMcLeishMGF}, $M_{\defrmat{X}}(\defvec{s})$ is  expressed as
\begin{equation}
M_{\defrmat{X}}(\defvec{s})=\Bigl(1-\frac{\lambda^2_{0}}{4}\defvec{s}^T\!\diag(\defvec{\sigma})^2\defvec{s}\Bigr)^{-\nu},
\end{equation}
within the region $\mathbb{C}_0\!=\!\bigl\{\defvec{s}\,\bigl|\,\lambda^2_{0}\,\defvec{s}^T\!\diag(\defvec{\sigma})^2\defvec{s}\leq{4}\bigr.\bigr\}$, where substituting \eqref{Eq:INIDMultivariateMcLeishDeviationFactorMatrix} yields \eqref{Eq:INIDMultivariateMcLeishMGF}
within the region \eqref{Eq:INIDMultivariateMcLeishMGFExistenceRegion}, which completes the proof of \theoremref{Theorem:INIDMultivariateMcLeishMGF}.
\end{proof}

Due to the main importance of special cases for clarity and consistency, let us consider a special case in which $\sigma_{\ell}\!=\!\sigma$ for all $1\!\leq\!\ell\!\leq\!{L}$. Appropriately, we can readily simplify \eqref{Eq:INIDMultivariateMcLeishPDF} to \eqref{Eq:StandardMultivariateMcLeishPDF}, and accordingly, \eqref{Eq:INIDMultivariateMcLeishCDF} to \eqref{Eq:StandardMultivariateMcLeishCDF}, \eqref{Eq:INIDMultivariateMcLeishCCDF} to \eqref{Eq:StandardMultivariateMcLeishCCDF}, \eqref{Eq:INIDMultivariateMcLeishMGF} to \eqref{Eq:StandardMultivariateMcLeishMGF}, as respectively expected. In addition, both the results and conclusions presented above are restricted only to the case, where McLeish distributions are assumed to be uncorrelated. Deducing statistical structures benefiting from these results, we investigate in the following the most general case in which McLeish distributions are assumed to be correlated and non-identically distributed.

Let us consider a vector of \ac{c.n.i.d.} McLeish distributions with $\defvec{\mu}$
mean vector and $\defmat{\Sigma}$ covariance matrix, that is
\begin{equation}
    \defrmat{X}=[X_1,X_2,\ldots,X_L],
\end{equation}
where $X_{\ell}\!\sim\!\mathcal{M}_{\nu_\ell}(\mu_{\ell},\sigma^2_{\ell})$, $1\!\leq\!\ell\!\leq\!L$.
Accordingly, $\defvec{\mu}$ is defined by $\defvec{\mu}\!=\!\mathbb{E}[\defrmat{X}]$, that is
\begin{equation}\label{Eq:MultivariableMcLeishMeanVector}
	\defvec{\mu}=[\mu_1,\mu_2,\ldots,\mu_L]^{T},
\end{equation}
where $\mu_\ell\!=\!\mathbb{E}[X_{\ell}]$, $1\!\leq\!\ell\!\leq\!L$. $\defmat{\Sigma}$ is defined by 
$\defmat{\Sigma}\!=\!\mathbb{E}[\defrmat{X}\defrmat{X}^T]-\defvec{\mu}\defvec{\mu}^T$, that is
\begin{equation}\label{Eq:MultivariableMcLeishCovarainceMatrix}
	\defmat{\Sigma}=
     \begin{bmatrix}
    	\sigma_{11} & \sigma_{12} & \dots  & \sigma_{1L} \\
    	\sigma_{21} & \sigma_{22} & \dots  & \sigma_{2L} \\
    	\vdots      & \vdots      & \ddots & \vdots \\
    	\sigma_{L1} & \sigma_{L2} & \dots  & \sigma_{LL}
	\end{bmatrix},
\end{equation}
where $\sigma_{k\ell}\!=\!\Covariance{X_{k}}{X_{\ell}}\!=\!\mathbb{E}[X_{k}X_{\ell}]-\mu_{k}\mu_{\ell}$ for
${1}\!\leq\!{k,\ell}\!\leq\!{L}$. Note that the covariance matrix $\defmat{\Sigma}$ is by
construction~a~symmetric~matrix, i.e.,  $\defmat{\Sigma}\!=\!\defmat{\Sigma}^T$. It is also a positive definite matrix,
i.e., $\defvec{x}^T\defmat{\Sigma}\,\defvec{x}\!\geq\!{0}$ for all $\defvec{x}\!\in\!\mathbb{R}^L$, which immediately
implies that $\rank(\defmat{\Sigma})\!=\!{L}$ and $\det(\defmat{\Sigma})\!\geq\!{0}$, and therefrom
$\min_{\defvec{x}}\defvec{x}^T\defmat{\Sigma}\defvec{x}\!=\!\trace(\defmat{\Sigma})$.
In terms of the entries $\sigma_{k\ell}$ of $\defmat{\Sigma}\!=\![\sigma_{k\ell}]_{{L}\times{L}}$, the preceding imposes the following necessary conditions:
\begin{itemize}
  \item $\sigma_{k\ell}\!=\!\sigma_{k\ell}$, ${1}\!\leq\!{k,\ell}\!\leq\!{L}$ (symmetry),
  \item $\sigma_{\ell\ell}\!>\!0$ for all ${1}\!\leq\!{\ell}\!\leq\!{L}$ since $\sigma_{\ell\ell}=\sigma^2_{\ell}$ which is the variance of $X_{\ell}$ (i.e., $\Variance{X_{\ell}}=\sigma^2_{\ell}$),   
  \item $\sigma_{k\ell}\!\leq\!\sigma_{kk}\sigma_{\ell\ell}$ for all ${1}\!\leq\!{k,\ell}\!\leq\!{L}$ due to Cauchy-Schwarz' inequality\cite[Sec. 2.3]{BibGarlingBook2007}.
\end{itemize}
Since $\defmat{\Sigma}$ is a positive definite matrix, there is a certain triangular decomposition, which is known as Cholesky decomposition\cite[Chap.~\!10]{BibHighamBook2002}, \cite[Sec.~\!2.2]{BibHammerlinBook2012}, in reduced form of $\defmat{\Sigma}\!=\!\defmat{L}^T\defmat{L}$ with a uniquely defined non-singular lower triangular matrix $\defmat{L}\!=\!\begin{bmatrix}L_{k\ell}\end{bmatrix}_{{L}\times{L}}$ such that $L_{\ell\ell}\!>\!{0}$ for ${1}\!\leq\!{\ell}\!\leq\!{L}$. Consequently, we are certain that $\defmat{L}^{-1}\!$ exists, and accordingly we indicate in the following the existence of multivariate McLeish distribution. By definition of multivariate distribution  \cite{BibAndersenBook1995,BibTongBook1989,BibWangKotzNgBook1989,BibBilodeauBrennerBook1999,BibKotzBalakrishnanBook2004}, $\defrmat{X}$ follows a multivariate McLeish distribution iff
\begin{equation}\label{Eq:MultivariateNormalization}
	\defrmat{Y}=\defmat{L}^{-1}(\defrmat{X}-\defvec{\mu})=[{Y}_1,{Y}_2,\ldots,{Y}_L]^T,
\end{equation}
jointly follows a multivariate McLeish distribution with zero mean vector and unit covariance matrix. As explained before \theoremref{Theorem:StandardMultivariateMcLeishPDF}, if $\defvec{a}^T\defrmat{Y}$ for all vectors $\defvec{a}\!\in\!\mathbb{R}_{+}$ follows a McLeish distribution, then we can declare that  $\defrmat{Y}$ follows a multivariate McLeish distribution. Therefore, $\nu_{\ell}\!=\!\nu$ for all ${1}\!\leq\!\ell\!\leq{L}$ since circularity imposes that
$\arctan({Y}_k,{Y}_\ell)$, $k\!\neq\!\ell$ has to follow a  uniform distribution over $[-\pi,\pi)$. By the virtue of both \eqref{Eq:StandardMultivariateMcLeishDefinition} and \eqref{Eq:MultivariateNormalization}, we find out $\defrmat{Y}\!\sim\!\mathcal{M}^{L}_{\nu}(\defvec{0},\defmat{I})$, and therefore, we can decompose $\defrmat{X}$ as
\begin{equation}\label{Eq:MultivariateMcLeishDistributionDecomposition}
	\defrmat{X}=\sqrt{G}\defrmat{N}+\defvec{\mu},
\end{equation}
where $G\!\sim\!\mathcal{G}(\nu,1)$, and $\defrmat{N}\!\sim\!\mathcal{N}^L(\defvec{0},\defmat{\Sigma})$.
In consequence, $\defrmat{X}$ follows a multivariate \ac{ES} McLeish distribution due to the both facts: (i) the types of all
marginal distributions are the same, (ii) for any pair of ${X}_k\!\sim\!\mathcal{M}_{\nu}(\mu_{k},\sigma^2_{k})$ and
${X}_\ell\!\sim\!\mathcal{M}_{\nu}(\mu_{\ell},\sigma^2_{\ell})$, $k\!\neq\!\ell$,
$\arctan({({X}_k-\mu_k)}/{\sigma_k},{({X}_\ell-\mu_\ell)}/{\sigma_\ell})$ follows uniform distribution over $[-\pi,\pi)$. Since it is uniquely determined by its mean vector, covariance matrix and normality, it is denoted by $\defrmat{X}\!\sim\!\mathcal{M}_{\nu}^L\bigl(\defvec{\mu},\defmat{\Sigma}\bigr)$, whose decomposition and \ac{PDF} are obtained in the following.

\begin{theorem}\label{Theorem:MultivariateMcLeishDecomposition}
If $\defrmat{X}\!\sim\!\mathcal{M}_{\nu}^L\bigl(\defvec{\mu},\defmat{\Sigma}\bigr)$, then
it is decomposed as 
\begin{equation}\label{Eq:MultivariateMcLeishDecomposition}
	\defrmat{X}=\defmat{\Sigma}^{1/2}\defrmat{S}+\defvec{\mu},
\end{equation}
where $\defrmat{S}\!\sim\!\mathcal{M}_{\nu}^L(\defvec{0},\defmat{I})$. 
\end{theorem}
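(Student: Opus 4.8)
The plan is to build directly on the Gaussian-scale-mixture decomposition $\eqref{Eq:MultivariateMcLeishDistributionDecomposition}$ already obtained for $\defrmat{X}$, namely $\defrmat{X}=\sqrt{G}\defrmat{N}+\defvec{\mu}$ with $G\!\sim\!\mathcal{G}(\nu,1)$ and $\defrmat{N}\!\sim\!\mathcal{N}^L(\defvec{0},\defmat{\Sigma})$, and then to re-express the Gaussian factor in standardized form. First I would use that $\defmat{\Sigma}$ is symmetric and positive definite (established in the discussion preceding the theorem) to guarantee the existence and invertibility of its unique symmetric positive-definite square root $\defmat{\Sigma}^{1/2}$, obtained from the spectral decomposition $\defmat{\Sigma}=\defmat{O}\,\diag(\defvec{d})\,\defmat{O}^T$ through $\defmat{\Sigma}^{1/2}=\defmat{O}\,\diag(\defvec{d})^{1/2}\,\defmat{O}^T$, where $\defvec{d}$ collects the strictly positive eigenvalues of $\defmat{\Sigma}$.

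Next, by the affine-transformation property of the multivariate Gaussian I would factor $\defrmat{N}=\defmat{\Sigma}^{1/2}\defrmat{N}_{0}$ with $\defrmat{N}_{0}=\defmat{\Sigma}^{-1/2}\defrmat{N}\!\sim\!\mathcal{N}^L(\defvec{0},\defmat{I})$, its covariance being $\defmat{\Sigma}^{-1/2}\defmat{\Sigma}\defmat{\Sigma}^{-1/2}=\defmat{I}$. Substituting into $\eqref{Eq:MultivariateMcLeishDistributionDecomposition}$ and using that the scalar mixing variable $\sqrt{G}$ commutes with the deterministic matrix $\defmat{\Sigma}^{1/2}$ gives $\defrmat{X}=\defmat{\Sigma}^{1/2}\bigl(\sqrt{G}\defrmat{N}_{0}\bigr)+\defvec{\mu}$. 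I would then identify the bracketed vector: by \theoremref{Theorem:StandardMultivariateMcLeishDefinition}, the product $\defrmat{S}=\sqrt{G}\defrmat{N}_{0}$ of the same $G\!\sim\!\mathcal{G}(\nu,1)$ with $\defrmat{N}_{0}\!\sim\!\mathcal{N}^L(\defvec{0},\defmat{I})$ is exactly a standard multivariate McLeish vector $\defrmat{S}\!\sim\!\mathcal{M}^L_{\nu}(\defvec{0},\defmat{I})$, whose covariance is $\Expected{G}\,\defmat{I}=\defmat{I}$ because $\Expected{G}=1$. This yields $\defrmat{X}=\defmat{\Sigma}^{1/2}\defrmat{S}+\defvec{\mu}$, as claimed. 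I would also remark that, although the lead-up derivation standardized $\defrmat{X}$ through the Cholesky factor $\defmat{L}$ of $\defmat{\Sigma}=\defmat{L}^T\defmat{L}$, the law $\mathcal{M}^L_{\nu}(\defvec{0},\defmat{I})$ is invariant under orthogonal transformations (shown right after \theoremref{Theorem:StandardMultivariateMcLeishPDF}), so any admissible matrix square root — Cholesky factor or symmetric root $\defmat{\Sigma}^{1/2}$ — produces an $\defrmat{S}$ with the same standard law, which is what licenses stating the decomposition with the symmetric root.

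As an independent consistency check I would confirm the identity through moment-generating functions, invoking their uniqueness \cite{BibCurtissAMS1942}. Conditioning $\eqref{Eq:MultivariateMcLeishDistributionDecomposition}$ on $G=g$ yields a Gaussian whose MGF integrates against $\eqref{Eq:ProportionPDF}$ to $M_{\defrmat{X}}(\defvec{s})=e^{-\defvec{s}^T\defvec{\mu}}\bigl(1-\frac{\lambda_{0}^2}{4}\defvec{s}^T\defmat{\Sigma}\defvec{s}\bigr)^{-\nu}$, while \theoremref{Theorem:StandardMultivariateMcLeishMGF} gives $M_{\defmat{\Sigma}^{1/2}\defrmat{S}+\defvec{\mu}}(\defvec{s})=e^{-\defvec{s}^T\defvec{\mu}}M_{\defrmat{S}}(\defmat{\Sigma}^{1/2}\defvec{s})$; substituting $(\defmat{\Sigma}^{1/2}\defvec{s})^T(\defmat{\Sigma}^{1/2}\defvec{s})=\defvec{s}^T\defmat{\Sigma}\defvec{s}$ reproduces the same expression, so the two laws coincide. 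The only genuinely delicate point is the bookkeeping that the single mixing scalar $G$ is common to all coordinates and hence passes through the linear map $\defmat{\Sigma}^{1/2}$ unchanged; this shared-$G$ structure is precisely what preserves elliptical symmetry and what separates the McLeish case from an independent per-coordinate rescaling.
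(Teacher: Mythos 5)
Your proposal is correct and follows essentially the same route as the paper's own proof: start from the scale-mixture identity $\defrmat{X}=\sqrt{G}\defrmat{N}+\defvec{\mu}$ established just before the theorem, factor the Gaussian as $\defrmat{N}=\defmat{\Sigma}^{1/2}\defrmat{N}_{0}$ with $\defrmat{N}_{0}\!\sim\!\mathcal{N}^L(\defvec{0},\defmat{I})$, pull the scalar $\sqrt{G}$ through $\defmat{\Sigma}^{1/2}$, and identify $\sqrt{G}\defrmat{N}_{0}$ as $\defrmat{S}\!\sim\!\mathcal{M}^L_{\nu}(\defvec{0},\defmat{I})$ via \theoremref{Theorem:StandardMultivariateMcLeishDefinition}. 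Your added MGF consistency check and the remark on square-root non-uniqueness go beyond the paper's terse argument (which even contains the typo $\defrvec{S}={G}\,\defrvec{U}$ where $\sqrt{G}$ is meant), but the underlying idea is identical.
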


\begin{proof}
Note that, using \cite[Eq. (??)]{BibProakisBook}, we can decompose $\defrvec{N}\!\sim\!\mathcal{N}^L(\defvec{0},\defrmat{\Sigma})$ as $\defrvec{N}\!=\!\defmat{\Sigma}^{1/2}\defrvec{U}$, where
$\defrvec{U}\!\sim\!\mathcal{N}^L(\defvec{0},\defmat{I})$. Furthermore, with the aid of
\eqref{Eq:StandardMultivariateMcLeishDefinition}, we can also decompose 
$\defrvec{S}\!\sim\!\mathcal{M}^L(\defvec{0},\defmat{I})$ as   $\defrvec{S}\!=\!{G}\,\defrvec{U}$, where
$G\!\sim\!\mathcal{G}(\nu,1)$. Then, substituting these results into
\eqref{Eq:MultivariateMcLeishDistributionDecomposition} yields \eqref{Eq:MultivariateMcLeishDecomposition}, which proves
\theoremref{Theorem:MultivariateMcLeishDecomposition}.
\end{proof}

\begin{theorem}\label{Theorem:MultivariateMcLeishPDF}
The \ac{PDF} of $\defrmat{X}\!\sim\!\mathcal{M}_{\nu}^L\!\bigl(\defvec{\mu},\defmat{\Sigma}\bigr)$ is given by
\begin{equation}\label{Eq:MultivariateMcLeishPDF}
f_{\defrmat{X}}(\defvec{x})=\frac{2}{\sqrt{\pi^L}\Gamma(\nu)}
					\frac{{\lVert{\defvec{x}-\defvec{\mu}}\rVert}_{\defmat{\Sigma}}^{\nu-{L}/{2}}}
						{\sqrt{\det(\defmat{\Sigma})}\lambda_{0}^{\nu+{L}/{2}}}
					 {K}_{\nu-{L}/{2}}
					 \Bigl(
							\frac{2}{\lambda_{0}}
							{\bigl\lVert{\defvec{x}-\defvec{\mu}}\bigr\rVert}_{\defmat{\Sigma}}
					 \Bigr),
\end{equation}
defined over $\boldsymbol{x}\!\in\!\mathbb{R}^L$, where ${\lVert{\defvec{x}-\defvec{\mu}}\rVert}_{\defmat{\Sigma}}\!=\!(\defvec{x}-\defvec{\mu})^{T}\defmat{\Sigma}^{-1}(\defvec{x}-\defvec{\mu})$.
\end{theorem}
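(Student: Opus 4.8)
The plan is to transport the already-established standard density through the affine decomposition of \theoremref{Theorem:MultivariateMcLeishDecomposition}. Since $\defrmat{X}\!=\!\defmat{\Sigma}^{1/2}\defrmat{S}+\defvec{\mu}$ with $\defrmat{S}\!\sim\!\mathcal{M}_{\nu}^{L}(\defvec{0},\defmat{I})$, I would invert the map as $\defrmat{S}\!=\!\defmat{\Sigma}^{-1/2}(\defrmat{X}-\defvec{\mu})$ and apply the multivariate change-of-variables formula for probability densities. The transformation $\defvec{s}\!\mapsto\!\defvec{x}\!=\!\defmat{\Sigma}^{1/2}\defvec{s}+\defvec{\mu}$ is affine with constant Jacobian determinant $\det\bigl(\defmat{\Sigma}^{1/2}\bigr)\!=\!\sqrt{\det(\defmat{\Sigma})}$, so that
\begin{equation}\nonumber
f_{\defrmat{X}}(\defvec{x})=\frac{1}{\sqrt{\det(\defmat{\Sigma})}}\,f_{\defrmat{S}}\bigl(\defmat{\Sigma}^{-1/2}(\defvec{x}-\defvec{\mu})\bigr),
\end{equation}
into which I would substitute the standard density \eqref{Eq:StandardMultivariateMcLeishPDF} furnished by \theoremref{Theorem:StandardMultivariateMcLeishPDF}.

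The one algebraic step requiring attention is the reduction of the argument norm. Taking $\defmat{\Sigma}^{1/2}$ to be the \emph{symmetric} positive-definite principal square root of $\defmat{\Sigma}$ (so that $\defmat{\Sigma}^{-1/2}$ is symmetric and $\defmat{\Sigma}^{-1/2}\defmat{\Sigma}^{-1/2}\!=\!\defmat{\Sigma}^{-1}$), I compute
\begin{equation}\nonumber
{\bigl\lVert{\defmat{\Sigma}^{-1/2}(\defvec{x}-\defvec{\mu})}\bigr\rVert}^{2}=(\defvec{x}-\defvec{\mu})^{T}\defmat{\Sigma}^{-1/2}\defmat{\Sigma}^{-1/2}(\defvec{x}-\defvec{\mu})=(\defvec{x}-\defvec{\mu})^{T}\defmat{\Sigma}^{-1}(\defvec{x}-\defvec{\mu}),
\end{equation}
which is precisely the squared Mahalanobis quantity $\lVert{\defvec{x}-\defvec{\mu}}\rVert_{\defmat{\Sigma}}^{2}$ induced by \eqref{Eq:MahalanobisInnerProductInHigherDimensions}. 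Replacing every Euclidean norm in \eqref{Eq:StandardMultivariateMcLeishPDF} by $\lVert{\defvec{x}-\defvec{\mu}}\rVert_{\defmat{\Sigma}}$ and absorbing the Jacobian factor $\det(\defmat{\Sigma})^{-1/2}$ into the prefactor collapses the constants into $2/\bigl(\sqrt{\pi^{L}}\,\Gamma(\nu)\sqrt{\det(\defmat{\Sigma})}\,\lambda_{0}^{\nu+L/2}\bigr)$ and reproduces \eqref{Eq:MultivariateMcLeishPDF}.

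As an independent cross-check and an alternative derivation, I could instead begin from the Gaussian scale-mixture form $\defrmat{X}\!=\!\sqrt{G}\defrmat{N}+\defvec{\mu}$ with $\defrmat{N}\!\sim\!\mathcal{N}^{L}(\defvec{0},\defmat{\Sigma})$ and $G\!\sim\!\mathcal{G}(\nu,1)$ recorded in \eqref{Eq:MultivariateMcLeishDistributionDecomposition}. Conditioning on $G\!=\!g$ gives the density of $\mathcal{N}^{L}(\defvec{\mu},g\defmat{\Sigma})$, and the marginal follows from $f_{\defrmat{X}}(\defvec{x})\!=\!\int_{0}^{\infty}f_{\defrmat{X}|G}(\defvec{x}|g)\,f_{G}(g)\,dg$ upon evaluating the resulting $g$-integral with \cite[Eq.\!~(3.471/9)]{BibGradshteynRyzhikBook}, exactly as in the proof of \theoremref{Theorem:StandardMultivariateMcLeishPDF}; this yields the same modified-Bessel-$K$ form, now carrying the quadratic form $(\defvec{x}-\defvec{\mu})^{T}\defmat{\Sigma}^{-1}(\defvec{x}-\defvec{\mu})$ and the determinant normaliser $\sqrt{\det(\defmat{\Sigma})}$.

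I expect no serious obstacle here, since the result is essentially a linear reparametrisation of an already-proved density. The only delicate points are bookkeeping ones: insisting on the symmetric root $\defmat{\Sigma}^{1/2}$ so the quadratic form simplifies immediately to $\defmat{\Sigma}^{-1}$, and correctly matching the $\lambda_{0}$-powers against the Jacobian determinant. A concluding sanity check is to specialise to $\defmat{\Sigma}\!=\!\diag(\defvec{\sigma}^{2})$, which must and does recover \eqref{Eq:INIDMultivariateMcLeishPDF}.
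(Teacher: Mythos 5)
Your proposal is correct and follows essentially the same route as the paper: the paper's main proof is exactly your change-of-variables argument through the decomposition of \theoremref{Theorem:MultivariateMcLeishDecomposition} with Jacobian $\det(\defmat{\Sigma})^{-1/2}$ and substitution of \eqref{Eq:StandardMultivariateMcLeishPDF}, and the paper even records your Gaussian scale-mixture cross-check (conditioning on $G$ and integrating via \cite[Eq.\!~(3.471/9)]{BibGradshteynRyzhikBook}) as a footnoted alternative proof. Your explicit handling of the symmetric square root so that $\defmat{\Sigma}^{-1/2}\defmat{\Sigma}^{-1/2}\!=\!\defmat{\Sigma}^{-1}$ simply spells out the step the paper compresses into ``utilizing the symmetry of $\defmat{\Sigma}$.''
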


\begin{proof}
With the aid of \theoremref{Theorem:MultivariateMcLeishDecomposition}, we readily recognize that $\defrmat{X}\!\sim\!\mathcal{M}_{\nu}^L\!\bigl(\defvec{\mu},\defmat{\Sigma}\bigr)$ is a linear transform of $\defrmat{S}\!\sim\!\mathcal{M}^{L}_{\nu}(\defvec{0},\defmat{I})$. Hence, we can write $\defrmat{S}\!=\!\defmat{\Sigma}^{-1/2}(\defrvec{X}-\defvec{\mu})$ and therefrom immediately obtain its Jacobian $J_{\defrmat{X}|\defrmat{S}}\!=\!\det(\defmat{\Sigma})^{-1/2}$in order to express the \ac{PDF} of $\defrmat{X}$ in terms of the \ac{PDF} of $\defrmat{S}$, that is
\begin{equation}\label{Eq:MultivariateMcLeishPDFTransform}
    f_{\defrmat{X}}(\defvec{x})=f_{\defrmat{S}}(\defmat{\Sigma}^{-1/2}(\defrvec{X}-\defvec{\mu}))
        J_{\defrmat{X}|\defrmat{S}}.
\end{equation}
where $f_{\defrmat{S}}(\defvec{x})$ has been already given in \eqref{Eq:StandardMultivariateMcLeishPDF}.
Finally, substituting \eqref{Eq:StandardMultivariateMcLeishPDF} into \eqref{Eq:MultivariateMcLeishPDFTransform} and 
utilizing the symmetry of $\defmat{\Sigma}$ (i.e., $\defmat{\Sigma}\!=\!\defmat{\Sigma}^T$) with the results given above, we obtain \eqref{Eq:MultivariateMcLeishPDF}, which completes the proof of \theoremref{Theorem:MultivariateMcLeishPDF}\footnote{An alternative proof of \theoremref{Theorem:MultivariateMcLeishPDF} can be found as follows. According to \eqref{Eq:MultivariateMcLeishDecomposition}, the \ac{PDF} of $\defrmat{X}$ conditioned on $G$, i.e., the conditional \ac{PDF} $f_{\defrmat{X}|G}(\defvec{x}|g)$ can be readily written as\cite[Eq. (2.3-74)]{BibProakisBook}
\begin{equation}
\!\!f_{\defrmat{X}|G}(\defvec{x}|g)=
	\frac{1}
	{\sqrt{(2\pi)^{L}g^{L}\det(\defmat{\Sigma})}}
	    \exp\biggl(-\frac{{\lVert\defvec{x}-\defvec{\mu}\rVert}^{2}_\defmat{\Sigma}}{2g}\biggr),
	    \tag{F-\thefootnote.1}
\end{equation} 
for $g\!\in\!\mathbb{R}_{+}$. Then, performing the almost same steps followed in the proof of
\theoremref{Theorem:StandardMultivariateMcLeishPDF}, the \ac{PDF} $f_{\defrmat{X}}(\defvec{x})$ is expressed as
\eqref{Eq:MultivariateMcLeishPDF}, which proves \theoremref{Theorem:MultivariateMcLeishPDF}.}.
\end{proof}

Note that we can compute the \ac{CDF} of $\defrmat{X}\!\sim\!\mathcal{M}^{L}_{\nu}(\defvec{\mu},\defmat{\Sigma})$ as 
$F_{\defrmat{X}}(\defvec{x})\!=\!\allowbreak\Pr\{{X}_{1}\!\leq\!x_1,{X}_{2}\!\leq\!x_2\ldots{X}_{L}\!\leq\!x_L\}$, and similarly, its \ac{CCDF} as $\widehat{F}_{\defrmat{X}}(\defvec{x})\!=\!\allowbreak\Pr\{{X}_{1}\!>\!x_1,{X}_{2}\!>\!x_2\ldots{X}_{L}\!>\!x_L\}$, and obtain them in the following. 

\begin{theorem}\label{Theorem:MultivariateMcLeishCDF}
\!\!The \ac{CDF} of $\defrmat{X}\!\sim\!\mathcal{M}^L_{\nu}(\defvec{\mu},\!\defvec{\Sigma})$~is~given~by
\begin{equation}\label{Eq:MultivariateMcLeishCDF}
    F_{\defrmat{X}}(\defvec{x})=\widehat{Q}^{L}_{\nu}\bigl(\defmat{\Sigma}^{-1/2}(\defrvec{X}-\defvec{\mu})\bigr),
\end{equation}
defined over $\defvec{x}\!\in\!\mathbb{R}^L$.
\end{theorem}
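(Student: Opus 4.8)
The plan is to reduce the computation entirely to the standard multivariate case already settled in \theoremref{Theorem:StandardMultivariateMcLeishCDF}, by exploiting the affine decomposition of \theoremref{Theorem:MultivariateMcLeishDecomposition}. First I would write the joint \ac{CDF} straight from its definition as the box integral
\begin{equation}
F_{\defrmat{X}}(\defvec{x})=\int_{-\infty}^{x_1}\!\cdots\!\int_{-\infty}^{x_L} f_{\defrmat{X}}(\defvec{y})\,{dy_1}\ldots{dy_L},
\end{equation}
and then substitute the \ac{PDF} in the transformed form \eqref{Eq:MultivariateMcLeishPDFTransform}, namely $f_{\defrmat{X}}(\defvec{y})=f_{\defrmat{S}}\bigl(\defmat{\Sigma}^{-1/2}(\defvec{y}-\defvec{\mu})\bigr)\det(\defmat{\Sigma})^{-1/2}$, which is exactly what the whitening relation $\defrmat{S}\!=\!\defmat{\Sigma}^{-1/2}(\defrmat{X}-\defvec{\mu})$ from the proof of \theoremref{Theorem:MultivariateMcLeishPDF} delivers.

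Next I would perform the change of variables $\defvec{u}=\defmat{\Sigma}^{-1/2}(\defvec{y}-\defvec{\mu})$, whose Jacobian $\det(\defmat{\Sigma})^{1/2}$ cancels the factor $\det(\defmat{\Sigma})^{-1/2}$ carried by the density. The integrand then collapses to the standard density $f_{\defrmat{S}}(\defvec{u})$ given in \eqref{Eq:StandardMultivariateMcLeishPDF}, and by the definition \eqref{Eq:McLeishMultivariateComplementaryQFunction} of the complementary quantile $\widehat{Q}^{L}_{\nu}$ the resulting integral is to be identified with $\widehat{Q}^{L}_{\nu}\bigl(\defmat{\Sigma}^{-1/2}(\defvec{x}-\defvec{\mu})\bigr)$, which yields \eqref{Eq:MultivariateMcLeishCDF}. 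This mirrors verbatim the diagonal argument used for \theoremref{Theorem:INIDMultivariateMcLeishCDF}, where $\diag(\defvec{\sigma})^{-1}=\lambda_{0}\defmat{\Lambda}^{-1}$ played precisely the role of $\defmat{\Sigma}^{-1/2}$ here, so the algebra is routine once the region is understood.

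The hard part, and the step that needs genuine care rather than a mechanical copy of the \ac{i.n.i.d.} proof, is the image of the integration region under whitening. For the uncorrelated case the matrix $\diag(\defvec{\sigma})^{-1}$ is diagonal, so the box $\prod_\ell(-\infty,x_\ell]$ maps to another axis-aligned box and the rectangular limits defining $\widehat{Q}^{L}_{\nu}$ apply directly. For a full positive-definite $\defmat{\Sigma}$, however, $\defmat{\Sigma}^{-1/2}$ is generally non-diagonal, so $\{\defvec{y}:y_\ell\le x_\ell\ \forall\ell\}$ transforms into the intersection of half-spaces $\{\defvec{u}:(\defmat{\Sigma}^{1/2}\defvec{u}+\defvec{\mu})_\ell\le x_\ell\}$, which is not a box in $\defvec{u}$-space. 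I would therefore resolve this by reading $\widehat{Q}^{L}_{\nu}\bigl(\defmat{\Sigma}^{-1/2}(\defvec{x}-\defvec{\mu})\bigr)$ as the probability content that $\defrmat{S}\!\sim\!\mathcal{M}^L_{\nu}(\defvec{0},\defmat{I})$ assigns to that transformed orthant, a reading justified by the orthogonal (rotation/reflection) invariance of the standard distribution noted after \theoremref{Theorem:StandardMultivariateMcLeishPDF}; in the diagonal special case this collapses back to the rectangular $\widehat{Q}^{L}_{\nu}$ and reproduces both \theoremref{Theorem:INIDMultivariateMcLeishCDF} and \theoremref{Theorem:StandardMultivariateMcLeishCDF}, which is the consistency check I would use to confirm the statement.
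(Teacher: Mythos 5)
Your opening reduction---the box integral, the substitution of \eqref{Eq:MultivariateMcLeishPDFTransform}, and the change of variables $\defvec{u}=\defmat{\Sigma}^{-1/2}(\defvec{y}-\defvec{\mu})$ whose Jacobian cancels $\det(\defmat{\Sigma})^{-1/2}$---is exactly the paper's own proof written out in integral form: the paper states the whitening $\defrmat{S}=\defmat{\Sigma}^{-1/2}(\defrmat{X}-\defvec{\mu})$ from \theoremref{Theorem:MultivariateMcLeishDecomposition}, invokes \eqref{Eq:McLeishMultivariateComplementaryQFunction}, and calls the rest obvious. To your credit, you also isolated the one step that is \emph{not} obvious: the image of the orthant $\{\defvec{y}\,:\,y_\ell\le x_\ell\ \forall\ell\}$ under whitening is the polyhedron $\{\defvec{u}\,:\,(\defmat{\Sigma}^{1/2}\defvec{u})_\ell\le x_\ell-\mu_\ell\ \forall\ell\}$, which is an axis-aligned box only when $\defmat{\Sigma}^{1/2}$ is diagonal.

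The gap is in your proposed resolution of that step. Reading $\widehat{Q}^{L}_{\nu}\bigl(\defmat{\Sigma}^{-1/2}(\defvec{x}-\defvec{\mu})\bigr)$ ``as the probability content of the transformed orthant'' is not available: \eqref{Eq:McLeishMultivariateComplementaryQFunction} defines $\widehat{Q}^{L}_{\nu}$ unambiguously as a rectangular integral, so under that definition you would be proving a different identity obtained by silently redefining the right-hand side, not \eqref{Eq:MultivariateMcLeishCDF}. Nor can orthogonal invariance bridge the two readings: $\defmat{\Sigma}^{\pm 1/2}$ is symmetric positive definite and is orthogonal only when $\defmat{\Sigma}=\defmat{I}$, and the half-spaces cutting out the whitened region have the rows of $\defmat{\Sigma}^{1/2}$ as normals---when these are not mutually orthogonal, no rotation or reflection carries their intersection onto an axis-aligned orthant. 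Indeed, the event identity $\{\defmat{\Sigma}^{1/2}\defrmat{S}\le\defvec{v}\}=\{\defrmat{S}\le\defmat{\Sigma}^{-1/2}\defvec{v}\}$ requires both $\defmat{\Sigma}^{1/2}$ and $\defmat{\Sigma}^{-1/2}$ to preserve the componentwise order, i.e., to have nonnegative entries, which for a symmetric positive-definite matrix forces diagonality---precisely the case already settled in \theoremref{Theorem:INIDMultivariateMcLeishCDF}. So your consistency check passes only because the diagonal case is the only case in which the rectangular reading is valid; for a general covariance the honest conclusion is that the CDF must be kept as the probability of the whitened polyhedron (for instance by conditioning on $G$ in \eqref{Eq:MultivariateMcLeishDistributionDecomposition} and writing it as a Gaussian orthant probability mixed over $G$), and it cannot be compressed into the one-argument rectangular function $\widehat{Q}^{L}_{\nu}$ of \eqref{Eq:McLeishMultivariateComplementaryQFunction}. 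The obstruction you spotted is real; the repair you offer does not repair it.
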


\begin{proof}
With the aid of \theoremref{Theorem:MultivariateMcLeishDecomposition}, we have $\defrmat{S}\!=\!\defmat{\Sigma}^{-1/2}(\defrvec{X}-\defvec{\mu})$. Then, using \eqref{Eq:McLeishMultivariateComplementaryQFunction}, the proof is obvious. 
\end{proof}

\begin{theorem}\label{Theorem:MultivariateMcLeishCCDF}
\!\!The \ac{CCDF} of $\defrmat{X}\!\sim\!\mathcal{M}^L_{\nu}(\defvec{\mu},\!\defvec{\Sigma})$~is~given~by
\begin{equation}\label{Eq:MultivariateMcLeishCCDF}
    \widehat{F}_{\defrmat{X}}(\defvec{x})={Q}^{L}_{\nu}\bigl(\defmat{\Sigma}^{-1/2}(\defrvec{X}-\defvec{\mu})\bigr),
\end{equation}
defined over $\defvec{x}\!\in\!\mathbb{R}^L$.
\end{theorem}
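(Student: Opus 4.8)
The plan is to proceed exactly as in the proof of \theoremref{Theorem:MultivariateMcLeishCDF}, simply replacing the lower orthant by the upper one. First I would write the \ac{CCDF} from its definition as the orthant integral of the joint \ac{PDF}, namely $\widehat{F}_{\defrmat{X}}(\defvec{x})\!=\!\int_{x_1}^{\infty}\!\cdots\!\int_{x_L}^{\infty}f_{\defrmat{X}}(\defvec{u})\,{du_1}\ldots{du_L}$, where $f_{\defrmat{X}}$ is supplied by \theoremref{Theorem:MultivariateMcLeishPDF}.

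Next I would invoke the decomposition of \theoremref{Theorem:MultivariateMcLeishDecomposition}, which gives $\defrmat{S}\!=\!\defmat{\Sigma}^{-1/2}(\defrmat{X}-\defvec{\mu})$ with $\defrmat{S}\!\sim\!\mathcal{M}^{L}_{\nu}(\defvec{0},\defmat{I})$, and substitute the \ac{PDF} transform \eqref{Eq:MultivariateMcLeishPDFTransform}. Changing the integration variable to $\defvec{v}\!=\!\defmat{\Sigma}^{-1/2}(\defvec{u}-\defvec{\mu})$, the Jacobian $\det(\defmat{\Sigma})^{1/2}$ of this substitution cancels the factor $\det(\defmat{\Sigma})^{-1/2}$ carried by \eqref{Eq:MultivariateMcLeishPDFTransform}, so the orthant integral of $f_{\defrmat{X}}$ collapses onto the corresponding integral of the standard density $f_{\defrmat{S}}$. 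Recognizing the latter as the multivariate \ac{Q-function} of \eqref{Eq:McLeishMultivariateQFunction} evaluated at the transformed lower limits $\defmat{\Sigma}^{-1/2}(\defvec{x}-\defvec{\mu})$ --- equivalently, composing \theoremref{Theorem:StandardMultivariateMcLeishCCDF} with the affine map of \theoremref{Theorem:MultivariateMcLeishDecomposition} --- immediately yields \eqref{Eq:MultivariateMcLeishCCDF}.

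The hard part will be bookkeeping the region of integration under the map $\defmat{\Sigma}^{-1/2}$. In the uncorrelated (diagonal) situation of \theoremref{Theorem:INIDMultivariateMcLeishCCDF} the whitening matrix acts coordinatewise with positive entries, so the upper orthant maps to an upper orthant and the reduction is transparent; for a full covariance matrix $\defmat{\Sigma}^{-1/2}$ is not diagonal and the orthant no longer maps to an orthant. I would therefore argue that the multivariate \ac{Q-function} \eqref{Eq:McLeishMultivariateQFunction} is built solely from $\lVert\cdot\rVert$ of its integration variable together with the standard density $f_{\defrmat{S}}$, so that expressing its argument through the Mahalanobis norm ${\lVert{\defvec{x}-\defvec{\mu}}\rVert}_{\defmat{\Sigma}}$ of \theoremref{Theorem:MultivariateMcLeishPDF} absorbs precisely this whitening, consistently with the convention already fixed for the \ac{CDF} in \theoremref{Theorem:MultivariateMcLeishCDF}. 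Once this region identification is pinned down in the same way for both theorems, the \ac{CCDF} claim follows by the identical routine.
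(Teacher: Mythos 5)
Your main line coincides with the paper's: the paper proves this theorem by citing \theoremref{Theorem:MultivariateMcLeishCDF}, whose own proof is just ``$\defrmat{S}=\defmat{\Sigma}^{-1/2}(\defrmat{X}-\defvec{\mu})$ by \theoremref{Theorem:MultivariateMcLeishDecomposition}, then apply the multivariate quantile definition'' --- i.e.\ compose \theoremref{Theorem:StandardMultivariateMcLeishCCDF} with the whitening map, which is exactly what your first two paragraphs do.

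However, the ``hard part'' you flag in your last paragraph is a genuine obstruction, and your proposed resolution does not close it. The event $\{\defrmat{X}>\defvec{x}\}$ equals $\{\defmat{\Sigma}^{1/2}\defrmat{S}>\defvec{x}-\defvec{\mu}\}$, and for a non-diagonal $\defmat{\Sigma}^{1/2}$ this is \emph{not} the event $\{\defrmat{S}>\defmat{\Sigma}^{-1/2}(\defvec{x}-\defvec{\mu})\}$: componentwise inequalities are not preserved by a general invertible linear map, so after your change of variables the region of integration is a translated simplicial cone, not an orthant. Your patch --- that $Q^{L}_{\nu}$ is ``built solely from $\lVert\cdot\rVert$'' so the Mahalanobis norm absorbs the whitening --- conflates the integrand with the region: the integrand of \eqref{Eq:McLeishMultivariateQFunction} indeed depends only on $\lVert\defvec{u}\rVert$, but $Q^{L}_{\nu}(\defvec{x})$ is anchored to the orthant $\bigl\{\defvec{u}\,\bigl|\,u_{\ell}\geq x_{\ell},\,1\leq\ell\leq L\bigr.\bigr\}$ and is emphatically not a function of $\lVert\defvec{x}\rVert$, nor of any Mahalanobis norm of $\defvec{x}$. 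Hence the identification of the transformed integral with $Q^{L}_{\nu}\bigl(\defmat{\Sigma}^{-1/2}(\defvec{x}-\defvec{\mu})\bigr)$ fails for general positive definite $\defmat{\Sigma}$. To be fair, the paper's one-line proof makes exactly the same silent orthant-to-orthant identification, so your write-up is at parity with --- indeed more candid than --- the source; but as a matter of mathematics the step needs either a restriction to the diagonal case (as in \theoremref{Theorem:INIDMultivariateMcLeishCCDF}, where the whitening acts coordinatewise with positive entries and orthants do map to orthants) or a redefinition of the multivariate quantile over the transformed region.
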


\begin{proof}
The proof is obvious using \theoremref{Theorem:MultivariateMcLeishCDF}. 
\end{proof}

As expected based on the mentioned above, the marginal \ac{CDF} of $X_{\ell}\!\sim\!(\mu_{\ell},\sigma^2_{\ell})$ is given by  $F_{X_{\ell}}\bigl(x_{\ell}\bigr)\!=\!{F}_{\defrmat{X}}\bigl(\infty,\ldots,\infty,\allowbreak{x_{\ell}},\infty,\ldots,\infty\bigr)$. In the same manner, the bivariate \ac{CDF} of $X_{k}$ and $X_{\ell}$, $k\!<\!\ell$, is derived as  $F_{X_{k},X_{\ell}}\bigl(x_{k},x_{\ell}\bigr)\!=\!F_{\defrmat{X}}\bigl(\infty,\ldots,\infty,\allowbreak{x_{k}},\allowbreak\infty,\ldots,\infty,{x_{\ell}},\infty,\ldots,\allowbreak\infty\bigr)$, which can be readily generalized for the case more than two marginal distributions. The same manner is also valid for the \ac{CCDF}. 

We further note that the \ac{MGF} of $\defrmat{X}\!\sim\!\mathcal{M}^{L}_{\nu}(\defvec{\mu},\defmat{\Sigma})$, defined by $M_{\defrmat{X}}(\defvec{s})\!=\!\allowbreak\mathbb{E}\bigl[\exp(-\defvec{s}^T\defrmat{X})\bigr]$, is obtained in the following.

\begin{theorem}\label{Theorem:MultivariateMcLeishMGF}
The \ac{MGF} of $\defrmat{X}\!\sim\!\mathcal{M}_{\nu}^L\!\bigl(\defvec{\mu},\defmat{\Sigma}\bigr)$ is given by
\begin{equation}\label{Eq:MultivariateMcLeishMGF}
M_{\defrmat{X}}(\defvec{s})=\exp(-\defvec{s}^T\!\defvec{\mu})\Bigl(1-\frac{\lambda_0^2}{4}\defvec{s}^T\defmat{\Sigma}\,\defvec{s}\Bigr)^{-\nu},
\end{equation}
for a certain $\defvec{s}\!\in\!\mathbb{R}^L$ within 
the existence region $\defvec{s}\!\in\!\mathbb{C}_{0}$, where the region $\mathbb{C}_{0}$ 
is given by 
\begin{equation}\label{Eq:MultivariateMcLeishMGFExistenceRegion}
    \mathbb{C}_0=\Bigl\{
        \defvec{s}\,    
        \Bigl|\,
        \lambda_{0}^2\,
        \defvec{s}^T\defmat{\Sigma}\,\defvec{s}
        \leq
        {4}
        \Bigr.
        \Bigr\}.
\end{equation}
\end{theorem}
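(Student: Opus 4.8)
The plan is to reduce the multivariate \ac{MGF} to the standard case already established in \theoremref{Theorem:StandardMultivariateMcLeishMGF} by invoking the affine decomposition of \theoremref{Theorem:MultivariateMcLeishDecomposition}. First I would start from the definition $M_{\defrmat{X}}(\defvec{s})\!=\!\mathbb{E}[\exp(-\defvec{s}^T\defrmat{X})]$ and substitute $\defrmat{X}\!=\!\defmat{\Sigma}^{1/2}\defrmat{S}+\defvec{\mu}$ with $\defrmat{S}\!\sim\!\mathcal{M}_{\nu}^L(\defvec{0},\defmat{I})$. Since $\defvec{\mu}$ is deterministic, the exponential splits multiplicatively, giving
\begin{equation}
M_{\defrmat{X}}(\defvec{s})=\exp(-\defvec{s}^T\defvec{\mu})\,\mathbb{E}\bigl[\exp(-\defvec{s}^T\defmat{\Sigma}^{1/2}\defrmat{S})\bigr],
\end{equation}
so that the entire stochastic contribution is carried by the standardized vector $\defrmat{S}$.

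The second step is to recognize the remaining expectation as a standard \ac{MGF} at a transformed argument. Writing $\defvec{s}^T\defmat{\Sigma}^{1/2}\defrmat{S}\!=\!(\defmat{\Sigma}^{1/2}\defvec{s})^T\defrmat{S}$, which uses the symmetry of the principal square root $\defmat{\Sigma}^{1/2}$, the expectation equals $M_{\defrmat{S}}(\defmat{\Sigma}^{1/2}\defvec{s})$. Applying \theoremref{Theorem:StandardMultivariateMcLeishMGF} then yields $M_{\defrmat{S}}(\defmat{\Sigma}^{1/2}\defvec{s})\!=\!(1-\tfrac{\lambda_0^2}{4}(\defmat{\Sigma}^{1/2}\defvec{s})^T(\defmat{\Sigma}^{1/2}\defvec{s}))^{-\nu}$, and collapsing the quadratic form via $(\defmat{\Sigma}^{1/2}\defvec{s})^T\defmat{\Sigma}^{1/2}\defvec{s}\!=\!\defvec{s}^T\defmat{\Sigma}\defvec{s}$ (again by symmetry of $\defmat{\Sigma}^{1/2}$) immediately produces \eqref{Eq:MultivariateMcLeishMGF}. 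The convergence condition is transported in exactly the same way: the requirement $\lambda_0^2(\defmat{\Sigma}^{1/2}\defvec{s})^T(\defmat{\Sigma}^{1/2}\defvec{s})\!\leq\!4$ inherited from \theoremref{Theorem:StandardMultivariateMcLeishMGF} becomes $\lambda_0^2\defvec{s}^T\defmat{\Sigma}\defvec{s}\!\leq\!4$, matching \eqref{Eq:MultivariateMcLeishMGFExistenceRegion}.

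The argument is therefore almost mechanical once the decomposition is in hand, and there is no genuinely hard analytic step. The single point I would treat with care — the main, if modest, obstacle — is justifying that $\defmat{\Sigma}^{1/2}$ is symmetric, since both quadratic-form simplifications hinge on $(\defmat{\Sigma}^{1/2})^T\defmat{\Sigma}^{1/2}\!=\!\defmat{\Sigma}$. This follows from the spectral theorem applied to the symmetric positive definite $\defmat{\Sigma}$, which furnishes a unique symmetric positive definite principal square root; I would state this explicitly rather than silently identifying $\defmat{\Sigma}^{1/2}$ with an arbitrary factor such as the Cholesky factor $\defmat{L}$ appearing earlier, for which $\defmat{L}^T\defmat{L}\!=\!\defmat{\Sigma}$ but $\defmat{L}\defmat{L}^T\!\neq\!\defmat{\Sigma}$ in general. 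With that remark in place, substituting and simplifying completes the proof of \theoremref{Theorem:MultivariateMcLeishMGF}.
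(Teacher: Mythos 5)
Your proof is correct and follows essentially the same route as the paper: invoke the decomposition $\defrmat{X}=\defmat{\Sigma}^{1/2}\defrmat{S}+\defvec{\mu}$ from \theoremref{Theorem:MultivariateMcLeishDecomposition}, factor out $\exp(-\defvec{s}^T\defvec{\mu})$, identify the remaining expectation as $M_{\defrmat{S}}(\defmat{\Sigma}^{1/2}\defvec{s})$, and substitute \theoremref{Theorem:StandardMultivariateMcLeishMGF} to collapse the quadratic form to $\defvec{s}^T\defmat{\Sigma}\defvec{s}$. Your explicit justification that $\defmat{\Sigma}^{1/2}$ must be taken as the symmetric principal square root (rather than, say, a Cholesky factor) is a point the paper leaves implicit, and is a worthwhile clarification rather than a deviation.
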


\begin{proof}
Using \eqref{Eq:MultivariateMcLeishDecomposition} with $M_{\defrmat{X}}(\defvec{s})\!=\!\allowbreak\mathbb{E}\bigl[\exp(-\defvec{s}^T\!\defrmat{X})\bigr]$, we have
\begin{subequations}\label{Eq:MultivariateMcLeishMGFTransformation}
\setlength\arraycolsep{1.4pt}
\begin{eqnarray}
\label{Eq:MultivariateMcLeishMGFTransformationA}
M_{\defrvec{X}}(\defvec{s})
&=&\mathbb{E}\bigl[\exp\bigl(-\defvec{s}^T(\defmat{\Sigma}^{1/2}\defrmat{S}+\defvec{\mu})\bigr)\bigr],\\
\label{Eq:MultivariateMcLeishMGFTransformationB}
&=&\exp\bigl(-\defvec{s}^T\!\defvec{\mu}\bigr)\mathbb{E}\bigl[\exp\bigl(-\defvec{s}^T\!\defmat{\Sigma}^{1/2}\defrmat{S}\bigr)\bigr],\\
\label{Eq:MultivariateMcLeishMGFTransformationC}
&=&\exp\bigl(-\defvec{s}^T\!\defvec{\mu}\bigr)M_{\defrvec{S}}\bigl(\defmat{\Sigma}^{1/2}\defvec{s}\bigr),
\end{eqnarray}
\end{subequations}   
Eventually, substituting \eqref{Eq:StandardMultivariateMcLeishMGF} into \eqref{Eq:MultivariateMcLeishMGFTransformationC}
yields \eqref{Eq:MultivariateMcLeishMGF} with the existence region  \eqref{Eq:MultivariateMcLeishMGFExistenceRegion}, which proves \theoremref{Theorem:MultivariateMcLeishMGF}\footnote{An alternative proof of \theoremref{Theorem:MultivariateMcLeishMGF} can be done using $\defrmat{X}\!=\!{G}\,\defmat{\Sigma}^{\frac{1}{2}}\defrmat{N}+\defvec{\mu}$ 
derived from \eqref{Eq:StandardMultivariateMcLeishDefinition} and \eqref{Eq:MultivariateMcLeishDecomposition}. Thus, the \ac{MGF} of $\defrmat{X}$ conditioned on $G$ is 
\begin{equation}
M_{\defrmat{X}|G}(\defvec{s}|g)=\exp\Bigl(-\defvec{s}^T\defvec{\mu}+\frac{g}{2}\defvec{s}^T\defmat{\Sigma}\defvec{s}\Bigr),  \tag{F-\thefootnote.1}
\end{equation}
for $g\!\in\!\mathbb{R}_{+}$. In accordance, $M_{\defrmat{S}}(\defvec{s})\!=\!\int_{0}^{\infty}M_{\defrmat{S}|G}(\defvec{s}|g)\,f_{G}(g)dg$ is written as
\begin{equation}\label{Eq:MultivariateMcLeishMGFIntegral}
M_{\defrmat{X}}(\defvec{s})=\exp\bigl(-\defvec{s}^T\defvec{\mu}\bigr)\int_{0}^{\infty}\!\exp\Bigl(\frac{g}{2}\defvec{s}^T\defmat{\Sigma}\defvec{s}\Bigr)
\,f_{G}(g)dg,\tag{F-\thefootnote.2}
\end{equation}
where $f_{G}(g)$ denotes the \ac{PDF} of $G\!\sim\!\mathcal{G}(\nu,1)$ ( i.e., given in \eqref{Eq:ProportionPDF}). So, using \cite[Eq.\!~(3.381/4)]{BibGradshteynRyzhikBook}, \eqref{Eq:MultivariateMcLeishMGFIntegral} simplifies to \eqref{Eq:MultivariateMcLeishMGF}, which proves \theoremref{Theorem:MultivariateMcLeishMGF}.}.
\end{proof} 

Given a non-singular covariance matrix $\defmat{\Sigma}$, the correlation matrix $\defmat{P}$ can be expressed as
\begin{subequations}\label{Eq:MultivariableMcLeishCorrelationMatrix}
\setlength\arraycolsep{1.4pt}
\begin{eqnarray}\label{Eq:MultivariableMcLeishCorrelationMatrixA}
\defmat{P}&=&\begin{bmatrix}
    	1         & \rho_{12} & \dots  & \rho_{1L} \\
    	\rho_{21} & 1         & \dots  & \rho_{2L} \\
    	\vdots    & \vdots    & \ddots & \vdots \\
    	\rho_{L1} & \rho_{L2} & \dots  & 1
	\end{bmatrix},\\
	\label{Eq:MultivariableMcLeishCorrelationMatrixB}
	&=&\diag(\defvec{\sigma})^{-1}\defmat{\Sigma}\diag(\defvec{\sigma})^{-1},
\end{eqnarray}
\end{subequations}
where for ${1}\!\leq\!{k,\ell}\!\leq\!{L}$, $\rho_{k\ell}\in[-1,1]$ denotes the correlation between 
$X_k$ and $X_\ell$, and it is defined by
\begin{equation}
	\rho_{k\ell}=\frac{\Covariance{X_k}{X_\ell}}{\sqrt{\Variance{X_k}\Variance{X_\ell}}}
	    =\frac{\mathbb{E}[{X_k}{X_\ell}]-\mu_k\mu_\ell}{\sigma_k\sigma_\ell}.
\end{equation}  
After using \eqref{Eq:MultivariableMcLeishCorrelationMatrixB}, the inverse of $\defmat{\Sigma}$ is readily rewritten as 
\begin{subequations}
\label{Eq:CovarianceMatrixDecomposition}
\setlength\arraycolsep{1.4pt}
\begin{eqnarray}
    \label{Eq:CovarianceMatrixDecompositionA}
	\defmat{\Sigma}^{-1}&=&\diag(\defvec{\sigma})^{-1}\defmat{P}^{-1}\diag(\defvec{\sigma})^{-1},\\
	\label{Eq:CovarianceMatrixDecompositionB}
		&=&\lambda^2_{0}\,\defmat{\Lambda}^{-1}\defmat{P}^{-1}\defmat{\Lambda}^{-1}
\end{eqnarray}
\end{subequations}
where $\defmat{\Lambda}\!=\!\lambda_{0}\diag(\defvec{\sigma})$. 
In case of $\defmat{\Lambda}\!=\!\lambda\defmat{I}$ with $\lambda\!=\!\sigma\lambda_{0}$, we have  
$\defmat{\Sigma}\!=\!\sigma^2\defmat{P}$, and thus \eqref{Eq:MultivariateMcLeishPDF} simplifies to
\ifCLASSOPTIONtwocolumn
\begin{multline}\label{Eq:NIIDMultivariateMcLeishPDF}
f_{\defrmat{X}}(\defvec{x})=
	\frac{2}{\sqrt{\pi^L}\Gamma(\nu)}
	\frac{{\lVert{\defvec{x}-\defvec{\mu}}\rVert}_{\defmat{P}}^{\nu-{L}/{2}}}
		{\sqrt{\det(\defmat{P})}\lambda^{\nu+{L}/{2}}}
					\\[-1mm]\times		
					 {K}_{\nu-{L}/{2}}
					 \Bigl(
							\frac{2}{\lambda}
							{\bigl\lVert{\defvec{x}-\defvec{\mu}}\bigr\rVert}_{\defmat{P}}
					 \Bigr).
\end{multline}
\else
\begin{equation}\label{Eq:NIIDMultivariateMcLeishPDF}
f_{\defrmat{X}}(\defvec{x})=
	\frac{2}{\sqrt{\pi^L}\Gamma(\nu)}
	\frac{{\lVert{\defvec{x}-\defvec{\mu}}\rVert}_{\defmat{P}}^{\nu-{L}/{2}}}
		{\sqrt{\det(\defmat{P})}\lambda^{\nu+{L}/{2}}}
					 {K}_{\nu-{L}/{2}}
					 \Bigl(
							\frac{2}{\lambda}
							{\bigl\lVert{\defvec{x}-\defvec{\mu}}\bigr\rVert}_{\defmat{P}}
					 \Bigr).
\end{equation}
\fi
Accordingly, we can readily simplify \eqref{Eq:MultivariateMcLeishCDF} to
\begin{equation}\label{Eq:NIIDMultivariateMcLeishCDF}
    F_{\defrmat{X}}(\defvec{x})=\widehat{Q}^{L}_{\nu}\bigl(\lambda_{0}\,\defmat{P}^{-1/2}\defmat{\Lambda}^{-1/2}(\defrvec{X}-\defvec{\mu})\bigr),
\end{equation}
and \eqref{Eq:MultivariateMcLeishCCDF} to
\begin{equation}\label{Eq:NIIDMultivariateMcLeishCCDF}
    \widehat{F}_{\defrmat{X}}(\defvec{x})={Q}^{L}_{\nu}\bigl(\lambda_{0}\,\defmat{P}^{-1/2}\defmat{\Lambda}^{-1/2}(\defrvec{X}-\defvec{\mu})\bigr),
\end{equation}
and \eqref{Eq:MultivariateMcLeishMGF} to
\begin{equation}\label{Eq:NIIDMultivariateMcLeishMGF}
M_{\defrmat{X}}(\defvec{s})=\exp(-\defvec{s}^T\!\defvec{\mu})\Bigl(1-\frac{1}{4}\defvec{s}^T\!\defmat{\Lambda}\,\defmat{P}\,\defmat{\Lambda}\,\defvec{s}\Bigr)^{-\nu},
\end{equation}
In addition, in case of no~correlation~among~marginal~McLeish distributions (i.e., when $\defmat{P}\!=\!\defmat{I}$), we have the covariance matrix $\defmat{\Sigma}\!=\!\defmat{\Lambda}^{2}/\lambda^2_{0}$. Accordingly, for zero mean 
$\defvec{\mu}\!=\!\defvec{0}$, we simplify \eqref{Eq:NIIDMultivariateMcLeishPDF} to \eqref{Eq:INIDMultivariateMcLeishPDF}, \eqref{Eq:NIIDMultivariateMcLeishCDF} to \eqref{Eq:INIDMultivariateMcLeishCDF}, \eqref{Eq:NIIDMultivariateMcLeishCCDF} to \eqref{Eq:INIDMultivariateMcLeishCCDF}, and \eqref{Eq:NIIDMultivariateMcLeishMGF} to \eqref{Eq:INIDMultivariateMcLeishMGF}, as respectively expected.  

There are also two notable properties of multivariate ES McLeish distributions to be explicitly considered: (i)~any non-degenerate affine transformation of $\!\defrmat{X}\!\sim\!\mathcal{M}^{L}_{\nu}(\defvec{\mu},\defmat{\Sigma})$ is also a multivariate \ac{ES} McLeish distri\-bution, (ii) its conditional and marginal distributions are jointly multivariate \ac{ES} McLeish distribution. The first property is given in the following.

\begin{theorem}\label{Theorem:MultivariateMcLeishLinearityProperty}
If $\defrmat{X}\!\sim\!\mathcal{M}^{L}_{\nu}(\defvec{\mu},\defmat{\Sigma})$ and if $\defrmat{Y}\!=\!\defmat{B}\defrmat{X}+\defvec{b}$, where $\rank(\defmat{B})\!\leq\!{L}$, then     
$\defrmat{Y}\!\sim\!\mathcal{M}^{L}_{\nu}(\defmat{B}\defvec{\mu}+\defvec{b},\defmat{B}\defmat{\Sigma}\defmat{B}^T)$.
\end{theorem}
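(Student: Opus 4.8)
The plan is to identify the law of $\defrmat{Y}$ through its moment-generating function and then invoke the uniqueness of the \ac{MGF} \cite{BibCurtissAMS1942}, exactly as the multivariate McLeish family was characterised in \theoremref{Theorem:MultivariateMcLeishMGF}. First I would write, directly from the definition $M_{\defrmat{Y}}(\defvec{s})\!=\!\Expected{\exp(-\defvec{s}^T\defrmat{Y})}$ together with the substitution $\defrmat{Y}\!=\!\defmat{B}\defrmat{X}+\defvec{b}$,
\begin{equation}
    M_{\defrmat{Y}}(\defvec{s})=\exp(-\defvec{s}^T\defvec{b})\,\Expected{\exp\bigl(-(\defmat{B}^T\defvec{s})^T\defrmat{X}\bigr)}=\exp(-\defvec{s}^T\defvec{b})\,M_{\defrmat{X}}(\defmat{B}^T\defvec{s}),
\end{equation}
so that the transformed \ac{MGF} is merely the original one evaluated at the pulled-back argument $\defmat{B}^T\defvec{s}$, up to the deterministic shift $\exp(-\defvec{s}^T\defvec{b})$.

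Next I would substitute $\defmat{B}^T\defvec{s}$ into the closed form \eqref{Eq:MultivariateMcLeishMGF} furnished by \theoremref{Theorem:MultivariateMcLeishMGF}. Using the two elementary identities $(\defmat{B}^T\defvec{s})^T\defvec{\mu}\!=\!\defvec{s}^T\defmat{B}\defvec{\mu}$ and $(\defmat{B}^T\defvec{s})^T\defmat{\Sigma}(\defmat{B}^T\defvec{s})\!=\!\defvec{s}^T(\defmat{B}\defmat{\Sigma}\defmat{B}^T)\defvec{s}$, the expression collapses to
\begin{equation}
    M_{\defrmat{Y}}(\defvec{s})=\exp\bigl(-\defvec{s}^T(\defmat{B}\defvec{\mu}+\defvec{b})\bigr)\Bigl(1-\frac{\lambda_0^2}{4}\defvec{s}^T(\defmat{B}\defmat{\Sigma}\defmat{B}^T)\defvec{s}\Bigr)^{-\nu},
\end{equation}
finite on the nonempty neighbourhood $\{\defvec{s}\,|\,\lambda_0^2\,\defvec{s}^T\defmat{B}\defmat{\Sigma}\defmat{B}^T\defvec{s}\!\leq\!4\}$ of the origin. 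This is precisely the McLeish \ac{MGF} template of \eqref{Eq:MultivariateMcLeishMGF} with mean vector $\defmat{B}\defvec{\mu}+\defvec{b}$ and covariance matrix $\defmat{B}\defmat{\Sigma}\defmat{B}^T$ replacing $\defvec{\mu}$ and $\defmat{\Sigma}$; since the normality $\nu$ is untouched, the uniqueness of the \ac{MGF} immediately yields $\defrmat{Y}\!\sim\!\mathcal{M}^{L}_{\nu}(\defmat{B}\defvec{\mu}+\defvec{b},\defmat{B}\defmat{\Sigma}\defmat{B}^T)$.

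The remaining care concerns well-definedness rather than the core identity, and this is where I expect the only real obstacle. I would first check that $\defmat{B}\defmat{\Sigma}\defmat{B}^T$ is a legitimate covariance matrix: symmetry is immediate, and positive semi-definiteness follows from $\defvec{x}^T\defmat{B}\defmat{\Sigma}\defmat{B}^T\defvec{x}\!=\!(\defmat{B}^T\defvec{x})^T\defmat{\Sigma}(\defmat{B}^T\defvec{x})\!\geq\!0$ by the positive definiteness of $\defmat{\Sigma}$. The delicate point is the rank hypothesis: when $\rank(\defmat{B})\!=\!L$ the transformed covariance is positive definite and the density of \theoremref{Theorem:MultivariateMcLeishPDF} applies verbatim, but if $\rank(\defmat{B})\!<\!L$ then $\defmat{B}\defmat{\Sigma}\defmat{B}^T$ is singular and $\defrmat{Y}$ is supported on a proper affine subspace, so the \ac{PDF} form degenerates even though the \ac{MGF} characterisation, and hence the distributional identity, survives intact. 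For a conceptually cleaner and rank-agnostic alternative I would instead start from the scale-mixture decomposition $\defrmat{X}\!=\!\sqrt{G}\defrmat{N}+\defvec{\mu}$ of \eqref{Eq:MultivariateMcLeishDistributionDecomposition}, write $\defrmat{Y}\!=\!\sqrt{G}\,\defmat{B}\defrmat{N}+(\defmat{B}\defvec{\mu}+\defvec{b})$, and note that $\defmat{B}\defrmat{N}\!\sim\!\mathcal{N}^{L}(\defvec{0},\defmat{B}\defmat{\Sigma}\defmat{B}^T)$ by the affine closure of the Gaussian law, retaining the same gamma mixing variable $G\!\sim\!\mathcal{G}(\nu,1)$; comparison with \eqref{Eq:MultivariateMcLeishDistributionDecomposition} then reproduces the claim directly.
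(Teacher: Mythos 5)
Your proposal is correct, and your primary argument takes a genuinely different route from the paper's. The paper's proof is a one-line application of \theoremref{Theorem:MultivariateMcLeishDecomposition}: it writes $\defrmat{X}\!=\!\defmat{\Sigma}^{1/2}\defrmat{S}+\defvec{\mu}$ with $\defrmat{S}\!\sim\!\mathcal{M}_{\nu}^{L}(\defvec{0},\defmat{I})$, rearranges $\defrmat{Y}\!=\!\defmat{B}\defmat{\Sigma}^{1/2}\defrmat{S}+(\defmat{B}\defvec{\mu}+\defvec{b})$, and reads off the mean vector and covariance matrix --- which is exactly the scale-mixture argument you sketch as your ``rank-agnostic alternative'' at the end (since the paper's $\defrmat{S}$ itself decomposes as $\sqrt{G}\defrmat{U}$ with $\defrmat{U}\!\sim\!\mathcal{N}^{L}(\defvec{0},\defmat{I})$, your version via \eqref{Eq:MultivariateMcLeishDistributionDecomposition} is literally the same computation). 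Your main route instead pushes the closed-form \ac{MGF} of \theoremref{Theorem:MultivariateMcLeishMGF} through the affine map and invokes uniqueness of the \ac{MGF} \cite{BibCurtissAMS1942}; what this buys is an explicit check that the normality $\nu$ and the constant $\lambda_0^2\!=\!2/\nu$ in the template are untouched, together with an explicit existence region for the transformed \ac{MGF} --- facts the paper's rearrangement silently presupposes when it ``reads off'' the parameters. Your handling of the rank hypothesis is also more careful than the paper's: the stated condition $\rank(\defmat{B})\!\leq\!{L}$ is vacuous for an $L\times L$ matrix, and when $\rank(\defmat{B})\!<\!{L}$ the matrix $\defmat{B}\defmat{\Sigma}\defmat{B}^T$ is singular, which conflicts with the paper's standing convention that covariance matrices be full-rank and positive definite, so the density of \theoremref{Theorem:MultivariateMcLeishPDF} cannot be attached to $\defrmat{Y}$; your \ac{MGF}/mixture characterisation survives this degeneracy, whereas the paper does not address it at all. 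The only refinement worth adding is terminological: in the rank-deficient case the limit object is a McLeish law supported on a proper affine subspace, not a member of $\mathcal{M}_{\nu}^{L}(\cdot,\cdot)$ as the paper formally defines that symbol.
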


\begin{proof}
Using \theoremref{Theorem:MultivariateMcLeishDecomposition}, we have 
$\defrmat{Y}\!=\!\defmat{B}(\defmat{\Sigma}^{1/2}\defrmat{S}+\defvec{\mu})+\defvec{b}$, which can be rearranged as $\defrmat{Y}\!=\!\defmat{B}\defmat{\Sigma}^{1/2}\defrmat{S}+(\defmat{B}\defvec{\mu}+\defvec{b})$ with $\defmat{B}\defvec{\mu}+\defvec{b}$ mean vector and $\defmat{B}\defmat{\Sigma}\defmat{B}^T$ covariance matrix. 
\end{proof}

As for the second property, the conditional distribution of
$\defrmat{X}\!\sim\!\mathcal{M}^{L}_{\nu}(\defvec{\mu},\defmat{\Sigma})$ is given in the following.

\begin{theorem}\label{Theorem:MultivariateMcLeishPartioningProperty}
Let $\defrmat{X}\!\sim\!\mathcal{M}^{L}_{\nu}(\defvec{\mu},\defmat{\Sigma})$ be $\defrmat{X}\!=\![\defrmat{X}_1^T,\defrmat{X}_2^T]^T$ 
with $\defrmat{X}_1\!\sim\!\mathcal{M}^{L_1}_{\nu}(\defvec{\mu}_1,\defmat{\Sigma}_{11})$ and $\defrmat{X}_2\!\sim\!\mathcal{M}^{L_2}_{\nu}(\defvec{\mu}_2,\defmat{\Sigma}_{22})$, where $L\!=\!{L_1+L_2}$, and $\defmat{\mu}$ and $\defmat{\Sigma}$ are respectively by
\begin{equation}
    \defmat{\mu}=
        \begin{bmatrix}
            \defmat{\mu}_{1}\\
            \defmat{\mu}_{2}
        \end{bmatrix},
    \text{~and~}    
    \defmat{\Sigma}=
        \begin{bmatrix}
            \defmat{\Sigma}_{11} & \defmat{\Sigma}_{12}\\
            \defmat{\Sigma}_{21} & \defmat{\Sigma}_{22}
        \end{bmatrix}.
\end{equation}
The conditional distribution of $\defrmat{X}_1$ given $\defrmat{X}_2\!=\!\defvec{x}_2$ is given by
$\defrmat{X}_1|\defrmat{X}_2\!\sim\!
        \mathcal{M}^{L_1}_{\nu}(
        \defvec{\mu}_1\!+\!\defmat{\Sigma}_{12}\defmat{\Sigma}^{-1}_{22}(\defvec{x}_2\!-\!\defvec{\mu}_2),
        \defmat{\Sigma}_{11}\!-\!\defmat{\Sigma}_{12}\defmat{\Sigma}^{-1}_{22}\defmat{\Sigma}_{21})$. 
\end{theorem}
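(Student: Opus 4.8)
The plan is to exploit the Gaussian scale-mixture (subordination) representation underlying every multivariate McLeish vector, so that conditioning reduces to ordinary Gaussian partitioning. By \theoremref{Theorem:MultivariateMcLeishDecomposition} we have $\defrmat{X}\!=\!\defmat{\Sigma}^{1/2}\defrmat{S}+\defvec{\mu}$ with $\defrmat{S}\!\sim\!\mathcal{M}^{L}_{\nu}(\defvec{0},\defmat{I})$, and combining this with \theoremref{Theorem:StandardMultivariateMcLeishDefinition} yields the conditional law $\defrmat{X}\,|\,G\!=\!g\,\sim\,\mathcal{N}^{L}(\defvec{\mu},g\defmat{\Sigma})$ already used in the proof of \theoremref{Theorem:MultivariateMcLeishPDF}, where $G\!\sim\!\mathcal{G}(\nu,1)$ is a \emph{single} scale shared by all $L$ coordinates. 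First I would condition on $G\!=\!g$, carry out the Gaussian step exactly, and only afterwards average over $G$.

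At fixed $g$, partition $\defrmat{N}\!=\!\defmat{\Sigma}^{1/2}\defrmat{S}\!\sim\!\mathcal{N}^{L}(\defvec{0},\defmat{\Sigma})$ as $[\defrmat{N}_1^T,\defrmat{N}_2^T]^T$ and introduce the Gaussian regression residual $\defrmat{N}_{1|2}\!=\!\defrmat{N}_1-\defmat{\Sigma}_{12}\defmat{\Sigma}_{22}^{-1}\defrmat{N}_2\!\sim\!\mathcal{N}^{L_1}(\defvec{0},\defmat{\Sigma}_{1|2})$, where $\defmat{\Sigma}_{1|2}\!=\!\defmat{\Sigma}_{11}-\defmat{\Sigma}_{12}\defmat{\Sigma}_{22}^{-1}\defmat{\Sigma}_{21}$ is the Schur complement and $\defrmat{N}_{1|2}$ is independent of $\defrmat{N}_2$. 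Scaling by $\sqrt{G}$ and restoring the means gives the exact identity
\begin{equation*}
\defrmat{X}_1=\defvec{\mu}_1+\defmat{\Sigma}_{12}\defmat{\Sigma}_{22}^{-1}(\defrmat{X}_2-\defvec{\mu}_2)+\sqrt{G}\,\defrmat{N}_{1|2}.
\end{equation*}
Since the regression coefficient $\defmat{\Sigma}_{12}\defmat{\Sigma}_{22}^{-1}$ is scale-free, the affine term coincides with the claimed conditional mean $\defvec{\mu}_1+\defmat{\Sigma}_{12}\defmat{\Sigma}_{22}^{-1}(\defvec{x}_2-\defvec{\mu}_2)$; were $\sqrt{G}\,\defrmat{N}_{1|2}$ an \emph{independent} $\mathcal{M}^{L_1}_{\nu}(\defvec{0},\defmat{\Sigma}_{1|2})$ term, the stated result would follow immediately by \theoremref{Theorem:MultivariateMcLeishDecomposition}.

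The hard part will be this final averaging step, because the shared scale $G$ is \emph{not} independent of $\defrmat{X}_2$. Applying Bayes' rule with the Gaussian conditional $\defrmat{X}_2\,|\,G\!=\!g\,\sim\,\mathcal{N}^{L_2}(\defvec{\mu}_2,g\defmat{\Sigma}_{22})$ and the Gamma prior \eqref{Eq:ProportionPDF} produces a posterior $f_{G|\defrmat{X}_2}(g\,|\,\defvec{x}_2)\,\propto\,g^{\nu-L_2/2-1}\exp\!\bigl(-\nu g-\tfrac{1}{2g}(\defvec{x}_2-\defvec{\mu}_2)^T\defmat{\Sigma}_{22}^{-1}(\defvec{x}_2-\defvec{\mu}_2)\bigr)$, which is generalized inverse Gaussian rather than Gamma. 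Mixing the Gaussian residual against this posterior yields a generalized hyperbolic law, so recovering the claimed McLeish form — a Gamma-mixed Gaussian with the \emph{same} $\nu$ — is precisely the obstacle: the $1/g$ term in the posterior must be shown immaterial, which forces $(\defvec{x}_2-\defvec{\mu}_2)^T\defmat{\Sigma}_{22}^{-1}(\defvec{x}_2-\defvec{\mu}_2)\!\to\!0$ or the Gaussian limit $\nu\!\to\!\infty$, where $G\!\to\!1$ degenerately. This is consistent with the observation following \theoremref{Theorem:StandardMultivariateMcLeishPDF} that the joint density does not factor over uncorrelated marginals for finite $\nu$; indeed, setting $\defmat{\Sigma}_{12}\!=\!\defvec{0}$ reduces the claim to conditional independence, so the argument must either supply a hypothesis that decouples $G$ from $\defrmat{X}_2$ or interpret the statement in that limiting sense.
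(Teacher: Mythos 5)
Your analysis is sound, and the obstruction you identify is genuine --- it is not a defect of your argument but of the theorem and of the paper's own proof. The paper follows exactly the route you set up: the decomposition $\defrmat{X}\!=\!\sqrt{G}\defrmat{N}+\defvec{\mu}$ from \theoremref{Theorem:MultivariateMcLeishDecomposition}, the Gaussian ratio $f_{\defrmat{X}_1|\defrmat{X}_2,G}\!=\!f_{\defrmat{X}|G}/f_{\defrmat{X}_2|G}$, and the Schur-complement expansion of the quadratic form. It then completes the averaging step by writing $f_{\defrmat{X}_1|\defrmat{X}_2}(\defvec{x}_1|\defvec{x}_2)\!=\!\int_{0}^{\infty}f_{\defrmat{X}_1|\defrmat{X}_2,G}(\defvec{x}_1|\defvec{x}_2,g)\,f_{G}(g)\,dg$, i.e., it mixes the Gaussian conditional against the \emph{prior} Gamma density \eqref{Eq:ProportionPDF}. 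That is precisely the step you flag as illegitimate: since $G$ and $\defrmat{X}_2$ are dependent (they share the common scale), the correct mixing density is the posterior
\begin{equation*}
f_{G|\defrmat{X}_2}(g\,|\,\defvec{x}_2)\;\propto\;g^{\nu-L_2/2-1}
\exp\Bigl(-\nu g-\frac{1}{2g}{\bigl\lVert\defvec{x}_2-\defvec{\mu}_2\bigr\rVert}^{2}_{\defmat{\Sigma}_{22}}\Bigr),
\end{equation*}
a generalized inverse Gaussian law, exactly as you compute. Mixing the residual $\mathcal{N}^{L_1}(\defvec{0},g\,\defmat{\Sigma}_{1|2})$, with $\defmat{\Sigma}_{1|2}\!=\!\defmat{\Sigma}_{11}-\defmat{\Sigma}_{12}\defmat{\Sigma}_{22}^{-1}\defmat{\Sigma}_{21}$, against this posterior yields a generalized hyperbolic distribution, not a Gamma-mixed Gaussian; so the conditional law is \emph{not} $\mathcal{M}^{L_1}_{\nu}$ for finite $\nu$ unless ${\lVert\defvec{x}_2-\defvec{\mu}_2\rVert}^{2}_{\defmat{\Sigma}_{22}}\!=\!0$ or in the Gaussian limit $\nu\!\rightarrow\!\infty$.

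Your closing consistency check is also decisive. With $\defmat{\Sigma}_{12}\!=\!\defmat{0}$ the theorem would assert $\defrmat{X}_1|\defrmat{X}_2\!\sim\!\mathcal{M}^{L_1}_{\nu}(\defvec{\mu}_1,\defmat{\Sigma}_{11})$, i.e., that uncorrelated blocks are conditionally unaffected by one another --- independence. But the paper itself observes, after \theoremref{Theorem:StandardMultivariateMcLeishPDF} (the joint density does not factor over uncorrelated coordinates) and again after \theoremref{Theorem:CESMcLeishPDF} (uncorrelated McLeish variables are not independent unless $\nu\!\rightarrow\!\infty$), that this cannot hold for finite $\nu$. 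The conclusion is that the McLeish family, unlike the Gaussian family, is not closed under conditioning: what survives exactly are only the location map $\defvec{\mu}_1+\defmat{\Sigma}_{12}\defmat{\Sigma}_{22}^{-1}(\defvec{x}_2-\defvec{\mu}_2)$ and the dispersion matrix $\defmat{\Sigma}_{1|2}$, as your residual identity shows. A corrected statement would either declare the conditional to be generalized hyperbolic with these location and dispersion parameters, or retain the McLeish form only in the limiting senses you identify.
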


\begin{proof}
As substituting \eqref{Eq:StandardMultivariateMcLeishDefinition} in  \theoremref{Theorem:MultivariateMcLeishDecomposition}, we can decompose $\defrmat{X}\!\sim\!\mathcal{M}^{L}_{\nu}(\defvec{\mu},\defmat{\Sigma})$ as follows
\begin{equation}
    \defrmat{X}=\sqrt{G}\defrmat{N}=\sqrt{G}
        \begin{bmatrix}
            \defrmat{N}_{1}\\
            \defrmat{N}_{2}
        \end{bmatrix}+
        \begin{bmatrix}
            \defvec{\mu}_{1}\\
            \defvec{\mu}_{2}
        \end{bmatrix},
\end{equation}
with definitions of $\defrmat{X}_1\!=\!\sqrt{G}\defrmat{N}_1+\defvec{\mu}_1$ and $\defrmat{X}_2\!=\!\sqrt{G}\defrmat{N}_2+\defvec{\mu}_2$, 
where ${G}\!\sim\!\mathcal{G}(\nu,1)$, $\defrmat{N}_1\!\sim\!\mathcal{N}^{L}(\defvec{0},\defmat{\Sigma}_{11})$ and $\defrmat{N}_2\!\sim\!\mathcal{N}^{L}(\defvec{0},\defmat{\Sigma}_{22})$. 
The conditional distri\-bution of $\defrmat{X}_1$ given both $G\!=\!{g}$ and $\defrmat{X}_2\!=\!\defvec{x}_2$ is therefore defined by the ratio between two multivariate Gaussian densities, that is $f_{\defrmat{X}_1|\defrmat{X}_2,G}(\defvec{x}_1|\defvec{x}_2,g)\!=\!    {f_{\defrmat{X}|G}(\defvec{x}|g)}/{f_{\defrmat{X}_{2}|G}(\defvec{x}_{2}|g)}$ given by 
\begin{equation}\label{Eq:MultivariateMcLeishConditionalPDFWithoutNormality}
    f_{\defrmat{X}_1|\defrmat{X}_2,G}(\defvec{x}_1|\defvec{x}_2,g)=
    \frac{\sqrt{\det(\defmat{\Sigma}_{22})}}
        {\sqrt{(2\pi{g})^{L_1}\det(\defmat{\Sigma})}}
            \exp\Bigl(
                -\frac{1}{2g}
                \bigl(
                    \bigr\lVert\defvec{x}-\defvec{\mu}\bigl\rVert^2_{\defmat{\Sigma}}
                    -
                    \bigr\lVert\defvec{x}_2-\defvec{\mu}_2\bigl\rVert^2_{\defmat{\Sigma}_{22}}
                \bigl)
            \Bigr)
\end{equation}
for $\defvec{x}\!=\![\defvec{x}^T_1,\defvec{x}^T_2]^T\in\mathbb{R}^{L}$, where $\lVert\defvec{x}-\defvec{\mu}\rVert^2_{\defmat{\Sigma}}$ can be given by
\begin{equation}\label{Eq:CovarianceMatrixExpansion}
    \bigr\lVert\defvec{x}-\defvec{\mu}\bigl\rVert^2_{\defmat{\Sigma}}=
        \bigl\lVert{\defvec{x}_2-\defvec{\mu}_2}\bigr\rVert^2_{\defmat{\Sigma}_{22}}
       +\bigl\lVert
            {\defvec{x}_1-\defvec{\mu}_1-\defmat{\Sigma}_{12}\defmat{\Sigma}_{22}^{-1}\defvec{x}_2}
        \bigr\rVert^2_{\defmat{\Sigma}_{11}-\defmat{\Sigma}_{12}\defmat{\Sigma}_{22}^{-1}\defmat{\Sigma}_{21}},
\end{equation}
After substituting \eqref{Eq:CovarianceMatrixExpansion} in \eqref{Eq:MultivariateMcLeishConditionalPDFWithoutNormality}, the \ac{PDF} of $\defrmat{X}_1$ given $\defrmat{X}_2$ is written as $f_{\defrmat{X}_1|\defrmat{X}_2}(\defvec{x}_1|\defvec{x}_2)\!=\!\int_{0}^{\infty}f_{\defrmat{X}_1|\defrmat{X}_2,G}(\defvec{x}_1|\defvec{x}_2,g)\allowbreak{f}_{G}(g)dg$. Accordingly, and pursuant to utilizing \theoremref{Theorem:MultivariateMcLeishPDF} with \cite[Eq.\!~(3.381/4)]{BibGradshteynRyzhikBook}, the \ac{PDF} of $\defrmat{X}_1|\defrmat{X}_2$ is obtained in the form of \eqref{Eq:MultivariateMcLeishPDF} with mean vector 
$\defvec{\mu}_1\!+\!\defmat{\Sigma}_{12}\defmat{\Sigma}^{-1}_{22}(\defvec{x}_2\!-\!\defvec{\mu}_2)$ and covariance matrix $\defmat{\Sigma}_{11}\!-\!\defmat{\Sigma}_{12}\defmat{\Sigma}^{-1}_{22}\defmat{\Sigma}_{21}$. Then, the proof is obvious.
\end{proof}

Note that, when $\defmat{\Sigma}_{12}\!=\!\defmat{\Sigma}_{12}\!=\!\defmat{0}$, \eqref{Eq:CovarianceMatrixExpansion} reduces to 
\begin{equation}
    \bigr\lVert\defvec{x}-\defvec{\mu}\bigl\rVert^2_{\defmat{\Sigma}}=
       \bigl\lVert{\defvec{x}_1-\defvec{\mu}_1}\bigr\rVert^2_{\defmat{\Sigma}_{11}}
       +\bigl\lVert{\defvec{x}_2-\defvec{\mu}_2}\bigr\rVert^2_{\defmat{\Sigma}_{22}},
\end{equation}
which implies that $\defrmat{X}_1$~and~$\defrmat{X}_2$~are~mutually~uncorrelated, and thus we have
$\defrmat{X}_1|\defrmat{X}_2\!\sim\!\mathcal{M}^{L_1}_{\nu}(\defvec{\mu}_1,\defmat{\Sigma}_{11})$ and $\defrmat{X}_2|\defrmat{X}_1\!\sim\!\mathcal{M}^{L_2}_{\nu}(\defvec{\mu}_2,\defmat{\Sigma}_{22})$. 

\subsection{Multivariate Complex McLeish Distribution}
\label{Section:StatisticalBackground:MultivariateComplexMcLeishDistribution}
Let us have $\defrvec{S}\!\sim\!\mathcal{M}_{\nu}^{2L}(\defvec{0},\defmat{I})$ be represented by
\begin{equation}\label{Eq:StandardMultivariateMcLeishPartitioning}
    \defrvec{S}\!=\!
        \begin{bmatrix}
        \defrvec{S}_1\\
        \defrvec{S}_2
        \end{bmatrix},
\end{equation}
where both $\defrvec{S}_1\!\sim\!\mathcal{M}_{\nu}^L(\defvec{0},\defmat{I})$ and $\defrvec{S}_2\!\sim\!\mathcal{M}_{\nu}^L(\defvec{0},\defmat{I})$ are two such uncorrelated standard multivariate McLeish distributions that 
$\mathbb{E}[\defrvec{S}_{1}\defrvec{S}_{2}^T]\!=\!\defvec{0}$ and $\mathbb{E}[\defrvec{S}_{2}\defrvec{S}_{1}^T]\!=\!\defvec{0}$. Form this point of view, we can define a  multivariate complex McLeish distribution as   
\begin{equation}\label{Eq:StandardMultivariateComplexMcLeishDefinition}
	\defrvec{W}=\defrvec{S}_{1}+\imaginary\defrvec{S}_{2},
\end{equation}
which can be considered as a vector of uncorrelated and iden\-tically distributed standard \ac{CCS} McLeish distributions, i.e., $\defrvec{W}\!=\![W_1,W_2,\ldots,W_L]^T$, where ${W}_{\ell}\!\sim\!\mathcal{CM}(0,1)$, ${1}\!\leq\!{\ell}\!\leq\!{L}$ such that the inphase and quadrature parts of any given pair of ${W}_{k}\!\sim\!\mathcal{CM}(0,1)$ and ${W}_{\ell}\!\sim\!\mathcal{CM}(0,1)$, $k\!\neq\!\ell$ are \ac{CS} by default. By the definition of multi\-variate distribution \cite{BibAndersenBook1995,BibTongBook1989,BibWangKotzNgBook1989,BibBilodeauBrennerBook1999,BibKotzBalakrishnanBook2004}, $\defrmat{W}$ has a multivariate complex distribution iff $\forall\defvec{a}\!\in\!\mathbb{C}^{L}\!$, $\defvec{a}^{T}\defrmat{W}$ follows a complex random distribution of the same family. Accordingly, our intention is to come up to the \ac{PDF} of $\defrvec{W}$, denoted by $f_{\defrvec{W}}(\defvec{z})$, to check its distribution family. Taking into account the definition of multivariate distribution, and pursuant to what presented in \secref{Section:StatisticalBackground:MultivariateMcLeishDistribution} above, we conclude that the \ac{PDF} of $\defrvec{W}$ is exactly the same as the \ac{PDF} of $\defrvec{S}\!\sim\!\mathcal{M}_{\nu}^{2L}(\defvec{0},\defmat{I})$,~i.e.,~$f_{\defrvec{W}}(\defvec{z})\!=\!f_{\defrvec{S}}(\defvec{z})$. The multivariate distribution $\defrmat{W}$ is therefore explicitly termed as standard multivariate \ac{CCS} McLeish distribution and properly denoted by $\defrmat{W}\!\sim\!\mathcal{CM}^{L}_{\nu}(\defvec{0},\defmat{I})$, whose decomposition is given in the following.

\begin{theorem}\label{Theorem:StandardMultivariateCCSMcLeishDefinition}
A standard multivariate \ac{CCS} McLeish distribution, denoted by $\defrmat{W}\!\sim\!\mathcal{CM}^L_{\nu}(\defvec{0},\defmat{I})$, is decomposed as
\begin{equation}\label{Eq:StandardMultivariateCCSMcLeishDefinition}
    \defrmat{W}=\sqrt{G}(\defrmat{N}_1+\imaginary\defrmat{N}_2),
\end{equation}
where $\defrvec{N}_1\!\sim\!\mathcal{N}^{L}(\defvec{0},\defmat{I})$ and $\defrvec{N}_2\!\sim\!\mathcal{N}^{L}(\defvec{0},\defmat{I})$~are~such~two~standard multivariate Gaussian distributions that $\mathbb{E}[\defrvec{N}_{1}\defrvec{N}_{2}^T]\!=\!\defvec{0}$ and $\mathbb{E}[\defrvec{N}_{2}\defrvec{N}_{1}^T]\!=\!\defvec{0}$. Furthermore, $G\!\sim\!\mathcal{G}(\nu,1)$. 
\end{theorem}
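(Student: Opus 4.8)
The plan is to obtain the decomposition by applying the real-valued result of \theoremref{Theorem:StandardMultivariateMcLeishDefinition} in dimension $2L$ and then splitting the resulting Gaussian vector into its inphase and quadrature halves. First I would take the stacked real vector $\defrvec{S}\!\sim\!\mathcal{M}_{\nu}^{2L}(\defvec{0},\defmat{I})$ of \eqref{Eq:StandardMultivariateMcLeishPartitioning} and invoke \theoremref{Theorem:StandardMultivariateMcLeishDefinition} to write $\defrvec{S}=\sqrt{G}\,\defrvec{N}$ with a single radial factor $G\!\sim\!\mathcal{G}(\nu,1)$ and $\defrvec{N}\!\sim\!\mathcal{N}^{2L}(\defvec{0},\defmat{I})$. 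The essential feature of this decomposition is that one common scalar $G$ multiplies all $2L$ coordinates at once, which is exactly what the circular-symmetry discussion preceding the theorem guarantees, the radial factors attached to every marginal being perfectly correlated (equal).

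Next I would partition $\defrvec{N}$ conformably with \eqref{Eq:StandardMultivariateMcLeishPartitioning}, writing $\defrvec{N}=[\defrvec{N}_1^T,\defrvec{N}_2^T]^T$ so that $\defrvec{S}_1=\sqrt{G}\,\defrvec{N}_1$ and $\defrvec{S}_2=\sqrt{G}\,\defrvec{N}_2$. Since the covariance of $\defrvec{N}$ is $\defmat{I}$, its marginal blocks satisfy $\defrvec{N}_1\!\sim\!\mathcal{N}^{L}(\defvec{0},\defmat{I})$ and $\defrvec{N}_2\!\sim\!\mathcal{N}^{L}(\defvec{0},\defmat{I})$, while the vanishing off-diagonal blocks force $\mathbb{E}[\defrvec{N}_1\defrvec{N}_2^T]=\defvec{0}$ and $\mathbb{E}[\defrvec{N}_2\defrvec{N}_1^T]=\defvec{0}$; for jointly Gaussian vectors this uncorrelatedness is equivalent to mutual independence of $\defrvec{N}_1$ and $\defrvec{N}_2$.

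Finally, substituting $\defrvec{S}_1=\sqrt{G}\,\defrvec{N}_1$ and $\defrvec{S}_2=\sqrt{G}\,\defrvec{N}_2$ into the defining relation $\defrvec{W}=\defrvec{S}_1+\imaginary\defrvec{S}_2$ of \eqref{Eq:StandardMultivariateComplexMcLeishDefinition} gives $\defrvec{W}=\sqrt{G}\,\defrvec{N}_1+\imaginary\sqrt{G}\,\defrvec{N}_2=\sqrt{G}(\defrvec{N}_1+\imaginary\defrvec{N}_2)$, which is precisely \eqref{Eq:StandardMultivariateCCSMcLeishDefinition}; the identity $f_{\defrvec{W}}(\defvec{z})=f_{\defrvec{S}}(\defvec{z})$ already noted before the theorem then confirms that $\defrvec{W}$ is the standard multivariate \ac{CCS} McLeish vector $\mathcal{CM}^L_{\nu}(\defvec{0},\defmat{I})$.

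I expect the only genuine obstacle to be the justification that a single $G$ is shared between the inphase block $\defrvec{S}_1$ and the quadrature block $\defrvec{S}_2$, rather than two independent Gamma factors. This shared-$G$ structure is the crux of circular symmetry: were the two blocks driven by independent radial variables, the phase of each component $W_\ell$ of $\defrvec{W}$ would fail to be uniform and independent of its modulus, contradicting the \ac{CCS} hypothesis embedded in \eqref{Eq:StandardMultivariateComplexMcLeishDefinition}. Once this common radial factor is inherited from \theoremref{Theorem:StandardMultivariateMcLeishDefinition}, every remaining step is a routine marginalization of a Gaussian vector.
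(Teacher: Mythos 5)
Your proof is correct, and it reaches the decomposition by a cleaner mechanism than the paper's own proof, even though both rest on the same key lemma, \theoremref{Theorem:StandardMultivariateMcLeishDefinition}. The paper applies that lemma to each block separately: starting from $\defrvec{W}=\defrvec{S}_1+\imaginary\defrvec{S}_2$ it writes $\defrvec{W}=\sqrt{G_1}\,\defrvec{N}_1+\imaginary\sqrt{G_2}\,\defrvec{N}_2$ with two a priori distinct Gamma factors, and then invokes the circular-symmetry requirement on the pairs $(W_k,W_\ell)$ to argue that $G_1$ and $G_2$ must coincide. You instead apply the lemma once, in dimension $2L$, to the stacked vector $\defrvec{S}\!\sim\!\mathcal{M}_{\nu}^{2L}(\defvec{0},\defmat{I})$ of \eqref{Eq:StandardMultivariateMcLeishPartitioning}, so the single shared radial factor $G$ is obtained immediately from the joint $2L$-dimensional standard McLeish hypothesis, and the partition of $\defrvec{N}$ into uncorrelated (hence, being jointly Gaussian, independent) standard Gaussian blocks is routine. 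What your route buys is rigor at exactly the point you flag as the crux: the paper's identification of the two radial factors leans on an informal appeal to ``the pivotal and tractable details mentioned before'' the theorem, whereas in your argument the shared $G$ is not something to be argued at all --- it is already encoded in the assumption that the stacked real vector is jointly $\mathcal{M}_{\nu}^{2L}(\defvec{0},\defmat{I})$, which is precisely how the paper sets up \eqref{Eq:StandardMultivariateComplexMcLeishDefinition}. What the paper's route makes explicit, conversely, is \emph{why} circular symmetry forbids independent radial factors on the two blocks, a point your closing remark also covers; the two proofs are therefore complementary, with yours being the tighter derivation.
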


\begin{proof}
Using \eqref{Eq:StandardMultivariateMcLeishDefinition} in \eqref{Eq:StandardMultivariateComplexMcLeishDefinition}, we rewrite $\defrvec{W}=[W_1,W_2,\ldots,\allowbreak{W}_L]^T$ as follows 
\begin{equation}
    \defrvec{W}=\sqrt{G_1}\defrvec{N}_{1}+\imaginary\sqrt{G_2}\defrvec{N}_{2},
\end{equation}
where the inphase and quadrature parts of any given pair of ${W}_{k}\!\sim\!\mathcal{CM}(0,1)$ and ${W}_{\ell}\!\sim\!\mathcal{CM}(0,1)$, $k\!\neq\!\ell$ have to be \ac{CS} according to the pivotal and tractable details mentioned before \theoremref{Theorem:StandardMultivariateCCSMcLeishDefinition}. Therefore, $G_1$ and $G_2$ has to be the same distribution, which completes the proof of \theoremref{Theorem:StandardMultivariateCCSMcLeishDefinition}.  
\end{proof}

With the aid of \theoremref{Theorem:StandardMultivariateCCSMcLeishDefinition}, we give the PDF of standard multivariate \ac{CCS} McLeish distribution, denoted by $\defrmat{W}\!\sim\!\mathcal{CM}_{\nu}^L\!\bigl(\defvec{0},\defmat{I}\bigr)$, in the following theorem. 

\begin{theorem}\label{Theorem:StandardMultivariateCCSMcLeishPDF}
The \ac{PDF} of $\defrmat{W}\!\sim\!\mathcal{CM}_{\nu}^L\!\bigl(\defvec{0},\defmat{I}\bigr)$ is given by
\begin{equation}\label{Eq:StandardMultivariateCCSMcLeishPDF}
\!\!\!\!f_{\defrvec{W}}(\defvec{z})=\frac{2}{\pi^L}
	\frac{{\lVert{\defvec{z}}\rVert}^{\nu-L}}
		{\Gamma(\nu)\lambda_{0}^{\nu+L}}
			 {K}_{\nu-L}\Bigl(\frac{2}{\lambda_{0}}
					{\bigl\lVert{\defvec{z}}\bigr\rVert}
					 \Bigr),\!\!
\end{equation}
for a certain $\defvec{z}\!=\![z_1,z_2,\ldots,z_L]^T\in\mathbb{C}^L$, where $\lVert{\defvec{z}}\rVert\!=\!\defvec{z}^H\!\defvec{z}$.
\end{theorem}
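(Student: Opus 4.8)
The plan is to exploit the Gaussian-scale-mixture decomposition of \theoremref{Theorem:StandardMultivariateCCSMcLeishDefinition} and reduce the claim to the real multivariate result already established in \theoremref{Theorem:StandardMultivariateMcLeishPDF}. First I would identify the complex vector $\defrvec{W}\in\mathbb{C}^L$ with a real vector of twice the length: writing $\defvec{z}=\defvec{x}+\imaginary\defvec{y}$ with $\defvec{x},\defvec{y}\in\mathbb{R}^L$ and stacking $[\defrvec{N}_1^T,\defrvec{N}_2^T]^T$, the decomposition $\defrvec{W}=\sqrt{G}(\defrvec{N}_1+\imaginary\defrvec{N}_2)$ of \eqref{Eq:StandardMultivariateCCSMcLeishDefinition} is precisely the complex representation of a standard multivariate McLeish vector $\defrvec{S}\!\sim\!\mathcal{M}_{\nu}^{2L}(\defvec{0},\defmat{I})$, since $\defrvec{N}_1,\defrvec{N}_2\!\sim\!\mathcal{N}^L(\defvec{0},\defmat{I})$ are mutually uncorrelated and share the single mixing variable $G\!\sim\!\mathcal{G}(\nu,1)$. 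Consequently $f_{\defrvec{W}}(\defvec{z})=f_{\defrvec{S}}([\defvec{x}^T,\defvec{y}^T]^T)$, and the Euclidean norm of the $2L$-dimensional argument equals $\sqrt{\defvec{x}^T\defvec{x}+\defvec{y}^T\defvec{y}}=\sqrt{\defvec{z}^H\defvec{z}}=\norm{\defvec{z}}$.

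Then I would substitute $L\!\to\!2L$ into \eqref{Eq:StandardMultivariateMcLeishPDF}. The order of the modified Bessel function becomes $\nu-\tfrac{2L}{2}=\nu-L$, the power of the norm becomes $\nu-L$, the power of $\lambda_{0}$ becomes $\nu+L$, and the Gaussian normalising constant $\pi^{-2L/2}=\pi^{-L}$ appears, yielding exactly \eqref{Eq:StandardMultivariateCCSMcLeishPDF}. This route is the cleanest and mirrors the narrative preceding the theorem, where the equality $f_{\defrvec{W}}(\defvec{z})=f_{\defrvec{S}}(\defvec{z})$ is already anticipated.

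For completeness, and to make the proof self-contained rather than relying on the $2L$ substitution, I would also record the direct derivation. Conditioning on $G=g$, the vector $\defrvec{W}$ is circularly symmetric complex Gaussian, so its conditional density over $\mathbb{R}^{2L}$ is $f_{\defrvec{W}|G}(\defvec{z}|g)=(2\pi g)^{-L}\exp(-\norm{\defvec{z}}^2/(2g))$, each of the $2L$ real coordinates having variance $g$. Averaging with the Gamma density \eqref{Eq:ProportionPDF} gives
\begin{equation}
f_{\defrvec{W}}(\defvec{z})=\frac{\nu^{\nu}}{(2\pi)^L\Gamma(\nu)}\int_{0}^{\infty}g^{(\nu-L)-1}\exp\Bigl(-\frac{\norm{\defvec{z}}^2/2}{g}-\nu g\Bigr)dg,
\end{equation}
and the remaining integral is evaluated in closed form by \cite[Eq.\!~(3.471/9)]{BibGradshteynRyzhikBook}, exactly as in the proofs of \theoremref{Theorem:CCSMcLeishPDF} and \theoremref{Theorem:StandardMultivariateMcLeishPDF}.

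The only genuine care-point — and what I would treat as the main obstacle — is the bookkeeping of the factor of two that distinguishes the per-real-coordinate variance $g$ from the per-complex-coordinate variance $2g$, together with the use of $\lambda_{0}=\sqrt{2/\nu}$ (equivalently $\nu=2/\lambda_{0}^2$) to convert $2\sqrt{\nu\norm{\defvec{z}}^2/2}$ into the Bessel argument $2\norm{\defvec{z}}/\lambda_{0}$ and to absorb $\nu^{\nu}$ into the prefactor $2\pi^{-L}\lambda_{0}^{-(\nu+L)}$. Once these substitutions are tracked, both routes collapse to \eqref{Eq:StandardMultivariateCCSMcLeishPDF}, completing the proof.
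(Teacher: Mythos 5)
Your primary argument is exactly the paper's own proof: it invokes the distributional equality between $\defrvec{W}$ and the $2L$-dimensional real vector $\defrvec{S}\!\sim\!\mathcal{M}_{\nu}^{2L}(\defvec{0},\defmat{I})$ from \theoremref{Theorem:StandardMultivariateCCSMcLeishDefinition} and then substitutes $L\!\rightarrow\!2L$ into \eqref{Eq:StandardMultivariateMcLeishPDF}, which is precisely how the paper deduces \eqref{Eq:StandardMultivariateCCSMcLeishPDF}. Your supplementary direct route (conditioning on $G$, averaging the $(2\pi g)^{-L}\exp(-\lVert\defvec{z}\rVert^2/2g)$ Gaussian kernel against \eqref{Eq:ProportionPDF} via \cite[Eq.\!~(3.471/9)]{BibGradshteynRyzhikBook}) is also correct and consistent with the stated constants, but it is just the technique the paper already uses for \theoremref{Theorem:StandardMultivariateMcLeishPDF}, so it adds self-containment rather than a genuinely different method.
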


\begin{proof}
Referring to the distributional equality between \eqref{Eq:StandardMultivariateMcLeishPartitioning} and \eqref{Eq:StandardMultivariateComplexMcLeishDefinition}, well explained above, we acknowledge that both $\defrvec{S}\!\sim\!\mathcal{M}_{\nu}^{2L}(\defvec{0},\defmat{I})$ and $\defrmat{W}\!\sim\!\mathcal{CM}_{\nu}^L\!\bigl(\defvec{0},\defmat{I}\bigr)$ have the same PDF, i.e. 
\begin{equation}
 f_{\defrmat{S}}(\defvec{x})=
    f_{\defrmat{W}}(\defvec{z}_{I}+\imaginary\defvec{z}_{Q}),
\end{equation}
where $\defvec{z}_{I}\!\in\!\mathbb{R}^{L}$ and $\defvec{z}_{Q}\!\in\!\mathbb{R}^{L}$ such that $\defvec{z}\!=\!\defvec{z}_{I}+\imaginary\defvec{z}_{Q}$ and  
\begin{equation}
    \defvec{x}=
        \begin{bmatrix}
            \defvec{z}_{I}\\
            \defvec{z}_{Q}
        \end{bmatrix}.
\end{equation}
Then, using \theoremref{Theorem:StandardMultivariateMcLeishPDF}, we easily deduce the \ac{PDF} of $\defrmat{W}$ as in \eqref{Eq:StandardMultivariateCCSMcLeishPDF}, which completes the proof of \theoremref{Theorem:StandardMultivariateCCSMcLeishPDF}.
\end{proof}


As observed in \theoremref{Theorem:StandardMultivariateCCSMcLeishPDF}, the \ac{PDF} $f_{\defrmat{W}}(\defvec{z})$ is a function of squared Euclidean norm $\lVert{\defvec{z}}\rVert^2\!=\!\defvec{z}^H\!\defvec{z}$ in complex space. Since a unitary transformation $\defmat{U}$ (i.e., $\defmat{U}\defmat{U}^H\!\!=\!\defmat{U}^H\defmat{U}\!=\!\defmat{I})$ 
preserves the Euclidean norm of all complex vectors (i.e., $\lVert{\defmat{U}\defvec{z}}\rVert\!=\!\lVert{\defvec{z}}\rVert$), we immediately obtain the covariance matrix of $\defmat{U}\defrmat{W}$ as 
\begin{equation}\label{Eq:CovarianceMatrixForUnitaryTransform}
    \mathbb{E}[\defmat{U}\defrmat{W}(\defmat{U}\defrmat{W})^H]=\defmat{U}\mathbb{E}[\defrmat{W}\defrmat{W}^H]\defmat{U}^H\!=2\defmat{I},
\end{equation}
and its pseudo-covariance matrix as
\begin{equation}\label{Eq:PseudoCovarianceMatrixForUnitaryTransform}
    \mathbb{E}[\defmat{U}\defrmat{W}(\defmat{U}\defrmat{W})^T]=\defmat{U}\mathbb{E}[\defrmat{W}\defrmat{W}^T]\defmat{U}^T\!=\defmat{0}.
\end{equation}
These same conclusions are also being drawn for an orthogonal transformations. Further, we notice that 
\begin{subequations}
\setlength\arraycolsep{1.4pt}   
\begin{eqnarray}
\!\!\!\!\trace(\mathbb{E}[\defrvec{W}\defrvec{W}^H])
&=&\trace(\mathbb{E}[\defrvec{S}\defrvec{S}^T]),\\
&=&2\trace(\mathbb{E}[\defrvec{S}_j\defrvec{S}_j^T]),~j\in\{1,2\},\\
&=&2L,
\end{eqnarray}
\end{subequations}
Both \eqref{Eq:CovarianceMatrixForUnitaryTransform} and \eqref{Eq:PseudoCovarianceMatrixForUnitaryTransform} together  impose that $f_{\defrmat{U}\defrmat{W}}(\defvec{z})\!=\!f_{\defrmat{W}}(\defvec{z})$, and therefore $\defmat{U}\defrmat{W}\!\sim\!\mathcal{CM}^{L}_{\nu}(\defvec{0},\defmat{I})$. In addition, for clarity and consistency, we readily rewrite $f_{\defrmat{W}}(\defvec{z})$ in terms of Meijer's G function using \cite[Eq. (8.4.23/1)]{BibPrudnikovBookVol3}, that is
\begin{equation}
    f_{\defrmat{W}}(\defvec{z})=\frac{1}{\pi^{L}\lambda_0^{2L}\Gamma(\nu)}\MeijerG[right]{2,0}{0,2}{\frac{\lVert\defvec{z}\rVert^2}{\lambda_0^2}}{\emptycoefficient}{0,\nu-L}.
\end{equation}
With the aid of whose Mellin-Barnes countour integration\cite[Eq. (8.2.1/1)]{BibPrudnikovBookVol3}, we rewrite 
\begin{equation}\label{Eq:StandardMultivariateCCSMcLeishPDFContourIntegral}
\!\!\!\!f_{\defrmat{W}}(\defvec{z})=\frac{1}{2\pi\imaginary}
        \int_{c-\imaginary\infty}^{c+\imaginary\infty}\frac{\Gamma(s)\Gamma(\nu-L+s)}{\pi^{L}\lambda_0^{2L}\Gamma(\nu)}
            \lVert\defvec{z}\rVert^{-2s}ds
\end{equation}
within the existence region $s\!\in\!\Omega_0$, where $\Omega_0\!=\!\bigl\{s\bigl|\RealPart{s}\!>\!\max(0,L-\nu)\bigr.\bigr\}$. As observing $\defvec{z}\!=\!\defvec{x}+\imaginary\defvec{y}$ and employing both \eqref{Eq:StandardMultivariateCCSMcLeishPDFContourIntegral} and \cite[Eq. (3.241/4)]{BibGradshteynRyzhikBook} together, we have both $\int_{\mathbb{R}^L}f_{\defrmat{W}}(\defvec{x}+\imaginary\defvec{y})\,d\defvec{x}$ and $\int_{\mathbb{R}^L}f_{\defrmat{W}}(\defvec{x}+\imaginary\defvec{y})\,d\defvec{y}$ reduced to \eqref{Eq:StandardMultivariateMcLeishPDF} as intuitively expected. In addition, when $\nu\!=\!1$, \eqref{Eq:StandardMultivariateCCSMcLeishPDF} is then reduced to the \ac{PDF} of standard multivariate \ac{CCS} Laplacian  distribution, that is
\begin{equation}\label{Eq:StandardMultivariateCCSLaplacianPDF}
f_{\defrvec{W}}(\defvec{z})=\frac{1}{2^{(L-1)/2}\pi^L}
	{\lVert{\defvec{z}}\rVert}^{1-L}
			 {K}_{1-L}\Bigl(\sqrt{2}{\lVert{\defvec{z}}\rVert}
					 \Bigr),\!\!
\end{equation}
which simplifies more to \cite[Eq. (5.1.2)]{BibKotzKozubowskiPodgorskiBook2012} for $L\!=\!1$. The other special case, which is obtained when $\nu\!\rightarrow\!\infty$, is 
\begin{equation}\label{Eq:StandardMultivariateCCSGaussianPDF}
    f_{\defrvec{W}}(\defvec{z})=\frac{1}{(2\pi)^L}
        \exp\Bigl(-\frac{1}{2}{\lVert{\defvec{z}}\rVert}^2\Bigr),
\end{equation}
which is the \ac{PDF} of standard multivariate Gaussian distribution \cite[Eq. (2.6-29)]{BibProakisBook} as expected. 

\begin{definition}[McLeish's Multivariate Complex Quantile and Complementary Complex Quantile]
\label{Definition:McLeishMultivariateComplexQAndComplementaryComplexQFunction}
For a fixed $\defvec{z}\!\in\!\mathbb{C}^L$ in higher dimensional complex space, the McLeish's multivariate complex \ac{Q-function} is defined by 
\begin{equation}\label{Eq:McLeishMultivariateComplexQFunction}
    Q^{L}_{\nu}(\defvec{z})=
        Q^{2L}_{\nu}([\RealPart{\defvec{z}}^T,\ImagPart{\defvec{z}}^T]^T),
\end{equation}
and whose complementary complex \ac{Q-function} is defined by 
\begin{equation}\label{Eq:McLeishMultivariateComplementaryComplexQFunction}
    \widehat{Q}^{L}_{\nu}(\defvec{z})=
        Q^{2L}_{\nu}([\RealPart{\defvec{z}}^T,\ImagPart{\defvec{z}}^T]^T),
\end{equation}
where $Q^{2L}_{\nu}(\defvec{x})$ and $\widehat{Q}^{2L}_{\nu}(\defvec{x})$, defined for real vectors $\defvec{x}\!\in\!\mathbb{R}^L$, are given in \eqref{Eq:McLeishMultivariateQFunction} and \eqref{Eq:McLeishMultivariateComplementaryQFunction}, respectively.  
\end{definition}

As we mentioned above, referring to both \eqref{Eq:StandardMultivariateMcLeishPartitioning} and \eqref{Eq:StandardMultivariateComplexMcLeishDefinition} together, we have $f_{\defrvec{W}}(\defvec{z})\!=\!f_{\defrvec{S}}(\defvec{z})$. Therefore, we can readily obtain the \ac{CDF} and \ac{CCDF} of $\defrmat{W}\!\sim\!\mathcal{CM}^{L}_{\nu}(\defvec{0},\defmat{I})$, especially by using \theoremref{Theorem:StandardMultivariateMcLeishCDF} and \theoremref{Theorem:StandardMultivariateMcLeishCCDF}, respectively. Accordingly, the \ac{CDF} of $\defrmat{W}\!\sim\!\mathcal{CM}^{L}_{\nu}(\defvec{0},\defmat{I})$ is properly defined in complex space by $F_{\defrmat{W}}(\defvec{z})\!=\!\Pr\{\defrvec{W}\!\leq\!\defvec{z}\}\!=\!\allowbreak\Pr\{{W}_{1}\!\leq\!z_1,\allowbreak{W}_{2}\!\leq\!z_2,\ldots,{W}_{L}\!\leq\!z_L\}$ and obtained in the following.  

\begin{theorem}\label{Theorem:StandardMultivariateCCSMcLeishCDF}
The \ac{CDF} of $\!\defrmat{W}\!\sim\!\mathcal{CM}_{\nu}^L\!\bigl(\defvec{0},\defmat{I}\bigr)$ is given by
\begin{equation}\label{Eq:StandardMultivariateCCSMcLeishCDF}
    F_{\defrmat{W}}(\defvec{z})=
        \widehat{Q}^{L}_{\nu}\bigl(\defvec{z}\bigr),
\end{equation}
defined over $\boldsymbol{z}\!\in\!\mathbb{C}^L$, where $\widehat{Q}^{L}_{\nu}\bigl(\defvec{z}\bigr)$ is given in \eqref{Eq:McLeishMultivariateComplementaryComplexQFunction}.
\end{theorem}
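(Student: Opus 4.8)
The plan is to reduce the complex CDF to the real $2L$-dimensional CDF already computed in \theoremref{Theorem:StandardMultivariateMcLeishCDF}. First I would unpack the complex partial ordering implicit in $F_{\defrmat{W}}(\defvec{z})\!=\!\Pr\{{W}_{1}\!\leq\!z_1,\ldots,{W}_{L}\!\leq\!z_L\}$. Following the convention fixed in \secref{Section:Preliminaries} (where $F_{Z}(z)\!=\!{F}_{\defrmat{Z}}(\Re\{z\},\Im\{z\})$), each scalar event $\{W_\ell\!\leq\!z_\ell\}$ is read componentwise as $\{\RealPart{W_\ell}\!\leq\!\RealPart{z_\ell}\}\cap\{\ImagPart{W_\ell}\!\leq\!\ImagPart{z_\ell}\}$, so that the complex CDF is literally a joint real CDF over $2L$ inequalities.

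Next I would invoke the decomposition $\defrvec{W}\!=\!\defrvec{S}_{1}+\imaginary\defrvec{S}_{2}$ from \eqref{Eq:StandardMultivariateComplexMcLeishDefinition}, together with the partitioning $\defrvec{S}\!=\![\defrvec{S}_1^T,\defrvec{S}_2^T]^T\!\sim\!\mathcal{M}_{\nu}^{2L}(\defvec{0},\defmat{I})$ of \eqref{Eq:StandardMultivariateMcLeishPartitioning}. This identifies $\RealPart{W_\ell}$ with the $\ell$-th entry of $\defrvec{S}_1$ and $\ImagPart{W_\ell}$ with the $\ell$-th entry of $\defrvec{S}_2$. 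Consequently the $2L$ inequalities above are exactly the defining inequalities of the joint CDF $F_{\defrmat{S}}$ evaluated at the stacked point $[\RealPart{\defvec{z}}^T,\ImagPart{\defvec{z}}^T]^T$, i.e. $F_{\defrmat{W}}(\defvec{z})=F_{\defrmat{S}}\bigl([\RealPart{\defvec{z}}^T,\ImagPart{\defvec{z}}^T]^T\bigr)$. Here I would lean on the distributional identity $f_{\defrmat{W}}(\defvec{z})\!=\!f_{\defrmat{S}}(\defvec{z})$ already established just above the theorem, which guarantees that the two cumulative distributions agree.

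Then I would apply \theoremref{Theorem:StandardMultivariateMcLeishCDF} to the real $2L$-variate distribution $\defrmat{S}$, giving $F_{\defrmat{S}}(\defvec{x})=\widehat{Q}^{2L}_{\nu}(\defvec{x})$, and specialize at $\defvec{x}=[\RealPart{\defvec{z}}^T,\ImagPart{\defvec{z}}^T]^T$. Finally I would invoke the definition \eqref{Eq:McLeishMultivariateComplementaryComplexQFunction} of McLeish's multivariate complementary complex \ac{Q-function}, which maps $\widehat{Q}^{L}_{\nu}(\defvec{z})$ onto its real $2L$-dimensional counterpart via precisely this stacking of real and imaginary parts, so the right-hand side is recognized as $\widehat{Q}^{L}_{\nu}(\defvec{z})$, closing the argument.

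The main obstacle is not computational but bookkeeping: I must make certain that the coordinate ordering used when stacking $\defrvec{S}_1$ over $\defrvec{S}_2$ matches the ordering $[\RealPart{\defvec{z}}^T,\ImagPart{\defvec{z}}^T]^T$ appearing in the complex \ac{Q-function} definition, and that the componentwise reading of the complex inequality is the intended one. A secondary point worth stating explicitly is that, because $\defrvec{S}_1$ and $\defrvec{S}_2$ share the common mixing variable $G$ (they are \ac{CS}-uncorrelated but, unlike the Gaussian case, not mutually independent), the reduction must be carried out at the level of the joint CDF $F_{\defrmat{S}}$ rather than by factoring into a product of lower-dimensional marginals.
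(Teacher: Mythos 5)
Your proposal is correct and follows essentially the same route as the paper: the paper's own (one-line) proof likewise invokes the distributional equality between the partitioned real vector $\defrmat{S}\!\sim\!\mathcal{M}_{\nu}^{2L}(\defvec{0},\defmat{I})$ in \eqref{Eq:StandardMultivariateMcLeishPartitioning} and the complex vector $\defrmat{W}$ in \eqref{Eq:StandardMultivariateComplexMcLeishDefinition}, then concludes via the definition \eqref{Eq:McLeishMultivariateComplementaryComplexQFunction}. You merely spell out the bookkeeping (componentwise reading of the complex inequalities, the stacking $[\RealPart{\defvec{z}}^T,\ImagPart{\defvec{z}}^T]^T$, and the appeal to \theoremref{Theorem:StandardMultivariateMcLeishCDF}) that the paper declares obvious, and your remark about the shared mixing variable $G$ precluding factorization into marginals is consistent with the paper's treatment.
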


\begin{proof}
From the distributional equality between between \eqref{Eq:StandardMultivariateMcLeishPartitioning} and \eqref{Eq:StandardMultivariateComplexMcLeishDefinition}, the proof is obvious using \eqref{Eq:McLeishMultivariateComplementaryComplexQFunction}.
\end{proof}

The \ac{CCDF} of $\defrmat{W}\!\sim\!\mathcal{CM}^{L}_{\nu}(\defvec{0},\defmat{I})$ is defined by $\widehat{F}_{\defrmat{W}}(\defvec{z})\!=\!\Pr\{\defrvec{W}>\!\defvec{z}\}\!=\!\allowbreak\Pr\{{W}_{1}\!>\!z_1,\allowbreak{W}_{2}\!>\!z_2,\ldots,{W}_{L}\!>\!z_L\}$ and obtained in the following.

\begin{theorem}\label{Theorem:StandardMultivariateCCSMcLeishCCDF}
The \ac{CCDF} of $\!\defrmat{W}\!\sim\!\mathcal{CM}_{\nu}^L\!\bigl(\defvec{0},\defmat{I}\bigr)$ is given by
\begin{equation}\label{Eq:StandardMultivariateCCSMcLeishCCDF}
\widehat{F}_{\defrmat{W}}(\defvec{x})=
    {Q}^{L}_{\nu}\bigl(\defvec{z}\bigr),
\end{equation}
defined over $\boldsymbol{z}\!\in\!\mathbb{C}^L$, where ${Q}^{L}_{\nu}\bigl(\defvec{z}\bigr)$ is given in \eqref{Eq:McLeishMultivariateComplexQFunction}.
\end{theorem}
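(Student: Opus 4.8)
The plan is to reduce this statement to its real counterpart, \theoremref{Theorem:StandardMultivariateMcLeishCCDF}, by exploiting the distributional identification already established between \eqref{Eq:StandardMultivariateMcLeishPartitioning} and \eqref{Eq:StandardMultivariateComplexMcLeishDefinition}. First I would unfold the definition of the \ac{CCDF}, namely $\widehat{F}_{\defrmat{W}}(\defvec{z})\!=\!\Pr\{{W}_{1}\!>\!z_1,\ldots,{W}_{L}\!>\!z_L\}$, and interpret each componentwise complex inequality ${W}_{\ell}\!>\!z_{\ell}$ as the conjunction of the two real inequalities $\RealPart{W_{\ell}}\!>\!\RealPart{z_{\ell}}$ and $\ImagPart{W_{\ell}}\!>\!\ImagPart{z_{\ell}}$, in accordance with the product-of-Heaviside convention fixed in \secref{Section:Preliminaries}.

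Next, writing $\defrvec{W}\!=\!\defrvec{S}_{1}+\imaginary\defrvec{S}_{2}$ as in \eqref{Eq:StandardMultivariateComplexMcLeishDefinition}, so that $\RealPart{W_{\ell}}$ and $\ImagPart{W_{\ell}}$ are the $\ell$-th entries of $\defrvec{S}_1$ and $\defrvec{S}_2$, I would observe that the event above coincides exactly with the \ac{CCDF} event of the real $2L$-variate vector $\defrvec{S}\!=\![\defrvec{S}_1^T,\defrvec{S}_2^T]^T\!\sim\!\mathcal{M}_{\nu}^{2L}(\defvec{0},\defmat{I})$ evaluated at the stacked real vector $[\RealPart{\defvec{z}}^T,\ImagPart{\defvec{z}}^T]^T$. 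This yields the identification $\widehat{F}_{\defrmat{W}}(\defvec{z})\!=\!\widehat{F}_{\defrvec{S}}([\RealPart{\defvec{z}}^T,\ImagPart{\defvec{z}}^T]^T)$, the complex analogue of the reduction $f_{\defrmat{W}}(\defvec{z})\!=\!f_{\defrmat{S}}(\defvec{z})$ used in \theoremref{Theorem:StandardMultivariateCCSMcLeishPDF}.

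Finally, I would apply \theoremref{Theorem:StandardMultivariateMcLeishCCDF} to the $2L$-dimensional standard McLeish vector $\defrvec{S}$, which gives $\widehat{F}_{\defrvec{S}}([\RealPart{\defvec{z}}^T,\ImagPart{\defvec{z}}^T]^T)\!=\!{Q}^{2L}_{\nu}([\RealPart{\defvec{z}}^T,\ImagPart{\defvec{z}}^T]^T)$, and then recognize the right-hand side, via the definition \eqref{Eq:McLeishMultivariateComplexQFunction} of McLeish's multivariate complex \ac{Q-function}, as ${Q}^{L}_{\nu}(\defvec{z})$. Chaining these equalities delivers \eqref{Eq:StandardMultivariateCCSMcLeishCCDF}, completing the argument; this mirrors the short proof of the companion \ac{CDF} result in \theoremref{Theorem:StandardMultivariateCCSMcLeishCDF}.

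I expect the only real obstacle to be bookkeeping rather than analysis: I must ensure that the partition of $\defrvec{S}$ into its first $L$ and last $L$ coordinates aligns with the real/imaginary ordering baked into the complex \ac{Q-function} definition, and that the strict-inequality convention in complex space matches the one used for the real $2L$-dimensional \ac{CCDF}. Once these conventions are pinned down, each step is a direct substitution, so no nontrivial integral manipulation is required.
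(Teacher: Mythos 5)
Your proposal is correct and follows essentially the same route as the paper: the paper's own (one-line) proof also rests on the distributional equality between $\defrvec{S}\!=\![\defrvec{S}_1^T,\defrvec{S}_2^T]^T\!\sim\!\mathcal{M}_{\nu}^{2L}(\defvec{0},\defmat{I})$ and $\defrvec{W}\!=\!\defrvec{S}_1+\imaginary\defrvec{S}_2$, combined with \theoremref{Theorem:StandardMultivariateMcLeishCCDF} and the definition \eqref{Eq:McLeishMultivariateComplexQFunction}. You have merely made explicit the componentwise real/imaginary bookkeeping that the paper leaves implicit, which is a faithful expansion rather than a different argument.
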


\begin{proof}
The proof is obvious using \eqref{Eq:McLeishMultivariateComplexQFunction}.
\end{proof}

In $L$-dimensional complex space $\defvec{s}\!\in\!\mathbb{C}^L$, we can define~the \ac{MGF} by $M_{\defrmat{W}}(\defvec{s})\!=\!\allowbreak\mathbb{E}\bigl[\exp(-\langle\defvec{s},\defrmat{W}\rangle)\bigr]$ that uniquely determines the distribution of $\defrmat{W}\!\sim\!\mathcal{CM}^{L}_{\nu}(\defvec{0},\defmat{I})$ and is obtained in the following.

\begin{theorem}\label{Theorem:StandardMultivariateCCSMcLeishMGF}
The \ac{MGF} of $\defrmat{W}\!\sim\!\mathcal{CM}^{L}_{\nu}(\defvec{0},\defmat{I})$ is given by 
\begin{equation}\label{Eq:StandardMultivariateCCSMcLeishMGF}
M_{\defrmat{W}}(\defvec{s})=\Bigl(1-\frac{\lambda^2_{0}}{4}\defvec{s}^H\defvec{s}\Bigr)^{-\nu},
\end{equation}
for a certain $\defvec{s}\!\in\!\mathbb{C}^L$ within the existence region $\defvec{s}\!\in\!\mathbb{C}_{0}$, where the region $\mathbb{C}_{0}$ is given by 
\begin{equation}\label{Eq:StandardMultivariateCCSMcLeishMGFExistenceRegion}
    \mathbb{C}_0=\Bigl\{
        \defvec{s}\,    
        \Bigl|\,
        \lambda^2_{0}\defvec{s}^H\defvec{s}
        \leq
        {4}
        \Bigr.
        \Bigr\}.
\end{equation}
\end{theorem}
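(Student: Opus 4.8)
The plan is to condition on the Gamma mixing variable $G$ and then average over it, in direct analogy with the footnote proof of \theoremref{Theorem:MultivariateMcLeishMGF}. First I would invoke \theoremref{Theorem:StandardMultivariateCCSMcLeishDefinition}, which decomposes $\defrmat{W}\!=\!\sqrt{G}(\defrmat{N}_1+\imaginary\defrmat{N}_2)$ with $\defrmat{N}_1,\defrmat{N}_2\!\sim\!\mathcal{N}^{L}(\defvec{0},\defmat{I})$ mutually uncorrelated and $G\!\sim\!\mathcal{G}(\nu,1)$. Writing $\defvec{s}\!=\!\defvec{s}_X+\imaginary\defvec{s}_Y$ and reading $\langle\defvec{s},\defrmat{W}\rangle$ as the real Euclidean pairing consistent with the joint-\ac{MGF} interpretation of \eqref{Eq:JointMGFDefinition}, one has $\langle\defvec{s},\defrmat{W}\rangle\!=\!\sqrt{G}\,(\defvec{s}_X^T\defrmat{N}_1+\defvec{s}_Y^T\defrmat{N}_2)$, so that conditioning on $G\!=\!g$ reduces the exponent to a sum of two independent real Gaussian linear forms.

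Next I would evaluate the conditional \ac{MGF} $M_{\defrmat{W}|G}(\defvec{s}|g)\!=\!\Expected{\exp(-\langle\defvec{s},\defrmat{W}\rangle)}$ given $G\!=\!g$. Using the independence of $\defrmat{N}_1$ and $\defrmat{N}_2$ together with the standard Gaussian identity $\Expected{\exp(-\defvec{a}^T\defrmat{N})}\!=\!\exp(\tfrac{1}{2}\defvec{a}^T\defvec{a})$ for $\defrmat{N}\!\sim\!\mathcal{N}^{L}(\defvec{0},\defmat{I})$, the two factors combine into
\[
	M_{\defrmat{W}|G}(\defvec{s}|g)=\exp\Bigl(\tfrac{g}{2}\bigl(\defvec{s}_X^T\defvec{s}_X+\defvec{s}_Y^T\defvec{s}_Y\bigr)\Bigr)=\exp\Bigl(\tfrac{g}{2}\,\defvec{s}^H\defvec{s}\Bigr),
\]
where the pivotal identity is that the Hermitian form $\defvec{s}^H\defvec{s}$ coincides with the real squared norm $\defvec{s}_X^T\defvec{s}_X+\defvec{s}_Y^T\defvec{s}_Y$.

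Then I would uncondition by integrating against the Gamma density $f_{G}(g)$ of \eqref{Eq:ProportionPDF}, that is $M_{\defrmat{W}}(\defvec{s})\!=\!\int_{0}^{\infty}M_{\defrmat{W}|G}(\defvec{s}|g)\,f_{G}(g)\,dg$. Substituting $\lambda_{0}^2\!=\!2/\nu$ this collapses to the single Gamma integral $\tfrac{\nu^{\nu}}{\Gamma(\nu)}\int_{0}^{\infty}g^{\nu-1}\exp\!\bigl(-g\nu(1-\tfrac{\lambda_{0}^2}{4}\defvec{s}^H\defvec{s})\bigr)\,dg$, which \cite[Eq.~(3.381/4)]{BibGradshteynRyzhikBook} evaluates to \eqref{Eq:StandardMultivariateCCSMcLeishMGF}; the requirement that the integrand be decaying, namely $1-\tfrac{\lambda_{0}^2}{4}\defvec{s}^H\defvec{s}>0$, is exactly the existence region \eqref{Eq:StandardMultivariateCCSMcLeishMGFExistenceRegion}.

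The two Gaussian and Gamma integrals are routine. The part to be careful about is the complex bookkeeping: one must confirm that the quadratic form collapses to $\defvec{s}^H\defvec{s}$ and that the prefactor is $\lambda_{0}^2/4$ rather than the $\lambda_{0}^2/8$ of the scalar \ac{CCS} \theoremref{Theorem:CCSMcLeishMGF}, the factor-of-two difference arising because in the standard multivariate convention each real and imaginary coordinate of $\defrmat{W}$ carries unit variance. As consistency checks I would let $\nu\!\rightarrow\!\infty$ and use $\lim_{n\to\infty}(1+x/n)^{n}\!=\!e^{x}$ to recover the standard multivariate complex Gaussian \ac{MGF} underlying \eqref{Eq:StandardMultivariateCCSGaussianPDF}, and confirm invariance under the unitary transformations noted around \eqref{Eq:CovarianceMatrixForUnitaryTransform}, since $\defmat{U}^H\defmat{U}\!=\!\defmat{I}$ leaves $\defvec{s}^H\defvec{s}$ unchanged.
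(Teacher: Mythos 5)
Your proposal is correct: the conditional Gaussian MGF, the collapse of the quadratic form to $\defvec{s}^H\defvec{s}$, and the Gamma integral all check out, and with $\lambda_0^2=2/\nu$ the prefactor $\lambda_0^2/4$ and the existence region come out exactly as in \eqref{Eq:StandardMultivariateCCSMcLeishMGF} and \eqref{Eq:StandardMultivariateCCSMcLeishMGFExistenceRegion}. However, you take a genuinely different route from the paper. The paper's proof is a two-line reduction: it observes that the distributional equality between $\defrvec{S}\!\sim\!\mathcal{M}_{\nu}^{2L}(\defvec{0},\defmat{I})$ in \eqref{Eq:StandardMultivariateMcLeishPartitioning} and $\defrmat{W}$ in \eqref{Eq:StandardMultivariateComplexMcLeishDefinition}, together with the uniqueness property of the MGF, forces $M_{\defrmat{W}}(\defvec{s}_{I}+\imaginary\defvec{s}_{Q})=M_{\defrmat{S}}(\hat{\defvec{s}})$ with $\hat{\defvec{s}}=[\defvec{s}_{I}^T,\defvec{s}_{Q}^T]^T$, and then simply quotes \theoremref{Theorem:StandardMultivariateMcLeishMGF} and the identity $\hat{\defvec{s}}^T\hat{\defvec{s}}=\defvec{s}^H\defvec{s}$; no new integration is performed. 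Your proof instead re-derives the result from scratch via the scale-mixture decomposition of \theoremref{Theorem:StandardMultivariateCCSMcLeishDefinition}, conditioning on $G$ and integrating against \eqref{Eq:ProportionPDF}, which is the pattern the paper uses only in the footnote alternative proof of \theoremref{Theorem:MultivariateMcLeishMGF}. The trade-off: your argument is self-contained and makes transparent where the normalization $\lambda_0^2/4$ (versus the scalar $\lambda^2/8$ of \theoremref{Theorem:CCSMcLeishMGF}) originates, whereas the paper's argument is shorter because it reuses the real $2L$-dimensional computation already done, at the cost of leaning on the $\mathbb{C}^L\!\cong\!\mathbb{R}^{2L}$ identification and MGF uniqueness.
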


\begin{proof}
Following the same logic presented in the proof of \theoremref{Theorem:StandardMultivariateCCSMcLeishPDF}, and noticing that MGF uniquely determines the distributions, we can conclude that the distributional equality between \eqref{Eq:StandardMultivariateMcLeishPartitioning} and \eqref{Eq:StandardMultivariateComplexMcLeishDefinition} also makes both $\defrvec{S}\!\sim\!\mathcal{M}_{\nu}^{2L}(\defvec{0},\defmat{I})$ and $\defrmat{W}\!\sim\!\mathcal{CM}_{\nu}^L\!\bigl(\defvec{0},\defmat{I}\bigr)$ have the same MGF, i.e. 
\begin{equation}
 \mathcal{M}_{\defrmat{S}}(\hat{\defvec{s}})=
    \mathcal{M}_{\defrmat{W}}(\defvec{s}_{I}+\imaginary\defvec{s}_{Q}),
\end{equation}
where $\defvec{x}\!\in\!\mathbb{R}^{L}$ and $\defvec{y}_{Q}\!\in\!\mathbb{R}^{L}$ such that $\defvec{s}\!\in\!\mathbb{R}^{2L}$, that is
\begin{equation}
    \hat{\defvec{s}}=
        \begin{bmatrix}
            \defvec{s}_{I}\\
            \defvec{s}_{Q}
        \end{bmatrix}.
\end{equation}
Then, using \theoremref{Theorem:StandardMultivariateMcLeishMGF}, we easily deduce the \ac{MGF} of $\defrmat{W}$ as in \eqref{Eq:StandardMultivariateCCSMcLeishMGF}, which completes the proof of \theoremref{Theorem:StandardMultivariateCCSMcLeishMGF}.
\end{proof}

Let us have a vector of \acf{u.n.i.d.} \ac{CCS} McLeish distributions, that is 
\begin{equation}\label{Eq:INIDMultivariateCESMcLeishRandomVector}
    \defrmat{Z}=[Z_1,Z_2,\ldots,Z_L]^T,
\end{equation}
where $Z_{\ell}\!=\!{X}_{\ell}+\imaginary{Y}_{\ell}$ such that $X_{\ell}\!\sim\!\mathcal{M}_{\nu}(0,\sigma^2_{\ell})$ and $Y_{\ell}\!\sim\!\allowbreak\mathcal{M}_{\nu}(0,\sigma^2_{\ell})$ (i.e., $Z_{\ell}\!\sim\!\mathcal{CM}_{\nu}(0,\sigma^2_{\ell})$), 
$1\!\leq\!\ell\!\leq\!L$. Furthermore, we assume $\Covariance{X_{k}}{X_{\ell}}\!=\!{0}$ and $\Covariance{Y_{k}}{Y_{\ell}}\!=\!{0}$ for all $k\!\neq\!\ell$, and more $\Covariance{X_{k}}{Y_{\ell}}\!=\!{0}$ for all $1\!\leq\!k,\ell\!\leq\!L$. In accordance with the definition of multivariate distribution, $\defrmat{Z}$ follows a multivariate \ac{CES} McLeish distribution because $\defvec{a}^T\defrmat{Z}$ for all $\defvec{a}\!\in\!\mathbb{C}^L$ follows a McLeish distribution. It is then worth noticing that $\Variance{X_{\ell}}\!=\!\Variance{Y_{\ell}}\!=\!\sigma^2_{\ell}$ and 
and $\Variance{Z_{\ell}}\!=\!\Variance{X_{\ell}}+\Variance{Y_{\ell}}\!=\!2\sigma^2_{\ell}$. Herewith, as similar to what defined before, let us define $\defvec{\sigma}^2\!=\![\sigma^2_1,\sigma^2_2,\ldots,\sigma^2_L]^T\!$, and therefrom $\defvec{\sigma}\!=\![\sigma_1,\sigma_2,\ldots,\sigma_L]^T$. 
Owing to processing $\defvec{\sigma}^T\defrmat{W}\!\sim\!\mathcal{M}_{\nu}(0,\defvec{\sigma}^T\defvec{\sigma})$, we conclude that $\defrmat{Z}$ certainly follows a multivariate \ac{CES} McLeish distribution~with~a~diagonal~covariance matrix, denoted by $\defrmat{Z}\!\sim\!\mathcal{CM}^L_{\nu}(\defvec{0},\diag(\defvec{\sigma}^2))$. Thus, we can decompose $\defrmat{Z}\!\sim\!\mathcal{CM}^L_{\nu}(\defvec{0},\diag(\defvec{\sigma}^2))$~as~an~affine~transformation of standard multivariate \ac{CCS} McLeish distribution as shown in the following theorem.

\begin{theorem}\label{Theorem:INIDMultivariateCESMcLeishDefinition}
A multivariate \ac{CCS} McLeish distribution of uncorrelated and not identically distributed \ac{CCS} McLeish distributions, denoted by 
$\defrmat{Z}\!\sim\!\mathcal{CM}^L_{\nu}(\defvec{0},\diag(\defvec{\sigma}^2))$, is decomposed as
\begin{equation}\label{Eq:INIDMultivariateCESMcLeishDefinition}
    \defrmat{Z}=\diag(\defvec{\sigma})\defrmat{W}.
\end{equation}
where $\defrmat{W}\!\sim\!\mathcal{CM}_{\nu}(0,\defvec{I})$. 
\end{theorem}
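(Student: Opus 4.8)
The plan is to mirror the one-line argument used for the real analogue, \theoremref{Theorem:INIDMultivariateMcLeishDefinition}, replacing the transpose with the Hermitian adjoint and replacing the characterizing linear-combination property with the moment-generating function, which is cleaner in complex space. I would start from $\defrmat{W}\!\sim\!\mathcal{CM}_{\nu}(0,\defmat{I})$, set $\defrmat{Z}\!=\!\diag(\defvec{\sigma})\defrmat{W}$, and compute the MGF of $\defrmat{Z}$ directly from that of $\defrmat{W}$.

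Concretely, I would write $M_{\defrmat{Z}}(\defvec{s})\!=\!\mathbb{E}[\exp(-\defvec{s}^H\defrmat{Z})]\!=\!\mathbb{E}[\exp(-\defvec{s}^H\diag(\defvec{\sigma})\defrmat{W})]$ and, using that $\diag(\defvec{\sigma})$ is a real (hence Hermitian) diagonal matrix, move the scaling onto $\defvec{s}$ via $\defvec{s}^H\diag(\defvec{\sigma})\!=\!(\diag(\defvec{\sigma})\defvec{s})^H$. Setting $\defvec{t}\!=\!\diag(\defvec{\sigma})\defvec{s}$ reduces this to $M_{\defrmat{W}}(\defvec{t})$, which by \theoremref{Theorem:StandardMultivariateCCSMcLeishMGF} equals $(1-\tfrac{\lambda_0^2}{4}\defvec{t}^H\defvec{t})^{-\nu}$. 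Then I would simplify $\defvec{t}^H\defvec{t}\!=\!\defvec{s}^H\diag(\defvec{\sigma})^2\defvec{s}\!=\!\defvec{s}^H\diag(\defvec{\sigma}^2)\defvec{s}$, so that $M_{\defrmat{Z}}(\defvec{s})\!=\!(1-\tfrac{\lambda_0^2}{4}\defvec{s}^H\diag(\defvec{\sigma}^2)\defvec{s})^{-\nu}$. This is exactly the MGF of a multivariate CCS McLeish distribution with zero mean and diagonal covariance $\diag(\defvec{\sigma}^2)$, so by uniqueness of the MGF the decomposition holds with $\defrmat{Z}\!\sim\!\mathcal{CM}^L_{\nu}(\defvec{0},\diag(\defvec{\sigma}^2))$.

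As a structural cross-check confirming the distribution is genuinely CCS rather than only MGF-matched, I would substitute the decomposition of \theoremref{Theorem:StandardMultivariateCCSMcLeishDefinition}, $\defrmat{W}\!=\!\sqrt{G}(\defrmat{N}_1+\imaginary\defrmat{N}_2)$, to get $\defrmat{Z}\!=\!\sqrt{G}\bigl(\diag(\defvec{\sigma})\defrmat{N}_1+\imaginary\diag(\defvec{\sigma})\defrmat{N}_2\bigr)$. Because real diagonal scaling commutes with the scalar Gamma factor $\sqrt{G}$ and sends each $\defrmat{N}_j\!\sim\!\mathcal{N}^L(\defvec{0},\defmat{I})$ to $\mathcal{N}^L(\defvec{0},\diag(\defvec{\sigma}^2))$, all components continue to share the single mixing variable $G$; this is precisely the feature that guarantees circular symmetry, and the marginals read off as $Z_\ell\!=\!\sigma_\ell W_\ell\!\sim\!\mathcal{CM}_{\nu}(0,\sigma^2_\ell)$ as required.

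The main obstacle will be bookkeeping rather than anything conceptual: one must keep the conjugation in the Hermitian form $\defvec{s}^H(\cdot)$ on the correct factor so that the real matrix $\diag(\defvec{\sigma})$ can be transferred across without spawning a stray conjugate, and one must verify that the stated uncorrelatedness hypotheses $\Covariance{X_k}{X_\ell}\!=\!\Covariance{Y_k}{Y_\ell}\!=\!0$ for $k\!\neq\!\ell$ together with $\Covariance{X_k}{Y_\ell}\!=\!0$ indeed force the covariance matrix to be exactly $\diag(\defvec{\sigma}^2)$ with vanishing pseudo-covariance. Once these identifications are settled, the conclusion follows as immediately as in the real case.
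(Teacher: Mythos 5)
Your proposal is correct in substance, but it takes a genuinely different route from the paper. The paper's proof is a one-liner that invokes the definitional characterization of a multivariate distribution: setting $\defrmat{Z}=\diag(\defvec{\sigma})\defrmat{W}$, every linear combination satisfies $\defvec{a}^T\defrmat{Z}=(\diag(\defvec{\sigma})\defvec{a})^T\defrmat{W}\sim\mathcal{M}_{\nu}\bigl(0,\defvec{a}^T\diag(\defvec{\sigma}^2)\,\defvec{a}\bigr)$ for all $\defvec{a}\in\mathbb{C}^{L}$, exactly mirroring the real case of \theoremref{Theorem:INIDMultivariateMcLeishDefinition}; that is the entire argument. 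You instead transfer the MGF of $\defrmat{W}$ through the scaling and appeal to MGF uniqueness — which is, almost verbatim, the argument the paper saves for the later result \theoremref{Theorem:INIDMultivariateCESMcLeishMGF}. The one thing to watch in your version is a mild circularity: at the point where this theorem sits, the MGF of $\mathcal{CM}^L_{\nu}(\defvec{0},\diag(\defvec{\sigma}^2))$ has not yet been derived (the paper obtains it \emph{from} this decomposition), so "matching the known target MGF" is not literally available; the object $\mathcal{CM}^L_{\nu}(\defvec{0},\diag(\defvec{\sigma}^2))$ is defined only as a vector of uncorrelated, non-identically distributed CCS McLeish marginals forming a multivariate distribution. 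Your structural cross-check is what rescues this: writing $\defrmat{Z}=\sqrt{G}\bigl(\diag(\defvec{\sigma})\defrmat{N}_1+\imaginary\diag(\defvec{\sigma})\defrmat{N}_2\bigr)$ with a single mixing variable $G$ shows directly that the components are CCS McLeish with variances $\sigma^2_{\ell}$, mutually uncorrelated, and circularly symmetric, which is precisely what the notation denotes — and it is essentially equivalent to the paper's linear-combination verification. In exchange for being longer, your route is more explicit and effectively pre-proves the paper's subsequent MGF theorem, whereas the paper's route is shorter but leans entirely on the definitional machinery set up before \theoremref{Theorem:StandardMultivariateMcLeishDefinition}.
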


\begin{proof}
The proof is obvious using the fact that, for all $\defvec{a}\in\mathbb{C}^{L}$, we have $\defvec{a}^T\defrmat{Z}\!\sim\!\mathcal{M}_{\nu}(0,\defvec{a}^T\!\diag(\defvec{\sigma})\,\defvec{a})$ with the pivotal details mentioned before \theoremref{Theorem:INIDMultivariateMcLeishDefinition}.
\end{proof}

Accordingly, the \ac{PDF}, \ac{CDF}, \ac{CCDF} and \ac{MGF} of a multivariate \acf{CES} McLeish distribution, denoted by $\defrmat{Z}\!\sim\!\mathcal{CM}^L_{\nu}(\defvec{0},\diag(\defvec{\sigma}^2))$, is given in the following.

\begin{theorem}\label{Theorem:INIDMultivariateCESMcLeishPDF}
\!\!The \ac{PDF} of $\defrmat{Z}\!\sim\!\mathcal{CM}^L_{\nu}(\defvec{0},\!\diag(\defvec{\sigma}^2))$~is~given~by
\begin{equation}\label{Eq:INIDMultivariateCESMcLeishPDF}
\!\!\!\!f_{\defrmat{Z}}(\defvec{z})=
		\frac{2}{\pi^{L}}
		\frac{{\bigl\lVert{\defmat{\Lambda}^{-1}\defvec{z}}\bigr\rVert}^{\nu-{L}}}
			{\Gamma(\nu)\det(\defmat{\Lambda})}
				{K}_{\nu-{L}}\Bigl({2}{\bigl\lVert{\defmat{\Lambda}^{-1}\defvec{z}}\bigr\rVert}\Bigr),
\end{equation}
for a certain $\defvec{z}\!=\![z_1,z_2,\ldots,z_L]^T\!\in\!\mathbb{C}^L$, where $\defrmat{\Lambda}\!=\!\diag(\defvec{\lambda})$
and $\defvec{\lambda}\!=\!\lambda_{0}\,\defvec{\sigma}$ denotes the component deviation vector.
\end{theorem}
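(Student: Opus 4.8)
The plan is to prove \theoremref{Theorem:INIDMultivariateCESMcLeishPDF} by exactly the change-of-variables strategy already used for the real diagonal case in \theoremref{Theorem:INIDMultivariateMcLeishPDF}, now transplanted to complex space. By \theoremref{Theorem:INIDMultivariateCESMcLeishDefinition} the vector decomposes as $\defrmat{Z}\!=\!\diag(\defvec{\sigma})\defrmat{W}$ with $\defrmat{W}\!\sim\!\mathcal{CM}_{\nu}^{L}(\defvec{0},\defmat{I})$, whose \ac{PDF} $f_{\defrmat{W}}$ is already in closed form in \eqref{Eq:StandardMultivariateCCSMcLeishPDF}. First I would invert the decomposition to $\defrmat{W}\!=\!\diag(\defvec{\sigma})^{-1}\defrmat{Z}$ and express the target density through the standard transformation rule $f_{\defrmat{Z}}(\defvec{z})\!=\!f_{\defrmat{W}}(\diag(\defvec{\sigma})^{-1}\defvec{z})\,J_{\defrmat{Z}|\defrmat{W}}$, where $J_{\defrmat{Z}|\defrmat{W}}$ is the Jacobian of the rescaling.

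Next I would invoke the deviation-factor matrix $\defmat{\Lambda}\!=\!\diag(\defvec{\lambda})\!=\!\lambda_{0}\diag(\defvec{\sigma})$ introduced in the statement, so that $\diag(\defvec{\sigma})^{-1}\!=\!\lambda_{0}\defmat{\Lambda}^{-1}$ and the argument fed into $f_{\defrmat{W}}$ obeys $\lVert{\diag(\defvec{\sigma})^{-1}\defvec{z}}\rVert\!=\!\lambda_{0}\lVert{\defmat{\Lambda}^{-1}\defvec{z}}\rVert$. Substituting this into \eqref{Eq:StandardMultivariateCCSMcLeishPDF}, the $\lambda_{0}$ pulled out of the Euclidean norm and out of the Bessel argument recombines with the $\lambda_{0}^{\nu+L}$ in the denominator, leaving the clean factors $\lVert{\defmat{\Lambda}^{-1}\defvec{z}}\rVert^{\nu-L}$ and ${K}_{\nu-L}\bigl(2\lVert{\defmat{\Lambda}^{-1}\defvec{z}}\rVert\bigr)$. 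The Jacobian $J_{\defrmat{Z}|\defrmat{W}}$ then supplies precisely the determinant normalization appearing in \eqref{Eq:INIDMultivariateCESMcLeishPDF}, completing the identification.

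The step I expect to demand the most care is the Jacobian $J_{\defrmat{Z}|\defrmat{W}}$ itself. Unlike the real case of \theoremref{Theorem:INIDMultivariateMcLeishPDF}, where the rescaling acts on $L$ real coordinates, here each complex component $Z_{\ell}\!=\!\sigma_{\ell}W_{\ell}$ scales \emph{both} its inphase and quadrature parts by $\sigma_{\ell}$, so the underlying $2L\!\times\!2L$ real Jacobian picks up the factor $\sigma_{\ell}$ twice per component; tracking the resulting exact power of $\det(\diag(\defvec{\sigma}))$, and hence of $\det(\defmat{\Lambda})$, is the delicate bookkeeping on which the normalizing constant hinges. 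As a cross-check I would confirm that setting $\sigma_{\ell}\!=\!\sigma$ for all $\ell$ collapses \eqref{Eq:INIDMultivariateCESMcLeishPDF} back to the standard density of \theoremref{Theorem:StandardMultivariateCCSMcLeishPDF} up to the overall $\lambda$-scaling, mirroring how \eqref{Eq:IIDMultivariateMcLeishPDF} specializes in the real setting. An alternative route that sidesteps the Jacobian is to condition on $G\!\sim\!\mathcal{G}(\nu,1)$: by \theoremref{Theorem:StandardMultivariateCCSMcLeishDefinition} one has $\defrmat{Z}\!=\!\sqrt{G}\,\diag(\defvec{\sigma})(\defrmat{N}_{1}+\imaginary\defrmat{N}_{2})$, so $f_{\defrmat{Z}|G}$ is a diagonal circularly-symmetric complex Gaussian density, and the mixture $f_{\defrmat{Z}}(\defvec{z})\!=\!\int_{0}^{\infty}f_{\defrmat{Z}|G}(\defvec{z}|g)f_{G}(g)\,dg$ evaluates to the Bessel-$K$ form by \eqref{Eq:ProportionPDF} together with \cite[Eq.\!~(3.471/9)]{BibGradshteynRyzhikBook}.
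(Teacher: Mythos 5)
Your proposal is correct and follows essentially the same route as the paper's own proof: invert the decomposition of \theoremref{Theorem:INIDMultivariateCESMcLeishDefinition} to $\defrmat{W}\!=\!\diag(\defvec{\sigma})^{-1}\defrmat{Z}$, apply the change-of-variables rule, substitute the standard \ac{PDF} \eqref{Eq:StandardMultivariateCCSMcLeishPDF}, and absorb $\lambda_{0}$ into $\defmat{\Lambda}\!=\!\lambda_{0}\diag(\defvec{\sigma})$. Your insistence that the complex rescaling contributes $\sigma_{\ell}$ twice per component (once for the inphase and once for the quadrature part) is exactly the right bookkeeping and resolves a point the paper glosses over: carried through, it produces $\det(\defmat{\Lambda})^{2}$ (equivalently $\det(\defmat{\Lambda}^{2})$) in the normalization rather than the printed $\det(\defmat{\Lambda})$, which is what consistency with the univariate case \eqref{Eq:CCSMcLeishPDF} and with the equal-variance special case \eqref{Eq:INIDMultivariateCESMcLeishPDFWithUniformVariances} requires, whereas the paper's proof writes the Jacobian with a single power of the determinant while simultaneously invoking $\det(\diag(\defvec{\sigma})^{2})\!=\!\det(\diag(\defvec{\sigma}))^{2}$.
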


\begin{proof}
Note that, using \eqref{Eq:INIDMultivariateCESMcLeishDefinition}, we can write $\defrmat{W}\!=\!\diag(\defvec{\sigma})^{-1}\defrvec{Z}$ and therefrom obtain its Jacobian $J_{\defrmat{W}|\defrmat{Z}}\!=\!\det(\diag(\defvec{\sigma})^{-1})$. We can write the \ac{PDF} of $\defrmat{X}$ as 
\begin{subequations}
\setlength\arraycolsep{1.4pt}
\begin{eqnarray}
    f_{\defrmat{Z}}(\defvec{z})&=&
        f_{\defrmat{W}}(\diag(\defvec{\sigma})^{-1}\defrvec{z})\,
            J_{\defrmat{W}|\defrmat{Z}},\\
    &=&f_{\defrmat{W}}(\diag(\defvec{\sigma})^{-1}\defrvec{z})\,
            \det(\diag(\defvec{\sigma})^{-1}),
\end{eqnarray}
\end{subequations}
where substituting \eqref{Eq:StandardMultivariateCCSMcLeishPDF} and utilizing both $\det(\diag(\defvec{\sigma})^{-1})\!=\!\det(\diag(\defvec{\sigma}))^{-1}$ and $\det(\diag(\defvec{\sigma})^2)\!=\!\det(\diag(\defvec{\sigma}))^{2}$
yields \eqref{Eq:INIDMultivariateCESMcLeishPDF}, which completes the proof of 
\theoremref{Theorem:StandardMultivariateCCSMcLeishPDF}.
\end{proof}

Note that, for consistency and clarity, setting $\diag(\defvec{\sigma}^2)\!=\!\sigma^2\defmat{I}$ (i.e., making each component have equal power) reduces \eqref{Eq:INIDMultivariateCESMcLeishPDF} to the \ac{PDF} of $\defrmat{Z}\!\sim\!\mathcal{CM}^L_{\nu}(\defvec{0},\!\sigma^2\defmat{I})$~given~by
\begin{equation}\label{Eq:INIDMultivariateCESMcLeishPDFWithUniformVariances}
\!\!\!\!f_{\defrmat{Z}}(\defvec{z})=
		\frac{2}{\pi^{L}}
		\frac{{\bigl\lVert{\defvec{z}}\bigr\rVert}^{\nu-{L}}}
			{\Gamma(\nu)\lambda^{\nu+{L}}}
				{K}_{\nu-{L}}\Bigl(\frac{2}{\lambda}{\bigl\lVert{\defvec{z}}\bigr\rVert}\Bigr),
\end{equation}
where $\lambda\!=\!\sqrt{2\sigma^2/\nu}$ as defined before.

\begin{theorem}\label{Theorem:INIDMultivariateCESMcLeishCDF}
\!\!The \ac{CDF} of $\defrmat{Z}\!\sim\!\mathcal{CM}^L_{\nu}(\defvec{0},\!\diag(\defvec{\sigma}^2))$~is~given~by
\begin{equation}\label{Eq:INIDMultivariateCESMcLeishCDF}
    F_{\defrmat{Z}}(\defvec{z})=
        \widehat{Q}^{L}_{\nu}\bigl(\lambda_{0}\defmat{\Lambda}^{-1}\defvec{z}\bigr),
\end{equation}
defined over $\boldsymbol{z}\!\in\!\mathbb{C}^L$.
\end{theorem}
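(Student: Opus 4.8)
The plan is to follow exactly the template already used for the real-valued analogue in \theoremref{Theorem:INIDMultivariateMcLeishCDF}, transporting the standard-distribution result of \theoremref{Theorem:StandardMultivariateCCSMcLeishCDF} through the affine decomposition of \theoremref{Theorem:INIDMultivariateCESMcLeishDefinition}. First I would invoke that decomposition, writing $\defrmat{Z}=\diag(\defvec{\sigma})\defrmat{W}$ with $\defrmat{W}\!\sim\!\mathcal{CM}^{L}_{\nu}(\defvec{0},\defmat{I})$. Since the component deviation matrix satisfies $\defmat{\Lambda}\!=\!\lambda_{0}\diag(\defvec{\sigma})$, the inverse scaling is $\diag(\defvec{\sigma})^{-1}\!=\!\lambda_{0}\defmat{\Lambda}^{-1}$, so that $\defrmat{W}\!=\!\lambda_{0}\defmat{\Lambda}^{-1}\defrmat{Z}$. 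This algebraic identity is what ultimately produces the argument $\lambda_{0}\defmat{\Lambda}^{-1}\defvec{z}$ appearing in \eqref{Eq:INIDMultivariateCESMcLeishCDF}.

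Next I would translate the defining event of the \ac{CDF}. By definition $F_{\defrmat{Z}}(\defvec{z})\!=\!\Pr\{Z_1\!\leq\!z_1,\ldots,Z_L\!\leq\!z_L\}$, where each complex inequality $Z_\ell\!\leq\!z_\ell$ is interpreted componentwise on the inphase and quadrature parts, i.e.\ $\RealPart{Z_\ell}\!\leq\!\RealPart{z_\ell}$ and $\ImagPart{Z_\ell}\!\leq\!\ImagPart{z_\ell}$. Because each scaling factor $\sigma_\ell$ is a strictly positive real number, multiplication by $\sigma_\ell$ acts separately and monotonically on $\RealPart{W_\ell}$ and $\ImagPart{W_\ell}$; consequently the event $\{Z_\ell\!\leq\!z_\ell\}$ coincides with $\{W_\ell\!\leq\!z_\ell/\sigma_\ell\}$ for every $\ell$. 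Collecting the $L$ components then gives $F_{\defrmat{Z}}(\defvec{z})\!=\!F_{\defrmat{W}}(\diag(\defvec{\sigma})^{-1}\defvec{z})$. Finally, substituting the closed form $F_{\defrmat{W}}(\defvec{w})\!=\!\widehat{Q}^{L}_{\nu}(\defvec{w})$ from \theoremref{Theorem:StandardMultivariateCCSMcLeishCDF} and setting $\defvec{w}\!=\!\diag(\defvec{\sigma})^{-1}\defvec{z}\!=\!\lambda_{0}\defmat{\Lambda}^{-1}\defvec{z}$ yields precisely \eqref{Eq:INIDMultivariateCESMcLeishCDF}, completing the proof.

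The only step requiring genuine care, and hence the \emph{hard part}, is verifying that the componentwise ordering used to define the complex multivariate \ac{CDF} is preserved under the positive real diagonal scaling $\diag(\defvec{\sigma})$. This is immediate here because the scaling is monotone on both the real and imaginary axes, so no quadrant reshuffling of the kind encountered in \theoremref{Theorem:CCSMcLeishCDF} (where sign changes genuinely permute the complex quadrants) can occur. Everything else is a routine substitution, and the argument mirrors the real-domain proof of \theoremref{Theorem:INIDMultivariateMcLeishCDF} line for line, with the complex quantile $\widehat{Q}^{L}_{\nu}(\cdot)$ of \defref{Definition:McLeishMultivariateComplexQAndComplementaryComplexQFunction} replacing its real counterpart.
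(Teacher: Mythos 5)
Your proof is correct and follows essentially the same route as the paper's: both invoke the decomposition $\defrmat{Z}=\diag(\defvec{\sigma})\defrmat{W}$ from \theoremref{Theorem:INIDMultivariateCESMcLeishDefinition}, write $F_{\defrmat{Z}}(\defvec{z})=F_{\defrmat{W}}(\diag(\defvec{\sigma})^{-1}\defvec{z})$, and then substitute $F_{\defrmat{W}}(\defvec{w})=\widehat{Q}^{L}_{\nu}(\defvec{w})$ from \theoremref{Theorem:StandardMultivariateCCSMcLeishCDF} together with $\diag(\defvec{\sigma})^{-1}=\lambda_{0}\defmat{\Lambda}^{-1}$. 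The only difference is that you spell out the monotonicity argument justifying the event equivalence under positive diagonal scaling, a step the paper treats as immediate.
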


\begin{proof}
With the aid of the distributional relation between $\defrmat{Z}\!\sim\!\mathcal{CM}^L_{\nu}(\defvec{0},\!\diag(\defvec{\sigma}^2))$ and $\defrvec{W}\!\sim\!\mathcal{CM}_{\nu}^{L}(\defvec{0},\defmat{I})$, presented in \eqref{Eq:INIDMultivariateCESMcLeishDefinition}, we have $\defrvec{W}=\diag(\defvec{\sigma})^{-1}\defrvec{Z}$ and therefrom write
\begin{subequations}\label{Eq:INIDMultivariateCESMcLeishCDFTransform}
\setlength\arraycolsep{1.4pt}
\begin{eqnarray}
    \label{Eq:INIDMultivariateCESMcLeishCDFTransformA}
    F_{\defrmat{Z}}(\defvec{z})
        &=&F_{\defrmat{W}}(\defvec{w}),\\
    \label{Eq:INIDMultivariateCESMcLeishCDFTransformB}
        &=&F_{\defrmat{W}}(\diag(\defvec{\sigma})^{-1}\defvec{z}), 
\end{eqnarray}
\end{subequations}
Finally, substituting the \ac{CDF} $F_{\defrmat{W}}(\defvec{z})$, which is given in \eqref{Eq:StandardMultivariateCCSMcLeishCDF}, into \eqref{Eq:INIDMultivariateCESMcLeishCDFTransformB} and therein using $\diag(\defvec{\sigma})^{-1}\!=\!\lambda_{0}\defmat{\Lambda}^{-1}$, we readily obtain \eqref{Eq:INIDMultivariateCESMcLeishCDF}, which~proves~\theoremref{Theorem:INIDMultivariateCESMcLeishCDF}. 
\end{proof}

\begin{theorem}\label{Theorem:INIDMultivariateCESMcLeishCCDF}
\!\!The\!~\ac{CCDF}\!~of\!~$\defrmat{Z}\!\sim\!\mathcal{CM}^L_{\nu}(\defvec{0},\!\diag(\defvec{\sigma}^2))$~is~given~by\!
\begin{equation}\label{Eq:INIDMultivariateCESMcLeishCCDF}
    \widehat{F}_{\defrmat{X}}(\defvec{z})=
        {Q}^{L}_{\nu}\bigl(\lambda_{0}\defmat{\Lambda}^{-1}\defvec{z}\bigr),
\end{equation}
defined over $\boldsymbol{z}\!\in\!\mathbb{C}^L$.
\end{theorem}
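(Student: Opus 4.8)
The plan is to follow exactly the route used to establish \theoremref{Theorem:INIDMultivariateCESMcLeishCDF} for the \ac{CDF}, namely to reduce the tail probability of $\defrmat{Z}\!\sim\!\mathcal{CM}^L_{\nu}(\defvec{0},\diag(\defvec{\sigma}^2))$ to that of the standard vector $\defrmat{W}\!\sim\!\mathcal{CM}_{\nu}^{L}(\defvec{0},\defmat{I})$ through the affine decomposition $\defrmat{Z}\!=\!\diag(\defvec{\sigma})\defrmat{W}$ supplied by \theoremref{Theorem:INIDMultivariateCESMcLeishDefinition}, and then to invoke the closed-form standard \ac{CCDF} already obtained in \theoremref{Theorem:StandardMultivariateCCSMcLeishCCDF}. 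Since $\diag(\defvec{\sigma})$ is invertible with $\defrmat{W}\!=\!\diag(\defvec{\sigma})^{-1}\defrmat{Z}$, the entire computation amounts to a single change of argument inside a functional form that is already known.

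First I would write out the defining tail event $\widehat{F}_{\defrmat{Z}}(\defvec{z})\!=\!\Pr\{Z_1\!>\!z_1,\ldots,Z_L\!>\!z_L\}$ coordinatewise. Because each scaling factor $\sigma_\ell$ is a strictly positive real number, the event $\{Z_\ell\!>\!z_\ell\}\!=\!\{\sigma_\ell W_\ell\!>\!z_\ell\}$ is equivalent to $\{W_\ell\!>\!z_\ell/\sigma_\ell\}$, the inequality being preserved separately on the inphase part $\RealPart{\cdot}$ and the quadrature part $\ImagPart{\cdot}$. Collecting the $L$ coordinates yields the identity $\widehat{F}_{\defrmat{Z}}(\defvec{z})\!=\!\widehat{F}_{\defrmat{W}}(\diag(\defvec{\sigma})^{-1}\defvec{z})$. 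Substituting \theoremref{Theorem:StandardMultivariateCCSMcLeishCCDF} and then replacing $\diag(\defvec{\sigma})^{-1}\!=\!\lambda_{0}\defmat{\Lambda}^{-1}$ — the same relation already exploited in the \ac{PDF} and \ac{CDF} proofs — gives
\begin{equation}
    \widehat{F}_{\defrmat{Z}}(\defvec{z})
        =\widehat{F}_{\defrmat{W}}\bigl(\diag(\defvec{\sigma})^{-1}\defvec{z}\bigr)
        ={Q}^{L}_{\nu}\bigl(\diag(\defvec{\sigma})^{-1}\defvec{z}\bigr)
        ={Q}^{L}_{\nu}\bigl(\lambda_{0}\defmat{\Lambda}^{-1}\defvec{z}\bigr),
\end{equation}
which is precisely \eqref{Eq:INIDMultivariateCESMcLeishCCDF}.

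I do not anticipate a genuine obstacle, as the argument is structurally identical to the \ac{CDF} case; the only point requiring mild care is the complex coordinatewise ordering. One must verify that multiplication by the positive diagonal matrix $\diag(\defvec{\sigma})$ maps the upper-tail region bijectively onto another upper-tail region, which holds precisely because each $\sigma_\ell\!>\!0$ acts as a positive dilation on both $\RealPart{Z_\ell}$ and $\ImagPart{Z_\ell}$ and therefore reverses none of the coordinatewise inequalities. As a consistency check I would specialize $\diag(\defvec{\sigma}^2)\!=\!\sigma^2\defmat{I}$, whereupon $\defmat{\Lambda}\!=\!\lambda\defmat{I}$ and the expression collapses to ${Q}^{L}_{\nu}(\lambda_{0}\defvec{z}/\lambda)$, matching the uncorrelated equal-power case as expected.
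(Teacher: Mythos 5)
Your proposal is correct and follows essentially the same route as the paper: the paper's proof likewise invokes \theoremref{Theorem:StandardMultivariateCCSMcLeishCCDF} and repeats the steps of the \ac{CDF} proof of \theoremref{Theorem:INIDMultivariateCESMcLeishCDF}, i.e., writing $\defrmat{W}\!=\!\diag(\defvec{\sigma})^{-1}\defrmat{Z}$ and substituting $\diag(\defvec{\sigma})^{-1}\!=\!\lambda_{0}\defmat{\Lambda}^{-1}$. Your explicit justification that positive diagonal scaling preserves the coordinatewise tail inequalities, and the equal-power consistency check, are sound additions that the paper leaves implicit.
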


\begin{proof}
The proof is obvious using \eqref{Eq:StandardMultivariateCCSMcLeishCCDF} and  \theoremref{Theorem:StandardMultivariateCCSMcLeishCCDF} and then performing almost same steps followed in the proof of \theoremref{Theorem:INIDMultivariateCESMcLeishCDF}.
\end{proof}

\begin{theorem}\label{Theorem:INIDMultivariateCESMcLeishMGF}
\!\!The \ac{MGF} of $\defrmat{Z}\!\sim\!\mathcal{CM}^L_{\nu}(\defvec{0},\!\diag(\defvec{\sigma}^2))$~is~given~by
\begin{equation}\label{Eq:INIDMultivariateCESMcLeishMGF}
    M_{\defrmat{Z}}(\defvec{s})=\Bigl(1-\frac{1}{4}\defvec{s}^H\defmat{\Lambda}^{2}\defvec{s}\Bigr)^{-\nu},
\end{equation}
for a certain $\defvec{s}\!\in\!\mathbb{C}^L$ within the existence region $\defvec{s}\!\in\!\mathbb{C}_{0}$, where the region $\mathbb{C}_{0}$ is given by 
\begin{equation}\label{Eq:INIDMultivariateCESMcLeishMGFExistenceRegion}
    \mathbb{C}_0=\Bigl\{
        \defvec{s}\,    
        \Bigl|\,
        \defvec{s}^H\defmat{\Lambda}^{2}\defvec{s}
        \leq
        {4}
        \Bigr.
        \Bigr\}.
\end{equation}
\end{theorem}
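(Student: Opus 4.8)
The plan is to reuse the affine decomposition established in \theoremref{Theorem:INIDMultivariateCESMcLeishDefinition}, namely $\defrmat{Z}\!=\!\diag(\defvec{\sigma})\defrmat{W}$ with $\defrmat{W}\!\sim\!\mathcal{CM}^{L}_{\nu}(\defvec{0},\defmat{I})$, and then to invoke the already-derived \ac{MGF} of the standard multivariate \ac{CCS} McLeish distribution from \theoremref{Theorem:StandardMultivariateCCSMcLeishMGF}. This follows precisely the route taken in the real-valued case of \theoremref{Theorem:INIDMultivariateMcLeishMGF}, with every transpose replaced by the Hermitian (conjugate) transpose in keeping with the complex inner product defined in \secref{Section:Preliminaries}.

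First I would write the \ac{MGF} of $\defrmat{Z}$ in complex space as $M_{\defrmat{Z}}(\defvec{s})\!=\!\Expected{\exp(-\langle\defvec{s},\defrmat{Z}\rangle)}\!=\!\Expected{\exp(-\defvec{s}^H\defrmat{Z})}$ and substitute the decomposition to obtain $\defvec{s}^H\defrmat{Z}\!=\!\defvec{s}^H\diag(\defvec{\sigma})\defrmat{W}$. Since $\diag(\defvec{\sigma})$ is real and diagonal, it is Hermitian, so $\defvec{s}^H\diag(\defvec{\sigma})\!=\!(\diag(\defvec{\sigma})\defvec{s})^H$, whence $M_{\defrmat{Z}}(\defvec{s})\!=\!\Expected{\exp(-(\diag(\defvec{\sigma})\defvec{s})^H\defrmat{W})}\!=\!M_{\defrmat{W}}(\diag(\defvec{\sigma})\defvec{s})$. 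In other words, the change of variable $\defvec{s}\!\mapsto\!\diag(\defvec{\sigma})\defvec{s}$ passes cleanly through the complex inner product.

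Next I would substitute \eqref{Eq:StandardMultivariateCCSMcLeishMGF} evaluated at the argument $\diag(\defvec{\sigma})\defvec{s}$. The quadratic form simplifies as $(\diag(\defvec{\sigma})\defvec{s})^H(\diag(\defvec{\sigma})\defvec{s})\!=\!\defvec{s}^H\diag(\defvec{\sigma})^2\defvec{s}$, again using that $\diag(\defvec{\sigma})$ is real diagonal. Recalling from \eqref{Eq:INIDMultivariateMcLeishDeviationFactorMatrix} that $\defmat{\Lambda}\!=\!\lambda_0\diag(\defvec{\sigma})$, so that $\lambda_0^2\diag(\defvec{\sigma})^2\!=\!\defmat{\Lambda}^2$, the factor $\frac{\lambda_0^2}{4}(\diag(\defvec{\sigma})\defvec{s})^H(\diag(\defvec{\sigma})\defvec{s})$ collapses to $\frac{1}{4}\defvec{s}^H\defmat{\Lambda}^2\defvec{s}$, which yields \eqref{Eq:INIDMultivariateCESMcLeishMGF}. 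The same substitution transforms the convergence condition $\lambda_0^2(\diag(\defvec{\sigma})\defvec{s})^H(\diag(\defvec{\sigma})\defvec{s})\!\leq\!4$ of \theoremref{Theorem:StandardMultivariateCCSMcLeishMGF} into $\defvec{s}^H\defmat{\Lambda}^2\defvec{s}\!\leq\!4$, reproducing the existence region \eqref{Eq:INIDMultivariateCESMcLeishMGFExistenceRegion}.

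The argument is essentially mechanical once the earlier results are in hand; the only point demanding genuine care is the bookkeeping of the Hermitian transpose acting on the real diagonal scaling matrix, i.e.\ verifying that $\diag(\defvec{\sigma})^H\!=\!\diag(\defvec{\sigma})$ so that the linear substitution commutes with $\langle\cdot,\cdot\rangle$, together with tracking the existence region consistently through that substitution. I anticipate no deeper obstacle, since the whole proof reduces to a single unitary-type change of variable applied to \theoremref{Theorem:StandardMultivariateCCSMcLeishMGF}.
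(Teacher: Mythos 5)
Your proposal is correct and follows essentially the same route as the paper's own proof: decompose $\defrmat{Z}\!=\!\diag(\defvec{\sigma})\defrmat{W}$ via \theoremref{Theorem:INIDMultivariateCESMcLeishDefinition}, pass the real diagonal scaling through the complex inner product to get $M_{\defrmat{Z}}(\defvec{s})\!=\!M_{\defrmat{W}}(\diag(\defvec{\sigma})\defvec{s})$, and substitute into \theoremref{Theorem:StandardMultivariateCCSMcLeishMGF} using $\defmat{\Lambda}\!=\!\lambda_{0}\diag(\defvec{\sigma})$. The only cosmetic difference is that the paper writes the substitution as $\diag(\defvec{\sigma})\defvec{s}\!=\!\defmat{\Lambda}\defvec{s}/\lambda_{0}$ while you use $\lambda_{0}^{2}\diag(\defvec{\sigma})^{2}\!=\!\defmat{\Lambda}^{2}$, which is the same bookkeeping.
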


\begin{proof}
We can write the \ac{MGF} of $\defrmat{Z}\!\sim\!\mathcal{CM}^L_{\nu}(\defvec{0},\!\diag(\defvec{\sigma}^2))$ as $M_{\defrmat{Z}}(\defvec{s})\!=\!\mathbb{E}[\exp\bigl(-\langle\defvec{s},\defrmat{Z}\rangle\bigr)]$, where putting \eqref{Eq:INIDMultivariateCESMcLeishDefinition} gives 
\setlength\arraycolsep{1.4pt}
\begin{eqnarray}
    M_{\defrmat{Z}}(\defvec{s})
        &=&\mathbb{E}[\exp\bigl(-\langle\defvec{s},\diag(\defvec{\sigma})\defrmat{W}\rangle\bigr)],\\
        &=&\mathbb{E}[\exp\bigl(-\langle\diag(\defvec{\sigma})\defvec{s},\defrmat{W}\rangle\bigr)],
\end{eqnarray}
and therefrom we conclude that $M_{\defrmat{Z}}(\defvec{s})\!=\!M_{\defrmat{W}}(\diag(\defvec{\sigma})\defvec{s})$, where $M_{\defrmat{W}}(\defvec{s})$ denotes the \ac{MGF} of $\defrmat{W}$ and is given in \eqref{Eq:StandardMultivariateCCSMcLeishMGF}. Finally, substituting $\diag(\defvec{\sigma})\defvec{s}\!=\!\defmat{\Lambda}\defvec{s}/\lambda_{0}$ into \eqref{Eq:StandardMultivariateCCSMcLeishMGF} results in \eqref{Eq:INIDMultivariateCESMcLeishMGF}, which completes the proof of \theoremref{Theorem:INIDMultivariateCESMcLeishMGF}.
\end{proof}

In what follows, the most general case in which we assume that complex McLeish distributions are mutually correlated and non-identically distributed is investigated using the results obtained previously. Referring to \eqref{Eq:StandardMultivariateComplexMcLeishDefinition}, let us have a random vector of complex McLeish distributions given as 
\begin{equation}\label{Eq:MultivariateComplexMcLeishDefinition}
	\defrvec{Z}\!=\!\defrvec{X}_1+\imaginary\defrvec{X}_2,
\end{equation}
where $\defrvec{X}_1\!\sim\!\mathcal{M}_{\nu_1}^{L}(\defvec{\mu}_{1},\defmat{\Sigma}_{11})$ and $\defrvec{X}_2\!\sim\!\mathcal{M}_{\nu_2}^{L}(\defvec{\mu}_{2},\defmat{\Sigma}_{22})$.~Moreover, we assume that both $\defrvec{X}_1$ and $\defrvec{X}_2$ are without loss of generality correlated with each other, i.e., 
\setlength\arraycolsep{1.4pt} 
\begin{eqnarray}
    \defmat{\Sigma}_{12}&=&
        \mathbb{E}[(\defrvec{X}_{1}-\defvec{\mu}_{1})\allowbreak(\defrvec{X}_{2}-\defvec{\mu}_{2})^T]\neq\defmat{0},\\
    \defmat{\Sigma}_{21}&=&
        \mathbb{E}[(\defrvec{X}_{2}-\defvec{\mu}_{2})(\defrvec{X}_{1}-\defvec{\mu}_{1})^T]\neq\defmat{0}.
\end{eqnarray}
As noticing the mean vector of $\defrvec{Z}$ is readily obtained as $\defvec{\mu}\!=\!\mathbb{E}[\defrvec{Z}]\!=\!\defvec{\mu}_1+\defvec{\mu}_2$, then we properly write its pseudo-covariance matrix as follows
\begin{equation}\label{Eq:MultivariateComplexMcLeishPseudoCovariance}
\mathbb{E}[(\defrvec{Z}-\defvec{\mu})(\defrvec{Z}-\defvec{\mu})^T]
    =\defmat{\Sigma}_{11}-\defmat{\Sigma}_{22}+\imaginary(\defmat{\Sigma}_{12}+\defmat{\Sigma}_{21}),
\end{equation}
and its covariance matrix as follows
\begin{equation}\label{Eq:MultivariateComplexMcLeishCovariance}
\mathbb{E}[(\defrvec{Z}-\defvec{\mu})(\defrvec{Z}-\defvec{\mu})^H]
    =\defmat{\Sigma}_{11}+\defmat{\Sigma}_{22}+\imaginary(\defmat{\Sigma}_{12}-\defmat{\Sigma}_{21}),
\end{equation}
We acknowledge that circular symmetry for McLeish random vectors is more detailed than circular symmetry for individual McLeish distributions. For preserving the circularly symmetry around the mean \cite{BibGallagerPRPRT2008}, i.e., in order to have the components of $\defrvec{X}_1$ become circular to those of $\defrvec{X}_2$, we should provide that, as well explained in  \cite{BibGallagerPRPRT2008}, $\mathbb{E}[(\defrvec{Z}-\defvec{\mu})(\defrvec{Z}-\defvec{\mu})^T]$ has to be a null matrix \cite{BibGallagerPRPRT2008}. For that purpose, we strictly impose from \eqref{Eq:MultivariateComplexMcLeishPseudoCovariance} that $\defmat{\Sigma}_{11}\!=\!\defmat{\Sigma}_{22}\!=\!\defmat{R}$ and $\defmat{\Sigma}_{12}\!=\!-\defmat{\Sigma}_{21}\!=\!\defmat{J}$. Accordingly, we have 
\setlength\arraycolsep{1.4pt} 
\begin{eqnarray}
    \mathbb{E}[(\defrvec{Z}-\defvec{\mu})(\defrvec{Z}-\defvec{\mu})^T]&=&\defmat{0},\\
    \mathbb{E}[(\defrvec{Z}-\defvec{\mu})(\defrvec{Z}-\defvec{\mu})^H]&=&2(\defmat{R}+\imaginary\defmat{J})=2\defmat{\Sigma},
\end{eqnarray}
where $\defmat{\Sigma}\!=\!\defmat{R}+\imaginary\defmat{J}$ such that $\defmat{\Sigma}$ is a complex symmetric matrix (i.e., $\defmat{\Sigma}^H\!\!=\!\defmat{\Sigma}$). Furthermore, we acknowledge that $\ImagPart{\defmat{\Sigma}}\!=\!\defmat{0}$ when $\defmat{\Sigma}_{12}\!=\!\defmat{\Sigma}_{21}\!=\!\defmat{0}$. 
By the definition of multivariate distribution \cite{BibAndersenBook1995,BibTongBook1989,BibWangKotzNgBook1989,BibBilodeauBrennerBook1999,BibKotzBalakrishnanBook2004}, $\defrmat{Z}$ is a multivariate complex distribution iff $\defvec{a}^{T}\defrmat{Z}$ for all $\defvec{a}\!\in\!\mathbb{C}^{L}$ follows a complex random distribution of the same family. Taking into account this definition, and pursuant to what presented in \secref{Section:StatisticalBackground:MultivariateMcLeishDistribution} above, we note that $\defrmat{Z}$ follows a multivariate complex distribution only when $\nu_1\!=\!\nu_2\!=\!\nu$ with $\defmat{\Sigma}_{11}\!=\!\defmat{\Sigma}_{22}$ and $\defmat{\Sigma}_{12}\!=\!-\defmat{\Sigma}_{21}$.  Since being an Hermitian positive definite matrix, $\defmat{\Sigma}$ is decomposed using Cholesky decomposition as 
\begin{equation}\label{Eq:MultivariateComplexMcLeishCovarianceMatrixCholeskyDecomposition}
    \defmat{\Sigma}=\defmat{D}\defmat{D}^{H}.
\end{equation}
When there is no correlation between quadrature and inphase components of $\defrvec{Z}$ (i.e., when $\defmat{\Sigma}_{12}\!=\!\defmat{\Sigma}_{21}\!=\!\defmat{0}$), we have $\defmat{J}\!=\!\defmat{0}$, and therefrom $\defmat{D}=\defmat{\Sigma}^{-{1}/{2}}$. We conclude that
\begin{equation}\label{Eq:MultivariateComplexMcLeishComponentDecomposition}
    \defrvec{X}_1=\sqrt{G}\defmat{D}\defrvec{N}_1
    \text{~and~}
    \defrvec{X}_2=\sqrt{G}\defmat{D}\defrvec{N}_2,    
\end{equation}
where $\defrvec{N}_1\!\sim\!\mathcal{N}^L(0,\defmat{I})$, $\defrvec{N}_2\!\sim\!\mathcal{N}^L(0,\defmat{I})$
and $G\!\sim\!\mathcal{G}(\nu,1)$. As a consequence, $\defrmat{Z}$ follows a multivariate \ac{CES} McLeish distribution, denoted by $\defrmat{Z}\!\sim\!\mathcal{CM}_{\nu}^{L}(\defvec{\mu},\defmat{\Sigma})$, whose decomposition is given in the following.

\begin{theorem}\label{Theorem:MultivariateCESMcLeishDecomposition}
If $\defrmat{Z}\!\sim\!\mathcal{CM}_{\nu}^L\bigl(\defvec{\mu},\defmat{\Sigma}\bigr)$, then it is decomposed as 
\begin{equation}\label{Eq:MultivariateCESMcLeishDecomposition}
	\defrmat{Z}=\defmat{D}\defrmat{W}+\defvec{\mu},
\end{equation}
where $\defrmat{W}\!\sim\!\mathcal{CM}_{\nu}^L(\defvec{0},\defmat{I})$. Further, $\defmat{D}$, given in \eqref{Eq:MultivariateComplexMcLeishCovarianceMatrixCholeskyDecomposition}, is the Cholesky decomposition of $\defmat{\Sigma}$.
\end{theorem}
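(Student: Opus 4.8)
The plan is to mirror the proof of \theoremref{Theorem:MultivariateMcLeishDecomposition}, assembling the component decompositions already established just above the theorem into a single complex affine transform. The key structural fact is that the circular-symmetry constraints forced by preserving symmetry around the mean, namely $\defmat{\Sigma}_{11}\!=\!\defmat{\Sigma}_{22}$ and $\defmat{\Sigma}_{12}\!=\!-\defmat{\Sigma}_{21}$, make the inphase and quadrature parts of $\defrmat{Z}$ share a \emph{common} Gamma mixing variable $G\!\sim\!\mathcal{G}(\nu,1)$ together with a \emph{common} Cholesky factor $\defmat{D}$. Recognizing that a single shared $G$ (rather than two independent copies) is legitimate is the conceptual crux of the argument; everything else is bookkeeping.

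First I would start from the definition $\defrmat{Z}\!=\!\defrvec{X}_1+\imaginary\defrvec{X}_2$ and substitute the component decompositions of \eqref{Eq:MultivariateComplexMcLeishComponentDecomposition}, written about the means as $\defrvec{X}_1-\defvec{\mu}_1\!=\!\sqrt{G}\defmat{D}\defrvec{N}_1$ and $\defrvec{X}_2-\defvec{\mu}_2\!=\!\sqrt{G}\defmat{D}\defrvec{N}_2$, with the same $G$ and the same factor $\defmat{D}$ of \eqref{Eq:MultivariateComplexMcLeishCovarianceMatrixCholeskyDecomposition}. Writing $\defvec{\mu}\!=\!\defvec{\mu}_1+\imaginary\defvec{\mu}_2$ and collecting terms yields
\begin{equation}
\defrmat{Z}-\defvec{\mu}=\sqrt{G}\defmat{D}\defrvec{N}_1+\imaginary\sqrt{G}\defmat{D}\defrvec{N}_2=\defmat{D}\sqrt{G}(\defrvec{N}_1+\imaginary\defrvec{N}_2),
\end{equation}
where the scalar $\sqrt{G}$ commutes freely with the matrix $\defmat{D}$.

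Then I would invoke \theoremref{Theorem:StandardMultivariateCCSMcLeishDefinition}: since $\defrvec{N}_1\!\sim\!\mathcal{N}^L(\defvec{0},\defmat{I})$ and $\defrvec{N}_2\!\sim\!\mathcal{N}^L(\defvec{0},\defmat{I})$ are uncorrelated standard Gaussian vectors and $G\!\sim\!\mathcal{G}(\nu,1)$, the combination $\defrmat{W}\!=\!\sqrt{G}(\defrvec{N}_1+\imaginary\defrvec{N}_2)$ is exactly the decomposition \eqref{Eq:StandardMultivariateCCSMcLeishDefinition} of a standard multivariate CCS McLeish distribution, i.e.\ $\defrmat{W}\!\sim\!\mathcal{CM}_{\nu}^L(\defvec{0},\defmat{I})$. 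Substituting this identification gives $\defrmat{Z}\!=\!\defmat{D}\defrmat{W}+\defvec{\mu}$, which is the claimed decomposition.

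Finally, as a consistency check I would confirm the first two moments: $\mathbb{E}[\defrmat{Z}]\!=\!\defmat{D}\,\mathbb{E}[\defrmat{W}]+\defvec{\mu}\!=\!\defvec{\mu}$, and, using $\mathbb{E}[\defrmat{W}\defrmat{W}^H]\!=\!2\defmat{I}$ from \eqref{Eq:CovarianceMatrixForUnitaryTransform}, the covariance $\mathbb{E}[(\defrmat{Z}-\defvec{\mu})(\defrmat{Z}-\defvec{\mu})^H]\!=\!\defmat{D}(2\defmat{I})\defmat{D}^H\!=\!2\defmat{D}\defmat{D}^H\!=\!2\defmat{\Sigma}$, which reproduces the covariance structure imposed above and certifies that $\defmat{D}$ is the correct Cholesky factor of $\defmat{\Sigma}$. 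I do not anticipate any genuine computational obstacle; the only point needing care is the justification, already secured by the imposed circular-symmetry conditions, that the inphase and quadrature parts mix through one and the same $G$.
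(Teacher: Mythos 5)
Your proposal is correct and follows exactly the route the paper intends: the paper's own proof is a one-line appeal to the "pivotal details" preceding the theorem, namely the shared-$G$ component decomposition $\defrvec{X}_j-\defvec{\mu}_j=\sqrt{G}\defmat{D}\defrvec{N}_j$ forced by the circular-symmetry constraints, which you then assemble via \theoremref{Theorem:StandardMultivariateCCSMcLeishDefinition} just as the paper expects. Your version simply spells out the bookkeeping (and adds a useful moment consistency check) that the paper leaves implicit.
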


\begin{proof}
The proof is obvious using the pivotal details mentioned before \theoremref{Theorem:MultivariateCESMcLeishDecomposition}.
\end{proof}

Accordingly, the \ac{PDF} of multivariate \ac{CES} McLeish distribution with a covariance matrix is given in the following. 

\begin{theorem}\label{Theorem:MultivariateCESMcLeishPDF}
The \ac{PDF} of $\defrmat{Z}\!\sim\!\mathcal{CM}_{\nu}^L(\defvec{\mu},\defmat{\Sigma})$ is given by
\begin{equation}\label{Eq:MultivariateCESMcLeishPDF}
\!\!\!\!f_{\defrmat{Z}}(\defvec{z})=\frac{2}{\pi^L\Gamma(\nu)}
	\frac{{\lVert{\defvec{z}-\defvec{\mu}}\rVert}_{\defmat{\Sigma}}^{\nu-L}}
		{\det(\defmat{\Sigma})\,\lambda_{0}^{\nu+L}}
					 {K}_{\nu-L}
					 \Bigl(
							\frac{2}{\lambda_{0}}
					{\bigl\lVert{\defvec{z}-\defvec{\mu}}\bigr\rVert}_{\defmat{\Sigma}}
					 \Bigr),\!\!
\end{equation}
defined in $\boldsymbol{z}\!\in\!\mathbb{C}^L$,~where~${\lVert{\defvec{z}-\defvec{\mu}}\rVert}_{\defmat{\Sigma}}\!=\!(\defvec{z}-\defvec{\mu})^{H}\defmat{\Sigma}^{-1}(\defvec{z}-\defvec{\mu})$.
\end{theorem}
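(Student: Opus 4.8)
The plan is to obtain $f_{\defrmat{Z}}(\defvec{z})$ as the image of the standard density $f_{\defrmat{W}}(\defvec{z})$ under the invertible affine map supplied by \theoremref{Theorem:MultivariateCESMcLeishDecomposition}, exactly as was done for the real case in \theoremref{Theorem:MultivariateMcLeishPDF}. First I would invoke \eqref{Eq:MultivariateCESMcLeishDecomposition} to write $\defrmat{Z}\!=\!\defmat{D}\defrmat{W}+\defvec{\mu}$ with $\defrmat{W}\!\sim\!\mathcal{CM}_{\nu}^{L}(\defvec{0},\defmat{I})$, where $\defmat{D}$ is the Cholesky factor of $\defmat{\Sigma}\!=\!\defmat{D}\defmat{D}^{H}$ recorded in \eqref{Eq:MultivariateComplexMcLeishCovarianceMatrixCholeskyDecomposition}. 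Because $\defmat{\Sigma}$ is Hermitian positive definite, $\defmat{D}$ is non-singular, so the map is a genuine (non-degenerate) change of variables and I may invert it to $\defrmat{W}\!=\!\defmat{D}^{-1}(\defrmat{Z}-\defvec{\mu})$.

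Next I would apply the change-of-variables formula, being careful that $\mathbb{C}^{L}$ is here identified with $\mathbb{R}^{2L}$. For a complex-linear map with matrix $\defmat{D}$, the real Jacobian is $|\det(\defmat{D})|^{2}\!=\!\det(\defmat{D})\det(\defmat{D}^{H})\!=\!\det(\defmat{D}\defmat{D}^{H})\!=\!\det(\defmat{\Sigma})$, so the inverse map contributes the factor $J_{\defrmat{Z}|\defrmat{W}}\!=\!\det(\defmat{\Sigma})^{-1}$ and I obtain $f_{\defrmat{Z}}(\defvec{z})\!=\!\det(\defmat{\Sigma})^{-1}f_{\defrmat{W}}\!\bigl(\defmat{D}^{-1}(\defvec{z}-\defvec{\mu})\bigr)$. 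I would stress that this is the one place where the complex setting departs from the real one: whereas \theoremref{Theorem:MultivariateMcLeishPDF} picks up $\det(\defmat{\Sigma})^{-1/2}$, here the \emph{full} determinant $\det(\defmat{\Sigma})^{-1}$ appears, which is precisely what the target expression requires.

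Finally I would substitute the standard density \eqref{Eq:StandardMultivariateCCSMcLeishPDF} and reduce the argument of the Bessel function. Using $(\defmat{D}^{-1})^{H}\defmat{D}^{-1}\!=\!(\defmat{D}\defmat{D}^{H})^{-1}\!=\!\defmat{\Sigma}^{-1}$, the norm of the transformed vector becomes $\lVert\defmat{D}^{-1}(\defvec{z}-\defvec{\mu})\rVert^{2}\!=\!(\defvec{z}-\defvec{\mu})^{H}\defmat{\Sigma}^{-1}(\defvec{z}-\defvec{\mu})\!=\!\lVert\defvec{z}-\defvec{\mu}\rVert_{\defmat{\Sigma}}^{2}$, which turns $f_{\defrmat{W}}\bigl(\defmat{D}^{-1}(\defvec{z}-\defvec{\mu})\bigr)$ into the claimed form and completes the derivation of \eqref{Eq:MultivariateCESMcLeishPDF}.

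As an independent check, and to parallel the footnoted argument of \theoremref{Theorem:MultivariateMcLeishPDF}, I would also sketch a conditioning proof: conditioned on $G\!=\!g$, \eqref{Eq:MultivariateCESMcLeishDecomposition} makes $\defrmat{Z}$ a multivariate complex Gaussian with mean $\defvec{\mu}$ and covariance proportional to $g\defmat{\Sigma}$, so that $f_{\defrmat{Z}|G}(\defvec{z}|g)$ is the elementary complex Gaussian density; then $f_{\defrmat{Z}}(\defvec{z})\!=\!\int_{0}^{\infty}f_{\defrmat{Z}|G}(\defvec{z}|g)f_{G}(g)\,dg$ with $f_{G}$ from \eqref{Eq:ProportionPDF}, and the $g$-integral is evaluated by \cite[Eq.\!~(3.471/9)]{BibGradshteynRyzhikBook} to produce $K_{\nu-L}$. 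The main obstacle in either route is the Jacobian bookkeeping for complex matrices — getting the power of $\det(\defmat{\Sigma})$ right and verifying that the quadratic form collapses to $\lVert\defvec{z}-\defvec{\mu}\rVert_{\defmat{\Sigma}}$ — rather than the Bessel integral, which is routine.
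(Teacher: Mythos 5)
Your proof takes essentially the same route as the paper's: it invokes the decomposition $\defrmat{Z}=\defmat{D}\defrmat{W}+\defvec{\mu}$ of \theoremref{Theorem:MultivariateCESMcLeishDecomposition}, changes variables, and substitutes the standard density \eqref{Eq:StandardMultivariateCCSMcLeishPDF}, with the quadratic form collapsing via $(\defmat{D}^{-1})^{H}\defmat{D}^{-1}=\defmat{\Sigma}^{-1}$. If anything, your Jacobian bookkeeping is the more careful of the two — you identify the real Jacobian of the complex affine map as $|\det(\defmat{D})|^{2}=\det(\defmat{\Sigma})$ directly, whereas the paper writes $J_{\defrmat{W}|\defrmat{Z}}=\det(\defmat{D})^{-1}$ and recovers the needed power of the determinant only through $\det(\defmat{\Sigma})=\det(\defmat{D})^{2}$; your secondary conditioning sketch likewise mirrors the footnoted alternative proof the paper gives in the real case.
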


\begin{proof}
Note that $\defrmat{Z}\!\sim\!\mathcal{CM}_{\nu}^L\!\bigl(\defvec{\mu},\defmat{\Sigma}\bigr)$ is, as observed~in~\eqref{Eq:MultivariateCESMcLeishDecomposition}, described by an affine transformation of $\defrmat{W}\!\sim\!\mathcal{CM}_{\nu}^L\!\bigl(\defvec{0},\defmat{I}\bigr)$. Appropriately, using  $\defmat{\Sigma}\!=\!\defmat{D}\defmat{D}^{H}$, we have 
\begin{equation}
    \defrmat{W}=\defmat{D}^{-1}(\defrvec{Z}-\defvec{\mu})    
\end{equation}
and therefrom find the Jacobian $J_{\defrmat{Z}|\defrmat{W}}\!=\!\det(\defmat{D})$ and $J_{\defrmat{W}|\defrmat{Z}}\!=\allowbreak\!\det(\defmat{D})^{-1}$ Then, using $\det(\defmat{\Sigma})=\det(\defmat{D})^2$, we have the \ac{PDF} of $\defrmat{Z}$ using \eqref{Eq:StandardMultivariateCCSMcLeishPDF}, i.e.,
\begin{equation}\label{Eq:MultivariateCESMcLeishPDFTransform}
    f_{\defrmat{Z}}(\defvec{z})=f_{\defrmat{W}}(\defmat{D}^{-1}(\defrvec{z}-\defvec{\mu}))
        J_{\defrmat{W}|\defrmat{Z}}.
\end{equation}
Finally, using $\defmat{\Sigma}\!=\!\defmat{\Sigma}^H$ with these results, substituting \eqref{Eq:MultivariateCESMcLeishPDF} into \eqref{Eq:MultivariateCESMcLeishPDFTransform} results in \eqref{Eq:MultivariateMcLeishPDF}, which proves \theoremref{Theorem:MultivariateCESMcLeishPDF}.
\end{proof}

For consistency and clarity, note that the complex covariance matrix $\defmat{\Sigma}$ can also be rewritten as $\defmat{\Sigma}\!=\!\lambda^{-2}_{0}\,\defmat{\Lambda}\defmat{P}\defmat{\Lambda}$, where $\defmat{\Lambda}\!=\!\lambda_{0}\diag(\defvec{\sigma})\!=\!\diag(\lambda_1,\lambda_2,\ldots,\lambda_L)$ is previously defined. Moreover, $\defmat{P}\!\in\!\mathbb{C}^{{L}\times{L}}$ denotes the complex correlation matrix. When the variance of all the components are the same (i.e., when $\sigma^2_{\ell}\!=\!\sigma^2$, and thus $\lambda_{\ell}\!=\!\lambda\!=\!\sqrt{2\sigma^2/\nu}$, ${1}\!\leq\!\ell\!\leq\!{L}$), we have $\defmat{\Sigma}\!=\!\lambda^2\defmat{P}$ and $\det(\defmat{\Sigma})\!=\!\lambda^{2L}\det(\defmat{P})$, and correspondingly 
simplify \eqref{Eq:MultivariateCESMcLeishPDF} to
\begin{equation}\label{Eq:MultivariateCESIdenticallyDistributedMcLeishPDF}
\!\!\!\!f_{\defrmat{Z}}(\defvec{z})=\frac{2}{\pi^L\Gamma(\nu)}
	\frac{{\lVert{\defvec{z}-\defvec{\mu}}\rVert}_{\defmat{P}}^{\nu-L}}
		{\det(\defmat{P})\lambda^{\nu+L}}
					 {K}_{\nu-L}
					 \Bigl(
							\frac{2}{\lambda}
					    {\bigl\lVert{\defvec{z}-\defvec{\mu}}\bigr\rVert}_{\defmat{P}}
					 \Bigr).\!\!
\end{equation}
In addition, in case of no correlation and zero mean (i.e., when $\defmat{P}\!=\!\defmat{I}$ and $\defvec{\mu}\!=\!\defvec{0}$), we also simplify \eqref{Eq:MultivariateCESMcLeishPDF} to \eqref{Eq:INIDMultivariateCESMcLeishPDF} as expected. 

\begin{theorem}\label{Theorem:MultivariateCESMcLeishCDF}
The \ac{CDF} of $\defrmat{Z}\!\sim\!\mathcal{CM}_{\nu}^L(\defvec{\mu},\defmat{\Sigma})$ is given by
\begin{equation}\label{Eq:MultivariateCESMcLeishCDF}
    F_{\defrmat{Z}}(\defvec{z})=
        \widehat{Q}^{L}_{\nu}\bigl(\defmat{D}(\defrvec{z}-\defvec{\mu})\bigr),
\end{equation}
defined over $\boldsymbol{z}\!\in\!\mathbb{C}^L$, where $\defmat{D}$ is given in \eqref{Eq:MultivariateComplexMcLeishCovarianceMatrixCholeskyDecomposition}.
\end{theorem}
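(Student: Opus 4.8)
The plan is to leverage the affine decomposition established in \theoremref{Theorem:MultivariateCESMcLeishDecomposition}, proceeding exactly as in the real-valued case of \theoremref{Theorem:MultivariateMcLeishCDF} and the uncorrelated complex case of \theoremref{Theorem:INIDMultivariateCESMcLeishCDF}. First I would start from $\defrmat{Z}=\defmat{D}\defrmat{W}+\defvec{\mu}$ with $\defrmat{W}\!\sim\!\mathcal{CM}_{\nu}^L(\defvec{0},\defmat{I})$, where $\defmat{D}$ is the Cholesky factor of $\defmat{\Sigma}$ supplied by \eqref{Eq:MultivariateComplexMcLeishCovarianceMatrixCholeskyDecomposition}. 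Since $\defmat{\Sigma}$ is Hermitian positive definite, $\defmat{D}$ is non-singular; the transformation is therefore invertible and I can write $\defrmat{W}=\defmat{D}^{-1}(\defrmat{Z}-\defvec{\mu})$.

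Next I would express the \ac{CDF} of $\defrmat{Z}$, namely $F_{\defrmat{Z}}(\defvec{z})=\Pr\{\defrmat{Z}\!\leq\!\defvec{z}\}$, through the distribution of $\defrmat{W}$. Substituting the inverse relation recasts the event $\{\defrmat{Z}\!\leq\!\defvec{z}\}$ as an event on $\defrmat{W}$ with shifted and scaled threshold $\defmat{D}^{-1}(\defvec{z}-\defvec{\mu})$, so that $F_{\defrmat{Z}}(\defvec{z})=F_{\defrmat{W}}\bigl(\defmat{D}^{-1}(\defvec{z}-\defvec{\mu})\bigr)$. The standard multivariate \ac{CCS} \ac{CDF} is already available from \theoremref{Theorem:StandardMultivariateCCSMcLeishCDF} as $F_{\defrmat{W}}(\defvec{z})=\widehat{Q}^{L}_{\nu}(\defvec{z})$, with $\widehat{Q}^{L}_{\nu}$ the complementary complex quantile of \defref{Definition:McLeishMultivariateComplexQAndComplementaryComplexQFunction}. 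Substituting then yields $F_{\defrmat{Z}}(\defvec{z})=\widehat{Q}^{L}_{\nu}\bigl(\defmat{D}^{-1}(\defvec{z}-\defvec{\mu})\bigr)$, which is \eqref{Eq:MultivariateCESMcLeishCDF}; the argument is the inverse Cholesky factor applied to the centred vector, consistent with the forms obtained in \theoremref{Theorem:MultivariateMcLeishCDF} and \theoremref{Theorem:INIDMultivariateCESMcLeishCDF}, and I would flag that the factor appearing inside $\widehat{Q}^{L}_{\nu}$ in \eqref{Eq:MultivariateCESMcLeishCDF} must be read as this inverse $\defmat{D}^{-1}$.

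The hard part will be the step that rewrites the multivariate \ac{CDF} event under the affine map. Because the componentwise ordering $\defrmat{Z}\!\leq\!\defvec{z}$ is defined coordinate by coordinate through the product of Heaviside factors introduced in \secref{Section:Preliminaries}, and because $\defmat{D}$ is a full lower-triangular (hence non-diagonal) matrix, the preimage $\{\defmat{D}^{-1}(\defrmat{Z}-\defvec{\mu})\!\leq\!\defmat{D}^{-1}(\defvec{z}-\defvec{\mu})\}$ does not coincide set-theoretically with $\{\defrmat{W}\!\leq\!\defmat{D}^{-1}(\defvec{z}-\defvec{\mu})\}$ in general. I would therefore treat the identity in the same formal change-of-variables sense used throughout this section: interpreting $F_{\defrmat{Z}}$ via the substitution $\defvec{z}=\defmat{D}\defvec{w}+\defvec{\mu}$ inside the defining integral of $\widehat{Q}^{L}_{\nu}$, whose unitary/rotational invariance noted after \theoremref{Theorem:StandardMultivariateCCSMcLeishPDF} keeps the substitution well posed. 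Closing this ordering gap rigorously rather than formally is the only delicate point, and since it is handled identically in the antecedent theorems, I would cite those and declare the remainder routine.
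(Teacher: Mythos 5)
Your proposal takes essentially the same route as the paper's own proof: invert the decomposition $\defrmat{Z}=\defmat{D}\defrmat{W}+\defvec{\mu}$ of \theoremref{Theorem:MultivariateCESMcLeishDecomposition} to write $\defrmat{W}=\defmat{D}^{-1}(\defrmat{Z}-\defvec{\mu})$, then substitute the standard \ac{CDF} of \theoremref{Theorem:StandardMultivariateCCSMcLeishCDF}. Both of your flags are legitimate: the factor inside $\widehat{Q}^{L}_{\nu}$ in \eqref{Eq:MultivariateCESMcLeishCDF} (and in the paper's proof, which writes $\defvec{w}=\defmat{D}(\defvec{z}-\defvec{\mu})$) should indeed be $\defmat{D}^{-1}$, consistent with the $\defmat{\Sigma}^{-1/2}$ appearing in \theoremref{Theorem:MultivariateMcLeishCDF}; and the componentwise-ordering issue you raise is silently glossed over by the paper as well --- it is genuinely absent only in \theoremref{Theorem:INIDMultivariateCESMcLeishCDF}, where the transformation matrix is diagonal with positive entries, whereas for a non-diagonal Cholesky factor the event $\{\defrmat{Z}\leq\defvec{z}\}$ does not map set-theoretically onto $\{\defrmat{W}\leq\defmat{D}^{-1}(\defvec{z}-\defvec{\mu})\}$.
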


\begin{proof}
Following almost the same steps presented~in~the~proof of \theoremref{Theorem:INIDMultivariateCESMcLeishCDF}, the proof is quite obvious. Specifically, from \eqref{Eq:MultivariateCESMcLeishDecomposition}, we have $F_{\defrmat{Z}}(\defvec{z})\!=\!F_{\defrmat{W}}(\defvec{w})$ with $\defrvec{w}\!=\!\defmat{D}(\defrvec{z}-\defvec{\mu})$, where substituting the \ac{CDF} $F_{\defrmat{W}}(\defvec{z})$, given in \eqref{Eq:StandardMultivariateCCSMcLeishCDF}, we readily obtain \eqref{Eq:MultivariateCESMcLeishCDF}, which proves \theoremref{Theorem:MultivariateCESMcLeishCDF}. 
\end{proof}

\begin{theorem}\label{Theorem:MultivariateCESMcLeishCCDF}
The \ac{CCDF} of $\defrmat{Z}\!\sim\!\mathcal{CM}_{\nu}^L(\defvec{\mu},\defmat{\Sigma})$ is given by
\begin{equation}\label{Eq:MultivariateCESMcLeishCCDF}
    \widehat{F}_{\defrmat{X}}(\defvec{z})=
        {Q}^{L}_{\nu}\bigl(\defmat{D}(\defrvec{z}-\defvec{\mu})\bigr),
\end{equation}
defined over $\boldsymbol{z}\!\in\!\mathbb{C}^L$, where $\defmat{D}$ is given in \eqref{Eq:MultivariateComplexMcLeishCovarianceMatrixCholeskyDecomposition}.
\end{theorem}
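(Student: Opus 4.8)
The plan is to mirror the proof of \theoremref{Theorem:MultivariateCESMcLeishCDF} almost verbatim, replacing the lower-tail event by the upper-tail event throughout. First I would invoke the decomposition $\defrmat{Z}\!=\!\defmat{D}\defrmat{W}+\defvec{\mu}$ established in \theoremref{Theorem:MultivariateCESMcLeishDecomposition}, where $\defrmat{W}\!\sim\!\mathcal{CM}_{\nu}^L(\defvec{0},\defmat{I})$ and $\defmat{D}$ is the Cholesky factor of $\defmat{\Sigma}$ given in \eqref{Eq:MultivariateComplexMcLeishCovarianceMatrixCholeskyDecomposition}. This affine relation lets me rewrite the defining joint tail probability $\widehat{F}_{\defrmat{Z}}(\defvec{z})\!=\!\Pr\{\defrvec{Z}\!>\!\defvec{z}\}$ as a tail probability for the standard vector $\defrmat{W}$, namely $\widehat{F}_{\defrmat{Z}}(\defvec{z})\!=\!\widehat{F}_{\defrmat{W}}(\defvec{w})$ with $\defrvec{w}\!=\!\defmat{D}(\defrvec{z}-\defvec{\mu})$, in exact analogy with the change of variable carried out in the \ac{CDF} argument.

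Next I would substitute the already-derived \ac{CCDF} of the standard distribution. By \theoremref{Theorem:StandardMultivariateCCSMcLeishCCDF}, $\widehat{F}_{\defrmat{W}}(\defvec{w})\!=\!{Q}^{L}_{\nu}(\defvec{w})$, where ${Q}^{L}_{\nu}(\cdot)$ is the McLeish multivariate complex quantile of \defref{Definition:McLeishMultivariateComplexQAndComplementaryComplexQFunction}. Inserting $\defrvec{w}\!=\!\defmat{D}(\defrvec{z}-\defvec{\mu})$ then yields $\widehat{F}_{\defrmat{Z}}(\defvec{z})\!=\!{Q}^{L}_{\nu}(\defmat{D}(\defrvec{z}-\defvec{\mu}))$, which is precisely \eqref{Eq:MultivariateCESMcLeishCCDF}. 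As a consistency check I would confirm that specialising to the uncorrelated case $\defmat{\Sigma}\!=\!\diag(\defvec{\sigma}^2)$ collapses the statement to \theoremref{Theorem:INIDMultivariateCESMcLeishCCDF}, and that the further reduction $\defmat{\Sigma}\!=\!\defmat{I}$, $\defvec{\mu}\!=\!\defvec{0}$ recovers \theoremref{Theorem:StandardMultivariateCCSMcLeishCCDF}.

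The only delicate point is the same one that already underlies the \ac{CDF} result: the joint upper-tail event is defined through the componentwise product ordering on $\mathbb{C}^L$, whereas the map $\defvec{z}\!\mapsto\!\defmat{D}(\defvec{z}-\defvec{\mu})$ is a general affine transformation that need not respect that ordering coordinate by coordinate. I expect this to be the main obstacle to a fully rigorous argument, and I would handle it exactly as in the \ac{CDF} case, absorbing the transformation into the definition of the complementary quantile ${Q}^{L}_{\nu}$ so that the tail integral is reparametrised consistently rather than claiming coordinatewise monotonicity of $\defmat{D}$. With that understanding the remaining manipulations are the routine substitution described above, and the proof closes immediately by appeal to \theoremref{Theorem:StandardMultivariateCCSMcLeishCCDF}.
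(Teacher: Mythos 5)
Your proposal follows exactly the paper's own route: it invokes the decomposition $\defrmat{Z}=\defmat{D}\defrmat{W}+\defvec{\mu}$ from \theoremref{Theorem:MultivariateCESMcLeishDecomposition}, substitutes the standard \ac{CCDF} from \theoremref{Theorem:StandardMultivariateCCSMcLeishCCDF}, and mirrors the steps of the \ac{CDF} proof of \theoremref{Theorem:MultivariateCESMcLeishCDF}, which is precisely what the paper does. Your closing remark about the affine map not respecting the componentwise ordering is a legitimate subtlety that the paper itself leaves unaddressed, and your resolution (absorbing the transformation into the reparametrised tail integral defining ${Q}^{L}_{\nu}$) is consistent with how the paper implicitly treats it.
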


\begin{proof}
The proof is obvious using \eqref{Eq:MultivariateCESMcLeishDecomposition} and \theoremref{Theorem:StandardMultivariateCCSMcLeishCCDF} and then performing nearly same steps taken after within the proof of \theoremref{Theorem:MultivariateCESMcLeishCDF}.
\end{proof}

\begin{theorem}\label{Theorem:MultivariateCESMcLeishMGF}
The \ac{MGF} of $\defrmat{Z}\!\sim\!\mathcal{CM}_{\nu}^L(\defvec{\mu},\defmat{\Sigma})$ is given by
\begin{equation}\label{Eq:MultivariateCESMcLeishMGF}
    M_{\defrmat{Z}}(\defvec{s})=\exp\bigl(-\defvec{s}^H\defvec{\mu}\bigr)\Bigl(1-\frac{1}{4}\defvec{s}^H\defmat{\Sigma}\defvec{s}\Bigr)^{-\nu},
\end{equation}
for a certain $\defvec{s}\!\in\!\mathbb{C}^L$ within the existence region $\defvec{s}\!\in\!\mathbb{C}_{0}$, where the region $\mathbb{C}_{0}$ is given by 
\begin{equation}\label{Eq:MultivariateCESMcLeishMGFExistenceRegion}
    \mathbb{C}_0=\Bigl\{
        \defvec{s}\,    
        \Bigl|\,
        \defvec{s}^H\defmat{\Sigma}\,\defvec{s}
        \leq
        {4}
        \Bigr.
        \Bigr\}.
\end{equation}
\end{theorem}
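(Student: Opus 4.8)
The plan is to follow the same template already used for the real case in \theoremref{Theorem:MultivariateMcLeishMGF}, now built on the affine decomposition of \theoremref{Theorem:MultivariateCESMcLeishDecomposition} and the standard \ac{MGF} of \theoremref{Theorem:StandardMultivariateCCSMcLeishMGF}. First I would start from the definition $M_{\defrmat{Z}}(\defvec{s})\!=\!\mathbb{E}\bigl[\exp(-\defvec{s}^H\defrmat{Z})\bigr]$ and insert $\defrmat{Z}\!=\!\defmat{D}\defrmat{W}+\defvec{\mu}$ from \eqref{Eq:MultivariateCESMcLeishDecomposition}. Because $\defvec{\mu}$ is deterministic, the factor $\exp(-\defvec{s}^H\defvec{\mu})$ leaves the expectation, giving $M_{\defrmat{Z}}(\defvec{s})\!=\!\exp(-\defvec{s}^H\defvec{\mu})\,\mathbb{E}\bigl[\exp(-\defvec{s}^H\defmat{D}\defrmat{W})\bigr]$, so the whole problem reduces to identifying the remaining expectation.

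The key step is to rewrite $\defvec{s}^H\defmat{D}\defrmat{W}\!=\!(\defmat{D}^H\defvec{s})^H\defrmat{W}$ and recognise the residual factor as the standard \ac{MGF} $M_{\defrmat{W}}$ of $\defrmat{W}\!\sim\!\mathcal{CM}_{\nu}^L(\defvec{0},\defmat{I})$ evaluated at the transformed argument $\defmat{D}^H\defvec{s}$. Substituting \eqref{Eq:StandardMultivariateCCSMcLeishMGF} produces a factor whose quadratic form is $(\defmat{D}^H\defvec{s})^H(\defmat{D}^H\defvec{s})\!=\!\defvec{s}^H\defmat{D}\defmat{D}^H\defvec{s}$, and I would collapse this to $\defvec{s}^H\defmat{\Sigma}\defvec{s}$ using precisely the Cholesky factorisation $\defmat{\Sigma}\!=\!\defmat{D}\defmat{D}^H$ recorded in \eqref{Eq:MultivariateComplexMcLeishCovarianceMatrixCholeskyDecomposition}. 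Gathering the two factors then yields \eqref{Eq:MultivariateCESMcLeishMGF}.

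For the existence region I would transport the constraint of \theoremref{Theorem:StandardMultivariateCCSMcLeishMGF} through the same change of argument: the standard \ac{MGF} is finite exactly when its argument $\defvec{t}\!=\!\defmat{D}^H\defvec{s}$ satisfies $\defvec{t}^H\defvec{t}\!\leq\!4$ (up to the normalising scalar), and re-expressing $\defvec{t}^H\defvec{t}\!=\!\defvec{s}^H\defmat{\Sigma}\defvec{s}$ maps this onto \eqref{Eq:MultivariateCESMcLeishMGFExistenceRegion}. Since $\defmat{\Sigma}$ is Hermitian positive definite, $\defmat{D}$ is non-singular, so $\defvec{s}\!\mapsto\!\defmat{D}^H\defvec{s}$ is a bijection and the region transfers without loss of generality.

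I expect the genuine difficulty to be bookkeeping rather than any deep step: one must be careful that the Hermitian transpose (not the ordinary transpose) is used consistently so that $\defvec{s}^H\defmat{D}\defrmat{W}\!=\!(\defmat{D}^H\defvec{s})^H\defrmat{W}$ holds, that the quadratic form $\defvec{s}^H\defmat{\Sigma}\defvec{s}$ remains real and non-negative under the circular-symmetry condition $\defmat{\Sigma}^H\!=\!\defmat{\Sigma}$, and that the normalising scalar emerging from \eqref{Eq:StandardMultivariateCCSMcLeishMGF} is carried through to produce the coefficient $\tfrac14$ in \eqref{Eq:MultivariateCESMcLeishMGF}. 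As an independent route and a self-check, I would also condition on $G$: from \eqref{Eq:MultivariateComplexMcLeishComponentDecomposition} the vector $\defrmat{Z}$ is complex Gaussian given $G\!=\!g$, so its conditional \ac{MGF} is an explicit Gaussian exponential in $\defvec{s}^H\defmat{\Sigma}\defvec{s}$, and averaging over $G\!\sim\!\mathcal{G}(\nu,1)$ via $\int_0^\infty x^{a-1}e^{-bx}\,dx\!=\!b^{-a}\Gamma(a)$ reproduces the same $(1-\tfrac14\defvec{s}^H\defmat{\Sigma}\defvec{s})^{-\nu}$ factor. Finally I would verify that \eqref{Eq:MultivariateCESMcLeishMGF} reduces to \theoremref{Theorem:INIDMultivariateCESMcLeishMGF} when $\defmat{\Sigma}$ is diagonal and to \eqref{Eq:CCSMcLeishMGF} when $L\!=\!1$.
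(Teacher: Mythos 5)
Your proposal is correct and follows essentially the same route as the paper's own proof: substitute the decomposition $\defrmat{Z}=\defmat{D}\defrmat{W}+\defvec{\mu}$ from \theoremref{Theorem:MultivariateCESMcLeishDecomposition} into the MGF definition, factor out $\exp(-\defvec{s}^H\defvec{\mu})$, move $\defmat{D}$ onto the argument, and invoke the standard MGF \eqref{Eq:StandardMultivariateCCSMcLeishMGF} together with $\defmat{\Sigma}=\defmat{D}\defmat{D}^{H}$. If anything, your bookkeeping is slightly more careful than the paper's (you correctly carry $\defmat{D}^{H}\defvec{s}$ where the paper writes $\defmat{D}\defvec{s}$), and your $G$-conditioning self-check mirrors the footnoted alternative proof the paper gives for the real-valued analogue, \theoremref{Theorem:MultivariateMcLeishMGF}.
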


\begin{proof}
With the aid of \eqref{Eq:MultivariateCESMcLeishDecomposition}, we can write the \ac{MGF} of $\defrmat{Z}\!\sim\!\mathcal{CM}_{\nu}^L(\defvec{\mu},\defmat{\Sigma})$ in terms of the \ac{MGF} of $\defrmat{W}\!\sim\!\mathcal{CM}_{\nu}^L(\defvec{0},\defmat{I})$, i.e.
\begin{subequations}
\setlength\arraycolsep{1.4pt}
\begin{eqnarray}
    M_{\defrmat{Z}}(\defvec{s})
        &=&\mathbb{E}[\exp\bigl(-\langle\defvec{s},\defrmat{Z}\rangle\bigr)],\\
        &=&\mathbb{E}[\exp\bigl(-\langle\defvec{s},\defmat{D}\defrvec{W}+\defvec{\mu}\rangle\bigr)],\\
        &=&\exp\bigl(-\langle\defvec{s},\defvec{\mu}\rangle\bigr)
                \mathbb{E}[\exp\bigl(-\langle\defvec{s},\defmat{D}\defrvec{W}\rangle\bigr)],\\
        &=&\exp\bigl(-\langle\defvec{s},\defvec{\mu}\rangle\bigr)
                \mathbb{E}[\exp\bigl(-\langle\defmat{D}\defvec{s},\defrvec{W}\rangle\bigr)],\\
        &=&\exp\bigl(-\langle\defvec{s},\defvec{\mu}\rangle\bigr)
                M_{\defrmat{W}}(\defmat{D}\defvec{s}),
\end{eqnarray}
\end{subequations}
where $M_{\defrmat{W}}(\defvec{s})$ denotes the \ac{MGF} of $\defrmat{W}$ and is given in \eqref{Eq:StandardMultivariateCCSMcLeishMGF}, and where both substituting \eqref{Eq:StandardMultivariateCCSMcLeishMGF} and using $\langle\defvec{s},\defvec{x}\rangle\!=\!\defvec{s}^{H}\defvec{x}$ yields \eqref{Eq:INIDMultivariateCESMcLeishMGF}, which completes the proof of \theoremref{Theorem:INIDMultivariateCESMcLeishMGF}.
\end{proof}

Eventually, we will exploit the closed-form results obtained in the preceding as a statistical and mathematical framework to introduce in the following sections some preliminary and fundamental results not only about how to properly exercise McLeish distribution to model the additive non-Gaussian white noise in wireless communications, but also about how to use the statistical characterization of McLeish distribution to obtain closed-form \ac{BER}\,/\,\ac{SER} expressions of modulation schemes and develop an analytical approach for the averaged \ac{BER}\,/\,\ac{SER} performance of diversity reception in slowly time-varying flat fading environments.

\section{Additive White McLeish Noise Channels}
\label{Section:AWMNChannels}
In wireless communications, modulation schemes are used to map the digital information sequence into a set of signal waveforms to transmit them over a communication channel. Within each symbol transmission time $t\!\in\!(0,T_{S}]$, the communication channel is without loss of generality described by the mathematical relation given by 
\begin{equation}\label{Eq:ReceivedSignal}
    R(t)=h(t)\,S(t)+Z(t), \quad{t}\in(0,T_{S}]
\end{equation}
where $T_{S}$ denotes the symbol transmission time, $s(t)$ denotes the transmitted symbol, and with respect to the information, it is chosen from the set of all possible modulation symbols $\{s_1(t),s_2(t),\ldots,s_M(t)\}$ such that $\sum_{m}\!\Pr\bigl\{s_{m}(t)\bigr\}\!=\!1$, where $M\!\in\!\mathbb{N}$ is the modulation~level. $h(t)$ denotes the fading process originating from the random nature of diffraction, refraction, and reflection within the channel,~and~due~to~coherence in time, it is assumed to be approximately constant for a number of symbol intervals. $Z(t)$ denotes a sample waveform of a zero-mean additive McLeish noise process, and $R(t)$ denotes the received waveform. The receiver makes observations on the received signal $R(t)$ and then makes an optimal decision based on the detection of which symbol $m$, ${1}\!\leq \!{m}\!\leq\!{M}$, was transmitted. 
As well explained in  \cite{BibProakisBook,BibAlouiniBook,BibGoldsmithBook}, not only can an $L$-orthonormal basis be used to represent each modulation symbol with a $L$-dimensional vector but it can also used to represent a zero-mean additive noise process as a vector of additive \ac{CES} noise distributions. With the aid of this observation, for the $n$th symbol received over additive noise channels, we can readily give a well-known mathematical base-band model in vector form \cite{BibProakisBook,BibRappaportBook,BibGoldsmithBook,BibAlouiniBook}, while we assume that symbols are sequentially transmitted, that is
\begin{equation}\label{Eq:ReceivedSignalInVectorForm}
	\defrmat{R}[n]={H}[n]\,\exp(\imaginary\,\Theta[n])\,\defvec{S}[n]+\defrmat{Z}[n]  ,
\end{equation}
where all vectors are, without loss of generality, assumed $L$-dimensional complex vectors. Specifically, $\defvec{S}[n]$ denotes the vector form of the $n$th transmitted symbol, and thus during each symbol transmission, it is randomly chosen from the set of all possible vectors $\{\defvec{S}_{1},\defvec{S}_{2},\ldots,\defvec{S}_{M}\}$. ${H}[n]$ denotes the fading envelope following a non-negative random distribution whereas $\Theta[n]$ denotes the fading phase following a random distribution over $[-\pi,\pi]$. Further, both ${H[n]}$ and $\Theta[n]$ are assumed constant during symbol duration due to the existence of channel coherence in time \cite{BibProakisBook,BibGoldsmithBook,BibAlouiniBook}. $\defrmat{Z}[n]$ denotes the additive noise, and it is always present in all communication channels and it is the major cause of impairment in many communication systems. Further, modeling $\defrmat{Z}[n]$ by a Gaussian distribution is well supported and widely evidenced from both theoretical and practical viewpoints. However, we show in what follows that the random power nature of the additive noise indicates that $\defrmat{Z}[n]$ follows non-Gaussian distribution. It is thus prudent to pick a non-Gaussian noise model, which will let us to find out the performance and bottlenecks of non-Gaussian communication channels. Accordingly, for the first time in the literature, we introduce McLeish distribution as an additive noise model that approaches to Gaussian distribution in the worst case scenarios. We call the additive McLeish noise channel to the communication channel that is subjected to the additive noise modeled by McLeish distribution.

\subsection{Random Fluctuations of Noise Variance}
\label{Section:AWMNChannels:RandomNoiseVarianceFluctuations}
In wireless digital communications, we assume that the total variance of the additive noise vector $\defrmat{Z}[n]\!\sim\!\mathcal{CM}_{\nu}^L(\defvec{\mu},\defmat{\Sigma})$ is constant~for~short-term conditions, and actually observe that it is a stationary random process in long-term conditions. We further estimate both the mean and the total variance of $\defrmat{Z}[n]$, respectively, as 
\begin{eqnarray}
    \label{Eq:InstantaneousMean}
	\defvec{\mu}_\tau[n]&=&
	    \frac{1}{\lfloor\frac{\tau}{\tau_0}\rfloor}
	        \sum_{k=n-\lfloor\frac{\tau}{\tau_0}\rfloor}^{n}
	            \!\!\!\!
	            \defrmat{Z}[k], \\
	\label{Eq:InstantaneousVariance}
	\sigma^2_\tau[n]&=&
	    \frac{1}{\lfloor\frac{\tau}{\tau_0}\rfloor}
	        \sum_{k=n-\lfloor\frac{\tau}{\tau_0}\rfloor}^{n}
	            \!\!\!\!
	            (\defrmat{Z}[k]-\defvec{\mu}_\tau[n])^H
	                (\defrmat{Z}[k]-\defvec{\mu}_\tau[n]),{~~~~~~}
\end{eqnarray}
where $\tau\in\mathbb{R}_{+}$ denotes the coherence window that characterizes the dispersive nature of the total variance, $\tau_0$ denotes the sample duration, and $\lfloor{x}\rfloor$ yields the maximum integer less that or equal to $x$. It is important for theoreticians and practitioners to be aware that the total variance contains fluctuations over time (i.e., the total variance is not constant over time), and be able to precisely quantify the amount of fluctuations associated with the total variance. Accordingly, we can write the exact total variance of $\defrmat{Z}[n]$ as
\begin{equation}\label{Eq:TotalVariance}
    \sigma^2=\lim_{\tau\rightarrow\infty}\sigma^2_\tau[n].
\end{equation}
As matter of fact that the stability of the total variance depends on the chosen window $\tau$, we can perform the Allan's variance \cite{BibAllanIEEE1966,BibMillerVandomeJohnBook2010,BibGalleaniTUFFC2009}, which is a time domain measure representing \ac{RMS} random drift within the total variance as a function of averaged time, on $\sigma^2_\tau[n]$
to express the stability the total variance with respect to $\tau\!\in\!\mathbb{R}_{+}$ and write 
\begin{equation}\label{Eq:AllanVarianceOfVariance}
	\mathbb{A}\!\left[\defrmat{Z}[n];\tau\right]
	    =\frac{1}{2}
	        \Expected{\bigl(
	            \sigma^2_\tau[n]-
	                \sigma^2_\tau[n-\tau]\bigr)^{2}},
\end{equation}
where $\Expected{\cdot}$ denotes the expectation operator, and $\mathbb{A}[y[n];\tau]$ is termed as Allan's operator applied on
the sequence of $y[n]$. By means of \eqref{Eq:InstantaneousVariance} and \eqref{Eq:TotalVariance}, we introduce 
\begin{equation}\label{Eq:TotalVarianceDrift}
    \Delta\sigma^2_\tau[n]\!=\!\sigma^2_\tau[n]-\sigma^2,
\end{equation}
which is the variance fluctuation (i.e., the random drift within the total variance over samples) such that $\mathbb{E}\bigl[\Delta\sigma^2_\tau[n]\bigr]\!=\!0$ for $\tau\!\in\!\mathbb{R}_{+}$. From \eqref{Eq:TotalVariance} and \eqref{Eq:TotalVarianceDrift}, we observe  $\lim_{\tau\rightarrow\infty}\Delta\sigma^2_\tau[n]\!=\!0$. Substituting $\sigma^2_\tau[n]\!=\!\Delta\sigma^2_\tau[n]+\sigma^2$ into \eqref{Eq:AllanVarianceOfVariance}, we can rewrite $\mathbb{A}\!\bigl[\defrmat{Z}[n];\tau\bigr]$ in terms of the statistics of $\Delta\sigma^2_\tau[n]$ as follows 
\begin{equation}\label{Eq:AllanVarianceOfVarianceDecomposition}
	\mathbb{A}\!\left[\defrmat{Z}[n];\tau\right]
	=\frac{1}{2}\Bigl(
	        \Expected{(\Delta\sigma^2_\tau[n])^2}+
	        \Expected{(\Delta\sigma^2_\tau[n-\tau])^2}
	        -2\Expected{\Delta\sigma^2_\tau[n]\Delta\sigma^2_\tau[n-\tau]}
	    \Bigr).
\end{equation}
After recognizing the variance and covariance terms associated with the variance fluctuation, i.e., using 
\begin{eqnarray}
    \label{Eq:VarianceOfVarianceFluctuations}
    \Variance{\sigma^2_\tau[n]}&=&\Expected{(\Delta\sigma^2_\tau[n])^2},\\
    \label{Eq:CovarianceOfVarianceFluctuations}
    \Covariance{\sigma^2_\tau[n]}{\sigma^2_\tau[n-\tau]}&=&\Expected{\Delta\sigma^2_\tau[n]\Delta\sigma^2_\tau[n-\tau]},
\end{eqnarray}
we eventually rewrite   \eqref{Eq:AllanVarianceOfVarianceDecomposition} as 
\begin{equation}\label{Eq:AllanVarianceOfVarianceDecompositionII}
	\mathbb{A}\!\left[\defrmat{Z}[n];\tau\right]
	=\frac{1}{2}\Bigl(
	        \Variance{\sigma^2_\tau[n]}+
	        \Variance{\sigma^2_\tau[n-\tau]}
	        -2\,\Covariance{\sigma^2_\tau[n]}{\sigma^2_\tau[n-\tau]}
	    \Bigr).
\end{equation}
Note that, without loss of generality, we can consider $\sigma^2_\tau[n]$ as a \ac{WSS} random process with~respect~to $n\!\in\!\mathbb{N}$, especially since $\defrmat{Z}[n]$ is a sample vector of \ac{WSS}~random processes. Consequently, from the \ac{WSS} feature of $\sigma^2_\tau[n]$ with respect~to~$n$, we write 
\begin{equation}\label{Eq:AllanVarianceBounds}
    0<\mathbb{A}\!\left[\defrmat{Z}[n];\tau\right]<2\,\Variance{\sigma^2_\tau[n]}
\end{equation}
for all $\tau\!\in\!\mathbb{N}$, and further we write 
\begin{equation}\label{Eq:AllanVarianceInfiniteBound}
    \lim\nolimits_{\tau\rightarrow\infty}\mathbb{A}\!\left[\defrmat{Z}[n];\tau\right]
        \leq\lim\nolimits_{\tau\rightarrow\infty}\Variance{\sigma^2_\tau[n]}.
\end{equation}
\begin{figure}[tp] 
    \centering
    \includegraphics[clip=true, trim=0mm 0mm 0mm 0mm,width=0.8\columnwidth,keepaspectratio=true]{./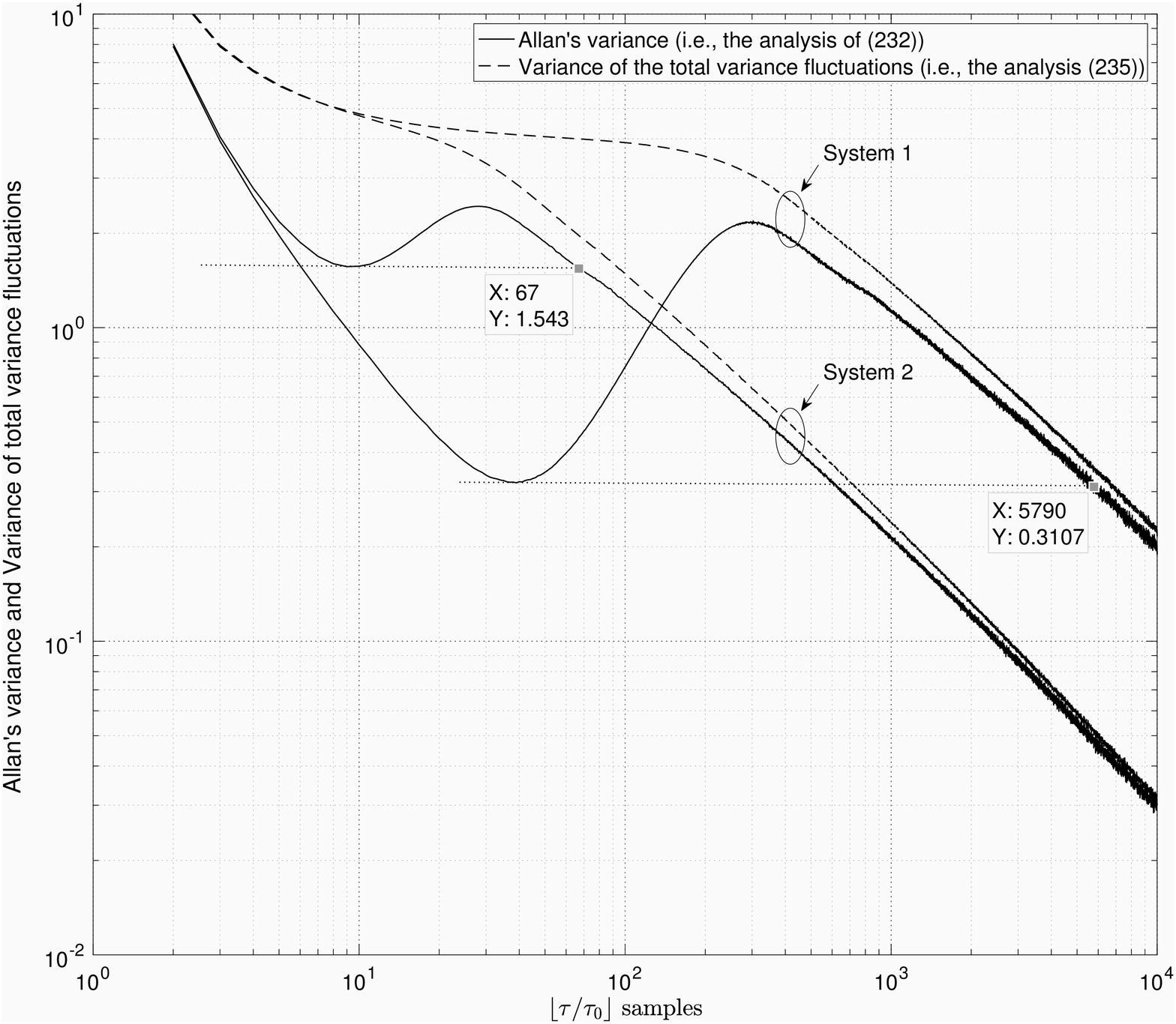} 
    \caption{The Allan's variance $\mathbb{A}\![\defrmat{Z}[n];\tau]$ and the variance of the total variance fluctuations ${\mathrm{Var}}[\sigma^2_\tau[n]]$ with respect to $\tau$, where $\sigma^2_\tau[n]$ follows a \ac{WSS} random process.}
    \label{Figure:AllanVarianceAndVarianceError}
    \vspace{-3mm}  
\end{figure}
With together aid of \eqref{Eq:AllanVarianceBounds} and \eqref{Eq:AllanVarianceInfiniteBound}, we notice that the Allan's variance $\mathbb{A}\!\left[\defrmat{Z}[n];\tau\right]$ is not a monotonically decreasing function with respect to $\tau$, which suggest some $\tau$-values for which the variance of the variance fluctuations, which is denoted by ${\mathrm{Var}}[\sigma^2_\tau[n]]$, is at desired level. Accordingly, ${\mathrm{Var}}[\sigma^2_\tau[n]]$ with respect to $\lfloor\tau/\tau_0\rfloor$ is depicted in \figref{Figure:AllanVarianceAndVarianceError} for the additive noise data that belongs to different two systems, where the variances of these additive noise are not constant and follow a \ac{WSS} non-negative random process. As such, the variance for system 1 is much more auto-correlated than that for system 2. Herein, we readily observe that, as $\tau$ increases, the variance of the total variance fluctuation decreases as expected. This fact does not reveal a minimum $\tau$ value that will keep the total variance fluctuation as small as possible. On the other hand, as demonstrated in \figref{Figure:AllanVarianceAndVarianceError}, the fact that the Allan's variance is not a monotonic function of $\tau$ can help to determine this minimum $\tau$ value, namely $\tau\!\approx\!5790\tau_0$ for system 1 and $\tau\!\approx\!67\tau_0$ for system 2. 

\begin{theorem}[Autocorrelation of Noise Variance]\label{Theorem:VarianceCorrelation}
The correlation between $\sigma^2_\tau[n]$ and $\sigma^2_\tau[n-\tau]$ is given by
\begin{equation}\label{Eq:VarianceAutoCorrelation}
    \Covariance{\sigma^2_\tau[n]}{\sigma^2_\tau[n-\tau]}
        =\Variance{\sigma^2_\tau[n]}-
            \mathbb{A}\!\left[\defrvec{Z}[n];\tau\right]
\end{equation}
for any window $\tau\in\mathbb{N}$.
\end{theorem}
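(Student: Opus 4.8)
The plan is to derive the claimed identity as an immediate algebraic consequence of the Allan's-variance decomposition already obtained in \eqref{Eq:AllanVarianceOfVarianceDecompositionII}, in which $\mathbb{A}\!\left[\defrmat{Z}[n];\tau\right]$ appears as the half-sum of the two marginal variances minus twice the cross-covariance. The only structural input beyond that equation is the wide-sense stationarity of the variance process $\sigma^2_\tau[n]$ with respect to $n$, which was assumed in the paragraph preceding \eqref{Eq:AllanVarianceBounds}; I would invoke it to equate the two marginal variances appearing in the decomposition.

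Concretely, first I would recall \eqref{Eq:AllanVarianceOfVarianceDecompositionII}, namely that $\mathbb{A}\!\left[\defrmat{Z}[n];\tau\right]=\tfrac{1}{2}\bigl(\Variance{\sigma^2_\tau[n]}+\Variance{\sigma^2_\tau[n-\tau]}-2\,\Covariance{\sigma^2_\tau[n]}{\sigma^2_\tau[n-\tau]}\bigr)$. Second, from the \ac{WSS} property of $\sigma^2_\tau[n]$ the second-order moments are shift-invariant, so that $\Variance{\sigma^2_\tau[n-\tau]}\!=\!\Variance{\sigma^2_\tau[n]}$ for every $\tau\!\in\!\mathbb{N}$. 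Substituting this equality collapses the two variance terms into one, giving
\begin{equation}
	\mathbb{A}\!\left[\defrmat{Z}[n];\tau\right]=\Variance{\sigma^2_\tau[n]}-\Covariance{\sigma^2_\tau[n]}{\sigma^2_\tau[n-\tau]}.
\end{equation}
Finally, solving this relation for the covariance yields \eqref{Eq:VarianceAutoCorrelation}, which completes the argument.

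Since every ingredient is already in place, there is no genuine analytical obstacle here; the result is essentially a one-line rearrangement. The single point that merits an explicit sentence is the legitimacy of replacing $\Variance{\sigma^2_\tau[n-\tau]}$ by $\Variance{\sigma^2_\tau[n]}$: this is precisely what wide-sense stationarity of $\sigma^2_\tau[n]$ in $n$ guarantees, and the same hypothesis also ensures that $\Covariance{\sigma^2_\tau[n]}{\sigma^2_\tau[n-\tau]}$ depends only on the lag $\tau$, so that the left-hand side of \eqref{Eq:VarianceAutoCorrelation} is well defined as an autocorrelation at lag $\tau$. I would state this lag-dependence explicitly to make the identification of \eqref{Eq:VarianceAutoCorrelation} as an autocorrelation unambiguous.
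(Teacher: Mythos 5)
Your proposal is correct and follows essentially the same route as the paper's own proof: both invoke the WSS property of $\sigma^2_\tau[n]$ to equate $\Variance{\sigma^2_\tau[n-\tau]}$ with $\Variance{\sigma^2_\tau[n]}$ in the Allan's-variance decomposition \eqref{Eq:AllanVarianceOfVarianceDecompositionII} and then rearrange to isolate the covariance. Your added remark that stationarity also makes the covariance depend only on the lag $\tau$ is a harmless clarification beyond what the paper states.
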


\begin{proof}
From the \ac{WSS} view of $\sigma^2_\tau[n]$, we have $\Variance{\sigma^2_\tau[n]}\!=\!\Variance{\sigma^2_\tau[n-t]}$ for all $t\!\in\!\mathbb{N}$, and then simplify \eqref{Eq:AllanVarianceOfVarianceDecompositionII} to 
\begin{equation}\label{Eq:AllanVarianceOfVarianceDecompositionIII}
	\mathbb{A}\!\left[\defrmat{Z}[n];\tau\right]
	=\Variance{\sigma^2_\tau[n]}
	    -\Covariance{\sigma^2_\tau[n]}{\sigma^2_\tau[n-\tau]}.
\end{equation}
which completes the proof of \theoremref{Theorem:VarianceCorrelation}.
\end{proof}

The correlation between two consecutive estimated variances for a certain $\tau$ is given by \theoremref{Theorem:VarianceCorrelation},~from~which we observe that, when $\tau$ becomes~as~large~as~possible, this correlation ${\mathrm{Cov}}[\sigma^2_\tau[n],\sigma^2_\tau[n-\tau]]$ closes to zero,~and~therefrom with \eqref{Eq:AllanVarianceInfiniteBound}, the total variance fluctuation becomes~minimized. In the context of correlation, the auto-correlation coefficient between two consecutive estimated variances is obtained in the following.  

\begin{theorem}[Auto-correlation Coefficient of Noise Variance]
\label{Theorem:VarianceCorrelationCoefficient}
The correlation coefficient between $\sigma^2_\tau[n]$ and
$\sigma^2_\tau[n-\tau]$ is given by
\begin{equation}\label{Eq:VarianceAutoCorrelationCoefficient}
	\Correlation{\sigma^2_\tau[n]}{\tau}=
	    {1}-\frac{\mathbb{A}\!\left[\defrmat{Z}[n];\tau\right]}
	        {\Variance{\sigma^2_\tau[n]}},
\end{equation}
for any window $\tau\!\in\!\mathbb{R}_{+}$ such that
\begin{equation}\label{Eq:VarianceAutoCorrelationCoefficientRange}
    {-1}<\Correlation{\sigma^2_\tau[n]}{\tau}<{1}.
\end{equation}
\end{theorem}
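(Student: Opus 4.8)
The plan is to derive the auto-correlation coefficient directly from \theoremref{Theorem:VarianceCorrelation}, which already furnishes the covariance $\Covariance{\sigma^2_\tau[n]}{\sigma^2_\tau[n-\tau]}$ in terms of $\Variance{\sigma^2_\tau[n]}$ and the Allan's variance $\mathbb{A}\!\left[\defrvec{Z}[n];\tau\right]$. First I would invoke the \ac{WSS} assumption on $\sigma^2_\tau[n]$ with respect to $n$, which guarantees that $\Variance{\sigma^2_\tau[n]}\!=\!\Variance{\sigma^2_\tau[n-\tau]}$, so that the standard definition of the Pearson correlation coefficient collapses its denominator to a single variance:
\begin{equation}
    \Correlation{\sigma^2_\tau[n]}{\tau}=
        \frac{\Covariance{\sigma^2_\tau[n]}{\sigma^2_\tau[n-\tau]}}
            {\sqrt{\Variance{\sigma^2_\tau[n]}\Variance{\sigma^2_\tau[n-\tau]}}}
        =\frac{\Covariance{\sigma^2_\tau[n]}{\sigma^2_\tau[n-\tau]}}
            {\Variance{\sigma^2_\tau[n]}}.
\end{equation}

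Next I would substitute \eqref{Eq:VarianceAutoCorrelation} from \theoremref{Theorem:VarianceCorrelation} into the numerator. Dividing $\Variance{\sigma^2_\tau[n]}-\mathbb{A}\!\left[\defrvec{Z}[n];\tau\right]$ by $\Variance{\sigma^2_\tau[n]}$ yields \eqref{Eq:VarianceAutoCorrelationCoefficient} immediately, since the first term gives $1$ and the second gives the ratio of the Allan's variance to the variance of the fluctuations. This establishes the closed-form expression; the only remaining task is to certify the bounds in \eqref{Eq:VarianceAutoCorrelationCoefficientRange}.

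For the range, I would argue that the correlation coefficient of any two nondegenerate random variables lies in $[-1,1]$ by the Cauchy--Schwarz inequality, so the substantive point is strictness of the inequalities. The upper bound $\Correlation{\sigma^2_\tau[n]}{\tau}<1$ follows from the positivity $\mathbb{A}\!\left[\defrvec{Z}[n];\tau\right]>0$ established in \eqref{Eq:AllanVarianceBounds}, which forces the subtracted ratio to be strictly positive. The lower bound $\Correlation{\sigma^2_\tau[n]}{\tau}>-1$ follows from the companion inequality $\mathbb{A}\!\left[\defrvec{Z}[n];\tau\right]<2\,\Variance{\sigma^2_\tau[n]}$ in \eqref{Eq:AllanVarianceBounds}, since dividing by $\Variance{\sigma^2_\tau[n]}$ gives a ratio strictly less than $2$, and $1$ minus a quantity strictly less than $2$ exceeds $-1$.

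The main obstacle, such as it is, is not the algebra but the justification that $\mathbb{A}\!\left[\defrvec{Z}[n];\tau\right]$ is strictly positive and strictly below $2\,\Variance{\sigma^2_\tau[n]}$ over the full parameter range $\tau\!\in\!\mathbb{R}_{+}$ rather than merely $\tau\!\in\!\mathbb{N}$, since \eqref{Eq:AllanVarianceBounds} is stated for integer $\tau$ while the theorem asserts the bound for real $\tau$. I would handle this by appealing to the \ac{WSS} continuity of $\sigma^2_\tau[n]$ in the window parameter and the fact that the Allan's variance is a nonnegative quadratic form in the fluctuation differences, so its strict positivity and its strict upper bound persist for non-integer windows whenever the fluctuation process is nondegenerate; degeneracy (perfect correlation or perfect anticorrelation of consecutive estimates) is excluded precisely because $\sigma^2_\tau[n]$ is a genuinely random, nonconstant process for finite $\tau$. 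With that continuity argument in place, the proof is essentially a one-line substitution followed by the bound check.
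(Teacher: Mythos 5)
Your proposal follows essentially the same route as the paper's own proof: write the Pearson coefficient, collapse the denominator using the \ac{WSS} property $\Variance{\sigma^2_\tau[n]}\!=\!\Variance{\sigma^2_\tau[n-\tau]}$, substitute the covariance identity \eqref{Eq:VarianceAutoCorrelation} from \theoremref{Theorem:VarianceCorrelation}, and deduce the strict bounds \eqref{Eq:VarianceAutoCorrelationCoefficientRange} from \eqref{Eq:AllanVarianceBounds}. Your closing remark on the mismatch between the integer window $\tau\!\in\!\mathbb{N}$ in \eqref{Eq:AllanVarianceBounds} and the real window $\tau\!\in\!\mathbb{R}_{+}$ claimed in the theorem is a genuine refinement of a point the paper passes over silently, but it does not alter the argument's structure.
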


\begin{proof}
The correlation coefficient between $\sigma^2[n]$ and $\sigma^2[n-\tau]$ is readily written as 
\begin{equation}
    \Correlation{\sigma^2_\tau[n]}{\tau}=
        \frac{\Covariance{\sigma^2_\tau[n]}{\sigma^2_\tau[n-\tau]}}
            {\sqrt{\Variance{\sigma^2_\tau[n]}
                \Variance{\sigma^2_\tau[n-\tau]}}}.
\end{equation}
Noticing $\Variance{\sigma^2_\tau[n]}\!=\!\Variance{\sigma^2_\tau[n-\tau]}$ from~the~\ac{WSS}~feature and subsequently substituting \eqref{Eq:VarianceAutoCorrelation} into \eqref{Eq:VarianceAutoCorrelation}, we obtain \eqref{Eq:VarianceAutoCorrelationCoefficient}. Further, from \eqref{Eq:AllanVarianceBounds} and \eqref{Eq:VarianceAutoCorrelationCoefficient}, we readily observe the existence of \eqref{Eq:VarianceAutoCorrelationCoefficientRange}, which proves \theoremref{Theorem:VarianceCorrelationCoefficient}.
\end{proof}

Note that, according to \theoremref{Theorem:VarianceCorrelationCoefficient}, $\mathbb{R}[{\sigma^2_\tau[n]};{\tau}]\!\in\![-1,1]$ is such a measurement that it describes the degree to which $\sigma^2_\tau[n]$ and $\sigma^2_\tau[n-\tau]$ are
correlated with each other. For a specific~coherence window $0\!\leq\!\tau\!\leq\!\tau_{\ell}$, if the consecutively-estimated two variances $\sigma^2_\tau[n]$ and
$\sigma^2_\tau[n-\tau]$ are highly~correlated, then we have  $\mathbb{R}[{\sigma^2_\tau[n]};{\tau}]\!\approx\!{1}$ and thus $\mathbb{A}[\defrmat{Z}[n];\tau]\!\ll\!\Variance{\sigma^2_\tau[n]}$, which means that the estimation $\sigma^2_\tau[n]$ has the minimum error, i.e., $\sigma^2_\tau[n]$ is approximately constant. Accordingly, we can exploit \theoremref{Theorem:VarianceCorrelationCoefficient} to estimate the coherence window $\tau$ of the random fluctuations in the nature of variance.

\begin{theorem}[Coherence of Noise Variance]\label{Theorem:VarianceCoherenceWindow}
The length of the coherence window $[0,\tau_{C}]$ of the additive noise variance can be estimated as
\begin{equation}\label{Eq:VarianceCoherenceWindowEstimation}
\tau_{C}=\arg\min_{\tau\in\mathbb{R}_{+}}
		\biggl(
			\frac{\mathbb{A}[\defrmat{Z}[n];\tau]}
			    {\Variance{\sigma^2_\tau[n]}}+R-1
		\biggr)^{2}\!,
\end{equation}
where $R\!\in\![0,1]$ denotes a certain correlation level, typically chosen as $0.95$, $0.68$, or $0.5$.\QEDsymbol
\end{theorem}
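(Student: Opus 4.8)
The plan is to read the coherence window directly off the closed-form autocorrelation coefficient established in \theoremref{Theorem:VarianceCorrelationCoefficient}. Operationally, the coherence window $[0,\tau_{C}]$ is the lag range over which two consecutively estimated variances $\sigma^2_\tau[n]$ and $\sigma^2_\tau[n-\tau]$ remain correlated at or above a prescribed level $R\!\in\![0,1]$; accordingly $\tau_{C}$ is to be identified as the lag at which the normalized autocorrelation has decayed to exactly $R$, i.e. the solution of $\Correlation{\sigma^2_\tau[n]}{\tau}\!=\!R$. This is the standard communications notion of a coherence time, here transported to the second-order statistics of the noise power rather than to the fading envelope.

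First I would substitute the expression furnished by \theoremref{Theorem:VarianceCorrelationCoefficient} into this defining condition. Writing
\[
\Correlation{\sigma^2_\tau[n]}{\tau}={1}-\frac{\mathbb{A}\!\left[\defrmat{Z}[n];\tau\right]}{\Variance{\sigma^2_\tau[n]}}=R,
\]
and rearranging, the sought lag is characterized by the scalar equation
\[
\frac{\mathbb{A}\!\left[\defrmat{Z}[n];\tau\right]}{\Variance{\sigma^2_\tau[n]}}+R-1=0 .
\]
Since, by \eqref{Eq:VarianceAutoCorrelationCoefficientRange}, $\Correlation{\sigma^2_\tau[n]}{\tau}\!\in\!(-1,1)$ while the target level $R$ is taken in $[0,1]$, the desired correlation value lies inside the attainable range, so this crossing condition is meaningful rather than vacuous.

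Next I would account for the fact, emphasized in the discussion around \figref{Figure:AllanVarianceAndVarianceError}, that the normalized Allan ratio $\mathbb{A}[\defrmat{Z}[n];\tau]/\Variance{\sigma^2_\tau[n]}$ is \emph{not} a monotone function of $\tau$ and, when evaluated on the discrete grid of lags available from the empirical variance estimates, need not attain the value $1-R$ exactly. Hence the scalar equation above may possess several near-solutions or none exactly; the well-posed remedy is to select the lag that comes closest to satisfying it, namely the least-squares solution
\[
\tau_{C}=\arg\min_{\tau\in\mathbb{R}_{+}}\biggl(\frac{\mathbb{A}[\defrmat{Z}[n];\tau]}{\Variance{\sigma^2_\tau[n]}}+R-1\biggr)^{2},
\]
which is precisely \eqref{Eq:VarianceCoherenceWindowEstimation}, completing the argument.

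The main obstacle is conceptual rather than computational: justifying the passage from the idealized crossing condition $\Correlation{\sigma^2_\tau[n]}{\tau}\!=\!R$ to the $\arg\min$ formulation. One must argue that, because the normalized Allan ratio is non-monotone (so no elementary monotone-threshold-crossing guarantee applies) and because the variance estimates are realized only on a discrete set of sample lags, minimizing the squared residual is the appropriate surrogate for the exact equation, returning the minimum-error estimate of the lag at which the variance fluctuations decorrelate to the chosen level $R$. Establishing this cleanly — rather than merely the trivial algebraic rearrangement — is the substantive part of the proof.
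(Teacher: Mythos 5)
Your proof is correct and takes essentially the same route as the paper: both substitute the closed-form autocorrelation coefficient of \theoremref{Theorem:VarianceCorrelationCoefficient} into the requirement that the correlation drop to the prescribed level $R$, and then recast that crossing condition as the least-squares $\arg\min$ of the squared residual, which is exactly \eqref{Eq:VarianceCoherenceWindowEstimation}. The only divergence is in the heuristic justification: the paper asserts that $\abs{\Correlation{\sigma^2_\tau[n]}{\tau}}$ decreases monotonically in $\tau$, formulates the minimization in terms of that absolute value, and removes it via the inequality $1-\abs{x}\leq\abs{1-x}$, whereas you work directly with the signed coefficient and motivate the least-squares surrogate by non-monotonicity and discreteness of the lag grid — both being informal glosses on the same formal derivation.
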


\begin{proof}
Note that $\abs{\mathbb{R}[{\sigma^2_\tau[n]};\tau]}$ decreases monotonically with respect to $\tau\!\in\!\mathbb{R}_{+}$, i.e., $\abs{\mathbb{R}[\sigma^2_\tau[n];\tau]}\!\leq\!\mathbb{R}[\sigma^2_\tau[n];{0}]$. Hence, we can determine the width $\tau_{C}$ of the coherence window as that of $\abs{\mathbb{R}[\sigma^2_\tau[n];\tau]}$ where it drops to a certain level $R$. Having an objective to minimize the Euclidean distance between $R$ and $\abs{\mathbb{R}[\sigma^2_\tau[n];\tau]}$, we can formulate this problem as
\begin{equation}\label{Eq:CoherenceWindowEstimation}
\tau_{C}=\arg\min_{\tau\in\mathbb{N}}
		\Bigl(
			R-\abs{\mathbb{R}[\sigma^2_\tau[n];\tau]}
		\Bigr)^{2}\!.~~\\
\end{equation}
where substituting \eqref{Eq:VarianceAutoCorrelationCoefficient} and using ${1-\abs{x}}\!\leq\!\abs{1-x}$ results in
\eqref{Eq:VarianceCoherenceWindowEstimation}, which proves \theoremref{Theorem:VarianceCoherenceWindow}.
\end{proof}

Based on the concepts and procedures mentioned above for the random fluctuations of noise-variance,~let~us~now~briefly consider which values of $\tau_{C}$ cause some uncertainty (random fluctuations) in noise variance. Let $T_{C}\!\in\!\mathbb{R}_{+}$ be the coherence time of the fading conditions in the wireless channel, and $T_{S}$ be the symbol duration. In literature, it is widely assumed that $T_{C}\!\gg\!T_{S}$ for a reliable transmission~in~flat~fading~environments. In order to get the idea how to elucidate which values of $\tau_{C}$ cause the random fluctuations in noise variance, we need to compare both $\tau_{C}$ and $T_{C}$ with each other with regard to $T_{S}$. Regarding the random fluctuations of noise variance, there are three distinct variance-uncertainties observed in wireless communications and listed as follows. 
\begin{itemize}[label=$\square$]
\setlength\itemsep{1mm}
\item \textit{(Constant variance).} In the literature of wireless communications \cite[and references therein]{BibAlouiniBook,BibGoldsmithBook,BibProakisBook,BibRappaportBook,BibLapidothBook2017}, $\tau_{C}$ is often assumed to be pretty much large enough as compared both to $T_{C}$ and $T_{S}$ such that $\tau_{C}/T_{C}\!\gg\!{T_{S}}$. In such a case, we observe that $\sigma^2[n]$ does actually have no fluctuations, namely, that it is constant (i.e., $\sigma^2_{\tau}[n]={2}N_{0}$ for all $n\in\mathbb{N}$ and $\tau\in\mathbb{R}_{+}$, where $2N_{0}$ denotes the power spectral density of $\defrmat{Z}[n]$) since
\begin{equation}
	\lim\limits_{\tau_{C}\rightarrow\infty}\mathbb{A}[\defrmat{Z}[n];\tau]={0}^{+},
\end{equation}
In other words, since $\lim_{\tau_{C}\rightarrow\infty}\sigma^2_\tau[n]\!=\!\sigma^2[n]\!=\!2N_{0}$, we notice that the random fluctuations of the variance vanish when $\tau_{C}\!\rightarrow\!\infty$ as expected.~Accordingly~and~conveniently, we can use multivariate \ac{CES} Gaussian distribution instead of multivariate \ac{CES} non-Gaussian distribution to model the additive noise $\defrmat{Z}[n]$.
\item\textit{(Slow variance-uncertainty).} If $\tau_{C}$ is either comparable to or greater than $T_{C}$ with respect to $T_{S}$, i.e. when $\tau_{C}/T_{C}\geq{T_{S}}$, then it is observed that the instantaneous variance $\sigma^2[n]$ is approximately constant during the symbols transmitted in the coherence time $T_{C}$ of fading conditions but fluctuates arbitrarily over all transmitted symbols. For example, either in high-speed transmission in ultra-high frequencies, or in wireless powered diversity receivers, $\defrmat{Z}[n]$ follows a multivariate \ac{CES} Gaussian distribution whose total variance $\sigma^{2}[n]$ fluctuates randomly in long-term conditions. This phenomenon is called \emph{noise uncertainty}  \cite{BibShellhammerTandraIEEEStd2006}.~That~is~to~say,~ $\sigma^{2}[n]$~follows a non-negative distribution, which modulates complex Gaussian distribution, and thus causes impulsive effects on the performance of the transmission system. Accordingly, we show that $\defrmat{Z}[n]$ is accurately modeled in terms of Hall's noise model  \cite{BibHall1966TechReport,BibHedin1968TechReport} as follows
\begin{equation}\label{Eq:HallNoiseModel}
	\defrmat{Z}[n]=\sigma[n]\,\defrmat{N}[n],
\end{equation}
where $\defrmat{N}[n]$ is a multivariate \ac{CES} Gaussian distribution with zero mean vector and $\defmat{\Sigma}$ covariance matrix, and independent of $\sigma^2[n]$. Thus, according to \eqref{Eq:HallNoiseModel}, $\defrmat{Z}[n]$ follows a multivariate \ac{CES} Gaussian distribution given $\sigma^2[n]$. Therefore, \eqref{Eq:HallNoiseModel} is found to be a \ac{SIRP} \cite{BibYaoKung1973}, which has been widely adopted in wireless communications \cite[and~references~therein]{BibAlouiniBook}. It is worth mentioning that, as well explained in the following sections, $\sigma^{2}[n]$ can be perfectly estimated in the coherence time $T_{C}$ as a \ac{CSI} to maximize the \ac{SNR} in the case of signal reception over generalized fading environments. 

\item \textit{(Fast variance-uncertainty).} When $\tau_{C}$ is much smaller than $T_{C}$ such that $\tau_{C}/T_{C}\ll{T_{S}}$, the estimation of $\sigma^{2}[n]$ within the coherence time $T_{C}$ is a more difficult task, and mostly not possible. In such a case, the noise model presented in \eqref{Eq:HallNoiseModel} still applies, but optimum detection
and optimum combining schemes have to be reconsidered to minimize the performance degradation originated from the variance uncertainity.
\end{itemize}

\vspace{1mm}
Eventually, from the statements given above, we conclude that in both slow and fast uncertainty (random fluctuations) of the noise variance, the multiplication of $\sigma[n]$ and $G[n]$ leads to some impulsive random fluctuations. As such, the random distribution of $\sigma[n]$ modulates the inphase and quadrature parts of $\defrmat{N}[n]$ since the inphase and quadrature parts of $\defrmat{N}[n]$ belong to the same channel. In the following, we show that the variance fluctuations exists in real life scenarios, and the additive noise, whose model is introduced in \eqref{Eq:HallNoiseModel}, follows McLeish distribution.

For the sake of brevity, clarity and readability, the symbol indexing $[n]$ is
deliberately omitted in the following. 

\subsection{Existence of McLeish noise distribution}
\label{Section:AWMNChannels:McLeishNoiseExistence}
The existence of McLeish noise in communication systems is observed in many forms and in various ways.

\subsubsection{Thermal noise}
\label{Section:AWMNChannels:McLeishNoiseExistence:JohnsonNoise}
If the additive noise is primarily originated from electronic materials at the receiver, it is then called thermal noise. The electrical conduction is governed by how freely mobile electrons can move throughout the electronic material while their movements are hindered and impeded by scattering with other electrons, as well as with impurities or thermal excitations (phonons) \cite{BibSchefflerDresselJourdanAdrianNAT2005}. At this point, the thermal noise is explained as a phenomenon associated with the discreteness and random motion of the electrons, and always exists in varying degrees in all electrical parts of systems. Regarding the model of thermal agitation \cite{BibHenryAJP1973},\cite[Sec.~8.10]{BibEngelbergBook}, which goes back to the classical theory introduced by Drude in 1900 \cite{BibDrudePZ1900}, let us consider a steady electrical current composed of many electrons, each passing through a resistor which is illustrated in \figref{Figure:ResistorModel} as a cylinder of finite conductive material of length $L$ and cross-sectional area $A$ (i.e., its volume is $V\!=\!AL$). The velocity of an electron in the $x$-direction (i.e. the velocity along the direction of the steady electric field impressed upon the resistor by the battery) is given by ${v}_{x}\!=\!{v}_{d}+{v}_{t}$, where ${v}_{d}$ is the drift velocity due to electric field and ${v}_{t}$ is the x-velocity due to the \emph{thermal agitation} of the electrons. Further, since the electric field inside the resistor is, without loss of generality, assumed to be constant, 
the field-based velocity ${v}_{d}$ has no random nature. However, as a result of ${v}_{d}\gg{v}_{t}$, the thermal-based velocity ${v}_{t}$ has random nature in the $x$-direction, following Gaussian \ac{PDF} given by
\begin{equation}\label{Eq:ElectronRandomMotionPDF}
    f_{{v}_{t}}(v)=\sqrt{\frac{m_{0}}{2{\pi}KT}}\exp\Bigl(-m_{0}\frac{v^2}{2KT}\Bigr),\quad{v}\in\mathbb{R},
\end{equation}
with mean $\mathbb{E}[{v}_{t}]\!=\!0$ and variance $\mathbb{E}[{v}^2_{t}]\!=\!\frac{KT}{2m_{0}}$, where the constant  $m_{0}\!\approx\!9.10938356\times{10}^{-31}$ kg is the mass of an electron, and $K$ and $T$ are respectively the Boltzmann constant and the absolute temperature. If temperature is measured in Kelvins, and energy is measured in Joules, then the Boltzmann constant is \emph{approximately} given by $K\!\approx\!1.38064852\!\times\!{10}^{-23}$ J/K. Accordingly, average thermal kinetic energy of an electron can be written as 
\begin{equation}
    E_{t}=\mathbb{E}\Bigl[\frac{1}{2}m_{0}{v}^2_{t}\Bigr]=\frac{1}{2}KT
\end{equation}
in accordance within the literature \cite{BibDrudePZ1900,BibJohnsonPR1928,BibKhinchinBook,BibJohnsonIEEESPEC1971,BibHenryAJP1973,BibRomeroP1998,BibEngelbergBook,BibAntsiferovBook,BibSchefflerDresselJourdanAdrianNAT2005,BibThierryBook,BibVasileskaGoodnickKlimeckBook2016,BibDresselhausCroninSouzaFilhoBook}. Further, free electrons will move randomly due to thermal energy, so they experience many collisions. Let $C$ be the number of collisions in 1 second and $\tau$ be the interval time between any two sequential collisions of an electron. In accordance with the statistical theory of collisions, we notice that $C$ has a random relaxation nature that follows Poisson process
 \cite{BibVasileskaGoodnickKlimeckBook2016,BibDresselhausCroninSouzaFilhoBook} with the \ac{PMF} given by
\begin{figure}[tp] 
    \centering
    \includegraphics[clip=true, trim=0mm 0mm 0mm 0mm,width=0.7\columnwidth,keepaspectratio]{./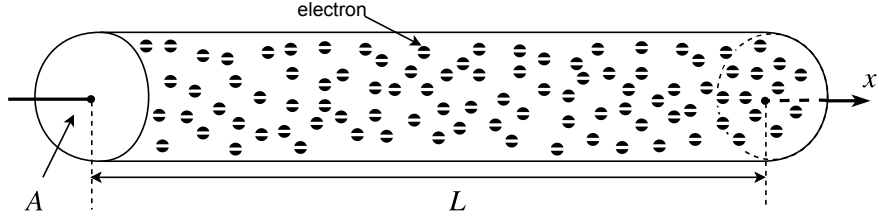}
    \caption{Finite conductive material.}
    \label{Figure:ResistorModel}
    \vspace{-2mm} 
\end{figure} 
\begin{subequations}
\setlength\arraycolsep{1.4pt}
\begin{eqnarray}
    f_{C}(n)&=&
        \Pr\{\text{$n$~collisions occurs in 1 second}\},\\
    &=&\frac{1}{n!}\Bigl(\frac{1}{\Delta\tau}\Bigr)^{n}
        \exp\Bigl(-\frac{1}{\Delta\tau}\bigr),
\end{eqnarray}
\end{subequations}
where $\Delta\tau$ is the mean relaxation time between collisions (i.e. $\Delta\tau\!=\!\mathbb{E}[\tau]$) and decreases as with temperature $T$, i.e., $\Delta\tau\!\propto\!{1}/{\sqrt{T}}$. In average sense, each electron should experience ${1}/{\Delta\tau}$ collisions per $1$ second. In~the~best electron excitation, $\tau$ follows an exponential distribution, that is   
\begin{equation}\label{Eq:ElectronCollisionMeanTimePDF}
    f_{\tau}(t)=\frac{1}{\Delta\tau}\exp\Bigl(-\frac{t}{\Delta\tau}\Bigr),
\end{equation}
for ${t}\!\in\!\mathbb{R}_{+}$. It is worth emphasizing either $\Pr\{\tau<\Delta\tau\}\!>\!1-\Pr\{\tau<\Delta\tau\}$ under the best electron excitation conditions or $\Pr\{\tau<\Delta\tau\}\!\leq\!1-\Pr\{\tau<\Delta\tau\}$ otherwise.~In~other~words, $\tau$ is the most probably less than $\Delta\tau$ under the best electron excitation conditions. Let us denote the electron excitation condition by $\nu\!\in\!\mathbb{R}_{+}$. We notice that $\nu$ increases while the electron excitation conditions get worse, which results the fact that each electron displacement occurs after more than one collisions under the worst electron excitation conditions. Therefore, under the best electron excitation conditions, we have $\Pr\{\tau<\Delta\tau\}\!\leq\!1-\Pr\{\tau<\Delta\tau\}$ and therefrom notice that $\tau$ is the most probably larger than or equal to $\Delta\tau$. In pursuance of the electron excitation conditions, in which the variation in time between any two sequential collisions of an electron arises from fluctuations in the momentum of electrons created by collisions, we conveniently deduce that $\tau$ follows a Gamma distribution, that is
\begin{equation}\label{Eq:ElectronCollisionMeanTimeGAMMAPDF}
f_{\tau}(t)=\frac{1}{\Gamma(\nu)}
    \Bigl(\frac{\nu}{\Delta\tau}\Bigr)^{\nu}t^{\nu-1}
        \exp\Bigl(-\frac{\nu}{\Delta\tau}t\Bigr),
\end{equation}
which readily simplifies to \eqref{Eq:ElectronCollisionMeanTimePDF} for the best electron excitation conditions $\nu\!=\!1$ as expected. But, for the worst electron excitation conditions, we have $\nu\!\rightarrow\!\infty$, and correspondingly we notice that  \eqref{Eq:ElectronCollisionMeanTimeGAMMAPDF} approximates to the Dirac's distribution, that is $f_{\tau}(t)\!=\!\DiracDelta{t-\Delta\tau}$. This fact means that the randomness of $\tau$ disappears (i.e., constantly $\tau=\Delta\tau$), and implies in other words that the thermal displacement of each electron along the $x$-direction for a period of 1 second will precisely occur as a result of its certain $1/\Delta\tau$ number of collisions. 

Note that the number of free electrons causing thermal noise depends on the finite conductivity of the resistor. Accordingly, let $\rho$ denote the density of free electrons, then the total number of free electrons in the finite conductive material, depicted in \figref{Figure:ResistorModel}, is given by
$\eta_{f}\!=\!\rho\,{A}{L}$, and then the total number of possible displacement steps taken by all the free electrons in $1$ seconds should be $\eta\!\approx\!{\eta_{f}}/{\tau}\!=\!{\rho\,{A}{L}}/{\tau}$. In accordance with the velocity of an electron explained above, let $v_{t}[n]$ be the $n$th thermal displacement of an electron along the $x$-direction for the period of 1 second. The distribution of $v_{t}[n]$ is given in \eqref{Eq:ElectronRandomMotionPDF}. Accordingly, in terms of fractional sum, we can write the total charge movement due to thermal energy, i.e., the additive noise current passing though the resistor of length $L$, that is
\begin{equation}
    I=\sum_{n=0}^{\eta}{e}_{0}\,\tau\,\frac{v_{t}[n]}{L}=\sum_{n=0}^{\eta}Q[n],
\end{equation}
where ${e}_{0}\!\approx\!{1.60217662}\times{10}^{-19}$ C denotes the charge on each electron. Under the assumption that $\tau$ is instantaneously known, $Q[n]$, ${0}\!\leq\!{n}\!\leq\!{\eta}$, has Gaussian distribution. Therefore, the additive noise current $I$ conditioned on $\tau$, which is denoted by $I|\tau$, will follow Gaussian distribution with mean and variance, respectively obtained with the aid of the Euler-like identities of fractional sums \cite{BibMullerJCAM2005,BibMullerRJ2010,BibAbdeljawadBaleanuJaradAgarwalDDNS2013} as follows
\setlength\arraycolsep{1.4pt}
\begin{eqnarray}
    \label{Eq:AdditiveNoiseMeanConditionedOnCollisionMeanTime}
    \mu_{I|\tau}&=&
        \mathbb{E}\bigl[\bigl.I\bigr|\tau\bigr]=0,\\
    \label{Eq:AdditiveNoiseVarianceConditionedOnCollisionMeanTime}
    \sigma^{2}_{I|\tau}&=&
        \mathbb{E}\bigl[\bigl.I^2\bigr|\tau\bigr]=\tau\rho\,{e}^{2}_{0}\frac{AKT}{2{m}_{0}L}.
\end{eqnarray}
Accordingly, the \ac{PDF} of $I$ given $\tau$, i.e., $f_{I|\tau}(x)$ is written as 
\begin{equation}\label{Eq:AdditiveNoisePDFConditionedOnCollisionMeanTime}
    f_{I|\tau}(x)=\sqrt{\frac{{m}_{0}L}{\pi\tau\rho\,{e}^{2}_{0}{AKT}}}
        \exp\Bigl(-\frac{{m}_{0}L}{\tau\rho\,{e}^{2}_{0}{AKT}}x^{2}\Bigr).
\end{equation}
In pursuance, the \ac{PDF} of $I$ is readily expressed as
$f_{I}(x)\!=\!\int_{0}^{\infty}f_{I|\tau}(x|t)f_{\tau}(t)dt$,  
where substituting \eqref{Eq:AdditiveNoisePDFConditionedOnCollisionMeanTime} and \eqref{Eq:ElectronCollisionMeanTimeGAMMAPDF}, and subsequently employing
\cite[Eq.\!~(3.471/9)]{BibGradshteynRyzhikBook} results in 
\begin{equation}\label{Eq:AdditiveNoisePDF}
	f_{I}(x)=\frac{2}{\sqrt{\pi}}
		\frac{\abs{x}^{\nu-\frac{1}{2}}}{\Gamma(\nu)\,\lambda^{\nu+\frac{1}{2}}}
			\BesselK[\nu-\frac{1}{2}]{\frac{2\abs{x}}{\lambda}},
\end{equation}
which is surprisingly the \ac{PDF} of McLeish distribution with zero mean and $\sigma^2$ variance. Hence, we have $I\!\sim\!\mathcal{M}_{\nu}(0,\sigma^2)$, where the admittance per collision is given by $\lambda\!=\!\sqrt{2\sigma^2/\nu}$. We obtain the variance $\sigma^2\!=\!\mathbb{E}[I^{2}]$ by $\sigma^{2}\!=\!\int_{0}^{\infty}\!\!\sigma^{2}_{I|t}\,f_{\tau}(t)\,dt$, where 
substituting \eqref{Eq:AdditiveNoiseVarianceConditionedOnCollisionMeanTime} and \eqref{Eq:ElectronCollisionMeanTimeGAMMAPDF} results in 
\begin{equation}\label{Eq:AdditiveNoiseVariance}
    \sigma^{2}=\Delta\tau\,\rho\,{e}^{2}_{0}\frac{AKT}{2{m}_{0}L}.
\end{equation}
According to the Nyquist's theorem \cite{BibNyquistPR1928,BibRomeroP1998,BibEngelbergBook}, the power spectral density of the additive noise current is given by $S_{I}(f)\!=\!{2KT}/{R}$ for all $f\!\in\!\mathbb{R}$, where $R$ denotes the thermal resistance of the finite conductive material, to which the additive thermal noise is associated. With the aid of $S_{I}(0)=\sigma^2$, we obtain the resistance as 
\begin{equation}\label{Eq:AdditiveNoiseResistance}
    R=\frac{2KT}{S_{I}(0)}=\frac{4{m}_{0}L}{{\Delta\tau}\,\rho\,{e}^{2}_{0}A}.
\end{equation} 
Let us consider some crucial special cases. For the best electron excitation conditions (i.e., $\nu\!=\!1$), we can simplify \eqref{Eq:AdditiveNoisePDF} to the \ac{PDF} of Laplacian distribution with zero mean and $\sigma^2$ variance, that is $f_{I}(x)\!=\!\frac{1}{\sqrt{2\sigma^2}}\exp\bigl(-{\sqrt{{2}/{\sigma^2}}}\abs{x}\bigr)$. On the other hand, for the worst electron excitation conditions (i.e., $\nu\!\rightarrow\!\infty$), we can simplify \eqref{Eq:AdditiveNoisePDF} to $f_{I}(x)\!=\!\frac{1}{\sqrt{2\pi\sigma^2}}\exp\bigl(-{x^2}/{2\sigma^2}\bigr)$, which the \ac{PDF} of Gaussian distribution with zero mean and $\sigma^2$ variance as expected. We notice that these facts are compliant for the fact that the additive noise following Gaussian distribution the worst-case noise distribution for communication channels \cite{BibLapidothTIT1996,BibShomoronyAvestimehrISIT2012,BibShomoronyAvestimehrTIT2013}. Furthermore, we observe both from \eqref{Eq:AdditiveNoiseVariance} and \eqref{Eq:AdditiveNoiseResistance} that the variance of the additive noise proportional to both the temperature $T$ and the length $L$ but inversely to the cross-sectional area $A$ as expected. 

In addition to all stated above, we acknowledge one extra point in which McLeish distribution also occurs in resistance circuits. Let us assume that there exist $N$ resistors connected in parallel, then we will observe the total additive noise current as the sum of these numerous low-power impulsive noise sources $I_{\Sigma}\!=\!\sum_{n=1}^{N}I_{n}$, where $I_{n}$, ${1}\!\leq\!{n}\!\leq\!{N}$, denotes the additive noise originated from the $n$th resistor, and therein we have  $I_{n}\!\sim\!\mathcal{L}_{\nu}(0,\sigma^2)$ under the best electron excitation conditions. Consequently, the total additive noise $I$ follows a McLeish distribution, i.e, $I_{\Sigma}\!\sim\!\mathcal{M}_{\nu}(0,N\sigma^2)$. As the number of resistors increases, the number of additive Laplacian components increases, which yields the convergence of the additive noise to a Gaussian distribution according to the \ac{CLT}. Consequently, we remark that McLeish distribution is found to be a noise model capturing different impulsive noise environment. 

\subsubsection{Multiple access\texorpdfstring{\,/\,}{}User interference}
\label{Section:AWMNChannels:McLeishNoiseExistence:MAIAndMUIInterference}
In wireless communications, both \ac{MAI} and \ac{MUI} resembles impulse noise more than Gaussian noise was rigorously investigated and soundly concluded in  \cite{BibGowdaAnnampeduViswanathanICC1998,BibDhibiTSPKaiser2006,BibDhibiKaiserMNA2006,BibFiorinaGLOBECOM2006,BibHuBeaulieuIEEERWS2008,BibThompsonChang1994,BibBeaulieuNiranjayan2010}, and the impulsive effects of the interference caused by each one of the other multiple users is often reasonably be modeled by Laplacian process. It is reported in  \cite{BibGowdaAnnampeduViswanathanICC1998} that \ac{MAI} follows Laplacian  distribution in \ac{DS} \ac{CDMA} systems. Not only the theoretical background necessary to understand why \ac{MAI} and \ac{MUI} have Laplace distribution but also the further details are presented in the following. The total interference a user experiences in a \ac{MAI}\,/\,\ac{MUI} communication system can be written as
\begin{equation}\label{Eq:InterferenceModel}
    I=\sum_{n=1}^{N}I_{n},
\end{equation}
due to a small number of interfering users at close range, where the configuration of the interference originating from the $n$th interferer can be written as
\begin{equation}\label{Eq:InterferenceComponentModel}
    I_{n}=\sum_{k=1}^{\infty}\alpha_{k}e^{\imaginary\theta_{k}}I_{nk},
\end{equation}
where $\{I_{nk}\}_{k=1}^{\infty}$ denotes the set of interference components originating from the signaling of the $n$th interfering user, where $I_{nk}$ is the interference originating from the $k$th signaling configuration the $n$th interfering user employs, and modeled as $I_{nk}\!\sim\!\mathcal{CN}(0,\sigma^2_{nk})$. In accordance, let us assume that the interference components are without loss of generality  ordered with respect to their variances, i.e.,
\begin{equation}
    \sigma_{n1}\geq\sigma_{n2}\geq\sigma_{n3}\geq\ldots\geq\sigma_{nk}\geq\ldots\geq{0}.
\end{equation}
As a result of $\lim_{k\rightarrow\infty}\sigma^2_{nk}\!=\!{0}$ using the strong law of large numbers, we have $\sum_{k=1}^{\infty}\sigma^{k}_{n}\!<\!\infty$. Moreover, in \eqref{Eq:InterferenceComponentModel}, $\alpha_{k}$, ${1}\!\leq\!{k}\!\leq\!{\infty}$ is the indicator for the $k$th possible signaling configuration, and modeled as Bernoulli distribution taking values $1$ and $0$ with probabilities $p$ and $1-p$, respectively, such that ${0}\!<\!{p}\!<\!{1}$.
The phase $\theta_{k}$ is the component phase with respect to user, and it is uniformly distributed over $[-\pi,\pi)$. We can easily show that each interference component $I_{n}$, which is given by \eqref{Eq:InterferenceComponentModel}, is decomposed as 
\begin{equation}\label{Eq:InterferenceComponentDecomposition}
    I_{n}={\sigma}\sqrt{E}({X}_{0}+\imaginary{Y}_{0}),
\end{equation}
where $X_{0}\!\sim\!\mathcal{N}(0,1)$, $Y_{0}\!\sim\!\mathcal{N}(0,1)$ and $E\!\sim\!\mathcal{G}(1,1)$. Upon using \theoremref{Theorem:CCSMcLeishDefinition} with \ac{CS} property and making use of \eqref{Eq:CCSLaplacePDF}, we note that $I_{n}$ follows a Laplace distribution that has zero mean, i.e., $\mathbb{E}[I_{n}]\!=\!0$, and has a variance given by
\begin{equation}
    \sigma^2={p}\sum_{k=1}^{\infty}\sigma^2_{nk}<\infty.
\end{equation}
Since $I_{n}\!\sim\!\mathcal{CL}(0,\sigma^2)$, ${1}\!\leq\!{n}\!\leq\!{N}$, the total interference, given in \eqref{Eq:InterferenceModel}, follows a \ac{CCS} McLeish distribution with zero mean and $\nu\sigma^2$ variance (i.e., $I\!\sim\!\mathcal{CM}_{\nu}(0,\nu\sigma^2)$). Consequently, we have remarked that \ac{CCS} McLeish distribution is found to be a better model for the total \ac{MAI}\,/\,\ac{MUI} interference.


\subsubsection{Versatility}
\label{Section:AWMNChannels:McLeishNoiseExistence:Versatility}
The additive noise in most communication systems is supposed to be modeled as Gaussian distribution \cite[and\!~references\!~therein]{BibAlouiniBook, BibGoldsmithBook,BibProakisBook,BibRappaportBook}. These systems are also subjected to impulsive noise effects. Many statistical distributions have been proposed in the literature to model impulsive noise effects. As such, the so-called non-Gaussian distributions such as Laplacian, symmetric $\alpha$-stable (${S}\alpha{S}$), and generalized Gaussian distributions have attracted the interest of the research community due to their ability to capture different impulsive noise effects  \cite{BibBernsteinIEEE1974,BibDurisiUWST2002,BibHu2004Accurate,BibFiorinaGLOBECOM2006,BibDhibiMNA2006,BibDhibiTSP2006Interference, BibBeaulieuTVT2008,BibHu2008Characterizing,BibBeaulieuIEEE2009,BibChianiIEEE2009Coexistence,BibIlowHatzinakosTSP1998,BibJiang2010BER,BibYangPetropuluTSP2003,BibHaenggiAndrewsBaccelliDousseFranceschettiJSAC2009,BibBrownZoubirTSP2000,BibTsihrintzisMILCOM1996,BibKuruogluRaynerFitzgeraldSSAP1998,BibRajanTepedelenliogluTWC2010,BibViswanathanAnsariASSP1989,BibZahabiTadaionICT2010,BibBiglieriYaoYangWCOM2015}. The lack of characterizing the impulsive noise effects from non-Gaussianity to Gaussianity is one of the essential weaknesses of these distributions mentioned above. On the other hand, note that 
the statistical description of McLeish distribution is typically defined according to the two observations, one of which is that the additive noise is caused by the summation of numerous impulsive noise sources of low power, each of which is found to be properly characterized by Laplacian distribution. The other observation is that, according to
the \ac{CLT}, the additive noise certainly converges to follow Gaussian distribution as the limit case of that the number of impulsive noise sources. As a result, the McLeish distribution demonstrates a superior fit to the different impulsive noise characteristics from non-Gaussian to Gaussian distributions with respect to its normality parameter $\nu\!\in\!\mathbb{R}_{+}$. As such, let $W$ be a additive noise distribution we would like to fit the \ac{PDF} of McLeish distribution by using \ac{MOM} estimation technique. Then, we can estimate the mean by $\widehat{\mu}=\Expected{W}$, and further the variance and the normality respectively by 
\begin{equation}
    \widehat{\sigma}^2=\Variance{W}, 
	        {~}\text{and}{~}
		    \widehat\nu=\frac{3}{\Kurtosis{W}-3}, 
\end{equation}
where $\Variance{\cdot}$ and $\Kurtosis{\cdot}$ denote the well-known variance and Kurtosis operators, respectively. Consequently, we have remarked that the McLeish distribution is a very useful additive noise model that can be used in wireless communication performance analysis and research due to its versatility, experimental validity and analytical tractability.

\section{Signalling over \ac{AWMN} Channels}
\label{Section:SignallingOverAWMNChannels}
In what follows, for signaling over impulsive additive noise channels, we will introduce complex correlated \ac{AWMN} vector channels and therein benefit from the vectorization that removes the redundancy in signal waveforms and that provides a compact presentation for them. Let us proceed to establish a mathematical model, which is in vector form using \eqref{Eq:ReceivedSignal}, for the baseband signaling over complex correlated \ac{AWMN} vector channels, that is  \cite{BibProakisBook,BibRappaportBook,BibGoldsmithBook,BibAlouiniBook}
\begin{equation}\label{Eq:ComplexAWMNVectorChannel}
	\defrmat{R}={H}{e}^{\imaginary\Theta}\defmat{F}\defrvec{S}+\defrmat{Z},
\end{equation}
where all vectors are without loss~of~generality~$L$-dimen\-sional complex vectors. Specifically, $\defrmat{R}\!=\![R_1,R_2,\ldots,R_L]^T$ denotes the received signal vector. When we start explaining from the right of \eqref{Eq:ComplexAWMNVectorChannel}, the random vector $\defrmat{Z}$ is the additive noise modeled as multivariate \ac{CES} McLeish distribution with $\nu$ normality, zero mean vector and $\defmat{\Sigma}$ covariance matrix, and it is denoted by $\defrmat{Z}\!\sim\!\mathcal{CM}^{L}_{\nu}(\defvec{0},\defmat{\Sigma})$. With the aid of \theoremref{Theorem:MultivariateCESMcLeishPDF}, we readily write the \ac{PDF} of $\defrmat{Z}$ as  
\begin{equation}\label{Eq:AWMNAdditiveNoiseVectorPDF}
    f_{\defrmat{Z}}(\defvec{z})=\frac{2}{\pi^L\Gamma(\nu)}
	    \frac{{\lVert{\defvec{z}}\rVert}_{\defmat{\Sigma}}^{\nu-L}}
		    {\det(\defmat{\Sigma})\,\lambda_{0}^{\nu+L}}
			    {K}_{\nu-L}
					\Bigl(
						\frac{2}{\lambda_{0}}
					        {\bigl\lVert{\defvec{z}}\bigr\rVert}_{\defmat{\Sigma}}
					\Bigr),
\end{equation} 
where $\lambda_{0}\!=\!\sqrt{2/\nu}$ denotes the standard component deviation. \emph{It is worth noticing that $\defrmat{Z}$ has a \ac{CES} distribution (i.e., it is a colored (non-white) additive complex noise)}, which is the most essential issue at the receiver to be solved in making a decision of which symbol vector was transmitted based on the observation of $\defrmat{R}$. Moreover, for a fixed modulation level $M\!\in\!\mathbb{N}$, the random vector $\defvec{S}$ denotes the modulation symbol vector randomly chosen from the set of possible fixed modulation symbols $\{\defvec{s}_{1},\defvec{s}_{2},\ldots,\defvec{s}_{M}\}$ according to \emph{a priori} probabilities $\{{p}_{1},\allowbreak{p}_{2},\ldots,{p}_{M}\}$, where $p_{m}=\Pr\{\defrvec{S}=\defvec{s}_{m}\}$, ${1}\!\leq\!{m}\!\leq\!{M}$ with the fact that $\sum_{m}p_{m}\!=\!1$. As such, upon while considering the overall transmission, we write the \ac{PMF} of $\defvec{S}$ in continuous form \cite[Eq. (4-15)]{BibPapoulisBook}, that is
\begin{equation}\label{Eq:AWMNModulationSymbolVectorPMF}
    f_{\defrvec{S}}(\defvec{s})=\sum_{m=1}^{M}p_{m}\DiracDelta{\lVert\defvec{s}-\defvec{s}_{m}\rVert}.
\end{equation} 
Further, in \eqref{Eq:ComplexAWMNVectorChannel}, $\defmat{F}\in\mathbb{C}^{{L}\times{L}}$ is a precoding matrix~filter~that precodes each modulation symbol before transmission in order to compensate the performance degradation originating from the correlation between channels. In addition, in \eqref{Eq:ComplexAWMNVectorChannel}, ${H}$ denotes the fading envelope following a non-negative random distribution whereas  $\Theta$ denotes the fading phase uniformly distributed over $[-\pi,\pi]$. As well explained in \secref{Section:AWMNChannels} above, both ${H}$ and $\Theta$ are assumed constant during~the~period of each~modulation symbol because of the transmission coherence time arising out of fading conditions \cite{BibProakisBook,BibGoldsmithBook,BibAlouiniBook}, but each has a random nature while considering the overall transmission. Therefore, in coherent receiver, both ${H}$ and $\Theta$ is required to be without loss of generality perfectly estimated at the receiver during the period of each modulation symbol vector. However, there is no need to estimate ${H}$ and $\Theta$ in non-coherent receiver. Additionally, the covariance matrix $\defmat{\Sigma}$ of $\defrmat{Z}\!\sim\!\mathcal{CM}^{L}_{\nu}(\defvec{0},\defmat{\Sigma})$ is assumed perfectly estimated during that period. Eventually, thanks to the \ac{ES} property of $\defrmat{Z}\!\sim\!\mathcal{CM}^{L}_{\nu}(\defvec{0},\defmat{\Sigma})$ (i.e., with the aid of $f_{\defrmat{Z}}(\defvec{z})\!=\!f_{\defrmat{Z}}({e}^{\imaginary\Theta}\defvec{z})$ when $\mathbb{E}[\defrvec{Z}]\!=\!\defvec{0}$), the received vector $\defrvec{R}$ depends statistically on $\defrvec{S}$ with the conditional \ac{PDF} $f_{\defrvec{R}|\defrvec{S}}(\defvec{r}|\defvec{s})$, which we derive from \eqref{Eq:ComplexAWMNVectorChannel} with the aid of \theoremref{Theorem:MultivariateCESMcLeishCDF} as
\begin{equation}\label{Eq:AWMNReceivedVectorConditionalPDF}
f_{\defrvec{R}|\defrvec{S}}(\defvec{r}|\defvec{s})=
    \frac{2}{\pi^L\Gamma(\nu)}
	\frac{{\lVert{\defvec{r}-{H}{e}^{\imaginary\Theta}\defmat{F}\defvec{s}}\rVert}_{\defmat{\Sigma}}^{\nu-L}}
		{\det(\defmat{\Sigma})\,\lambda_{0}^{\nu+L}}
		    {K}_{\nu-L}
			    \Bigl(
					\frac{2}{\lambda_{0}}
				{\bigl\lVert{\defvec{r}-{H}{e}^{\imaginary\Theta}\defmat{F}\defvec{s}}\bigr\rVert}_{\defmat{\Sigma}}
	        	\Bigr).\!\!
\end{equation}
Having the joint \ac{PDF} of $\defrvec{R}$ and $\defrvec{S}$, i.e., $f_{\defrvec{R},\defrvec{S}}(\defvec{r},\defvec{s})\!=\!{f}_{\defrvec{R}|\defrvec{S}}(\defvec{r}|\defvec{s})\allowbreak{f}_{\defrvec{S}}(\defvec{s})$ by means of \eqref{Eq:AWMNModulationSymbolVectorPMF} and \eqref{Eq:AWMNReceivedVectorConditionalPDF}, we obtain the \ac{PDF} of the received vector $\defrvec{R}$ as 
\begin{subequations}
\setlength\arraycolsep{1.4pt}
\begin{eqnarray}
f_{\defrvec{R}}(\defvec{r})
    &=&\int\!{f}_{\defrvec{R},\defrvec{S}}(\defvec{r},\defvec{s})\,d\defvec{s},\\
    &=&\sum_{m=1}^{M}f_{\defrvec{R}|\defrvec{S}}(\defvec{r}|\defvec{s}_m)\Pr\{\defrvec{S}=\defvec{s}_{m}\},\\
    &=&\sum_{m=1}^{M}p_{m}\frac{2}{\pi^L\Gamma(\nu)}
	    \frac{{\lVert{\defvec{r}-{H}{e}^{\imaginary\Theta}\defmat{F}\defvec{s}_m}\rVert}_{\defmat{\Sigma}}^{\nu-L}}
		    {\det(\defmat{\Sigma})\,\lambda_{0}^{\nu+L}}
		        {K}_{\nu-L}\Bigl(
					\frac{2}{\lambda_{0}}
				        {\bigl\lVert{\defvec{r}-{H}{e}^{\imaginary\Theta}\defmat{F}\defvec{s}_m}\bigr\rVert}_{\defmat{\Sigma}}
	        	\Bigr).{~~~}
\end{eqnarray}
\end{subequations}
After transmission of each modulation symbol, if the transmitted symbol $m$ and the optimally detected symbol $\widehat{m}$ are not the same, then we say that a transmission error has occurred with the probability given by 
\begin{equation}
    \Pr\{e\,|\,m\}=\Pr\{\widehat{m}\neq{m}\},
\end{equation}
whose averaging with respect to all possible modulation symbols results in the \ac{SER} of the transmission, that is 
\begin{equation}
    \Pr\{e\}=
        \sum_{m=1}^{M}\Pr\{e\,|\,m\}
            \Pr\{\defrvec{S}=\defvec{s}_{m}\},
\end{equation}
which will be derived for coherent\,/\,non-coherent signaling using digital modulation schemes over \ac{CES} \ac{AWMN} channels. 

\subsection{Coherent Signalling}
\label{Section:SignallingOverAWMNChannels:CoherentSignalling}
As referring to the mathematical model given by \eqref{Eq:ComplexAWMNVectorChannel}, we assume that the receiver has a perfect knowledge of the phase, or in some cases, that of both the amplitude and the phase in coherent signaling. As such, during the transmission of each modulation symbol while being conditioned on $H$ and $\Theta$, if the transmitted symbol $m$ and the optimally detected symbol $\widehat{m}$ are not the same, then we say that an instantaneous symbol error has occurred with the probability given by 
\begin{equation}
    \Pr\{e\,|\,H,\Theta\}=\Pr\{\widehat{m}\neq{m}\,|\,H,\Theta\}.
\end{equation}
whose averaging with respect to $H$ and $\Theta$ while considering all symbols results in the averaged \ac{SER} of the transmission. The receiver observes $\defrvec{R}$, and based on this observation, decides which modulation symbol was transmitted, essentially by an optimal detection rule that minimizes the error probability or equivalently maximizes correct decision. The optimal detection rule, which is also occasionally called \ac{MAP} rule  \cite{BibProakisBook,BibAlouiniBook,BibGoldsmithBook}, produces the index of the most probable transmitted symbol that maximizes $f_{\defrvec{R},\defrvec{S}}(\defvec{r},\defvec{s})$. In more details, in order to acquire the index of the most probable transmitted symbol, we write the \ac{MAP} decision rule accordingly as follows
\begin{subequations}\label{Eq:AWMNMAPRule}
\setlength\arraycolsep{1.4pt}
\begin{eqnarray}
    \label{Eq:AWMNMAPRuleA}
    \widehat{m}&=&
        \argmax_{{1}\leq{m}\leq{M}}
            f_{\defrvec{R},\defrvec{S}}(\defrvec{R},\defvec{s}_{m}),\\
    \label{Eq:AWMNMAPRuleB}
        &=&
        \argmax_{{1}\leq{m}\leq{M}}
        f_{\defrvec{S}|\defrvec{R}}(\defvec{s}_{m}|\defrvec{R})
            f_{\defrvec{r}}(\defvec{r}),\\
    \label{Eq:AWMNMAPRuleC}
        &=&
        \argmax_{{1}\leq{m}\leq{M}}
            f_{\defrvec{S}|\defrvec{R}}(\defvec{s}_{m}|\defrvec{R}),
\end{eqnarray}
\end{subequations}
which decides in favor of the modulation symbol that maximizes the conditional \ac{PDF} $f_{\defrvec{S}|\defrvec{R}}(\defvec{s}_{m}|\defvec{r})$. Further, we simplify the \ac{MAP} rule more to 
\begin{equation}\label{Eq:AWMNMAPRuleII}
    \widehat{m}=\argmax_{{1}\leq{m}\leq{M}}
        f_{\defrvec{R}|\defrvec{S}}
            (\defrvec{R}|\defvec{s}_{m})\Pr\{\defrvec{S}=\defvec{s}_{m}\},
\end{equation}
where we often call $f_{\defrvec{R}|\defrvec{S}}(\defrvec{R}|\defvec{s}_{m})$ the likelihood of the symbol $\defvec{s}_{m}$ given the received vector $\defrvec{R}$. Hence also, we often remark that the \ac{MAP} rule, given above, clearly illustrates how each decision given the received vector $\defrvec{R}$ maps into one of the $M$ possible transmitted modulation symbols. Corresponding to the $M$ possible decisions, we partition the sample space of $\defrvec{R}$ into $M$ regions, and therefrom define the decision region for the symbol $\widehat{m}$ as
\begin{equation}\label{Eq:AWMNMAPRuleDecisionRegion}
\mathbb{D}^{\text{MAP}}_{\widehat{m}}=\Bigl\{
        \Bigl.
        \defvec{r}\in\mathbb{C}^{L}
        \,\Bigr|\,
        f_{\defrvec{R}|\defrvec{S}}(\defvec{r}|\defvec{s}_{\widehat{m}})\Pr\{\defrvec{S}=\defvec{s}_{\widehat{m}}\}
        \geq
        f_{\defrvec{R}|\defrvec{S}}(\defvec{r}|\defvec{s}_{m})\Pr\{\defrvec{S}=\defvec{s}_{m}\},
        \forall{m}\neq{\widehat{m}}
    \Bigr\},
\end{equation}
which imposes that the decision regions are non-overlapping (i.e., $\mathbb{D}_{m}\cap\mathbb{D}_{n}\!=\!\emptyset$ for all ${m}\!\neq\!{n}$). In addition, \eqref{Eq:AWMNMAPRuleDecisionRegion} stipulates that each decision region can be described in terms of at most $M-1$ inequalities. In general, these $M$ decision regions need not be connected with each other. When the receiver observes that the received vector $\defrvec{R}$ has fallen into the region $\mathbb{D}_{m}$ (i.e., when $\defrvec{R}\!\in\!\mathbb{D}_{m}$), it decides that the transmitted symbol is the modulation symbol $m$.
Eventually, substituting \eqref{Eq:AWMNReceivedVectorConditionalPDF} into \eqref{Eq:AWMNMAPRuleII} yields the \ac{MAP} decision rule as follows 
\begin{equation}\label{Eq:AWMNMAPRuleIII}
\widehat{m}=\argmax_{{1}\leq{m}\leq{M}}
    \frac{2{p}_{m}}{\pi^L\Gamma(\nu)}
	\frac{{\lVert{\defrvec{R}-{H}{e}^{\imaginary\Theta}\defmat{F}\defvec{s}_{m}}\rVert}_{\defmat{\Sigma}}^{\nu-L}}{\det(\defmat{\Sigma})\,\lambda_{0}^{\nu+L}}
		    {K}_{\nu-L}
			    \Bigl(
					\frac{2}{\lambda_{0}}
				{\bigl\lVert{\defrvec{R}-{H}{e}^{\imaginary\Theta}\defmat{F}\defvec{s}_{m}}\bigr\rVert}_{\defmat{\Sigma}}
	        	\Bigr),
\end{equation}
which can be even simplified more using the \ac{CES} property around mean, as shown in the following. 

\begin{theorem}\label{Theorem:MAPDecisionRuleForComplexAWMNVectorChannel}
For the complex vector channel introduced in \eqref{Eq:ComplexAWMNVectorChannel}, the coherent \ac{MAP} detection  rule is given by 
\begin{equation}\label{Eq:MAPDecisionRuleForComplexAWMNVectorChannel}
\!\!\widehat{m}=\argmax_{{1}\leq{m}\leq{M}}
        \,\Bigl(
        2\log(p_m)-
            {\bigl\lVert{\defvec{R}-{H}{e}^{\imaginary\Theta}\defmat{F}\defvec{s}_m}\bigr\rVert}^{2}_{\defmat{\Sigma}}
        \Bigr),\!\!
\end{equation}
under the condition that ${H}\!\in\!\mathbb{R}_{+}$ and $\Theta\!\in\![-\pi,\pi)$ are assumed perfectly estimated during each modulation symbol.
\end{theorem}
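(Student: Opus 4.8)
The plan is to start from the fully expanded \ac{MAP} statistic in \eqref{Eq:AWMNMAPRuleIII} and strip away everything that does not influence the $\argmax$ over $m$. Writing $\defvec{d}_m\!=\!\defrvec{R}-{H}{e}^{\imaginary\Theta}\defmat{F}\defvec{s}_m$ and $r_m\!=\!{\lVert\defvec{d}_m\rVert}_{\defmat{\Sigma}}$ for the (conditional) Mahalanobis distance of the received vector from the $m$th noiseless constellation point, the objective in \eqref{Eq:AWMNMAPRuleIII} factors into three pieces: the positive, $m$-independent constant $2/(\pi^L\Gamma(\nu)\det(\defmat{\Sigma})\lambda_0^{\nu+L})$; the prior $p_m$; and the \ac{CES} McLeish kernel $g(r_m)=r_m^{\nu-L}\BesselK[\nu-L]{\tfrac{2}{\lambda_0}r_m}$ taken from \theoremref{Theorem:MultivariateCESMcLeishPDF}. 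Since $\argmax$ is invariant both under multiplication by a positive $m$-independent constant and under composition with any strictly increasing function, I would first discard the constant and then pass to the logarithm, reducing the rule to $\widehat m=\argmax_m\bigl(\log p_m+\log g(r_m)\bigr)$.

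The next step is a monotonicity lemma for the kernel. Setting $u=\tfrac{2}{\lambda_0}r$ and $\alpha=\nu-L$, so that $g=(\tfrac{\lambda_0}{2})^{\alpha}u^{\alpha}\BesselK[\alpha]{u}$, the standard derivative identity $\tfrac{d}{du}\bigl(u^{\alpha}\BesselK[\alpha]{u}\bigr)=-u^{\alpha}\BesselK[\alpha-1]{u}$ together with the positivity $\BesselK[\alpha-1]{u}>0$ for $u>0$ shows that $g$ is strictly decreasing on $(0,\infty)$. Because $r_m\mapsto r_m^2={\lVert\defvec{d}_m\rVert}^2_{\defmat{\Sigma}}$ is strictly increasing on $[0,\infty)$, the kernel is a strictly decreasing function of the squared Mahalanobis distance ${\lVert\defvec{d}_m\rVert}^2_{\defmat{\Sigma}}$. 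In the equiprobable (\ac{ML}) case this immediately collapses the rule to $\widehat m=\argmin_m{\lVert\defvec{d}_m\rVert}^2_{\defmat{\Sigma}}$, i.e.\ minimum-Mahalanobis-distance detection, which is exactly \eqref{Eq:MAPDecisionRuleForComplexAWMNVectorChannel} once the constant prior is reinserted.

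The \emph{main obstacle} is the general, unequal-prior case, where the prior term $\log p_m$ and the kernel term $\log g(r_m)$ must be fused into the single affine statistic $2\log p_m-{\lVert\defvec{d}_m\rVert}^2_{\defmat{\Sigma}}$. This is the delicate step, because $\log g(r)$ is not a constant multiple of $-r^2$ for finite $\nu$; the clean quadratic metric, and in particular the weight $2$ on $\log p_m$, is anchored by the Gaussian regime, where $\nu\!\to\!\infty$ reduces the \ac{CES} McLeish likelihood to the complex Gaussian one (consistent with the limits recorded for the \ac{CES} McLeish family) whose \ac{MAP} rule is precisely of this form. I would therefore build the argument on the conditional-Gaussian mixture representation $\defrmat{Z}=\sqrt{G}\defrmat{N}$ of \theoremref{Theorem:MultivariateCESMcLeishDecomposition}: conditioned on $G$ the detection metric is exactly quadratic in ${\lVert\defvec{d}_m\rVert}_{\defmat{\Sigma}}$, and the remaining work is to show that marginalising over $G$ preserves the ordering induced by $2\log p_m-{\lVert\defvec{d}_m\rVert}^2_{\defmat{\Sigma}}$, so that the detector depends on the data only through the prior and the quadratic form ${\lVert\defvec{d}_m\rVert}^2_{\defmat{\Sigma}}$. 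Establishing this monotone equivalence, rather than the routine algebra of dropping constants and taking logarithms, is where the real difficulty lies.
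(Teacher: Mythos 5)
Your reduction of \eqref{Eq:AWMNMAPRuleIII} to $\widehat m=\argmax_m\bigl(\log p_m+\log g(r_m)\bigr)$ and the monotonicity lemma for $g(r)=r^{\nu-L}\BesselK[\nu-L]{2r/\lambda_0}$ are both correct, and they do settle the equiprobable case: with $p_m\equiv 1/M$ the rule is minimum Mahalanobis distance, which is \theoremref{Theorem:MLDecisionRuleForComplexAWMNVectorChannel} and coincides with \eqref{Eq:MAPDecisionRuleForComplexAWMNVectorChannel} when the prior term is constant. But the proposal stops exactly where the theorem still needs proving, and the step you defer --- that marginalising over $G$ preserves the ordering induced by $2\log p_m-\lVert\defvec{d}_m\rVert^2_{\defmat{\Sigma}}$ --- is not merely delicate, it is false as an exact statement. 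Writing $Q_m=\lVert\defvec{d}_m\rVert^2_{\defmat{\Sigma}}$, the exact marginal likelihood is $\psi(Q_m)$ with $\psi(Q)\propto Q^{(\nu-L)/2}\BesselK[\nu-L]{\sqrt{2\nu Q}}$, so that $\log\psi(Q)=-\sqrt{2\nu Q}+O(\log Q)$ as $Q\to\infty$: the true posterior penalises distance like $\sqrt{Q}$, not like $Q/2$. Consequently, whenever $\sqrt{2\nu}\bigl(\sqrt{Q_1}-\sqrt{Q_2}\bigr)<\log(p_1/p_2)<\tfrac12(Q_1-Q_2)$ --- an interval that is nonempty for large $Q_1$ because its left endpoint grows like $\sqrt{Q_1}$ and its right endpoint like $Q_1$ --- the exact detector $\argmax_m p_m\psi(Q_m)$ selects symbol $1$ while the statistic $2\log p_m-Q_m$ selects symbol $2$. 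No ordering-preservation argument can close this gap, so the route you outline cannot be completed for unequal priors.

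The paper's proof takes a different, and blunter, step at precisely this point. It uses your mixture representation, writing the MAP statistic as $p_m\int_0^\infty f_{\defrvec{R}|\defrvec{S},G}(\defrvec{R}|\defvec{s}_m,g)\,f_G(g)\,dg$ with $f_{\defrvec{R}|\defrvec{S},G}$ Gaussian, but then replaces the $G$-average by the conditional density evaluated at the single point $g=\mathbb{E}[G]=1$ (its step from \eqref{Eq:MAPDecisionDerivationA} to \eqref{Eq:MAPDecisionDerivationB}, justified there only by the remark that $f_G(g)\geq 0$), i.e.\ it maximises $p_m\exp\bigl(-\tfrac12 Q_m\bigr)$; taking logarithms and doubling gives exactly $2\log p_m-Q_m$. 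That evaluation-at-the-mean substitution is the sole source of the quadratic metric and of the factor $2$ in the theorem, so your instinct that the clean quadratic statistic is anchored by the Gaussian regime is right. To arrive at \eqref{Eq:MAPDecisionRuleForComplexAWMNVectorChannel} you must either make that substitution explicitly, accepting it as the defining step of the detector (exact only for equal priors or in the limit $\nu\to\infty$), or restrict your exact claim to the ML case you have already proved; the monotone-equivalence lemma you were hoping to establish does not exist.
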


\begin{proof}
Note that the received vector $\defrmat{R}$ given the transmitted symbol $\defrmat{S}\!=\!\defvec{s}_{m}$ follows a multivariate \ac{CES} McLeish distribution, i.e., 
$\defrmat{R}\!\sim\!\mathcal{CM}_{\nu}^L\!\bigl({H}{e}^{\imaginary\Theta}\defvec{s}_{m},\defmat{\Sigma}\bigr)$. According to both \eqref{Eq:MultivariateComplexMcLeishComponentDecomposition} and \eqref{Eq:MultivariateCESMcLeishDecomposition}, the received vector $\defrmat{R}$ given the transmitted symbol $\defrmat{S}$ can be decomposed as 
\begin{equation}
(\defrvec{R}|\defrvec{S})={H}{e}^{\imaginary\Theta}\defmat{F}\defrvec{S}+\sqrt{G}\,\defmat{D}\,(\defrvec{N}_1+\imaginary\defrvec{N}_2),
\end{equation}
where $\defmat{D}$ is the Cholesky decomposition of $\defmat{\Sigma}$ such that $\defmat{\Sigma}\!=\!\defmat{D}\defmat{D}^{H}$, and where $\defrvec{N}_1\!\sim\!\mathcal{N}^L(0,\defmat{I})$, $\defrvec{N}_2\!\sim\!\mathcal{N}^L(0,\defmat{I})$
and $G\!\sim\!\mathcal{G}(\nu,1)$.
Accordingly, the \ac{PDF} of $\defrmat{R}$ conditioned on both $\defrvec{S}$ and $G$, i.e., $f_{\defrmat{R}|\defrmat{S},G}(\defvec{z}|\defvec{s},g)$ can be written as
\begin{equation}\label{Eq:MAPDecisionConditionalPDF}
\!\!f_{\defrmat{R}|\defrmat{S},G}(\defvec{r}|\defvec{s},g)=
	\frac{\exp\bigl(-\frac{1}{2g}{\lVert\defvec{r}-{H}{e}^{\imaginary\Theta}\defmat{F}\defvec{s}\rVert}^{2}_\defmat{\Sigma}\bigr)}
	{(2\pi)^{L}g^{L}\det(\defmat{\Sigma})},
\end{equation} 
for $g\!\in\!\mathbb{R}_{+}$. Then, the conditional \ac{PDF} $f_{\defrvec{R}|\defrvec{S}}(\defrvec{R}|\defvec{s})$ is obtained by $f_{\defrvec{R}|\defrvec{S}}(\defrvec{R}|\defvec{s})\!=\!\int_{0}^{\infty}f_{\defrvec{R}|\defrvec{S},G}(\defrvec{R}|\defvec{s},g)\,f_{G}(g)\,dg$, where $f_{G}(g)$ is the \ac{PDF} of $G\!\sim\!\mathcal{G}(\nu,1)$, and given in \eqref{Eq:ProportionPDF}. Upon substituting $f_{\defrvec{R}|\defrvec{S}}(\defrvec{R}|\defvec{s}_{m})$ into \eqref{Eq:AWMNMAPRuleII}, we rewrite the \ac{MAP} rule as
\begin{subequations}\label{Eq:MAPDecisionDerivation}
\setlength\arraycolsep{1.4pt}
\begin{eqnarray}
    \label{Eq:MAPDecisionDerivationA}
    \!\!\!\!\widehat{m}
        &\overset{(a)}{=}&\argmax_{{1}\leq{m}\leq{M}}
            {p}_{m}\int_{0}^{\infty}               
                f_{\defrvec{R}|\defrvec{S},G}
                    (\defrvec{R}|\defvec{s}_{m},g)f_{G}(g)\,dg,\quad\quad\\
    \label{Eq:MAPDecisionDerivationB}                
        &\overset{(b)}{=}&\argmax_{{1}\leq{m}\leq{M}}
            {p}_{m}f_{\defrvec{R}|\defrvec{S},G}
                (\defrvec{R}|\defvec{s}_{m},\mathbb{E}[G]),   
\end{eqnarray}
\end{subequations}
where we have used the following steps in simplifying the expression. In step $(a)$, we observe that \eqref{Eq:MAPDecisionConditionalPDF} is being averaged by the \ac{PDF} $f_{G}(g)$, and notice that $f_{G}(g)\!\geq\!{0}$ for all $g\in\mathbb{R}_{+}$, which simplifies \eqref{Eq:MAPDecisionDerivationA} to \eqref{Eq:MAPDecisionDerivationB} with $\mathbb{E}[G]\!=\!{1}$. Then, in step $(b)$, we substitute \eqref{Eq:MAPDecisionConditionalPDF} into \eqref{Eq:MAPDecisionDerivationB} and drop all the positive constant terms. Accordingly, we obtain
\begin{equation}\label{Eq:MAPDecisionDerivationII}     
    \widehat{m}=\argmax_{{1}\leq{m}\leq{M}}
        {p}_{m}\exp\Bigl(
                -\frac{1}{2}{\bigl\lVert
                    \defrvec{R}-{H}{e}^{\imaginary\Theta}\defmat{F}\defvec{s}_{m}
                        \bigr\rVert}^{2}_\defmat{\Sigma}\Bigr).
\end{equation}
We acknowledge that, since the $\log(\cdot)$ function is a monotonically increasing function, we simplify this maximization by applying the $\log(\cdot)$ function to \eqref{Eq:MAPDecisionDerivationII}. Eventually, multiplying the resultant by $2$, we obtain \eqref{Eq:MAPDecisionRuleForComplexAWMNVectorChannel}, which proves \theoremref{Theorem:MAPDecisionRuleForComplexAWMNVectorChannel}.
\end{proof}

It is worth mentioning that, in some signaling conditions, some parameters within \eqref{Eq:MAPDecisionRuleForComplexAWMNVectorChannel} may be discarded without loss of performance. Appropriately, the \ac{MAP} rule can be even reduced more to a simple form. Namely, in case of that the modulation symbol vectors are equiprobable (i.e., when $\Pr\{\defrvec{S}\!=\!\defvec{s}_{m}\}\!=\!\Pr\{\defrvec{S}\!=\!\defvec{s}_{n}\}$, ${1}\!\leq\!{m,n}\!\leq\!{M}$), we ignore the term $\Pr\{\defrvec{S}\!=\!\defvec{s}_{m}\}$ in \eqref{Eq:AWMNMAPRuleII}, and thereby further simplify the \ac{MAP} decision rule to \begin{equation}\label{Eq:AWMNMaximumLikelihoodRule}
    \widehat{m}=\argmax_{{1}\leq{m}\leq{M}}
        f_{\defrvec{R}|\defrvec{S}}(\defrvec{R}|\defvec{s}_{m}),
\end{equation}
which we call the \ac{ML} decision rule. Appropriately, we simply define the decision region for the symbol $\widehat{m}$ as follows 
\begin{equation}\label{Eq:AWMNMLRuleDecisionRegion}
\!\!\!\!\!\!\mathbb{D}^{\text{ML}}_{\widehat{m}}=\Bigl\{
        \Bigl.
        \defvec{r}\in\mathbb{C}^{L}
        \,\!\Bigr|\,\!
        f_{\defrvec{R}|\defrvec{S}}(\defvec{r}|\defvec{s}_{\widehat{m}})
        \geq
        f_{\defrvec{R}|\defrvec{S}}(\defvec{r}|\defvec{s}_{m}),
        {\widehat{m}}\neq{m}
    \Bigr\}.\!\!\!\!
\end{equation}
Further, in \eqref{Eq:AWMNMaximumLikelihoodRule}, We calculate the likelihood of the modulation symbol $m$, i.e.,  $f_{\defrvec{R}|\defrvec{S}}(\defrvec{R}|\defvec{s}_{m})$ by using the conditional \ac{PDF} given in \eqref{Eq:AWMNReceivedVectorConditionalPDF}, and we simplify it more in the following. 

\begin{theorem}\label{Theorem:MLDecisionRuleForComplexAWMNVectorChannel}
For the complex vector channel introduced in \eqref{Eq:ComplexAWMNVectorChannel}, the coherent \ac{ML} detection rule is given by 
\begin{equation}\label{Eq:MLDecisionRuleForComplexAWMNVectorChannel}
    \widehat{m}=\argmin_{{1}\leq{m}\leq{M}}
        \,{\bigl\lVert{\defvec{R}-{H}{e}^{\imaginary\Theta}\defmat{F}\defvec{s}_m}\bigr\rVert}^{2}_{\defmat{\Sigma}},
\end{equation}
under the condition that ${H}\!\in\!\mathbb{R}_{+}$ and $\Theta\!\in\![-\pi,\pi)$ are assumed perfectly estimated during each modulation symbol.
\end{theorem}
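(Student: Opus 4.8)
The plan is to derive the coherent \ac{ML} rule as the equiprobable specialization of the coherent \ac{MAP} rule already established in \theoremref{Theorem:MAPDecisionRuleForComplexAWMNVectorChannel}. Starting from the \ac{ML} decision rule \eqref{Eq:AWMNMaximumLikelihoodRule}, i.e. $\widehat{m}=\argmax_{{1}\leq{m}\leq{M}} f_{\defrvec{R}|\defrvec{S}}(\defrvec{R}|\defvec{s}_{m})$, I would observe that this is precisely the \ac{MAP} rule \eqref{Eq:AWMNMAPRuleII} with the \emph{a priori} weights $\Pr\{\defrvec{S}=\defvec{s}_{m}\}$ dropped, which is legitimate exactly because the symbols are assumed equiprobable, so $\Pr\{\defrvec{S}=\defvec{s}_{m}\}$ is a positive constant independent of $m$ and does not affect the maximizer. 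Since $H\!\in\!\mathbb{R}_{+}$ and $\Theta\!\in\![-\pi,\pi)$ are perfectly estimated, the quantity ${H}{e}^{\imaginary\Theta}\defmat{F}\defvec{s}_{m}$ is a deterministic shift, and the likelihood $f_{\defrvec{R}|\defrvec{S}}(\defrvec{R}|\defvec{s}_{m})$ is the additive-noise \ac{PDF} of \theoremref{Theorem:MultivariateCESMcLeishPDF} evaluated at the residual $\defrvec{R}-{H}{e}^{\imaginary\Theta}\defmat{F}\defvec{s}_{m}$.

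Concretely, I would mirror the two conditioning steps used in the proof of \theoremref{Theorem:MAPDecisionRuleForComplexAWMNVectorChannel}. Conditioning on the gamma mixing variable $G\!\sim\!\mathcal{G}(\nu,1)$, the residual is complex Gaussian, so $f_{\defrvec{R}|\defrvec{S},G}(\defrvec{R}|\defvec{s}_{m},g)$ is proportional to $\exp\bigl(-\tfrac{1}{2g}\,{\lVert\defrvec{R}-{H}{e}^{\imaginary\Theta}\defmat{F}\defvec{s}_{m}\rVert}^{2}_{\defmat{\Sigma}}\bigr)$ with a prefactor independent of $m$, and the unconditional likelihood is the $G$-average $\int_{0}^{\infty} f_{\defrvec{R}|\defrvec{S},G}(\defrvec{R}|\defvec{s}_{m},g)\,f_{G}(g)\,dg$. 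The key point is that, for every fixed $g\!\in\!\mathbb{R}_{+}$, the ordering of the $M$ conditional likelihoods in $m$ is governed solely by the Mahalanobis squared distance ${\lVert\defrvec{R}-{H}{e}^{\imaginary\Theta}\defmat{F}\defvec{s}_{m}\rVert}^{2}_{\defmat{\Sigma}}$, which does not depend on $g$; hence the symbol minimizing this distance maximizes the conditional likelihood uniformly in $g$. Because $f_{G}(g)\!\geq\!0$, that same symbol also maximizes the nonnegatively weighted average, so the argmax commutes with the mixing integral and I may equivalently work with the integrand evaluated at $\Expected{G}\!=\!1$, exactly as in step $(b)$ of the \ac{MAP} proof.

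It then remains to convert the maximization into the claimed minimization. Applying the monotonically increasing $\log(\cdot)$ and discarding the $m$-independent additive constants leaves $\widehat{m}=\argmax_{{1}\leq{m}\leq{M}}\bigl(-{\lVert\defrvec{R}-{H}{e}^{\imaginary\Theta}\defmat{F}\defvec{s}_{m}\rVert}^{2}_{\defmat{\Sigma}}\bigr)$, and flipping the sign turns the $\argmax$ into the $\argmin$ of ${\lVert\defrvec{R}-{H}{e}^{\imaginary\Theta}\defmat{F}\defvec{s}_{m}\rVert}^{2}_{\defmat{\Sigma}}$, which is \eqref{Eq:MLDecisionRuleForComplexAWMNVectorChannel}. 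I do not expect a genuinely hard obstacle, since the statement is a corollary of the \ac{MAP} theorem; the one point requiring care is the justification that the positive $G$-averaging preserves the per-symbol ordering, so that the normality $\nu$ and the entire mixing distribution drop out of the decision statistic. An alternative, mixing-free route would argue directly from \theoremref{Theorem:MultivariateCESMcLeishPDF} by showing that $x\mapsto x^{\nu-L}\BesselK[\nu-L]{2x/\lambda_{0}}$ is strictly decreasing on $x\!>\!0$, whence maximizing the likelihood is equivalent to minimizing the Mahalanobis distance; this monotonicity is the sole analytic fact that would need separate verification if one bypasses the conditioning argument.
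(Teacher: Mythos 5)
Your proposal is correct and follows essentially the same route as the paper: the paper's proof simply sets $p_{m}\!=\!1/M$ in \theoremref{Theorem:MAPDecisionRuleForComplexAWMNVectorChannel}, discards the now $m$-independent term $2\log(p_m)$, and flips the maximization of $-{\lVert\defvec{R}-{H}{e}^{\imaginary\Theta}\defmat{F}\defvec{s}_m\rVert}^{2}_{\defmat{\Sigma}}$ into the stated minimization. Your extra observation that the argmax commutes with the $G$-averaging because the Mahalanobis ordering is uniform in $g$ is a sound (indeed more careful) justification of the conditioning step that the paper's MAP proof glosses over, but it does not change the approach.
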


\begin{proof}
The \ac{ML} decision rule states that each modulation symbol has the same probability of transmission. In accordance, in \eqref{Eq:MAPDecisionRuleForComplexAWMNVectorChannel}, we make $p_{m}\!=\!1/M$ for all ${1}\!\leq\!{m}\!\leq\!{M}$ and therein ignore the term $2\log(p_m)$ same for all modulation symbols. Finally, changing the maximization to the minimization, we readily deduce \eqref{Eq:MLDecisionRuleForComplexAWMNVectorChannel}, which completes the proof of \theoremref{Theorem:MAPDecisionRuleForComplexAWMNVectorChannel}. 
\end{proof}

As an interpretation of \eqref{Eq:MLDecisionRuleForComplexAWMNVectorChannel}, we explicate that the receiver observes the received vector $\defrvec{R}$. Then, using a decision rule, it searches a symbol among all modulation symbols $\{\defvec{s}_m\}_{m=1}^{M}$, that is closest to the received vector $\defrvec{R}$ by using Mahalanobis distance. When the modulation symbols are equiprobable, the optimal detector uses the \ac{ML} decision rule, and therefore we occasionally call it the minimum-distance (or nearest-neighbor) detector. In this case, we corroborate the finding that the boundaries between the decision region of $\defvec{s}_m$ and that of $\defvec{s}_n$ are the set of hyper-plane points that are equidistant from these two modulation symbols.

In case of that the modulation symbols are equiprobable and have equal power (i.e., when $\Pr\{\defrvec{S}\!=\!\defvec{s}_{m}\}\!=\!\Pr\{\defrvec{S}\!=\!\defvec{s}_{n}\}$ and  ${\lVert\defvec{s}_{m}\rVert}^{2}\!=\!{\lVert\defvec{s}_{n}\rVert}^{2}$ for all ${1}\!\leq\!{m,n}\!\leq\!{M}$), we revise the optimal detection rule either from the \ac{MAP} rule or the \ac{ML} rule and accordingly we put it in much simpler form, that is   
\begin{equation}
\label{Eq:OptimalDecisionRuleForPrecodedComplexAWMNVectorChannel}
    \widehat{m}=\argmax_{{1}\leq{m}\leq{M}}\,\Re\bigl\{{e}^{-\imaginary\Theta}\defvec{s}^{H}_{m}\defrvec{R}\bigr\},
\end{equation}
whose decision region $\mathbb{D}_{\widehat{m}}$ is given by 
\begin{equation}
\mathbb{D}^{\text{ML}}_{\widehat{m}}=\Bigl\{\Bigl.\defvec{r}\in\mathbb{C}^{L}\,\Bigr|\,
        \Re\bigl\{{e}^{-\imaginary\Theta}\defvec{s}^{H}_{\widehat{m}}\defvec{r}\bigr\}\geq
        \Re\bigl\{{e}^{-\imaginary\Theta}\defvec{s}^{H}_{m}\defvec{r}\bigr\},
        \forall{m}\neq\widehat{m}
    \Bigr\},
\end{equation}
where $\Re\bigl\{{e}^{-\imaginary\Theta}\defvec{s}^{H}_{m}\defvec{r}\bigr\}$, ${1}\!\leq\!{m}\!\leq\!{M}$ can be readily rewritten as 
\begin{equation}
    \Re\bigl\{{e}^{-\imaginary\Theta}\defvec{s}^{H}_{m}\defvec{r}\bigr\}=
    \frac{1}{2}\bigl(
        {e}^{-\imaginary\Theta}\defvec{s}^{H}_{m}\defvec{r}+
        {e}^{\imaginary\Theta}\defvec{r}^{H}\defvec{s}_{m}
    \bigr).
\end{equation}
It is worth mentioning that when we compare both \ac{ML} and \ac{MAP} decision rules given above, we differ only the inclusion of a priori probabilities $\Pr\{\defrvec{S}\!=\!\defvec{s}_{m}\}$, ${1}\!\leq\!{m}\!\leq\!{M}$ in the \ac{MAP} rule, otherwise we observe that they are conceptually identical. This means that we perceive the \ac{MAP} rule when we weight the \ac{ML} rule with a priori probabilities. In addition, in both \theoremref{Theorem:MAPDecisionRuleForComplexAWMNVectorChannel} and \theoremref{Theorem:MLDecisionRuleForComplexAWMNVectorChannel}, the term  ${\lVert{\defrvec{R}-{H}{e}^{\imaginary\Theta}\defmat{F}\defvec{s}_{m}}\rVert}^{2}_{\defmat{\Sigma}}$ is the square of the Mahalanobis distance between the received vector $\defrvec{R}$ and its mean  ${H}{e}^{\imaginary\Theta}\defmat{F}\defvec{s}_{m}$. We decompose it as 
\begin{subequations}\label{Eq:MAPRuleMahalanobisDistance} 
\setlength\arraycolsep{1.4pt}
\begin{eqnarray}
    \label{Eq:MAPRuleMahalanobisDistanceA} 
    {\bigl\lVert{
        \defrvec{R}-{H}{e}^{\imaginary\Theta}\defmat{F}\defvec{s}_{m}
    }\bigr\rVert}^{2}_{\defmat{\Sigma}}
    &=&
    {\bigl\lVert{
        {e}^{\imaginary\Theta}({e}^{-\imaginary\Theta}\defrvec{R}-{H}\defmat{F}\defvec{s}_{m})
    }\bigr\rVert}^{2}_{\defmat{\Sigma}},\quad\quad\quad\\
    \label{Eq:MAPRuleMahalanobisDistanceB} 
    &\overset{(a)}{=}&
    {\bigl\lVert{
        {e}^{-\imaginary\Theta}\defrvec{R}-{H}\defmat{F}\defvec{s}_{m}
    }\bigr\rVert}^{2}_{\defmat{\Sigma}},\\
    \label{Eq:MAPRuleMahalanobisDistanceC} 
    &\overset{(b)}{\equiv}&
    {\bigl\lVert{
        \defrvec{R}-{H}\defmat{F}\defvec{s}_{m}
    }\bigr\rVert}^{2}_{\defmat{\Sigma}},
\end{eqnarray}
\end{subequations}
Thanks to the \ac{ES} property of $\defrmat{Z}\!\sim\!\mathcal{CM}^{L}_{\nu}(\defvec{0},\defmat{\Sigma})$, i.e., with~the aid of the fact that $f_{\defrmat{Z}}(\defvec{z})\!=\!f_{\defrmat{Z}}({e}^{\imaginary\Theta}\defvec{z})$, we progress \eqref{Eq:MAPRuleMahalanobisDistance} from step $(a)$ to step $(b)$. Being aware of that $\defrmat{Z}$ and ${e}^{\imaginary\Theta}\defrmat{Z}$ follow the same distribution, we have 
\begin{subequations}\label{Eq:ComplexAWMNChannelsESProperty} 
\setlength\arraycolsep{1.4pt}
\begin{eqnarray}
    \label{Eq:ComplexAWMNChannelsESPropertyA} 
        {e}^{-\imaginary\Theta}\defrvec{R}&=&{e}^{-\imaginary\Theta}({H}{e}^{\imaginary\Theta}\defmat{F}\defvec{S}+\defrmat{Z}),\\
    \label{Eq:ComplexAWMNChannelsESPropertyB}
        &=&{H}\defmat{F}\defvec{S}+{e}^{\imaginary\Theta}\defrmat{Z},\\
    \label{Eq:ComplexAWMNChannelsESPropertyC} 
        &\equiv&{H}\defmat{F}\defvec{S}+\defrmat{Z},
\end{eqnarray}
\end{subequations}
from which we notice that, without any performance degradation, the receiver completely compensate the fading phase $\Theta$ by co-phasing the received vector $\defrvec{R}$ with $\exp(-\imaginary\Theta)$ before the optimal detection (i.e., \ac{MAP}\,/\,\ac{ML} decision rules). The other crucial point we notice is the decorrelation of the channels to further simplify the receiver. For this purpose, we decompose 
\begin{equation}\label{Eq:MahalanobisDistanceExpansion}
{\bigl\lVert{\defrvec{R}-{H}{e}^{\imaginary\Theta}\defmat{F}\defvec{s}_{m}}\bigr\rVert}^{2}_{\defmat{\Sigma}}
    ={H}^{2}\defvec{s}_{m}^H\defmat{F}^{H}\defmat{\Sigma}^{-1}\defmat{F}\defvec{s}_{m}
        -2H\Re\bigl\{{e}^{-\imaginary\Theta}\defvec{s}_{m}^H\defmat{F}^{H}\defmat{\Sigma}^{-1}\defmat{F}\defrvec{R}\bigr\}
        +\defrvec{R}^H\defmat{\Sigma}^{-1}\defrvec{R},\!\!
\end{equation}
where, in order to avoid the performance degradation resulting from non-zero cross correlation between channels, we need to carefully choose the precoding matrix filter $\defmat{F}$ in such a way that eliminates the term $\defmat{F}^{H}\defmat{\Sigma}^{-1}\defmat{F}$ while maximizing the power of the received signal. The covariance matrix and total power of $\defrvec{Z}\!\sim\!\mathcal{M}^L_{\nu}(\defvec{0},\defmat{\Sigma})$ are given by
\begin{eqnarray}
    \mathbb{E}\bigl[
        \defrvec{Z}\defrvec{Z}^H
    \bigr]&=&2\defmat{\Sigma},\\
    \mathbb{E}\bigl[
        \defrvec{Z}^H\defrvec{Z}
    \bigr]&=&2\trace(\defmat{\Sigma}),   
\end{eqnarray}
respectively, where we remark that $\defmat{\Sigma}$ is a square and conjugate symmetric matrix and hence lets us use Cholesky's decomposition\cite[Chap.~\!10]{BibHighamBook2002}, \cite[Sec.~\!2.2]{BibHammerlinBook2012} to map $\defmat{\Sigma}$ into the product of $\defmat{\Sigma}\!=\!\defmat{D}\defmat{D}^{H}\!$, where $\defmat{D}$ is the lower triangular matrix and $\defmat{D}^{H}$ is the transposed, complex conjugate, and therefore of upper triangular form. We find that
\begin{equation}\label{Eq:ComplexAWMNChannelPrecodingMatrixFilter}
    \defmat{F}=\sqrt{\frac{2L}{\trace(\defmat{\Sigma})}}\defmat{D}=
        \sqrt{\frac{2}{N_{0}}}\defmat{D},
\end{equation}
where $N_{0}$ is the averaged total variance per noise component in the complex vector channel. Accordingly, we express $\defmat{\Sigma}$ as
\begin{equation}\label{Eq:ComplexAWMNChannelCovarianceMatrixUsingPrecodingMatrix}
    \defmat{\Sigma}=\frac{N_{0}}{2}\defmat{F}\defmat{F}^{H}.
\end{equation}
Substituting $\eqref{Eq:ComplexAWMNChannelCovarianceMatrixUsingPrecodingMatrix}$ into \eqref{Eq:MahalanobisDistanceExpansion}, we obtain 
\begin{equation}\label{Eq:MahalanobisDistancePrecodedExpansion}
{\bigl\lVert{
        \defrvec{R}-{H}{e}^{\imaginary\Theta}\defmat{F}\defvec{s}_{m}
    }\bigr\rVert}^{2}_{\defmat{\Sigma}}=
    2\frac{{H}^{2}}{N_{0}}{\bigl\lVert\defvec{s}_{m}\bigr\rVert}^{2}
            -4\frac{{H}}{N_{0}}\Re\bigl\{{e}^{-\imaginary\Theta}\defvec{s}^{H}_{m}\defrvec{R}\bigr\}
            +{\bigl\lVert\defrvec{R}\bigr\rVert}^{2}_{\defmat{\Sigma}}.
\end{equation}
Accordingly, choosing $\defmat{F}$ as in  \eqref{Eq:ComplexAWMNChannelPrecodingMatrixFilter} equalizes the received vector $\defrvec{R}$ from the channel, introduced in \eqref{Eq:ComplexAWMNVectorChannel}, to yield the equalized version before it is fed to the optimal detector, that is given by
\begin{subequations}\label{Eq:ComplexAWMNVectorChannelEqualization}
\begin{eqnarray}
    \label{Eq:ComplexAWMNVectorChannelEqualizationA}
    \defmat{F}^{-1}\defrvec{R}
    &=&\defmat{F}^{-1}
            \bigl({H}{e}^{\imaginary\Theta}\defmat{F}\defvec{S}+\defrmat{Z}\bigr),\\
    \label{Eq:ComplexAWMNVectorChannelEqualizationB}
    &=&H{e}^{\imaginary\Theta}\defvec{S}+\defmat{F}^{-1}\defrmat{Z},\\
    \label{Eq:ComplexAWMNVectorChannelEqualizationC}
    &=&H{e}^{\imaginary\Theta}\defvec{S}+\defrvec{Z}_{c},
\end{eqnarray}
\end{subequations}
where $\defrvec{Z}\!\sim\!\mathcal{CM}_{\nu}^{L}(\defmat{0},\defmat{\Sigma})$ whose \ac{PDF} is already given by \eqref{Eq:AWMNAdditiveNoiseVectorPDF}, and  $\defrvec{Z}_{c}\!\sim\!\mathcal{CM}_{\nu}^{L}(\defmat{0},\frac{N_{0}}{2}\defmat{I})$ follows the \ac{PDF} obtained with the aid of both \theoremref{Theorem:INIDMultivariateCESMcLeishPDF} and the special case \eqref{Eq:INIDMultivariateCESMcLeishPDFWithUniformVariances}, that is
\begin{equation}\label{Eq:AWMNINIDVectorChannelNoisePDF}
\!\!\!\!f_{\defrvec{Z}_{c}}(\defvec{z})=
		\frac{2}{\pi^{L}}
		\frac{{\bigl\lVert{\defvec{z}}\bigr\rVert}^{\nu-{L}}}
			{\Gamma(\nu)\Lambda_{0}^{\nu+{L}}}
				{K}_{\nu-{L}}\Bigl(\frac{2}{\Lambda_{0}}{\bigl\lVert{\defvec{z}}\bigr\rVert}\Bigr)
\end{equation}
where $\Lambda_{0}$ is the component deviation (i.e., the variance per each Laplacian component) and  obtained by 
\begin{equation}
\Lambda_{0}=\sqrt{\frac{2}{\nu}\frac{\trace(\defmat{F}^{H}\defmat{\Sigma}^{-1}\defmat{F})}
    {\trace(\defmat{D}^{H}\defmat{\Sigma}^{-1}\defmat{D})}}
    =\sqrt{\frac{N_{0}}{\nu}}.
\end{equation}
Properly, both from the phase compensation~presented~in \eqref{Eq:ComplexAWMNChannelsESProperty} and the  equalization steps presented in \eqref{Eq:ComplexAWMNVectorChannelEqualization}, we~conclude~that, thanks to the coherence time of the vector channel, the received vector can be equalized by the precoding matrix filter $\defmat{F}$ and also can be maximized by phase compensation before the optimal detection as follows
\begin{subequations}\label{Eq:ComplexAWMNVectorChannelCoherentEqualization}
\begin{eqnarray}
    \label{Eq:ComplexAWMNVectorChannelCoherentEqualizationA}
    \defrvec{R}_{c}&=&{e}^{-\imaginary\Theta}\defmat{F}^{-1}\defrvec{R},\\
    \label{Eq:ComplexAWMNVectorChannelCoherentEqualizationB}
    &=&{e}^{-\imaginary\Theta}\defmat{F}^{-1}
            \bigl({H}{e}^{\imaginary\Theta}\defmat{F}\defvec{S}+\defrmat{Z}\bigr),\\
    \label{ComplexAWMNVectorChannelCoherentEqualizationC}
    &\equiv&H\defvec{S}+\defmat{F}^{-1}\defrmat{Z},\\
    \label{ComplexAWMNVectorChannelCoherentEqualizationD}
    &=&H\defvec{S}+\defrvec{Z}_{c},
\end{eqnarray}
\end{subequations}
which simplifies the complex correlated \ac{AWMN} vector channel, introduced above in \eqref{Eq:ComplexAWMNVectorChannel}, to the simple one, which we call the uncorrelated complex \ac{AWMN} vector channels, whose mathematical model is typically given by
\begin{equation}\label{Eq:PrecodedComplexAWMNVectorChannel}
	\defrvec{R}_{c}={H}\defvec{S}+\defrvec{Z}_{c}.
\end{equation}
where during each modulation symbol, $\defrvec{R}_{c}$ depends statistically on $\defrvec{S}$. With the aid of \eqref{Eq:AWMNINIDVectorChannelNoisePDF}, we obtain the conditional \ac{PDF} $f_{\defrvec{R}_{c}|\defrvec{S}}(\defvec{r}|\defvec{s})$ as
\begin{equation}\label{Eq:PrecodedComplexAWMNVectorConditionalPDF}
    \!\!\!\!f_{\defrvec{R}_{c}|\defrvec{S}}(\defvec{r}|\defvec{s})=
		\frac{2}{\pi^{L}}
		\frac{{\bigl\lVert{\defvec{r}-H\defvec{s}}\bigr\rVert}^{\nu-{L}\!\!\!}}
			{\Gamma(\nu)\Lambda_{0}^{\nu+{L}}}
				{K}_{\nu-{L}}\Bigl(\frac{2}{\Lambda_{0}}{\bigl\lVert{\defvec{r}-H\defvec{s}}\bigr\rVert}\Bigr).\!\!
\end{equation}
Accordingly, thanks to the \ac{CS} property of multivariate \ac{CCS} McLeish distribution (for more details, see \secref{Section:StatisticalBackground:MultivariateComplexMcLeishDistribution}), we just state that the \ac{BER}\,/\,\ac{SER} performance of the vector channel in \eqref{Eq:PrecodedComplexAWMNVectorChannel} is completely the same as that of one in \eqref{Eq:ComplexAWMNVectorChannel} when we choose the precoding matrix $\defmat{F}$ as  $\defmat{\Sigma}\!=\!{N_{0}}/{2}\defmat{F}\defmat{F}^{H}\!$. 

\begin{theorem}\label{Theorem:MAPDecisionRuleForPrecodedComplexAWMNVectorChannel}
The \ac{MAP} rule for complex uncorrelated \ac{AWMN} vector channels, defined in \eqref{Eq:PrecodedComplexAWMNVectorChannel}, is given by
\begin{subequations}\label{Eq:MAPDecisionRuleForPrecodedComplexAWMNVectorChannel}
\begin{eqnarray}
    \label{Eq:MAPDecisionRuleForPrecodedComplexAWMNVectorChannelA}
    \widehat{m}
        &=&\argmax_{{1}\leq{m}\leq{M}}
        \,\Bigl(
        N_{0}\log(p_m)-
            {\bigl\lVert{\defvec{R}_{c}-{H}\defvec{s}_m}\bigr\rVert}^{2}
        \Bigr),{~~~~}\\
        \label{Eq:MAPDecisionRuleForPrecodedComplexAWMNVectorChannelB}
        &=&\argmax_{{1}\leq{m}\leq{M}}
        \,\Bigl(
        N_{0}\log(p_m)\Bigr.+
        \Bigl.2H\RealPart{\defvec{s}^{H}_{m}\defvec{R}_{c}}-{H}^2{\bigl\lVert{\defvec{s}_{m}}\bigr\rVert}^{2}
        \Bigr).
\end{eqnarray}
\end{subequations}
with the decision region $\mathbb{D}^{\text{MAP}}_{\widehat{m}}$ given by 
\begin{multline}\label{Eq:MAPDecisionRegionForPrecodedComplexAWMNVectorChannel}
\!\!\!\!\!\!\mathbb{D}^{\text{MAP}}_{\widehat{m}}=\Bigl\{
        \Bigl.
        \defvec{r}\in\mathbb{C}^{L}
        \,\Bigr|\,
        N_{0}\log(p_{\widehat{m}})+2H\RealPart{\defvec{s}^{H}_{\widehat{m}}\defvec{r}}
        -{H}^2{\lVert{\defvec{s}_{\widehat{m}}}\rVert}^{2}
        \!\geq\!\\
        N_{0}\log(p_{m})+2H\RealPart{\defvec{s}^{H}_{m}\defvec{r}}
        -{H}^2{\lVert{\defvec{s}_{m}}\rVert}^{2},
        \forall{m}\neq{\widehat{m}}
    \Bigr\},\!\!    
\end{multline}
\end{theorem}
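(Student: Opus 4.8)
The plan is to obtain the \ac{MAP} rule for the uncorrelated channel \eqref{Eq:PrecodedComplexAWMNVectorChannel} by mirroring the argument already used in the proof of \theoremref{Theorem:MAPDecisionRuleForComplexAWMNVectorChannel}, since \eqref{Eq:PrecodedComplexAWMNVectorChannel} is precisely the degenerate instance of \eqref{Eq:ComplexAWMNVectorChannel} in which the fading phase has been compensated, the precoding filter has been absorbed, and the covariance collapses to $\defmat{\Sigma}\!=\!\frac{N_0}{2}\defmat{I}$. First I would start from the general \ac{MAP} criterion \eqref{Eq:AWMNMAPRuleII}, $\widehat{m}\!=\!\argmax_{m} f_{\defrvec{R}_c|\defrvec{S}}(\defrvec{R}_c|\defvec{s}_m)\Pr\{\defrvec{S}\!=\!\defvec{s}_m\}$, and insert the conditional \ac{PDF} \eqref{Eq:PrecodedComplexAWMNVectorConditionalPDF}. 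Rather than differentiating the Bessel-$K$ form directly, I would invoke the decomposition $\defrvec{Z}_c\!=\!\sqrt{G}(\defrvec{N}_1+\imaginary\defrvec{N}_2)$ from \theoremref{Theorem:StandardMultivariateCCSMcLeishDefinition}, so that conditioned on $G\!=\!g$ the received vector is complex Gaussian with mean $H\defvec{s}_m$ and per-component variance $gN_0$.

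Next I would average this Gaussian conditional density over $G$ and reuse the two observations of steps $(a)$ and $(b)$ of the parent proof: because $f_G(g)\!\geq\!0$ and $\mathbb{E}[G]\!=\!1$, the maximization over $m$ commutes with the mixing integral and may be evaluated at $g\!=\!\mathbb{E}[G]\!=\!1$; and, since $\log(\cdot)$ is monotone increasing while all prefactors are positive, taking logarithms and discarding $m$-independent constants preserves the maximizer. This reduces the criterion to $\widehat{m}\!=\!\argmax_m\bigl(2\log(p_m)-\bigl\lVert\defvec{R}_c-H\defvec{s}_m\bigr\rVert_{\frac{N_0}{2}\defmat{I}}^{2}\bigr)$, which is exactly the specialization of \eqref{Eq:MAPDecisionRuleForComplexAWMNVectorChannel}.

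The only genuine bookkeeping is the conversion of the Mahalanobis metric into a Euclidean one: since $\bigl\lVert\defvec{x}\bigr\rVert_{\frac{N_0}{2}\defmat{I}}^{2}=\defvec{x}^{H}(\tfrac{N_0}{2}\defmat{I})^{-1}\defvec{x}=\tfrac{2}{N_0}\bigl\lVert\defvec{x}\bigr\rVert^{2}$, multiplying the objective by the positive constant $N_0/2$ (which does not move the maximizer) yields \eqref{Eq:MAPDecisionRuleForPrecodedComplexAWMNVectorChannelA}. To reach \eqref{Eq:MAPDecisionRuleForPrecodedComplexAWMNVectorChannelB} I would expand $\bigl\lVert\defvec{R}_c-H\defvec{s}_m\bigr\rVert^{2}=\bigl\lVert\defvec{R}_c\bigr\rVert^{2}-2H\RealPart{\defvec{s}_m^{H}\defvec{R}_c}+H^{2}\bigl\lVert\defvec{s}_m\bigr\rVert^{2}$ and drop $\bigl\lVert\defvec{R}_c\bigr\rVert^{2}$, which is common to all hypotheses. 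Finally, the decision region \eqref{Eq:MAPDecisionRegionForPrecodedComplexAWMNVectorChannel} follows by feeding the scalar objective of \eqref{Eq:MAPDecisionRuleForPrecodedComplexAWMNVectorChannelB} into the generic region definition \eqref{Eq:AWMNMAPRuleDecisionRegion}, written as the pairwise inequality between hypothesis $\widehat m$ and every competitor $m\neq\widehat m$.

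I expect no serious obstacle; the delicate point is purely the justification of evaluating the $G$-mixture at its mean, which is not a generic property of $\argmax$ and must be argued exactly as in the parent theorem, namely from the nonnegativity of $f_G$ together with the fact that the $m$-dependence of the Gaussian kernel enters only through a single common monotone exponent. Care must also be taken that the factor $N_0$ — rather than the $2$ appearing in \eqref{Eq:MAPDecisionRuleForComplexAWMNVectorChannel} — is produced correctly by the metric rescaling.
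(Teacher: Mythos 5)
Your proposal is correct and follows essentially the same route as the paper: the paper's proof simply substitutes $\defmat{\Sigma}\!=\!\frac{N_0}{2}\defmat{I}$ (with the phase-compensating choice of $\defmat{F}$) into \theoremref{Theorem:MAPDecisionRuleForComplexAWMNVectorChannel}, then expands ${\lVert{\defvec{r}-H\defvec{s}_m}\rVert}^2$ and drops the common ${\lVert\defvec{r}\rVert}^2$ term, exactly as you do. Your re-running of the parent proof's $G$-mixture and log-monotonicity steps, together with the metric rescaling $\lVert\cdot\rVert^2_{\frac{N_0}{2}\defmat{I}}\!=\!\frac{2}{N_0}\lVert\cdot\rVert^2$ that produces the factor $N_0$, is just a more self-contained rendering of the same specialization.
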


\begin{proof}
The proof is obvious putting $\defmat{\Sigma}\!=\!\frac{{N}_{0}}{2}\defmat{I}$ in \theoremref{Theorem:MAPDecisionRuleForComplexAWMNVectorChannel} and selecting $\defmat{F}\!=\!{e}^{-\imaginary\Theta}\defmat{I}$ as per the phase compensation.~With the aid of the \ac{MAP} decision rule \eqref{Eq:AWMNMAPRuleII}, we accordingly write the decision region of the modulation symbol $\widehat{m}$ as follows
\begin{equation}
\!\!\!\!\mathbb{D}^{\text{MAP}}_{\widehat{m}}=\Bigl\{
        \bigl.
        \defvec{r}\in\mathbb{C}^{L}
        \,\bigr|\,
        N_{0}\log(p_{\widehat{m}})-{\bigl\lVert{\defvec{r}-{H}\defvec{s}_{\widehat{m}}}\bigr\rVert}^{2}
        \geq
        N_{0}\log(p_{m})-{\bigl\lVert{\defvec{r}-{H}\defvec{s}_{m}}\bigr\rVert}^{2},
        \forall{m}\neq{\widehat{m}}
    \Bigr\},\!\!    
\end{equation}
where using ${\lVert{\defvec{r}-{H}\defvec{s}_{m}}\rVert}^{2}\!=\!{\lVert{\defvec{r}}\rVert}^{2}\!-\!2H\RealPart{\defvec{s}^{H}_{m}\defvec{r}}\!+\!{H}^2{\lVert{\defvec{s}_{m}}\rVert}^{2}$ and therein ignoring the term ${\bigl\lVert{\defvec{r}}\bigr\rVert}^{2}$, we immediately derive \eqref{Eq:MAPDecisionRegionForPrecodedComplexAWMNVectorChannel}, which completes the proof of  \theoremref{Theorem:MAPDecisionRuleForPrecodedComplexAWMNVectorChannel}.
\end{proof}

In case of that the modulation symbols are transmitted with equal a priori probabilities (i.e., $p_{m}=1/M$ for all ${1}\!\leq\!{m}\!\leq\!{M}$), the \ac{MAP} rule decision given in \theoremref{Theorem:MAPDecisionRuleForPrecodedComplexAWMNVectorChannel} is readily reduced to the \ac{ML} decision rule given in the following. 

\begin{theorem}\label{Theorem:MLDecisionRuleForPrecodedComplexAWMNVectorChannel}
The \ac{ML} rule for complex uncorrelated \ac{AWMN} vector channels, defined in \eqref{Eq:PrecodedComplexAWMNVectorChannel}, is given by
\begin{subequations}\label{Eq:MLDecisionRuleForPrecodedComplexAWMNVectorChannel}
\begin{eqnarray}
    \label{Eq:MLDecisionRuleForPrecodedComplexAWMNVectorChannelA}
    \widehat{m}
        &=&\argmin_{{1}\leq{m}\leq{M}}
        \,
        {\bigl\lVert{\defvec{R}_{c}-{H}\defvec{s}_m}\bigr\rVert}^{2}
        ,{~~~~}\\
        \label{Eq:MLDecisionRuleForPrecodedComplexAWMNVectorChannelB}
        &=&\argmin_{{1}\leq{m}\leq{M}}
        \,\Bigl(
        {H}^2{\bigl\lVert{\defvec{s}_{m}}\bigr\rVert}^{2}-2H\RealPart{\defvec{s}^{H}_{m}\defvec{R}_{c}}
        \Bigr),
\end{eqnarray}
\end{subequations}
with the decision region $\mathbb{D}^{\text{ML}}_{\widehat{m}}$ given by 
\begin{equation}\label{Eq:MLDecisionRegionForPrecodedComplexAWMNVectorChannel}
\!\!\mathbb{D}^{\text{ML}}_{\widehat{m}}=\Bigl\{
        \Bigl.
        \defvec{r}\in\mathbb{C}^{L}
        \,\Bigr|\,
        {H}^2{\bigl\lVert{\defvec{s}_{\widehat{m}}}\bigr\rVert}^{2}
        -2H\RealPart{\defvec{s}^{H}_{\widehat{m}}\defvec{r}}
        \leq
        {H}^2{\bigl\lVert{\defvec{s}_{m}}\bigr\rVert}^{2}
        -2H\RealPart{\defvec{s}^{H}_{m}\defvec{r}},
        \forall{m}\neq{\widehat{m}}
    \Bigr\}.\!\!
\end{equation}
\end{theorem}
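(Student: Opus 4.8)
The plan is to derive the \ac{ML} rule as the equiprobable specialization of \theoremref{Theorem:MAPDecisionRuleForPrecodedComplexAWMNVectorChannel}, exactly paralleling how \theoremref{Theorem:MLDecisionRuleForComplexAWMNVectorChannel} was obtained from \theoremref{Theorem:MAPDecisionRuleForComplexAWMNVectorChannel}. First I would set $p_{m}\!=\!1/M$ for all ${1}\!\leq\!{m}\!\leq\!{M}$ in \eqref{Eq:MAPDecisionRuleForPrecodedComplexAWMNVectorChannelA}. Since $\log(p_m)\!=\!-\log(M)$ is then identical across all candidate symbols, the term $N_{0}\log(p_m)$ contributes a common additive constant that does not affect the $\argmax$, so it may be dropped. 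This leaves $\widehat{m}\!=\!\argmax_{{1}\leq{m}\leq{M}}\bigl(-{\lVert{\defvec{R}_{c}-{H}\defvec{s}_m}\rVert}^{2}\bigr)$, and negating the objective converts the maximization into the minimization $\widehat{m}\!=\!\argmin_{{1}\leq{m}\leq{M}}{\lVert{\defvec{R}_{c}-{H}\defvec{s}_m}\rVert}^{2}$, which is precisely \eqref{Eq:MLDecisionRuleForPrecodedComplexAWMNVectorChannelA}.

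Next I would establish the equivalent form \eqref{Eq:MLDecisionRuleForPrecodedComplexAWMNVectorChannelB} by expanding the squared Euclidean norm. Using the identity ${\lVert{\defvec{R}_{c}-{H}\defvec{s}_{m}}\rVert}^{2}\!=\!{\lVert{\defvec{R}_{c}}\rVert}^{2}-2H\RealPart{\defvec{s}^{H}_{m}\defvec{R}_{c}}+{H}^2{\lVert{\defvec{s}_{m}}\rVert}^{2}$, I observe that the leading term ${\lVert{\defvec{R}_{c}}\rVert}^{2}$ is independent of the symbol index $m$ and hence is a common constant that can be discarded from the $\argmin$. What remains is $\argmin_{{1}\leq{m}\leq{M}}\bigl({H}^2{\lVert{\defvec{s}_{m}}\rVert}^{2}-2H\RealPart{\defvec{s}^{H}_{m}\defvec{R}_{c}}\bigr)$, matching \eqref{Eq:MLDecisionRuleForPrecodedComplexAWMNVectorChannelB}.

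Finally I would verify the decision region \eqref{Eq:MLDecisionRegionForPrecodedComplexAWMNVectorChannel}. By the definition of the decision region as the set of observations $\defvec{r}$ for which $\widehat{m}$ attains the minimum, the region is the intersection over all ${m}\!\neq\!{\widehat{m}}$ of the inequalities ${H}^2{\lVert{\defvec{s}_{\widehat{m}}}\rVert}^{2}-2H\RealPart{\defvec{s}^{H}_{\widehat{m}}\defvec{r}}\!\leq\!{H}^2{\lVert{\defvec{s}_{m}}\rVert}^{2}-2H\RealPart{\defvec{s}^{H}_{m}\defvec{r}}$, which is exactly the stated region. Because every step here is an elementary algebraic simplification or a direct specialization of a theorem already proved in the excerpt, I do not anticipate any genuine obstacle; the only point demanding mild care is ensuring that each quantity dropped from the $\argmax$/$\argmin$ is truly independent of the optimization index $m$ (namely $N_{0}\log(p_m)$ under the equiprobable assumption and ${\lVert{\defvec{R}_{c}}\rVert}^{2}$ in the norm expansion), so that removing it leaves the optimizer unchanged.
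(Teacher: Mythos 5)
Your proposal is correct and follows essentially the same route as the paper: the paper's proof likewise specializes \theoremref{Theorem:MAPDecisionRuleForPrecodedComplexAWMNVectorChannel} by setting $p_{m}\!=\!1/M$ and dropping the common constant $N_{0}\log(p_m)\!=\!-N_{0}\log(M)$. Your additional steps (expanding the squared norm to obtain the second form and writing out the decision region) merely make explicit what the paper leaves implicit from the corresponding statements already present in the MAP theorem.
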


\begin{proof}
The proof is obvious setting $p_{m}\!=\!1/M$, ${1}\!\leq\!{m}\!\leq\!{M}$ in \theoremref{Theorem:MAPDecisionRuleForPrecodedComplexAWMNVectorChannel} and then ignoring the term $N_{0}\log(p_m)\!=\!-N_{0}\log(M)$ since being the same for all possible modulation symbols. 
\end{proof}

Note that, when the modulation symbols have equal power, we identify that the term ${\lVert{\defvec{s}_{m}}\rVert}^{2}$ in \eqref{Eq:MLDecisionRuleForPrecodedComplexAWMNVectorChannelB} is constant for all $1\!\leq\!{m}\!\leq\!{M}$ and therefore can be ignored. In accordance, the optimal detection rule either from the \ac{MAP} rule or the \ac{ML} rule for complex uncorrelated \ac{AWMN} vector channels, defined in \eqref{Eq:PrecodedComplexAWMNVectorChannel}, reduces to 
\begin{equation}\label{Eq:OptimalMLDecisionRuleForPrecodedComplexAWMNVectorChannel}
    \widehat{m}=\argmax_{{1}\leq{m}\leq{M}}\,\Re\bigl\{\defvec{s}^{H}_{m}\defrvec{R}_{c}\bigr\},
\end{equation}
whose decision region $\mathbb{D}^{\text{ML}}_{\widehat{m}}$ is given by 
\begin{equation}
\!\!\!\!\mathbb{D}^{\text{ML}}_{\widehat{m}}=\bigl\{\bigl.\defvec{r}\in\mathbb{C}^{L}\,\bigr|\,
        \Re\bigl\{\defvec{s}^{H}_{\widehat{m}}\defvec{r}\bigr\}\geq
        \Re\bigl\{\defvec{s}^{H}_{m}\defvec{r}\bigr\},
        \forall{m}\neq\widehat{m}
    \bigr\}.\!\!
\end{equation}
Additionally, we notice that the other important point in~the~nature of complex vector channels, which is  well-known~in~the literature \cite{BibProakisBook,BibGoldsmithBook,BibAlouiniBook,BibRappaportBook}, is the rotational invariance property. As~being typically observed either in \theoremref{Theorem:MAPDecisionRuleForPrecodedComplexAWMNVectorChannel} or \theoremref{Theorem:MLDecisionRuleForPrecodedComplexAWMNVectorChannel} in accordance with the channel model given by \eqref{Eq:PrecodedComplexAWMNVectorChannel}, the \ac{ML} decision rule partitions
the sample space of the received vector $\defrvec{R}$ depending on the modulation constellation. However, the rotation of the modulation constellation does not change the probability of making a decision error, primarily because~of~two facts, one of which corresponds to that the \ac{ML} decision error depends only on distances between modulation symbols. The other fact is that the additive complex noise $\defrmat{Z}_{c}\!\sim\!\mathcal{CM}_{\nu}^{L}(\defmat{0},\frac{N_{0}}{2}\defmat{I})$ is \ac{CS} in all directions in signaling space. 

\subsubsection{Symbol Error Probability}
\label{Section:SignallingOverAWMNChannels:CoherentSignalling:SymbolErrorProbability}
In order to determine~and~assess the \ac{SER} of a detection scheme, let us assume that the modulation symbol $m$ (i.e. $\defvec{s}_{m}$) is randomly selected from~a modulation constellation and then transmitted through the complex vector channel, introduced above in \eqref{Eq:PrecodedComplexAWMNVectorChannel}. Appropriately, we write the received vector $\defrvec{R}$ as
\begin{equation}\label{Eq:PrecodedComplexAWMNVectorChannelII}
    \defrmat{R}_{c}={H}\defvec{s}_{m}+\defrmat{Z}_{c}   
\end{equation}
where $\defrmat{Z}_{c}\!\sim\!\mathcal{M}_{\nu}^{L}(\defmat{0},\frac{N_{0}}{2}\defmat{I})$. A decision error occurs only when the received vector $\defrvec{R}_{c}$ does not fall into the decision region $\mathbb{D}^{\text{MAP}}_{m}$ of the modulation symbol $m$ (i.e., $\defrvec{R}_{c}\!\not\in\!\mathbb{D}^{\text{MAP}}_{m}$ causes an error). Making allowance for all decision regions $\bigl\{\mathbb{D}^{\text{MAP}}_{m},\allowbreak{1}\!\leq\!{m}\!\leq\!{M}\bigr\}$ of the modulation constellation $\bigl\{\defvec{s}_{m},\allowbreak{1}\!\leq\!{m}\!\leq\!{M}\bigr\}$, the probability of that a receiver makes an error in detection of the modulation symbol $m$ is readily written as
\begin{subequations}\label{Eq:ModulationSymbolDecisionError}
\begin{eqnarray}
    \label{Eq:ModulationSymbolDecisionErrorA}
    \Pr\bigl\{\bigl.e\,\bigr|\,H,\defvec{s}_{m}\bigr\}
    &=&\Pr\bigl\{\bigl.\defrvec{R}_{c}\not\in\mathbb{D}^{\text{MAP}}_{m}\,\bigr|\,\defvec{s}_{m}\bigr\},\\
    \label{Eq:ModulationSymbolDecisionErrorB}
    &=&\sum^{M}_{\substack{{n}=1\\{n}\neq{m}}}\Pr\bigl\{\bigl.\defrvec{R}_{c}\in\mathbb{D}^{\text{MAP}}_{n}\,\bigr|\,\defvec{s}_{m}\bigr\},\quad\quad\\
    \label{Eq:ModulationSymbolDecisionErrorC}
    &=&\sum^{M}_{\substack{{n}=1\\{n}\neq{m}}}\int_{\mathbb{D}^{\text{MAP}}_{n}}\!\!f_{\defrvec{R}_{c}|\defrvec{S}}(\defvec{r}|\defvec{s}_{m})d\defvec{r},
\end{eqnarray}
\end{subequations}
where the conditional \ac{PDF} $f_{\defrvec{R}|\defrvec{S}}(\defvec{r}|\defvec{s})$ is given in \eqref{Eq:PrecodedComplexAWMNVectorConditionalPDF}. 
The conditional \ac{SER} of the receiver is therefore given by 
\begin{subequations}\label{Eq:ModulationSymbolErrorProbability}
\begin{eqnarray}
    \label{Eq:ModulationSymbolErrorProbabilityA}
    \Pr\bigl\{\bigl.e\,\bigr|\,H\bigr\}
        &=&\sum_{m=1}^{M}\Pr\bigl\{\defvec{s}_{m}\bigr\}\Pr\bigl\{\bigl.e\,\bigr|H,\defvec{s}_{m}\bigr\},\\
    \label{Eq:ModulationSymbolErrorProbabilityB}        
        &=&\sum_{m=1}^{M}{p}_{m}\Pr\bigl\{\bigl.e\,\bigr|H,\defvec{s}_{m}\bigr\},
\end{eqnarray}
\end{subequations}
where the probability of the modulation symbol $m$ we select to transmit is typically denoted by ${p}_{m}\!=\!\Pr\{\defvec{s}_{m}\}$, 
and where inserting \eqref{Eq:ModulationSymbolDecisionErrorC} yields 
\begin{equation}\label{Eq:ModulationSymbolDecisionErrorAveragedWithRespectToSymbols}
    \Pr\bigl\{\bigl.e\,\bigr|\,H\bigr\}=\sum_{m=1}^{M}p_{m}
        \sum^{M}_{\substack{{\widehat{m}}=1\\{\widehat{m}}\neq{m}}}\int_{\mathbb{D}^{\text{MAP}}_{\widehat{m}}}\!\!f_{\defrvec{R}_{c}|\defrvec{S}}(\defvec{r}|\defvec{s}_{m})d\defvec{r}.
\end{equation}
Accordingly, considering the whole transmission, we express the averaged \ac{SER} of the signaling as 
\begin{equation}
    \Pr\bigl\{e\bigr\}=\int_{0}^{\infty}\!\!\Pr\bigl\{\bigl.e\,\bigr|\,h\bigr\}f_{H}(h)dh,
\end{equation}
where $f_{H}(h)$ is the \ac{PDF} of the channel fading the signaling is subjected to. In this context, we mention that, in many cases, having exact information about a priori probabilities of the modulation symbols is difficult and actually impossible. We thus assume $p_{m}\!=\!1/M$ for all ${1}\!\leq\!{m}\!\leq\!{M}$ and then use the \ac{ML} decision rule at the receiver. Accordingly, we simplify \eqref{Eq:ModulationSymbolDecisionErrorAveragedWithRespectToSymbols} more to 
\begin{equation}\label{Eq:ModulationSymbolDecisionErrorAveragedWithRespectToSymbolsII}
    \Pr\bigl\{\bigl.e\,\bigr|\,H\bigr\}=\frac{1}{M}\sum_{m=1}^{M}
        \sum^{M}_{\substack{{\widehat{m}}=1\\{\widehat{m}}\neq{m}}}\int_{\mathbb{D}^{\text{ML}}_{\widehat{m}}}\!\!f_{\defrvec{R}_{c}|\defrvec{S}}
            (\defvec{r}|\defvec{s}_{m})d\defvec{r}.
\end{equation}
Note that for very few modulation constellations, all decision regions $\bigl\{\mathbb{D}^{\text{ML}}_{m},\allowbreak{1}\!\leq\!{m}\!\leq\!{M}\bigr\}$ are regular enough~to~be~defined mathematically such that we can compute the integrals in  \eqref{Eq:ModulationSymbolDecisionErrorAveragedWithRespectToSymbolsII} in closed forms. But, in cases where these integrals cannot be expressed in a closed form, it is useful to have a union upper bound for the \ac{SER} and hence for averaged \ac{SER} since being quite tight particularly at high \ac{SNR}. From \eqref{Eq:ModulationSymbolDecisionErrorAveragedWithRespectToSymbolsII}, we obtain 
the union upper bound for the averaged \ac{SER} over additive complex \ac{AWMN} channels as 
\begin{equation}\label{Eq:UnionUpperBoundSEPOverPrecodedComplexAWMNVectorChannels}
\!\!\!\!\!\!\Pr\bigl\{\bigl.e\,\bigr|\,H\bigr\}\leq\frac{1}{M}\sum_{m=1}^{M}
        \!\sum^{M}_{\substack{{\widehat{m}}=1\\{\widehat{m}}\neq{m}}}\!\Pr\bigl\{\bigl.\defvec{s}_{\widehat{m}}\,\text{detected}\,\bigr|\,\defvec{s}_{m}\,\text{sent}\bigr\},\!\!
\end{equation}
where $\Pr\bigl\{\bigl.\defvec{s}_{\widehat{m}}\,\text{detected}\,\bigr|\,\defvec{s}_{m}\,\text{sent}\bigr\}$, ${m}\neq{\widehat{m}}$ is the probability of the error as a result of detection of $\defvec{s}_{\widehat{m}}$ given the modulation symbol $\defvec{s}_{m}$ transmitted. Note that the boundary between $\mathbb{D}_{m}$ and $\mathbb{D}_{\widehat{m}}$ is perpendicular bisector of the line connecting $\defvec{s}_{m}$ and $\defvec{s}_{\widehat{m}}$, ${m}\neq{\widehat{m}}$. Accordingly, since $\defvec{s}_{m}$ is transmitted, a decision error occurs considering only $\defvec{s}_{m}$ and $\defvec{s}_{\widehat{m}}$, ${m}\neq{\widehat{m}}$ when the projection of $\defrvec{R}_{c}-H\defvec{s}_{m}$ on $H\defvec{s}_{\widehat{m}}-H\defvec{s}_{m}$ becomes larger than $Hd_{m\widehat{m}}/2$, where $d_{m\widehat{m}}$ is the Euclidean distance between $\defvec{s}_{m}$ and $\defvec{s}_{\widehat{m}}$, and defined by 
\begin{equation}\label{Eq:DistanceBetweenTwoModulationSymbols}
    d^2_{m\widehat{m}}={\bigl\lVert{\defvec{s}_{m}-\defvec{s}_{\widehat{m}}}\bigr\rVert}^2.
\end{equation}
As addressing $\defrvec{Z}_{c}\!=\!\defrvec{R}_{c}-H\defvec{s}_{m}$ and $\defrvec{Z}_{c}\!\sim\!\mathcal{CM}^{L}_{\nu}(\defvec{0},\frac{N_{0}}{2}\defmat{I})$, the probability of making an error when considering only $\defvec{s}_{m}$ and $\defvec{s}_{\widehat{m}}$, ${m}\neq{\widehat{m}}$ is given by
\begin{subequations}\label{Eq:UnionUpperBoundAWMNProjectionProbability}
\begin{eqnarray}
\nonumber
\!\!\Pr\bigl\{\bigl.\defvec{s}_{\widehat{m}}\!\!\!\!&\,&\!\!\!\!\text{detected}\bigr|\defvec{s}_{m}\,\text{sent}\bigr\}\\
    \label{Eq:UnionUpperBoundAWMNProjectionProbabilityA}
        &=&\Pr\biggl\{
            \frac{\RealPart{\defrvec{Z}^H_{c}(H\defvec{s}_{\widehat{m}}-H\defvec{s}_{m})}}{Hd_{m\widehat{m}}}
                >
                \frac{Hd_{m\widehat{m}}}{2}\biggr\},{~~~~~~~~~}\\
    \label{Eq:UnionUpperBoundAWMNProjectionProbabilityB}
        &=&\Pr\biggl\{
                \RealPart{\defrvec{Z}^H_{c}(\defvec{s}_{\widehat{m}}-\defvec{s}_{m})}
                >
                \frac{Hd^2_{m\widehat{m}}}{2}\biggr\},\\
    \label{Eq:UnionUpperBoundAWMNProjectionProbabilityC}
        &=&\Pr\biggl\{
                {N}
                >
                \frac{Hd^2_{m\widehat{m}}}{2}\biggr\},
\end{eqnarray}
\end{subequations}
where ${N}\!\sim\!\mathcal{M}_{\nu}(0,\frac{N_{0}}{2}d^2_{m\widehat{m}})$ as a result from the \ac{CS} property of $\defrvec{Z}_{c}\!\sim\!\mathcal{CM}^{L}_{\nu}(\defvec{0},\frac{N_{0}}{2}\defmat{I})$. 

\begin{theorem}\label{Theorem:UnionUpperBoundSEPForPrecodedComplexAWMNVectorChannel}
The union upper bound of the conditional \ac{SER} of a modulation constellation $\bigl\{\defvec{s}_{m},\allowbreak{1}\!\leq\!{m}\!\leq\!{M}\bigr\}$ is given by
\begin{equation}\label{Eq:UnionUpperBoundSEPForPrecodedComplexAWMNVectorChannel}
\!\!\Pr\bigl\{\bigl.e\,\bigr|\,H\bigr\}\leq\frac{1}{M}\sum_{m=1}^{M}
        \sum^{M}_{\substack{{\widehat{m}}=1\\{\widehat{m}}\neq{m}}}
            Q_{\nu}\biggl(
                \frac{H{\lVert{\defvec{s}_{m}-\defvec{s}_{\widehat{m}}}\rVert}}{\sqrt{2N_{0}}}
            \biggr),
\end{equation}
where $Q_{\nu}(\cdot)$ is the McLeish's \ac{Q-function} defined in \eqref{Eq:McLeishQFunction}.
\end{theorem}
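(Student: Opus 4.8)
The plan is to lean almost entirely on the groundwork already laid in \eqref{Eq:UnionUpperBoundSEPOverPrecodedComplexAWMNVectorChannels} through \eqref{Eq:UnionUpperBoundAWMNProjectionProbability}, so that only a single scalar tail probability remains to be put in closed form. First I would take the union upper bound \eqref{Eq:UnionUpperBoundSEPOverPrecodedComplexAWMNVectorChannels} as given, which already controls $\Pr\{e\,|\,H\}$ by the average over all ordered symbol pairs of the pairwise error probability $\Pr\bigl\{\defvec{s}_{\widehat{m}}\,\text{detected}\,\bigl|\,\defvec{s}_{m}\,\text{sent}\bigr\}$. Consequently, the entire task reduces to expressing each pairwise probability as a McLeish's \ac{Q-function} with the claimed argument.

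Next I would invoke the pairwise reduction already performed in \eqref{Eq:UnionUpperBoundAWMNProjectionProbability}: the \ac{CS} property of $\defrvec{Z}_{c}\!\sim\!\mathcal{CM}^{L}_{\nu}(\defvec{0},\tfrac{N_{0}}{2}\defmat{I})$ collapses the $L$-dimensional projection $\RealPart{\defrvec{Z}^{H}_{c}(\defvec{s}_{\widehat{m}}-\defvec{s}_{m})}$ onto a single real McLeish distribution $N\!\sim\!\mathcal{M}_{\nu}(0,\tfrac{N_{0}}{2}d^{2}_{m\widehat{m}})$, leaving $\Pr\bigl\{\defvec{s}_{\widehat{m}}\,\text{detected}\,\bigl|\,\defvec{s}_{m}\,\text{sent}\bigr\}\!=\!\Pr\bigl\{N>\tfrac{1}{2}Hd^{2}_{m\widehat{m}}\bigr\}$, where $d^{2}_{m\widehat{m}}\!=\!\lVert\defvec{s}_{m}-\defvec{s}_{\widehat{m}}\rVert^{2}$ by \eqref{Eq:DistanceBetweenTwoModulationSymbols}. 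The central step is then to apply \theoremref{Theorem:McLeishCCDF}, which identifies the tail probability of a zero-mean McLeish variable with the McLeish \ac{Q-function} evaluated at the standardized threshold, namely $\Pr\{N>x\}\!=\!Q_{\nu}(x/\sigma_{N})$ with $\sigma_{N}\!=\!\sqrt{\tfrac{N_{0}}{2}d^{2}_{m\widehat{m}}}$.

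Substituting $x=\tfrac{1}{2}Hd^{2}_{m\widehat{m}}$ and simplifying the standardized argument gives
\[
\frac{\frac{1}{2}Hd^{2}_{m\widehat{m}}}{\sqrt{\frac{N_{0}}{2}d^{2}_{m\widehat{m}}}}
=\frac{Hd_{m\widehat{m}}}{\sqrt{2N_{0}}}
=\frac{H\lVert\defvec{s}_{m}-\defvec{s}_{\widehat{m}}\rVert}{\sqrt{2N_{0}}},
\]
which is precisely the argument appearing in the statement. Inserting $Q_{\nu}\bigl(H\lVert\defvec{s}_{m}-\defvec{s}_{\widehat{m}}\rVert/\sqrt{2N_{0}}\bigr)$ back into the union bound \eqref{Eq:UnionUpperBoundSEPOverPrecodedComplexAWMNVectorChannels} then yields \eqref{Eq:UnionUpperBoundSEPForPrecodedComplexAWMNVectorChannel} directly.

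The argument is essentially algebraic once the prior reductions are in hand, so I expect no genuine conceptual obstacle; the only point demanding care is the bookkeeping of the normalizing factor, verifying that the half-distance threshold $\tfrac{1}{2}Hd^{2}_{m\widehat{m}}$ divided by the standard deviation $\sqrt{\tfrac{N_{0}}{2}}\,d_{m\widehat{m}}$ collapses cleanly to $Hd_{m\widehat{m}}/\sqrt{2N_{0}}$ rather than, say, $Hd_{m\widehat{m}}/(2\sqrt{N_{0}})$. As a consistency check I would finally observe that letting $\nu\!\rightarrow\!\infty$ reduces $Q_{\nu}$ to the Gaussian \ac{Q-function} through \eqref{Eq:McLeishQFunctionandGaussianQFunctionRelation}, thereby recovering the classical union bound for \ac{AWGN} channels.
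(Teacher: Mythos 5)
Your proposal is correct and follows essentially the same route as the paper's own proof: start from the union bound \eqref{Eq:UnionUpperBoundSEPOverPrecodedComplexAWMNVectorChannels}, use the pairwise reduction \eqref{Eq:UnionUpperBoundAWMNProjectionProbability} to a scalar McLeish tail, apply \theoremref{Theorem:McLeishCCDF}, and substitute back with \eqref{Eq:DistanceBetweenTwoModulationSymbols}. Your explicit verification that the standardized threshold collapses to $Hd_{m\widehat{m}}/\sqrt{2N_{0}}$ is exactly the computation the paper leaves implicit, and it is done correctly.
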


\begin{proof}
From \eqref{Eq:UnionUpperBoundAWMNProjectionProbabilityC}, with the aid of \theoremref{Theorem:McLeishCCDF}, we have 
\begin{equation}\label{Eq:UnionUpperBoundAWMNProjectionProbabilityII}
    \Pr\biggl\{{N}>\frac{Hd^2_{m\widehat{m}}}{2}\biggr\}=
        Q_{\nu}\biggl(\frac{Hd_{m\widehat{m}}}{\sqrt{2N_{0}}}\biggr).
\end{equation}
Eventually, substituting both \eqref{Eq:UnionUpperBoundAWMNProjectionProbability} and \eqref{Eq:UnionUpperBoundAWMNProjectionProbabilityII} into \eqref{Eq:UnionUpperBoundSEPOverPrecodedComplexAWMNVectorChannels} yields
\begin{equation}\label{Eq:PrecodedComplexAWMNUnionUpperBoundSEP}
\!\!\Pr\bigl\{\bigl.e\,\bigr|\,H\bigr\}\leq\frac{1}{M}\sum_{m=1}^{M}
        \sum^{M}_{\substack{{\widehat{m}}=1\\{\widehat{m}}\neq{m}}}
        Q_{\nu}\biggl(\frac{Hd_{m\widehat{m}}}{\sqrt{2N_{0}}}\biggr),
\end{equation}
where inserting \eqref{Eq:DistanceBetweenTwoModulationSymbols} results in \eqref{Eq:UnionUpperBoundSEPForPrecodedComplexAWMNVectorChannel}, which completes the proof of \theoremref{Theorem:UnionUpperBoundSEPForPrecodedComplexAWMNVectorChannel}.
\end{proof}

It is worth noting that \theoremref{Theorem:UnionUpperBoundSEPForPrecodedComplexAWMNVectorChannel} proposes the general union bound expression for the conditional \ac{SER} of modulation constellation over uncorrelated complex \ac{AWMN} vector channels. Let us consider the accuracy and completeness of \theoremref{Theorem:UnionUpperBoundSEPForPrecodedComplexAWMNVectorChannel}, setting $\nu\!\rightarrow\!\infty$ in 
\eqref{Eq:UnionUpperBoundSEPForPrecodedComplexAWMNVectorChannel} yields \cite[Eq. (4.2-72)]{BibProakisBook}
\begin{equation}\label{Eq:UnionUpperBoundSEPForPrecodedComplexAWGNVectorChannel}
\!\!\Pr\bigl\{\bigl.e\,\bigr|\,H\bigr\}\leq\frac{1}{M}\sum_{m=1}^{M}
        \sum^{M}_{\substack{{\widehat{m}}=1\\{\widehat{m}}\neq{m}}}
            Q\biggl(
                \frac{H{\lVert{\defvec{s}_{m}-\defvec{s}_{\widehat{m}}}\rVert}}{\sqrt{2N_{0}}}
            \biggr),
\end{equation}
which is as expected the union upper bound of the conditional \ac{SER} for signaling over complex \ac{AWGN} channels. Further, for $\nu\!=\!1$, \eqref{Eq:UnionUpperBoundSEPForPrecodedComplexAWMNVectorChannel} simplifies to the union upper bound for complex \ac{AWLN} channels, that is 
\begin{equation}\label{Eq:UnionUpperBoundSEPForPrecodedComplexAWLNVectorChannel}
\!\!\Pr\bigl\{\bigl.e\,\bigr|\,H\bigr\}\leq\frac{1}{M}\sum_{m=1}^{M}
        \sum^{M}_{\substack{{\widehat{m}}=1\\{\widehat{m}}\neq{m}}}
            LQ\biggl(
                \frac{H{\lVert{\defvec{s}_{m}-\defvec{s}_{\widehat{m}}}\rVert}}{\sqrt{2N_{0}}}
            \biggr),
\end{equation}
where $LQ(\cdot)$ is the Laplacian \ac{Q-function} defined by \eqref{Eq:LaplacianQFunction}. In addition, if we know the distance structure of the modulation constellation, we can further simplify \eqref{Eq:UnionUpperBoundSEPForPrecodedComplexAWMNVectorChannel} by exploiting the fact that the decision error is mostly contributed by the closest modulation symbols. The distance between the two closest modulation symbols is given by
\begin{equation}
    d_{min}=\min_{m\neq\widehat{m}}
        {\lVert{\defvec{s}_{m}-\defvec{s}_{\widehat{m}}}\rVert}    
\end{equation}
Accordingly, we have 
\begin{equation}
    {Q}_{\nu}\biggl(\frac{Hd_{m\widehat{m}}}{\sqrt{2N_{0}}}\biggr)
        \leq
        {Q}_{\nu}\biggl(\frac{Hd_{min}}{\sqrt{2N_{0}}}\biggr),
\end{equation}
for all $\widehat{m}\!\neq\!{m}$. Therefore, putting this result in \eqref{Eq:PrecodedComplexAWMNUnionUpperBoundSEP} yields 
\begin{equation}\label{Eq:UnionUpperBoundSEPForPrecodedComplexAWMNVectorChannelII}
\!\!\Pr\bigl\{\bigl.e\,\bigr|\,H\bigr\}\leq
    (M-1)\,Q_{\nu}\biggl(\frac{Hd_{min}}{\sqrt{2N_{0}}}\biggr).
\end{equation}

In the following, we consider the well-known modulation constellations such as \ac{BPSK}, \ac{BFSK}, \ac{M-ASK}, \ac{M-PSK}, and \ac{M-QAM}, each of which is mainly characterized by their low bandwidth requirements. Appropriately, we will obtain the conditional \ac{SER} of the coherent optimal detector for these modulation constellations. 

\paragraph{Conditional \ac{BER} of Binary Keying Modulation}
\label{Section:SignallingOverAWMNChannels:CoherentSignalling:SymbolErrorProbability:BinaryKeyingModulation}
When binary signaling is used, let us denote the modulation constellation by $\{\defvec{s}_{+},\defvec{s}_{-}\}$
such that the transmitter transmits $\defvec{s}_{+}$ and $\defvec{s}_{-}$ with priori probabilities $p$ and $1-p$, respectively, and with powers ${E}_{+}\!=\!\lVert\defvec{s}_{+}\rVert^2$ and ${E}_{-}\!=\!\lVert\defvec{s}_{-}\rVert^2$, respectively. Referring to the mathematical model given by \eqref{Eq:PrecodedComplexAWMNVectorChannel}, the received vector $\defrvec{R}_{c}$ is readily written as
\begin{equation}\label{Eq:AWMNBinarySignallingReceivedVector}
    \defrvec{R}_{c}=H\defvec{s}_{\pm}+\defrvec{Z}_{c}
\end{equation}
where $\defrvec{Z}_{c}\!\sim\!\mathcal{M}^{L}_{\nu}(\defvec{0},\frac{N_{0}}{2}\defmat{I})$ and $\defrvec{R}_{c}\!\sim\!\mathcal{M}^{L}_{\nu}(H\defvec{s}_{\pm},\frac{N_{0}}{2}\defmat{I})$ since both the fading envelope $H$ and the modulation symbols $\defvec{s}_{\pm}$ are invariably known during one symbol duration. It is worth re-emphasizing that the received vector $\defrvec{R}_{c}$ depends on the transmitted binary symbol $\defrvec{S}$ through the conditional \ac{PDF} $f_{\defrvec{R}_{c}|\defrvec{S}}(\defvec{r}|\defvec{s})$, which is obtained in \eqref{Eq:PrecodedComplexAWMNVectorConditionalPDF}. Accordingly, utilizing \theoremref{Theorem:MAPDecisionRuleForPrecodedComplexAWMNVectorChannel}, we establish the \ac{MAP} decision rule in the following theorem. 

\begin{theorem}\label{Theorem:MAPDecisionRuleForBinaryCoherentSignallingOverAWMNChannels}
In the case of that coherent binary signaling is used, the \ac{MAP} decision rule given in~\emph{\theoremref{Theorem:MLDecisionRuleForPrecodedComplexAWMNVectorChannel}}~reduces~to
\begin{equation}\label{Eq:MAPDecisionRuleForBinaryCoherentSignallingOverAWMNChannels}
\!\!\text{Decide~$\defvec{s}_{\pm}$~iff~~} 
	{\lVert\defvec{R}_{c}-H\defvec{s}_{\pm}\rVert}^{2}+{\eta}_{\pm}
	\leq
	{\lVert\defvec{R}_{c}-H\defvec{s}_{\mp}\rVert}^{2},\!\!
\end{equation}
with the decision regions $\mathbb{D}^{\emph{\text{MAP}}}_{+}$ and $\mathbb{D}^{\emph{\text{MAP}}}_{-}$, given by 
\begin{equation}\label{Eq:MAPDecisionRegionForBinaryCoherentSignallingOverAWMNChannels}
\mathbb{D}^{\emph{\text{MAP}}}_{\pm}\!=\!
    \Bigl\{\defvec{r}\!\in\!\mathbb{C}^{L}
		\Bigl|
			{\lVert\defvec{r}-H\defvec{s}_{\pm}\rVert}^{2}+R_{\pm}
			\leq
			{\lVert\defvec{r}-H\defvec{s}_{\mp}\rVert}^{2}
		\Bigr.
	\Bigr\},
\end{equation}
where the threshold value, originated from the priori probabilities of modulation symbols, is given by
\begin{equation}
    {\eta}_{\pm}=N_{0}\log\Bigl(\frac{1\mp1\pm{2p}}{1\pm1\mp{2p}}\Bigr).
\end{equation}
\end{theorem}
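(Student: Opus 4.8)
The plan is to obtain the claim as a direct specialization of the general MAP rule of \theoremref{Theorem:MAPDecisionRuleForPrecodedComplexAWMNVectorChannel} to the binary case $M\!=\!2$. First I would recall that, for the uncorrelated channel model \eqref{Eq:PrecodedComplexAWMNVectorChannel}, that theorem writes the MAP estimate as the maximizer over $m$ of the metric $N_{0}\log(p_m)-{\lVert\defvec{R}_{c}-H\defvec{s}_m\rVert}^{2}$, the norm here being ordinary Euclidean because the precoding has whitened the noise to $\defrvec{Z}_{c}\!\sim\!\mathcal{M}^{L}_{\nu}(\defvec{0},\frac{N_{0}}{2}\defmat{I})$. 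With only two hypotheses the $\argmax$ collapses to a single pairwise comparison: the detector decides $\defvec{s}_{+}$ precisely when the metric at $\defvec{s}_{+}$ is at least as large as the metric at $\defvec{s}_{-}$, and decides $\defvec{s}_{-}$ otherwise.

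Second, I would move the two squared-norm terms to one side of that comparison and the two log-prior terms to the other. This rearrangement turns the condition ``decide $\defvec{s}_{+}$'' into the stated inequality ${\lVert\defvec{R}_{c}-H\defvec{s}_{+}\rVert}^{2}+\eta_{+}\leq{\lVert\defvec{R}_{c}-H\defvec{s}_{-}\rVert}^{2}$, with $\eta_{+}$ absorbing the $N_{0}$-weighted difference of the log-priors; the companion statement for $\defvec{s}_{-}$ is obtained by interchanging the roles of the two hypotheses, which is exactly what the $\pm/\mp$ shorthand encodes in a single line. The decision regions $\mathbb{D}^{\text{MAP}}_{\pm}$ are then read off at once by replacing the observation $\defvec{R}_{c}$ with a free variable $\defvec{r}\in\mathbb{C}^{L}$ in the decision inequality.

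The one place that requires care --- which I would treat as the main bookkeeping obstacle rather than a genuine difficulty --- is writing the threshold in the compact, hypothesis-symmetric form claimed. Using $p_{+}\!=\!p$ and $p_{-}\!=\!1-p$, I would note the unified representation $p_{\pm}=\tfrac{1}{2}(1\mp1\pm2p)$; substituting this into the leftover $N_{0}$-weighted log-prior difference and cancelling the common factors of $2$ produces precisely $\eta_{\pm}=N_{0}\log\bigl((1\mp1\pm2p)/(1\pm1\mp2p)\bigr)$. I would then confirm the orientation of this ratio through sanity limits: equal priors $p=\tfrac12$ must give $\eta_{\pm}=0$, recovering the minimum-distance (ML) rule of \theoremref{Theorem:MLDecisionRuleForPrecodedComplexAWMNVectorChannel}, and skewing the prior toward one symbol must shift the threshold so as to enlarge that symbol's decision region. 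Since the whole argument is an algebraic reduction of an already-proved decision rule, no new analytic machinery is needed, and the only real risk is a sign slip in the $\pm/\mp$ convention, which the limiting checks are designed to catch.
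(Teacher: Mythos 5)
Your route is the same as the paper's: its proof is a one-line appeal to \theoremref{Theorem:MAPDecisionRuleForPrecodedComplexAWMNVectorChannel} specialized to $M=2$, which is exactly the reduction you describe. The problem is the final bookkeeping step, which you assert rather than carry out, and which does not come out as you claim. From the MAP metric of \theoremref{Theorem:MAPDecisionRuleForPrecodedComplexAWMNVectorChannel}, deciding $\defvec{s}_{+}$ requires
\begin{equation*}
N_{0}\log(p)-{\lVert\defvec{R}_{c}-H\defvec{s}_{+}\rVert}^{2}\;\geq\;
N_{0}\log(1-p)-{\lVert\defvec{R}_{c}-H\defvec{s}_{-}\rVert}^{2},
\end{equation*}
i.e.
\begin{equation*}
{\lVert\defvec{R}_{c}-H\defvec{s}_{+}\rVert}^{2}+N_{0}\log\Bigl(\frac{1-p}{p}\Bigr)\;\leq\;
{\lVert\defvec{R}_{c}-H\defvec{s}_{-}\rVert}^{2},
\end{equation*}
so the rearrangement gives $\eta_{\pm}=N_{0}\log\bigl(p_{\mp}/p_{\pm}\bigr)$, which in your unified notation $p_{\pm}=\tfrac{1}{2}(1\mp1\pm{2p})$ reads $\eta_{\pm}=N_{0}\log\bigl((1\pm1\mp{2p})/(1\mp1\pm{2p})\bigr)$. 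This is the \emph{reciprocal} of the ratio you claim the algebra "produces precisely", namely $N_{0}\log\bigl((1\mp1\pm{2p})/(1\pm1\mp{2p})\bigr)=N_{0}\log\bigl(p_{\pm}/p_{\mp}\bigr)$; the two coincide only at $p=\tfrac{1}{2}$.

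Your own sanity checks, had you executed them, would have exposed this. The equal-prior limit is uninformative (both versions give $\eta_{\pm}=0$ and recover the ML rule), but the prior-skew check is decisive: with $p>\tfrac{1}{2}$ the MAP detector must enlarge $\mathbb{D}^{\text{MAP}}_{+}$, i.e.\ $\eta_{+}$ must be negative, which $N_{0}\log\bigl((1-p)/p\bigr)$ is, whereas the formula you write down gives $\eta_{+}=N_{0}\log\bigl(p/(1-p)\bigr)>0$ and shrinks the favored region. The inversion also propagates downstream: inserting the stated $\eta_{+}$ into the \ac{BER} expression of \theoremref{Theorem:MAPDecisionErrorProbabilityForBinaryCoherentSignallingOverAWMNChannels} and letting $p\to1$ yields $\Pr\{e\,|\,H\}\to1$, while the threshold your algebra actually delivers yields $\Pr\{e\,|\,H\}\to0$, as it must. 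In short, your derivation is the right one but proves the rule with $\eta_{\pm}=N_{0}\log\bigl(p_{\mp}/p_{\pm}\bigr)$; the step identifying the leftover log-prior difference with the printed ratio (which appears inverted in the theorem statement itself) is the step that fails, and asserting it without the check is the gap.
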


\begin{proof}
The proof is obvious utilizing \theoremref{Theorem:MAPDecisionRuleForPrecodedComplexAWMNVectorChannel} with the \ac{CS} property of multivariate \ac{CCS} McLeish distribution (for more details, see \secref{Section:StatisticalBackground:MultivariateComplexMcLeishDistribution}). 
\end{proof}

In accordance with \theoremref{Theorem:MAPDecisionRuleForBinaryCoherentSignallingOverAWMNChannels}, the decision regions $\mathbb{D}^{\text{MAP}}_{+}$ and $\mathbb{D}^{\text{MAP}}_{-}$ are separated by a boundary hyperline perpendicular to the hyperline connecting $H\defvec{s}_{+}$ and $H\defvec{s}_{-}$. The decision regions and this boundary line are together illustrated in
\figref{Figure:DecisionRegionsForBinaryEquiprobableSignals}. Let us assume that $\defvec{s}_{+}$ is transmitted, then an error occurs when the received vector $\defrvec{R}_{c}$ falls into $\mathbb{D}_{-}$ instead of $\mathbb{D}_{+}$, which means that the projection of $(\defrvec{R}_{c}-H\defvec{s}_{+})$ on $(H\defvec{s}_{+}-H\defvec{s}_{-})$ is larger than the distance of $H\defvec{s}_{+}$ from the boundary hyperline. 

\begin{figure}[tp]
	\centering
	\includegraphics[clip=true, trim=0mm 0mm 0mm 0mm,width=0.7\columnwidth,keepaspectratio=true]{./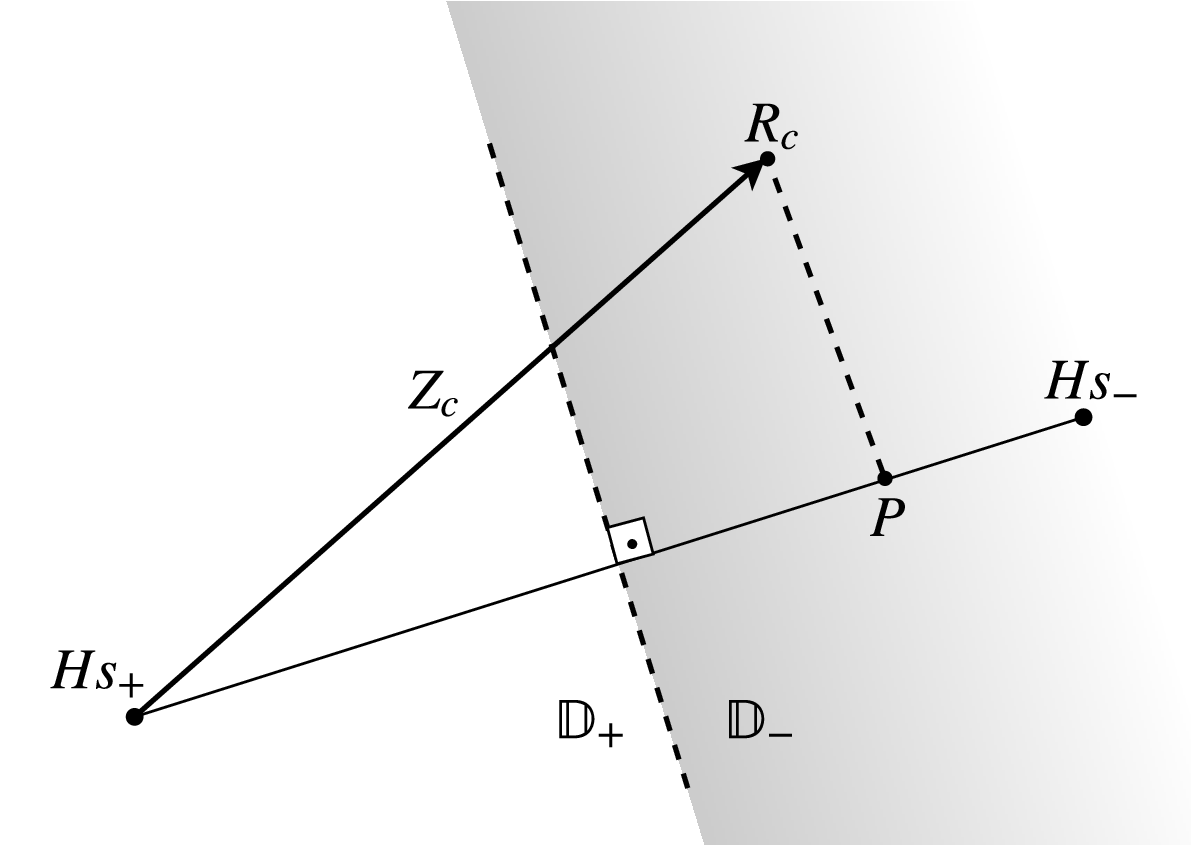}
	\caption{Received vector representation using binary keying symbols $\defvec{s}_{\pm}$ with the decision regions $\mathbb{D}_{\pm}$.}
	\label{Figure:DecisionRegionsForBinaryEquiprobableSignals}
\vspace{-2mm}
\end{figure}

\begin{theorem}\label{Theorem:MAPDecisionErrorProbabilityForBinaryCoherentSignallingOverAWMNChannels}
For the \ac{MAP} decision rule given by \theoremref{Theorem:MAPDecisionRuleForBinaryCoherentSignallingOverAWMNChannels}, the conditional \ac{BER} of binary signaling is given by 
\ifCLASSOPTIONtwocolumn
\begin{multline}\label{Eq:MAPDecisionErrorProbabilityForBinaryCoherentSignallingOverAWMNChannels}
 \Pr\{\bigl.e\,|\,H\}=
        p\,{Q}_{\nu}\biggl(
            \frac{H^{2}{\lVert\defvec{s}_{+}-\defvec{s}_{-}\rVert}^{2}-{\eta}_{+}}
                {H{\lVert\defvec{s}_{+}-\defvec{s}_{-}\rVert}\sqrt{2{N}_{0}}}
            \biggr)
        +\\    
        (1-p){Q}_{\nu}\biggl(
            \frac{H^{2}{\lVert\defvec{s}_{+}-\defvec{s}_{-}\rVert}^{2}-{\eta}_{-}}
                {H{\lVert\defvec{s}_{+}-\defvec{s}_{-}\rVert}\sqrt{2{N}_{0}}}
            \biggr).
\end{multline} 
\else
\begin{equation}\label{Eq:MAPDecisionErrorProbabilityForBinaryCoherentSignallingOverAWMNChannels}
 \Pr\{\bigl.e\,|\,H\}=
        p\,{Q}_{\nu}\biggl(
            \frac{H^{2}{\lVert\defvec{s}_{+}-\defvec{s}_{-}\rVert}^{2}-{\eta}_{+}}
                {H{\lVert\defvec{s}_{+}-\defvec{s}_{-}\rVert}\sqrt{2{N}_{0}}}
            \biggr)
        +    
        (1-p){Q}_{\nu}\biggl(
            \frac{H^{2}{\lVert\defvec{s}_{+}-\defvec{s}_{-}\rVert}^{2}-{\eta}_{-}}
                {H{\lVert\defvec{s}_{+}-\defvec{s}_{-}\rVert}\sqrt{2{N}_{0}}}
            \biggr).
\end{equation}
\fi
\end{theorem}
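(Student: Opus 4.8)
The plan is to condition on the transmitted binary symbol and apply the law of total probability, writing $\Pr\{e\,|\,H\}=p\,\Pr\{e\,|\,H,\defvec{s}_{+}\}+(1-p)\,\Pr\{e\,|\,H,\defvec{s}_{-}\}$ in accordance with the a priori probabilities $p$ and $1-p$. First I would treat the hypothesis that $\defvec{s}_{+}$ is sent, so that $\defrvec{R}_{c}=H\defvec{s}_{+}+\defrvec{Z}_{c}$ with $\defrvec{Z}_{c}\!\sim\!\mathcal{CM}^{L}_{\nu}(\defvec{0},\tfrac{N_{0}}{2}\defmat{I})$. A detection error is then exactly the event that the \ac{MAP} rule of \theoremref{Theorem:MAPDecisionRuleForBinaryCoherentSignallingOverAWMNChannels} decides $\defvec{s}_{-}$, namely ${\lVert\defrvec{R}_{c}-H\defvec{s}_{-}\rVert}^{2}+\eta_{-}\leq{\lVert\defrvec{R}_{c}-H\defvec{s}_{+}\rVert}^{2}$.

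Substituting $\defrvec{R}_{c}=H\defvec{s}_{+}+\defrvec{Z}_{c}$ and expanding the left-hand square, the common term ${\lVert\defrvec{Z}_{c}\rVert}^{2}$ cancels, collapsing the error event to the scalar linear inequality
\[
    \RealPart{(\defvec{s}_{-}-\defvec{s}_{+})^{H}\defrvec{Z}_{c}}\,\geq\,\frac{H^{2}{\lVert\defvec{s}_{+}-\defvec{s}_{-}\rVert}^{2}+\eta_{-}}{2H}.
\]
The key structural observation, identical to the one exploited in the union-bound argument leading to \theoremref{Theorem:UnionUpperBoundSEPForPrecodedComplexAWMNVectorChannel}, is that the circular symmetry of $\defrvec{Z}_{c}$ makes the real projection onto any fixed complex direction a zero-mean univariate McLeish distribution; here $\RealPart{(\defvec{s}_{-}-\defvec{s}_{+})^{H}\defrvec{Z}_{c}}\!\sim\!\mathcal{M}_{\nu}\!\bigl(0,\tfrac{N_{0}}{2}{\lVert\defvec{s}_{+}-\defvec{s}_{-}\rVert}^{2}\bigr)$. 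I would then invoke the \ac{CCDF} of \theoremref{Theorem:McLeishCCDF} to express the resulting right-tail probability as $Q_{\nu}$ evaluated at the threshold standardized by the standard deviation $\sqrt{N_{0}/2}\,{\lVert\defvec{s}_{+}-\defvec{s}_{-}\rVert}$. The factor-of-two bookkeeping then collapses the denominator to $H{\lVert\defvec{s}_{+}-\defvec{s}_{-}\rVert}\sqrt{2N_{0}}$, yielding the first term $p\,Q_{\nu}\!\bigl((H^{2}{\lVert\defvec{s}_{+}-\defvec{s}_{-}\rVert}^{2}+\eta_{-})/(H{\lVert\defvec{s}_{+}-\defvec{s}_{-}\rVert}\sqrt{2N_{0}})\bigr)$ before the threshold identity is applied.

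Next I would repeat the computation symmetrically for the hypothesis $\defvec{s}_{-}$, with $\defrvec{R}_{c}=H\defvec{s}_{-}+\defrvec{Z}_{c}$ and the error event ${\lVert\defrvec{R}_{c}-H\defvec{s}_{+}\rVert}^{2}+\eta_{+}\leq{\lVert\defrvec{R}_{c}-H\defvec{s}_{-}\rVert}^{2}$, obtaining the companion term weighted by $(1-p)$ with $\eta_{+}$ in place of $\eta_{-}$. The final cosmetic step is to note that the two thresholds satisfy $\eta_{-}=-\eta_{+}$ directly from their definitions, since $\eta_{+}=N_{0}\log\!\bigl(p/(1-p)\bigr)$ and $\eta_{-}=N_{0}\log\!\bigl((1-p)/p\bigr)$; this lets me rewrite both $Q_{\nu}$ arguments in the advertised form carrying $-\eta_{+}$ and $-\eta_{-}$, respectively, giving \eqref{Eq:MAPDecisionErrorProbabilityForBinaryCoherentSignallingOverAWMNChannels}. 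I expect the main obstacle to be the consistent sign and factor-of-two bookkeeping: one must orient the projection direction so the error event is a right tail $\{N\geq\cdot\}$ (so that \theoremref{Theorem:McLeishCCDF} applies cleanly), keep the sign of $\eta_{\mp}$ correct through the cancellation of ${\lVert\defrvec{Z}_{c}\rVert}^{2}$, and track the $\tfrac{1}{2}$ in the variance through standardization. Everything else is a routine norm expansion and a single appeal to the CCDF.
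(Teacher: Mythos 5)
Your proposal is correct and follows essentially the same route as the paper's proof: condition on the transmitted symbol, expand the squared norms so the error event collapses to a right tail of the real projection of $\defrvec{Z}_{c}$ onto the direction $\defvec{s}_{+}-\defvec{s}_{-}$ (a zero-mean univariate McLeish variable by circular symmetry), apply \theoremref{Theorem:McLeishCCDF}, and average with weights $p$ and $1-p$. The only cosmetic difference is that the paper starts from the complement of the correct-decision inequality (threshold $\eta_{\pm}$, so $-\eta_{\pm}$ appears directly), whereas you start from the decide-$\defvec{s}_{\mp}$ inequality and invoke $\eta_{-}=-\eta_{+}$ at the end --- the same argument, since that identity is precisely what makes the two decision conditions complementary.
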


\begin{proof}
From \eqref{Eq:MAPDecisionRuleForBinaryCoherentSignallingOverAWMNChannels}, we can write the decision correct decision when assuming that $\defvec{s}_{\pm}$ is transmitted as follows 
\ifCLASSOPTIONtwocolumn
\begin{multline}
    {\lVert\defvec{R}_{c}\rVert}^2+
    H^{2}{\lVert\defvec{s}_{\pm}\rVert}^{2}
    -2H\RealPart{\defvec{s}^{H}_{\pm}\defvec{R}_{c}}
    +
    {\eta}_{\pm}\leq\\
    {\lVert\defvec{R}_{c}\rVert}^2+
    H^{2}{\lVert\defvec{s}_{\mp}\rVert}^{2}
    -2H\RealPart{\defvec{s}^{H}_{\mp}\defvec{R}_{c}},
\end{multline}
\else
\begin{equation}
    {\lVert\defvec{R}_{c}\rVert}^2+
    H^{2}{\lVert\defvec{s}_{\pm}\rVert}^{2}
    -2H\RealPart{\defvec{s}^{H}_{\pm}\defvec{R}_{c}}
    +
    {\eta}_{\pm}\leq
    {\lVert\defvec{R}_{c}\rVert}^2+
    H^{2}{\lVert\defvec{s}_{\mp}\rVert}^{2}
    -2H\RealPart{\defvec{s}^{H}_{\mp}\defvec{R}_{c}},
\end{equation}
\fi
where inserting \eqref{Eq:AWMNBinarySignallingReceivedVector} yields 
\begin{equation}
    {D}\leq{H}^{2}{\lVert\defvec{s}_{\pm}-\defvec{s}_{\mp}\rVert}^{2}-{\eta}_{\pm},
\end{equation}
where the decision variable $D$ is given by 
\begin{equation}
    {D}=-2H\RealPart{(\defvec{s}_{\pm}-\defvec{s}_{\mp})^{H}\defrvec{Z}_{c}},
\end{equation}
where $(\defvec{s}_{\mp}-\defvec{s}_{\pm})^{H}\defrvec{Z}_{c}$ follows a \ac{CCS} McLeish distribution with zero mean and ${N_{0}}{\lVert\defvec{s}_{\pm}-\defvec{s}_{\mp}\rVert}^{2}/{2}$ variance per dimension. Therefore, $D\!\sim\!\mathcal{M}_{\nu}(0,2H^2N_{0}{\lVert\defvec{s}_{\pm}-\defvec{s}_{\mp}\rVert}^{2})$, and accordingly, a decision error occurs when ${D}\!>\!H^{2}{\lVert\defvec{s}_{\pm}-\defvec{s}_{\mp}\rVert}^{2}-{\eta}_{\pm}$. With the aid of \theoremref{Theorem:McLeishCCDF}, when $\defvec{s}_{\pm}$ is transmitted, we write the probability of decision error as
\begin{equation}\label{Eq:ErrorneousDecisionProbabilityForAWMNVectorChannels}
    \Pr\bigl\{\bigl.e\,\bigr|\,H,\defvec{s}_{\pm}\bigr\}=
        Q_{\nu}\biggl(
            \frac{H^{2}{\lVert\defvec{s}_{\pm}-\defvec{s}_{\mp}\rVert}^{2}-{\eta}_{\pm}}
                {H{\lVert\defvec{s}_{\pm}-\defvec{s}_{\mp}\rVert}\sqrt{2{N}_{0}}}
            \biggr),
\end{equation}
From \eqref{Eq:ModulationSymbolErrorProbabilityB}, we write $\Pr\{\bigl.e\,|\,H\}\!=\!
\Pr\{\bigl.e\,|\,H,\defvec{s}_{+}\}\Pr\{\defvec{s}_{+}\}+
\Pr\{\bigl.e\,|\,H,\defvec{s}_{-}\}\Pr\{\defvec{s}_{-}\}$, where replacing 
\eqref{Eq:ErrorneousDecisionProbabilityForAWMNVectorChannels} yields \eqref{Eq:MAPDecisionErrorProbabilityForBinaryCoherentSignallingOverAWMNChannels}, which completes the proof of \theoremref{Theorem:MAPDecisionErrorProbabilityForBinaryCoherentSignallingOverAWMNChannels}.
\end{proof}

In the special case where the binary modulation symbols are equiprobable (i.e., when $\Pr\{\defvec{s}_{\pm}\}\!=\!{1}/{2}$), we have the threshold value $\eta_{\pm}\!=\!0$ and then reduce the \ac{MAP} rule to the \ac{ML} rule given below. 

\begin{theorem}\label{Theorem:MLDecisionRuleForBinaryCoherentSignallingOverAWMNChannels}
In the case where coherent binary signaling is used, the \ac{ML} decision rule, given in \emph{\theoremref{Theorem:MLDecisionRuleForPrecodedComplexAWMNVectorChannel}}, reduces to
\begin{equation}\label{Eq:MLDecisionRuleForBinaryCoherentSignallingOverAWMNChannels}
	\text{Decide~$\defvec{s}_{\pm}$~iff~~} 
	\lVert\defvec{R}_{c}-H\defvec{s}_{\pm}\rVert\leq\lVert\defvec{R}_{c}-H\defvec{s}_{\mp}\rVert.
\end{equation}
with the decision regions $\mathbb{D}^{\emph{\text{ML}}}_{+}$ and $\mathbb{D}^{\emph{\text{ML}}}_{-}$, given by 
\begin{equation}\label{Eq:MLDecisionRegionForBinaryCoherentSignallingOverAWMNChannels}
\!\!\!\!\mathbb{D}^{\emph{\text{ML}}}_{\pm}=
    \Bigl\{\defvec{r}\in\mathbb{C}^{L}\,
		\Bigl|\,
			\lVert\defvec{R}_{c}-H\defvec{s}_{\pm}\rVert\leq\lVert\defvec{R}_{c}-H\defvec{s}_{\mp}\rVert
		\Bigr.
	\Bigr\}.\!\!
\end{equation}
\end{theorem}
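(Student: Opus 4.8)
The plan is to specialize the general coherent \ac{ML} decision rule of \theoremref{Theorem:MLDecisionRuleForPrecodedComplexAWMNVectorChannel} to the binary constellation $M\!=\!2$, which is most transparently accomplished by reading off the equiprobable limit of the \ac{MAP} rule already established in \theoremref{Theorem:MAPDecisionRuleForBinaryCoherentSignallingOverAWMNChannels}. I would open with the latter route, since the threshold term $\eta_{\pm}$ is available there in closed form: substituting the equiprobable assumption $p\!=\!1/2$ into $\eta_{\pm}\!=\!N_{0}\log\bigl((1\mp1\pm2p)/(1\pm1\mp2p)\bigr)$ collapses both the numerator and denominator to unity for each choice of sign, whence $\eta_{+}\!=\!\eta_{-}\!=\!N_{0}\log(1)\!=\!0$. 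Equivalently, one may simply note that with two equiprobable symbols the criterion $\widehat{m}\!=\!\argmin_{m}\lVert\defvec{R}_{c}-H\defvec{s}_{m}\rVert^{2}$ of \theoremref{Theorem:MLDecisionRuleForPrecodedComplexAWMNVectorChannel} is a comparison between exactly two quantities.

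With $\eta_{\pm}\!=\!0$, the \ac{MAP} criterion \eqref{Eq:MAPDecisionRuleForBinaryCoherentSignallingOverAWMNChannels} becomes the statement that $\defvec{s}_{\pm}$ is chosen precisely when $\lVert\defvec{R}_{c}-H\defvec{s}_{\pm}\rVert^{2}\leq\lVert\defvec{R}_{c}-H\defvec{s}_{\mp}\rVert^{2}$. Since the Euclidean norm and its square are both nonnegative and $t\mapsto t^{2}$ is strictly increasing on $\mathbb{R}_{+}$, the inequality between the squared distances is equivalent to the inequality between the distances themselves, which yields the decision rule \eqref{Eq:MLDecisionRuleForBinaryCoherentSignallingOverAWMNChannels}. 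The decision regions $\mathbb{D}^{\text{ML}}_{\pm}$ in \eqref{Eq:MLDecisionRegionForBinaryCoherentSignallingOverAWMNChannels} then follow immediately by collecting all $\defvec{r}\in\mathbb{C}^{L}$ for which this minimum-distance test selects $\defvec{s}_{\pm}$.

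There is essentially no analytical obstacle here, as the result is a direct specialization of earlier theorems; the work is purely structural. The only point requiring mild care is the bookkeeping of the coupled $\pm/\mp$ signs, namely confirming that the single equiprobable substitution $p\!=\!1/2$ simultaneously nullifies $\eta_{+}$ and $\eta_{-}$, so that both decision regions acquire the symmetric minimum-distance form at once rather than only one of them. As a consistency check I would remark that the boundary separating $\mathbb{D}^{\text{ML}}_{+}$ and $\mathbb{D}^{\text{ML}}_{-}$ is the perpendicular bisector of the segment joining $H\defvec{s}_{+}$ and $H\defvec{s}_{-}$, in agreement with the nearest-neighbour interpretation of the minimum-distance detector discussed after \theoremref{Theorem:MLDecisionRuleForComplexAWMNVectorChannel} and with the geometry illustrated in \figref{Figure:DecisionRegionsForBinaryEquiprobableSignals}.
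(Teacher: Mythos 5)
Your proposal is correct and follows exactly the paper's own route: the paper likewise deduces the result from \theoremref{Theorem:MAPDecisionRuleForBinaryCoherentSignallingOverAWMNChannels} by setting $\Pr\{\defvec{s}_{\pm}\}\!=\!1/2$, which is precisely your verification that $\eta_{+}\!=\!\eta_{-}\!=\!0$. The additional details you supply (the explicit collapse of the threshold expression and the monotonicity of $t\mapsto t^{2}$ linking squared and unsquared distances) are sound and merely flesh out what the paper leaves implicit.
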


\begin{proof}
The proof is obvious using \theoremref{Theorem:MAPDecisionRuleForBinaryCoherentSignallingOverAWMNChannels} by assuming that the symbols are equiprobable, i.e., $\Pr\{\defvec{s}_{\pm}\}\!=\!{1}/{2}$. 
\end{proof}

As it can be easily observed from \theoremref{Theorem:MAPDecisionRuleForBinaryCoherentSignallingOverAWMNChannels}, the decision regions $\mathbb{D}^{\text{ML}}_{+}$ and $\mathbb{D}^{\text{ML}}_{-}$ are separated by a perpendicular bisector to the hyperline connecting $H\defvec{s}_{+}$ and $H\defvec{s}_{-}$. As~a~result~of~the fact that the decision error probabilities when the modulation symbol $\defvec{s}_{+}$ or $\defvec{s}_{-}$ is transmitted are equal, we have a symmetry with respect to the perpendicular bisector (i.e., the minimum distance of $\defvec{s}_{+}$ and that of $\defvec{s}_{-}$ from the perpendicular bisector are certainly equal).  

\begin{theorem}\label{Theorem:MLDecisionErrorProbabilityForBinaryCoherentSignallingOverAWMNChannels}
For the \ac{ML} decision rule, given by \theoremref{Theorem:MLDecisionRuleForBinaryCoherentSignallingOverAWMNChannels}, the conditional \ac{BER} of binary signaling is given by 
\begin{equation}\label{Eq:MLDecisionErrorProbabilityForBinaryCoherentSignallingOverAWMNChannels}
 \Pr\bigl\{e\,\bigl|\,H\bigr.\bigr\}=
        {Q}_{\nu}\biggl(
            \frac{H{\lVert\defvec{s}_{+}-\defvec{s}_{-}\rVert}}
                {\sqrt{2{N}_{0}}}
            \biggr).
\end{equation} 
\end{theorem}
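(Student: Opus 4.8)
The plan is to obtain this result as the equiprobable special case of \theoremref{Theorem:MAPDecisionErrorProbabilityForBinaryCoherentSignallingOverAWMNChannels}, so that the heavy lifting has already been done. First I would set $p\!=\!1/2$ in the threshold value ${\eta}_{\pm}\!=\!N_{0}\log\bigl(\frac{1\mp1\pm2p}{1\pm1\mp2p}\bigr)$ that governs \theoremref{Theorem:MAPDecisionRuleForBinaryCoherentSignallingOverAWMNChannels}. A one-line computation shows that the ratio inside the logarithm reduces to $p/(1-p)$ for ${\eta}_{+}$ and to $(1-p)/p$ for ${\eta}_{-}$, both of which equal unity when $p\!=\!1/2$; hence ${\eta}_{+}\!=\!{\eta}_{-}\!=\!0$. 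This is precisely the condition under which the \ac{MAP} rule collapses into the \ac{ML} rule of \theoremref{Theorem:MLDecisionRuleForBinaryCoherentSignallingOverAWMNChannels}, confirming that the hypotheses of the present theorem are the equiprobable instance of the MAP setting.

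Next I would substitute ${\eta}_{+}\!=\!{\eta}_{-}\!=\!0$ into \eqref{Eq:MAPDecisionErrorProbabilityForBinaryCoherentSignallingOverAWMNChannels}. The argument of each McLeish's \ac{Q-function} then simplifies from $\frac{H^{2}\lVert\defvec{s}_{+}-\defvec{s}_{-}\rVert^{2}-{\eta}_{\pm}}{H\lVert\defvec{s}_{+}-\defvec{s}_{-}\rVert\sqrt{2N_{0}}}$ to $\frac{H\lVert\defvec{s}_{+}-\defvec{s}_{-}\rVert}{\sqrt{2N_{0}}}$ after cancelling one factor of $H\lVert\defvec{s}_{+}-\defvec{s}_{-}\rVert$. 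Since the two $Q_{\nu}(\cdot)$ terms now carry identical arguments, their weighted combination collapses to $\bigl(p+(1-p)\bigr)Q_{\nu}(\cdot)\!=\!Q_{\nu}(\cdot)$, which is exactly \eqref{Eq:MLDecisionErrorProbabilityForBinaryCoherentSignallingOverAWMNChannels}.

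As an alternative I would give a self-contained derivation mirroring the proof of \theoremref{Theorem:MAPDecisionErrorProbabilityForBinaryCoherentSignallingOverAWMNChannels}. Starting from the \ac{ML} rule of \theoremref{Theorem:MLDecisionRuleForBinaryCoherentSignallingOverAWMNChannels}, when $\defvec{s}_{\pm}$ is sent a decision error occurs iff the decision variable ${D}\!=\!-2H\RealPart{(\defvec{s}_{\pm}-\defvec{s}_{\mp})^{H}\defrvec{Z}_{c}}$ exceeds $H^{2}\lVert\defvec{s}_{\pm}-\defvec{s}_{\mp}\rVert^{2}$. By the \ac{CS} property of $\defrvec{Z}_{c}\!\sim\!\mathcal{CM}^{L}_{\nu}(\defvec{0},\frac{N_{0}}{2}\defmat{I})$, the projection $(\defvec{s}_{\pm}-\defvec{s}_{\mp})^{H}\defrvec{Z}_{c}$ is a \ac{CCS} McLeish distribution, so $D\!\sim\!\mathcal{M}_{\nu}(0,2H^{2}N_{0}\lVert\defvec{s}_{+}-\defvec{s}_{-}\rVert^{2})$; then \theoremref{Theorem:McLeishCCDF} gives $\Pr\{e\,|\,H,\defvec{s}_{\pm}\}\!=\!Q_{\nu}\bigl(\frac{H\lVert\defvec{s}_{+}-\defvec{s}_{-}\rVert}{\sqrt{2N_{0}}}\bigr)$, independent of which symbol is sent, and averaging over the two equiprobable hypotheses leaves this expression unchanged.

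There is essentially no analytical obstacle here, as the statement is a clean specialization. The only point requiring care is the bookkeeping that makes both conditional error probabilities coincide: I must verify that the minimum-distance symmetry of the \ac{ML} rule forces $\Pr\{e\,|\,H,\defvec{s}_{+}\}\!=\!\Pr\{e\,|\,H,\defvec{s}_{-}\}$, so that averaging over the two equiprobable hypotheses does not perturb the argument of $Q_{\nu}$. This symmetry is guaranteed by the fact that $\defrvec{Z}_{c}$ is \ac{CS}, whence its projection onto any fixed direction is a zero-mean, symmetric McLeish distribution, which is the same ingredient that underlies the vanishing of the threshold ${\eta}_{\pm}$ in the first approach.
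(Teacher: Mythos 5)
Your proposal is correct and follows essentially the same route as the paper: the paper's proof is exactly the specialization of \theoremref{Theorem:MAPDecisionErrorProbabilityForBinaryCoherentSignallingOverAWMNChannels} to $p\!=\!1/2$, which makes ${\eta}_{+}\!=\!{\eta}_{-}\!=\!0$ and collapses the two $Q_{\nu}$ terms into one. Your more detailed bookkeeping of the threshold cancellation, and your alternative self-contained derivation via the decision variable $D$, simply spell out what the paper compresses into one line and reuse the argument already given in the proof of the MAP theorem.
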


\begin{proof}
The proof is obvious setting $p\!=\!\frac{1}{2}$ in \theoremref{Theorem:MAPDecisionErrorProbabilityForBinaryCoherentSignallingOverAWMNChannels}.
\end{proof}

Let us consider the special cases of \theoremref{Theorem:MLDecisionErrorProbabilityForBinaryCoherentSignallingOverAWMNChannels} for certain binary modulation constellations. When the binary modulation symbols 
$\defvec{s}_{+}$ and $\defvec{s}_{-}$ are equiprobable (i.e., $\Pr\{\defvec{s}_{\pm}\}\!=\!{1}/{2}$) and have equal power (i.e., ${\lVert\defvec{s}_{+}\rVert}^2={\lVert\defvec{s}_{-}\rVert}^2$), we can rewrite the distance between $\defvec{s}_{+}$ and $\defvec{s}_{-}$ as
\begin{equation}\label{Eq:DistanceForEqualEnergyBinarySignalling}
\lVert\defvec{s}_{+}-\defvec{s}_{-}\rVert=\sqrt{2E_{\defrvec{S}}(1-\rho)},
\end{equation}
where $E_{\defrvec{S}}\!=\!\mathbb{E}[\defrvec{S}^H\defrvec{S}]$ denotes the transmitted average power and can be written in more details as follows
\begin{subequations}\label{Eq:AverageSignalPowerForBinarySignalling}
\begin{eqnarray}
    \label{Eq:AverageSignalPowerForBinarySignallingA}
    E_{\defrvec{S}}&=&\Pr\{\defvec{s}_{+}\}\lVert\defvec{s}_{+}\rVert^2+\Pr\{\defvec{s}_{-}\}\lVert\defvec{s}_{-}\rVert^2,\\
    \label{Eq:AverageSignalPowerForBinarySignallingB}    
    &=&{\frac{1}{2}}\lVert\defvec{s}_{+}\rVert^2+\frac{1}{2}\lVert\defvec{s}_{-}\rVert^2,\\
    \label{Eq:AverageSignalPowerForBinarySignallingC}    
    &=&E_{+}\text{~(or $E_{-}$)},
\end{eqnarray}
\end{subequations}
Further, in \eqref{Eq:DistanceForEqualEnergyBinarySignalling}, $\rho$ denotes the cross-correlation coefficient between the modulation symbols $\defvec{s}_{+}$ and $\defvec{s}_{-}$, defined by
\begin{subequations}\label{Eq:BinarySignallingCorrelation}
\begin{eqnarray}
    \label{Eq:BinarySignallingCorrelationA}
    \rho
    &=&
        \frac{\RealPart{\defvec{s}^{H}_{+}\defvec{s}_{-}}}
            {\lVert\defvec{s}_{+}\rVert\lVert\defvec{s}_{-}\rVert},\\
    \label{Eq:BinarySignallingCorrelationB}
    &=&\frac{1}{E_{\defrvec{S}}}
        \Bigl(
            \RealPart{\defvec{s}^T_{+}}\RealPart{\defvec{s}_{-}}+
            \ImagPart{\defvec{s}^T_{+}}\ImagPart{\defvec{s}_{-}}
        \Bigr).
\end{eqnarray}
\end{subequations}
\begin{figure}[tp] 
\centering
\begin{subfigure}{0.7\columnwidth}
    \centering
    \includegraphics[clip=true, trim=0mm 0mm 0mm 0mm, width=1.0\columnwidth,height=0.85\columnwidth]{./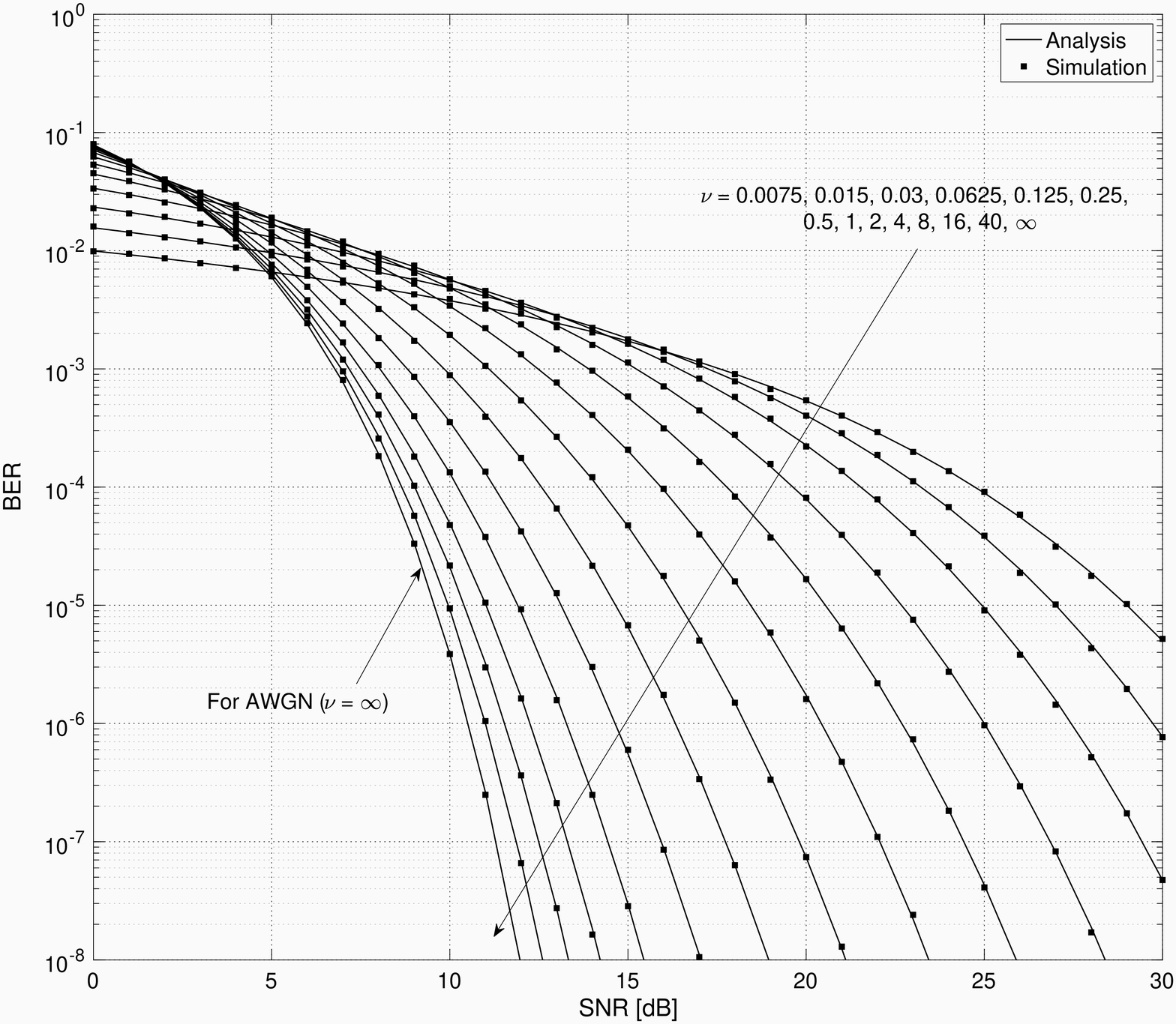}
    \caption{With respect to \ac{SNR}.}
    \vspace{5mm} 
    \label{Figure:ConditionalBEPForBPSKA}
\end{subfigure}
{~~~} 
\begin{subfigure}{0.7\columnwidth}
    \centering
    \includegraphics[clip=true, trim=0mm 0mm 0mm 0mm, width=1.0\columnwidth,height=0.85\columnwidth]{./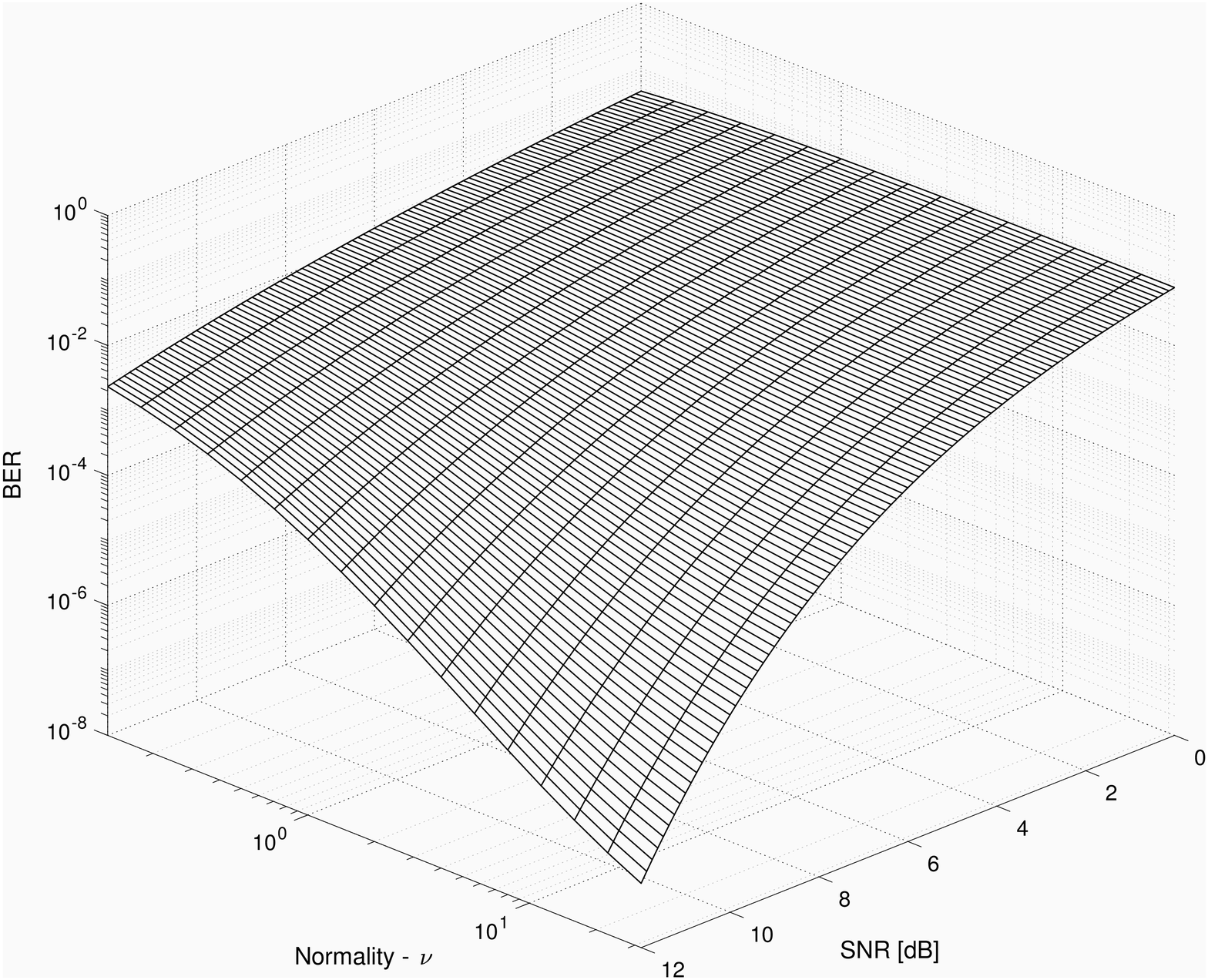}
    \caption{With respect to \ac{SNR} and normality.}
    \label{Figure:ConditionalBEPForBPSKB}
\end{subfigure}
\caption{The \ac{BER} of \ac{BPSK} signaling over \ac{AWMN} channels.}
\label{Figure:ConditionalBEPForBPSK}
\vspace{-2mm}
\end{figure}
\begin{figure}[tp] 
\centering
\begin{subfigure}{0.7\columnwidth}
    \centering
    \includegraphics[clip=true, trim=0mm 0mm 0mm 0mm, width=1.0\columnwidth,height=0.85\columnwidth]{./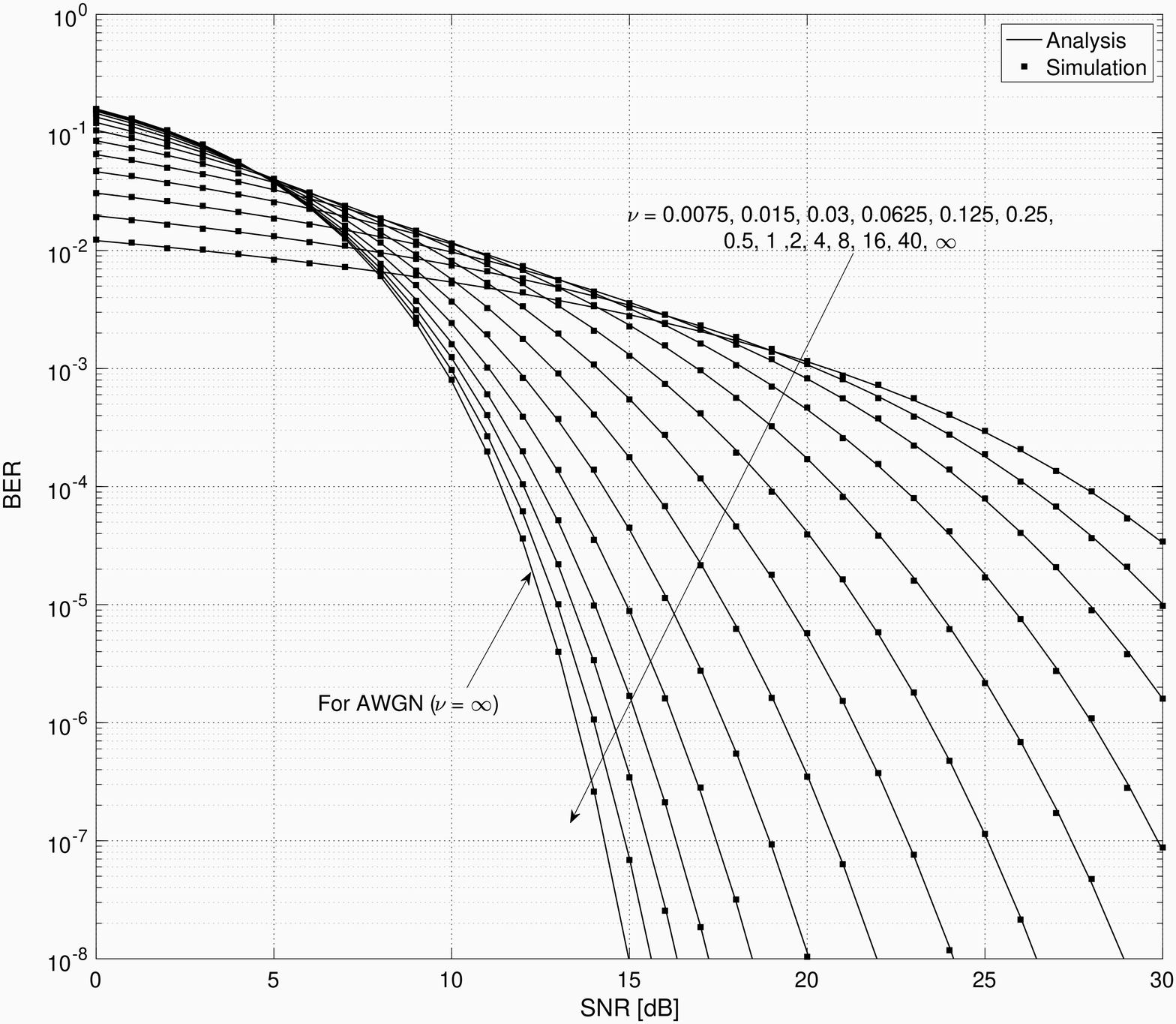}
    \caption{With respect to \ac{SNR}.}
    \vspace{5mm}
    \label{Figure:ConditionalBEPForBFSKA}
\end{subfigure}
{~~~}
\begin{subfigure}{0.7\columnwidth}
    \centering
    \includegraphics[clip=true, trim=0mm 0mm 0mm 0mm, width=1.0\columnwidth,height=0.85\columnwidth]{./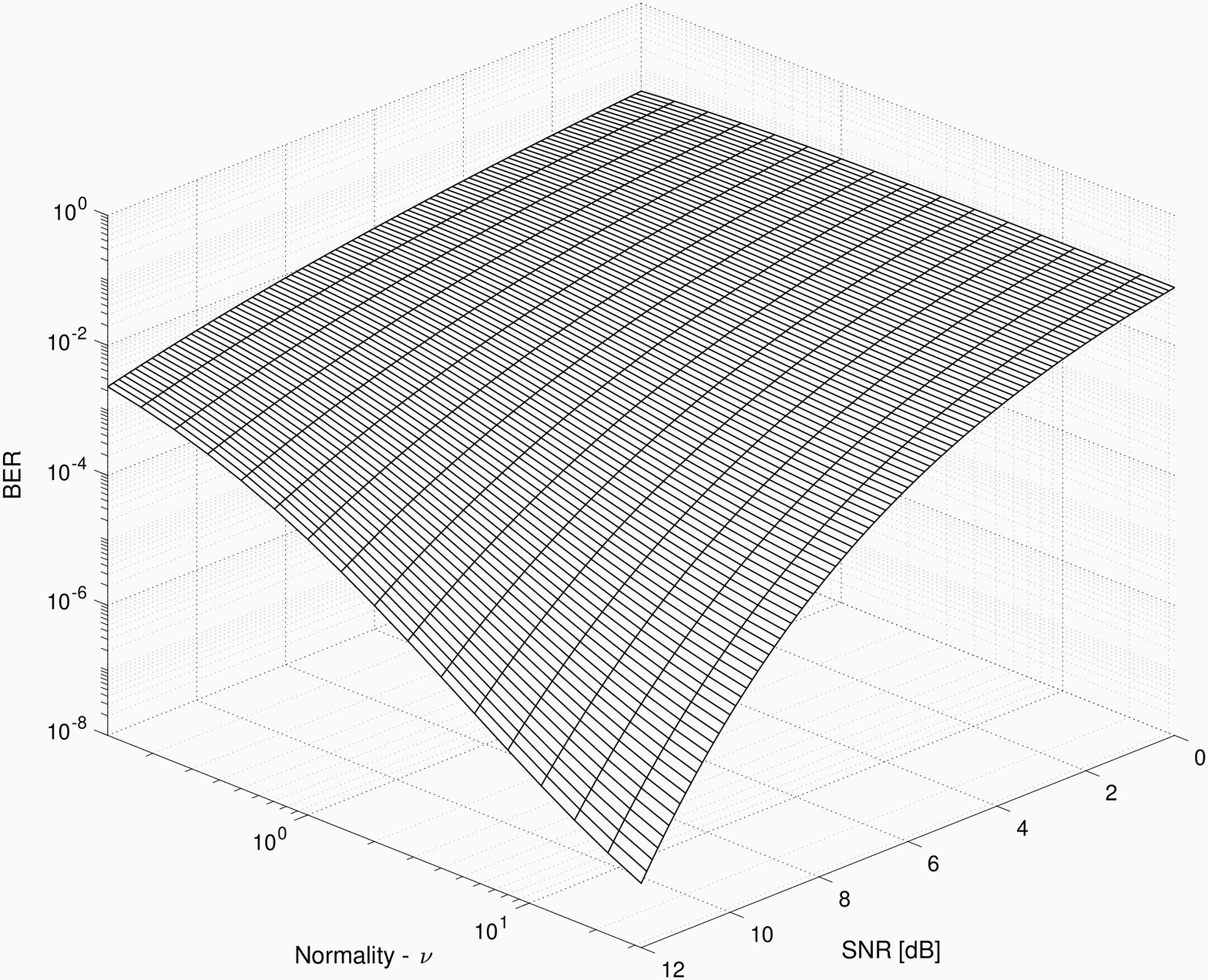}
    \caption{With respect to \ac{SNR} and normality.}
    \label{Figure:ConditionalBEPForBFSKB}
\end{subfigure}
\caption{The \ac{BER} of \ac{BFSK} signaling over \ac{AWMN} channels.}
\label{Figure:ConditionalBEPForBFSK}
\vspace{-2mm} 
\end{figure}
It is consequently valuable to notice that, since ${-1}\leq\!\rho\!\leq{1}$, \eqref{Eq:DistanceForEqualEnergyBinarySignalling} is maximally increased when $\rho\!=\!{-1}$, i.e., when the the binary modulation symbols are antipodal (i.e., when $\defvec{s}_{\pm}\!=\!\mp\defvec{s}_{\mp}$). Consequently, substituting \eqref{Eq:DistanceForEqualEnergyBinarySignalling} into \eqref{Eq:MLDecisionErrorProbabilityForBinaryCoherentSignallingOverAWMNChannels} results in 
\begin{equation}\label{Eq:ConditionalBEPForCoherentBinarySignalingUsingCorrelation}
	\Pr\bigl\{e\,\bigl|\,H\bigr.\bigr\}=Q_{\nu}\Bigl(\sqrt{(1-\rho)\gamma}\Bigr),
\end{equation}
where $\gamma$ is the instantaneous \ac{SNR} during transmission of one modulation symbol and defined by 
\begin{subequations}\label{Eq:SignalToNoiseDefinition}
\begin{eqnarray}
    \label{Eq:SignalToNoiseDefinitionA}
    \gamma
    &=&\frac{\mathbb{E}[\langle{{H}\defrvec{S},\defrvec{R}_{c}}\rangle]^2}
            {\mathrm{Var}[
                \langle{{H}\defrvec{S},\defrvec{R}_{c}}\rangle
            ]},\\
    \label{Eq:SignalToNoiseDefinitionB}        
    &=&\frac{\mathbb{E}[\langle{{H}\defrvec{S},\defrvec{R}_{c}}\rangle]^2}
            {
            \mathbb{E}[
                {\langle{{H}\defrvec{S},\defrvec{R}_{c}}\rangle}^2
            ]
            -
            \mathbb{E}[
                {\langle{{H}\defrvec{S},\defrvec{R}_{c}}\rangle}
            ]^2
            },\\
    \label{Eq:SignalToNoiseDefinitionC}
    &=&{H}^2\frac{E_S}{N_0},
\end{eqnarray}
\end{subequations}
with the aid of the optimal decision rules given above. 

\begin{theorem}\label{Theorem:ConditionalBEPForBPSKSignaling}
The contional \ac{BER} $\Pr\bigl\{e\,\bigl|\,H\bigr.\bigr\}$ of \ac{BPSK} signaling over \ac{CCS} \ac{AWMN} channels is given by
\begin{equation}\label{Eq:ConditionalBEPForBPSK}
	\Pr\bigl\{e\,\bigl|\,H\bigr.\bigr\}=Q_{\nu}\bigl(\sqrt{2\gamma}\bigr),
\end{equation}
where $\gamma$ is the instantaneous \ac{SNR} defined above. 
\end{theorem}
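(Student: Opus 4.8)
The plan is to obtain \eqref{Eq:ConditionalBEPForBPSK} as an immediate specialization of the general equal-power, equiprobable binary result \eqref{Eq:ConditionalBEPForCoherentBinarySignalingUsingCorrelation}, which already expresses the conditional \ac{BER} as $\Pr\{e\,|\,H\}\!=\!Q_{\nu}(\sqrt{(1-\rho)\gamma})$ in terms of the cross-correlation coefficient $\rho$ between the two modulation symbols. Since all the analytic content (the \ac{ML} decision rule and the McLeish \ac{Q-function} representation of the tail probability of the decision variable) is already packaged in \theoremref{Theorem:MLDecisionErrorProbabilityForBinaryCoherentSignallingOverAWMNChannels}, the only work remaining is to identify the value of $\rho$ that characterizes \ac{BPSK} and to substitute it.

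First I would invoke the defining feature of \ac{BPSK}: the two signal vectors are antipodal, $\defvec{s}_{+}\!=\!-\defvec{s}_{-}$ (equivalently $\defvec{s}_{\pm}\!=\!\mp\defvec{s}_{\mp}$), and are transmitted with equal a priori probabilities $\Pr\{\defvec{s}_{\pm}\}\!=\!1/2$ and equal power $\lVert\defvec{s}_{+}\rVert^2\!=\!\lVert\defvec{s}_{-}\rVert^2\!=\!E_{\defrvec{S}}$. These are exactly the standing hypotheses under which \eqref{Eq:ConditionalBEPForCoherentBinarySignalingUsingCorrelation} was derived, so no additional decision-region analysis is required and the chain from \theoremref{Theorem:MLDecisionRuleForBinaryCoherentSignallingOverAWMNChannels} down to \eqref{Eq:ConditionalBEPForCoherentBinarySignalingUsingCorrelation} applies verbatim.

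Next I would evaluate the cross-correlation coefficient from its definition \eqref{Eq:BinarySignallingCorrelationA}. Substituting $\defvec{s}_{-}\!=\!-\defvec{s}_{+}$ gives $\defvec{s}^{H}_{+}\defvec{s}_{-}\!=\!-\lVert\defvec{s}_{+}\rVert^2$ together with $\lVert\defvec{s}_{-}\rVert\!=\!\lVert\defvec{s}_{+}\rVert$, whence $\rho\!=\!\RealPart{\defvec{s}^{H}_{+}\defvec{s}_{-}}/(\lVert\defvec{s}_{+}\rVert\lVert\defvec{s}_{-}\rVert)\!=\!-1$. This agrees with the remark following \eqref{Eq:BinarySignallingCorrelation} that the inter-symbol distance \eqref{Eq:DistanceForEqualEnergyBinarySignalling} attains its maximum precisely at the antipodal configuration. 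Inserting $\rho\!=\!-1$ into \eqref{Eq:ConditionalBEPForCoherentBinarySignalingUsingCorrelation} then yields $\Pr\{e\,|\,H\}\!=\!Q_{\nu}(\sqrt{(1-(-1))\gamma})\!=\!Q_{\nu}(\sqrt{2\gamma})$, which is \eqref{Eq:ConditionalBEPForBPSK}.

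There is no genuine analytic obstacle here; the argument is essentially bookkeeping, and the hard part was already dispatched upstream in the tail-probability computation of \theoremref{Theorem:MLDecisionErrorProbabilityForBinaryCoherentSignallingOverAWMNChannels}. The only point warranting mild care is confirming that \ac{BPSK} simultaneously satisfies both standing assumptions—equiprobability and equal power—so that the general formula is legitimately applicable; once that is granted, the substitution $\rho=-1$ closes the proof.
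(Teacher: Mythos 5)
Your proposal is correct and follows essentially the same route as the paper: the paper's proof likewise invokes the antipodal property $\defvec{s}_{\pm}=-\defvec{s}_{\mp}$ with equiprobable, equal-power symbols, computes $\rho=-1$ from \eqref{Eq:BinarySignallingCorrelationA}, and substitutes into \eqref{Eq:ConditionalBEPForCoherentBinarySignalingUsingCorrelation} to obtain \eqref{Eq:ConditionalBEPForBPSK}. No gaps.
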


\begin{proof}
Note that the \ac{BPSK} symbols are defined by $\{\defvec{s}_{+},\defvec{s}_{-}\}$ such that $\defvec{s}_{\pm}\!=\!-\defvec{s}_{\mp}$, which means that $\defvec{s}_{+}$ and $\defvec{s}_{+}$ have equal power. In case of that they are equiprobable, we have 
$\lVert\defvec{s}_{\pm}\rVert^2\!=\!E_\mathcal{S}$. Therefore, with the aid of \eqref{Eq:BinarySignallingCorrelationA},  $\rho\!=\!{-1}$, and then
\eqref{Eq:ConditionalBEPForCoherentBinarySignalingUsingCorrelation} simplifies to 
\eqref{Eq:ConditionalBEPForBPSK}, which proves \theoremref{Theorem:ConditionalBEPForBPSKSignaling}.
\end{proof}

\begin{theorem}\label{Theorem:ConditionalBEPForBFSKSignaling}
The contional \ac{BER} $\Pr\bigl\{e\,\bigl|\,H\bigr.\bigr\}$ of \ac{BFSK} signaling over \ac{CCS} \ac{AWMN} channels is given by
\begin{equation}\label{Eq:ConditionalBEPForBFSKSignaling}
	\Pr\bigl\{e\,\bigl|\,H\bigr.\bigr\}=Q_{\nu}\Bigl(\sqrt{\gamma}\Bigr),
\end{equation}
where $\gamma$ is the instantaneous \ac{SNR} defined above.
\end{theorem}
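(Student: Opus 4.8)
The plan is to exploit the defining orthogonality of the two frequency-shift-keying tones and then reduce the statement to the general correlation-parameterized conditional \ac{BER} expression already established in \eqref{Eq:ConditionalBEPForCoherentBinarySignalingUsingCorrelation}. First I would recall that in \ac{BFSK} the modulation symbols $\defvec{s}_{+}$ and $\defvec{s}_{-}$ correspond to two distinct frequency tones that are orthogonal over the symbol duration; in the $L$-dimensional orthonormal vector representation this orthogonality reads $\RealPart{\defvec{s}^{H}_{+}\defvec{s}_{-}}=0$. Consequently, from the definition of the cross-correlation coefficient in \eqref{Eq:BinarySignallingCorrelationA}, one immediately obtains $\rho=0$, in contrast to the antipodal condition $\rho=-1$ used in the \ac{BPSK} case of \theoremref{Theorem:ConditionalBEPForBPSKSignaling}.

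Next I would verify that the two \ac{BFSK} symbols are equiprobable and carry equal power, so that the equal-energy distance relation \eqref{Eq:DistanceForEqualEnergyBinarySignalling} and hence the general conditional \ac{BER} formula \eqref{Eq:ConditionalBEPForCoherentBinarySignalingUsingCorrelation} are applicable. Under these assumptions $\lVert\defvec{s}_{\pm}\rVert^{2}=E_{\defrvec{S}}$ and the instantaneous \ac{SNR} is $\gamma=H^{2}E_{\defrvec{S}}/N_{0}$ as given in \eqref{Eq:SignalToNoiseDefinitionC}. This step confirms that the entire apparatus leading to \eqref{Eq:ConditionalBEPForCoherentBinarySignalingUsingCorrelation}, which itself descends from the \ac{ML} conditional \ac{BER} of \theoremref{Theorem:MLDecisionErrorProbabilityForBinaryCoherentSignallingOverAWMNChannels}, may be invoked without modification.

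Finally, substituting $\rho=0$ into \eqref{Eq:ConditionalBEPForCoherentBinarySignalingUsingCorrelation} collapses the square-root argument to $\sqrt{(1-0)\gamma}=\sqrt{\gamma}$, yielding $\Pr\bigl\{e\,\bigl|\,H\bigr.\bigr\}=Q_{\nu}(\sqrt{\gamma})$ and completing the proof. The development thus parallels the \ac{BPSK} derivation exactly, the only difference being the value of the cross-correlation coefficient dictated by the constellation geometry.

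The main obstacle is not computational---given the earlier results the derivation is essentially a one-line substitution---but rather the clean justification that \ac{BFSK} waveform orthogonality translates into $\RealPart{\defvec{s}^{H}_{+}\defvec{s}_{-}}=0$ in the chosen orthonormal basis, together with confirming that the equiprobable, equal-energy hypotheses underpinning \eqref{Eq:ConditionalBEPForCoherentBinarySignalingUsingCorrelation} genuinely hold for the \ac{BFSK} constellation. Once these structural facts are in place, the \ac{CS} property of the multivariate \ac{CCS} McLeish noise (invoked, as in the binary \ac{ML} analysis, to reduce the $L$-dimensional projection to a scalar McLeish distribution) guarantees that no further work is needed beyond the substitution $\rho=0$.
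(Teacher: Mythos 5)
Your proposal is correct and follows essentially the same route as the paper: the paper's own proof likewise invokes the orthogonality $\defvec{s}^{H}_{\pm}\defvec{s}_{\mp}=0$ of the equiprobable, equal-power BFSK symbols, deduces $\rho=0$ from \eqref{Eq:BinarySignallingCorrelationA}, and substitutes this into \eqref{Eq:ConditionalBEPForCoherentBinarySignalingUsingCorrelation} to obtain $Q_{\nu}(\sqrt{\gamma})$. Your additional remarks on verifying the equal-energy hypothesis and the CS property are consistent with the framework the paper relies on and introduce no deviation.
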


\begin{proof}
Note that the \ac{BFSK} symbols are defined by $\{\defvec{s}_{+},\defvec{s}_{-}\}$ such that $\defvec{s}^H_{\pm}\defvec{s}_{\mp}\!=\!0$. In case where $\defvec{s}_{+}$ and $\defvec{s}_{+}$ are equiprobable and have equal power, we obtain the correlation $\rho\!=\!{0}$ with the aid of \eqref{Eq:BinarySignallingCorrelationA}, and accordingly, we reduce \eqref{Eq:ConditionalBEPForCoherentBinarySignalingUsingCorrelation} into \eqref{Eq:ConditionalBEPForBFSKSignaling}, which proves \theoremref{Theorem:ConditionalBEPForBFSKSignaling}.
\end{proof}

As mentioned before, the impulsive nature of McLeish noise distribution is simply expressed by its normality $\nu\!\in\!\mathbb{R}_{+}$. As such, when $\nu\!\rightarrow\!\infty$, the impulsive nature vanishes and McLeish noise distribution approaches to Gaussian noise distribution. For that purpose, we demonstrated the effect of non-Gaussian noise on communication performance by plotting in \figref{Figure:ConditionalBEPForBPSK} and \figref{Figure:ConditionalBEPForBFSK} the conditional \ac{BER} of \ac{BPSK} and \ac{BFSK}  modulations, respectively, with respect to different normalities $\nu\!\in\!\{0.0075,\allowbreak{0.015},\allowbreak{0.03},\allowbreak{0.0625},\allowbreak{0.125},\allowbreak{0.25},\allowbreak{0.5},\allowbreak{1},2,4,8,16,40,\infty\}$. We evidently observe that the impulsive nature of McLeish noise distribution deteriorates the performance of binary modulations in high-\ac{SNR} regime while negligibly improves it in low-\ac{SNR} regime. 

The other binary keying signaling is the \ac{OOK} modulation, in case of which the binary information is transmitted by the presence or absence of a modulation symbol. Accordingly, the modulation symbols $\defvec{s}_{+}\!\neq\!\defvec{0}$ and $\defvec{s}_{-}\!=\!\defvec{0}$ are employed to transmit $1$ and $0$ binary information, respectively, with equal a priori probabilities $\Pr\{\defvec{s}_{+}\}\!=\!\Pr\{\defvec{s}_{-}\}\!=\!1/2$. At this point, note that the \ac{OOK} constellation can be achieved by shifting the \ac{BPSK}\,/\,\ac{BFSK} constellation up to $\defvec{s}_{-}\!=\!0$. Accordingly, 
$E_{+}\!=\!\lVert\defvec{s}_{+}\rVert^2\!\neq\!0$ and $E_{-}\!=\!\lVert\defvec{s}_{-}\rVert^2\!=\!0$, such that 
the average power of the \ac{OOK} modulation is written as $E_{\defrvec{S}}\!=\!\Pr\{\defvec{s}_{+}\}\lVert\defvec{s}_{+}\rVert^2\!+\!\Pr\{\defvec{s}_{-}\}\lVert\defvec{s}_{-}\rVert^2\!=\!\frac{1}{2}\lVert\defvec{s}_{+}\rVert^2$. With that result, we can rewrite the distance between $\defvec{s}_{+}$ and $\defvec{s}_{-}$ for the \ac{OOK} modulation as 
\begin{equation}\label{Eq:DistanceForOOKSignalling}
    \lVert\defvec{s}_{+}-\defvec{s}_{-}\rVert=\lVert\defvec{s}_{+}\rVert=\sqrt{2E_{\defrvec{S}}},
\end{equation}
 
\begin{theorem}\label{Theorem:ConditionalBEPForOOKSignaling}
The contional \ac{BER} $\Pr\bigl\{e\,\bigl|\,H\bigr.\bigr\}$ of \ac{OOK} signaling over \ac{CCS} \ac{AWMN} channels is given by
\begin{equation}\label{Eq:ConditionalBEPForOOKSignaling}
	\Pr\bigl\{e\,\bigl|\,H\bigr.\bigr\}=Q_{\nu}\Bigl(\sqrt{\gamma}\Bigr),
\end{equation}
where $\gamma$ is the instantaneous \ac{SNR} defined in \eqref{Eq:SignalToNoiseDefinition}. 
\end{theorem}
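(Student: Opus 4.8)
The plan is to derive this result as an immediate special case of the binary-signaling error probability already established, since OOK employs the equiprobable binary constellation $\{\defvec{s}_{+},\defvec{s}_{-}\}$ with $\Pr\{\defvec{s}_{+}\}\!=\!\Pr\{\defvec{s}_{-}\}\!=\!1/2$ and is detected with the ML rule. First I would invoke \theoremref{Theorem:MLDecisionErrorProbabilityForBinaryCoherentSignallingOverAWMNChannels}, whose hypothesis (equiprobable binary symbols over the uncorrelated complex AWMN vector channel of \eqref{Eq:PrecodedComplexAWMNVectorChannel}) is met verbatim by OOK. That theorem expresses the conditional BER purely through the Euclidean distance between the two modulation symbols as $\Pr\{e\,|\,H\}=Q_{\nu}\bigl(H\lVert\defvec{s}_{+}-\defvec{s}_{-}\rVert/\sqrt{2N_{0}}\bigr)$, so no new optimal-detection or region-integration argument is needed.

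The next step is to supply the OOK-specific distance. Since $\defvec{s}_{-}\!=\!\defvec{0}$, the inter-symbol distance collapses to $\lVert\defvec{s}_{+}-\defvec{s}_{-}\rVert=\lVert\defvec{s}_{+}\rVert$, and using that the average transmitted power satisfies $E_{\defrvec{S}}\!=\!\tfrac{1}{2}\lVert\defvec{s}_{+}\rVert^{2}$ yields $\lVert\defvec{s}_{+}-\defvec{s}_{-}\rVert=\sqrt{2E_{\defrvec{S}}}$, which is exactly \eqref{Eq:DistanceForOOKSignalling}. Substituting this into the distance-based BER and simplifying gives
\begin{equation}
\Pr\{e\,|\,H\}=Q_{\nu}\biggl(\frac{H\sqrt{2E_{\defrvec{S}}}}{\sqrt{2N_{0}}}\biggr)=Q_{\nu}\biggl(H\sqrt{\frac{E_{\defrvec{S}}}{N_{0}}}\biggr),
\end{equation}
whereupon the instantaneous SNR identity $\gamma\!=\!H^{2}E_{\defrvec{S}}/N_{0}$ from \eqref{Eq:SignalToNoiseDefinitionC} reduces the argument to $\sqrt{\gamma}$, delivering \eqref{Eq:ConditionalBEPForOOKSignaling}.

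The one point that requires genuine care — and the only place where the argument departs from the BPSK and BFSK derivations — is that OOK is \emph{not} an equal-power constellation, since $E_{-}\!=\!\lVert\defvec{s}_{-}\rVert^{2}\!=\!0\neq E_{+}$. Consequently I would \emph{not} route the proof through \eqref{Eq:ConditionalBEPForCoherentBinarySignalingUsingCorrelation} or the correlation-coefficient form \eqref{Eq:DistanceForEqualEnergyBinarySignalling}, both of which presuppose $\lVert\defvec{s}_{+}\rVert^{2}=\lVert\defvec{s}_{-}\rVert^{2}$; applying them here would misrelate $\lVert\defvec{s}_{+}-\defvec{s}_{-}\rVert$ to $E_{\defrvec{S}}$. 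The correct path is to remain with the general distance-based expression \eqref{Eq:MLDecisionErrorProbabilityForBinaryCoherentSignallingOverAWMNChannels} and feed it the unequal-power OOK distance \eqref{Eq:DistanceForOOKSignalling}. Beyond this bookkeeping the derivation is a one-line substitution, so I anticipate no substantive obstacle.
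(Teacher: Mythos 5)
Your proposal is correct and follows exactly the paper's own route: substitute the OOK distance \eqref{Eq:DistanceForOOKSignalling} into the general binary ML result of \theoremref{Theorem:MLDecisionErrorProbabilityForBinaryCoherentSignallingOverAWMNChannels} and invoke the SNR identity \eqref{Eq:SignalToNoiseDefinitionC}. Your added caution about avoiding the equal-power correlation formula \eqref{Eq:ConditionalBEPForCoherentBinarySignalingUsingCorrelation} is sound and consistent with why the paper handles OOK through the distance-based expression rather than the correlation coefficient.
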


\begin{proof}
The proof is obvious inserting \eqref{Eq:DistanceForOOKSignalling} into \theoremref{Theorem:MLDecisionErrorProbabilityForBinaryCoherentSignallingOverAWMNChannels} and using \eqref{Eq:SignalToNoiseDefinitionC}.
\end{proof}

At the moment, it has been investigated to obtain closed-form expressions for the conditional \ac{BER} performance of binary signaling over \ac{AWMN} channels. In the following, we consider the conditional \ac{SER} of M-ary signaling over \ac{CCS} \ac{AWMN} vector channels. 

\paragraph{Conditional \ac{SER} of \ac{M-ASK} Modulation}
\label{Section:SignallingOverAWMNChannels:CoherentSignalling:SymbolErrorProbability:MASKModulation}
Let $\defrvec{S}\!=\!\{\defvec{s}_{1},\defvec{s}_{2},\ldots,\defvec{s}_{M}\}$ denote the \ac{M-ASK} constellation such that its constellation center is zero (i.e., $\defvec{s}_{1}+\defvec{s}_{2}+\ldots+\defvec{s}_{M}=\defvec{0}$) and that 
$\defvec{s}^H_{m}\defvec{s}_{\widehat{m}}\!=\!\defvec{s}^H_{\widehat{m}}\defvec{s}_{m}$
for all ${m}\!\neq\!\widehat{m}$. Accordingly, the correlation between $\defvec{s}_{m}$ and $\defvec{s}_{\widehat{m}}$ for all ${m}\neq\widehat{m}$ is given by
\begin{equation}
 \rho_{m\widehat{m}}=\frac{\RealPart{\defvec{s}^H_{m}\defvec{s}_{\widehat{m}}}}
    {\lVert\defvec{s}_{m}\rVert\lVert\defvec{s}_{\widehat{m}}\rVert}=\pm{1},
\end{equation}
which consequence that, without loss of generality, the modulation symbols are ordered by $\lVert\defvec{s}_{\widehat{m}}-\defvec{s}_{1}\rVert\!<\!\lVert\defvec{s}_{m}-\defvec{s}_{1}\rVert$, ${m}\!<\!{\widehat{m}}$ on a hyperline. Therefore, the modulation symbol $m$ can be written as
\begin{equation}
   \defvec{s}_{m}={a}_{m}\defvec{s},\quad{1}\leq{m}\leq{M},
\end{equation}
where $\defvec{s}$ denotes an arbitrary unit vector, i.e., $\lVert\defvec{s}\rVert\!=\!1$, and thus ${a}_{m}$, ${1}\!\leq\!{m}\!\leq\!{M}$ are such real amplitudes that they support $\defvec{s}_{1}+\defvec{s}_{2}+\ldots+\defvec{s}_{M}=\defvec{0}$, which imposes that 
\begin{equation}
    {a}_{1}+{a}_{2}+\ldots+{a}_{M}={0}.
\end{equation}
From the condition that the modulation symbols are ordered, we have ${a}_{1}\!<\!{a}_{2}\!<\!\ldots\!<\!{a}_{M}$. For each $\defvec{s}_{m}$ except for the two outside ones $\defvec{s}_{1}$ and $\defvec{s}_{M}$, the distance of $\defvec{s}_{m}$ from $\defvec{s}_{m\pm{1}}$ is the constant we readily express
\begin{equation}
    \lVert{\defvec{s}_{m}-\defvec{s}_{m\pm{1}}}\rVert\!=\!({a}_{m}-{a}_{m\pm{1}})^2=\Delta~\text{(constant)}.
\end{equation}
Accordingly, we formulate the modulation symbols as
\begin{equation}
    \defvec{s}_{m}=(m-m_{0})\Delta\defvec{s},\quad{1}\leq{m}\leq{M}.
\end{equation}
which imposes that ${a}_{m}=(m-m_{0})\Delta$, ${1}\leq{m}\leq{M}$, where $\Delta$ is the minimum distance between modulation symbols, and the offset $m_{0}$ is found to be $m_{0}\!=\!{(M+1)}/{2}$ due to ${a}_{1}+{a}_{2}\allowbreak+\ldots+{a}_{M}\!=\!{0}$. Then, the power of $\defvec{s}_{m}$, which is written as ${E}_{m}\!=\!\lVert\defvec{s}_{m}\rVert^2$, can be obtained in terms of $\Delta$ as
\begin{equation}\label{Eq:ASKModulationMthSymbolEnergy}
        E_{m}\!=\!(m-m_0)^2\Delta^2
\end{equation}
Correspondingly, since the modulation~symbols~are~equiprobable, we write the average power of the  \ac{M-ASK} modulation~as $E_{\defrvec{S}}\!=\!(\sum_{m=1}^{M}E_{m})/{M}$, and therein substituting \eqref{Eq:ASKModulationMthSymbolEnergy}, we have
\begin{equation}
    E_{\defrvec{S}}=\frac{1}{12}(M^2-1)\Delta^2,   
\end{equation}
from which the value of $\Delta$ can be determined as 
\begin{equation}\label{Eq:ASKMOdulationMinimumDistance}
    \Delta=\sqrt{\frac{12E_{\defrvec{S}}}{M^2-1}}.
\end{equation}
The distance between $\defvec{s}_{m}$ and $\defvec{s}_{n}$, ${m}\!\neq\!{n}$ is written as 
\begin{equation}
    \lVert\defvec{s}_{m}-\defvec{s}_{n}\rVert=\sqrt{\frac{12\abs{m-n}E_{\defrvec{S}}}{M^2-1}}.
\end{equation}

\begin{figure*}[tp] 
\centering
\begin{subfigure}{0.47\columnwidth}
    \centering
    \includegraphics[clip=true, trim=0mm 0mm 0mm 0mm, width=1.0\columnwidth,height=0.85\columnwidth]{./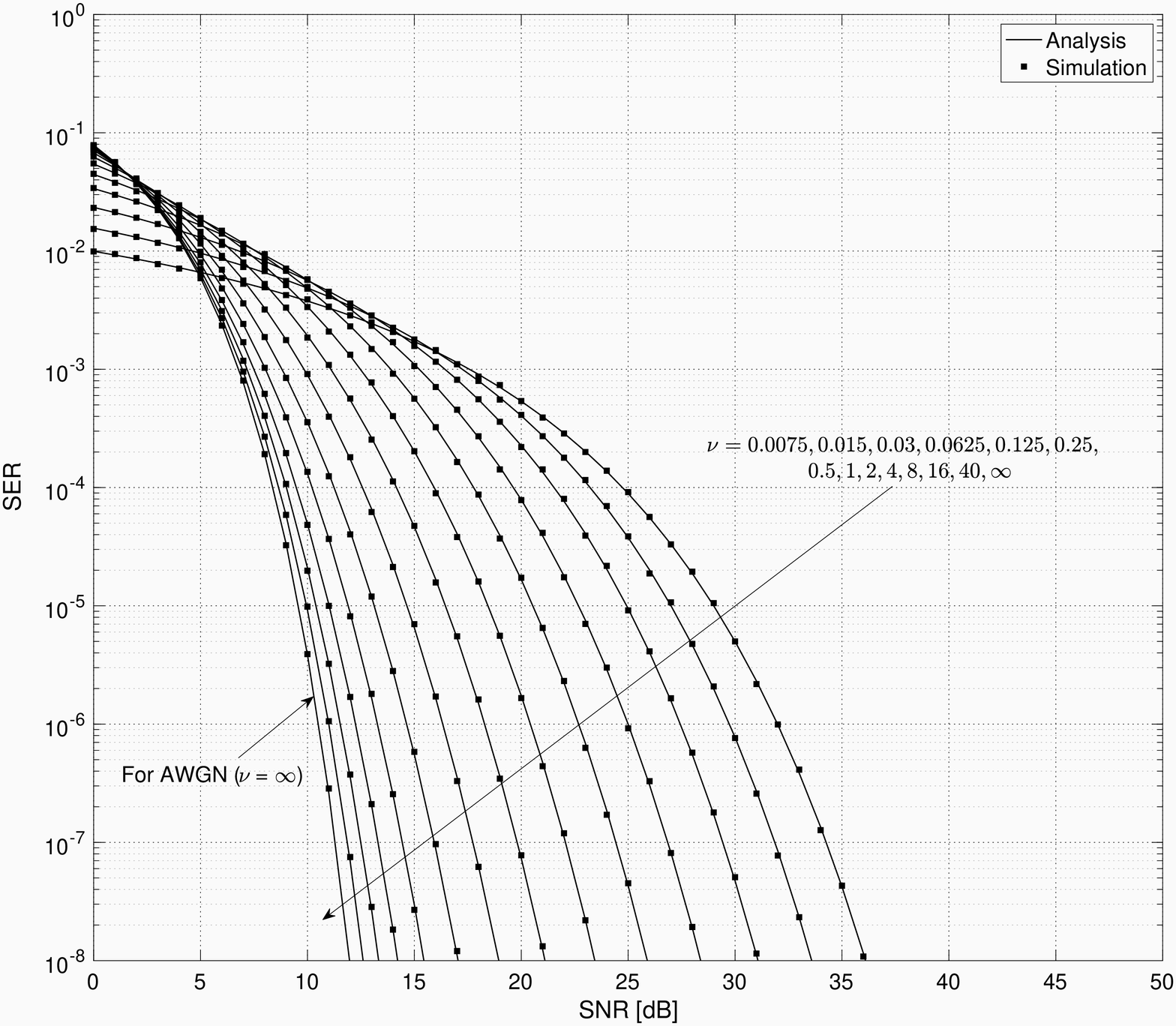}
    \caption{Modulation level $M=2$.}
    \vspace{5mm}
    \label{Figure:ConditionalSEPForMASKA}
\end{subfigure}
{~~~}
\begin{subfigure}{0.47\columnwidth}
    \centering
    \includegraphics[clip=true, trim=0mm 0mm 0mm 0mm, width=1.0\columnwidth,height=0.85\columnwidth]{./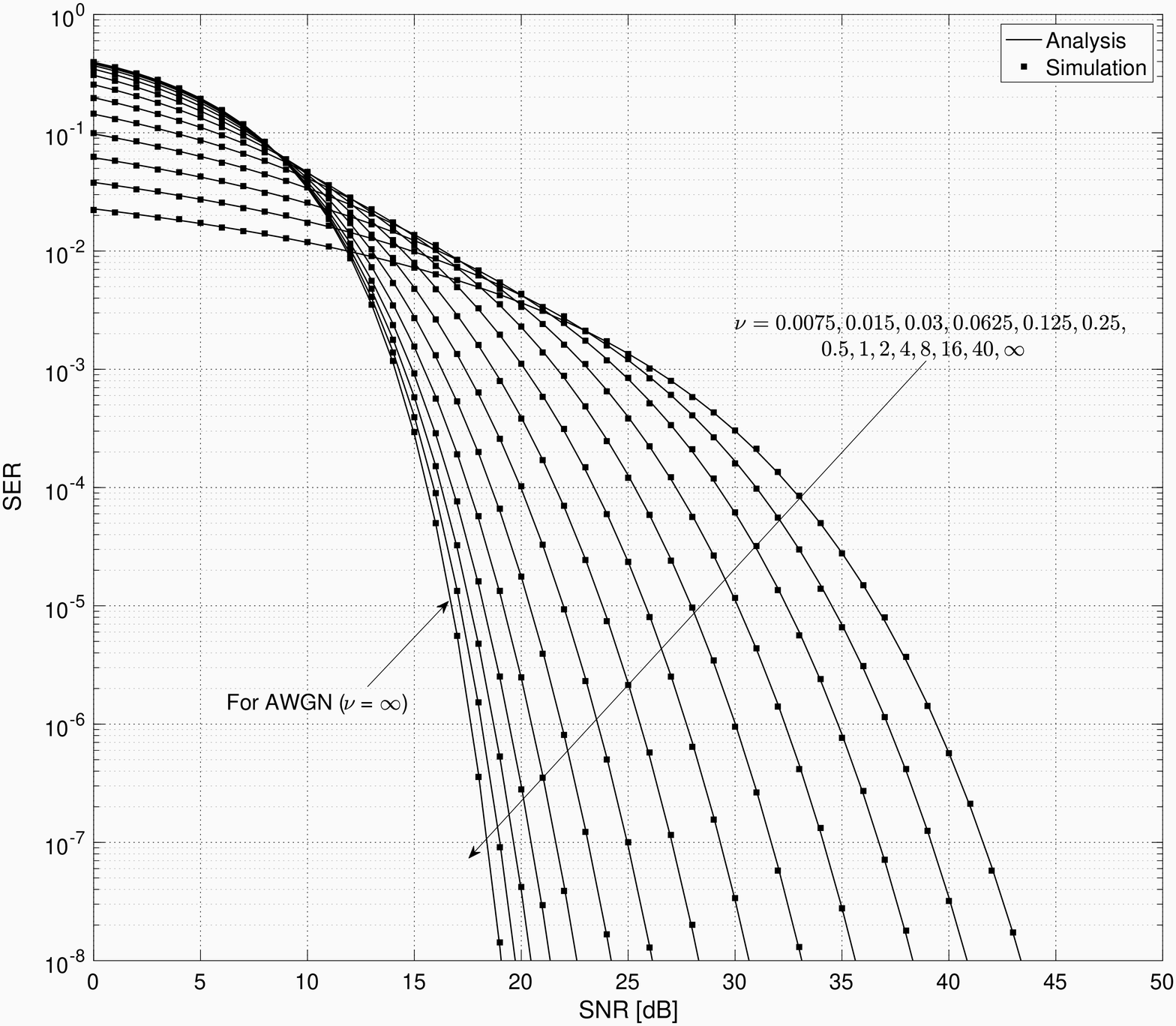}
    \caption{Modulation level $M=4$.}
    \vspace{5mm}
    \label{Figure:ConditionalSEPForMASKB}
\end{subfigure}\\
\begin{subfigure}{0.47\columnwidth}
    \centering
    \includegraphics[clip=true, trim=0mm 0mm 0mm 0mm, width=1.0\columnwidth,height=0.85\columnwidth]{./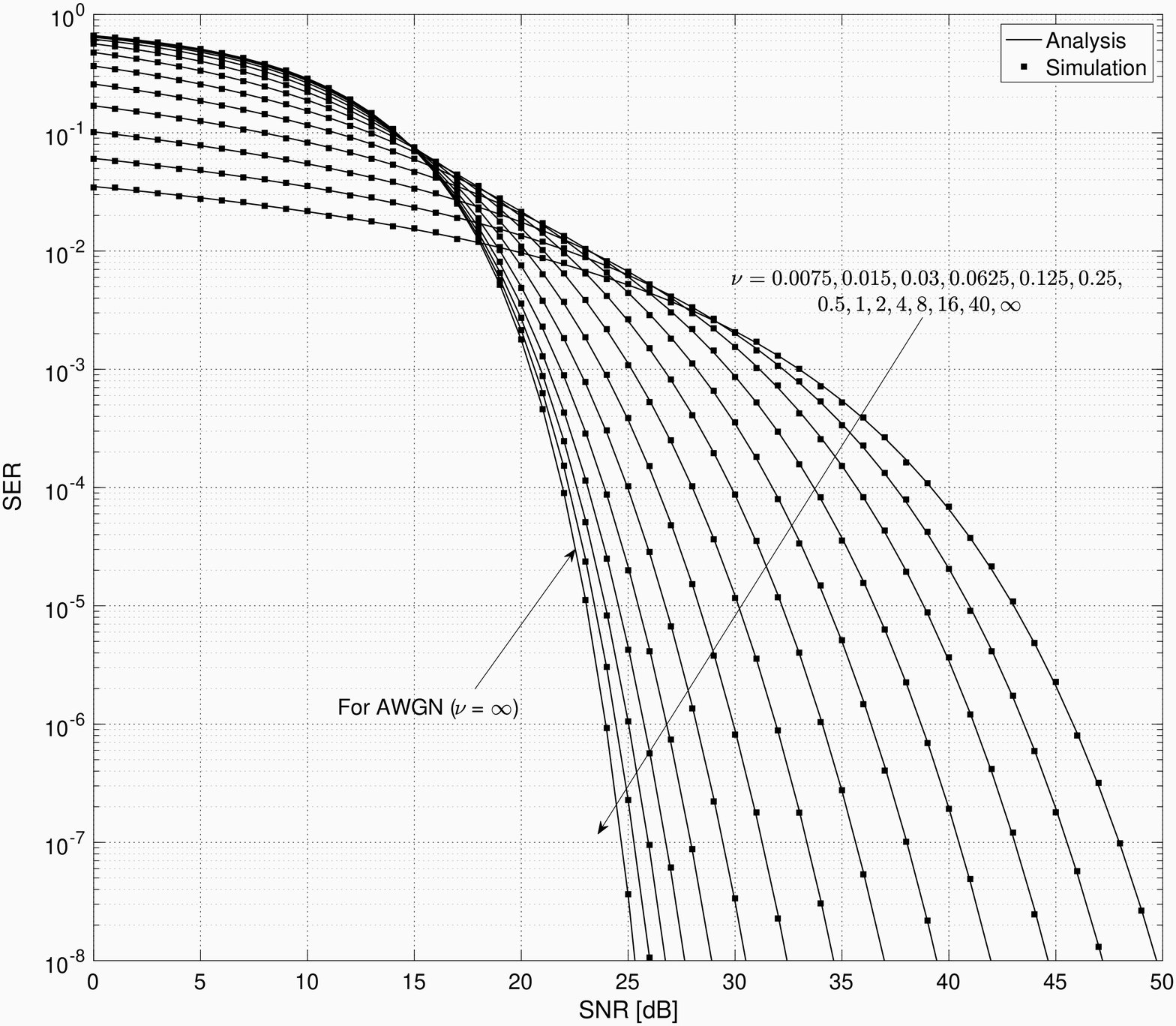}
    \caption{Modulation level $M=8$.}
    \label{Figure:ConditionalSEPForMASKC}
\end{subfigure}
{~~~}
\begin{subfigure}{0.47\columnwidth}
    \centering
    \includegraphics[clip=true, trim=0mm 0mm 0mm 0mm, width=1.0\columnwidth,height=0.85\columnwidth]{./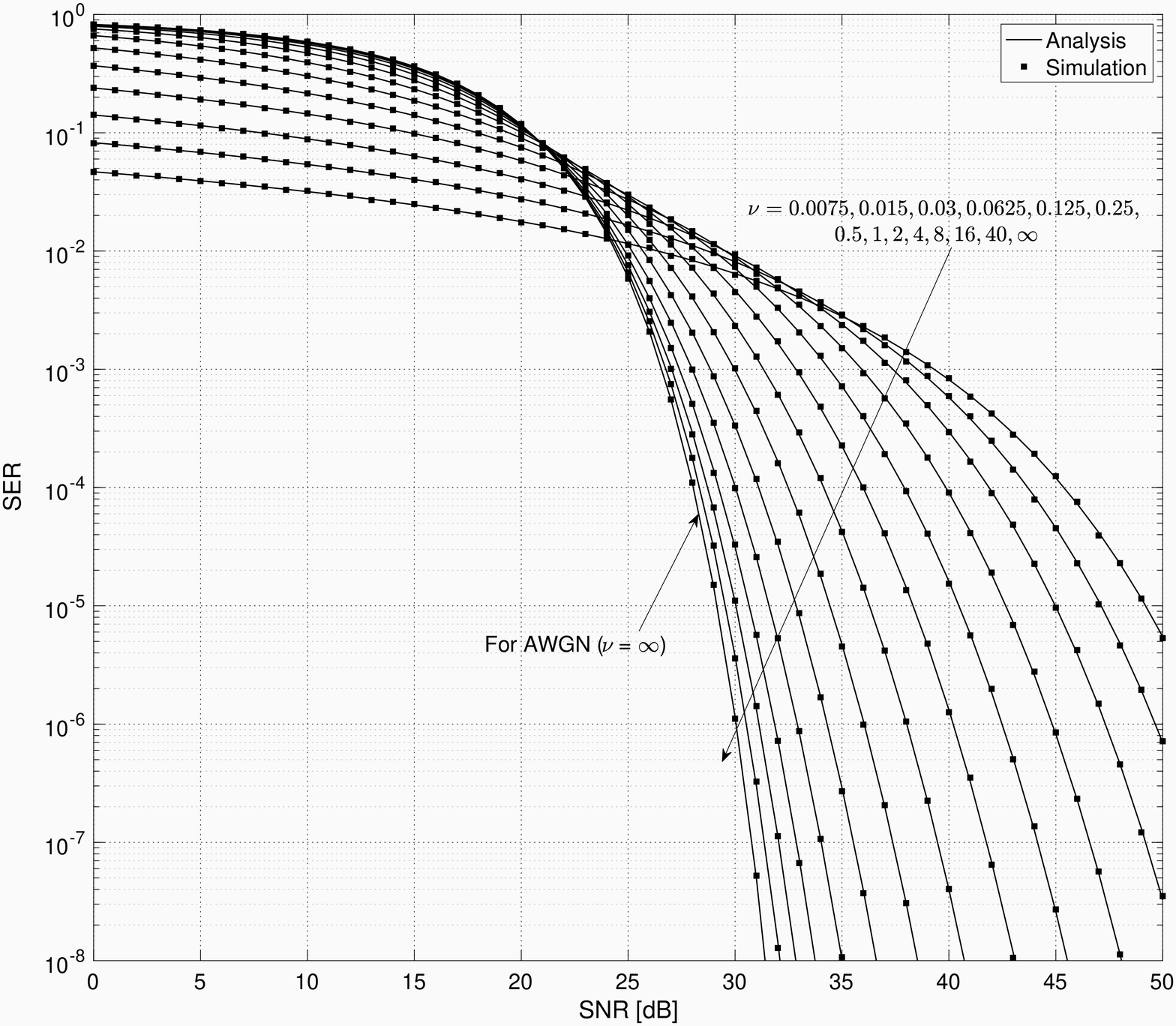}
    \caption{Modulation level $M=16$.}
    \label{Figure:ConditionalSEPForMASKD}
\end{subfigure}
\caption{The \ac{SER} of \ac{M-ASK} signaling over \ac{AWMN} channels.}
\label{Figure:ConditionalSEPForMASK}
\vspace{-2mm} 
\end{figure*} 

Let us find the conditional \ac{SER} for the \ac{M-ASK} modulation. Assuming $\defvec{s}_{m}$ is transmitted, we can write the received vector $\defrvec{R}_{c}$ using the mathematical model given by \eqref{Eq:PrecodedComplexAWMNVectorChannel} as follows 
\begin{equation}
    \defrvec{R}_{c}=H{a}_{m}\defvec{s}+\defrvec{Z}_{c}
\end{equation}
where $\defrvec{Z}_{c}\!\sim\!\mathcal{M}^{L}_{\nu}(\defvec{0},\frac{N_{0}}{2}\defmat{I})$ and hence $\defrvec{R}_{c}\!\sim\!\mathcal{M}^{L}_{\nu}(H{a}_{m}\defvec{s},\frac{N_{0}}{2}\defmat{I})$. Since all modulation symbols are assumed equiprobable, a symbol error occurs for each $\defvec{s}_{m}$ except for the two symbols $\defvec{s}_{1}$ and $\defvec{s}_{M}$ when the the projection of $\defrvec{R}_{c}-H\defvec{s}_{m}$ on $H\defvec{s}_{m\pm{1}}-H\defvec{s}_{m}$, i.e., $\RealPart{(H\defvec{s}_{m\pm{1}}-H\defvec{s}_{m})^H(\defrvec{R}_{c}-H\defvec{s}_{m})}$ is greater~than~the~distance of $H\defvec{s}_{m}$ from the perpendicular bisector of the hyperline that connects $H\defvec{s}_{m}$ and $H\defvec{s}_{m\pm{1}}$, and the probability of this error is written with the aid of
\theoremref{Theorem:MLDecisionErrorProbabilityForBinaryCoherentSignallingOverAWMNChannels} as follows 
\begin{subequations}
\begin{eqnarray}
\Pr\bigl\{\bigl.e\,\bigr|\,H,\defvec{s}_{m}\bigr\}&=&
        Q_{\nu}\biggl(\frac{H{\lVert\defvec{s}_{m}-\defvec{s}_{m+1}\rVert}}{\sqrt{2{N}_{0}}}\biggr)
            +Q_{\nu}\biggl(
            \frac{H{\lVert\defvec{s}_{m}-\defvec{s}_{m-1}\rVert}}{\sqrt{2{N}_{0}}}
            \biggr),{~~~~~~~~}\\
        &=&2Q_{\nu}\biggl(
            \frac{H\Delta}{\sqrt{2{N}_{0}}}
            \biggr),
\end{eqnarray}
\end{subequations}
where substituting \eqref{Eq:ASKMOdulationMinimumDistance} results in
\begin{equation}\label{Eq:ASKMOdulationSmErrorProbability}
    \Pr\bigl\{\bigl.e\,\bigr|\,H,\defvec{s}_{m}\bigr\}=2Q_{\nu}\Biggl(
            \sqrt{\frac{6\gamma}{M^2-1}}
            \Biggr),
\end{equation}
where $\gamma\!=\!{H}^2{E_{\defrvec{S}}}/{N_{0}}$ denotes the \ac{SNR} during transmission of one modulation symbol. Additionally, we also need to obtain $\Pr\bigl\{\bigl.e\,\bigr|\,H,\defvec{s}_{1}\bigr\}$ and $\Pr\bigl\{\bigl.e\,\bigr|\,H,\defvec{s}_{M}\bigr\}$. For the modulation symbol $\defvec{s}_{1}$, we obtain  
\begin{subequations}\label{Eq:ASKMOdulationS1ErrorProbability}
\begin{eqnarray}
    \label{Eq:ASKMOdulationS1ErrorProbabilityA}
    \Pr\bigl\{\bigl.e\,\bigr|\,H,\defvec{s}_{1}\bigr\}&=&
        Q_{\nu}\Biggl(
            \frac{H{\lVert\defvec{s}_{1}-\defvec{s}_{2}\rVert}}{\sqrt{2{N}_{0}}}
            \Biggr),\\
    \label{Eq:ASKMOdulationS1ErrorProbabilityB}
        &=&Q_{\nu}\Biggl(
            \sqrt{\frac{6\gamma}{M^2-1}}
            \Biggr).
\end{eqnarray}
\end{subequations}
Similarly, for the modulation symbol $\defvec{s}_{M}$, we obtain 
\begin{subequations}\label{Eq:ASKMOdulationSMErrorProbability}
\begin{eqnarray}
    \label{Eq:ASKMOdulationSMErrorProbabilityA}
    \Pr\bigl\{\bigl.e\,\bigr|\,H,\defvec{s}_{M}\bigr\}&=&
        Q_{\nu}\Biggl(
            \frac{H{\lVert\defvec{s}_{M}-\defvec{s}_{M-1}\rVert}}{\sqrt{2{N}_{0}}}
            \Biggr),\\
    \label{Eq:ASKMOdulationSMErrorProbabilityB}            
    &=&Q_{\nu}\Biggl(
            \sqrt{\frac{6\gamma}{M^2-1}}
            \Biggr).
\end{eqnarray}
\end{subequations}

\begin{theorem}\label{Theorem:MLDecisionErrorProbabilityForASKCoherentSignallingOverAWMNChannels}
For the \ac{ML} decision rule, the conditional \ac{SER} of the \ac{M-ASK} signaling is given by 
\begin{equation}\label{Eq:MLDecisionErrorProbabilityForASKCoherentSignallingOverAWMNChannels}
 \Pr\bigl\{e\,\bigl|\,H\bigr.\bigr\}=
        2\Bigr(1-\frac{1}{M}\Bigl)Q_{\nu}\Biggl(
            \sqrt{\frac{6\gamma}{M^2-1}}
            \Biggr),
\end{equation} 
where $\gamma$ is the instantaneous \ac{SNR} defined in \eqref{Eq:SignalToNoiseDefinition}.
\end{theorem}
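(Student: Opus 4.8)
The plan is to obtain the conditional \ac{SER} by averaging the per-symbol error probabilities established immediately before the theorem over the equiprobable \ac{M-ASK} constellation, invoking \eqref{Eq:ModulationSymbolErrorProbabilityB} with $p_m=1/M$ for all ${1}\leq{m}\leq{M}$. First I would partition the $M$ modulation symbols into the two boundary symbols $\defvec{s}_{1}$ and $\defvec{s}_{M}$, each of which has a single nearest neighbour, and the $M-2$ interior symbols $\defvec{s}_{2},\ldots,\defvec{s}_{M-1}$, each flanked by two nearest neighbours at the common minimum distance $\Delta$.

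Next I would substitute the per-symbol probabilities that precede the theorem: each interior symbol contributes $\Pr\{e\,|\,H,\defvec{s}_m\}=2Q_\nu(\sqrt{6\gamma/(M^2-1)})$ from \eqref{Eq:ASKMOdulationSmErrorProbability}, while each boundary symbol contributes the single term $Q_\nu(\sqrt{6\gamma/(M^2-1)})$ from \eqref{Eq:ASKMOdulationS1ErrorProbability} and \eqref{Eq:ASKMOdulationSMErrorProbability}. Collecting these, the sum $\sum_{m=1}^{M}\Pr\{e\,|\,H,\defvec{s}_m\}$ equals $[2(M-2)+2]\,Q_\nu(\sqrt{6\gamma/(M^2-1)})=2(M-1)\,Q_\nu(\sqrt{6\gamma/(M^2-1)})$, and dividing by $M$ produces the factor $2(1-1/M)$ of \eqref{Eq:MLDecisionErrorProbabilityForASKCoherentSignallingOverAWMNChannels}, with $\gamma\!=\!{H}^2{E_{\defrvec{S}}}/{N_{0}}$ as defined in \eqref{Eq:SignalToNoiseDefinition}.

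The step requiring the most care is not the arithmetic but the justification that the per-symbol expressions are \emph{exact} rather than mere union bounds. For an interior symbol the \ac{ML} decision region is the slab bounded by the two perpendicular bisectors of the segments to $\defvec{s}_{m\pm1}$, so an error is the event that the projection of $\defrvec{R}_{c}-H\defvec{s}_m$ onto the signal line exceeds $H\Delta/2$ in magnitude toward either neighbour. Because a single real projection cannot simultaneously fall on both sides, these two one-dimensional tail events are disjoint, whence their probabilities add without overlap, each evaluating to $Q_\nu(\cdot)$ via \theoremref{Theorem:MLDecisionErrorProbabilityForBinaryCoherentSignallingOverAWMNChannels} together with the \ac{CS} reduction of $\defrvec{Z}_{c}\!\sim\!\mathcal{M}^{L}_{\nu}(\defvec{0},\tfrac{N_{0}}{2}\defmat{I})$ to a scalar $\mathcal{M}_{\nu}(0,\tfrac{N_{0}}{2}d^2)$ noise along the bisector direction. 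For the two boundary symbols only one such tail exists, which is precisely why they contribute half as much; once this disjointness and the boundary/interior count are in place, the identity follows immediately.
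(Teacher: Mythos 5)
Your proposal is correct and follows essentially the same route as the paper: the paper likewise substitutes the per-symbol error probabilities \eqref{Eq:ASKMOdulationSmErrorProbability}, \eqref{Eq:ASKMOdulationS1ErrorProbabilityB}, and \eqref{Eq:ASKMOdulationSMErrorProbabilityB} into the equiprobable average $\frac{1}{M}\sum_{m}\Pr\{e\,|\,H,\defvec{s}_m\}$ and carries out the same boundary/interior count to arrive at $2(1-1/M)\,Q_{\nu}\bigl(\sqrt{6\gamma/(M^2-1)}\bigr)$. Your additional remark on the disjointness of the two one-dimensional tail events (which makes the interior-symbol expression exact rather than a union bound) is a justification the paper leaves implicit, but it does not change the argument.
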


\begin{proof}
When the modulation symbols are equiprobable, we write the conditional \ac{SER} of the \ac{M-ASK} signaling as 
\begin{equation}
    \Pr\bigl\{\bigl.e\,\bigl|\,H\bigr.\bigr\}=
        \frac{1}{M}\sum_{m=1}^{M}
            \Pr\bigl\{\bigl.e\,\bigr|H,\defvec{s}_{m}\bigr\},
\end{equation}
where substituting \eqref{Eq:ASKMOdulationSmErrorProbability}, \eqref{Eq:ASKMOdulationS1ErrorProbabilityB} and \eqref{Eq:ASKMOdulationSMErrorProbabilityB} results in \eqref{Eq:MLDecisionErrorProbabilityForASKCoherentSignallingOverAWMNChannels},
which completes the proof of \theoremref{Theorem:MLDecisionErrorProbabilityForASKCoherentSignallingOverAWMNChannels}.
\end{proof}

Let us check the special cases. First, when the normality factor $\nu\!=\!1$, we reduce \eqref{Eq:MLDecisionErrorProbabilityForASKCoherentSignallingOverAWMNChannels} to the conditional \ac{SER} of the \ac{M-ASK} signaling in \ac{CCS} \ac{AWLN} channels, that is
\begin{equation}\label{Eq:MLDecisionErrorProbabilityForASKCoherentSignallingOverAWLNChannels}
 \Pr\bigl\{e\,\bigl|\,H\bigr.\bigr\}=
        2\Bigr(1-\frac{1}{M}\Bigl)LQ\Biggl(
            \sqrt{\frac{6\gamma}{M^2-1}}
            \Biggr),
\end{equation} 
Secondly, when the normality factor $\nu\!\rightarrow\!\infty$, we also reduce  \eqref{Eq:MLDecisionErrorProbabilityForASKCoherentSignallingOverAWMNChannels} with the aid of \eqref{Eq:McLeishQFunctionandGaussianQFunctionRelation} to \cite[Eq.~(4.3-5)]{BibProakisBook}, \cite[Eq.~(8.3)]{BibAlouiniBook} 
\begin{equation}\label{Eq:MLDecisionErrorProbabilityForASKCoherentSignallingOverAWGNChannels}
 \Pr\bigl\{e\,\bigl|\,H\bigr.\bigr\}=
        2\Bigr(1-\frac{1}{M}\Bigl)Q\Biggl(
            \sqrt{\frac{6\gamma}{M^2-1}}
            \Biggr),
\end{equation} 
which is the conditional \ac{SER} of the \ac{M-ASK} signaling in \ac{CCS} \ac{AWGN} channels as expected. 

Properly with the aid of \theoremref{Theorem:MLDecisionErrorProbabilityForASKCoherentSignallingOverAWMNChannels}, we disclose in \figref{Figure:ConditionalSEPForMASK} the conditional \ac{SER} of \ac{M-ASK} signaling with respect to the different normalities in \ac{AWGN} channels. In addition to our previous observations of that the impulsive nature of the additive noise distribution deteriorates the performance in high-\ac{SNR} regime while negligibly improves in low-\ac{SNR} regime, we observe that the system performance gets more vulnerable to the impulsive nature of the additive noise distribution as the modulation level $M$ increases.     

\paragraph{Conditional \ac{SER} of \ac{M-QAM} Modulation}
\label{Section:SignallingOverAWMNChannels:CoherentSignalling:SymbolErrorProbability:MQAMModulation}
Considering the \ac{M-QAM} constellation as the extension of the two \ac{M-ASK} constellations to the complex amplitude keying, we denote its modulations symbols by $\{\defvec{s}_{1},\defvec{s}_{2},\ldots,\defvec{s}_{M}\}$, where we express each modulation symbol as
\begin{equation}
   \defvec{s}_{m}=({a}_{m}+\imaginary\,{b}_{m})\defvec{s},\quad{1}\leq{m}\leq{M},
\end{equation}
where $\defvec{s}$ denotes an arbitrary unit vector, i.e., $\lVert\defvec{s}\rVert\!=\!1$. Further, the inphase keying ${a}_{m}\!\in\!\mathbb{R}$ and the quadrature keying ${a}_{m}\!\in\!\mathbb{R}$ are chosen such that we can redefine the \ac{M-QAM} modulation by the Cartesian product of two \ac{M-ASK} constellations whose modulation levels are $M_{I}$ and $M_{Q}$, where the modulation level ${M}$ of the \ac{M-QAM} modulation is factorized to $M_I$ and $M_Q$, i.e., $M\!=\!M_{I}M_{Q}$. We write the symbols of the inphase \ac{M-ASK} constellation as
\begin{equation}\label{Eq:MQAMInphase}
    \defvec{s}^I_{m}={\alpha}_{m}\defvec{s},\quad{1}\leq{m}\leq{M}_{I},
\end{equation}
where $\alpha_{m}\!\in\!\mathbb{R}$. Its average power is $E_{I}\!=\!(\sum_{m}{\alpha}^2_{m})/{M}_{I}$ since its modulation symbols are assumed equiprobable. We write the symbols of the quadrature \ac{M-ASK} constellation as
\begin{equation}\label{Eq:MQAMQuadrature}
    \defvec{s}^Q_{n}={\beta}_{n}\defvec{s},\quad{1}\leq{n}\leq{M}_{Q},
\end{equation}
where $\beta_{m}\!\in\!\mathbb{R}$. The average power is $E_{Q}\!=\!(\sum_{n}{\beta}^2_{n})/{{M}_{Q}}$ since the modulation symbols are assumed equiprobable. In terms of ${\alpha}_{m}$ and ${\beta}_{n}$, we can write ${a}_{m}\!\in\!\mathbb{R}$ and ${a}_{m}\!\in\!\mathbb{R}$ as
\begin{equation}\label{Eq:MQAMInphaseAndQuadratureShiftKeying}
    {a}_{m}={\alpha}_{[m/M_{Q}]+1},\text{~and~}{b}_{m}={\beta}_{m-[{m}/{M_{Q}}]M_{Q}},
\end{equation}
for all ${1}\!\leq\!{m}\!\leq\!{M}$. Accordingly and appropriately, we obtain the average power of the \ac{M-QAM} constellation as
\begin{subequations}
\begin{eqnarray}
    E_{\defrvec{S}}
    &=&\frac{1}{M}\sum_{m=1}^{M}\lVert{\defvec{s}_{m}}\rVert^2,\\
    &=&\frac{1}{M}\sum_{m=1}^{M}({a}^2_{m}+{b}^2_{m}),
\end{eqnarray}
\end{subequations}
where substituting \eqref{Eq:MQAMInphaseAndQuadratureShiftKeying} yields 
\begin{subequations}
\begin{eqnarray}
    E_{\defrvec{S}}
    &=&\frac{1}{{M}_{I}}\sum_{m=1}^{{M}_{I}}{\alpha}^2_{m}+\frac{1}{{M}_{Q}}\sum_{n=1}^{{M}_{Q}}{\beta}^2_{n},\\
    &=&E_{I}+E_{Q}.
\end{eqnarray}
\end{subequations}
such that $E_{I}\!=\!(1-\kappa)E_{\defrvec{S}}$ and $E_{Q}\!=\!\kappa{E}_{\defrvec{S}}$, where $\kappa$ denotes the \ac{IQR} given by 
\begin{equation}\label{Eq:MQAMInphaseAndQuadratureRatio}
    \kappa=\frac{(M^2_{Q}-1)\Delta^2_{Q}}{(M^2_{Q}-1)\Delta^2_{Q}+(M^2_{I}-1)\Delta^2_{I}}.
\end{equation}
where $\Delta_{I}$ and $\Delta_{Q}$ are the minimum distance of the inphase and quadrature \ac{M-ASK} constellations, respectively.  In addition, when ${M}_{I}\!=\!{M}_{Q}$ and ${\Delta}_{I}\!=\!{\Delta}_{Q}$, the \ac{M-QAM} signaling is termed as a square \ac{M-QAM} signaling, and otherwise, a rectangular \ac{M-QAM} signaling. Further, with the aid of the definition of the  instantaneous \ac{SNR} given by \eqref{Eq:SignalToNoiseDefinition}, we can rewrite the instantaneous \ac{SNR} as $\gamma\!=\!H^{2}{E_{\defrvec{S}}}/{N_0}\!=\!\gamma_{I}+\gamma_{Q}$, where we have  $\gamma_{I}\!=\!H^{2}{E_{I}}/{N_0}$ and $\gamma_{Q}\!=\!H^{2}{E_{Q}}/{N_0}$ such that $\gamma_{I}\!=\!(1-\kappa)\gamma$ and $\gamma_{I}\!=\!\kappa\gamma$. 

\begin{figure*}[tp]
\begin{tabular}{l}
\begin{minipage}[t]{0.9775\textwidth}
\begin{center}
\normalsize
\setcounter{doublecolumnequation}{\value{equation}}
\setcounter{equation}{377}
\begin{multline}
    \label{Eq:MLDecisionErrorProbabilityForRectangularQAMCoherentSignallingOverAWMNChannels}
    \Pr\bigl\{e\,\bigl|\,H\bigr.\bigr\}=
        2\bigr(1-{1}/{M_{I}}\bigl)
            Q_{\nu}\Bigl(\sqrt{\beta^2_{I}\gamma}\Bigr)
            +2\bigr(1-{1}/{M_{Q}}\bigl)
                Q_{\nu}\Bigl(\sqrt{\beta^2_{Q}\gamma}\Bigr)\\
            -2\bigr(1-{1}/{M_{I}}\bigl)\bigr(1-{1}/{M_{Q}}\bigl)
                Q_{\nu}\Bigl(\sqrt{\beta^2_{I}\gamma},\frac{\pi}{2}-\phi\Bigr)\\
            -2\bigr(1-{1}/{M_{I}}\bigl)\bigr(1-{1}/{M_{Q}}\bigl)
                Q_{\nu}\Bigl(\sqrt{\beta^2_{Q}\gamma},\phi\Bigr),
\end{multline}
\setcounter{equation}{\value{doublecolumnequation}}
\end{center}
\vspace{1pt}
\end{minipage}\\
\hline
\end{tabular}
\vskip -15pt
\end{figure*}

Let us find the conditional \ac{SER} expression for the rectangular \ac{M-QAM} modulation based on the resultants given above. Assuming $\defvec{s}_{m}$ is transmitted, we can readily write the received vector $\defrvec{R}_{c}$ using the mathematical model given by \eqref{Eq:PrecodedComplexAWMNVectorChannel} as follows 
\begin{equation}\label{Eq:AWMNMQAMSignallingReceivedVector}
    \defrvec{R}_{c}=H({a}_{m}+\imaginary{b}_{m})\defvec{s}+\defrvec{Z}_{c}
\end{equation}
where $\defrvec{Z}_{c}\!\sim\!\mathcal{CM}^{L}_{\nu}(\defvec{0},\frac{N_{0}}{2}\defmat{I})$, and then the received vector is $\defrvec{R}_{c}\!\sim\!\mathcal{CM}^{L}_{\nu}(H({a}_{m}+\imaginary{b}_{m})\defvec{s},\frac{N_{0}}{2}\defmat{I})$. Further, we have 
\begin{equation}
    \defrvec{Z}_{c}=\defrvec{I}_{c}+\imaginary\defrvec{Q}_{c}
\end{equation}
where $\defrvec{I}_{c}\!\sim\!\mathcal{M}^{L}_{\nu}(\defvec{0},\frac{N_{0}}{2}\defmat{I})$ and $\defrvec{Q}_{c}\!\sim\!\mathcal{M}^{L}_{\nu}(\defvec{0},\frac{N_{0}}{2}\defmat{I})$. It is further extremely important and necessary to note that $\defrvec{I}_{c}$ and $\defrvec{Q}_{c}$ are mutually uncorrelated but not independent since both are belong to the same \ac{CCS} \ac{AWMN} channel.The projection of the received vector $\defrvec{R}_{c}$ on the space of modulation symbols, i.e., $P_{c}\!=\!\defvec{s}^H\defrvec{R}_{c}$ is given by 
\begin{equation}\label{Eq:AWMNMQAMSignallingProjection}
P_{c}=H({a}_{m}+\imaginary{b}_{m})\defvec{s}+{Z}_{c}
\end{equation}
where we decompose ${Z}_{c}\!\sim\!\mathcal{CM}_{\nu}(\defvec{0},{N_{0}}/{2})$ as 
\begin{equation}
    {Z}_{c}={I}_{c}+\imaginary{Q}_{c}
\end{equation}
where the inphase ${I}_{c}\!\sim\!\mathcal{M}_{\nu}(\defvec{0},{N_{0}}/{2})$ and the quadrature  ${Q}_{c}\!\sim\!\mathcal{M}_{\nu}(\defvec{0},{N_{0}}/{2})$ are mutually uncorrelated but not independent due to  the reason mentioned above. Appropriately, with the aid of \eqref{Eq:MLDecisionErrorProbabilityForASKCoherentSignallingOverAWMNChannels}, the probability of an erroneous detection for this \ac{M-QAM} constellation is given in the following theorem.
\ifCLASSOPTIONtwocolumn
\fi

\begin{theorem}\label{Theorem:MLDecisionErrorProbabilityForRectangularQAMCoherentSignallingOverAWMNChannels}
For the \ac{ML} decision rule, the conditional \ac{SER} of the rectangular \ac{M-QAM} signaling is given by \eqref{Eq:MLDecisionErrorProbabilityForRectangularQAMCoherentSignallingOverAWMNChannels}\setcounter{equation}{378} at the top of this page, in which $\gamma$ is the instantaneous \ac{SNR} defined in \eqref{Eq:SignalToNoiseDefinition}, $\kappa$ is the \ac{IQR} defined in \eqref{Eq:MQAMInphaseAndQuadratureRatio}. Further, $\beta_{I}$ and $\beta_{Q}$ are respectively the minimum inphase and quadrature distances normalized by noise power and are respectively defined by
\begin{equation}\label{Eq:RectangularQAMConditionalSEPInphaseQuadratureRatio}
    \beta_{I}=\sqrt{\frac{6(1-\kappa)}{M^2_{I}-1}},\text{~and~}\beta_{Q}=\sqrt{\frac{6\kappa}{M^2_{Q}-1}}.
\end{equation}
The phase $\phi\!=\!\arctan\bigl(\beta_{I}/\beta_{Q}\bigr)$ is given by
\begin{equation}
    \phi=\arctan\Bigl(\sqrt{\frac{\kappa(M^2_{I}-1)}{(1-\kappa)(M^2_{Q}-1)}}\Bigr).
\end{equation}
\end{theorem}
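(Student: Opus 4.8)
The plan is to exploit the factorization of the rectangular $M$-QAM constellation as the Cartesian product of an inphase $M_I$-ASK constellation and a quadrature $M_Q$-ASK constellation, so that under the minimum-distance (\ac{ML}) rule of \theoremref{Theorem:MLDecisionRuleForPrecodedComplexAWMNVectorChannel} the decision cells are rectangles aligned with the inphase/quadrature axes. Working with the scalar projection $P_{c}=\defvec{s}^{H}\defrvec{R}_{c}=H(a_m+\imaginary b_m)+Z_{c}$ of \eqref{Eq:AWMNMQAMSignallingProjection}, where $Z_{c}=I_{c}+\imaginary Q_{c}$, a correct decision requires the inphase coordinate $\RealPart{P_{c}}$ and the quadrature coordinate $\ImagPart{P_{c}}$ to land simultaneously in their correct one-dimensional cells. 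Denoting by $\bar{C}_I$ and $\bar{C}_Q$ the inphase- and quadrature-error events, inclusion--exclusion gives $\Pr\{e\,|\,H\}=\Pr\{\bar{C}_I\}+\Pr\{\bar{C}_Q\}-\Pr\{\bar{C}_I\cap\bar{C}_Q\}$, and I would evaluate the two marginal terms and the single joint term in turn.

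The two marginal terms are immediate by projection onto each axis, which reduces each to the one-dimensional problem already solved in \theoremref{Theorem:MLDecisionErrorProbabilityForASKCoherentSignallingOverAWMNChannels}. With the power split $E_I=(1-\kappa)E_{\defrvec{S}}$ and $E_Q=\kappa E_{\defrvec{S}}$ induced by the \ac{IQR} $\kappa$ of \eqref{Eq:MQAMInphaseAndQuadratureRatio} (hence $\gamma_I=(1-\kappa)\gamma$ and $\gamma_Q=\kappa\gamma$), the identities $6\gamma_I/(M_I^2-1)=\beta_I^2\gamma$ and $6\gamma_Q/(M_Q^2-1)=\beta_Q^2\gamma$ produce exactly the normalized distances $\beta_I,\beta_Q$ of \eqref{Eq:RectangularQAMConditionalSEPInphaseQuadratureRatio}, yielding $\Pr\{\bar{C}_I\}=2(1-1/M_I)Q_{\nu}(\sqrt{\beta_I^2\gamma})$ and $\Pr\{\bar{C}_Q\}=2(1-1/M_Q)Q_{\nu}(\sqrt{\beta_Q^2\gamma})$, the first two terms of \eqref{Eq:MLDecisionErrorProbabilityForRectangularQAMCoherentSignallingOverAWMNChannels}. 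Averaging over the equiprobable constellation supplies the two-sided $2(1-1/M_I)$ and $2(1-1/M_Q)$ edge factors exactly as in the $M$-ASK derivation.

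The crux, and the step I expect to be the main obstacle, is the joint term $\Pr\{\bar{C}_I\cap\bar{C}_Q\}$, because $I_{c}$ and $Q_{c}$ are uncorrelated but \emph{not} independent, so this probability does not factor into the product of the marginals as it would in the Gaussian case. I would resolve this via the Gamma-mixture representation of \theoremref{Theorem:CCSMcLeishDefinition}: writing $Z_{c}=\sqrt{G}(X_0+\imaginary Y_0)$ with $X_0,Y_0$ i.i.d.\ Gaussian and $G\!\sim\!\mathcal{G}(\nu,1)$, the components $I_{c}=\sqrt{G}X_0$ and $Q_{c}=\sqrt{G}Y_0$ become \emph{conditionally} independent given $G=g$. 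Hence the joint two-sided tail factors into a product of two Gaussian $Q$-functions for each fixed $g$, and averaging over $G$ reproduces precisely the bivariate integral $I_2(x,y)$ of \eqref{Eq:McLeishBivariateQFunctionIntegralWithoutCorrelation} evaluated at $x=\sqrt{\beta_I^2\gamma}$ and $y=\sqrt{\beta_Q^2\gamma}$. Invoking the closed form of $I_2$ already established inside the proof of \theoremref{Theorem:CCSMcLeishCDF}, namely $I_2(x,y)=\tfrac12 Q_{\nu}(\sqrt{(x^2+y^2)\sin^2\phi},\phi)+\tfrac12 Q_{\nu}(\sqrt{(x^2+y^2)\cos^2\phi},\tfrac{\pi}{2}-\phi)$, together with the geometric identities $(x^2+y^2)\sin^2\phi=\beta_I^2\gamma$ and $(x^2+y^2)\cos^2\phi=\beta_Q^2\gamma$ induced by $\phi=\arctan(\beta_I/\beta_Q)$, collapses the partial-quantile arguments to the clean forms $\sqrt{\beta_I^2\gamma}$ and $\sqrt{\beta_Q^2\gamma}$.

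Finally I would assemble the pieces. The joint event $\bar{C}_I\cap\bar{C}_Q$ for a doubly-interior symbol is $\{|I_c|>d_I\}\cap\{|Q_c|>d_Q\}$, whose four-quadrant decomposition contributes a factor $4=2\times2$ (the product of the two-sided inphase and quadrature error factors); combined with the $\tfrac12$ coefficients of $I_2$ and the $(1-1/M_I)(1-1/M_Q)$ edge averaging, this turns the joint term into $2(1-1/M_I)(1-1/M_Q)\bigl[Q_{\nu}(\sqrt{\beta_I^2\gamma},\tfrac{\pi}{2}-\phi)+Q_{\nu}(\sqrt{\beta_Q^2\gamma},\phi)\bigr]$, the subtracted terms of the claim. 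Substituting the three evaluated pieces into the inclusion--exclusion identity then yields \eqref{Eq:MLDecisionErrorProbabilityForRectangularQAMCoherentSignallingOverAWMNChannels}. Beyond justifying the conditional-independence reduction, the step demanding the most care will be the angle bookkeeping: one must track the two-argument $\arctan$ convention carefully so that the inphase and quadrature partial-$Q_{\nu}$ arguments are paired with the correct angles $\phi$ and $\tfrac{\pi}{2}-\phi$; the marginal terms and the constellation averaging are then routine.
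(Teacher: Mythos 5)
Your proposal is correct and follows essentially the same route as the paper's own proof: the paper likewise writes the error as the complement of the product of the inphase/quadrature correct-decision probabilities (your inclusion--exclusion), exploits the Gamma-mixture decomposition $Z_c=\sqrt{G}(X_c+\imaginary Y_c)$ so that the components are conditionally independent given $G$, and then averages over $G$ to turn the single and double Gaussian Q-function integrals $I_1$ and $I_2$ into $Q_{\nu}(\cdot)$ and the pair of partial quantiles $\tfrac12 Q_{\nu}(\cdot,\pi/2-\phi)+\tfrac12 Q_{\nu}(\cdot,\phi)$. Your factor bookkeeping ($4\times\tfrac12=2$) and angle pairing match the theorem once the paper's $\phi$ is read from its explicit $\kappa$-expression, so the only caveat you flag — the $\arctan$ convention — is indeed the sole point of care, and you handle it consistently.
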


\begin{proof}
With the aid of \theoremref{Theorem:CCSMcLeishDefinition}, let us further decompose the additive complex noise ${Z}_{c}$ as 
\begin{equation}
    {Z}_{c}=\sqrt{G}({X}_{c}+\imaginary{Y}_{c})
\end{equation}
where ${G}\!\sim\!\mathcal{G}(\nu,1)$, ${X}_{c}\!\sim\!\mathcal{N}(\defvec{0},{N_{0}}/{2})$, and ${Y}_{c}\!\sim\!\mathcal{N}(\defvec{0},{N_{0}}/{2})$ such that we define the inphase ${I}_{c}\!=\!\sqrt{G}{X}_{c}$ and the quadrature ${Q}_{c}\!=\!\sqrt{G}{Y}_{c}$. Hence, we notice that both ${I}_{c}|G$ and ${Q}_{c}|G$ (i.e., both ${I}_{c}$ and ${Q}_{c}$ conditioned on ${G}$) are mutually independent Gaussian distributions with zero mean and $GN_{0}/2$ variance. Appropriately, exploiting \eqref{Eq:MLDecisionErrorProbabilityForASKCoherentSignallingOverAWGNChannels} and using the coefficients \eqref{Eq:RectangularQAMConditionalSEPInphaseQuadratureRatio}, we can write the the conditional \ac{SER} of the inphase  ${\text{M}}_I\text{-ASK}$ as
$\Pr\{e_{I}\,|H,G\}\!=\!2(1-{1}/{M_I})Q(\beta_{I}\sqrt{\gamma/G})$.
Similarly, we can write the conditional \ac{SER} of the quadrature ${\text{M}}_Q\text{-ASK}$ as
$\Pr\{e_{Q}\,|H,G\}\!=\!2(1-{1}/{M_Q})Q(\beta_{Q}\sqrt{\gamma/G})$. 
The mutual independence between ${I}_{c}|G$ and ${Q}_{c}|G$ yields the conclusion that the probability of the correct symbol decision is the product of the conditional probabilities  $\Pr\{c_{I}\,|H,G\}\!=\!1-\Pr\{e_{I}\,|H,G\}$ and $\Pr\{c_{Q}\,|H,G\}\!=\!1-\Pr\{e_{Q}\,|H,G\}$, which are respectively correct decision probabilities for constituent ${\text{M}}_I$-ASK and ${\text{M}}_Q$-ASK constellations when conditioned on ${G}$, we can thus write the probability of an erroneous detection as 
\begin{subequations}
\begin{eqnarray}
    \Pr\{e\,|H,G\}
        &=&1-\Pr\{c\,|H,G\}, \\
        &=&1-\Pr\{c_{I}\,|H,G\}\Pr\{c_{Q}\,|H,G\}, \\
        &=&1-(1-\Pr\{e_{I}\,|H,G\})(1-\Pr\{e_{Q}\,|H,G\}),{~~}
\end{eqnarray}
\end{subequations}
where substituting $\Pr\{e_{I}\,|H,G\}$ and $\Pr\{e_{Q}\,|H,G\}$ yields 
\begin{multline}\label{Eq:RectangularQAMConditionalSEPConditionedOnPowerFluctuations}
 \Pr\{\bigl.e\,|\,H,G\}=
        2\bigr(1-{1}/{M_{I}}\bigl)
            Q\bigl(\beta_{I}\sqrt{\gamma/G}\bigr)\\
            +2\bigr(1-{1}/{M_{Q}}\bigl)
                Q\bigl(\beta_{Q}\sqrt{\gamma/G}\bigr)\\
            -4\bigr(1-{1}/{M_{I}}\bigl)\bigr(1-{1}/{M_{Q}}\bigl)\\
            \times
                Q\bigl(\beta_{I}\sqrt{\gamma/G}\bigr)
                Q\bigl(\beta_{Q}\sqrt{\gamma/G}\bigr).
\end{multline} 
Then, the conditional \ac{SER} of the rectangular \ac{M-QAM} constellation is written as 
$\Pr\{e\,|\,H\}\!=\!\int_{0}^{\infty}\Pr\{e\,|\,H,g\}f_{G}(g)dg$, where substituting \eqref{Eq:ProportionPDF} yields
\ifCLASSOPTIONtwocolumn
\begin{multline}\label{Eq:RectangularQAMConditionalSEP}
\!\!\!\!\Pr\{\bigl.e\,|\,H\}=
        2\bigr(1-{1}/{M_{I}}\bigl)I_1\bigl(\beta_{I}\sqrt{\gamma}\bigr){~~}\\+
        2\bigr(1-{1}/{M_{Q}}\bigl)I_1\bigl(\beta_{Q}\sqrt{\gamma}\bigr){~~}\\-
        4\bigr(1-{1}/{M_{Q}}\bigl)\bigr(1-{1}/{M_{Q}}\bigl)I_2\bigl(\beta_{I}\sqrt{\gamma},\beta_{Q}\sqrt{\gamma}\bigr),
\end{multline}
\else
\begin{multline}\label{Eq:RectangularQAMConditionalSEP}
    \Pr\{\bigl.e\,|\,H\}=
        2\bigr(1-{1}/{M_{I}}\bigl)I_1\bigl(\beta_{I}\sqrt{\gamma}\bigr)+
        2\bigr(1-{1}/{M_{Q}}\bigl)I_1\bigl(\beta_{Q}\sqrt{\gamma}\bigr){~~}\\-
        4\bigr(1-{1}/{M_{Q}}\bigl)\bigr(1-{1}/{M_{Q}}\bigl)I_2\bigl(\beta_{I}\sqrt{\gamma},\beta_{Q}\sqrt{\gamma}\bigr),
\end{multline}
\fi
where $I_1(x)$ and $I_2(x,y)$ are given by 
\begin{eqnarray}
\label{Eq:GaussianQFunctionIntegral}
    I_1(x)&=&\int_{0}^{\infty}Q\bigl(\sqrt{x^2/g}\bigr)f_{G}(g)dg,\\
\label{Eq:DoubleGaussianQFunctionIntegral}
    I_2(x,y)&=&\int_{0}^{\infty}Q\bigl(\sqrt{x^2/g}\bigr)Q\bigl(\sqrt{y^2/g}\bigr)f_{G}(g)dg,{~~~}
\end{eqnarray}
where $x,y\!\in\!\mathbb{R}_{+}$. Inserting 
$Q(x)\!=\!\frac{1}{2}\Erfc{x/\sqrt{2}}$ \cite[Eq.\!~(2.3-18)]{BibProakisBook} and \cite[Eq.\!~(06.27.26.0006.01)]{BibWolfram2010Book} 
into \eqref{Eq:GaussianQFunctionIntegral}, and accordingly using 
\cite[Eqs.\!~(2.8.4)\!~and\!~(2.9.1)]{BibKilbasSaigoBook}, $I_1(x)$ results in \eqref{Eq:McLeishQFunctionUsingFoxH}. Therefore, we have $I_1(x)\!=\!Q_{\nu}(x)$.
In addition, inserting \cite[Eq.\!~(4.6)\!~and\!~(4.8)]{BibSimonDivsalarTCOM1998} into \eqref{Eq:DoubleGaussianQFunctionIntegral} and using \cite[Eq. (3.471/9)]{BibGradshteynRyzhikBook} and then exploiting \defref{Definition:McLeishPartialQFunction}, we obtain $I_2(x,y)$ as $I_2(x,y)\!=\!\frac{1}{2}Q_{\nu}(x,{\pi}/{2}-\phi)+\frac{1}{2}Q_{\nu}(y,\phi)$. Finally, substituting $I_1(x)$ and $I_2(x,y)$ into \eqref{Eq:RectangularQAMConditionalSEP} results in \eqref{Eq:MLDecisionErrorProbabilityForRectangularQAMCoherentSignallingOverAWMNChannels}, which completes the proof of   
\theoremref{Theorem:MLDecisionErrorProbabilityForRectangularQAMCoherentSignallingOverAWMNChannels}.
\end{proof}

\begin{theorem}\label{Theorem:MLDecisionErrorProbabilityForSquareQAMCoherentSignallingOverAWMNChannels}
For the \ac{ML} decision rule, the conditional \ac{SER} of the square \ac{M-QAM} signaling is given by 
\ifCLASSOPTIONtwocolumn
\begin{multline}\label{Eq:MLDecisionErrorProbabilityForSquareQAMCoherentSignallingOverAWMNChannels}
 \Pr\{\bigl.e\,|\,H\}=
        4\Bigr(1-\frac{1}{\sqrt{M}}\Bigl)
            Q_{\nu}\Biggl(\sqrt{\frac{3\gamma}{M-1}}\Biggr)\\
                -4\Bigr(1-\frac{1}{\sqrt{M}}\Bigl)^2Q_{\nu}\Biggl(\sqrt{\frac{3\gamma}{M-1}},\frac{\pi}{4}\Biggr),
\end{multline} 
\else
\begin{multline}\label{Eq:MLDecisionErrorProbabilityForSquareQAMCoherentSignallingOverAWMNChannels}
 \Pr\{\bigl.e\,|\,H\}=
        4\Bigr(1-\frac{1}{\sqrt{M}}\Bigl)
            Q_{\nu}\Biggl(\sqrt{\frac{3\gamma}{M-1}}\Biggr)
                -4\Bigr(1-\frac{1}{\sqrt{M}}\Bigl)^2Q_{\nu}\Biggl(\sqrt{\frac{3\gamma}{M-1}},\frac{\pi}{4}\Biggr),
\end{multline} 
\fi
where $\gamma$ is the \ac{SNR} defined in \eqref{Eq:SignalToNoiseDefinition}.
\end{theorem}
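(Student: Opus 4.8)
The plan is to obtain the square \ac{M-QAM} result as a direct specialization of the rectangular \ac{M-QAM} expression already established in \theoremref{Theorem:MLDecisionErrorProbabilityForRectangularQAMCoherentSignallingOverAWMNChannels}, rather than repeating the projection-and-conditioning argument from scratch. By definition, a square \ac{M-QAM} constellation is the Cartesian product of two \emph{identical} constituent \ac{M-ASK} constellations, so the first step is to impose $M_{I}\!=\!M_{Q}\!=\!\sqrt{M}$ together with $\Delta_{I}\!=\!\Delta_{Q}$. Substituting these equalities into the \ac{IQR} defined in \eqref{Eq:MQAMInphaseAndQuadratureRatio} immediately collapses it to $\kappa\!=\!\tfrac{1}{2}$, since the numerator $(M^2_{Q}-1)\Delta^2_{Q}$ becomes exactly half the denominator $(M^2_{Q}-1)\Delta^2_{Q}+(M^2_{I}-1)\Delta^2_{I}$.

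Next I would propagate $\kappa\!=\!\tfrac{1}{2}$ and $M_{I}\!=\!M_{Q}\!=\!\sqrt{M}$ through the normalized minimum-distance coefficients in \eqref{Eq:RectangularQAMConditionalSEPInphaseQuadratureRatio}. This yields $\beta_{I}\!=\!\sqrt{6(1-\kappa)/(M^2_{I}-1)}\!=\!\sqrt{3/(M-1)}$ and, symmetrically, $\beta_{Q}\!=\!\sqrt{6\kappa/(M^2_{Q}-1)}\!=\!\sqrt{3/(M-1)}$, so that $\beta_{I}\!=\!\beta_{Q}$. The single numerically important consequence is that the phase $\phi\!=\!\arctan(\beta_{I}/\beta_{Q})$ reduces to $\arctan(1)\!=\!\pi/4$, and hence $\pi/2-\phi$ also equals $\pi/4$. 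Thus both \emph{full} McLeish \ac{Q-function} arguments and both \emph{partial} McLeish \ac{Q-function} arguments coincide.

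With these reductions in hand, I would substitute directly into \eqref{Eq:MLDecisionErrorProbabilityForRectangularQAMCoherentSignallingOverAWMNChannels} and combine like terms. The two leading terms $2(1-1/\sqrt{M})Q_{\nu}(\sqrt{\beta^2_{I}\gamma})$ and $2(1-1/\sqrt{M})Q_{\nu}(\sqrt{\beta^2_{Q}\gamma})$ merge, because $\beta^2_{I}\!=\!\beta^2_{Q}\!=\!3/(M-1)$, into the single term $4(1-1/\sqrt{M})Q_{\nu}(\sqrt{3\gamma/(M-1)})$. Likewise, the two subtracted cross terms, carrying the common prefactor $2(1-1/\sqrt{M})^2$ and the partial \ac{Q-function}s $Q_{\nu}(\cdot,\pi/2-\phi)$ and $Q_{\nu}(\cdot,\phi)$, both become $Q_{\nu}(\sqrt{3\gamma/(M-1)},\pi/4)$ and add to $-4(1-1/\sqrt{M})^2 Q_{\nu}(\sqrt{3\gamma/(M-1)},\pi/4)$. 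Collecting these gives exactly \eqref{Eq:MLDecisionErrorProbabilityForSquareQAMCoherentSignallingOverAWMNChannels}.

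There is essentially no deep obstacle here; the argument is a clean parameter specialization. The only point requiring genuine care is the verification that the \emph{partial}-quantile arguments truly coincide: one must confirm not merely that $\beta_{I}\!=\!\beta_{Q}$ but that $\phi$ and its complement $\pi/2-\phi$ both equal $\pi/4$, which is what allows the two distinct partial \ac{Q-function} terms in the rectangular formula to be identified and merged. I would state this explicitly to justify the final consolidation, since it is the one place where the symmetry of the square constellation (as opposed to a generic rectangular one) is actually used.
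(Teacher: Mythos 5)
Your proposal is correct and takes essentially the same route as the paper's own proof: specializing \theoremref{Theorem:MLDecisionErrorProbabilityForRectangularQAMCoherentSignallingOverAWMNChannels} with $M_{I}\!=\!M_{Q}\!=\!\sqrt{M}$ and $\Delta_{I}\!=\!\Delta_{Q}$, so that $\kappa\!=\!1/2$ and $\beta_{I}\!=\!\beta_{Q}\!=\!\sqrt{3/(M-1)}$, and then collecting terms. Your explicit check that $\phi\!=\!\pi/2-\phi\!=\!\pi/4$, which is what permits merging the two partial McLeish Q-function terms, is left implicit in the paper but is the right point to emphasize.
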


\begin{proof}
When we have $M_{I}\!=\!M_{Q}\!=\!\sqrt{M}$, we perceive that the \ac{M-QAM} constellation becomes a two-dimensional square constellation, where each one of the inphase and quadrature components can be therefore considered as $\sqrt{\text{M}}$-\ac{ASK} constellation. Accordingly, with the aid of \eqref{Eq:ASKMOdulationMinimumDistance}, we find out that the inphase and quadrature minimum distances, i.e., $\Delta_{I}$ and $\Delta_{Q}$ are equal, that is
\begin{equation}
    \Delta_{I}=\Delta_{Q}=\sqrt{\frac{6E_{\defrvec{S}}}{M-1}},
\end{equation}
which yields $\kappa\!=\!1/2$ as observed from \eqref{Eq:MQAMInphaseAndQuadratureRatio}, and further $\beta_{I}\!=\!\beta_{Q}\!=\!\sqrt{{3}/{(M-1)}}$ from \eqref{Eq:RectangularQAMConditionalSEPInphaseQuadratureRatio}. Eventually, substituting these results into \eqref{Eq:MLDecisionErrorProbabilityForRectangularQAMCoherentSignallingOverAWMNChannels} yields \eqref{Eq:MLDecisionErrorProbabilityForSquareQAMCoherentSignallingOverAWMNChannels}, which completes the proof of  \theoremref{Theorem:MLDecisionErrorProbabilityForSquareQAMCoherentSignallingOverAWMNChannels}.
\end{proof}
\begin{figure*}[tp] 
\centering
\begin{subfigure}{0.47\columnwidth}
    \centering
    \includegraphics[clip=true, trim=0mm 0mm 0mm 0mm, width=1.0\columnwidth,height=0.85\columnwidth]{./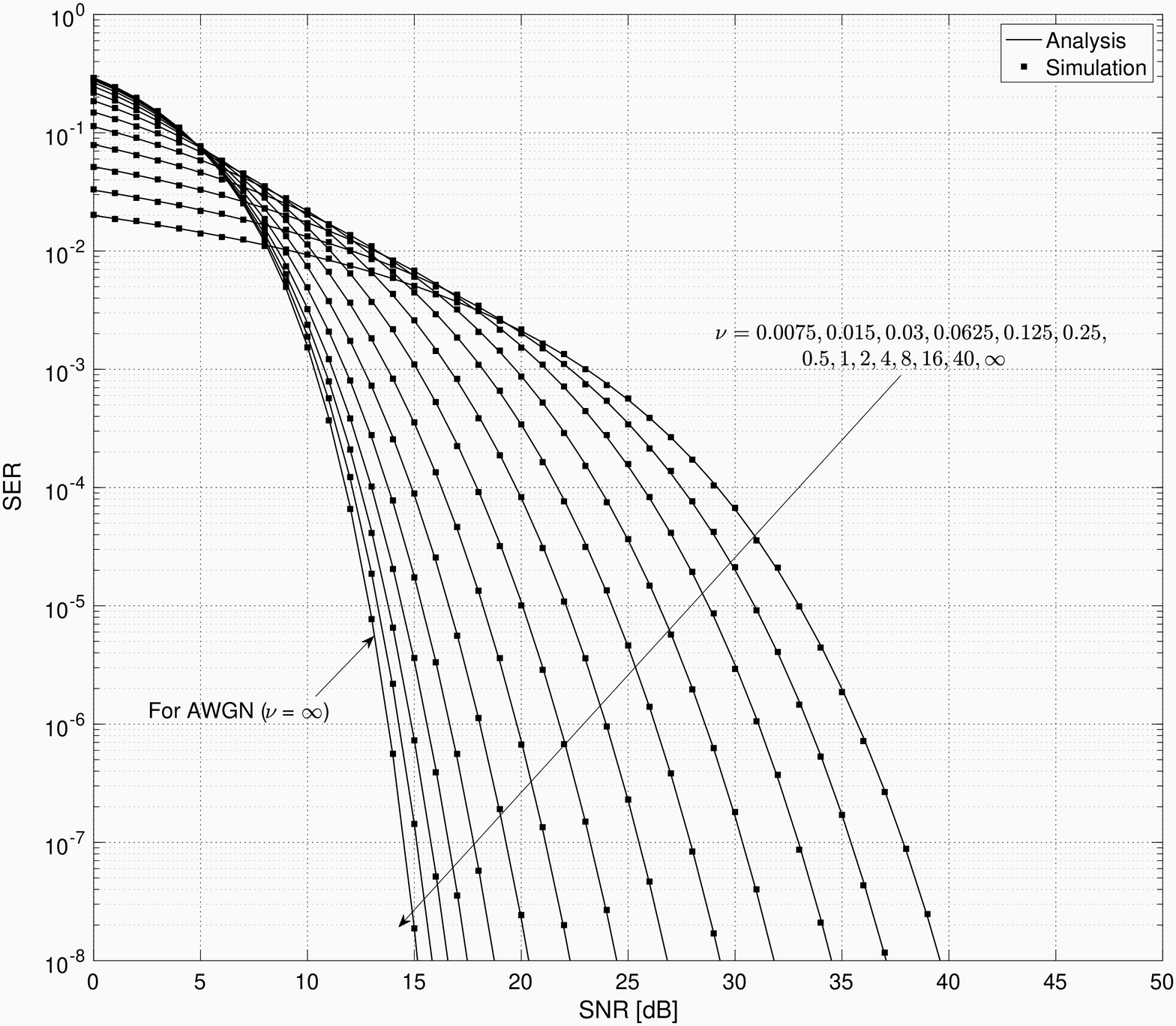}
    \caption{Modulation level $M=4$.}
    \vspace{5mm}
    \label{Figure:ConditionalSEPForMQAMA}
\end{subfigure}
{~~~}
\begin{subfigure}{0.47\columnwidth}
    \centering
    \includegraphics[clip=true, trim=0mm 0mm 0mm 0mm, width=1.0\columnwidth,height=0.85\columnwidth]{./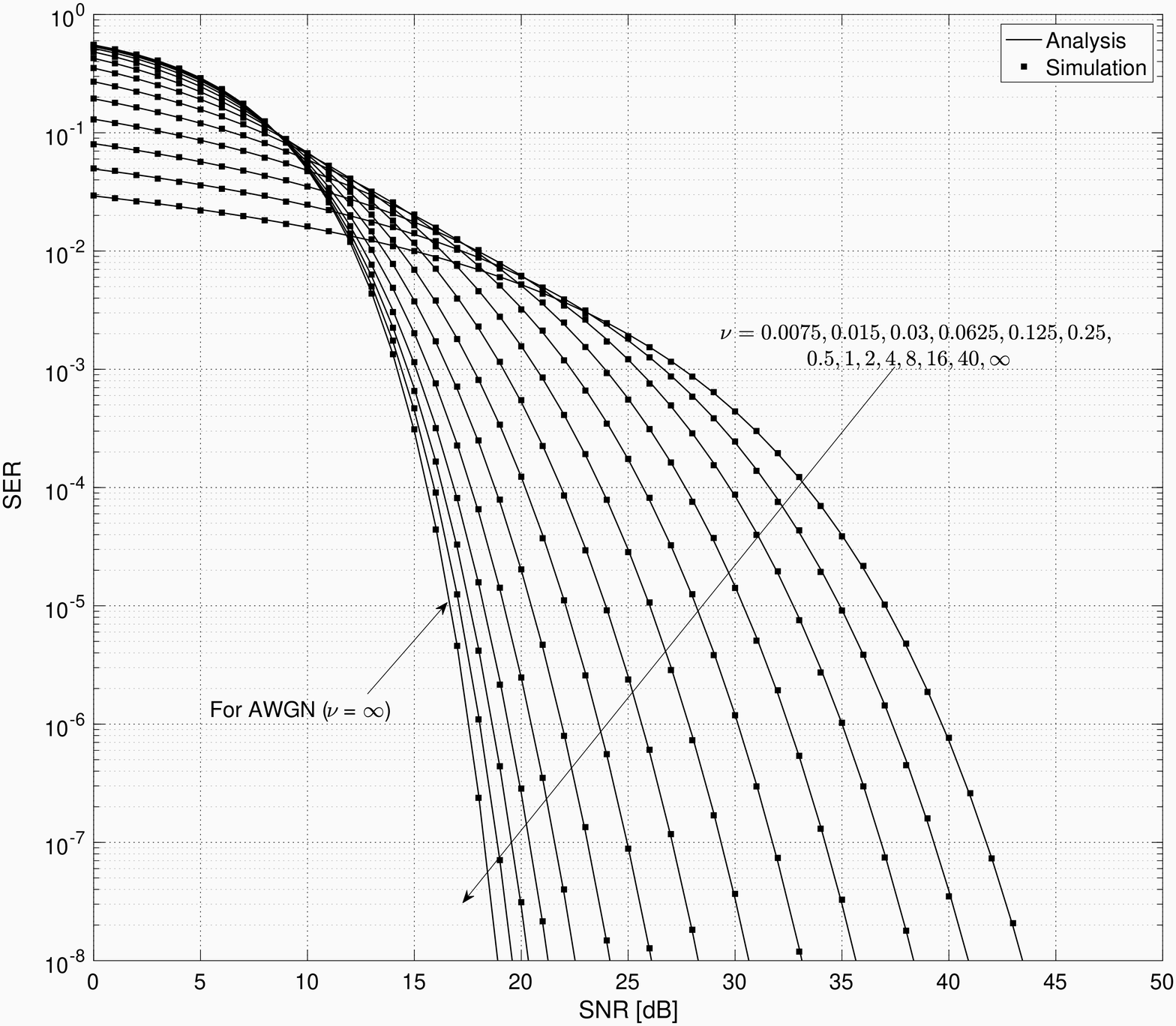}
    \caption{Modulation level $M=8$.}
    \vspace{5mm}
    \label{Figure:ConditionalSEPForMQAMB}
\end{subfigure}\\
\begin{subfigure}{0.47\columnwidth}
    \centering
    \includegraphics[clip=true, trim=0mm 0mm 0mm 0mm, width=1.0\columnwidth,height=0.85\columnwidth]{./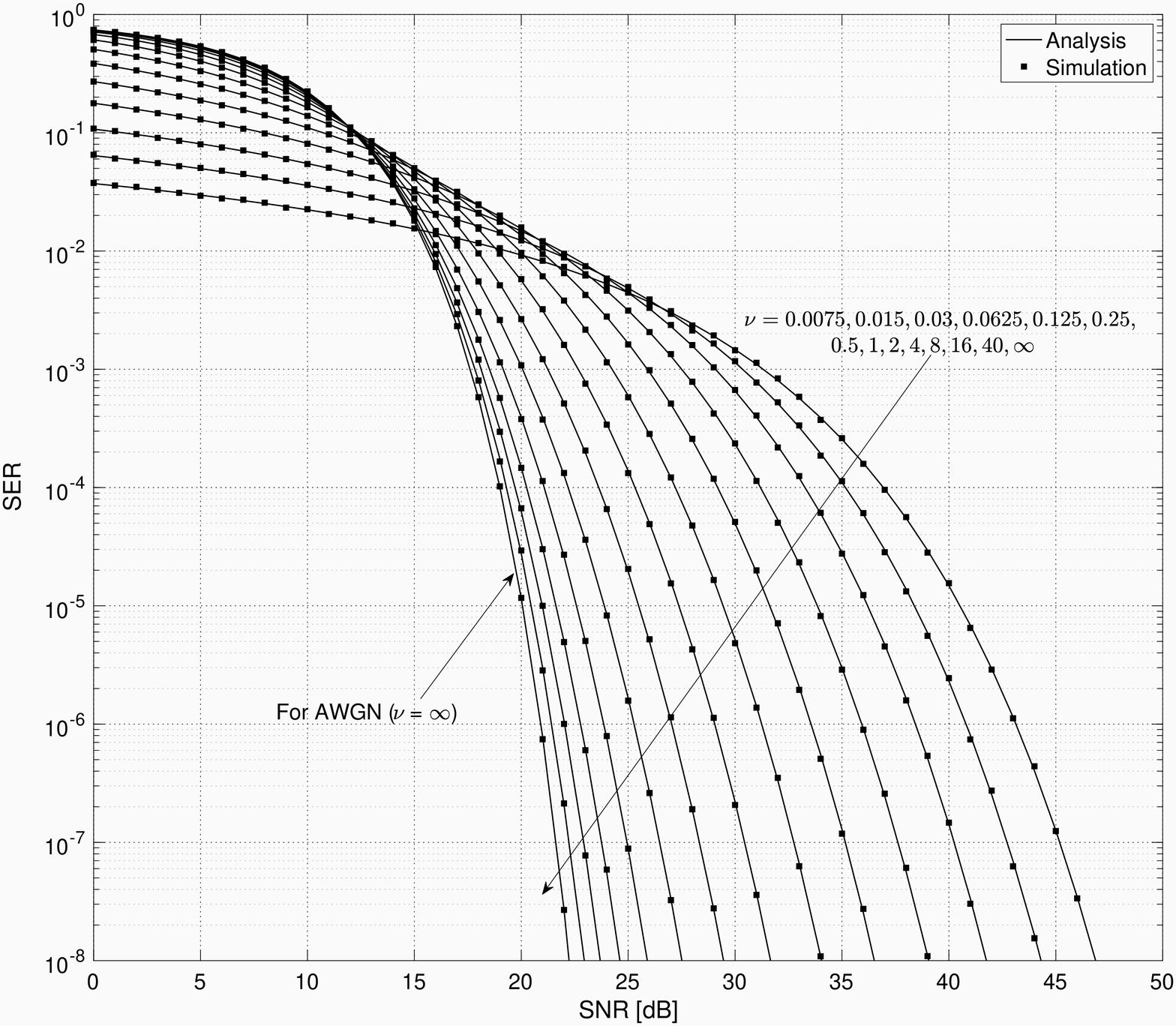}
    \caption{Modulation level $M=16$.}
    \label{Figure:ConditionalSEPForMQAMC}
\end{subfigure}
{~~~}
\begin{subfigure}{0.47\columnwidth}
    \centering
    \includegraphics[clip=true, trim=0mm 0mm 0mm 0mm, width=1.0\columnwidth,height=0.85\columnwidth]{./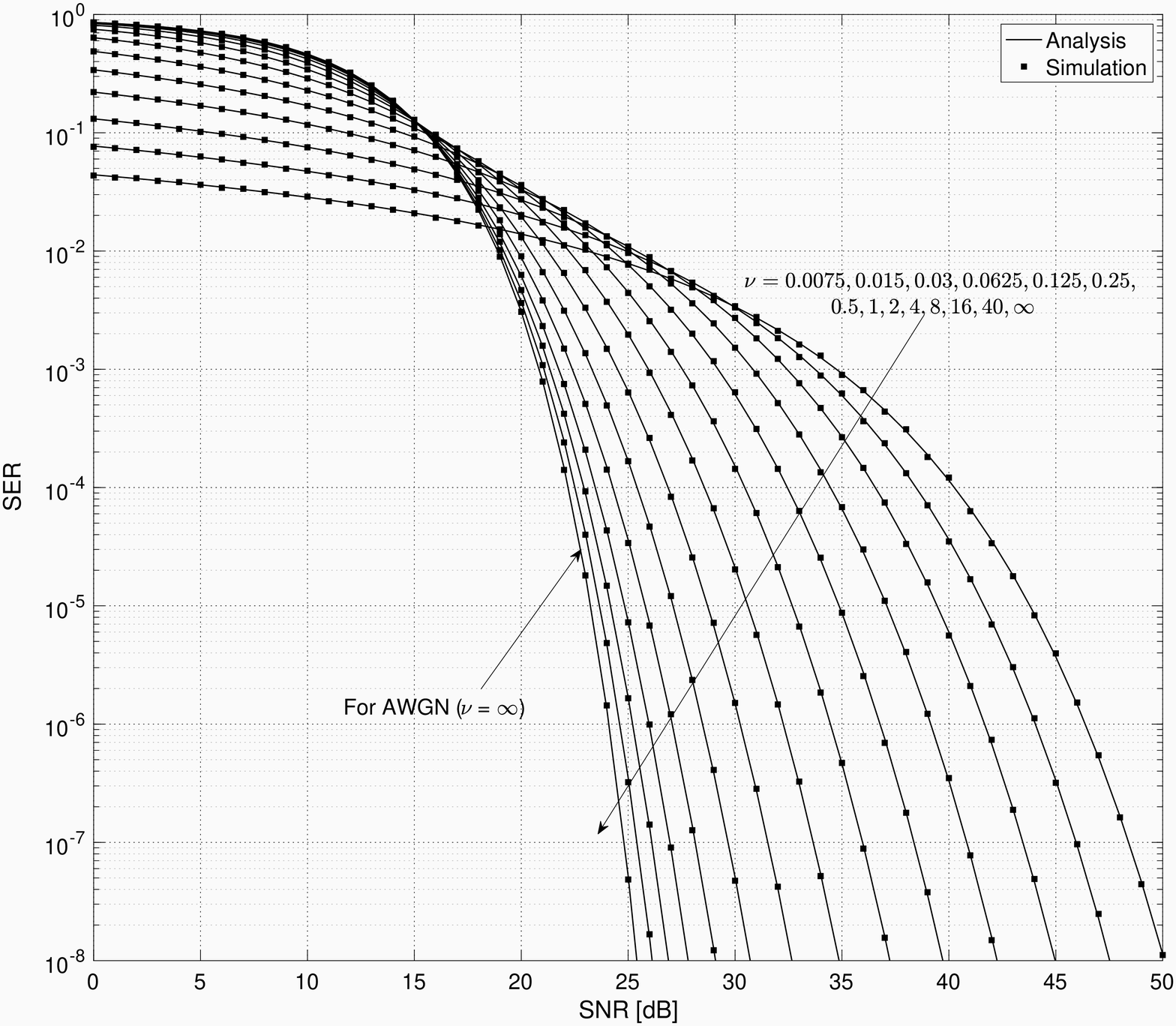}
    \caption{Modulation level $M=32$.}
    \label{Figure:ConditionalSEPForMQAMD}
\end{subfigure}
\caption{The \ac{SER} of \ac{M-QAM} signaling over \ac{AWMN} channels.}
\label{Figure:ConditionalSEPForMQAM}
\vspace{-2mm} 
\end{figure*} 

Let us check some special cases for completeness. For 4-QAM, \eqref{Eq:MLDecisionErrorProbabilityForSquareQAMCoherentSignallingOverAWMNChannels} reduces to
\begin{equation}\label{Eq:MLDecisionErrorProbabilityFor4QAMCoherentSignallingOverAWMNChannels}
    \Pr\{\bigl.e\,|\,H\}=2Q_{\nu}\bigl(\sqrt{\gamma}\bigr)-Q_{\nu}\bigl(\sqrt{\gamma},{\pi}/{4}\bigr),
\end{equation} 
where referring \eqref{Eq:McLeishPartialQFunction} with the total integration angle, i.e., $\pi/2+\pi/2-\pi/4=\pi-\pi/4$, we can reduce \eqref{Eq:MLDecisionErrorProbabilityFor4QAMCoherentSignallingOverAWMNChannels} more to 
\begin{equation}\label{Eq:MLDecisionErrorProbabilityFor4QAMCoherentSignallingOverAWMNChannelsII}
    \Pr\{\bigl.e\,|\,H\}=Q_{\nu}\bigl(\sqrt{\gamma},3{\pi}/{4}\bigr),
\end{equation} 
In addition, note that we have  
\begin{equation}
    \lim_{\nu\rightarrow\infty}Q_{\nu}\Bigl(x,\frac{\pi}{4}\Bigr)=Q(x)^2.
\end{equation}
Thus, when the normality factor $\nu\!\rightarrow\!\infty$, \eqref{Eq:MLDecisionErrorProbabilityForSquareQAMCoherentSignallingOverAWMNChannels} reduces to \cite[Eq. (4.3-30)]{BibProakisBook}, \cite[Eq. (8.10)]{BibAlouiniBook} as expected. 

For analytical accuracy and numerical completeness and correctness, in \figref{Figure:ConditionalSEPForMQAM}, we show the \ac{SER} of \ac{M-QAM} 
signaling over~\ac{AWMN}~channels by using \theoremref{Theorem:MLDecisionErrorProbabilityForSquareQAMCoherentSignallingOverAWMNChannels} for analytical accuracy and performing simulations for numerical correctness. We also therein observe that, for  $\nu\!\rightarrow\!{0}$, the system performance deteriorates in the high-\ac{SNR} regime. When we compare the performance of \ac{M-QAM} to that of \ac{M-ASK} (i.e., namely comparing \figref{Figure:ConditionalSEPForMQAMB} to \figref{Figure:ConditionalSEPForMASKC} for $M\!=\!8$), we notice that \ac{M-QAM} gives better performance.

\vspace{1mm}
\paragraph{Conditional \ac{SER} of \ac{M-PSK} Modulation}
\label{Section:SignallingOverAWMNChannels:CoherentSignalling:SymbolErrorProbability:MPSKModulation}
Considering the \ac{M-PSK} constellation as the rotational extension of the \ac{BPSK} constellation to the phase shift keying, let us denote its modulation symbols by $\{\defvec{s}_{1},\defvec{s}_{2},\ldots,\defvec{s}_{M}\}$, where $\defvec{s}_{m}\!=\!\alpha{e}^{\imaginary\theta_{m}}\defvec{s}$ such that $\defvec{s}$ denotes an arbitrary unit vector (i.e., $\lVert\defvec{s}\rVert\!=\!1$), the amplitude $\alpha\!\in\!\mathbb{R}_{+}$ determines the power per modulation symbol such that we can readily express the power of $\defvec{s}_{m}$ as $E_{m}\!=\!\lVert{\defvec{s}_{m}}\rVert^2\!=\!\alpha^2$. Further, the phase rotations ${\theta}_{m}$, ${1}\!\leq\!{m}\!\leq\!{M}$ encode information within the \ac{M-PSK} modulation symbols and are uniformly chosen for a modulation level $M$, that is
\begin{equation}
 {\theta}_{m}\!=\!{2\pi}(m-1)/{M},\quad{1}\leq{m}\leq{M}.
\end{equation}
Accordingly, we can rewrite the \ac{M-PSK} modulation symbols as $\defvec{s}_{m}\!=\!\alpha\exp\bigl(\imaginary{2\pi}(m-1)/{M}\bigr)\defvec{s}$, ${1}\!\leq\!{m}\!\leq\!{M}$ and therein making use of $E_{m}\!=\!\alpha^2$, ${1}\!\leq\!{m}\!\leq\!{M}$, we obtain the average power $E_{\defrvec{S}}$ as follows
\begin{equation}
    E_{\defrvec{S}}
        =\sum_{m=1}^{M}\Pr\{\defvec{s}_{m}\}{E}_{m}=\alpha^2. 
\end{equation}
Therefore, we have $\alpha\!=\!\sqrt{E_{\defrvec{S}}}$. Let us now find the conditional \ac{SER} for the \ac{M-PSK} modulation. Assuming $\defvec{s}_{m}$ is transmitted, we can write the received vector $\defrvec{R}_{c}$ using the mathematical model given by \eqref{Eq:PrecodedComplexAWMNVectorChannel} as follows 
\begin{equation}\label{Eq:AWMNPSKSignallingReceivedVector}
    \defrvec{R}_{c}=\alpha{H}{e}^{\imaginary\theta_{m}}\defvec{s}+\defrvec{Z}_{c}
\end{equation}
where $\defrvec{Z}_{c}\!\sim\!\mathcal{CM}^{L}_{\nu}(\defvec{0},\frac{N_{0}}{2}\defmat{I})$ and $\defrvec{R}_{c}\!\sim\!\mathcal{CM}^{L}_{\nu}(\alpha{H}{e}^{\imaginary\theta_{m}}\defvec{s},\frac{N_{0}}{2}\defmat{I})$. Since the information is carried by means of phase shift keying in form of $2{\pi}/{M}$ multiplies (i.e., the angle difference between the adjacent symbols is $2{\pi}/{M}$), a decision error occurs when the additive noise $\defrvec{Z}_{c}$ causes an enough rotational shift more than ${\pi}/{M}$ in clockwise or counterclockwise direction in $\defrvec{R}_{c}$. We give the projection of $\defrvec{R}_{c}$ on $\defvec{s}_{m}$ as
\begin{equation}\label{Eq:AWMNPSKSignallingProjection}
    {P}_{c}=\defvec{s}^{H}_{m}\defrvec{R}_{c}=\alpha{H}+{Z}_{c}
\end{equation}
where ${Z}_{c}\!\sim\!\mathcal{CM}_{\nu}(0,{N_{0}}/{2})$ follows the \ac{PDF} that we write with the aid of \theoremref{Theorem:CCSMcLeishPDF} as 
\begin{equation}\label{Eq:AWMNPSKSignallingAdditiveNoisePDF}
	f_{Z_{c}}(z)=\frac{2}{\pi}
		\frac{\abs{z}^{\nu-1}}{\Gamma(\nu)\,\Lambda_{0}^{\nu+1}}
			\BesselK[\nu-1]{\frac{2\abs{z}}{\Lambda_{0}}},
\end{equation}
defined over $z\!\in\!\mathbb{C}$ with the normality factor $\Lambda_{0}\!=\!\sqrt{N_{0}/\nu}$.
Therefore, ${P}_{c}\!\sim\!\mathcal{CM}_{\nu}(\alpha{H},{N_{0}}/{2})$ is decomposed as 
\begin{equation}\label{Eq:AWMNPSKSignallingProjectionII}
    {P}_{c}=I_{c}+\imaginary{Q}_{c},
\end{equation}
where ${I}_{c}\!\sim\!\mathcal{M}_{\nu}(\alpha{H},{N_{0}}/{2})$ and ${Q}_{c}\!\sim\!\mathcal{M}_{\nu}(0,{N_{0}}/{2})$. Hence, the amplitude fluctuation caused by the additive complex noise $\defrvec{Z}_{c}$ is apparently written as ${A}_{c}\!=\!\sqrt{I^2_{c}+{Q}^2_{c}}$. The rotational shift, which is another effect caused by the additive complex noise $\defrvec{Z}_{c}$, is written as 
$\Theta_{c}\!=\!\arctan\bigl({{Q}_{c}}/{{I}_{c}}\bigr)$, which follows such a random distribution that a decision error occurs when $\abs{\Theta_{c}}\!>\!{\pi}/{M}$ (i.e., a correct decision occurs when $\abs{\Theta_{c}}\!<\!{\pi}/{M}$). In other words, the error probability when $\defvec{s}_{m}$ was transmitted is readily written as
\begin{subequations}\label{Eq:AWMNPSKSignallingConditionalSEPExpression}
\begin{eqnarray}
    \label{Eq:AWMNPSKSignallingConditionalSEPExpressionA}
    \Pr\bigl\{\bigl.e\,|\,H,\defvec{s}_{m}\bigr\}
        &=&\Pr\bigl\{\abs{\Theta_{c}}>{\pi}/{M}\bigr\},\\
    \label{Eq:AWMNPSKSignallingConditionalSEPExpressionB}
        &=&1-\Pr\bigl\{-{\pi}/{M}<\Theta_{c}<{\pi}/{M}\bigr\}.{~~~~~~~~}
\end{eqnarray}
\end{subequations}
Since assuming that all modulation symbols are equiprobable, we perceive that, due to the rotational symmetry of the \ac{M-PSK} constellation, $\Pr\bigl\{\bigl.e\,|\,H,\defvec{s}_{m}\bigr\}\!=\!\Pr\bigl\{\bigl.e\,|\,H,\defvec{s}_{\widehat{m}}\bigr\}$ for all ${m}\!\neq\!\widehat{m}$. The conditional \ac{SER} of the \ac{M-PSK} is therefore equal to the probability of making a decision error when $\defvec{s}_{m}$ is transmitted, and accordingly we write 
\begin{subequations}
    \begin{eqnarray}
        \Pr\{\bigl.e\,|\,H\}
            &=&\sum_{m=1}^{M}\Pr\{\bigl.e\,|\,H,\defvec{s}_{m}\}\Pr\{\defvec{s}_{m}\},\\
            &=&\Pr\{\bigl.e\,|\,H,\defvec{s}_{m}\},\\
            &=&1-\Pr\{-{\pi}/{M}<\Theta_{c}<{\pi}/{M}\}.{~~}
    \end{eqnarray}
\end{subequations}
Referring to \eqref{Eq:AWMNPSKSignallingProjectionII}, and therefrom having both the amplitude ${A}_{c}\!=\!\sqrt{I^2_{c}+{Q}^2_{c}}$ and the phase $\Theta_{c}\!=\!\arctan\bigl({{Q}_{c}}/{{I}_{c}}\bigr)$, we can deduce the inphase and quadrature of the projection $P_{c}$ as ${I}_{c}={A}_{c}\cos\bigl(\Theta_{c}\bigr)$ and ${Q}_{c}={A}_{c}\sin\bigl(\Theta_{c}\bigr)$, from which we derive the joint \ac{PDF} of ${A}_{c}$ and $\Theta_{c}$ by utilizing \eqref{Eq:AWMNPSKSignallingAdditiveNoisePDF}, that is  
\begin{equation}\label{Eq:AWMNPSKSignallingJointAmplitudePhasePDF}
    f_{{A}_{c},{\Theta}_{c}}(a,\theta)=
        \frac{2\,\Omega(a,\theta)^{\nu-1}}{\pi\Gamma(\nu)\,\Lambda_{0}^{\nu+1}}
	        \BesselK[\nu-1]{
		        \frac{2}{\Lambda_{0}}\Omega(a,\theta)},\!\!
\end{equation}
where $\Omega(a,\theta)$ is given by 
\begin{equation}
\Omega(a,\theta)=\sqrt{{a^2-2a\sqrt{H^2E_{\defrvec{S}}}\cos(\theta)+H^2E_{\defrvec{S}}}}.   
\end{equation}
Accordingly, when we integrate \eqref{Eq:AWMNPSKSignallingJointAmplitudePhasePDF} over $a\!\in\!\mathbb{R}_{+}$, we obtain the marginal \ac{PDF} of ${\Theta}_{c}$, that is $f_{{\Theta}_{c}}(\theta)\!=\!\int_{0}^{\infty}f_{{A}_{c},{\Theta}_{c}}(a,\theta)da$, where substituting \eqref{Eq:AWMNPSKSignallingJointAmplitudePhasePDF} yields 
\begin{equation}\label{Eq:AWMNPSKSignallingPhasePDF}
\!\!\!\!f_{{\Theta}_{c}}(\theta)=
        \int_{0}^{\infty}
        \frac{2\,\Omega(a,\theta)^{\nu-1}}{\pi\Gamma(\nu)\,\Lambda_{0}^{\nu+1}}
	        \BesselK[\nu-1]{
		        \frac{2}{\Lambda_{0}}\Omega(a,\theta)}da,\!\!
\end{equation}
which does not simplify to a simple closed form and thus must be evaluated numerically. Nevertheless, making~use~of~$f_{{\Theta}_{c}}(\theta)$, we calculate the probability $\Pr\bigl\{\theta_{0}\!<\!\Theta_{c}\!<\!\theta_{1}\bigr\}\!=\!\int_{\theta_{0}}^{\theta_{1}}f_{{\Theta}_{c}}(\theta)d\theta$ and thereby derive the conditional \ac{SER} of \ac{M-PSK} constellation in the following.

\begin{figure*}[tp] 
\centering
\begin{subfigure}{0.47\columnwidth}
    \centering
    \includegraphics[clip=true, trim=0mm 0mm 0mm 0mm, width=1.0\columnwidth,height=0.85\columnwidth]{./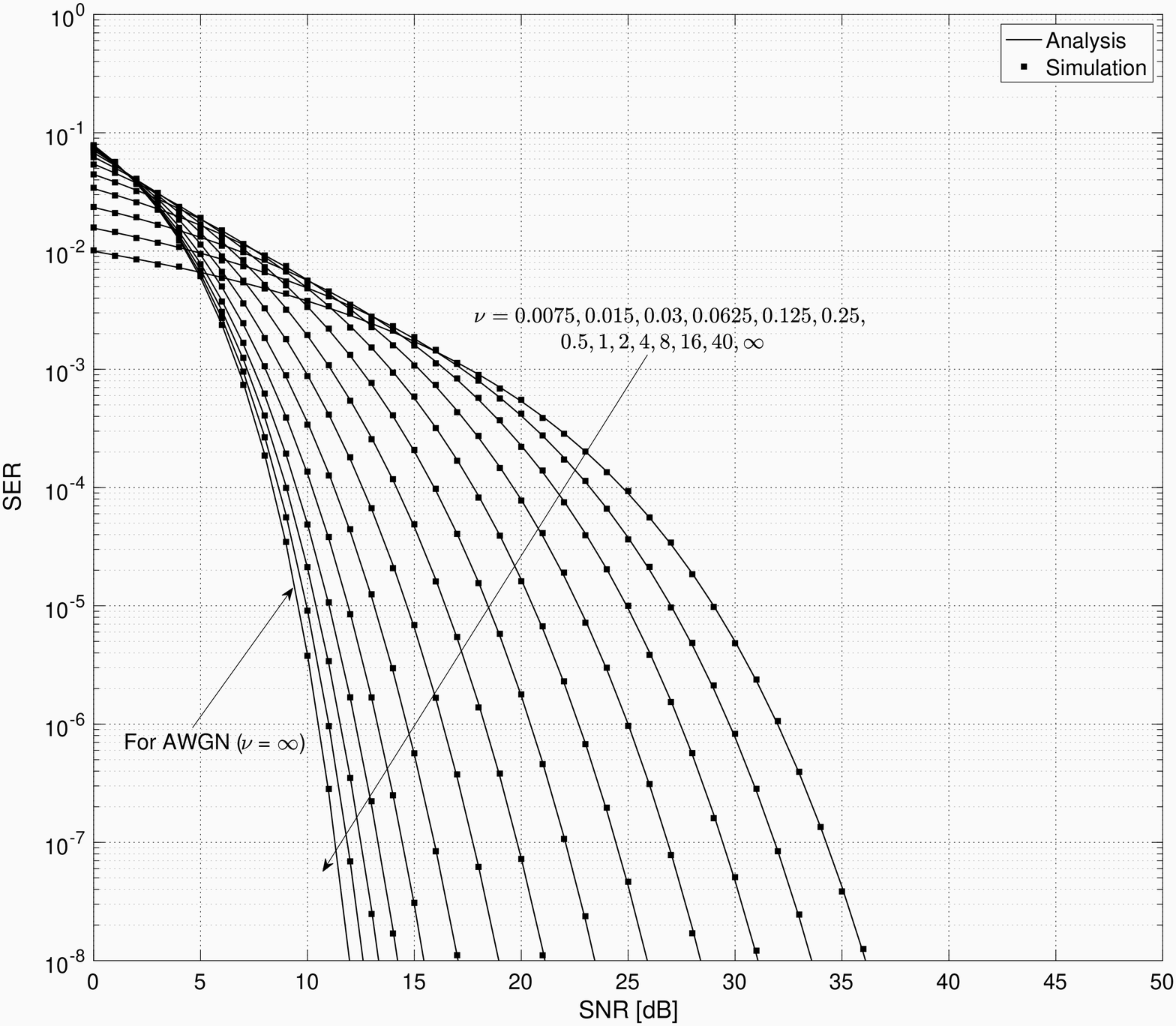}
    \caption{Modulation level $M=2$.}
    \vspace{5mm}
    \label{Figure:ConditionalSEPForMPSKA}
\end{subfigure}
{~~~}
\begin{subfigure}{0.47\columnwidth}
    \centering
    \includegraphics[clip=true, trim=0mm 0mm 0mm 0mm, width=1.0\columnwidth,height=0.85\columnwidth]{./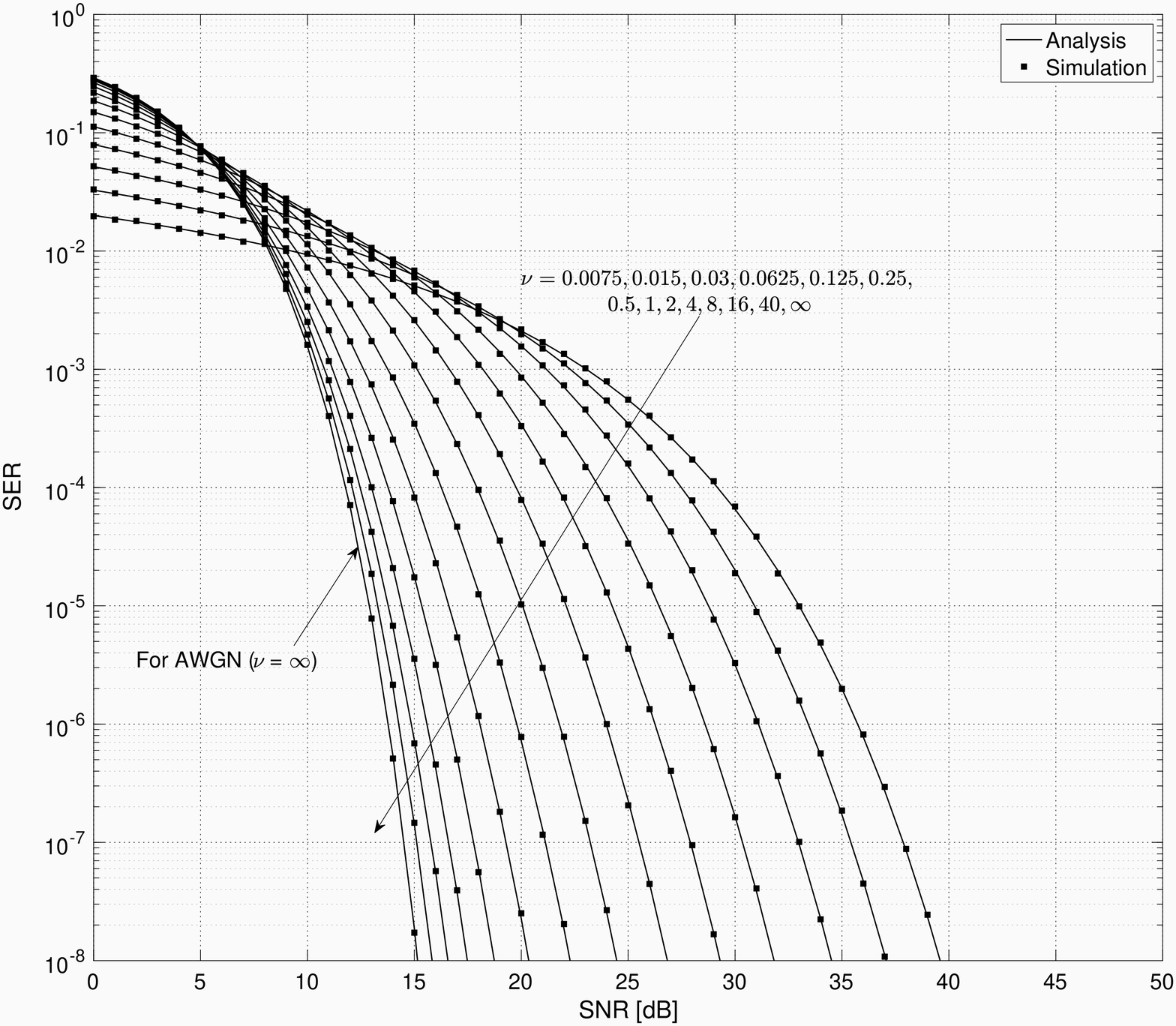}
    \caption{Modulation level $M=4$.}
    \vspace{5mm}
    \label{Figure:ConditionalSEPForMPSKB}
\end{subfigure}\\
\begin{subfigure}{0.47\columnwidth}
    \centering
    \includegraphics[clip=true, trim=0mm 0mm 0mm 0mm, width=1.0\columnwidth,height=0.85\columnwidth]{./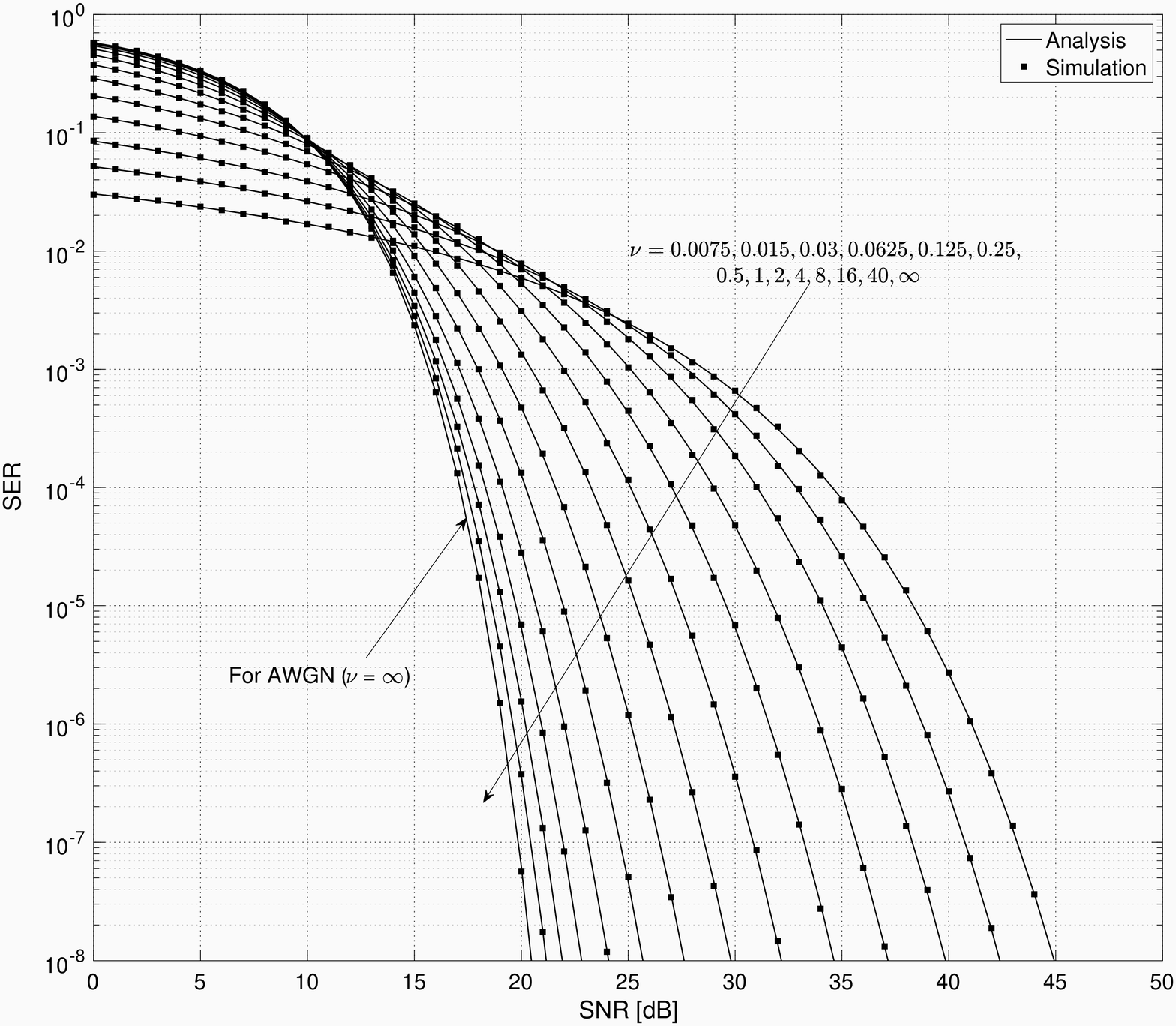}
    \caption{Modulation level $M=8$.}
    \label{Figure:ConditionalSEPForMPSKC}
\end{subfigure}
{~~~}
\begin{subfigure}{0.47\columnwidth}
    \centering
    \includegraphics[clip=true, trim=0mm 0mm 0mm 0mm, width=1.0\columnwidth,height=0.85\columnwidth]{./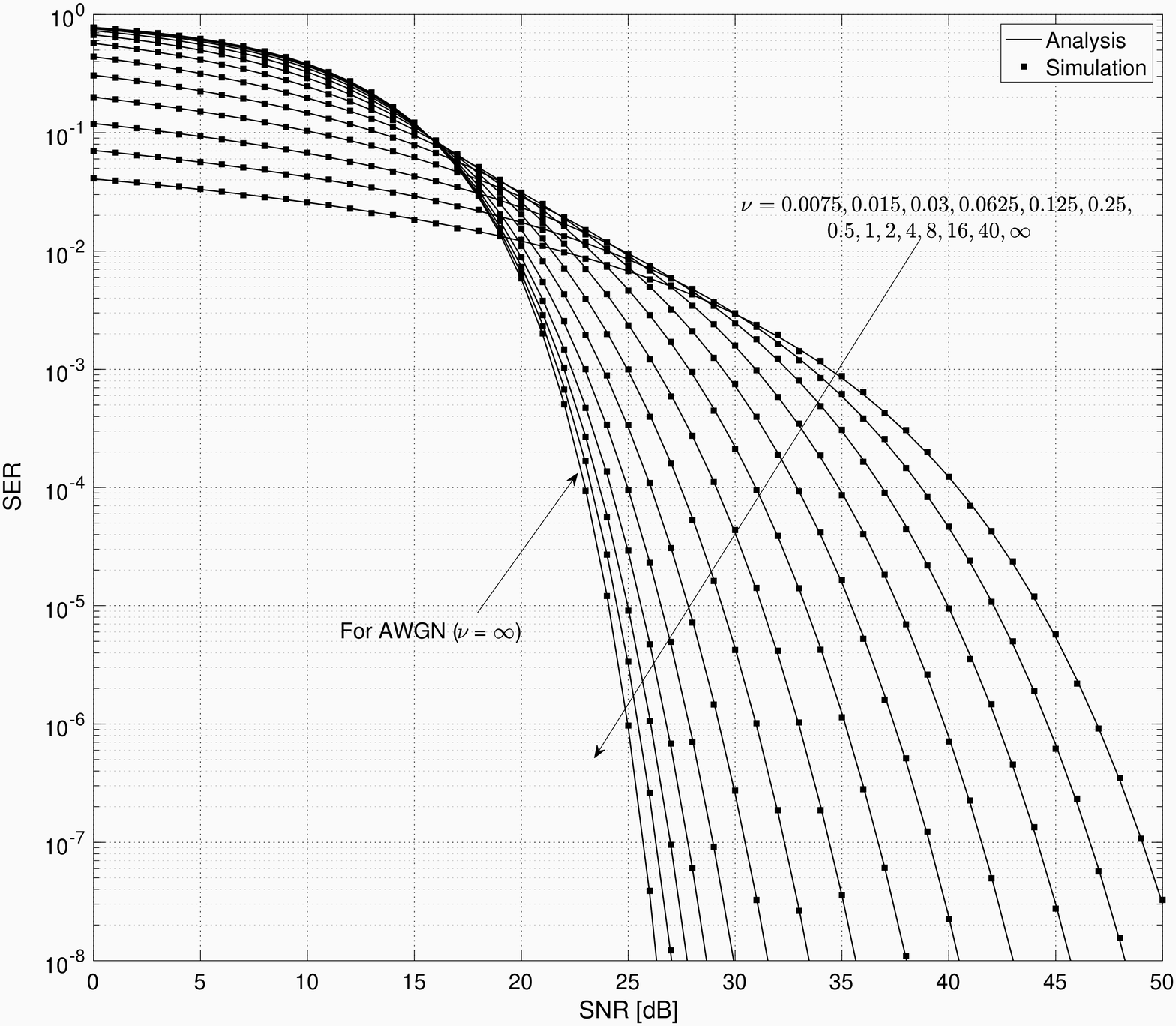}
    \caption{Modulation level $M=16$.}
    \label{Figure:ConditionalSEPForMPSKD}
\end{subfigure}
\caption{The \ac{SER} of \ac{M-PSK} modulation over \ac{AWMN} channels.}
\label{Figure:ConditionalSEPForMPSK}
\vspace{-2mm} 
\end{figure*} 

\begin{theorem}\label{Theorem:MLDecisionErrorProbabilityForMPSKCoherentSignallingOverAWMNChannels}
For the \ac{ML} decision rule, the conditional \ac{SER} of the rectangular \ac{M-PSK} signaling is given by
\begin{equation}\label{Eq:AWMNPSKSignallingConditionalSEP}
    \Pr\{\bigl.e\,|\,H\}=1-\int_{-\pi/M}^{\pi/M}f_{{\Theta}_{c}}(\theta)d\theta.
\end{equation}
\end{theorem}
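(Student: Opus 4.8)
The plan is to reduce the $M$-PSK error event to a statement about the phase $\Theta_c$ of the projected received sample and then integrate the marginal phase density already obtained in \eqref{Eq:AWMNPSKSignallingPhasePDF}. First I would invoke the simplified optimal rule \eqref{Eq:OptimalMLDecisionRuleForPrecodedComplexAWMNVectorChannel}, which applies here because the $M$-PSK symbols are equiprobable and of equal power $E_m=\alpha^2$, so that $\widehat{m}=\argmax_{1\leq n\leq M}\RealPart{\defvec{s}^H_n\defrvec{R}_c}$. Since $\defvec{s}_n=\alpha{e}^{\imaginary\theta_n}\defvec{s}$ with $\lVert\defvec{s}\rVert=1$, one has $\RealPart{\defvec{s}^H_n\defrvec{R}_c}=\alpha\abs{\defvec{s}^H\defrvec{R}_c}\cos(\arg(\defvec{s}^H\defrvec{R}_c)-\theta_n)$, so the rule selects the symbol whose phase $\theta_n$ is angularly closest to $\arg(\defvec{s}^H\defrvec{R}_c)$. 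Consequently the decision regions are the $M$ disjoint angular sectors of width $2\pi/M$ centred on the $\theta_n$, with boundaries at angular offset $\pi/M$ from each symbol.

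Next I would fix a transmitted symbol $\defvec{s}_m$ and pass to the noise-perturbed projection \eqref{Eq:AWMNPSKSignallingProjection}--\eqref{Eq:AWMNPSKSignallingProjectionII}, writing $P_c=I_c+\imaginary Q_c$ with $I_c\sim\mathcal{M}_\nu(\alpha H,N_0/2)$ and $Q_c\sim\mathcal{M}_\nu(0,N_0/2)$, measured relative to $\theta_m$. A correct decision then occurs exactly when the rotational shift $\Theta_c=\arctan(Q_c/I_c)$ remains inside the sector, i.e. $\abs{\Theta_c}<\pi/M$, so the conditional error event is $\{\abs{\Theta_c}>\pi/M\}$ as anticipated in \eqref{Eq:AWMNPSKSignallingConditionalSEPExpression}. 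I would then remove the dependence on $m$: because the constellation is invariant under rotation by $2\pi/M$ and the additive noise $\defrvec{Z}_c\sim\mathcal{CM}^L_\nu(\defvec{0},\frac{N_0}{2}\defmat{I})$ is circularly symmetric, the law of $\Theta_c$, and hence $\Pr\{e\,|\,H,\defvec{s}_m\}$, is the same for every $m$. Averaging over the equiprobable symbols collapses the sum to a single term, $\Pr\{e\,|\,H\}=\Pr\{e\,|\,H,\defvec{s}_m\}=1-\Pr\{-\pi/M<\Theta_c<\pi/M\}$, and writing this probability as $\int_{-\pi/M}^{\pi/M}f_{\Theta_c}(\theta)\,d\theta$ with the marginal density \eqref{Eq:AWMNPSKSignallingPhasePDF} yields the claimed \eqref{Eq:AWMNPSKSignallingConditionalSEP}.

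The projection algebra and the reduction of the \ac{ML} rule are immediate given the earlier results, so the delicate point is the geometric characterisation of the decision regions as exact angular sectors together with the accompanying symmetry argument. In particular, establishing that $\Pr\{e\,|\,H,\defvec{s}_m\}$ is independent of $m$ rests essentially on the circular-symmetry property $f_{\defrvec{Z}_c}(\defvec{z})=f_{\defrvec{Z}_c}({e}^{\imaginary\phi}\defvec{z})$ of the \ac{CCS} McLeish noise, which must be handled with care: unlike the Gaussian case, the uncorrelatedness of $I_c$ and $Q_c$ does \emph{not} imply their independence for finite $\nu$ (as emphasised after \theoremref{Theorem:CESMcLeishPDF}), so it is the invariance of the joint law under rotation, rather than any factorisation of the density, that drives the reduction to the single integral.
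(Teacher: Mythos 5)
Your proposal is correct and follows essentially the same route as the paper: reduce the error event to the phase deviation $\abs{\Theta_c}>\pi/M$ via the projection \eqref{Eq:AWMNPSKSignallingProjection}, use rotational symmetry of the constellation together with the \ac{CS} property of $\defrvec{Z}_c$ to drop the dependence on $m$, and integrate the marginal phase density \eqref{Eq:AWMNPSKSignallingPhasePDF}. The only difference is that you derive the angular-sector decision regions explicitly from \eqref{Eq:OptimalMLDecisionRuleForPrecodedComplexAWMNVectorChannel} and flag the uncorrelated-but-not-independent subtlety, which the paper states more informally before invoking \eqref{Eq:AWMNPSKSignallingConditionalSEPExpressionB}.
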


\begin{proof}
The proof is obvious using \eqref{Eq:AWMNPSKSignallingConditionalSEPExpressionB} with the marginal \ac{PDF} of $\Theta_{c}$ given in \eqref{Eq:AWMNPSKSignallingPhasePDF} above. 
\end{proof}

\begin{figure}[tp]
    \centering
    \includegraphics[clip=true, trim=0mm 0mm 0mm 0mm,width=0.7\columnwidth,keepaspectratio=true]{./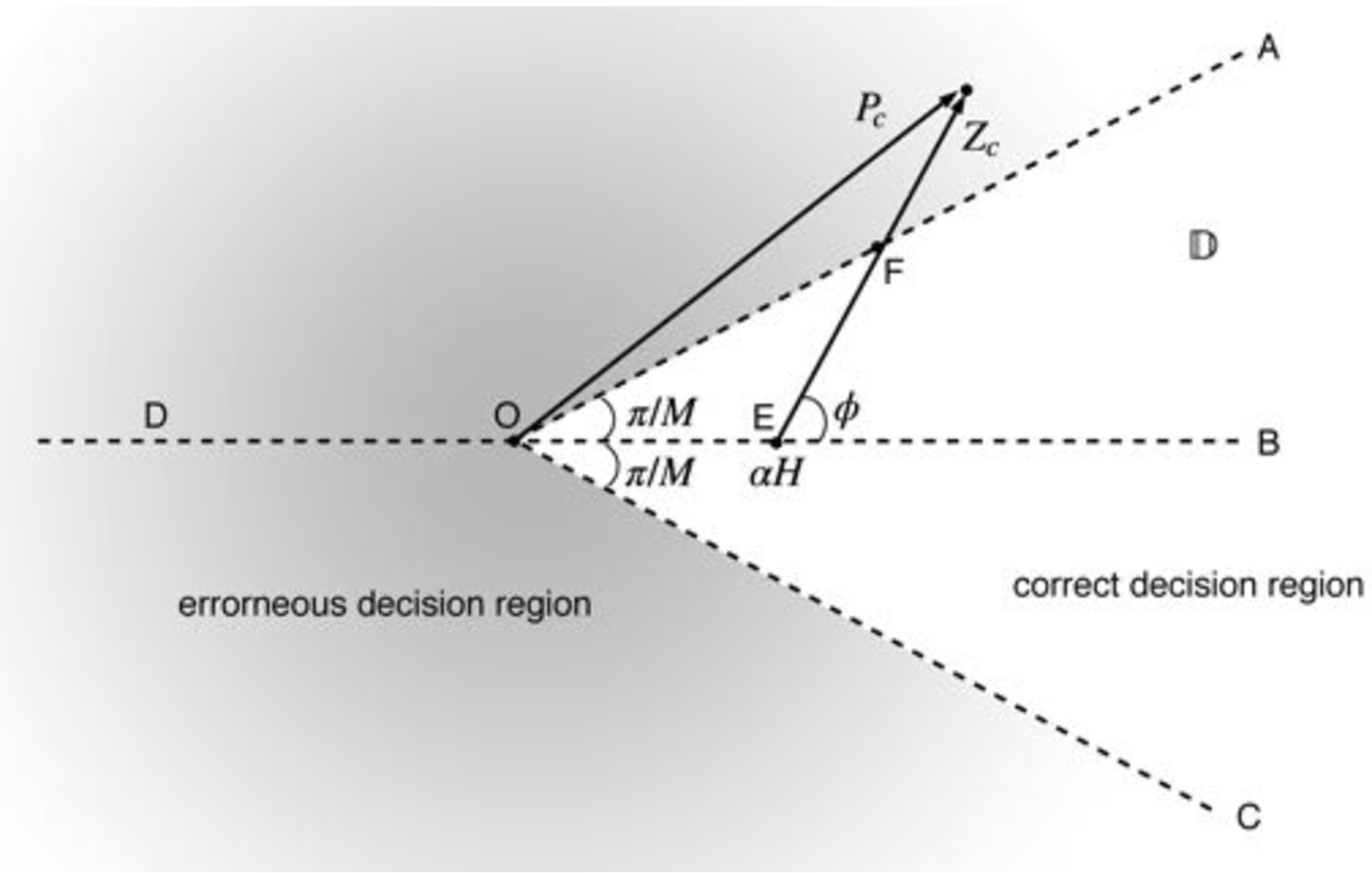}
    \caption{Received vector representation of the \ac{M-PSK} signaling whose projection model is given by \eqref{Eq:AWMNPSKSignallingProjection} with the decision region $\mathbb{D}\!=\!\{z\in\mathbb{C}|-\pi/M<\arg({z})<\pi/M\}$.}
    \label{Figure:DecisionRegionsForMPSKEquiprobableSignals}
    \vspace{-2mm}  
\end{figure} 

A closed-form expression to \eqref{Eq:AWMNPSKSignallingConditionalSEP} does not exist for $M\!>\!4$, and therefore the exact value of $\Pr\{\bigl.e\,|\,H\}$ must be calculated numerically and of course can be accurately approximated using Chebyshev-Gauss quadrature formula\cite[Eq. (25.4.39)]{BibAbramowitzStegunBook}. The other approach, which is similar to the one followed in \cite{BibCraigMILCOM1991}, to find the conditional \ac{SER} of \ac{M-PSK} constellation is to integrate the \ac{PDF} of ${Z}_{c}\!\sim\!\mathcal{CM}_{\nu}(0,{N_{0}}/{2})$ over the region of $\mathbb{D}\!=\!\{z\in\mathbb{C}\,|-\pi/M\!<\!\arg({z})\!<\!\pi/M\}$ and as presented in the following.

\begin{theorem}\label{Theorem:MLDecisionErrorProbabilityForMPSKCoherentSignallingOverAWMNChannelsII}
For the \ac{ML} decision rule, the conditional \ac{SER} of the \ac{M-PSK} signaling is given by
\begin{equation}\label{Eq:MLDecisionErrorProbabilityForMPSKCoherentSignallingOverAWMNChannelsII}
    \Pr\{\bigl.e\,|\,H\}=
        Q_{\nu}\!\left(\sqrt{2\gamma}\sin\Bigl(\frac{\pi}{M}\Bigr),\pi-\frac{\pi}{M}\right),
\end{equation}
where $\gamma$ is the instantaneous \ac{SNR} defined in \eqref{Eq:SignalToNoiseDefinition}. 
\end{theorem}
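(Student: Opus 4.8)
The plan is to exploit the Gaussian--mixture decomposition of $Z_c\!\sim\!\mathcal{CM}_\nu(0,N_0/2)$ so as to reduce the problem to the classical Gaussian M-PSK error probability, for which Craig's single-integral representation is available, and then to average over the mixing variable. First I would recall from \eqref{Eq:AWMNPSKSignallingProjection} that the projection $P_c\!=\!\alpha H\!+\!Z_c$ on the transmitted symbol governs the decision, and that a symbol error occurs precisely when $P_c$ falls outside the decision wedge $\mathbb{D}\!=\!\{z\in\mathbb{C}\,|-\pi/M<\arg(z)<\pi/M\}$ illustrated in \figref{Figure:DecisionRegionsForMPSKEquiprobableSignals}. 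By the rotational symmetry of the constellation this error probability is independent of the transmitted symbol, so it suffices to integrate the \ac{PDF} of $Z_c$ over the complement of $\mathbb{D}$.

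Next I would condition on the Gamma mixing variable. Invoking \theoremref{Theorem:CCSMcLeishDefinition}, write $Z_c\!=\!\sqrt{G}(X_c+\imaginary Y_c)$ with $G\!\sim\!\mathcal{G}(\nu,1)$ and $X_c,Y_c$ independent zero-mean Gaussian components, so that $Z_c$ conditioned on $G\!=\!g$ is circularly symmetric complex Gaussian with per-component variance scaled by $g$. The conditional symbol error probability is then exactly the Gaussian M-PSK result with effective instantaneous \ac{SNR} $\gamma/g$, which Craig's method expresses as the finite single integral
\begin{equation}
\Pr\bigl\{\bigl.e\,|\,H,G\!=\!g\bigr\}=\frac{1}{\pi}\int_{0}^{\pi-\frac{\pi}{M}}\exp\Bigl(-\frac{\gamma\sin^2({\pi}/{M})}{g\,\sin^2(\theta)}\Bigr)d\theta.
\end{equation}

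The crux is then the $G$-averaging $\Pr\{e|H\}=\int_0^\infty \Pr\{e|H,G\!=\!g\}f_G(g)\,dg$. I would substitute the Gamma \ac{PDF} \eqref{Eq:ProportionPDF}, exchange the order of the $\theta$- and $g$-integrations (justified by nonnegativity of the integrand), and evaluate the inner integral in closed form using the Bessel identity $\int_0^\infty g^{\nu-1}\exp(-a/g-\nu g)\,dg=2(a/\nu)^{\nu/2}K_\nu(2\sqrt{a\nu})$ from \cite[Eq.~(3.471/9)]{BibGradshteynRyzhikBook}, applied with $a=\gamma\sin^2(\pi/M)/\sin^2(\theta)$. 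Collecting the resulting constants against the normalization $\lambda_0=\sqrt{2/\nu}$ leaves a single $\theta$-integral of a Bessel-$K$ kernel over $[0,\pi-\pi/M]$, which I would match term-by-term with \defref{Definition:McLeishPartialQFunction} evaluated at argument $\sqrt{2\gamma}\sin(\pi/M)$ and angle $\pi-\pi/M$, yielding \eqref{Eq:MLDecisionErrorProbabilityForMPSKCoherentSignallingOverAWMNChannelsII}.

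The main obstacle I anticipate is bookkeeping rather than conceptual: one must track the several factors of $\sqrt{2}$ and the powers of $\nu$ introduced by the $\lambda_0=\sqrt{2/\nu}$ normalization and by the Bessel-$K$ identity so that the argument of the McLeish partial \ac{Q-function} collapses exactly to $\sqrt{2\gamma}\sin(\pi/M)$, and one must verify that the upper limit $(M-1)\pi/M=\pi-\pi/M$ inherited from Craig's Gaussian formula survives the averaging unchanged (it does, since the limit is purely geometric and only the exponent depends on $g$). A cleaner alternative, if the partial analogue of \eqref{Eq:McLeishBivariateQFunctionIntegralWithCorrelation} relating $Q_\nu(x,\phi)$ to the $G$-average of Craig's partial \ac{Q-function} is invoked directly, short-circuits the last two steps and delivers the stated closed form at once.
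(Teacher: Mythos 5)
Your proof is correct, but it takes a genuinely different route from the paper's own proof of this theorem. The paper works directly in the McLeish domain, in the spirit of Craig's original geometric derivation: it converts the \ac{PDF} of $Z_{c}\!\sim\!\mathcal{CM}_{\nu}(0,N_{0}/2)$ to polar coordinates (joint amplitude--phase \ac{PDF} with Jacobian $a$), computes the distance $\abs{\mathrm{EF}}$ from the transmitted symbol to the decision boundary along each ray from the relation $H\alpha\sin(\pi/M)=\abs{\mathrm{EF}}\sin(\phi-\pi/M)$, performs the radial integration via Mellin--Barnes/Prudnikov identities to leave an angular integral of a Bessel-$K$ kernel over $[\pi/M,\pi]$, and finally recognizes this, after the substitution $\theta=\phi-\pi/M$, as $Q_{\nu}\bigl(\sqrt{2\gamma}\sin(\pi/M),\pi-\pi/M\bigr)$. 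You instead condition on the Gamma mixing variable $G$, invoke the classical Gaussian Craig formula with effective SNR $\gamma/G$, and average using \cite[Eq.~(3.471/9)]{BibGradshteynRyzhikBook}; the bookkeeping you flagged does close: with $x=\sqrt{2\gamma}\sin(\pi/M)$ the conditional exponent is $x^{2}/(2g\sin^{2}\theta)$, the $g$-integral produces the kernel $\bigl(2x/(\lambda_{0}\sin\theta)\bigr)^{\nu}K_{\nu}\bigl(2x/(\lambda_{0}\sin\theta)\bigr)$ with exactly the prefactor $2^{1-\nu}/(\pi\Gamma(\nu))$ required by \defref{Definition:McLeishPartialQFunction}, and the upper limit $\pi-\pi/M$ is purely geometric, hence untouched by the averaging (the interchange of integrals is justified by Tonelli, as the integrand is nonnegative). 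What each approach buys: your mixture route recycles known Gaussian results and avoids the polar Jacobian and boundary-distance computation; it is in fact the same strategy the paper itself employs for rectangular M-QAM in \theoremref{Theorem:MLDecisionErrorProbabilityForRectangularQAMCoherentSignallingOverAWMNChannels} and for M-DPSK in \theoremref{Theorem:MLDecisionErrorProbabilityForMDPSKCoherentSignallingOverAWMNChannels}, so it also makes the treatment of the coherent and non-coherent cases more uniform. The paper's direct wedge integration, by contrast, is self-contained in the McLeish domain and exhibits the partial \ac{Q-function} as a literal generalization of Craig's construction rather than as a $G$-average of the Gaussian one.
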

\begin{proof}
Referring to \eqref{Eq:AWMNPSKSignallingProjection}, we have ${Z}_{c}\!\sim\!\mathcal{CM}_{\nu}(0,{N_{0}}/{2})$~with the decomposition ${Z}_{c}\!=\!\allowbreak{X}_{c}+\imaginary{Y}_{c}$ in Cartesian~form,~where 
${X}_{c}\!\sim\!\mathcal{M}_{\nu}(0,{N_{0}}/{2})$ and ${Y}_{c}\!\sim\!\mathcal{M}_{\nu}(0,{N_{0}}/{2})$. Further,~we also have the Euler's form  ${Z}_{c}\!=\!{A}_{c}\exp\bigl(\imaginary{\Phi}_{c}\bigr)$ in polar form, where we express  ${A}_{c}\!=\!\sqrt{{X}^2_{c}+{Y}^2_{c}}$ and $\Phi_{c}\!=\!\arctan({Y}_{c}/{X}_{c})$. Using  \eqref{Eq:AWMNPSKSignallingAdditiveNoisePDF}, we obtain the joint \ac{PDF} of ${A}_{c}$ and $\Phi_{c}$ as  
\begin{subequations}\label{Eq:AWMNPSKSignallingAdditiveNoisePolarExpressionPDF}
\begin{eqnarray}
    \label{Eq:AWMNPSKSignallingAdditiveNoisePolarExpressionPDFA}
    f_{{A}_{c},{\Phi}_{c}}(a,\phi)
    &=&f_{{Z}_{c}}(z)\,J_{{Z}_{c}|{A}_{c},{\Phi}_{c}}, \\
    \label{Eq:AWMNPSKSignallingAdditiveNoisePolarExpressionPDFB}
    &=&f_{{Z}_{c}}(a\exp(\imaginary\phi))\,J_{{Z}_{c}|{A}_{c},{\Phi}_{c}},
\end{eqnarray}
\end{subequations}
where $f_{{Z}_{c}}(z)$ denotes the \ac{PDF} of ${Z}_{c}$, given in \eqref{Eq:AWMNPSKSignallingAdditiveNoisePDF}, and  $J_{{Z}_{c}|{A}_{c},{\Phi}_{c}}$ denotes the Jacobian of $z\!=\!a\exp\bigl(\imaginary\phi\bigr)$ and is derived~as
$J_{{Z}_{c}|{A}_{c},{\Phi}_{c}}\!=\!a$, whose replacement in \eqref{Eq:AWMNPSKSignallingAdditiveNoisePolarExpressionPDF} yields
\begin{equation}\label{Eq:AWMNPSKSignallingAdditiveNoisePolarPDF}
    f_{{A}_{c},{\Phi}_{c}}(a,\phi)=\frac{2}{\pi}
		\frac{a^{\nu}}{\Gamma(\nu)\,\Lambda_{0}^{\nu+1}}
			\BesselK[\nu-1]{\frac{2a}{\Lambda_{0}}}, 
\end{equation}
which is defined over $a\!\in\!\mathbb{R}_{+}$ and $\theta\!\in\![-\pi,\pi)$.
We notice that, as depicted in \figref{Figure:DecisionRegionsForMPSKEquiprobableSignals}, a decision error occurs if ${Z}_{c}$ falls into the erroneous decision region. Then, we write the conditional \ac{SER} of \ac{M-PSK} constellation as 
\begin{equation}\label{Eq:AWMNPSKSignallingConditionalSEPIntegral}
\Pr\bigl\{\bigl.e\,|\,H\bigr\}=
    2\int_{{\pi}/{M}}^{\pi}
        \int_{\abs{\text{EF}}}^{\infty}
            f_{{A}_{c},{\Phi}_{c}}(a,\phi)        
                \,{d}\phi\,{d}a,
\end{equation}
where $\abs{\text{EF}}$ is the distance between the modulation symbol (i.e., point E)  and the boundary~point~(i.e.,~point~F).~The~length $\abs{\text{EF}}$ is written from $H\alpha\sin(\pi/M)\!=\!\abs{\text{EF}}\sin(\phi-\pi/M)$ as 
\begin{equation}\label{Eq:AWMNPSKSignallingBoundaryDistance}
    \abs{\text{EF}}=
        2\gamma\biggl(\frac{\Lambda_{0}}{\lambda_{0}}\biggr)^{2}
            \frac{\sin(\pi/M)}{\sin(\phi-\pi/M)},
\end{equation}
with $\Lambda_{0}\!=\!\sqrt{N_{0}/\nu}$,  $\lambda_{0}\!=\!\sqrt{2/\nu}$, and  $\gamma\!=\!H^2E_{\defrvec{S}}/N_{0}$.~Conse\-quently, substituting \eqref{Eq:AWMNPSKSignallingBoundaryDistance} into \eqref{Eq:AWMNPSKSignallingConditionalSEPIntegral} and sequentially using \cite[Eqs. (8.2.2/8), (2.24.2/3) and (8.4.23/1)]{BibPrudnikovBookVol3}, we obtain 
\ifCLASSOPTIONtwocolumn
\begin{multline}\label{Eq:AWMNPSKSignallingConditionalSEPIntegralII}
\Pr\bigl\{\bigl.e\,|\,H\bigr\}=
    \frac{2^{1-\nu}}{\pi\Gamma(\nu)}
    \int_{{\pi}/{M}}^{\pi}
            {\biggl(\frac{2\sqrt{2\gamma}\sin(\pi/M)}{\lambda_{0}\sin(\phi-\pi/M)}\biggr)}^{\nu}\\
            \times
            \BesselK[\nu]{\frac{2\sqrt{2\gamma}\sin(\pi/M)}{\lambda_{0}\sin(\phi-\pi/M)}}
  		        {d}\phi,
\end{multline}
\else
\begin{equation}\label{Eq:AWMNPSKSignallingConditionalSEPIntegralII}
\Pr\bigl\{\bigl.e\,|\,H\bigr\}=
    \frac{2^{1-\nu}}{\pi\Gamma(\nu)}
    \int_{{\pi}/{M}}^{\pi}
            {\biggl(\frac{2\sqrt{2\gamma}\sin(\pi/M)}{\lambda_{0}\sin(\phi-\pi/M)}\biggr)}^{\nu}
            \BesselK[\nu]{\frac{2\sqrt{2\gamma}\sin(\pi/M)}{\lambda_{0}\sin(\phi-\pi/M)}}
  		        {d}\phi,
\end{equation}
\fi
where applying the change of variable $\theta\!=\!\phi-\pi/M$ and using \defref{Definition:McLeishPartialQFunction} yields \eqref{Eq:MLDecisionErrorProbabilityForMPSKCoherentSignallingOverAWMNChannelsII}, which proves \theoremref{Theorem:MLDecisionErrorProbabilityForMPSKCoherentSignallingOverAWMNChannelsII}. 
\end{proof}

Let us now consider some special cases for the closed-form conditional \ac{SER} of the \ac{M-PSK} signaling. The \ac{BPSK} constellation is the most reliable modulation as a special case of the \ac{M-PSK} constellation. Accordingly, setting $M\!=\!2$ in \eqref{Eq:MLDecisionErrorProbabilityForMPSKCoherentSignallingOverAWMNChannelsII} and utilizing the property $Q_{\nu}\bigl(x\bigr)=Q_{\nu}\bigl(x,{\pi}/{2}\bigr)$, we obtain the conditional \ac{SER} of \ac{BPSK} constellation as follows
\begin{subequations}
\begin{eqnarray}
    \Pr\{\bigl.e\,|\,H\}
        &=&Q_{\nu}\bigl(\sqrt{2\gamma},{\pi}/{2}\bigr),\\
        &=&Q_{\nu}\bigl(\sqrt{2\gamma}\bigr),  
\end{eqnarray}
\end{subequations}
which is perfect agreement with \eqref{Eq:ConditionalBEPForBPSK}. Further, setting $M\!=\!4$ in \eqref{Eq:MLDecisionErrorProbabilityForMPSKCoherentSignallingOverAWMNChannelsII}, we obtain the conditional \ac{SER} of \ac{QPSK} (i.e,. 4-QAM) constellation, that is 
\begin{subequations}
\begin{eqnarray}
\!\!\Pr\{\bigl.e\,|\,H\}
    &=&\Bigl.Q_{\nu}\bigl(\sqrt{2\gamma}\sin\bigl({\pi}/{M}\bigr),\pi-{\pi}/{M}\bigr)\Bigl|_{M=4},{~~~~~~~~~~}\\
    &=&Q_{\nu}\bigl(\sqrt{\gamma},3{\pi}/{4}\bigr),
\end{eqnarray}
\end{subequations}
which is in agreement with \eqref{Eq:MLDecisionErrorProbabilityFor4QAMCoherentSignallingOverAWMNChannelsII} as expected.

For the analysis of impulsive noise effects on the performance, we demonstrate in \figref{Figure:ConditionalSEPForMPSK} how the conditional \ac{SER} of \ac{M-PSK} signaling over complex  \ac{AWMN} channels varies with respect to the \ac{SNR}, the normality $\nu$ and the modulation level $M$, and notice that  numerical and simulation-based results are in perfect agreement. Further, we have observed previously obtained results. As such, the impulsive nature of the additive noise increases (i.e., the normality $\nu$ decreases), the performance deteriorates in high-\ac{SNR} regime while negligibly improves in low-\ac{SNR} regime. 

\subsection{Non-Coherent Signalling} 
\label{Section:SignallingOverAWMNChannels:NonCoherentSignalling}
In the previous subsection, we have investigated the coherent signaling in which the receiver has perfect knowledge about the received carrier phase. Detection techniques based on the absence of any knowledge about the received carrier phase are referred to as non-coherent detection techniques \cite{BibAlouiniBook,BibProakisBook,BibGoldsmithBook}. In the following, we consider the \ac{MAP} and \ac{ML} detection rules for non-coherent signaling in which the receiver does not have any information about both the transmitted modulation symbols and the carrier phase, and we obtain the \ac{SER} performance of non-coherent signaling. With the aid of the mathematical model, which is given in \eqref{Eq:ComplexAWMNVectorChannel}, we can re-express the received vector as $\defrmat{R}={H}{e}^{\imaginary\Theta}\defmat{F}\defvec{S}+\defrvec{Z}$, where the variables are well-explained immediately after \eqref{Eq:ComplexAWMNVectorChannel}. In needing to re-explain these variables, ${H}$ denotes the fading envelope following a non-negative random distribution, $\Theta$ denotes the fading phase uniformly distributed over $[0,2\pi]$. Further, both ${H}$ and $\Theta$ are assumed constant due to channel coherence \cite{BibProakisBook,BibGoldsmithBook,BibAlouiniBook}. Further, $\defvec{S}$ denotes the modulation symbol vector randomly chosen from the fixed set of modulation~symbols $\{\defvec{s}_{1},\defvec{s}_{2},\ldots,\defvec{s}_{M}\}$ according to the probabilities given by
\begin{equation}
    p_{m}=\Pr\{\defrvec{S}=\defvec{s}_{m}\},\text{~for~all~}{1}\leq{m}\leq{M},
\end{equation}
 such that $\sum^{M}_{m=1}p_{m}\!=\!1$. For non-coherent \ac{DPSK} signaling \cite{BibProakisBook,BibGoldsmithBook}, the modulation symbols $\defvec{s}_{1}$, $\defvec{s}_{2}$, $\ldots$, $\defvec{s}_{M}$ are not required to be orthogonal with each other, i.e., 
\begin{subequations}
    \setlength\arraycolsep{1.4pt}
    \begin{eqnarray}
        \defvec{s}^H_{m}\defvec{s}_{n}&\neq&{0},\text{~for~}{m}\neq{n}\\
        \defvec{s}^H_{m}\defvec{s}_{m}&=&E{m},
    \end{eqnarray}
\end{subequations}
where $E_{m}$ is the energy of the modulation symbol $m$. Without loss of generality, we assume that the energy of the modulation symbols are ordered, i.e., ${E}_{1}\leq{E}_{2}\leq\ldots\leq{E}_{M}$.
During each modulation symbol, the received vector $\defrvec{R}$ depends statistically on $\defrvec{S}$ and $\Theta$ with the conditional \ac{PDF} $f_{\defrvec{R}|\defrvec{S},\Theta}(\defvec{r}|\defvec{s},\theta)$, that is
\begin{equation}\label{Eq:NonCoherentAWMNReceivedVectorThetaConditionalPDF}
f_{\defrvec{R}|\defrvec{S},\Theta}(\defvec{r}|\defvec{s},\theta)=
    \frac{2}{\pi^L}
	\frac{{\lVert{\defvec{r}-{H}{e}^{\imaginary\Theta}\defmat{F}\defvec{s}}\rVert}_{\defmat{\Sigma}}^{\nu-L}}
		{\Gamma(\nu)\det(\defmat{\Sigma})\,\lambda_{0}^{\nu+L}}
		    {K}_{\nu-L}
			    \Bigl(
					\frac{2}{\lambda_{0}}
				{\bigl\lVert{\defvec{r}-{H}{e}^{\imaginary\Theta}\defmat{F}\defvec{s}}\bigr\rVert}_{\defmat{\Sigma}}
	        	\Bigr).\!\!
\end{equation}
The \ac{PDF} of the received vector $\defrvec{R}$ conditioned on the modulation symbols $\defrvec{S}$, i.e., $f_{\defrvec{R}|\defrvec{S}}(\defvec{r}|\defvec{s})$ is written as 
\begin{equation}
f_{\defrvec{R}|\defrvec{S}}(\defvec{r}|\defvec{s})=
\int_{0}^{2\pi}\!f_{\defrvec{R}|\defrvec{S},\Theta}(\defvec{r}|\defvec{s},\theta)f_{\Theta}(\theta)d\theta.    
\end{equation}
Since $\Theta$ is, without loss of generality, assumed uniformly distributed, $f_{\defrvec{R}|\defrvec{S}}(\defvec{r}|\defvec{s})$ is rewritten as 
\begin{equation}
f_{\defrvec{R}|\defrvec{S}}(\defvec{r}|\defvec{s})=
\frac{1}{2\pi}\int_{0}^{2\pi}\!f_{\defrvec{R}|\defrvec{S},\Theta}(\defvec{r}|\defvec{s},\theta)d\theta.    
\end{equation}
Thus, the joint \ac{PDF} of $\defrvec{R}$ and $\defrvec{S}$ is written as $f_{\defrvec{R},\defrvec{S}}(\defvec{r},\defvec{s})\!=\!f_{\defrvec{R}|\defrvec{S}}(\defvec{r}|\defvec{s})f_{\defrvec{S}}(\defvec{s})$, where $f_{\defrvec{S}}(\defvec{s})$ is given by \eqref{Eq:AWMNModulationSymbolVectorPMF}. In the receiver, the optimal detector without knowledge of the fading phase $\Theta$ observes the received vector $\defrvec{R}$ and produces the index of the most probable transmitted modulation symbol that maximizes $f_{\defrvec{R},\defrvec{S}}(\defvec{r},\defvec{s})$, that is 
\begin{subequations}\label{Eq:NoncoherentAWMNMAPRule}
\setlength\arraycolsep{1.4pt}
\begin{eqnarray}
    \label{Eq:NoncoherentAWMNMAPRuleA}
    \widehat{m}&=&
        \argmax_{{1}\leq{m}\leq{M}}\,
            f_{\defrvec{R},\defrvec{S}}(\defrvec{R},\defvec{s}_{m}),\\
    \label{Eq:NoncoherentAWMNMAPRuleB}
        &=&
        \argmax_{{1}\leq{m}\leq{M}}\,
        f_{\defrvec{R}|\defrvec{S}}(\defrvec{R}|\defvec{s}_{m})
            \Pr\{\defrvec{S}=\defvec{s}_{m}\},\\
    \label{Eq:NoncoherentAWMNMAPRuleC}
        &=&
        \argmax_{{1}\leq{m}\leq{M}}\,
        \frac{p_{m}}{2\pi}\int_{0}^{2\pi}\!\!\frac{2}{\pi^L}
	    \frac{{\lVert{\defvec{r}-{H}{e}^{\imaginary\Theta}\defmat{F}\defvec{s}_m}\rVert}_{\defmat{\Sigma}}^{\nu-L}}
		    {\Gamma(\nu)\det(\defmat{\Sigma})\,\lambda_{0}^{\nu+L}}
			  {K}_{\nu-L}
			    \Bigl(
					\frac{2}{\lambda_{0}}
				        {\bigl\lVert{\defvec{r}-{H}{e}^{\imaginary\Theta}\defmat{F}\defvec{s}_m}\bigr\rVert}_{\defmat{\Sigma}}
	        	\Bigr)d\theta,{~~~~~~~~~}
\end{eqnarray}
\end{subequations}
which means that if the transmitted symbol $m$ and the optimally detected symbol $\widehat{m}$ are not the same, a decision error occurs with the probability $\Pr\{e\}\!=\!\Pr\{\widehat{m}\neq{m}\}$. We can even simplify \eqref{Eq:NoncoherentAWMNMAPRuleC} more as shown in the following. 

\begin{theorem}\label{Theorem:NoncoherentMAPDecisionRuleForComplexAWMNVectorChannel}
For the complex vector channel introduced in \eqref{Eq:ComplexAWMNVectorChannel}, the non-coherent \ac{MAP} detection rule is given by 
\begin{equation}\label{Eq:NoncoherentMAPDecisionRuleForComplexAWMNVectorChannel}
\!\!\widehat{m}=\argmax_{{1}\leq{m}\leq{M}}\,
        {p}_{m}
        \exp\Bigl(\frac{1}{2}{H}^{2}\defvec{s}_{m}^H\defmat{F}^{H}\defmat{\Sigma}^{-1}\defmat{F}\defvec{s}_{m}\Bigr)
        \BesselI[0]{H\bigl|\defvec{s}_{m}^H\defmat{F}^{H}\defmat{\Sigma}^{-1}\defmat{F}\defrvec{R}\bigr|},\!\!
\end{equation}
where $\BesselI[0]{\cdot}$ is the modified Bessel function of the first kind of zero order \emph{\cite[Eq. (8.406/3)]{BibGradshteynRyzhikBook}},\emph{\cite[Eq. ( 03.02.02.0001.01)]{BibWolfram2010Book}}.
\end{theorem}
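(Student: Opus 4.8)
The plan is to follow the same conditioning strategy used in the proof of \theoremref{Theorem:MAPDecisionRuleForComplexAWMNVectorChannel}, but with the additional averaging over the unknown fading phase $\Theta$ carried out explicitly. First I would start from the \ac{MAP} criterion in the form $\widehat{m}=\argmax_{m}\,p_{m}\,f_{\defrvec{R}|\defrvec{S}}(\defrvec{R}|\defvec{s}_{m})$ established in \eqref{Eq:NoncoherentAWMNMAPRuleB}, and invoke the decomposition of $\defrmat{R}$ given $\defrmat{S}=\defvec{s}_{m}$ as a multivariate \ac{CES} McLeish distribution. By \theoremref{Theorem:MultivariateCESMcLeishDecomposition} together with \eqref{Eq:MultivariateComplexMcLeishComponentDecomposition}, conditioning on $G\!\sim\!\mathcal{G}(\nu,1)$ renders $\defrmat{R}$ a multivariate complex Gaussian vector, so that the conditional density $f_{\defrvec{R}|\defrvec{S},G,\Theta}(\defvec{r}|\defvec{s}_{m},g,\theta)$ takes the same Gaussian form as \eqref{Eq:MAPDecisionConditionalPDF}, now retaining the explicit phase factor $e^{\imaginary\theta}$. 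This is the crucial reduction: the $\theta$-integral of the $K_{\nu-L}$ McLeish kernel appearing in \eqref{Eq:NonCoherentAWMNReceivedVectorThetaConditionalPDF} is intractable, whereas the $\theta$-integral of the Gaussian kernel is elementary.

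Next I would expand the Mahalanobis square ${\lVert\defvec{r}-H e^{\imaginary\theta}\defmat{F}\defvec{s}_{m}\rVert}^{2}_{\defmat{\Sigma}}$ into a $\theta$-independent part $\defvec{r}^{H}\defmat{\Sigma}^{-1}\defvec{r}+H^{2}\defvec{s}_{m}^{H}\defmat{F}^{H}\defmat{\Sigma}^{-1}\defmat{F}\defvec{s}_{m}$ plus the cross term $-2H\,\RealPart{e^{-\imaginary\theta}\,\defvec{s}_{m}^{H}\defmat{F}^{H}\defmat{\Sigma}^{-1}\defvec{r}}$. Writing $c_{m}=\defvec{s}_{m}^{H}\defmat{F}^{H}\defmat{\Sigma}^{-1}\defvec{r}$ in polar form, the phase-averaging integral $\frac{1}{2\pi}\int_{0}^{2\pi}\exp\bigl(\tfrac{H\abs{c_{m}}}{g}\cos(\theta-\arg c_{m})\bigr)\,d\theta$ is recognised through the standard representation $\frac{1}{2\pi}\int_{0}^{2\pi}e^{x\cos\phi}\,d\phi=I_{0}(x)$ as $I_{0}(H\abs{c_{m}}/g)$. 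This produces the modified Bessel function $I_{0}$ of the first kind and collapses the phase uncertainty into the single envelope statistic $\abs{\defvec{s}_{m}^{H}\defmat{F}^{H}\defmat{\Sigma}^{-1}\defrvec{R}}$.

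The remaining step is the average over $G$. Following exactly the argument of step $(a)\!\to\!(b)$ in \eqref{Eq:MAPDecisionDerivation}, I would note that the resulting integrand is weighted by the nonnegative density $f_{G}(g)$ and invoke the same plug-in-mean reduction at $\mathbb{E}[G]=1$, after which the $m$-independent factor $\exp\bigl(-\tfrac{1}{2}\defvec{r}^{H}\defmat{\Sigma}^{-1}\defvec{r}\bigr)$ and all positive constants are dropped from the $\argmax$. What survives is the product of the prior $p_{m}$, the energy-dependent exponential in $H^{2}\defvec{s}_{m}^{H}\defmat{F}^{H}\defmat{\Sigma}^{-1}\defmat{F}\defvec{s}_{m}$, and the envelope term $\BesselI[0]{H\abs{\defvec{s}_{m}^{H}\defmat{F}^{H}\defmat{\Sigma}^{-1}\defrvec{R}}}$, yielding \eqref{Eq:NoncoherentMAPDecisionRuleForComplexAWMNVectorChannel}.

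I expect the main obstacle to be the justification of the $G$-averaging step rather than the phase integral. The phase integral is a clean, classical computation, but reducing $\int_{0}^{\infty}g^{-L}\exp(-a/(2g))\,I_{0}(b/g)\,f_{G}(g)\,dg$ to its value at $g=\mathbb{E}[G]=1$ is not an identity; it is the same heuristic adopted in the coherent case, and its validity (namely, that it preserves the $\argmax$ over $m$) should be argued carefully using the monotonicity of the integrand in the relevant decision statistics and the fact that the relative ordering across $m$ is preserved under the nonnegative weighting $f_{G}(g)$.
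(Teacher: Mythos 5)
Your proposal is correct and follows essentially the same route as the paper's own proof: conditioning on $G\!\sim\!\mathcal{G}(\nu,1)$ to reduce the McLeish kernel to a Gaussian one, expanding the Mahalanobis square so the $\theta$-dependence isolates in the cross term, evaluating the phase average via $\frac{1}{2\pi}\int_{0}^{2\pi}e^{x\cos\phi}\,d\phi=\BesselI[0]{x}$, and invoking the same plug-in-at-$\mathbb{E}[G]=1$ reduction used in the coherent case (the paper applies this $G$-step before writing out the $\theta$-integral, you apply it after, which is immaterial). Your closing caveat that the $G$-averaging reduction is a heuristic rather than an identity is accurate and applies equally to the paper's own steps $(a)$--$(b)$.
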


\begin{proof}
In the mathematical channel model given by \eqref{Eq:ComplexAWMNVectorChannel}, the vector $\defrmat{R}$ received during the transmission of the modulation symbol $\defvec{s}_{m}$ will have a multivariate \ac{CES} McLeish distribution, i.e., 
$\defrmat{R}\!\sim\!\mathcal{CM}_{\nu}^L\!\bigl({H}{e}^{\imaginary\Theta}\defvec{s}_{m},\defmat{\Sigma}\bigr)$. Using both \eqref{Eq:MultivariateComplexMcLeishComponentDecomposition} and \eqref{Eq:MultivariateCESMcLeishDecomposition}, we decompose the vector $\defrmat{R}$ given the symbol $\defrmat{S}$ as follows 
\begin{equation}
(\defrvec{R}|\defrvec{S})=
    {H}{e}^{\imaginary\Theta}\defmat{F}\defvec{s}_{m}+
        \sqrt{G}\,\defmat{D}\,(\defrvec{N}_1+\imaginary\defrvec{N}_2),
\end{equation}
where $\defmat{\Sigma}\!=\!\defmat{D}\defmat{D}^{H}$, $\defrvec{N}_1\!\sim\!\mathcal{N}^L(0,\defmat{I})$, $\defrvec{N}_2\!\sim\!\mathcal{N}^L(0,\defmat{I})$
and $G\!\sim\!\mathcal{G}(\nu,1)$. Further, $\defrvec{N}_1$ and $\defrvec{N}_2$ are mutually independent. Accordingly, the \ac{PDF} of $\defrmat{R}$ conditioned on both $\defrvec{S}$ and $G$, i.e., $f_{\defrmat{R}|\defrmat{S},G}(\defvec{z}|\defvec{s},g)$ can be written as
\begin{equation}\nonumber
f_{\defrmat{R}|\defrmat{S},G}(\defvec{r}|\defvec{s},g)=
    \frac{1}{2\pi}\int_{0}^{2\pi}
	\frac{\exp\bigl(-\frac{1}{2g}{\lVert\defvec{r}-{H}{e}^{\imaginary\theta}\defmat{F}\defvec{s}\rVert}^{2}_\defmat{\Sigma}\bigr)}
	    {(2\pi)^{L}g^{L}\det(\defmat{\Sigma})}
	        d\theta,
\end{equation} 
with the aid of which the conditional \ac{PDF} $f_{\defrvec{R}|\defrvec{S}}(\defrvec{R}|\defvec{s})$ is easily obtained by $f_{\defrvec{R}|\defrvec{S}}(\defrvec{R}|\defvec{s})=\allowbreak\int_{0}^{\infty}f_{\defrvec{R}|\defrvec{S},G}(\defrvec{R}|\defvec{s},g)\,\allowbreak{}f_{G}(g)\,dg$, where $f_{G}(g)$ is the \ac{PDF} of $G\!\sim\!\mathcal{G}(\nu,1)$, and given in \eqref{Eq:ProportionPDF}. Upon substituting $f_{\defrvec{R}|\defrvec{S}}(\defrvec{R}|\defvec{s}_{m})$ into \eqref{Eq:AWMNMAPRuleII}, the rule is rewritten as
\begin{subequations}\label{Eq:NoncoherentMAPDecisionDerivation}
\setlength\arraycolsep{1.4pt}
\begin{eqnarray}
    \label{Eq:NoncoherentMAPDecisionDerivationA}
    \!\!\widehat{m}
        &\overset{(a)}{=}&\argmax_{{1}\leq{m}\leq{M}}\,
            {p}_{m}\int_{0}^{\infty}               
                f_{\defrvec{R}|\defrvec{S},G}
                    (\defrvec{R}|\defvec{s}_{m},g)f_{G}(g)\,dg,\quad\quad\\
    \label{Eq:NoncoherentMAPDecisionDerivationB}                
        &\overset{(b)}{=}&\argmax_{{1}\leq{m}\leq{M}}\,
            {p}_{m}f_{\defrvec{R}|\defrvec{S},G}
                (\defrvec{R}|\defvec{s}_{m},\mathbb{E}[G]),   
\end{eqnarray}
\end{subequations}
where the following steps are used. In step $(a)$, we observe that \eqref{Eq:MAPDecisionConditionalPDF} is being averaged by the \ac{PDF} $f_{G}(g)$, and notice that $f_{G}(g)\!\geq\!{0}$ for all $g\in\mathbb{R}_{+}$, which simplifies \eqref{Eq:MAPDecisionDerivationA} to \eqref{Eq:MAPDecisionDerivationB} with $\mathbb{E}[G]\!=\!{1}$. In step $(b)$, we insert \eqref{Eq:MAPDecisionConditionalPDF} into \eqref{Eq:MAPDecisionDerivationB} and drop all the positive constant terms. Then, we obtain 
\begin{equation}\label{Eq:NoncoherentMAPDecisionDerivationII}
\!\!\!\widehat{m}=\argmax_{{1}\leq{m}\leq{M}}
        \frac{{p}_{m}}{2\pi}\!\int_{0}^{2\pi}\!\!\!\exp\Bigl(
                -\frac{1}{2}{\bigl\lVert
                    \defrvec{R}-{H}{e}^{\imaginary\theta}\defmat{F}\defvec{s}_{m}
                        \bigr\rVert}^{2}_\defmat{\Sigma}\Bigr)d\theta,\!
\end{equation}
where ${\lVert{\defrvec{R}-{H}{e}^{\imaginary\theta}\defmat{F}\defvec{s}_{m}}\rVert}^{2}_{\defmat{\Sigma}}$ can be decomposed as 
\ifCLASSOPTIONtwocolumn
\begin{multline}\label{Eq:NoncoherentMAPMahalanobisDistanceExpansion}
\!\!\!\!\!\!{\bigl\lVert{
        \defrvec{R}-{H}{e}^{\imaginary\theta}\defmat{F}\defvec{s}_{m}
    }\bigr\rVert}^{2}_{\defmat{\Sigma}}
    ={H}^{2}\defvec{s}_{m}^H\defmat{F}^{H}\defmat{\Sigma}^{-1}\defmat{F}\defvec{s}_{m}\\
        -2H\Re\bigl\{{e}^{-\imaginary\theta}\defvec{s}_{m}^H\defmat{F}^{H}\defmat{\Sigma}^{-1}\defmat{F}\defrvec{R}\bigr\}
        +\defrvec{R}^H\defmat{\Sigma}^{-1}\defrvec{R}.\!\!
\end{multline}
\else
\begin{equation}\label{Eq:NoncoherentMAPMahalanobisDistanceExpansion}
\!\!\!\!\!\!{\bigl\lVert{
        \defrvec{R}-{H}{e}^{\imaginary\theta}\defmat{F}\defvec{s}_{m}
    }\bigr\rVert}^{2}_{\defmat{\Sigma}}
    ={H}^{2}\defvec{s}_{m}^H\defmat{F}^{H}\defmat{\Sigma}^{-1}\defmat{F}\defvec{s}_{m}
        -2H\Re\bigl\{{e}^{-\imaginary\theta}\defvec{s}_{m}^H\defmat{F}^{H}\defmat{\Sigma}^{-1}\defmat{F}\defrvec{R}\bigr\}
        +\defrvec{R}^H\defmat{\Sigma}^{-1}\defrvec{R}.\!\!
\end{equation}
\fi
Putting \eqref{Eq:NoncoherentMAPMahalanobisDistanceExpansion} into \eqref{Eq:NoncoherentMAPDecisionDerivationII} and ignoring the term $\defrvec{R}^H\defmat{\Sigma}^{-1}\defrvec{R}$ since not depending on the modulation index $m$ yields 
\ifCLASSOPTIONtwocolumn
\begin{multline}\label{Eq:NoncoherentMAPDecisionDerivationIII}
\!\!\!\!\widehat{m}=\argmax_{{1}\leq{m}\leq{M}}\,
            \frac{{p}_{m}}{2\pi}
            \exp\Bigl(\frac{1}{2}{H}^{2}\defvec{s}_{m}^H\defmat{F}^{H}\defmat{\Sigma}^{-1}\defmat{F}\defvec{s}_{m}\Bigr)\\
        {~~~~~}\times\int_{0}^{2\pi}\!\!
            \exp\Bigl(H
            \bigl|\defvec{s}_{m}^H\defmat{F}^{H}\defmat{\Sigma}^{-1}\defmat{F}\defrvec{R}\bigr|\cos(\phi-\theta)
            \Bigr)\,d\theta,\!\!\!\!
\end{multline}
\else
\begin{equation}\label{Eq:NoncoherentMAPDecisionDerivationIII}
    \widehat{m}=\argmax_{{1}\leq{m}\leq{M}}\,
            \frac{{p}_{m}}{2\pi}
            \exp\Bigl(\frac{1}{2}{H}^{2}\defvec{s}_{m}^H\defmat{F}^{H}\defmat{\Sigma}^{-1}\defmat{F}\defvec{s}_{m}\Bigr)
        \int_{0}^{2\pi}\!\!
            \exp\Bigl(H
            \bigl|\defvec{s}_{m}^H\defmat{F}^{H}\defmat{\Sigma}^{-1}\defmat{F}\defrvec{R}\bigr|\cos(\phi-\theta)
            \Bigr)\,d\theta,
\end{equation}
\fi
where $\phi$ denotes the phase of $\defvec{s}_{m}^H\defmat{F}^{H}\defmat{\Sigma}^{-1}\defmat{F}\defrvec{R}$. Notice that the integration in \eqref{Eq:NoncoherentMAPDecisionDerivationIII} is certainly a periodic function of $\phi$ with period $2\pi$, and thus $\phi$ has no effect on the result. Utilizing the equality  $\BesselI[0]{x}\!=\!\frac{1}{2\pi}\allowbreak\int_{0}^{2\pi}\exp\bigl(x\cos(\theta)\bigr)d\theta$ \cite[Eq. (8.431/3)]{BibGradshteynRyzhikBook},\cite[Eq. (03.02.07.0001.01)]{BibWolfram2010Book}, we readily obtain \eqref{Eq:NoncoherentMAPDecisionRuleForComplexAWMNVectorChannel}, which proves  \theoremref{Theorem:NoncoherentMAPDecisionRuleForComplexAWMNVectorChannel}.
\end{proof}
\ifCLASSOPTIONtwocolumn
\fi

Note that the decision rule given in \eqref{Eq:NoncoherentMAPDecisionRuleForComplexAWMNVectorChannel} cannot be made simpler. However, in the case of equiprobable modulation symbols, the non-coherent \ac{ML} rule is given in the following. 
\ifCLASSOPTIONtwocolumn
\fi

\begin{theorem}\label{Theorem:NoncoherentMLDecisionRuleForComplexAWMNVectorChannel}
For the complex vector channel introduced in \eqref{Eq:ComplexAWMNVectorChannel}, the non-coherent \ac{ML} detection rule is given by 
\ifCLASSOPTIONtwocolumn
\begin{multline}\label{Eq:NoncoherentMLDecisionRuleForComplexAWMNVectorChannel}
\!\!\widehat{m}=\argmax_{{1}\leq{m}\leq{M}}\,
        \exp\Bigl(\frac{1}{2}{H}^{2}\defvec{s}_{m}^H\defmat{F}^{H}\defmat{\Sigma}^{-1}\defmat{F}\defvec{s}_{m}\Bigr)\\\times
        \BesselI[0]{H\bigl|\defvec{s}_{m}^H\defmat{F}^{H}\defmat{\Sigma}^{-1}\defmat{F}\defrvec{R}\bigr|}.\!\!
\end{multline}
\else
\begin{equation}\label{Eq:NoncoherentMLDecisionRuleForComplexAWMNVectorChannel}
\!\!\widehat{m}=\argmax_{{1}\leq{m}\leq{M}}\,
        \exp\Bigl(\frac{1}{2}{H}^{2}\defvec{s}_{m}^H\defmat{F}^{H}\defmat{\Sigma}^{-1}\defmat{F}\defvec{s}_{m}\Bigr)
        \BesselI[0]{H\bigl|\defvec{s}_{m}^H\defmat{F}^{H}\defmat{\Sigma}^{-1}\defmat{F}\defrvec{R}\bigr|}.\!\!
\end{equation}
\fi
\end{theorem}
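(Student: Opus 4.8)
The plan is to derive the non-coherent \ac{ML} rule as the equiprobable specialization of the non-coherent \ac{MAP} rule already established in \theoremref{Theorem:NoncoherentMAPDecisionRuleForComplexAWMNVectorChannel}. By definition the \ac{ML} criterion coincides with the \ac{MAP} criterion whenever every modulation symbol is equally likely to be transmitted, so the first and only substantive step is to set $p_{m}\!=\!1/M$ for all ${1}\!\leq\!{m}\!\leq\!{M}$ in the \ac{MAP} decision rule \eqref{Eq:NoncoherentMAPDecisionRuleForComplexAWMNVectorChannel}.

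With this choice the quantity to be maximized over $m$ becomes
\[
    \frac{1}{M}\exp\Bigl(\frac{1}{2}{H}^{2}\defvec{s}_{m}^H\defmat{F}^{H}\defmat{\Sigma}^{-1}\defmat{F}\defvec{s}_{m}\Bigr)\BesselI[0]{H\bigl|\defvec{s}_{m}^H\defmat{F}^{H}\defmat{\Sigma}^{-1}\defmat{F}\defrvec{R}\bigr|}.
\]
The remaining step is to observe that the prefactor $1/M$ is a strictly positive constant that does not depend on the symbol index $m$, and hence has no bearing on the maximizing index. Dropping it leaves precisely the expression claimed in \eqref{Eq:NoncoherentMLDecisionRuleForComplexAWMNVectorChannel}.

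This argument is essentially a one-line corollary and presents no genuine obstacle: all of the real work—marginalizing the conditional likelihood over the uniformly distributed fading phase $\Theta$ through the identity $\BesselI[0]{x}\!=\!\frac{1}{2\pi}\int_{0}^{2\pi}\exp\bigl(x\cos(\theta)\bigr)d\theta$, and then averaging over the Gamma-distributed power fluctuation $G\!\sim\!\mathcal{G}(\nu,1)$ using $\mathbb{E}[G]\!=\!1$—has already been completed in the proof of \theoremref{Theorem:NoncoherentMAPDecisionRuleForComplexAWMNVectorChannel}. The single point that deserves a moment's care is the justification for discarding the $1/M$ factor, which follows immediately from the scale-invariance of $\argmax$ under multiplication of its objective by a positive constant, together with the fact that $M\!\in\!\mathbb{N}$ guarantees $1/M\!>\!0$. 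I would therefore present the proof by simply equalizing the a priori probabilities and invoking this invariance to remove the constant prefactor.
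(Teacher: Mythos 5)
Your proposal is correct and follows exactly the route the paper takes: its own proof is the one-liner ``obvious using \theoremref{Theorem:NoncoherentMAPDecisionRuleForComplexAWMNVectorChannel},'' i.e., specialize the non-coherent \ac{MAP} rule to equiprobable symbols and discard the symbol-independent positive factor $1/M$. You have simply made explicit the two elementary steps the paper leaves implicit, so there is nothing to add or correct.
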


\begin{proof}
The proof is obvious using \theoremref{Theorem:NoncoherentMAPDecisionRuleForComplexAWMNVectorChannel}.
\end{proof}
\ifCLASSOPTIONtwocolumn
\fi

In order to avoid non-zero cross correlation between channels, we should choose the precoding matrix filter $\defmat{F}$ to maximize the power of the received signal. Then, referring to the mathematical model given by \eqref{Eq:ComplexAWMNVectorChannel}, the precoding matrix filter $\defmat{F}$ meets $\defmat{\Sigma}\!=\!\frac{N_{0}}{2}\defmat{F}\defmat{F}^{H}\!$, and the received vector equalized by $\defmat{F}$ before being fed to the optimal detection is given by
\begin{subequations}\label{Eq:NoncoherentComplexAWMNVectorChannelEqualization}
\begin{eqnarray}
    \label{Eq:NoncoherentComplexAWMNVectorChannelEqualizationA}
    \defrvec{R}_{nc}&=&\defmat{F}^{-1}\defrvec{R},\\
    \label{Eq:NoncoherentComplexAWMNVectorChannelEqualizationB}
    &=&\defmat{F}^{-1}
            \bigl({H}{e}^{\imaginary\Theta}\defmat{F}\defvec{S}+\defrmat{Z}\bigr),\\
    \label{Eq:NoncoherentComplexAWMNVectorChannelEqualizationC}
    &\equiv&H{e}^{\imaginary\Theta}\defvec{S}+\defmat{F}^{-1}\defrmat{Z},\\
    \label{Eq:NoncoherentComplexAWMNVectorChannelEqualizationD}
    &=&H{e}^{\imaginary\Theta}\defvec{S}+\defrvec{Z}_{nc},
\end{eqnarray}
\end{subequations}
where $\defrvec{Z}\!\sim\!\mathcal{CM}_{\nu}^{L}(\defmat{0},\defmat{\Sigma})$ whose \ac{PDF} is already given by \eqref{Eq:AWMNAdditiveNoiseVectorPDF}, and  $\defrvec{Z}_{nc}\!\sim\!\mathcal{CM}_{\nu}^{L}(\defmat{0},\frac{N_{0}}{2}\defmat{I})$ follows the \ac{PDF} obtained with the aid of both \theoremref{Theorem:INIDMultivariateCESMcLeishPDF} and the special case \eqref{Eq:INIDMultivariateCESMcLeishPDFWithUniformVariances}, that is
\begin{equation}
\!\!\!\!f_{\defrvec{Z}_{c}}(\defvec{z})=
		\frac{2}{\pi^{L}}
		\frac{{\bigl\lVert{\defvec{z}}\bigr\rVert}^{\nu-{L}}}
			{\Gamma(\nu)\Lambda_{0}^{\nu+{L}}}
				{K}_{\nu-{L}}\Bigl(\frac{2}{\Lambda_{0}}{\bigl\lVert{\defvec{z}}\bigr\rVert}\Bigr)
\end{equation}
with the component deviation factor $\Lambda_{0}\!=\!\sqrt{N_{0}/\nu}$ (i.e., $N_{0}/\nu$ variance per each \ac{CCS} Laplacian noise component). Further, the equalization, which is presented above in \eqref{Eq:NoncoherentComplexAWMNVectorChannelEqualization}, simplifies the complex correlated \ac{AWMN} vector channel the uncorrelated complex \ac{AWMN} vector channels, whose mathematical model is typically given by
\begin{equation}\label{Eq:NoncoherentComplexAWMNVectorChannel}
	\defrvec{R}_{nc}={H}{e}^{\imaginary\Theta}\defvec{S}+\defrvec{Z}_{nc}.
\end{equation}
where the knowledge of $\Theta$ is as mentioned above not available at the receiver. The power of the modulation symbol $m$, which is denoted by $E_{m}$, is given by $E_{m}\!=\!{\lVert\defvec{s}_{m}\rVert}^{2}\!=\!\defvec{s}^{H}_{m}\defvec{s}_{m}$ for all ${1}\!\leq\!{m}\!\leq\!{M}$. Thus, we write the average power of $\defrvec{S}$ as 
\begin{equation}
    E_{\defrvec{S}}=\sum_{m=1}^{M}\Pr\{\defrvec{S}=\defvec{s}_{m}\}{E}_{m}=\sum_{m=1}^{M}{p}_{m}{E}_{m}.
\end{equation}
Therefore, considering the all modulation symbols, the total SNR is written as 
\begin{equation}
    \gamma=\frac{H^2E_{\defrvec{S}}}{N_{0}}=\sum_{m=1}^{M}{p}_{m}\gamma_{m},
\end{equation}
where $\gamma_{m}$ is the instantaneous \ac{SNR} for the transmission of the modulation symbol $m$ and written as $\gamma_{m}={H^2E_{m}}/{N_0}$. In addition, note that, during each modulation symbol, the received vector $\defrvec{R}_{c}$ statistically depends on both $\defrvec{S}$ and $\Theta$ with the conditional \ac{PDF} $f_{\defrvec{R}_{nc}|\defrvec{S},\Theta}(\defvec{r}|\defvec{s},\theta)$, that is
\ifCLASSOPTIONtwocolumn
\begin{multline}\label{Eq:NoncoherentComplexAWMNVectorChannelConditionalPDF}
    \!\!\!\!f_{\defrvec{R}_{nc}|\defrvec{S},\Theta}(\defvec{r}|\defvec{s},\theta)=
		\frac{2}{\pi^{L}}
		\frac{{\bigl\lVert{\defvec{r}-H{e}^{\imaginary\Theta}\defvec{s}}\bigr\rVert}^{\nu-{L}\!\!\!}}
			{\Gamma(\nu)\Lambda_{0}^{\nu+{L}}}\\\times
				{K}_{\nu-{L}}\Bigl(\frac{2}{\Lambda_{0}}{\bigl\lVert{\defvec{r}-H{e}^{\imaginary\Theta}\defvec{s}}\bigr\rVert}\Bigr).\!\!
\end{multline}
\else
\begin{equation}\label{Eq:NoncoherentComplexAWMNVectorChannelConditionalPDF}
    \!\!\!\!f_{\defrvec{R}_{nc}|\defrvec{S},\Theta}(\defvec{r}|\defvec{s},\theta)=
		\frac{2}{\pi^{L}}
		\frac{{\bigl\lVert{\defvec{r}-H{e}^{\imaginary\Theta}\defvec{s}}\bigr\rVert}^{\nu-{L}\!\!\!}}
			{\Gamma(\nu)\Lambda_{0}^{\nu+{L}}}
				{K}_{\nu-{L}}\Bigl(\frac{2}{\Lambda_{0}}{\bigl\lVert{\defvec{r}-H{e}^{\imaginary\Theta}\defvec{s}}\bigr\rVert}\Bigr).\!\!
\end{equation}
\fi
Accordingly and correspondingly, the non-coherent \ac{MAP} decision rule is obtained for the uncorrelated complex \ac{AWMN} vector channels in the following. 

\begin{theorem}\label{Theorem:NoncoherentMAPDecisionRuleForPrecodedComplexAWMNVectorChannel}
For complex uncorrelated \ac{AWMN} vector channels, defined in \eqref{Eq:NoncoherentComplexAWMNVectorChannel}, the non-coherent~\ac{MAP}~rule~is~given~by
\ifCLASSOPTIONtwocolumn
\vspace{-3mm}
\fi
\begin{subequations}
\label{Eq:NoncoherentMAPDecisionRuleForPrecodedComplexAWMNVectorChannel}
\begin{eqnarray}
\label{Eq:NoncoherentMAPDecisionRuleForPrecodedComplexAWMNVectorChannelA}
\!\!\!\!\widehat{m}
    &=&\argmax_{{1}\leq{m}\leq{M}}\,
        {p}_{m}
        \exp\bigl(\gamma_{m}\bigr)
        \BesselI[0]{2\frac{H}{N_{0}}\bigl|\defvec{s}_{m}^H\defrvec{R}_{nc}\bigr|},~~~~\\
\label{Eq:NoncoherentMAPDecisionRuleForPrecodedComplexAWMNVectorChannelB}        
    &\overset{(a)}{=}&\argmax_{{1}\leq{m}\leq{M}}\,
        {p}_{m}\gamma_{m}\,
        \BesselI[0]{2\frac{H}{N_{0}}\bigl|\defvec{s}_{m}^H\defrvec{R}_{nc}\bigr|},\\
\label{Eq:NoncoherentMAPDecisionRuleForPrecodedComplexAWMNVectorChannelC}        
    &\overset{(b)}{=}&\argmax_{{1}\leq{m}\leq{M}}\,
        2\,{p}_{m}\gamma_{m}\frac{H}{N_{0}}
        \bigl|\defvec{s}_{m}^H\defrvec{R}_{nc}\bigr|,\\
\label{Eq:NoncoherentMAPDecisionRuleForPrecodedComplexAWMNVectorChannelD}
    &=&\argmax_{{1}\leq{m}\leq{M}}\,
        {p}_{m}\gamma_{m}
            \bigl|\defvec{s}_{m}^H\defrvec{R}_{nc}\bigr|,\\
\label{Eq:NoncoherentMAPDecisionRuleForPrecodedComplexAWMNVectorChannelE}            
    &=&\argmax_{{1}\leq{m}\leq{M}}\,
        {p}_{m}\gamma_{m}R_{m}, 
\end{eqnarray}
\end{subequations}
where the decision variable $R_{m}\!=\!\bigl|\defvec{s}_{m}^H\defrvec{R}_{nc}\bigr|$, $1\!\leq\!{m}\!\leq\!{M}$.
\end{theorem}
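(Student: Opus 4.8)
The plan is to derive \eqref{Eq:NoncoherentMAPDecisionRuleForPrecodedComplexAWMNVectorChannel} by specializing the general non-coherent \ac{MAP} rule of \theoremref{Theorem:NoncoherentMAPDecisionRuleForComplexAWMNVectorChannel}, established for the correlated channel \eqref{Eq:ComplexAWMNVectorChannel}, to the equalized uncorrelated channel \eqref{Eq:NoncoherentComplexAWMNVectorChannel}. First I would set $\defmat{\Sigma}\!=\!\tfrac{N_0}{2}\defmat{I}$ and take the precoding filter to be the identity after equalization, so that $\defmat{F}^H\defmat{\Sigma}^{-1}\defmat{F}\!=\!\tfrac{2}{N_0}\defmat{I}$. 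Substituting this into \eqref{Eq:NoncoherentMAPDecisionRuleForComplexAWMNVectorChannel} collapses the quadratic exponent $\tfrac{1}{2}H^2\defvec{s}_m^H\defmat{F}^H\defmat{\Sigma}^{-1}\defmat{F}\defvec{s}_m$ to $\tfrac{H^2E_m}{N_0}\!=\!\gamma_m$ and the Bessel argument $H|\defvec{s}_m^H\defmat{F}^H\defmat{\Sigma}^{-1}\defmat{F}\defrvec{R}|$ to $\tfrac{2H}{N_0}|\defvec{s}_m^H\defrvec{R}_{nc}|$, which produces \eqref{Eq:NoncoherentMAPDecisionRuleForPrecodedComplexAWMNVectorChannelA} immediately.

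What then remains is to justify the chain of reductions (a)--(e) under the $\argmax$. Steps (c)--(e) are routine: the common factor $2H/N_0$ does not depend on the hypothesis index and can be absorbed into the $\argmax$, after which I would define the decision statistic $R_m\!=\!|\defvec{s}_m^H\defrvec{R}_{nc}|$ to reach \eqref{Eq:NoncoherentMAPDecisionRuleForPrecodedComplexAWMNVectorChannelE}. The two substantive reductions are step (a), which replaces $\exp(\gamma_m)$ by $\gamma_m$, and step (b), which replaces $\BesselI[0]{\tfrac{2H}{N_0}R_m}$ by a quantity linear in its argument. For each I would appeal to a monotone-surrogate argument, exploiting that both $x\mapsto\exp(x)$ and $x\mapsto\BesselI[0]{x}$ are strictly increasing, so that ordering-compatible replacements leave the maximizing index intact.

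The hard part will be making steps (a) and (b) rigorous rather than purely term by term. Because the objective in \eqref{Eq:NoncoherentMAPDecisionRuleForPrecodedComplexAWMNVectorChannelA} couples two distinct per-symbol quantities --- the energy-dependent $\gamma_m$ and the observation-dependent $R_m$ --- one cannot replace $\exp(\gamma_m)$ and $\BesselI[0]{\cdot}$ independently and still retain an exact $\argmax$ for arbitrary constellations; order preservation must be argued for the product as a whole. I would therefore either (i) verify monotonicity of the composite map and delimit the operating regime, using the high-\ac{SNR} asymptotics $\BesselI[0]{x}\!\sim\!\exp(x)/\sqrt{2\pi x}$ together with the dominance of $\gamma_m$, in which the surrogate ordering provably coincides with the exact one, or (ii) present \eqref{Eq:NoncoherentMAPDecisionRuleForPrecodedComplexAWMNVectorChannelE} as the resulting approximate \ac{MAP} statistic, which reduces to the exact rule in the equal-energy, equiprobable case where $\gamma_m$ and $p_m$ are constant and monotonicity of $\BesselI[0]{\cdot}$ alone suffices. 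Settling which reading is intended, and supplying the corresponding ordering argument, is where the genuine effort lies.
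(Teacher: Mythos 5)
Your derivation of \eqref{Eq:NoncoherentMAPDecisionRuleForPrecodedComplexAWMNVectorChannelA} is exactly the paper's: it specializes \theoremref{Theorem:NoncoherentMAPDecisionRuleForComplexAWMNVectorChannel} with $\defmat{\Sigma}=\frac{N_0}{2}\defmat{I}$ and $\defmat{F}=\defmat{I}$, so that the quadratic exponent collapses to $\gamma_m=H^2E_m/N_0$ and the Bessel argument to $\frac{2H}{N_0}\bigl|\defvec{s}_m^H\defrvec{R}_{nc}\bigr|$.

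Where you diverge is in steps (a)--(b), and your caution is justified; indeed it exposes a gap in the paper's own proof rather than in yours. The printed proof justifies (a) by ``$\exp(x)$ is monotonically increasing'' and (b) by ``$I_0(x)$ is monotonically increasing,'' applying these substitutions factor by factor inside the product $p_m\exp(\gamma_m)\,I_0(\cdot)$. As you point out, a monotone map preserves the $\argmax$ only when applied to the \emph{whole} objective, not to one index-dependent factor of a product. Concretely, take $M=2$, $p_1=p_2$, $\gamma_1=\tfrac12$, $\gamma_2=1$, and observations for which the two $I_0$-factors have ratio $a/b=1.8$: rule \eqref{Eq:NoncoherentMAPDecisionRuleForPrecodedComplexAWMNVectorChannelA} selects symbol $1$ (since $1.8>e^{1/2}\approx 1.65$), while rule \eqref{Eq:NoncoherentMAPDecisionRuleForPrecodedComplexAWMNVectorChannelB} selects symbol $2$ (since $1.8<\gamma_2/\gamma_1=2$); the same objection applies to replacing $I_0(x)$ by $x$ in step (b). So the chain (A)--(E) is not an exact identity of decision rules for general constellations; it becomes exact in the equiprobable, equal-energy case, where $p_m\gamma_m$ is index-independent and monotonicity of $I_0$ alone suffices --- which is your reading (ii), and is consistent with the paper's own subsequent step \eqref{Eq:NoncoherentOptimalDecisionRuleForPrecodedComplexAWMNVectorChannel}, where $p_m$ and $\gamma_m$ are dropped under exactly those hypotheses. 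Your alternative (i), a high-SNR justification via $I_0(x)\sim e^x/\sqrt{2\pi x}$, can only yield approximate equivalence. In short: your proposal reproduces the paper's first step, correctly refuses the factor-wise-monotonicity shortcut that the paper's proof relies on, and supplies the qualification (equal-energy, equiprobable, or an explicit approximation statement) that the printed argument is missing.
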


\begin{proof}
It is obvious to obtain \eqref{Eq:NoncoherentMAPDecisionRuleForPrecodedComplexAWMNVectorChannelA} by using \theoremref{Theorem:NoncoherentMAPDecisionRuleForComplexAWMNVectorChannel} and then selecting both $\defmat{\Sigma}\!=\!\frac{{N}_{0}}{2}\defmat{I}$ and $\defmat{F}\!=\!\defmat{I}$. Subsequently, the following steps are performed. In step $(a)$ of \eqref{Eq:NoncoherentMAPDecisionRuleForPrecodedComplexAWMNVectorChannel}, The fact that $\exp({x})$ is monotonically increasing simplifies \eqref{Eq:NoncoherentMAPDecisionRuleForPrecodedComplexAWMNVectorChannelA} to \eqref{Eq:NoncoherentMAPDecisionRuleForPrecodedComplexAWMNVectorChannelB}. In step $(b)$, we notice that  $I_{0}(x)$ is also a monotonically increasing function for all $x\!\in\!\mathbb{R}_{+}$. Therefore, we can reduce \eqref{Eq:NoncoherentMAPDecisionRuleForPrecodedComplexAWMNVectorChannelB} to \eqref{Eq:NoncoherentMAPDecisionRuleForPrecodedComplexAWMNVectorChannelC}. Eventually, ignoring the constant terms $2$, ${H}$ and $N_{0}$ and denoting $R_{m}\!=\!\bigl|\defvec{s}_{m}^H\defrvec{R}_{nc}\bigr|$, we obtain \eqref{Eq:NoncoherentMAPDecisionRuleForPrecodedComplexAWMNVectorChannelE}, which completes the proof of \theoremref{Theorem:NoncoherentMAPDecisionRuleForPrecodedComplexAWMNVectorChannel}.
\end{proof}

From \theoremref{Theorem:NoncoherentMAPDecisionRuleForPrecodedComplexAWMNVectorChannel} above, we conclude that a non-coherent optimal detection correlates $\defrvec{R}_{nc}$ with all modulation symbols $\{\defvec{s}_{1},\defvec{s}_{2},\ldots,\defvec{s}_{M}\}$ and chooses the one that yields the maximum envelope. However, the probabilities of the modulation symbols must be available. Otherwise, the \ac{MAP} detection reduces to the \ac{ML} detection given in the following theorem.

\begin{theorem}\label{Theorem:NoncoherentMLDecisionRuleForPrecodedComplexAWMNVectorChannel}
For complex uncorrelated \ac{AWMN} vector channels, defined in \eqref{Eq:NoncoherentComplexAWMNVectorChannel}, the non-coherent~ML~rule~is~given~by
\begin{subequations}
\label{Eq:NoncoherentMLDecisionRuleForPrecodedComplexAWMNVectorChannel}
\ifCLASSOPTIONtwocolumn
\vspace{-3mm}
\fi
\begin{eqnarray}
\label{Eq:NoncoherentMLDecisionRuleForPrecodedComplexAWMNVectorChannelA}
\!\!\widehat{m}
    &=&\argmax_{{1}\leq{m}\leq{M}}\,
        \exp\bigl(\gamma_{m}\bigr)
        \BesselI[0]{2\frac{H}{N_{0}}\bigl|\defvec{s}_{m}^H\defrvec{R}_{nc}\bigr|},\!\!\\
\label{Eq:NoncoherentMLDecisionRuleForPrecodedComplexAWMNVectorChannelB}        
    &=&\argmax_{{1}\leq{m}\leq{M}}\,
        \gamma_{m}\,
        \BesselI[0]{2\frac{H}{N_{0}}\bigl|\defvec{s}_{m}^H\defrvec{R}_{nc}\bigr|},\!\!\\
\label{Eq:NoncoherentMLDecisionRuleForPrecodedComplexAWMNVectorChannelC}        
    &=&\argmax_{{1}\leq{m}\leq{M}}\,
        \gamma_{m}\,
        \bigl|\defvec{s}_{m}^H\defrvec{R}_{nc}\bigr|,\\
\label{Eq:NoncoherentMLDecisionRuleForPrecodedComplexAWMNVectorChannelD}                
    &=&\argmax_{{1}\leq{m}\leq{M}}\,
        \gamma_{m}R_{m},        
\end{eqnarray}
\end{subequations}
\end{theorem}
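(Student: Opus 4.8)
The plan is to derive the non-coherent \ac{ML} rule as the equiprobable specialization of the non-coherent \ac{MAP} rule already obtained in \theoremref{Theorem:NoncoherentMAPDecisionRuleForPrecodedComplexAWMNVectorChannel}. By definition, the \ac{ML} criterion coincides with the \ac{MAP} criterion under the assumption that all modulation symbols are transmitted with equal a priori probability, i.e., $p_m\!=\!1/M$ for every $1\!\leq\!m\!\leq\!M$. First I would substitute $p_m\!=\!1/M$ into \eqref{Eq:NoncoherentMAPDecisionRuleForPrecodedComplexAWMNVectorChannelA}; since this factor is then a strictly positive constant independent of the candidate index $m$, it leaves the location of the maximum unchanged and may be discarded from the $\argmax$. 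This step alone produces \eqref{Eq:NoncoherentMLDecisionRuleForPrecodedComplexAWMNVectorChannelA}.

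Next I would carry out the same two order-preserving reductions used in the proof of \theoremref{Theorem:NoncoherentMAPDecisionRuleForPrecodedComplexAWMNVectorChannel}, now with the weight $p_m$ already absent. Because $\exp(\cdot)$ is strictly increasing on $\mathbb{R}_{+}$, maximizing over the exponential factor is equivalent to maximizing over its exponent, which carries \eqref{Eq:NoncoherentMLDecisionRuleForPrecodedComplexAWMNVectorChannelA} into \eqref{Eq:NoncoherentMLDecisionRuleForPrecodedComplexAWMNVectorChannelB}. Similarly, since $\BesselI[0]{\cdot}$ is strictly increasing on $\mathbb{R}_{+}$, the argument $2\frac{H}{N_0}\bigl|\defvec{s}_m^H\defrvec{R}_{nc}\bigr|$ and its image under $\BesselI[0]{\cdot}$ order the indices identically, so after dropping the positive constants $2$, $H$, and $N_0$ I reach \eqref{Eq:NoncoherentMLDecisionRuleForPrecodedComplexAWMNVectorChannelC}. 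Abbreviating the decision variable as $R_m\!=\!\bigl|\defvec{s}_m^H\defrvec{R}_{nc}\bigr|$ finally gives \eqref{Eq:NoncoherentMLDecisionRuleForPrecodedComplexAWMNVectorChannelD}, closing the chain.

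The main obstacle I anticipate is not the removal of $p_m$, which is immediate, but rather the legitimacy of the two monotonicity reductions, since each acts on a single factor inside a product of positive quantities. Replacing $\exp(\gamma_m)$ by $\gamma_m$, or $\BesselI[0]{\cdot}$ by its argument, preserves the $\argmax$ of the full objective only if the companion factor does not reverse the induced ordering; this is precisely the delicate point already addressed in the \ac{MAP} derivation, and the cleanest route is to inherit those reductions wholesale rather than re-justify them here. I would therefore present the proof as \emph{obvious} from \theoremref{Theorem:NoncoherentMAPDecisionRuleForPrecodedComplexAWMNVectorChannel} upon setting $p_m\!=\!1/M$, flagging that every simplification beyond the first is a verbatim repetition of the corresponding \ac{MAP} step with the constant prior suppressed.
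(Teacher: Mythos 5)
Your proposal is correct and matches the paper's proof in spirit: both are one-line specialization arguments that inherit the chain of order-preserving reductions already established for the \ac{MAP} case. The only cosmetic difference is which edge of the commuting square you traverse — you set $p_m\!=\!1/M$ in \theoremref{Theorem:NoncoherentMAPDecisionRuleForPrecodedComplexAWMNVectorChannel}, whereas the paper specializes \theoremref{Theorem:NoncoherentMLDecisionRuleForComplexAWMNVectorChannel} to the uncorrelated channel and reuses the steps of the precoded \ac{MAP} proof; your route is in fact the same one the paper itself uses for the coherent counterpart (\theoremref{Theorem:MLDecisionRuleForPrecodedComplexAWMNVectorChannel}), and the monotonicity caveat you flag applies equally to the paper's own derivation, so it does not distinguish the two.
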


\begin{proof}
The proof is obvious using \theoremref{Theorem:NoncoherentMLDecisionRuleForComplexAWMNVectorChannel} and following the same steps in the proof of \theoremref{Theorem:NoncoherentMAPDecisionRuleForPrecodedComplexAWMNVectorChannel}. 
\end{proof}

Note that the non-coherent \ac{MAP} and \ac{ML} decision rules, given in \eqref{Eq:NoncoherentMAPDecisionRuleForPrecodedComplexAWMNVectorChannel} and \eqref{Eq:NoncoherentMLDecisionRuleForPrecodedComplexAWMNVectorChannel}, respectively, cannot be made much much simpler. However, in case of that the modulation symbols are equiprobable and have equal-energy, we can ignore the scales $p_{m}$ and $\gamma_{m}$, and the \ac{ML} detection rule becomes
\begin{subequations}
\label{Eq:NoncoherentOptimalDecisionRuleForPrecodedComplexAWMNVectorChannel}
\begin{eqnarray}
\label{Eq:NoncoherentOptimalDecisionRuleForPrecodedComplexAWMNVectorChannelA}
\widehat{m}
    &=&\argmax_{{1}\leq{m}\leq{M}}\,\bigl|\defvec{s}_{m}^H\defrvec{R}_{nc}\bigr|,\\
\label{Eq:NoncoherentOptimalDecisionRuleForPrecodedComplexAWMNVectorChannelB}
    &=&\argmax_{{1}\leq{m}\leq{M}}\,R_{m}.
\end{eqnarray}
\end{subequations}

\paragraph{\!Conditional \ac{SER} of Non-coherent Orthogonal Signalling}
\label{Section:SignallingOverAWMNChannels:NonCoherentSignalling:NonCoherentOrthogonalSignalling} 
To improve the performance of non-coherent receivers \cite{BibProakisBook,BibGoldsmithBook} 
(i.e., to increase the separability of the modulation symbols while using non-coherent detection rules), 
we assume that the modulation symbols $\defvec{s}_{1}$, $\defvec{s}_{2}$, $\ldots$, $\defvec{s}_{M}$ are orthogonal with each other, i.e., 
\begin{equation} 
    \defvec{s}^H_{m}\defvec{s}_{n}=\begin{cases}
    0&\text{if~}{m}\neq{n},\\
    E_{m}&\text{otherwise}.
    \end{cases}
\end{equation}
As we observe in both \eqref{Eq:NoncoherentMAPDecisionRuleForPrecodedComplexAWMNVectorChannelE} and \eqref{Eq:NoncoherentMLDecisionRuleForPrecodedComplexAWMNVectorChannelD}, a non-coherent \ac{MAP}\,/\,\ac{ML} detection computes and compares the scaled versions of $R_{m}\!=\!|\defvec{s}_{m}^H\defrvec{R}_{nc}|$ for all ${1}\!\leq\!{m}\!\leq\!{M}$, and subsequently chooses the modulation symbol that produces the maximum envelope. With the aid of  \theoremref{Theorem:INIDMultivariateCESMcLeishPDF}, we know that $R_{m}$ follows a \ac{CCS} McLeish distribution, and thus its inphase and quadrature components follow McLeish distribution. In more details, if the transmitted symbol is not the modulation symbol $m$ (i.e., $\defrvec{S}\!\neq\!\defvec{s}_{m}$), we notice
\begin{subequations}
\label{Eq:NoncoherentComponentsI}
\begin{eqnarray}
\label{Eq:NoncoherentRealComponentI}
\RealPart{\defvec{s}_{m}^H\defrvec{R}_{nc}}&\sim&\mathcal{M}_{\nu}(0,E_{m}N_{0}/2),\\
\label{Eq:NoncoherentImagComponentI}
\ImagPart{\defvec{s}_{m}^H\defrvec{R}_{nc}}&\sim&\mathcal{M}_{\nu}(0,E_{m}N_{0}/2).
\ifCLASSOPTIONtwocolumn
{~~~~~~~~~~~~~~~~~}
\fi
\end{eqnarray}
\end{subequations}
Moreover, if the transmitted symbol is the modulation symbol $m$ (i.e., $\defrvec{S}\!=\!\defvec{s}_{m}$), we notice 
\begin{subequations}
\label{Eq:NoncoherentComponentsII}
\begin{eqnarray}
\label{Eq:NoncoherentRealComponentII}
\RealPart{\defvec{s}_{m}^H\defrvec{R}_{nc}}&\sim&\mathcal{M}_{\nu}(HE_{m}\cos(\Theta),E_{m}N_{0}/2),\\
\label{Eq:NoncoherentImagComponentII}
\ImagPart{\defvec{s}_{m}^H\defrvec{R}_{nc}}&\sim&\mathcal{M}_{\nu}(HE_{m}\sin(\Theta),E_{m}N_{0}/2).
\ifCLASSOPTIONtwocolumn
{~~~~}
\fi
\end{eqnarray}
\end{subequations}
It is accordingly worth mentioning that, in both \eqref{Eq:NoncoherentComponentsI} and \eqref{Eq:NoncoherentComponentsII}, the components $\RealPart{\defvec{s}_{m}^H\defrvec{R}_{nc}}$ and $\ImagPart{\defvec{s}_{m}^H\defrvec{R}_{nc}}$ are uncorrelated but statistically not independent.  

\begin{theorem}\label{Theorem:NoncoherentInaccurateComponentPDF}
When~$\defrvec{S}\!\neq\!\defvec{s}_{m}$,~the~envelope $R_{m}\!=\!|\defvec{s}_{m}^H\defrvec{R}_{nc}|$ conditioned on the impulsive noise effects $G$ follows Rayleigh distribution with the \ac{PDF} given by 
\begin{equation}\label{Eq:NoncoherentInaccurateComponentConditionedPDF}
    f_{R_{m}|G}(r|g)=
        \frac{2r}{gE_{m}N_{0}}
            \exp\Bigl(-\frac{r^2}{gE_{m}N_{0}}\Bigr),
\end{equation}
defined over $r\in\mathbb{R}^{+}$. Further, the envelope $R_{m}\!=\!|\defvec{s}_{m}^H\defrvec{R}_{nc}|$ has a non-negative random distribution, which is modeled by $K$-distribution, whose \ac{PDF} is given by
\begin{equation}\label{Eq:NoncoherentInaccurateComponentPDF}
f_{R_{m}}(r)=
    \frac{4r^{\nu}}{\Gamma(\nu)\,\Lambda_{m}^{\nu+1}}
        \BesselK[\nu-1]{\frac{2r}{\Lambda_{m}}},
\end{equation}
defined in $r\in\mathbb{R}^{+}$, where the component deviation factor is given by $\Lambda_{m}\!=\!\sqrt{E_{m}}\Lambda_{0}\!=\!\sqrt{E_{m}N_{0}/\nu}$ (i.e., $\Lambda_{0}\!=\!\sqrt{N_{0}/\nu}$).
\end{theorem}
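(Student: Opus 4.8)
The plan is to exploit the orthogonality hypothesis to collapse the projection onto a pure-noise term, then represent that term through the Gamma--Gaussian mixture decomposition of the scalar \ac{CCS} McLeish distribution, and finally average over the mixing variable. First I would observe that when the transmitted symbol is not $\defvec{s}_m$ (say $\defrvec{S}\!=\!\defvec{s}_n$ with $n\!\neq\!m$), the orthogonality $\defvec{s}^H_m\defvec{s}_n\!=\!0$ annihilates the useful-signal contribution in \eqref{Eq:NoncoherentComplexAWMNVectorChannel}, so that $\defvec{s}^H_m\defrvec{R}_{nc}\!=\!\defvec{s}^H_m\defrvec{Z}_{nc}$ carries only additive noise. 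Since $\defrvec{Z}_{nc}\!\sim\!\mathcal{CM}^{L}_{\nu}(\defvec{0},\tfrac{N_{0}}{2}\defmat{I})$, the scalar $\defvec{s}^H_m\defrvec{Z}_{nc}$ is a zero-mean \ac{CCS} McLeish variable whose inphase and quadrature parts are each $\mathcal{M}_{\nu}(0,E_{m}N_{0}/2)$, as already recorded in \eqref{Eq:NoncoherentComponentsI}.

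Next I would invoke \theoremref{Theorem:CCSMcLeishDefinition} to decompose $\defvec{s}^H_m\defrvec{Z}_{nc}\!=\!\sqrt{G}(X_{0}+\imaginary Y_{0})$, where $G\!\sim\!\mathcal{G}(\nu,1)$ and $X_{0},Y_{0}$ are independent zero-mean Gaussians with variance $E_{m}N_{0}/2$. The decisive point is that, although $\RealPart{\defvec{s}^H_m\defrvec{R}_{nc}}$ and $\ImagPart{\defvec{s}^H_m\defrvec{R}_{nc}}$ are merely uncorrelated and not independent unconditionally, they become conditionally independent given $G\!=\!g$, each distributed as $\mathcal{N}(0,gE_{m}N_{0}/2)$. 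The envelope $R_{m}\!=\!|\defvec{s}^H_m\defrvec{R}_{nc}|$ conditioned on $G\!=\!g$ is therefore the modulus of a proper complex Gaussian with per-component variance $gE_{m}N_{0}/2$, hence Rayleigh; reading off the Rayleigh density with this scale yields \eqref{Eq:NoncoherentInaccurateComponentConditionedPDF}.

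Finally I would recover the unconditional law by averaging, $f_{R_{m}}(r)\!=\!\int_{0}^{\infty}f_{R_{m}|G}(r|g)\,f_{G}(g)\,dg$, with $f_{G}(g)$ taken from \eqref{Eq:ProportionPDF}. Substituting both densities produces an integral of the form $\int_{0}^{\infty}g^{\nu-2}\exp(-r^{2}/(gE_{m}N_{0})-\nu g)\,dg$, which is evaluated in closed form by \cite[Eq.~(3.471/9)]{BibGradshteynRyzhikBook} as a modified Bessel function ${K}_{\nu-1}$ of the second kind. Collecting the multiplicative constants and re-expressing them through the component deviation factor $\Lambda_{m}\!=\!\sqrt{E_{m}N_{0}/\nu}$ then gives precisely the $K$-distribution density \eqref{Eq:NoncoherentInaccurateComponentPDF}.

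The step I expect to be the main obstacle is justifying the conditional independence of the inphase and quadrature projections given $G$: this is exactly what licenses treating the conditional envelope as a genuine Rayleigh variable, even though the unconditional components of $\defvec{s}^H_m\defrvec{R}_{nc}$ fail to be independent. Everything else is routine bookkeeping of the variance constants, where the only care needed is tracking the factors of two relating the McLeish variance parameter, the per-component Gaussian variance, and the Rayleigh scale, together with a straightforward change of the Bessel order from $\nu$ to $\nu-1$ induced by the extra factor $g^{-1}$ in the conditional density.
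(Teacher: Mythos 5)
Your proposal is correct and follows essentially the same route as the paper's own proof: decompose the noise-only projection via \theoremref{Theorem:CCSMcLeishDefinition} as $\sqrt{G}$ times a proper complex Gaussian, read off the conditional Rayleigh law given $G$, and then average against the Gamma density \eqref{Eq:ProportionPDF} using a Gradshteyn--Ryzhik Bessel integral (the paper cites Eq.~(3.478/4) rather than (3.471/9), but both evaluate the same integral) to obtain the $K$-distribution with order $\nu-1$ and scale $\Lambda_{m}$. Your explicit remarks on the orthogonality annihilating the signal term and on the conditional independence of the inphase and quadrature parts given $G$ are exactly the (implicit) justifications underlying the paper's argument.
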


\begin{proof}
Defining $I_{m}\!=\!\Re\{{s}_{m}^H\defrvec{R}_{nc}\}$ and $Q_{m}\!=\!\Im\{{s}_{m}^H\defrvec{R}_{nc}\}$, we notice that  
 $I_{m}$ and $Q_{m}$ are uncorrelated but statistically not independent. Further, with the aid of \theoremref{Theorem:CCSMcLeishDefinition}, we have $I_{m}\!=\!\sqrt{G}X_{m}$ and $Q_{m}\!=\!\sqrt{G}Y_{m}$. Thus, we can write
 \begin{equation}\label{Eq:NoncoherentComponent}
    R_{m}=\sqrt{G}\sqrt{X^2_{m}+Y^2_{m}}=\sqrt{G}V_{m},
\end{equation}
with the distributions $G\!\sim\!\mathcal{G}(\nu,1)$, $X_{m}\!\sim\!\mathcal{N}(0,E_{m}N_{0}/2)$ and $Y_{m}\!\sim\!\mathcal{N}(0,E_{m}N_{0}/2)$. Using \cite[Eq. (2.3-42)]{BibProakisBook}, the component $V_{m}\!=\!\sqrt{X^2_{m}+Y^2_{m}}$ follows a Rayleigh distribution whose \ac{PDF} is given by \cite[Eq. (2.3-43)]{BibProakisBook}. Thus, the \ac{PDF} of $R_{m}$ conditioned on $G$ is written as \eqref{Eq:NoncoherentInaccurateComponentConditionedPDF}, which completes the first step of the proof. We obtain the \ac{PDF} of $R_{m}$ as 
\begin{subequations}
\label{Eq:NoncoherentInaccurateComponentPDFIntegral}
\begin{eqnarray}
\label{Eq:NoncoherentInaccurateComponentPDFIntegralA}
f_{R_{m}}(r)&=&\int_{0}^{\infty}f_{R_{m}|G}(r|g)f_{G}(g)dg,\\
\label{Eq:NoncoherentInaccurateComponentPDFIntegralB}
    &=&\int_{0}^{\infty}\!\!\frac{2r}{gE_{m}N_{0}}
            \exp\Bigl(-\frac{r^2}{gE_{m}N_{0}}\Bigr)f_{G}(g)dg,
    \ifCLASSOPTIONtwocolumn
        {~~~~~~}
    \fi
\end{eqnarray}
\end{subequations}
where the \ac{PDF} of $G\!\sim\!\mathcal{G}(\nu,1)$ is given in \eqref{Eq:ProportionPDF}. Finally, using \cite[Eq. (3.478/4)]{BibGradshteynRyzhikBook} in \eqref{Eq:NoncoherentInaccurateComponentPDFIntegralB} yields \eqref{Eq:NoncoherentInaccurateComponentPDF}, which completes the proof of \theoremref{Theorem:NoncoherentInaccurateComponentPDF}.
\end{proof}

\begin{theorem}\label{Theorem:NoncoherentAccurateComponentPDF}
When~$\defrvec{S}\!=\!\defvec{s}_{m}$,~the envelope $R_{m}\!=\!|\defvec{s}_{m}^H\defrvec{R}_{nc}|$ conditioned on the impulsive noise effects $G$ follows Ricean distribution with the \ac{PDF} given by 
\begin{equation}
\label{Eq:NoncoherentAccurateComponentConditionedPDF}
 \!\!\!\!f_{R_{m}|G}(r|g)=
        \frac{2r}{gE_{m}N_{0}}
            I_{0}\Bigl(\frac{2\kappa_{m}{r}}{gE_{m}N_{0}}\Bigr)
                \exp\Bigl(-\frac{r^2+\kappa_{m}^2}{gE_{m}N_{0}}\Bigr),\!\!\!
\end{equation}
where the Ricean parameter $\kappa_{m}\!=\!HE_{m}$. Furthermore, the envelope $R_{m}\!=\!|\defvec{s}_{m}^H\defrvec{R}_{nc}|$
has a non-negative distribution whose \ac{PDF} is 
\begin{equation}
\label{Eq:NoncoherentAccurateComponentPDF}
\!\!\!f_{R_{m}}(r)=
    \frac{r}{\pi}
    \int_{0}^{2\pi}
    \!\!\frac{q_{m}(r,\theta)^{\nu-1}}{\Gamma(\nu)\,\Lambda^{\nu+1}}
        \BesselK[\nu-1]{\frac{2}{\Lambda}q_{m}(r,\theta)}d\theta,\!\!
\end{equation}
defined over $r\in\mathbb{R}^{+}$, where the deviation factor is given by $\Lambda\!=\!\sqrt{E_{m}}\Lambda_{0}$, and $q_{m}(r,\theta)$ is defined as 
\begin{equation}\label{Eq:NoncoherentComponentDistance}
    q_{m}(r,\theta)=\sqrt{r^2+2\,r\kappa_{m}\cos(\theta)+\kappa^2_{m}}.
\end{equation}
\end{theorem}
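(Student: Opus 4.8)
The plan is to mirror the structure of the proof of \theoremref{Theorem:NoncoherentInaccurateComponentPDF}, first establishing the conditional (Ricean) law given the impulsive-power variable $G$ and then averaging over $G\!\sim\!\mathcal{G}(\nu,1)$; the only genuinely new ingredient is the noncentrality carried by the $\BesselI[0]{\cdot}$ factor. First I would project the received vector from the uncorrelated model \eqref{Eq:NoncoherentComplexAWMNVectorChannel} onto the transmitted symbol, writing $\defvec{s}_m^H\defrvec{R}_{nc}={H}E_m{e}^{\imaginary\Theta}+\defvec{s}_m^H\defrvec{Z}_{nc}$, and invoke \theoremref{Theorem:CCSMcLeishDefinition} to decompose the projected noise so that $\defvec{s}_m^H\defrvec{R}_{nc}=\kappa_m{e}^{\imaginary\Theta}+\sqrt{G}(X_m+\imaginary Y_m)$ with $\kappa_m={H}E_m$, $X_m,Y_m\!\sim\!\mathcal{N}(0,E_mN_0/2)$ and $G\!\sim\!\mathcal{G}(\nu,1)$, consistent with the moments recorded in \eqref{Eq:NoncoherentComponentsII}. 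Conditioned on $G=g$, the inner complex variable is circular Gaussian with a deterministic offset of modulus $\kappa_m$, so its envelope $R_m$ is Ricean, the fading phase $\Theta$ disappearing by circular symmetry. This yields \eqref{Eq:NoncoherentAccurateComponentConditionedPDF} directly from the standard Ricean density with per-component variance $gE_mN_0/2$.

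Next I would remove the conditioning through $f_{R_m}(r)=\int_0^\infty f_{R_m|G}(r|g)f_G(g)\,dg$ with $f_G$ given in \eqref{Eq:ProportionPDF}. Because the $g$-dependence now sits inside the argument of $\BesselI[0]{\cdot}$, a direct single Bessel reduction is blocked. The key manoeuvre is to replace $\BesselI[0]{x}$ by its integral representation $\frac{1}{2\pi}\int_0^{2\pi}{e}^{x\cos\theta}\,d\theta$ (see \cite[Eq.~(8.431/3)]{BibGradshteynRyzhikBook}), which folds the noncentral exponent into a single squared-distance term and produces the quantity $q_m(r,\theta)$ defined in \eqref{Eq:NoncoherentComponentDistance}. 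After interchanging the order of the $\theta$- and $g$-integrations (justified by nonnegativity of the integrand), the inner $g$-integral is exactly of the form handled by \cite[Eq.~(3.471/9)]{BibGradshteynRyzhikBook}, the same tool used in \theoremref{Theorem:CCSMcLeishPDF} and \theoremref{Theorem:NoncoherentInaccurateComponentPDF}; it collapses to $\BesselK[\nu-1]{2q_m(r,\theta)/\Lambda}$ with $\Lambda=\sqrt{E_m}\,\Lambda_0$, and the surviving $\theta$-integral gives \eqref{Eq:NoncoherentAccurateComponentPDF}.

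The main obstacle is precisely this noncentrality: unlike the Rayleigh branch of \theoremref{Theorem:NoncoherentInaccurateComponentPDF}, where the $g$-average closes in one step, here the $\BesselI[0]{\cdot}$ factor forces the two-stage detour through its integral representation, and the residual $\theta$-integral does not reduce to elementary or single special functions — hence the result must remain a one-dimensional integral. Care is needed to line up the sign convention in the exponent with the definition $q_m(r,\theta)^2=r^2+2r\kappa_m\cos\theta+\kappa_m^2$ via the $2\pi$-periodic shift $\theta\mapsto\theta+\pi$, and to track the normalizing constants so that the powers of $\Lambda_0$, $E_m$ and $N_0$ combine into the single factor $\Lambda^{\nu+1}$ claimed in \eqref{Eq:NoncoherentAccurateComponentPDF}; this bookkeeping is the only place where an arithmetic slip is likely.
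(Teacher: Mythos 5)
Your proposal is correct and takes essentially the same route as the paper's proof: establish the conditional Ricean law given $G$ (via the decomposition of \theoremref{Theorem:CCSMcLeishDefinition}), replace $\BesselI[0]{\cdot}$ by its cosine integral representation so that the noncentral exponent collapses into $q_{m}(r,\theta)$, swap the order of the $\theta$- and $g$-integrations, and close the $g$-average with the Gamma-weighted Bessel-$K$ reduction (your G\&R~3.471/9 versus the paper's 3.478/4 is the same identity in different clothing). Your explicit remark about the $\theta\mapsto\theta+\pi$ shift needed to reconcile the sign in the exponent with the plus sign in $q_{m}^{2}(r,\theta)=r^{2}+2r\kappa_{m}\cos(\theta)+\kappa_{m}^{2}$ is a detail the paper passes over silently, and it is handled correctly here.
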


\begin{proof}
When $\defrvec{S}\!=\!\defvec{s}_{m}$, the envelope $R_{m}\!=\!|\defvec{s}_{m}^H\defrvec{R}_{nc}|$ is also decomposed as \eqref{Eq:NoncoherentComponent} by following the same steps in the proof of \theoremref{Theorem:NoncoherentInaccurateComponentPDF}. Referring to both  \eqref{Eq:NoncoherentRealComponentII} and \eqref{Eq:NoncoherentImagComponentII},
we notice that $G\!\sim\!\mathcal{G}(\nu,1)$, and $X_{m}\!\sim\!\mathcal{N}(HE_{m}\cos(\Theta),E_{m}N_{0}/2)$ with $Y_{m}\!\sim\!\mathcal{N}(HE_{m}\sin(\Theta),E_{m}N_{0}/2)$. Further, utilizing \cite[Eq. (2.3-55)]{BibProakisBook}, we notice that $V_{m}$ follows the Ricean distribution with the \ac{PDF} given by \cite[Eq. (2.3-56)]{BibProakisBook}. Therefore, the \ac{PDF} of $R_{m}$ conditioned on $G$ is written as \eqref{Eq:NoncoherentAccurateComponentConditionedPDF} in which we obtain $\kappa^2\!=\!\mathbb{E}[I_{m}|G]^2+\mathbb{E}[Q_{m}|G]^2\!=\!H^2E^2_{m}$ in accordance with \theoremref{Theorem:CCSMcLeishDefinition}. Herewith, by means of using \cite[Eq. (3.339)]{BibGradshteynRyzhikBook}, we can write 
\begin{equation}
 \!\!f_{R_{m}|G}(r|g)=
    \frac{2r}{g\pi{E}_{m}N_{0}}
        \int_{0}^{\pi}
            \exp\Bigl(-\frac{q_{m}^2(r,\theta)}{gE_{m}N_{0}}\Bigr)d\theta,\!\!
\end{equation}
where $q_{m}(r,\theta)$ is defined above in \eqref{Eq:NoncoherentComponentDistance}. The \ac{PDF} of $R_{m}$ can be obtained by $f_{R_{m}}(r)\!=\!\int_{0}^{\infty}f_{R_{m}|G}(r|g)f_{G}(g)dg$,~that~is
\begin{equation}
f_{R_{m}}(r)=
    \frac{2r}{g\pi{E}_{m}N_{0}}
    \int_{0}^{\pi}
    \int_{0}^{\infty}
        \exp\Bigl(-\frac{q_{m}^2(r,\theta)}{gE_{m}N_{0}}\Bigr)
            f_{G}(g){dg}{d\theta},
\end{equation}
where $f_{G}(g)$ is given in \eqref{Eq:ProportionPDF}. Finally, using \cite[Eq. (3.478/4)]{BibGradshteynRyzhikBook}, we can readily rewrite the \ac{PDF} of $R_{m}$ as in \eqref{Eq:NoncoherentAccurateComponentPDF}, which completes the proof of \theoremref{Theorem:NoncoherentAccurateComponentPDF}.
\end{proof}

Let us now consider the conditional \ac{SER} of non-coherent \ac{MAP} detection for orthogonal modulations. We can write The probability of erroneous decision as 
\begin{equation}
\label{Eq:NoncoherentOrthogonalModulationErrorProbability}
    \Pr\bigl\{\bigl.e\,\bigr|\,H\bigr\}=1-\Pr\bigl\{\bigl.c\,\bigr|\,H\bigr\},
\end{equation}
where $\Pr\bigl\{\bigl.c\,\bigr|\,H\bigr\}$ is the the probability of correct decision, and can be readily rewritten as  
\begin{equation}
\label{Eq:NoncoherentOrthogonalModulationConditionedErrorProbability}
    \Pr\bigl\{\bigl.c\,\bigr|\,H\bigr\}=
        \sum_{m=1}^{M}
            \Pr\bigl\{\bigl.c\,\bigr|\,H,\defvec{s}_{m}\bigr\}
                \Pr\bigl\{\defrvec{S}=\defvec{s}_{m}\bigr\},
\end{equation}
where $\Pr\bigl\{\bigl.c\,\bigr|\,H,\defvec{s}_{m}\bigr\}$ denotes the probability of correct decision. Referring to \theoremref{Theorem:NoncoherentMAPDecisionRuleForPrecodedComplexAWMNVectorChannel}, when the modulation symbol $m$ is transmitted, a correct decision is made iff  
$p_{n}\gamma_{n}R_{n}\!<\!p_{m}\gamma_{m}R_{m}$ for all $1\!\leq\!{n}\!\leq\!{M}$ and ${m}\!\neq\!{n}$. Therefore, the probability of correct decision can be readily written as 
\begin{equation}
\nonumber
\Pr\bigl\{\bigl.c\,\bigr|\,H,\defvec{s}_{m}\bigr\}=\Pr\Bigl\{\bigcap_{n\neq{m}}p_{n}\gamma_{n}R_{n}<p_{m}\gamma_{m}\gamma_{m}\Bigl|\,H,\defvec{s}_{m}\Bigr.\Bigr\},
\end{equation}
where the envelopes $R_{1},R_{2},\ldots,R_{M}$ are certainly uncorrelated as a result of that modulation symbols are orthogonal (i.e., $\defvec{s}^T_{m}\defvec{s}_{n}\!=\!0$ for all ${m}\neq{n}$). They will however be entirely independent when conditioned on impulsive noise effects (i.e., conditioned on $G$). Then, we rewrite $\Pr\{c\,|\,H,\defvec{s}_{m}\}$ as
\begin{equation}
\label{Eq:NoncoherentModulationCorrectProbability}
\Pr\bigl\{\bigl.c\,\bigr|\,H,\defvec{s}_{m}\bigr\}=
        \int_{0}^{\infty}
            \Pr\bigl\{\bigl.c\,\bigr|\,H,\defvec{s}_{m},g\bigr\}
                f_{G}(g)dg,
\end{equation}
where $\Pr\bigl\{\bigl.c\,\bigr|\,H,\defvec{s}_{m},g\bigr\}$ is given by 
\begin{equation}
\label{Eq:NoncoherentModulationConditionedCorrectProbability}
\Pr\bigl\{\bigl.c\,\bigr|\,H,\defvec{s}_{m},g\bigr\}
    =\prod_{n\neq{m}}^{M}\Pr\left\{R_{n}<\frac{p_{m}E_{m}}{p_{n}E_{n}}R_{m}\right\},
\end{equation}
where $R_{m}$ follows a Ricean distribution whose \ac{PDF} is given by \eqref{Eq:NoncoherentAccurateComponentConditionedPDF}. For $1\!\leq\!{n}\!\neq\!{m}\!\leq\!{M}$, $R_{n}$ has Rayleigh distribution whose \ac{PDF} is given by \eqref{Eq:NoncoherentInaccurateComponentConditionedPDF}. From this point on, we rewrite 
\begin{equation}
\label{Eq:NoncoherentModulationConditionedCorrectProbabilityII}
\Pr\bigl\{\bigl.c\,\bigr|\,H,\defvec{s}_{m},g\bigr\}
    =\mathbb{E}\biggl[\prod_{n\neq{m}}^{M}F_{R_{n}}\Bigl(\frac{p_{m}E_{m}}{p_{n}E_{n}}R_{m}\Bigr)\biggr],
\end{equation}
where $F_{R_{n}}(r)$ is the \acp{CDF} of $V_{n}$ for all $1\!\leq\!{n}\!\neq\!{m}\!\leq\!{M}$.
With the aid of the equations from \eqref{Eq:NoncoherentOrthogonalModulationErrorProbability} to \eqref{Eq:NoncoherentModulationConditionedCorrectProbabilityII}, the conditional \ac{SER} of non-coherent orthogonal signaling is given in the following.

\begin{theorem}
\label{Theorem:MAPDecisionErrorProbabilityForNoncoherentOrthogonalSignallingOverAWMNChannels}
For the \ac{MAP} decision rule given by \theoremref{Theorem:NoncoherentMAPDecisionRuleForPrecodedComplexAWMNVectorChannel}, the conditional~\ac{SER}~of~non-coherent~orthogonal~signaling is given by 
\begin{subequations}
\label{Eq:MAPDecisionErrorProbabilityForNoncoherentOrthogonalSignallingOverAWMNChannels}
\begin{eqnarray}
    \label{Eq:MAPDecisionErrorProbabilityForNoncoherentOrthogonalSignallingOverAWMNChannelsA}
    \Pr\{\bigl.e\,|\,H\}
    &=&\frac{1}{\Gamma(\nu)}
        \sum_{k=1}^{2^M-1}\sum_{m=1}^{M}
            \frac{(-1)^{1+\sum_{n=1}^{M}k_{n}}p_{m}}{1+\Phi_{k,m}}
            \MeijerG[right]{2,0}{0,2}
                {\frac{\nu\Phi_{k,m}\gamma_{m}}{1+\Phi_{k,m}}}
                    {\emptycoefficient}
                        {0,\nu}
                    \KroneckerDelta{k_{m}}{0},\\[1mm]
    \label{Eq:MAPDecisionErrorProbabilityForNoncoherentOrthogonalSignallingOverAWMNChannelsB}
    &=&\frac{1}{\Gamma(\nu)}
        \sum_{k=1}^{2^M-1}\sum_{m=1}^{M}
            \frac{(-1)^{1+\sum_{n=1}^{M}k_{n}}p_{m}}
                {(1+\Phi_{k,m})\Lambda^{\nu}_{0}}
        \biggl(\frac{2\Phi_{k,m}\gamma_{m}}{1+\Phi_{k,m}}\biggr)^{\frac{\nu}{2}}
        K_{\nu}\Biggl(\frac{2}{\Lambda_{0}}
            \sqrt{\frac{2\Phi_{k,m}\gamma_{m}}{1+\Phi_{k,m}}}\Biggr)
                \KroneckerDelta{k_{m}}{0},{~~~~~~~~}
\end{eqnarray}
\end{subequations}
where the indexing $k_{n}$ is defined by $k_{n}\!=\!\lfloor{2k}/{2^n}\rfloor-2\lfloor{k}/{2^n}\rfloor$.
Further, $\Phi_{k,m}$ is the normalized \ac{SNR} for the modulation symbol $m$ and defined by 
\begin{equation}
\label{Eq:NormalizedSNRForNoncoherentOrthogonalSignalling}
    \Phi_{k,m}=\sum_{n=1}^{M}
            \biggl(\!\frac{p_{m}}{p_{n}}\!\biggr)^{\!2}
                \biggl(\!\frac{\gamma_{m}}{\gamma_{n}}\!\biggr)^{\!3}
                    {k}_{n},
\end{equation}
Further, for all ${1}\!\leq\!{m}\!\leq\!{M}$, $p_{m}$ is the probability of the modulation symbol $m$, and $\gamma_{m}$ is the instantaneous \ac{SNR} for the transmission of the modulation symbol $m$.
\end{theorem}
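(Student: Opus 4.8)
The plan is to build directly on the conditional decomposition already set up in \eqref{Eq:NoncoherentOrthogonalModulationErrorProbability}--\eqref{Eq:NoncoherentModulationConditionedCorrectProbabilityII}, and to exploit the fact that, conditioned on the impulsive-effect variable $G\!=\!g$, the uncorrelated complex \ac{AWMN} channel \eqref{Eq:NoncoherentComplexAWMNVectorChannel} behaves exactly as a complex Gaussian channel whose noise power is scaled by $g$. Under this conditioning the decision envelope $R_m\!=\!|\defvec{s}_m^H\defrvec{R}_{nc}|$ is Ricean (\theoremref{Theorem:NoncoherentAccurateComponentPDF}, \eqref{Eq:NoncoherentAccurateComponentConditionedPDF}) while each competing envelope $R_n$, $n\!\neq\!m$, is Rayleigh (\theoremref{Theorem:NoncoherentInaccurateComponentPDF}, \eqref{Eq:NoncoherentInaccurateComponentConditionedPDF}). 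First I would write the conditional probability of a correct decision as $\Pr\{c\,|\,H,\defvec{s}_m,g\}=\int_0^\infty f_{R_m|G}(r|g)\prod_{n\neq m}F_{R_n|G}(\tfrac{p_mE_m}{p_nE_n}r\,|\,g)\,dr$, using the closed-form Rayleigh \ac{CDF} $F_{R_n|G}(t|g)=1-\exp(-t^2/(gE_nN_0))$ obtained by integrating \eqref{Eq:NoncoherentInaccurateComponentConditionedPDF}, as prescribed by the \ac{MAP} rule of \theoremref{Theorem:NoncoherentMAPDecisionRuleForPrecodedComplexAWMNVectorChannel}.

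The combinatorial heart of the proof is the expansion of the product of Rayleigh \acp{CDF} by the inclusion--exclusion identity $\prod_{n\neq m}(1-x_n)=\sum_{S\subseteq\{1,\dots,M\}\setminus\{m\}}(-1)^{|S|}\prod_{n\in S}x_n$. Enumerating the subsets $S$ by the integer $k$ whose binary digits are $k_n\!=\!\lfloor 2k/2^n\rfloor-2\lfloor k/2^n\rfloor$ turns this into the outer sum $\sum_{k}(-1)^{\sum_n k_n}$, while the factor $\KroneckerDelta{k_m}{0}$ enforces $m\notin S$. Collecting the quadratic coefficients produced by each subset and substituting $\gamma_m\!=\!H^2E_m/N_0$ (so that $\gamma_m/\gamma_n=E_m/E_n$) is exactly what assembles the normalized \ac{SNR} $\Phi_{k,m}$ defined in \eqref{Eq:NormalizedSNRForNoncoherentOrthogonalSignalling}.

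Next I would carry out two successive integrals. The inner integral over $r$ is a Gaussian-weighted Ricean integral; completing the square and using the standard evaluation $\int_0^\infty 2r\,e^{-pr^2}I_0(cr)\,dr=\tfrac{1}{p}\exp(c^2/4p)$ collapses each subset term to $\tfrac{1}{1+\Phi_{k,m}}\exp\!\bigl(-\tfrac{\gamma_m\Phi_{k,m}}{(1+\Phi_{k,m})g}\bigr)$, where the prefactor $\kappa_m^2/(E_mN_0)=\gamma_m$ (with $\kappa_m\!=\!HE_m$) is what produces the \ac{SNR} in the exponent. The outer average over $G\!\sim\!\mathcal{G}(\nu,1)$ then reduces, term by term, to an integral of the form $\int_0^\infty g^{\nu-1}e^{-\nu g}e^{-a/g}\,dg$ with $a=\gamma_m\Phi_{k,m}/(1+\Phi_{k,m})$, which \cite[Eq.\!~(3.471/9)]{BibGradshteynRyzhikBook} evaluates as a modified Bessel function $K_\nu$; this yields \eqref{Eq:MAPDecisionErrorProbabilityForNoncoherentOrthogonalSignallingOverAWMNChannelsB}, and the Meijer's-$G$ rewriting via $K_\nu(x)\propto\MeijerG[right]{2,0}{0,2}{x^2/4}{\emptycoefficient}{\nu/2,-\nu/2}$ gives the equivalent form \eqref{Eq:MAPDecisionErrorProbabilityForNoncoherentOrthogonalSignallingOverAWMNChannelsA}. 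Finally, averaging $\Pr\{c\,|\,H,\defvec{s}_m\}$ over the symbols with weights $p_m$ and forming $\Pr\{e\,|\,H\}=1-\Pr\{c\,|\,H\}$ absorbs the empty-subset ($k$ with all $k_n\!=\!0$) contribution, leaving the sum to start at $k\!=\!1$ and introducing the overall sign $(-1)^{1+\sum_n k_n}$.

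The step I expect to be the main obstacle is the bookkeeping that threads the subset expansion all the way through both integrals while keeping the constants aligned---in particular, verifying that the quadratic coefficients from the Rayleigh \acp{CDF} combine with the Ricean parameter to give precisely $\Phi_{k,m}$ together with its $1/(1+\Phi_{k,m})$ weight, and confirming that the empty-subset term reproduces the normalization cancelling the leading $1$ in $1-\Pr\{c\,|\,H\}$. The two integral evaluations are individually standard, but ensuring the binary indexing and the Kronecker-delta restriction $m\notin S$ stay consistent across every term, and that the $K_\nu$ and Meijer's-$G$ arguments match \eqref{Eq:MAPDecisionErrorProbabilityForNoncoherentOrthogonalSignallingOverAWMNChannels} exactly, is where the care is required.
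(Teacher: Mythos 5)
Your proposal is correct and follows essentially the same route as the paper's proof: conditioning on $G$, expanding the product of Rayleigh \acp{CDF} via the binary-indexed inclusion--exclusion identity, evaluating the inner integral as the \ac{MGF} of the squared Ricean envelope, and then averaging over $G\!\sim\!\mathcal{G}(\nu,1)$. The only (cosmetic) difference is the order of special-function bookkeeping: the paper first expresses the Gamma average as the reciprocal \ac{MGF} $M_{1/G}(s)$ in Meijer's~G form (yielding \eqref{Eq:MAPDecisionErrorProbabilityForNoncoherentOrthogonalSignallingOverAWMNChannelsA}) and then converts to the $K_{\nu}$ form, whereas you evaluate the same integral directly via \cite[Eq.\!~(3.471/9)]{BibGradshteynRyzhikBook} to get the Bessel form first.
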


\begin{proof}
Note that, with the aid of \cite[Eq. (4.24)]{BibPapoulisBook}, \eqref{Eq:NoncoherentModulationConditionedCorrectProbability} can be shown to be \eqref{Eq:NoncoherentModulationConditionedCorrectProbabilityII}, in which the expectation is achieved with respect to the distribution $V_{m}$, and where $F_{V_{n}}$ is the \ac{CDF} of the distribution $V_{n}$ and easily found as \cite[Eq. (2.3-50)]{BibProakisBook},
\begin{equation}
\label{Eq:NoncoherentInaccurateComponentConditionedCDF}
    F_{R_{n}}(r)=
        1-\exp\Bigl(-\frac{r^2}{gE_{n}N_{0}}\Bigr),\quad{r}\in\mathbb{R}^{+}.
\end{equation}
For non-zero distinct $x_1,x_2,\ldots,x_N$, we can show that 
\begin{equation}
\label{Eq:SeriesExpansionForNewtonPolynomial}
    \prod_{{n}\neq{m}}^{N}(1+x_n)=
        1+\sum_{k=1}^{2^N-1}
            \prod_{n=1}^{N}x_{n}^{k_{n}}
                \KroneckerDelta{k_{m}}{0},
\end{equation}
where $k_{n}\!=\!\lfloor{2k}/{2^n}\rfloor-2\lfloor{k}/{2^n}\rfloor$, and therein $\lfloor{x}\rfloor$ is the floor function that returns the greatest integer less than or equal to $x$. Further, $\KroneckerDelta{x}{y}$ is the Kronecker's delta function that returns $1$ iff $x\!=\!y$ and $0$ otherwise. Putting \eqref{Eq:NoncoherentInaccurateComponentConditionedCDF} into \eqref{Eq:NoncoherentModulationConditionedCorrectProbabilityII} and using \eqref{Eq:SeriesExpansionForNewtonPolynomial}, we can rewrite \eqref{Eq:NoncoherentModulationConditionedCorrectProbabilityII} as follows
\begin{equation}
    \label{Eq:NoncoherentModulationConditionedCorrectProbabilityIII}
    {~~}\Pr\bigl\{\bigl.c\,\bigr|\,H,\defvec{s}_{m},g\bigr\}
        =1+\sum_{k=1}^{2^{M}-1}
            (-1)^{\sum_{n=1}^{M}{k}_{n}}
        \mathbb{E}\biggl[
                \exp\biggl(-\frac{\Phi_{k,m}}{gE_{0}N_{0}}
                    R^2_{m}\biggr)\biggr]
                        \KroneckerDelta{k_{m}}{0},{~~~~}
\end{equation}
where $\Phi_{k,m}$ is defined in \eqref{Eq:NormalizedSNRForNoncoherentOrthogonalSignalling}. As mentioned before, $R_{m}$ follows a Ricean distribution whose \ac{PDF} is given by
\begin{equation}
\!\!\!f_{R_{m}}(r)=
    \frac{2v}{gE_{m}N_{0}}
        I_{0}\Bigl(\frac{2\kappa_{m}{r}}{gE_{m}N_{0}}\Bigr) 
            \exp\Bigl(-\frac{r^2+\kappa_{m}^2}{gE_{m}N_{0}}\Bigr), 
\end{equation}
where $\kappa_{m}$ is a constant defined as $\kappa_{m}\!=\!HE_{m}$. Further, note that 
$\mathbb{E}[\exp(-sR^2_{m})]$, where $s\!=\!{\Phi_{k,m}}/{(gE_{0}N_{0})}$, is specifically required in \eqref{Eq:NoncoherentModulationConditionedCorrectProbabilityIII}. Thanks to $\int_{0}^{\infty}x\exp(-x^2/a)\BesselI[0]{bx}\allowbreak{d}x\!=\!a\exp(ab^2)/2$\cite[Eq. (2.15.20/8)]{BibGradshteynRyzhikBook}, we derive
\begin{equation}\label{Eq:SquaredRiceanDistributionMGF}
\!\!\!\!\mathbb{E}\bigl[\exp\bigl(-sR^2_{m}\bigr)\bigr]=
        \frac{\exp\bigl(-\frac{{s}\kappa_{m}^2}{1+sgE_{m}N_{0}}\bigr)}
            {1+sgE_{m}N_{0}}.
\end{equation}
Eventually, inserting both \eqref{Eq:NoncoherentModulationConditionedCorrectProbabilityIII} and \eqref{Eq:SquaredRiceanDistributionMGF} into \eqref{Eq:NoncoherentModulationCorrectProbability} yields 
\begin{equation}
\label{Eq:NoncoherentModulationCorrectProbabilityII}    
\Pr\bigl\{\bigl.c\,\bigr|\,H,\defvec{s}_{m}\bigr\}
    =1+\frac{1}{\Gamma(\nu)}\sum_{k=1}^{2^M-1}
        \frac{(-1)^{\sum_{n=1}^{M}k_{n}}}{1+\Phi_{k,m}}
        M_{{1}/{G}}\Bigl(\frac{\Phi_{k,m}\gamma_{m}}{1+\Phi_{k,m}}\Bigr)\KroneckerDelta{k_{m}}{0}.
        {~~}
\end{equation}
where $M_{{1}/{G}}(s)$, $s\in\mathbb{R}^{+}$ is the reciprocal \ac{MGF} and defined as $M_{{1}/{G}}(s)\!=\!\int_{0}^{\infty}\exp(-s/g)f_{G}(g)dg$, in which putting \eqref{Eq:ProportionPDF} and using both \cite[Eqs. (8.4.3/1) and (8.4.3/2)]{BibPrudnikovBookVol3} within \cite[Eq. (2.8.4)]{BibKilbasSaigoBook}, we obtain  
\begin{equation}\label{Eq:ProportionInverseMGF}
    M_{{1}/{G}}(s)=\frac{1}{\Gamma(\nu)}\MeijerG[right]{2,0}{0,2}{s\nu}{\emptycoefficient}{0,\nu}.
\end{equation}
Putting both  \eqref{Eq:ProportionInverseMGF} and \eqref{Eq:NoncoherentModulationCorrectProbabilityII} into \eqref{Eq:NoncoherentOrthogonalModulationConditionedErrorProbability} and using \eqref{Eq:NoncoherentOrthogonalModulationErrorProbability}, we obtain \eqref{Eq:MAPDecisionErrorProbabilityForNoncoherentOrthogonalSignallingOverAWMNChannelsA}, in which using \cite[Eqs. (8.2.2/15) and (8.4.23/1)]{BibPrudnikovBookVol3} results in 
\eqref{Eq:MAPDecisionErrorProbabilityForNoncoherentOrthogonalSignallingOverAWMNChannelsB}, which proves \theoremref{Theorem:MAPDecisionErrorProbabilityForNoncoherentOrthogonalSignallingOverAWMNChannels}.
\end{proof}

\begin{theorem}
\label{Theorem:MLDecisionErrorProbabilityForNoncoherentOrthogonalSignallingOverAWMNChannels}
For the \ac{ML} decision rule given by \theoremref{Theorem:NoncoherentMLDecisionRuleForPrecodedComplexAWMNVectorChannel}, the conditional \ac{SER} of non-coherent orthogonal signaling is given by 
\begin{subequations}
\label{Eq:MLDecisionErrorProbabilityForNoncoherentOrthogonalSignallingOverAWMNChannels}
\vspace{-1mm}
\begin{eqnarray}
    \label{Eq:MLDecisionErrorProbabilityForNoncoherentOrthogonalSignallingOverAWMNChannelsA}
    \Pr\bigl\{\bigl.e\,\bigr|\,H\bigr\}
    &=&\frac{1}{M\Gamma(\nu)}
        \sum_{k=1}^{2^M-1}\sum_{m=1}^{M}
            \frac{(-1)^{1+\sum_{n=1}^{M}k_{n}}}{1+\Phi_{k,m}}
            \MeijerG[right]{2,0}{0,2}
                {\frac{\nu\Phi_{k,m}\gamma_{m}}{1+\Phi_{k,m}}}
                    {\emptycoefficient}
                        {0,\nu}
                    \KroneckerDelta{k_{m}}{0},\\[1mm]
    \label{Eq:MLDecisionErrorProbabilityForNoncoherentOrthogonalSignallingOverAWMNChannelsB}
    &=&\frac{1}{M\Gamma(\nu)}
        \sum_{k=1}^{2^M-1}\sum_{m=1}^{M}
            \frac{(-1)^{1+\sum_{n=1}^{M}k_{n}}}
                {(1+\Phi_{k,m})\Lambda^{\nu}_{0}}
        \biggl(\frac{2\Phi_{k,m}\gamma_{m}}{1+\Phi_{k,m}}\biggr)^{{\nu}/{2}}
        K_{\nu}\Biggl(\frac{2}{\Lambda_{0}}
            \sqrt{\frac{2\Phi_{k,m}\gamma_{m}}{1+\Phi_{k,m}}}\Biggr)
                \KroneckerDelta{k_{m}}{0},{~~~~~~~~~~}    
\end{eqnarray}
\end{subequations}
where $\Phi_{k,m}\!=\!\sum_{n=1}^{M}({\gamma_{m}}/{\gamma_{n}})^{3}{k}_{n}$.
\end{theorem}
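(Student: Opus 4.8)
The plan is to obtain this result as the equiprobable specialization of \theoremref{Theorem:MAPDecisionErrorProbabilityForNoncoherentOrthogonalSignallingOverAWMNChannels}, exactly as the \ac{ML} rule in \theoremref{Theorem:NoncoherentMLDecisionRuleForPrecodedComplexAWMNVectorChannel} was obtained from the \ac{MAP} rule in \theoremref{Theorem:NoncoherentMAPDecisionRuleForPrecodedComplexAWMNVectorChannel}. First I would observe that the non-coherent \ac{ML} rule $\widehat{m}\!=\!\argmax_{m}\gamma_{m}R_{m}$ coincides with the \ac{MAP} rule $\widehat{m}\!=\!\argmax_{m}p_{m}\gamma_{m}R_{m}$ whenever all modulation symbols are equiprobable, i.e., when $p_{m}\!=\!1/M$ for all ${1}\!\leq\!{m}\!\leq\!{M}$, since the constant scale $1/M$ does not affect the maximization and hence leaves the decision regions intact. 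Consequently, the conditional \ac{SER} under the \ac{ML} rule is precisely the conditional \ac{SER} of \theoremref{Theorem:MAPDecisionErrorProbabilityForNoncoherentOrthogonalSignallingOverAWMNChannels} evaluated at $p_{m}\!=\!1/M$.

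Second, I would trace how this substitution propagates through the two closed forms \eqref{Eq:MAPDecisionErrorProbabilityForNoncoherentOrthogonalSignallingOverAWMNChannelsA} and \eqref{Eq:MAPDecisionErrorProbabilityForNoncoherentOrthogonalSignallingOverAWMNChannelsB}. In the normalized \ac{SNR} of \eqref{Eq:NormalizedSNRForNoncoherentOrthogonalSignalling}, the priors enter only through the factor $(p_{m}/p_{n})^{2}$; setting $p_{m}\!=\!1/M$ makes $(p_{m}/p_{n})^{2}\!=\!1$, so that $\Phi_{k,m}$ collapses to $\sum_{n=1}^{M}(\gamma_{m}/\gamma_{n})^{3}k_{n}$, which is exactly the $\Phi_{k,m}$ stated in the present theorem. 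Meanwhile, the explicit prior $p_{m}$ appearing in the numerators of both \eqref{Eq:MAPDecisionErrorProbabilityForNoncoherentOrthogonalSignallingOverAWMNChannelsA} and \eqref{Eq:MAPDecisionErrorProbabilityForNoncoherentOrthogonalSignallingOverAWMNChannelsB} becomes the constant $1/M$, which factors out of the double sum and upgrades the prefactor $1/\Gamma(\nu)$ to $1/(M\Gamma(\nu))$. The Meijer's $G$-function argument $\nu\Phi_{k,m}\gamma_{m}/(1+\Phi_{k,m})$ and its subsequent reduction to the modified Bessel function $K_{\nu}$ via \cite[Eqs. (8.2.2/15) and (8.4.23/1)]{BibPrudnikovBookVol3} carry over verbatim, since they depend on $p_{m}$ only through $\Phi_{k,m}$ and through the already-extracted prefactor. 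This directly yields \eqref{Eq:MLDecisionErrorProbabilityForNoncoherentOrthogonalSignallingOverAWMNChannelsA} and \eqref{Eq:MLDecisionErrorProbabilityForNoncoherentOrthogonalSignallingOverAWMNChannelsB}.

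The only point requiring care, rather than a genuine obstacle, is the bookkeeping in the Newton-type product expansion \eqref{Eq:SeriesExpansionForNewtonPolynomial} underlying the \ac{MAP} derivation. Under equal priors the pairwise decision thresholds degrade from $R_{n}\!<\!(p_{m}E_{m}/p_{n}E_{n})R_{m}$ to $R_{n}\!<\!(E_{m}/E_{n})R_{m}$ (using $\gamma_{m}/\gamma_{n}\!=\!E_{m}/E_{n}$ at fixed $H$), but the combinatorial enumeration of the $2^{M}-1$ nonempty subsets of $\{1,\ldots,M\}\setminus\{m\}$ through the index $k$ with the Kronecker constraint $\delta_{k_{m},0}$ is unaffected, and the energy ordering $E_{1}\!\leq\!E_{2}\!\leq\!\cdots\!\leq\!E_{M}$ remains permissible and possibly strict. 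Hence no term of the expansion is lost or merged, the per-symbol \ac{SNR} weighting $\gamma_{m}$ survives the specialization, and the substitution is fully consistent, completing the proof.
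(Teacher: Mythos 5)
Your proposal is correct and is exactly the paper's own route: the paper proves this theorem in one line by setting $p_{m}=1/M$ in \theoremref{Theorem:MAPDecisionErrorProbabilityForNoncoherentOrthogonalSignallingOverAWMNChannels}, which is precisely your specialization, with your added bookkeeping (the $(p_m/p_n)^2$ factor collapsing in $\Phi_{k,m}$, the prior $p_m$ factoring out as the $1/M$ prefactor, and the coincidence of the ML and MAP decision regions under equal priors) simply making explicit what the paper calls obvious.
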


\begin{proof}
The proof is obvious setting $p_{m}\!=\!1/M$ for ${1}\!\leq\!{m}\!\leq\!{M}$ in \theoremref{Theorem:MAPDecisionErrorProbabilityForNoncoherentOrthogonalSignallingOverAWMNChannels}. 
\end{proof}

\begin{theorem}
\label{Theorem:MLDecisionErrorProbabilityForNoncoherentEqualEnergyOrthogonalSignallingOverAWMNChannels}
When the modulation symbols are equiprobable and have equal-energy, and referring to \eqref{Eq:NoncoherentOptimalDecisionRuleForPrecodedComplexAWMNVectorChannel}, the conditional \ac{SER} of non-coherent orthogonal signaling is given by 
\begin{subequations}
\label{Eq:MLDecisionErrorProbabilityForNoncoherentEqualEnergyOrthogonalSignallingOverAWMNChannels}
\begin{eqnarray}
    \label{Eq:MLDecisionErrorProbabilityForNoncoherentEqualEnergyOrthogonalSignallingOverAWMNChannelsA}
    \Pr\bigl\{\bigl.e\,\bigr|\,H\bigr\}
    &=&\frac{1}{\Gamma(\nu)}
        \sum_{k=1}^{M-1}
            \frac{(-1)^{1+k}}{1+k}\Binomial{M-1}{k}
            \MeijerG[right]{2,0}{0,2}
                {\frac{\nu{k}\gamma}{1+k}}
                    {\emptycoefficient}
                        {0,\nu},\\[1mm]
    \label{Eq:MLDecisionErrorProbabilityForNoncoherentEqualEnergyOrthogonalSignallingOverAWMNChannelsB}
    &=&\frac{2}{\Gamma(\nu)}
        \sum_{k=1}^{M-1}
            \frac{(-1)^{1+k}}
                {(1+k)\Lambda^{\nu}_{0}}\Binomial{M-1}{k}
        \biggl(\frac{2{k}\gamma}{1+k}\biggr)^{\frac{\nu}{2}}
            K_{\nu}\Biggl(\frac{2}{\Lambda_{0}}
                \sqrt{\frac{2{k}\gamma}{1+k}}\Biggr),{~~~}    
\end{eqnarray}
\end{subequations}
where $\gamma\!=\!H^2E_{\defrvec{S}}/N_{0}$ denotes the instantaneous \ac{SNR}.  
\end{theorem}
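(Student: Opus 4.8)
The plan is to obtain this equal-energy, equiprobable result as a direct specialization of \theoremref{Theorem:MLDecisionErrorProbabilityForNoncoherentOrthogonalSignallingOverAWMNChannels}, which already furnishes the conditional \ac{SER} for arbitrary orthogonal energies under the \ac{ML} rule \eqref{Eq:NoncoherentOptimalDecisionRuleForPrecodedComplexAWMNVectorChannel}. First I would impose the equal-energy and equiprobable hypotheses, i.e.\ $E_m\!=\!E_{\defrvec{S}}$ and $p_m\!=\!1/M$ for all $1\!\le\!m\!\le\!M$, so that every instantaneous \ac{SNR} coincides, $\gamma_m\!=\!\gamma\!=\!H^2E_{\defrvec{S}}/N_0$. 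Substituting $\gamma_m/\gamma_n\!=\!1$ into $\Phi_{k,m}\!=\!\sum_{n=1}^M(\gamma_m/\gamma_n)^3 k_n$ collapses it to $\Phi_{k,m}\!=\!\sum_{n=1}^M k_n$, the Hamming weight of the binary index vector $k_n\!=\!\lfloor 2k/2^n\rfloor-2\lfloor k/2^n\rfloor$.

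The key step will be the combinatorial reindexing of the nested sum. The factor $\KroneckerDelta{k_m}{0}$ restricts the $k$-summation to configurations with $k_m\!=\!0$, that is, to nonempty subsets $S\subseteq\{1,\dots,M\}\setminus\{m\}$ (nonempty because $k\!\ge\!1$ excludes the all-zero vector). For such a subset of cardinality $j$ the weight is exactly $\Phi_{k,m}\!=\!j$, the sign is $(-1)^{1+j}$, and, crucially, the entire summand now depends on $k$ only through $j$ and not at all on $m$. I would therefore group the $k$-sum by the size $j\!=\!|S|$, using that there are $\Binomial{M-1}{j}$ subsets of size $j$ among the $M-1$ admissible indices, and then observe that the outer $m$-sum merely replicates an $m$-independent quantity $M$ times, cancelling the leading $1/M$. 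This yields \eqref{Eq:MLDecisionErrorProbabilityForNoncoherentEqualEnergyOrthogonalSignallingOverAWMNChannelsA} with summation range $1\!\le\!j\!\le\!M-1$.

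Finally I would pass from the Meijer's G form to the modified-Bessel form by invoking the reduction $\MeijerG[right]{2,0}{0,2}{z}{\emptycoefficient}{0,\nu}=2z^{\nu/2}K_{\nu}(2\sqrt{z})$ (the same \cite[Eq.~(8.4.23/1)]{BibPrudnikovBookVol3} relation exploited earlier in the excerpt), with $z\!=\!\nu j\gamma/(1+j)$; the accompanying factor of $2$ accounts for the change of prefactor from $1/\Gamma(\nu)$ in \eqref{Eq:MLDecisionErrorProbabilityForNoncoherentEqualEnergyOrthogonalSignallingOverAWMNChannelsA} to $2/\Gamma(\nu)$ in \eqref{Eq:MLDecisionErrorProbabilityForNoncoherentEqualEnergyOrthogonalSignallingOverAWMNChannelsB}, completing the proof.

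The main obstacle will be the bookkeeping in the reindexing: I must verify that the binary encoding $k_n\!=\!\lfloor 2k/2^n\rfloor-2\lfloor k/2^n\rfloor$ over $1\!\le\!k\!\le\!2^M-1$, together with $\KroneckerDelta{k_m}{0}$, enumerates precisely the nonempty subsets of $\{1,\dots,M\}\setminus\{m\}$, each exactly once, so that collapsing to the weight-$j$ count $\Binomial{M-1}{j}$ is exact and the genuine $m$-independence of the summand delivers the cancelling factor $M$ rather than a spurious remainder.
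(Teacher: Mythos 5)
Your proof is correct, but it takes a genuinely different route from the paper's. The paper does not specialize the final statement of \theoremref{Theorem:MLDecisionErrorProbabilityForNoncoherentOrthogonalSignallingOverAWMNChannels}; instead it goes back one level and specializes the conditional correct-decision probability \eqref{Eq:NoncoherentModulationConditionedCorrectProbability}: with equal energies the product of $M-1$ identical factors becomes $\mathbb{E}\bigl[F_{R_{n}}(R_{m})^{M-1}\bigr]$, the Rayleigh \ac{CDF} \eqref{Eq:NoncoherentInaccurateComponentConditionedCDF} is substituted, and the ordinary binomial theorem produces the $\Binomial{M-1}{k}$ sum directly; the expectation is then evaluated via the squared-Rician \ac{MGF} \eqref{Eq:SquaredRiceanDistributionMGF} and averaged over $G$ exactly as in the proof of \theoremref{Theorem:MAPDecisionErrorProbabilityForNoncoherentOrthogonalSignallingOverAWMNChannels}. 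You, by contrast, work purely at the level of the already-proven closed form: setting $\gamma_m=\gamma$ collapses $\Phi_{k,m}$ to the Hamming weight $\sum_n k_n$, the Kronecker-delta-restricted $k$-sum is regrouped by weight $j$ (the bijection between nonzero binary vectors with $k_m=0$ and nonempty subsets of $\{1,\dots,M\}\setminus\{m\}$ is exact, giving the count $\Binomial{M-1}{j}$), and the now $m$-independent summand makes the outer $m$-sum cancel the $1/M$ prefactor; the Meijer-G-to-Bessel step via $G^{2,0}_{0,2}\bigl[z\,\big|\,{-};\,0,\nu\bigr]=2z^{\nu/2}K_{\nu}(2\sqrt{z})$ is the same identity the paper invokes, and its factor of $2$ correctly produces the prefactor change between the two displayed forms. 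What each approach buys: yours requires no new integration whatsoever—only a counting argument on top of the general theorem—which makes the logical dependence explicit and, in effect, verifies the general theorem's exponentially long sum against the compact binomial form; the paper's derivation never forms the $2^M-1$-term sum at all, applying the binomial theorem at the point where the product degenerates into a power, so it is shorter to execute but repeats the Rician and Gamma averaging steps. (The residual mismatch between forms (a) and (b) of the statement—consistency requires $\Lambda_0=\sqrt{2/\nu}$ rather than $\sqrt{N_0/\nu}$—is inherited from the paper's own notation and is not introduced by your argument.)
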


\begin{proof}
In case of that the modulation symbols $\defvec{s}_{m}$, ${1}\!\leq\!{m}\!\leq\!{M}$ are equiprobable and have equal energy (i.e., when $E_{m}\!=\!E_{\defrvec{S}}$ and $\Pr\{\defrvec{S}\!=\!\defvec{s}_{m}\}\!=\!1/M$ for all ${1}\!\leq\!{m}\!\leq\!{M}$), \eqref{Eq:NoncoherentModulationConditionedCorrectProbability} can be shown to be 
\begin{equation}
\label{Eq:NoncoherentIIDModulationConditionedCorrectProbability}
\Pr\bigl\{\bigl.c\,\bigr|\,H,\defvec{s}_{m},g\bigr\}
    =\mathbb{E}\bigl[F_{R_{n}}(R_{m})^{M-1}\bigr],
\end{equation}
where substituting \eqref{Eq:NoncoherentInaccurateComponentConditionedCDF} and then utilizing binomial expansion \cite[Eq. (3.1.1)]{BibAbramowitzStegunBook} results in
\begin{equation}
\label{Eq:NoncoherentIIDModulationConditionedCorrectProbabilityII}
\Pr\bigl\{\bigl.c\,\bigr|\,H,\defvec{s}_{m},g\bigr\}
    =1+\sum_{k=1}^{M-1}
        (-1)^{k}\Binomial{M-1}{k}
            \mathbb{E}\biggl[\exp\Bigl(-\frac{kR_{m}^2}{gE_{\defrvec{S}}N_{0}}\Bigr)\biggr],
\end{equation}
where the expectation is achieved with respect to the distribution $R_{m}$ and can be readily derived by setting $s\!=\!k/g/E_{\defrvec{S}}/N_{0}$ in \eqref{Eq:SquaredRiceanDistributionMGF}. From this point, we derive the closed-form expression of $\Pr\{e|H,\defvec{s}_{m}\}$, from which we can obtain $\Pr\{e|H,\defvec{s}_{m}\}\!=\allowbreak\!\int_{0}^{\infty}\Pr\{e|H,\defvec{s}_{m},g\}f_{G}(g)\,dg$. Accordingly, the proof is obvious performing almost the same steps in the proof of \theoremref{Theorem:MAPDecisionErrorProbabilityForNoncoherentOrthogonalSignallingOverAWMNChannels}.
\end{proof}

\begin{figure*}[tp] 
\centering
\begin{subfigure}{0.47\columnwidth}
    \centering
    \includegraphics[clip=true, trim=0mm 0mm 0mm 0mm, width=1.0\columnwidth,height=0.85\columnwidth]{./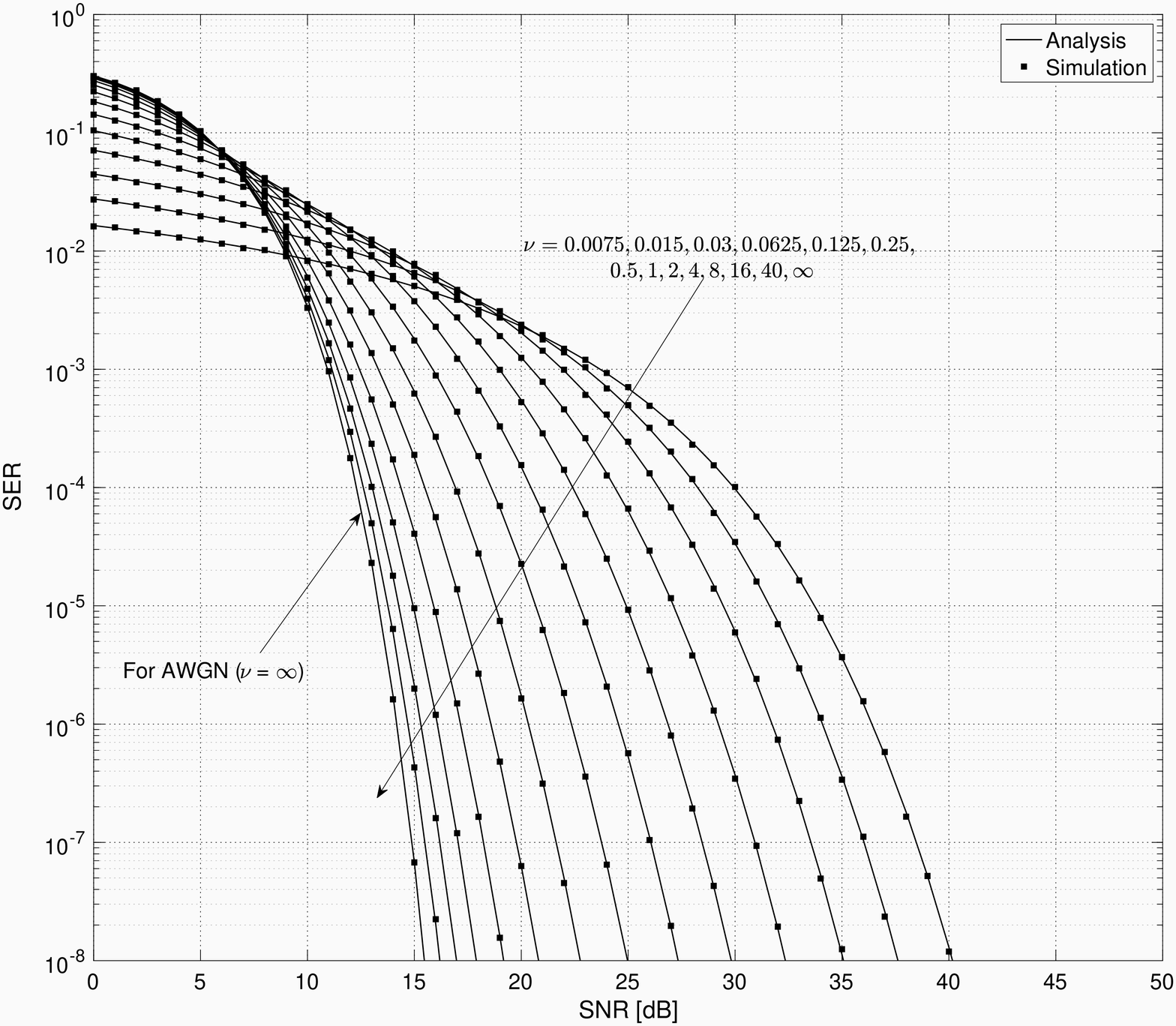}
    \caption{Modulation level $M=2$.} 
    \vspace{5mm}
    \label{Figure:ConditionalSEPForNoncoherentOrthogonalSignalingA}
\end{subfigure}
{~~~}
\begin{subfigure}{0.47\columnwidth}
    \centering
    \includegraphics[clip=true, trim=0mm 0mm 0mm 0mm, width=1.0\columnwidth,height=0.85\columnwidth]{./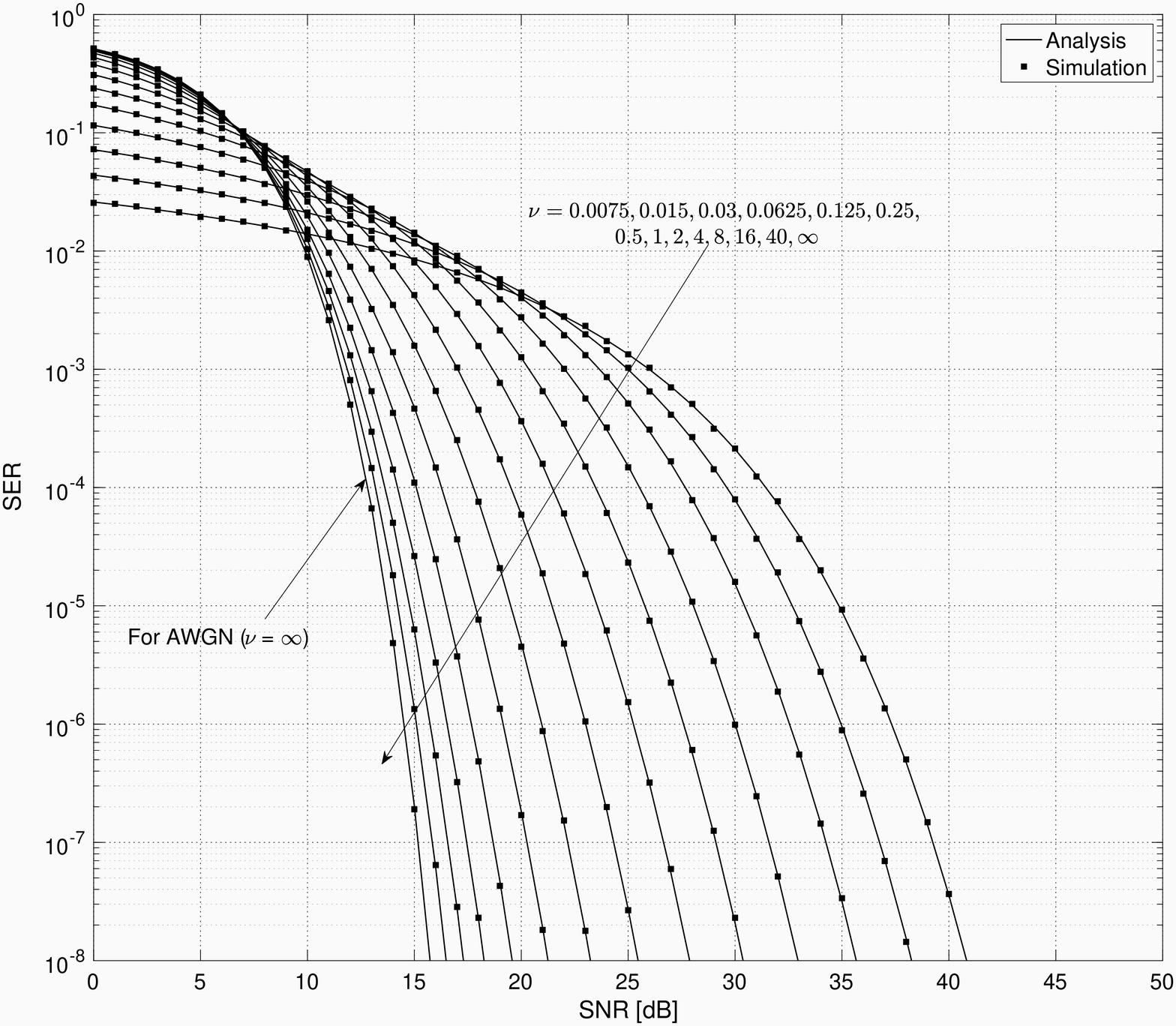}
    \caption{Modulation level $M=4$.}
    \vspace{5mm}
    \label{Figure:ConditionalSEPForNoncoherentOrthogonalSignalingB}
\end{subfigure}\\
\begin{subfigure}{0.47\columnwidth}
    \centering
    \includegraphics[clip=true, trim=0mm 0mm 0mm 0mm, width=1.0\columnwidth,height=0.85\columnwidth]{./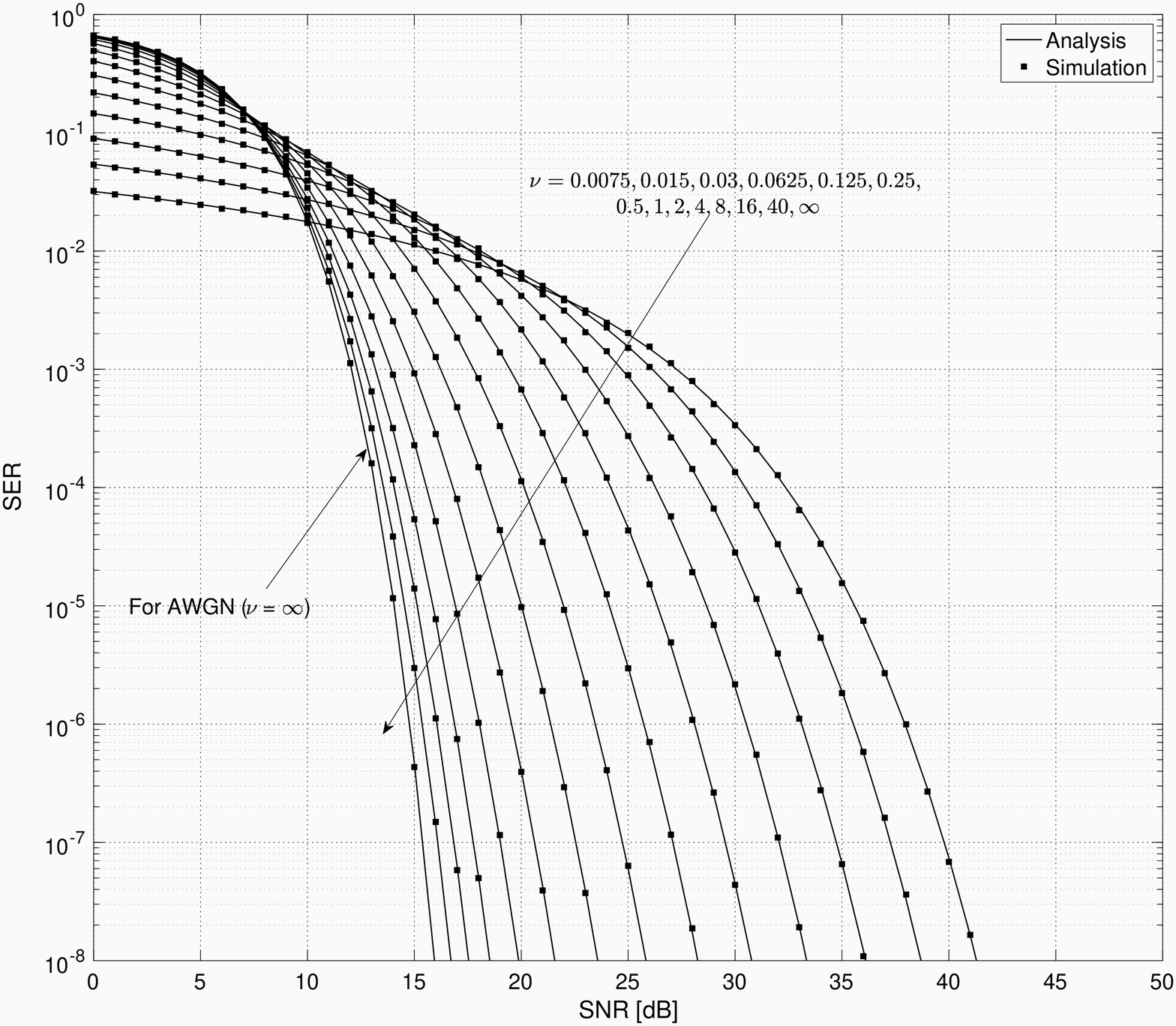}
    \caption{Modulation level $M=8$.}
    \label{Figure:ConditionalSEPForNoncoherentOrthogonalSignalingC}
\end{subfigure}
{~~~}
\begin{subfigure}{0.47\columnwidth}
    \centering
    \includegraphics[clip=true, trim=0mm 0mm 0mm 0mm, width=1.0\columnwidth,height=0.85\columnwidth]{./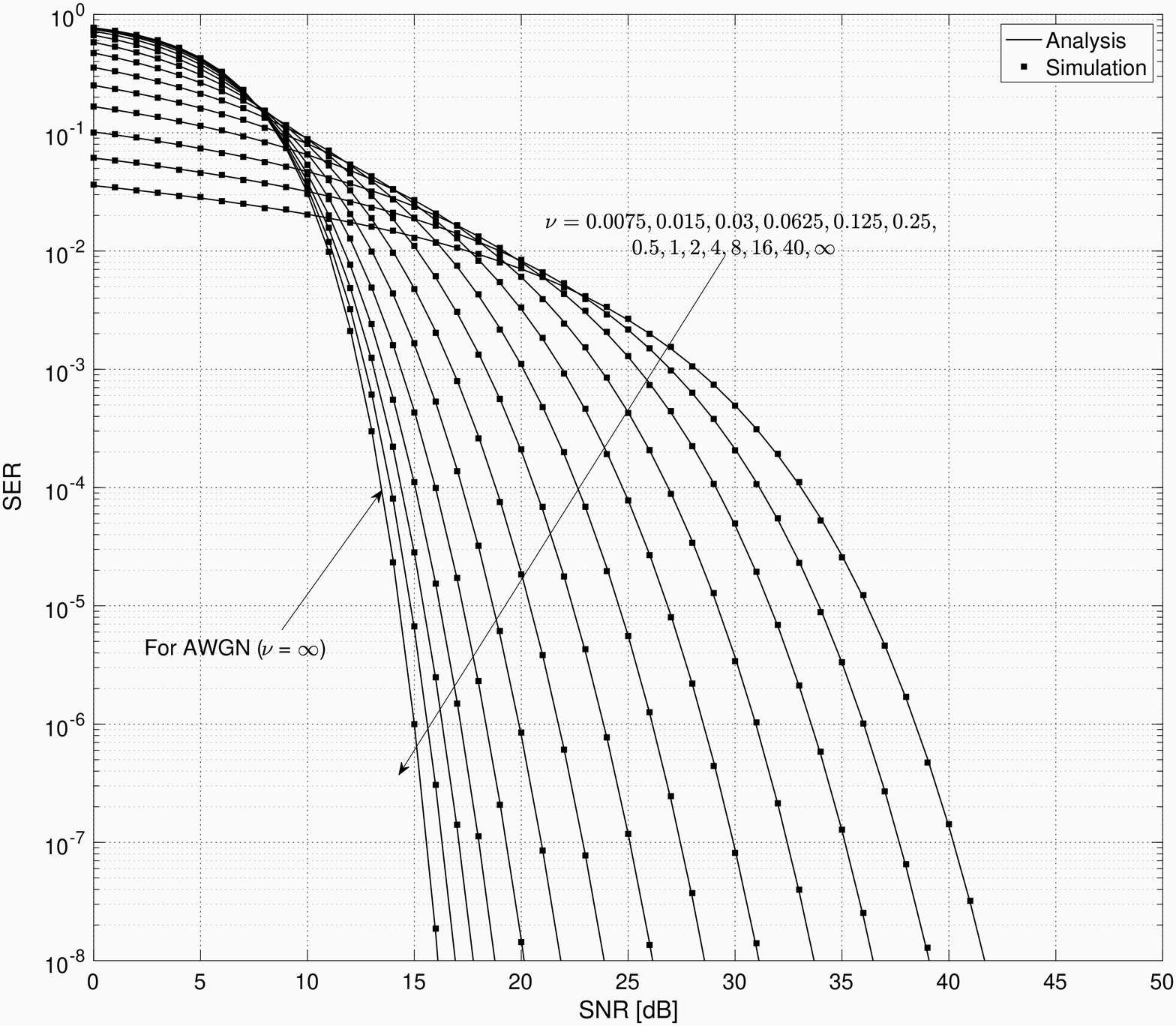}
    \caption{Modulation level $M=16$.}
    \label{Figure:ConditionalSEPForNoncoherentOrthogonalSignalingD}
\end{subfigure}
\caption{The \ac{SER} of non-coherent orthogonal signaling over \ac{AWMN} channels.}
\label{Figure:ConditionalSEPForNoncoherentOrthogonalSignaling}
\vspace{-2mm} 
\end{figure*} 

\begin{theorem}\label{Theorem:MLDecisionErrorProbabilityForNoncoherentEqualEnergyBinaryOrthogonalSignallingOverAWMNChannels}
The conditional \ac{BER} of orthogonal signaling, including \ac{BFSK}, with non-coherent \ac{ML} detection, where the binary modulation symbols are equiprobable and have equal-energy, is given by
\begin{eqnarray}
\Pr\bigl\{\bigl.e\,\bigr|\,H\bigr\}
    &=&\frac{1}{\Gamma(\nu)}\MeijerG[right]{2,0}{0,2}
                {\frac{\nu{k}\gamma}{1+k}}
                    {\emptycoefficient}
                        {0,\nu},\\
    &=&\frac{1}{\Gamma(\nu)}
        \biggl(\frac{\gamma}{\Lambda^2_{0}}\biggr)^{\!\frac{\nu}{2}}
            K_{\nu}\biggl(2\sqrt{\frac{\gamma}{\Lambda^2_{0}}}\biggr).
\end{eqnarray}
\end{theorem}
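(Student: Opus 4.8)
The plan is to obtain this result as the binary ($M\!=\!2$) specialization of \theoremref{Theorem:MLDecisionErrorProbabilityForNoncoherentEqualEnergyOrthogonalSignallingOverAWMNChannels}. First I would verify that binary orthogonal keying, and \ac{BFSK} in particular, satisfies exactly the hypotheses of that theorem: the two modulation symbols are orthogonal ($\defvec{s}^H_{1}\defvec{s}_{2}\!=\!0$), equiprobable ($p_1\!=\!p_2\!=\!1/2$), and of equal energy ($E_1\!=\!E_2\!=\!E_{\defrvec{S}}$), so that the optimal detector collapses to the simple envelope comparison of \theoremref{Theorem:NoncoherentMLDecisionRuleForPrecodedComplexAWMNVectorChannel}, namely deciding in favour of the symbol with the larger $R_m\!=\!|\defvec{s}^H_m\defrvec{R}_{nc}|$. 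This confirms that the general equal-energy orthogonal formula applies verbatim.

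Next I would substitute $M\!=\!2$ into the closed-form expressions. The outer sum $\sum_{k=1}^{M-1}$ then contains only the single term $k\!=\!1$, for which $(-1)^{1+k}\!=\!1$ and $\binom{M-1}{k}\!=\!\binom{1}{1}\!=\!1$, leaving the prefactor $1/(1+k)\!=\!1/2$. Collapsing the sum in this way reduces the general expression to a single Meijer's G term, which gives the first displayed form. Care is needed here only in tracking the constant: the $1/2$ coming from $1/(1+k)$ must combine correctly with the factor $2$ that appears when the Meijer's G function is rewritten as a modified Bessel function, and likewise the normalization constants ($\lambda_{0}$ versus $\Lambda_{0}$) must be reconciled so that the two displayed forms are mutually consistent.

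For the second displayed form I would invoke the same Meijer's G / modified Bessel-$K$ identity \cite[Eq. (8.4.23/1)]{BibPrudnikovBookVol3} already used in the proof of the general theorem, namely $\MeijerG[right]{2,0}{0,2}{x}{\emptycoefficient}{0,\nu}\!=\!2\,x^{\nu/2}K_{\nu}(2\sqrt{x})$, applied with $x$ equal to the argument appearing inside the $G$-function. This converts the Meijer's G representation directly into a $K_\nu$ expression, whereupon collecting the powers into $(\gamma/\Lambda_{0}^2)^{\nu/2}$ and the Bessel argument into $2\sqrt{\gamma/\Lambda_{0}^2}$ yields the stated closed form.

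Alternatively, and as a useful cross-check, I could derive the statement from first principles rather than by specialization. Conditioned on the impulsive factor $G\!\sim\!\mathcal{G}(\nu,1)$, the envelope of the transmitted branch is Ricean and that of the other branch is Rayleigh, as established in \theoremref{Theorem:NoncoherentAccurateComponentPDF} and \theoremref{Theorem:NoncoherentInaccurateComponentPDF}. A bit error occurs precisely when the Rayleigh envelope exceeds the Ricean one, so the conditional error probability equals $\mathbb{E}[\exp(-R_1^2/(gE_{\defrvec{S}}N_0))]$, the value of the squared-Ricean moment-generating function, after which averaging over $G$ and applying \cite[Eq. (3.478/4)]{BibGradshteynRyzhikBook} recovers the $K_\nu$ form. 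The main obstacle in either route is purely bookkeeping: there is no genuine analytic difficulty, only the need to reconcile the $1/(1+k)$ prefactor and the component-deviation normalizations so that the Meijer's G and Bessel forms agree with one another and reduce, as $\nu\!\rightarrow\!\infty$, to the classical non-coherent binary result.
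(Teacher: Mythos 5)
Your proposal is correct and takes essentially the same route as the paper, which likewise proves this result by setting $M=2$ in the general non-coherent orthogonal-signaling SER theorem and performing the resulting algebraic simplifications. Your explicit tracking of the $1/(1+k)=1/2$ prefactor against the factor $2$ from the Meijer-G-to-Bessel identity, and of the $\lambda_{0}$ versus $\Lambda_{0}$ normalization, is exactly the bookkeeping the paper glosses over (and in fact the stated binary theorem retains a dangling $k$ and absorbs that $1/2$), so your added care only strengthens the argument.
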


\begin{proof}
The proof is obvious setting $M\!=\!2$ in \theoremref{Theorem:MAPDecisionErrorProbabilityForNoncoherentOrthogonalSignallingOverAWMNChannels} and  performing simple algebraic manipulations.
\end{proof}

Let us now consider the special cases in order to check the numerical validity of the results presented above. It is worth noticing that, when the normality gets close to zero (i.e., while $\nu\!\rightarrow\!{0}^{+}$), the complex \ac{AWMN} channel turns into the noiseless channel and accordingly the conditional \ac{SER} approaches to zero  (i.e., $\Pr\{e\,|\,H\}\!\rightarrow\!{0}^{+}$) as expected. 
Further, in case of $\nu\!=\!1$, we simplify  \eqref{Eq:MLDecisionErrorProbabilityForNoncoherentEqualEnergyOrthogonalSignallingOverAWMNChannels} to 
\begin{subequations}
\label{Eq:MLDecisionErrorProbabilityForNoncoherentEqualEnergyOrthogonalSignallingOverAWLNChannels}
\begin{eqnarray}
    \label{Eq:MLDecisionErrorProbabilityForNoncoherentEqualEnergyOrthogonalSignallingOverAWLNChannelsA}
    \Pr\bigl\{\bigl.e\,\bigr|\,H\bigr\}
    &=&\sum_{k=1}^{M-1}
            \frac{(-1)^{1+k}}{1+k}\Binomial{M-1}{k}
            \MeijerG[right]{2,0}{0,2}
                {\frac{{k}\gamma}{1+k}}
                    {\emptycoefficient}{0,1},\\[1mm]
    \label{Eq:MLDecisionErrorProbabilityForNoncoherentEqualEnergyOrthogonalSignallingOverAWLNChannelsB}
    &=&2\sum_{k=1}^{M-1}
            \frac{(-1)^{1+k}}
                {1+k}\Binomial{M-1}{k}
        \sqrt{\frac{{k}\gamma}{1+k}}
            K_{1}\Biggl(2
                \sqrt{\frac{{k}\gamma}{1+k}}\Biggr),{~~~~~}    
\end{eqnarray}
\end{subequations}
which is the conditional \ac{SER} of non-coherent signaling over complex \ac{AWLN} channels. Setting $M\!=\!2$ in \eqref{Eq:MLDecisionErrorProbabilityForNoncoherentEqualEnergyOrthogonalSignallingOverAWLNChannels} results in the error probability for binary orthogonal signaling, including binary orthogonal FSK, with non-coherent detection in complex \ac{AWLN} channels, that is 
\begin{equation}
    \Pr\bigl\{\bigl.e\,\bigr|\,H\bigr\}=\sqrt{{\gamma}/{2}}K_{1}\bigl(\sqrt{2\gamma}\bigr).
\end{equation}
When the normality factor $\nu$ gets larger (i.e., $\nu\!\rightarrow\!\infty$), the additive white noise turns into \ac{AWGN} noise, and accordingly utilizing \cite[Eqs. (8.2.2/12) and (8.4.3/1)]{BibPrudnikovBookVol3} within 
\begin{equation}
\lim_{\nu\rightarrow\infty}
    \frac{1}{\Gamma(\nu)}
    \MeijerG[right]{2,0}{0,2}{\frac{\nu{k}\gamma}{1+k}}{\emptycoefficient}{0,\nu}=
        \exp\biggl(-\frac{{k}\gamma}{1+k}\biggr),
\end{equation}
the symbol error probability \eqref{Eq:MLDecisionErrorProbabilityForNoncoherentEqualEnergyOrthogonalSignallingOverAWMNChannels} readily simplifies more to 
\begin{equation}
\label{Eq:MLDecisionErrorProbabilityForNoncoherentEqualEnergyOrthogonalSignallingOverAWGNChannels}
\!\!\!\!\Pr\bigl\{\bigl.e\,\bigr|\,H\bigr\}=
        \sum_{k=1}^{M-1}
            \frac{(-1)^{1+k}}{1+k}\Binomial{M-1}{k}
                \exp\biggl(-\frac{{k}\gamma}{1+k}\biggr),\!\!
\end{equation}
which is in perfect agreement with the~conditional~\ac{SER}~performance of non-coherent \ac{ML} detection of equal-power orthogonal symbols\cite[Eq. (8.67)]{BibAlouiniBook}, \cite[Eq. (4.5-43)]{BibProakisBook}. For binary orthogonal signaling, including binary orthogonal FSK with non-coherent detection over complex \ac{AWGN} channels, \eqref{Eq:MLDecisionErrorProbabilityForNoncoherentEqualEnergyOrthogonalSignallingOverAWGNChannels} reduces to \cite[Eq. (4.5-45)]{BibProakisBook}, \cite[Eq. (8.69)]{BibAlouiniBook}, that is
\begin{equation}
    \Pr\bigl\{\bigl.e\,\bigr|\,H\bigr\}=\frac{1}{2}\exp\biggl(-\frac{\gamma}{2}\biggr).
\end{equation}
For numerical accuracy and convenience,  in \figref{Figure:ConditionalSEPForNoncoherentOrthogonalSignaling}, which is given at the top of the next page, we give the conditional \ac{SER} of non-coherent orthogonal signaling over complex \ac{AWMN} channels.  

\paragraph{\!Conditional \ac{SER} of Non-coherent Differential PSK} 
\label{Section:SignallingOverAWMNChannels:NonCoherentSignalling:NonCoherentDifferentialPSK}
The other type of non-coherent signaling is the \ac{DPSK} (i.e., the differentially encoded PSK) in which the information is encoded within the phase transition between two consecutive symbols and its demodulation\,/\,detection does not require the estimation of the carrier phase. In accordance with the channel model given by \eqref{Eq:ComplexAWMNVectorChannel}, the two consecutive received signal vectors can be readily written as 
\begin{eqnarray}
\defrvec{R}_{1}&=&{H}{e}^{\imaginary\Theta}
            \defmat{F}\defrvec{S}_{1}+
                \defrvec{Z}_{1},\\
\defrvec{R}_{2}&=&{H}{e}^{\imaginary\Theta}
            \defmat{F}\defrvec{S}_{2}+
                \defrvec{Z}_{2},
\end{eqnarray}
where~$\defrmat{Z}_{1}\!\sim\!\mathcal{CM}^{L}_{\nu}(\defvec{0},\defmat{\Sigma})$~and~$\defrmat{Z}_{2}\!\sim\!\mathcal{CM}^{L}_{\nu}(\defvec{0},\defmat{\Sigma})$~are~uncorrelated but certainly not independent, and $\defrvec{S}_{1}$ and $\defrvec{S}_{2}$ are two consecutive symbols. Accordingly, the vector representation of the lowpass equivalent of the received signal over a period of two symbol intervals is formally written as
\begin{equation}
\label{Eq:ComplexAWMNVectorChannelForDifferentialPSK}
    \underbrace{
    \begin{bmatrix}
        \defrvec{R}_{1}\\
        \defrvec{R}_{2}
    \end{bmatrix}}_{\defrvec{R}_{s}}=
    {H}{e}^{\imaginary\Theta}
    \underbrace{
    \begin{bmatrix}
        \defmat{F} & \defmat{0}\\
        \defmat{0} & \defmat{F}
    \end{bmatrix}}_{\defmat{F}_{s}}
    \underbrace{
    \begin{bmatrix}
        \defrvec{S}_{1}\\
        \defrvec{S}_{2}
    \end{bmatrix}}_{\defrvec{S}}
    +
    \underbrace{
    \begin{bmatrix}
        \defrvec{Z}_{1}\\
        \defrvec{Z}_{2}
    \end{bmatrix}}_{\defrvec{Z}_{s}},
\end{equation}
where $\defrmat{Z}_{s}\!\sim\!\mathcal{CM}^{2L}_{\nu}(\defvec{0},\defmat{\Sigma}_{s})$ is a \ac{CES} multivariate McLeish distribution whose the covariance matrix can be readily obtained as 
\begin{equation}
    \defmat{\Sigma}_{s}=
    \begin{bmatrix}
        \defmat{\Sigma} & \defmat{0}\\
        \defmat{0} & \defmat{\Sigma}
    \end{bmatrix},
\end{equation}
since the inphase and quadrature vectors of $\defrmat{Z}_{s}$ are mutually uncorrelated.
Moreover, in \eqref{Eq:ComplexAWMNVectorChannelForDifferentialPSK}, $\defvec{S}$ denotes the modulation symbol vector randomly chosen from the set of possible fixed modulation symbols $\{\defvec{s}_{1},\allowbreak\defvec{s}_{2},\ldots,\defvec{s}_{M}\}$.
As such, the $m$th message over a period of two modulation symbols can be written as 
\begin{equation}
\label{Eq:DifferentialPSKMessage}
    \defvec{s}_{m}=
        \begin{bmatrix}
            \defvec{s}\exp(\imaginary\phi_{\Sigma})\\
            \defvec{s}\exp(\imaginary(\phi_{m}+\phi_{\Sigma}))
        \end{bmatrix},
        \quad{1}\!\leq\!{m}\!\leq\!{M}
\end{equation}
where $\defvec{s}$ is such a signal that the power of the $m$th message, i.e., $E_{m}\!=\!\defvec{s}^{H}_{m}\defvec{s}_{m}$ is derived as $E_{m}\!=\!2\defvec{s}^{H}\defvec{s}$. Accordingly, the average power of signaling $E_{\defrvec{S}}$ is given by
\begin{equation}
    E_{\defrvec{S}}=\sum_{m=1}^{M}E_{m}\Pr\{\defrvec{S}=\defvec{s}_{m}\}=2\defvec{s}^{H}\defvec{s}.
\end{equation}
Further, in \eqref{Eq:DifferentialPSKMessage}, $\phi_{\Sigma}$ is the random phase due to non-coherent detection, and $\phi_{m}\!=\!2\pi(m-1)/M$ is the phase transition that encodes the information into the $m$th message. Since the information is entirely encoded in the phase transition between two consecutive symbols, the detection has to be carried over a period of two consecutive symbols. Referring to the \emph{slow variance uncertainty}, explained in \secref{Section:AWMNChannels:RandomNoiseVarianceFluctuations}, \textit{the variance fluctuation during two consecutive symbols is therefore assumed approximately constant}. With respect to \eqref{Eq:ComplexAWMNVectorChannelForDifferentialPSK}, the non-coherent \ac{MAP} receiver is given in the following theorem. 

\begin{theorem}
\label{Theorem:NoncoherentMAPDecisionRuleForDifferentialPSK}
For the complex vector channel given in \eqref{Eq:ComplexAWMNVectorChannelForDifferentialPSK}, the non-coherent \ac{MAP} detection rule of \ac{DPSK} is given by 
\begin{equation}
\label{Eq:NoncoherentMAPDecisionRuleForDifferentialPSK}
\widehat{m}
    =\argmax_{{1}\leq{m}\leq{M}}\,{p}_{m}
        I_{0}\Bigl(H\bigl|
            \defvec{s}^H\defmat{F}^{H}\defmat{\Sigma}^{-1}\defmat{F}\defrvec{R}_{1}\Bigr.+
        \Bigl.\exp(-\imaginary\phi_{m})\defvec{s}^H\defmat{F}^{H}\defmat{\Sigma}^{-1}\defmat{F}\defrvec{R}_{2}
            \bigr|\Bigr).{~~~~}
\end{equation}
\end{theorem}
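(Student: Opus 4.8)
The plan is to reduce the DPSK detection problem to the general non-coherent MAP rule already established in \theoremref{Theorem:NoncoherentMAPDecisionRuleForComplexAWMNVectorChannel} by applying it to the stacked two-symbol channel \eqref{Eq:ComplexAWMNVectorChannelForDifferentialPSK}, and then exploiting the block-diagonal structure of $\defmat{F}_{s}$ and $\defmat{\Sigma}_{s}$ together with the cancellation of the unknown reference phase $\phi_{\Sigma}$.

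First I would observe that the stacked received vector $\defrvec{R}_{s}$, the stacked precoding filter $\defmat{F}_{s}$, and the stacked covariance $\defmat{\Sigma}_{s}$ together constitute exactly an instance of the complex vector channel in \eqref{Eq:ComplexAWMNVectorChannel} of dimension $2L$, with additive noise $\defrmat{Z}_{s}\!\sim\!\mathcal{CM}^{2L}_{\nu}(\defvec{0},\defmat{\Sigma}_{s})$. The only non-standard feature is that each candidate symbol $\defvec{s}_{m}$ carries the unknown absolute reference phase $\phi_{\Sigma}$ in addition to the channel phase $\Theta$. Since $e^{\imaginary\Theta}\defvec{s}_{m}$ depends on $\Theta$ and $\phi_{\Sigma}$ only through the sum $\Theta+\phi_{\Sigma}$, and since $\Theta$ is assumed uniformly distributed over $[0,2\pi)$, the combined phase $\Theta+\phi_{\Sigma}$ is again uniform. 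I would therefore absorb $\phi_{\Sigma}$ into $\Theta$ and work with the effective symbol $[\defvec{s}^{T},\,(\defvec{s}e^{\imaginary\phi_{m}})^{T}]^{T}$; this is precisely the step that encodes the non-coherent nature of DPSK detection.

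Applying \theoremref{Theorem:NoncoherentMAPDecisionRuleForComplexAWMNVectorChannel} to this effective system, the MAP rule is expressed through the exponent $\tfrac{1}{2}H^{2}\defvec{s}_{m}^{H}\defmat{F}_{s}^{H}\defmat{\Sigma}_{s}^{-1}\defmat{F}_{s}\defvec{s}_{m}$ and the Bessel argument $H|\defvec{s}_{m}^{H}\defmat{F}_{s}^{H}\defmat{\Sigma}_{s}^{-1}\defmat{F}_{s}\defrvec{R}_{s}|$. The block-diagonality of $\defmat{F}_{s}$ and $\defmat{\Sigma}_{s}$ makes $\defmat{F}_{s}^{H}\defmat{\Sigma}_{s}^{-1}\defmat{F}_{s}$ block diagonal with two identical blocks $\defmat{F}^{H}\defmat{\Sigma}^{-1}\defmat{F}$, so the quadratic form collapses to $2\defvec{s}^{H}\defmat{F}^{H}\defmat{\Sigma}^{-1}\defmat{F}\defvec{s}$, which is independent of $m$ and whose exponential factor can therefore be discarded from the maximization. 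The linear form splits into the sum of the two symbol-interval contributions, and the unit-modulus factor arising from $\phi_{\Sigma}$ pulls out of the magnitude, leaving $|\defvec{s}^{H}\defmat{F}^{H}\defmat{\Sigma}^{-1}\defmat{F}\defrvec{R}_{1}+e^{-\imaginary\phi_{m}}\defvec{s}^{H}\defmat{F}^{H}\defmat{\Sigma}^{-1}\defmat{F}\defrvec{R}_{2}|$, which is exactly the argument appearing in \eqref{Eq:NoncoherentMAPDecisionRuleForDifferentialPSK}.

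The main obstacle is conceptual rather than computational: I must carefully justify that the reference phase $\phi_{\Sigma}$ can be merged with the channel phase $\Theta$ before invoking the phase-averaging performed inside \theoremref{Theorem:NoncoherentMAPDecisionRuleForComplexAWMNVectorChannel}, and that the slow-variance-uncertainty assumption genuinely holds across the two-symbol window, so that the impulsive scaling $G$ is common to both intervals and $\defrmat{Z}_{s}$ is a single $2L$-variate \ac{CES} McLeish vector rather than two independent draws. Once these two points are secured, the remaining block-matrix algebra and the phase cancellation are routine.
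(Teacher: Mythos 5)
Your proposal is correct and follows essentially the same route as the paper's own proof: apply \theoremref{Theorem:NoncoherentMAPDecisionRuleForComplexAWMNVectorChannel} to the stacked $2L$-dimensional channel \eqref{Eq:ComplexAWMNVectorChannelForDifferentialPSK}, use the block-diagonality of $\defmat{F}_{s}$ and $\defmat{\Sigma}_{s}$ to collapse the quadratic form into an $m$-independent exponential factor that is then discarded, split the linear form across the two symbol intervals, and remove $\phi_{\Sigma}$ via $\abs{{e}^{-\imaginary\phi_{\Sigma}}x}=\abs{x}$. The only cosmetic difference is that the paper keeps $\phi_{\Sigma}$ explicit until the final modulus cancellation rather than pre-absorbing it into the uniform channel phase $\Theta$, but both justifications are equivalent.
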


\begin{proof}
Note that the \ac{MAP} detection of \ac{DPSK} uses \eqref{Eq:NoncoherentMAPDecisionRuleForComplexAWMNVectorChannel} for optimal detection. Accordingly, we have
\begin{equation}
\!\!\widehat{m}=\argmax_{{1}\leq{m}\leq{M}}\,
        {p}_{m}
        \exp\Bigl(\frac{1}{2}{H}^{2}\defvec{s}_{m}^H\defmat{F}_{s}^{H}\defmat{\Sigma}_{s}^{-1}\defmat{F}_{s}\defvec{s}_{m}\Bigr)
        \BesselI[0]{H\bigl|\defvec{s}_{m}^H\defmat{F}_{s}^{H}\defmat{\Sigma}_{s}^{-1}\defmat{F}_{s}\defrvec{R}_{s}\bigr|},\!\!
\end{equation}
which can be rewritten in terms of $\defrvec{R}_{1}$, $\defrvec{R}_{2}$, $\defmat{F}$, and $\defmat{\Sigma}$, that is
\begin{eqnarray}
\nonumber
\widehat{m}&=&\argmax_{{1}\leq{m}\leq{M}}\,
        {p}_{m}
        \exp\Bigl({H}^{2}\defvec{s}^H\defmat{F}^{H}\defmat{\Sigma}^{-1}\defmat{F}\defvec{s}\Bigr)\times\\[-2mm]
        \nonumber
        &~&{~~~~~~~~~~}I_{0}\Bigl(H\bigl|
            \exp(-\imaginary\phi_{\Sigma})\defvec{s}^H\defmat{F}^{H}\defmat{\Sigma}^{-1}\defmat{F}\defrvec{R}_{1}\Bigr.+\\[-2mm]
        \label{Eq:NoncoherentMAPDecisionRuleForDifferentialPSKII}
        &~&{~~~~~~~~~~~~}    
        \Bigl.\exp(-\imaginary(\phi_{\Sigma}+\phi_{m}))\defvec{s}^H\defmat{F}^{H}\defmat{\Sigma}^{-1}\defmat{F}\defrvec{R}_{2}
            \bigr|\Bigr),{~~~~~~~}
\end{eqnarray}
where $\exp(-\imaginary\phi_{\Sigma})$ can be ignored due to $\abs{e^{-\imaginary\phi_{\Sigma}}x}\!=\!\abs{x}$. In addition, since the term $\exp\bigl({H}^{2}\defvec{s}^H\defmat{F}^{H}\defmat{\Sigma}^{-1}\defmat{F}\defvec{s}\bigr)$ in \eqref{Eq:NoncoherentMAPDecisionRuleForDifferentialPSKII} is independent of index $m$, we can readily ignore it, which results in \eqref{Eq:NoncoherentMAPDecisionRuleForDifferentialPSK} and completes the proof of \theoremref{Theorem:NoncoherentMAPDecisionRuleForDifferentialPSK}.
\end{proof}

\begin{theorem}
\label{Theorem:NoncoherentMLDecisionRuleForDifferentialPSK}
For the complex vector channel given in \eqref{Eq:ComplexAWMNVectorChannelForDifferentialPSK}, the non-coherent \ac{ML} detection rule of \ac{DPSK} is given by 
\begin{equation}
\label{Eq:NoncoherentMLDecisionRuleForDifferentialPSK}
\widehat{m}
    =\argmax_{{1}\leq{m}\leq{M}}\,
        I_{0}\Bigl(H\bigl|
            \defvec{s}^H\defmat{F}^{H}\defmat{\Sigma}^{-1}\defmat{F}\defrvec{R}_{1}\Bigr.+
        \Bigl.\exp(-\imaginary\phi_{m})\defvec{s}^H\defmat{F}^{H}\defmat{\Sigma}^{-1}\defmat{F}\defrvec{R}_{2}
            \bigr|\Bigr).{~~~~}
\end{equation}
\end{theorem}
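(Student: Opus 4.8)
The plan is to obtain this non-coherent \ac{ML} rule as the equiprobable specialization of the \ac{MAP} rule already established in \theoremref{Theorem:NoncoherentMAPDecisionRuleForDifferentialPSK}, exactly as every other \ac{ML} decision rule in this section is deduced from its \ac{MAP} counterpart. First I would invoke the \ac{MAP} detector \eqref{Eq:NoncoherentMAPDecisionRuleForDifferentialPSK} for the two-symbol \ac{DPSK} channel model \eqref{Eq:ComplexAWMNVectorChannelForDifferentialPSK}, which maximizes the product of the prior ${p}_{m}$ and a Bessel term $\BesselI[0]{\cdot}$ whose argument couples the two consecutive received vectors $\defrvec{R}_{1}$ and $\defrvec{R}_{2}$ through the phase-transition encoding.

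The \ac{ML} criterion corresponds to the standard assumption that all $M$ messages are transmitted with equal a priori probability, i.e., ${p}_{m}\!=\!{1}/{M}$ for all ${1}\!\leq\!{m}\!\leq\!{M}$. Under this substitution the prior factor ${p}_{m}$ collapses to the constant ${1}/{M}$, which is independent of the maximization index $m$ and therefore cannot shift the location of the maximum; discarding it from the $\argmax$ delivers \eqref{Eq:NoncoherentMLDecisionRuleForDifferentialPSK} directly. I would also note that the energy-dependent prefactor appearing in the general non-coherent rule \eqref{Eq:NoncoherentMAPDecisionRuleForComplexAWMNVectorChannel} was already removed during the proof of \theoremref{Theorem:NoncoherentMAPDecisionRuleForDifferentialPSK}, since the construction \eqref{Eq:DifferentialPSKMessage} forces every message to carry the common energy $E_{m}\!=\!2\defvec{s}^{H}\defvec{s}$; hence no residual $m$-dependent scaling survives apart from the prior, and the equiprobable reduction is clean.

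The only point requiring a brief check — and the closest thing to an obstacle in an otherwise immediate argument — is confirming that the surviving $\BesselI[0]{\cdot}$ term is genuinely a nondegenerate function of $m$, so that the detector is not vacuous. This dependence enters solely through the phase rotation $\exp(-\imaginary\phi_{m})$ with $\phi_{m}\!=\!2\pi(m-1)/M$, which cannot be factored out of the modulus because it weights only the second correlation and not the first. Since $\BesselI[0]{\cdot}$ is monotonically increasing on $\mathbb{R}_{+}$, one could equivalently state the detector in terms of the inner modulus alone; I would nevertheless retain the compact $\BesselI[0]{\cdot}$ form to parallel the \ac{MAP} statement, after which the proof is complete.
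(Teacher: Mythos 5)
Your proposal is correct and follows exactly the paper's route: the paper proves \theoremref{Theorem:NoncoherentMLDecisionRuleForDifferentialPSK} simply by invoking \theoremref{Theorem:NoncoherentMAPDecisionRuleForDifferentialPSK} under equiprobable priors, which is precisely your reduction of setting ${p}_{m}={1}/{M}$ and discarding the constant from the $\argmax$. Your additional checks (that the common message energy $E_{m}=2\defvec{s}^{H}\defvec{s}$ leaves no residual $m$-dependent scaling, and that the phase rotation $\exp(-\imaginary\phi_{m})$ keeps the Bessel argument genuinely $m$-dependent) are sound but go beyond what the paper bothers to spell out.
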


\begin{proof}
The proof is obvious using \theoremref{Theorem:NoncoherentMAPDecisionRuleForDifferentialPSK}.
\end{proof}

In order to avoid non-zero cross correlation between channels, we can equalize the channel by the precoding filter matrix $\defmat{F}_{s}$ whose diagonal matrix $\defmat{F}\!\in\!\mathbb{C}^{{2L}\times{2L}}$ supports $\defmat{\Sigma}\!=\!\frac{N_0}{2}\defmat{F}\defmat{F}^H$ for optimal reception, and then we can obtain  
\begin{subequations}
\label{Eq:NoncoherentComplexAWMNVectorChannelForDifferentialPSK}
\begin{eqnarray}
    \label{Eq:NoncoherentComplexAWMNVectorChannelForDifferentialPSKA}
    \defrvec{R}_{nc}
        &=&\defmat{F}_{s}^{-1}\defrvec{R}_{s},\\
    \label{Eq:NoncoherentComplexAWMNVectorChannelForDifferentialPSKB}
    &=&\defmat{F}_{s}^{-1}
            \bigl({H}{e}^{\imaginary\Theta}\defmat{F}_{s}\defvec{S}+\defrmat{Z}_{s}\bigr),\\
    \label{Eq:NoncoherentComplexAWMNVectorChannelForDifferentialPSKC}
    &\equiv&H{e}^{\imaginary\Theta}\defvec{S}+\defmat{F}_{s}^{-1}\defrmat{Z}_{s},\\
    \label{Eq:NoncoherentComplexAWMNVectorChannelForDifferentialPSKD}
    &=&H{e}^{\imaginary\Theta}\defvec{S}+\defrvec{Z}_{nc}.
\end{eqnarray}
\end{subequations}
where  $\defrvec{R}_{nc}\!=\![\defrvec{R}^{T}_{1,nc}\,\,\defrvec{R}^{T}_{2,nc}]^{T}$ is the received random vector in which 
$\defrvec{R}_{1,nc}\!=\!\defmat{F}^{-1}\defrvec{R}_{1}$ and $\defrvec{R}_{2,nc}\!=\!\defmat{F}^{-1}\defrvec{R}_{2}$~are~two~random vectors non-coherently recovered over a period of two modulation symbols. Moreover,  $\defrvec{Z}_{nc}\!\sim\!\mathcal{CM}_{\nu}^{2L}(\defmat{0},\frac{N_{0}}{2}\defmat{I})$ such that $\defrvec{Z}_{nc}\!=\![\defrvec{Z}^{T}_{1,nc} \defrvec{Z}^{T}_{2,nc}]^{T}$, where $\defrvec{Z}_{1,nc}\!\sim\!\mathcal{CM}_{\nu}^{L}(\defmat{0},\frac{N_{0}}{2}\defmat{I})$ and 
$\defrvec{Z}_{2,nc}\!\sim\!\mathcal{CM}_{\nu}^{L}(\defmat{0},\frac{N_{0}}{2}\defmat{I})$. 
Consequently, the non-coherent \ac{MAP} receiver is given in the following theorem. 

\begin{theorem}
\label{Theorem:NoncoherentUncorrelatedMAPDecisionRuleForDifferentialPSK}
For complex uncorrelated \ac{AWMN} vector channels, defined in \eqref{Eq:NoncoherentComplexAWMNVectorChannelForDifferentialPSK}, the non-coherent~\ac{MAP}~rule~is~given~by
\begin{equation}\label{Eq:NoncoherentUncorrelatedMAPDecisionRuleForDifferentialPSK}
    \widehat{m}=
        \argmax_{{1}\leq{m}\leq{M}}\,
            {p}_{m}\cos(\Phi-\phi_{m}),
\end{equation}
where the decision variable $\Phi$ is defined as the phase difference of the received signal in two adjacent intervals, that is
\begin{equation}\label{Eq:NoncoherentUncorrelatedMAPDecisionVariableForDifferentialPSK}
 \Phi=
    \arg\bigl(\defvec{s}^H\defrvec{R}_{2,nc}\bigr)-
        \arg\bigl(\defvec{s}^H\defrvec{R}_{1,nc}\bigr),
\end{equation}
where $\arg(z)$ gives the argument of the complex number $z$\emph{\cite[Eq. (12.02.02.0001.01)]{BibWolfram2010Book}}.
\end{theorem}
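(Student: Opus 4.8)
The plan is to start from the general non-coherent \ac{MAP} rule for \ac{DPSK} established in \theoremref{Theorem:NoncoherentMAPDecisionRuleForDifferentialPSK}, namely \eqref{Eq:NoncoherentMAPDecisionRuleForDifferentialPSK}, and specialize it to the equalized channel \eqref{Eq:NoncoherentComplexAWMNVectorChannelForDifferentialPSK}. First I would invoke the precoding condition $\defmat{\Sigma}\!=\!\frac{N_{0}}{2}\defmat{F}\defmat{F}^{H}$, which gives $\defmat{F}^{H}\defmat{\Sigma}^{-1}\defmat{F}\!=\!\frac{2}{N_{0}}\defmat{I}$ so that every quadratic form $\defvec{s}^{H}\defmat{F}^{H}\defmat{\Sigma}^{-1}\defmat{F}\defrvec{R}_{i}$ collapses to $\frac{2}{N_{0}}\defvec{s}^{H}\defrvec{R}_{i,nc}$ with $\defrvec{R}_{i,nc}\!=\!\defmat{F}^{-1}\defrvec{R}_{i}$. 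Substituting this into \eqref{Eq:NoncoherentMAPDecisionRuleForDifferentialPSK} reduces the detector to
\[
\widehat{m}=\argmax_{{1}\leq{m}\leq{M}}\,{p}_{m}\,\BesselI[0]{\tfrac{2H}{N_{0}}\bigl|\defvec{s}^{H}\defrvec{R}_{1,nc}+{e}^{-\imaginary\phi_{m}}\defvec{s}^{H}\defrvec{R}_{2,nc}\bigr|},
\]
where the common positive scale $2H/N_{0}$ and the $m$-independent energy term have been discarded, the latter because \ac{DPSK} messages are equi-energy (exactly as exploited in \theoremref{Theorem:NoncoherentMAPDecisionRuleForDifferentialPSK}).

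The second step isolates the only $m$-dependent quantity inside the modulus. Writing the two scalar projections in polar form as $\defvec{s}^{H}\defrvec{R}_{1,nc}\!=\!A_{1}{e}^{\imaginary\alpha_{1}}$ and $\defvec{s}^{H}\defrvec{R}_{2,nc}\!=\!A_{2}{e}^{\imaginary\alpha_{2}}$ with $A_{1},A_{2}\!\geq\!0$, the squared modulus expands as
\[
\bigl|\defvec{s}^{H}\defrvec{R}_{1,nc}+{e}^{-\imaginary\phi_{m}}\defvec{s}^{H}\defrvec{R}_{2,nc}\bigr|^{2}=A_{1}^{2}+A_{2}^{2}+2A_{1}A_{2}\cos(\Phi-\phi_{m}),
\]
in which $\Phi\!=\!\alpha_{2}-\alpha_{1}\!=\!\arg(\defvec{s}^{H}\defrvec{R}_{2,nc})-\arg(\defvec{s}^{H}\defrvec{R}_{1,nc})$ is precisely the decision variable \eqref{Eq:NoncoherentUncorrelatedMAPDecisionVariableForDifferentialPSK}. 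Since $A_{1}^{2}+A_{2}^{2}$ and $A_{1}A_{2}$ are independent of $m$, the argument of $\BesselI[0]{\cdot}$ is a strictly increasing function of $\cos(\Phi-\phi_{m})$; because both the square root and $\BesselI[0]{\cdot}$ are monotonically increasing on $\mathbb{R}_{+}$, the whole likelihood is monotone in $\cos(\Phi-\phi_{m})$. Replacing this monotone statistic by $\cos(\Phi-\phi_{m})$ then produces the compact rule \eqref{Eq:NoncoherentUncorrelatedMAPDecisionRuleForDifferentialPSK}.

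The main obstacle will be the coupling between the prior weights ${p}_{m}$ and the nonlinearity $\BesselI[0]{\cdot}$. The last substitution is cleanly exact when the ${p}_{m}$ are equal: the weight then factors out and one maximizes a single monotone function of $\cos(\Phi-\phi_{m})$, recovering the classical minimum-phase-distance detector $\widehat{m}\!=\!\argmax_{m}\cos(\Phi-\phi_{m})$. For genuinely unequal priors I would argue, in the same spirit as the other \ac{MAP}\,/\,\ac{ML} reductions of this section, that since $A_{1}A_{2}\!>\!0$ almost surely and the $m$-dependence enters solely through $\cos(\Phi-\phi_{m})$, the ordering of the weighted likelihoods is controlled by the product of ${p}_{m}$ with this monotone phase statistic, so that \eqref{Eq:NoncoherentUncorrelatedMAPDecisionRuleForDifferentialPSK} faithfully captures the decision structure. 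I expect the care to lie in stating this monotonicity argument precisely rather than in any computation.

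Finally, I would close the argument with a consistency check: setting ${p}_{m}\!=\!1/M$ in the compact rule must coincide with the equalized form of the \ac{ML} detector \theoremref{Theorem:NoncoherentMLDecisionRuleForDifferentialPSK}, since both then collapse to $\argmax_{m}\cos(\Phi-\phi_{m})$; this confirms that the reduction preserves the ordering and completes the proof of \theoremref{Theorem:NoncoherentUncorrelatedMAPDecisionRuleForDifferentialPSK}.
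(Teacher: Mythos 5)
Your proposal follows essentially the same route as the paper's proof: specialize the general non-coherent DPSK rule to the equalized uncorrelated channel (via $\defmat{\Sigma}=\tfrac{N_{0}}{2}\defmat{F}\defmat{F}^{H}$), discard the $m$-independent scale and energy terms, expand the squared modulus to expose $\cos(\Phi-\phi_{m})$, and invoke monotonicity of $\BesselI[0]{\cdot}$ to collapse the likelihood to the compact phase rule. The only difference is that you explicitly flag the coupling of the priors $p_{m}$ with the nonlinearity — a point the paper's own proof glosses over by carrying $p_{m}$ as a linear weight through each monotone reduction (from $\BesselI[0]{\cdot}$ to $\lvert\cdot\rvert^{2}$ to $\cos(\Phi-\phi_{m})$), which is exact only for equal priors — so your treatment is, if anything, more careful than the paper's.
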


\begin{proof}
Using \theoremref{Theorem:NoncoherentMLDecisionRuleForDifferentialPSK} and then selecting both $\defmat{\Sigma}\!=\!\frac{{N}_{0}}{2}\defmat{I}$ and $\defmat{F}\!=\!\defmat{I}$, we have  
\begin{equation}\label{Eq:NoncoherentMLDecisionRuleForAWMNChannelDifferentialPSK}
\!\!\!\!\widehat{m}
    =\argmax_{{1}\leq{m}\leq{M}}\,
        {p}_{m}
        I_{0}\biggl(
            \scalemath{0.95}{0.95}{
            \frac{2H}{N_{0}}\Bigl|
                \defvec{s}^H\defrvec{R}_{1,nc}+
            {e}^{-\imaginary\phi_{m}}\,\defvec{s}^H\defrvec{R}_{2,nc}
                        \Bigr|}\biggr).\!\!
\end{equation}
Noticing that $I_{0}(x)$ is a monotonic increasing function for all $x\!\in\!\mathbb{R}_{+}$, we have $\argmax_{x}\,I_{0}\bigl(f(x)\bigr)=\argmax_{x}\,{f^{2}(x)}$,
for any monotonic increasing function $f\colon\mathbb{R}\!\rightarrow\!\mathbb{R}$. In consequence, ignoring $2{H}/N_{0}$, we can reduce \eqref{Eq:NoncoherentMLDecisionRuleForAWMNChannelDifferentialPSK}, that is 
\begin{equation}
\label{Eq:NoncoherentMLDecisionRuleForAWMNChannelDifferentialPSKII}
    \widehat{m}=\argmax_{{1}\leq{m}\leq{M}}\,
        {p}_{m}\bigl|
            \defvec{s}^H\defrvec{R}_{1,nc}+
                \exp(-\imaginary\phi_{m})\defvec{s}^H\defrvec{R}_{2,nc}
                        \bigr|^{2}.
\end{equation}
Using $|x+y|^2\!=\!|x|^2+|y|^2+2\RealPart{x^{*}y}$ and noticing that both $|\defvec{s}^H\defrvec{R}_{1,nc}|^2$ and $|\defvec{s}^H\defrvec{R}_{2,nc}|^2$ are independent of index $m$, we can simplify \eqref{Eq:NoncoherentMLDecisionRuleForAWMNChannelDifferentialPSKII} into
\begin{subequations}\label{Eq:NoncoherentMLDecisionRuleForAWMNChannelDifferentialPSKIII}
\begin{eqnarray}
\label{Eq:NoncoherentMLDecisionRuleForAWMNChannelDifferentialPSKIIIA}
\widehat{m}&=&\argmax_{{1}\leq{m}\leq{M}}\,
        {p}_{m}
            \Re\Bigl\{
                {C}_1^{*}{C}_2
                    \exp(-\imaginary\phi_{m})\Bigr\}.\\
    \label{Eq:NoncoherentMLDecisionRuleForAWMNChannelDifferentialPSKIIIB}
    &=&\argmax_{{1}\leq{m}\leq{M}}\,
        {p}_{m}
            \Re\Bigl\{
                \abs{{C}_1}\exp(-\imaginary\arg({C}_1))
                \abs{{C}_2}\exp(\imaginary\arg({C}_2))
                    \exp(-\imaginary\phi_{m})\Bigr\},{~~~~~~~}
\end{eqnarray}
\end{subequations}
where ${C}_1\!=\!\defvec{s}^H\defrvec{R}_{1,nc}$ and ${C}_2\!=\!\defvec{s}^H\defrvec{R}_{2,nc}$ are two complex envelopes recovered from two consecutive symbols, respectively, such that ${C}_1\!\sim\!\mathcal{CM}_{\nu}(H{e}^{\imaginary\phi_{\Sigma}},{E_{\defrvec{S}}N_{0}}/{4})$ and ${C}_2\!\sim\!\mathcal{CM}_{\nu}(H{e}^{\imaginary(\phi_{\Sigma}+\phi_{m})},\allowbreak{}{E_{\defrvec{S}}N_{0}}/{4})$.
Further, $\arg(z)$ is the argument of the complex number $z$, such that $z\!=\!\abs{z}{e}^{\imaginary\arg(z)}$\cite[Eq. (12.02.16.0029.01)]{BibWolfram2010Book}.
In addition, worth noting that $\abs{{C}_1}$ and $\abs{{C}_2}$ are independent of index $m$. Accordingly, \eqref{Eq:NoncoherentMLDecisionRuleForAWMNChannelDifferentialPSKIIIB} is reformulated as 
\begin{subequations}\label{Eq:NoncoherentMLDecisionRuleForAWMNChannelDifferentialPSKIV}
\begin{eqnarray}
    \label{Eq:NoncoherentMLDecisionRuleForAWMNChannelDifferentialPSKIVA}
    \widehat{m}
    &=&\argmax_{{1}\leq{m}\leq{M}}\,
        {p}_{m}\Re\Bigl\{
            {e}^{\imaginary(\arg({C}_2)-\arg({C}_1)-\phi_{m})}\Bigr\},\\
    \label{Eq:NoncoherentMLDecisionRuleForAWMNChannelDifferentialPSKIVB}
    &=&\argmax_{{1}\leq{m}\leq{M}}\,
        {p}_{m}\Re\Bigl\{
            {e}^{\imaginary(\Phi-\phi_{m})}\Bigr\},
\end{eqnarray}
\end{subequations}
where $\Phi$ denotes the phase difference of the received signal in two adjacent
intervals, simply defined as $\Phi\!=\!\arg({C}_{2}{C}^{*}_{1})\!=\!\arg({C}_{2})-\arg({C}_{1})$ and given by \eqref{Eq:NoncoherentUncorrelatedMAPDecisionVariableForDifferentialPSK}. Using Euler's formula\cite[Eq. (4.3.2)]{BibAbramowitzStegunBook}, \eqref{Eq:NoncoherentMLDecisionRuleForAWMNChannelDifferentialPSKIIIB} readily simplifies to \eqref{Eq:NoncoherentMAPDecisionRuleForDifferentialPSK}, which completes the proof of  \theoremref{Theorem:NoncoherentUncorrelatedMAPDecisionRuleForDifferentialPSK}.
\end{proof}

\begin{theorem}
\label{Theorem:NoncoherentUncorrelatedMLDecisionRuleForDifferentialPSK}
For complex uncorrelated \ac{AWMN} vector channels, defined in \eqref{Eq:NoncoherentComplexAWMNVectorChannelForDifferentialPSK}, the non-coherent~\ac{ML}~rule~is~given~by
\begin{equation}\label{Eq:NoncoherentUncorrelatedMLDecisionRuleForDifferentialPSK}
    \widehat{m}=
        \argmax_{{1}\leq{m}\leq{M}}\,
            \cos(\Phi-\phi_{m}).
\end{equation}
\end{theorem}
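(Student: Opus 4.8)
The plan is to obtain \eqref{Eq:NoncoherentUncorrelatedMLDecisionRuleForDifferentialPSK} as the equiprobable specialization of the non-coherent \ac{MAP} rule already established in \theoremref{Theorem:NoncoherentUncorrelatedMAPDecisionRuleForDifferentialPSK}. By definition, the \ac{ML} rule is the \ac{MAP} rule under the hypothesis that all $M$ messages are transmitted with equal \emph{a priori} probability. First I would set $p_{m}\!=\!1/M$ for all ${1}\!\leq\!{m}\!\leq\!{M}$ in \eqref{Eq:NoncoherentUncorrelatedMAPDecisionRuleForDifferentialPSK}, so that the quantity to be maximized over $m$ becomes $\frac{1}{M}\cos(\Phi-\phi_{m})$, where $\Phi$ is the phase-difference decision variable defined in \eqref{Eq:NoncoherentUncorrelatedMAPDecisionVariableForDifferentialPSK} and $\phi_{m}$ is the encoded phase transition of the $m$th message.

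Next I would observe that ${1}/{M}$ is a strictly positive constant independent of the message index $m$, and that rescaling the objective of an $\argmax$ by such a constant leaves the maximizer unchanged. Dropping this factor yields $\cos(\Phi-\phi_{m})$ as the quantity maximized over $m$, which is exactly \eqref{Eq:NoncoherentUncorrelatedMLDecisionRuleForDifferentialPSK}. This mirrors the pattern used throughout the non-coherent analysis, where each \ac{ML} rule is read off from its \ac{MAP} counterpart by discarding the $m$-independent prior weight.

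I expect the verification to be immediate, because the genuine analytical content has already been discharged in the proof of \theoremref{Theorem:NoncoherentUncorrelatedMAPDecisionRuleForDifferentialPSK}: the reduction of the modified-Bessel-function objective $I_{0}(\cdot)$ of \theoremref{Theorem:NoncoherentMLDecisionRuleForDifferentialPSK} to the phase-difference form (via the monotonicity of $I_{0}$, the $\abs{x+y}^2$ expansion, the discarding of the $m$-independent envelope magnitudes $\abs{C_{1}}$ and $\abs{C_{2}}$, and Euler's formula) is carried out there. Consequently the only residual obstacle, if any, is bookkeeping: confirming that equiprobability is the sole modification relative to the \ac{MAP} statement and that no $m$-dependent term is inadvertently absorbed into the constant $1/M$. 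Since $\Phi$ and $\phi_{m}$ retain their definitions unchanged under this specialization, that check is trivial and the proof concludes.
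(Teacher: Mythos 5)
Your proposal is correct and follows essentially the same route as the paper: the paper's own proof simply invokes \theoremref{Theorem:NoncoherentUncorrelatedMAPDecisionRuleForDifferentialPSK} and treats the equiprobable specialization as obvious, which is exactly the step you carry out by setting $p_{m}=1/M$ and discarding the $m$-independent positive factor. Your version merely makes this reduction explicit, which is fine.
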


\begin{proof}
The proof is obvious using \theoremref{Theorem:NoncoherentUncorrelatedMAPDecisionRuleForDifferentialPSK}.
\end{proof}

As observed in both \theoremref{Theorem:NoncoherentUncorrelatedMAPDecisionRuleForDifferentialPSK} and \theoremref{Theorem:NoncoherentMLDecisionRuleForDifferentialPSK}, the receiver computes this phase difference $\Phi$ by using \eqref{Eq:NoncoherentUncorrelatedMAPDecisionVariableForDifferentialPSK} and compares it with all $\phi_{m}\!=\!2\pi(m-1)/M$, ${1}\!\leq\!{m}\!\leq\!{M}$ and selects the $m$ for which $\phi_{m}$ maximizes $\cos(\Phi-\phi_{m})$, thus resulting in minimum distance between $\Phi$ and $\phi_{m}$. In the following, we obtain the exact error probability of \ac{M-DPSK} signaling with non-coherent detection over complex \ac{AWMN} noise channels. 

\begin{figure*}[tp] 
\centering
\begin{subfigure}{0.47\columnwidth}
    \centering
    \includegraphics[clip=true, trim=0mm 0mm 0mm 0mm, width=1.0\columnwidth,height=0.85\columnwidth]{./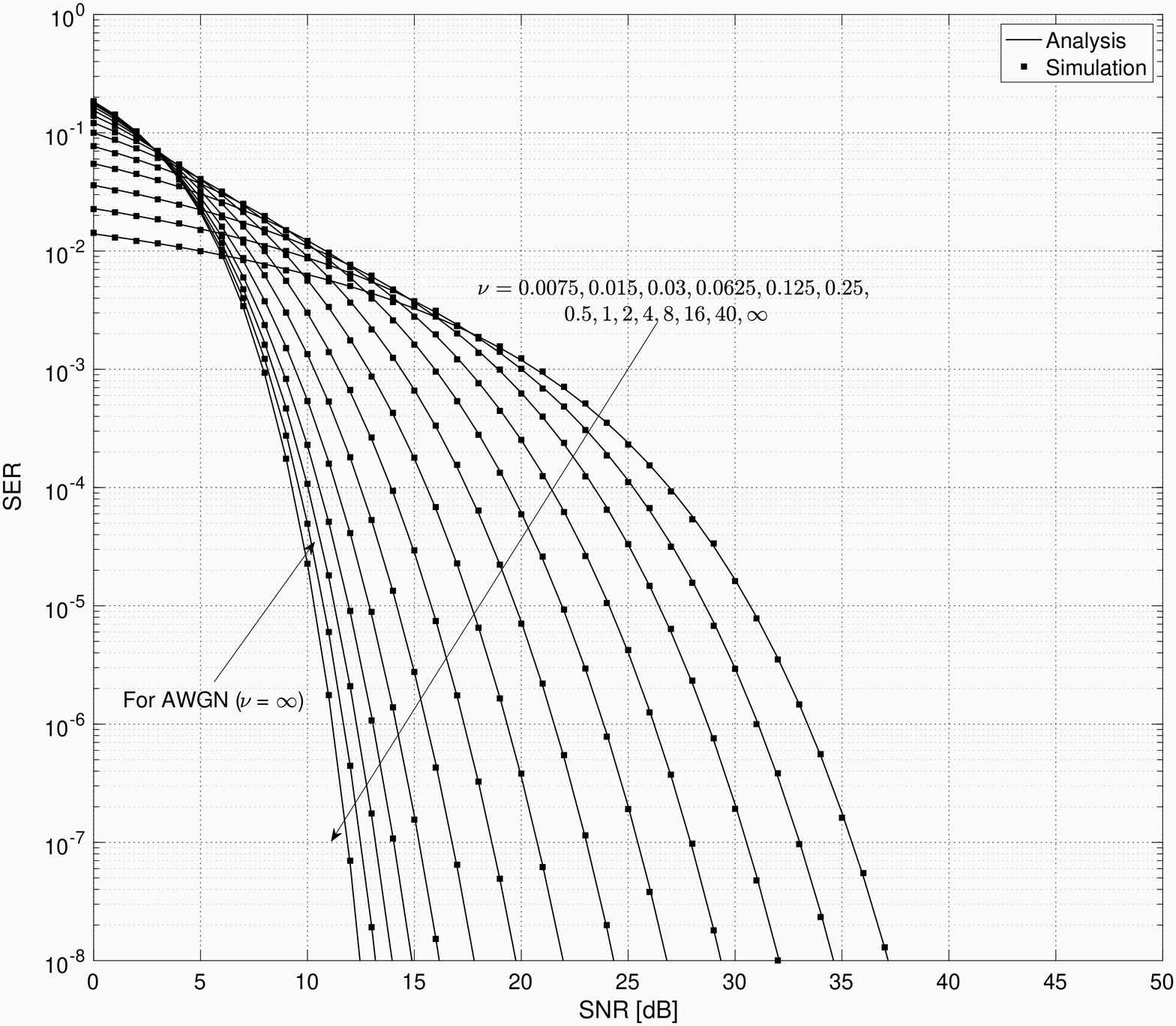}
    \caption{Modulation level $M=2$.}
    \vspace{5mm}
    \label{Figure:ConditionalSEPForNoncoherentMDPSKModulationA}
\end{subfigure}
{~~~}
\begin{subfigure}{0.47\columnwidth}
    \centering
    \includegraphics[clip=true, trim=0mm 0mm 0mm 0mm, width=1.0\columnwidth,height=0.85\columnwidth]{./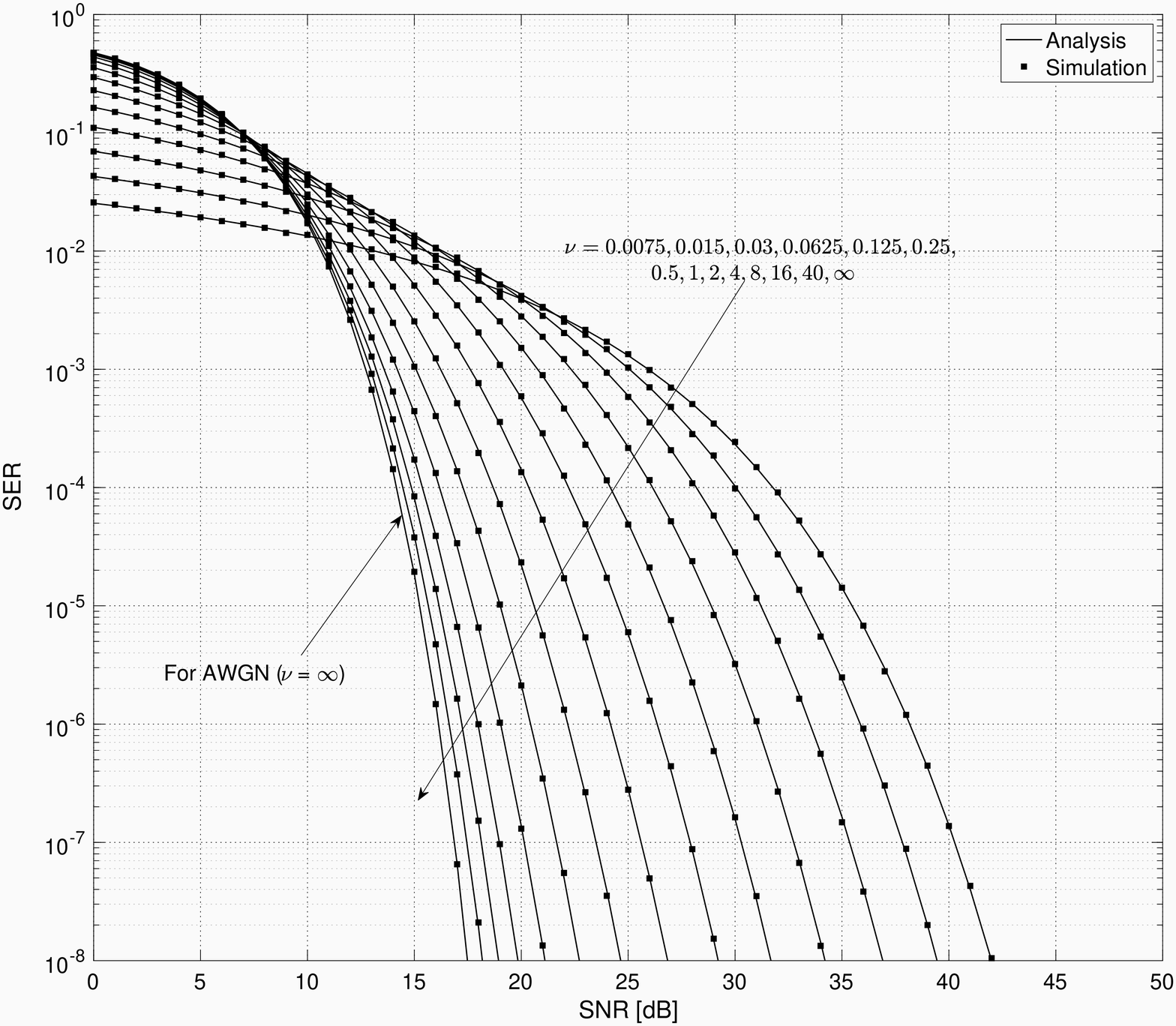}
    \caption{Modulation level $M=4$.}
    \vspace{5mm}
    \label{Figure:ConditionalSEPForNoncoherentMDPSKModulationB}
\end{subfigure}\\
\begin{subfigure}{0.47\columnwidth}
    \centering
    \includegraphics[clip=true, trim=0mm 0mm 0mm 0mm, width=1.0\columnwidth,height=0.85\columnwidth]{./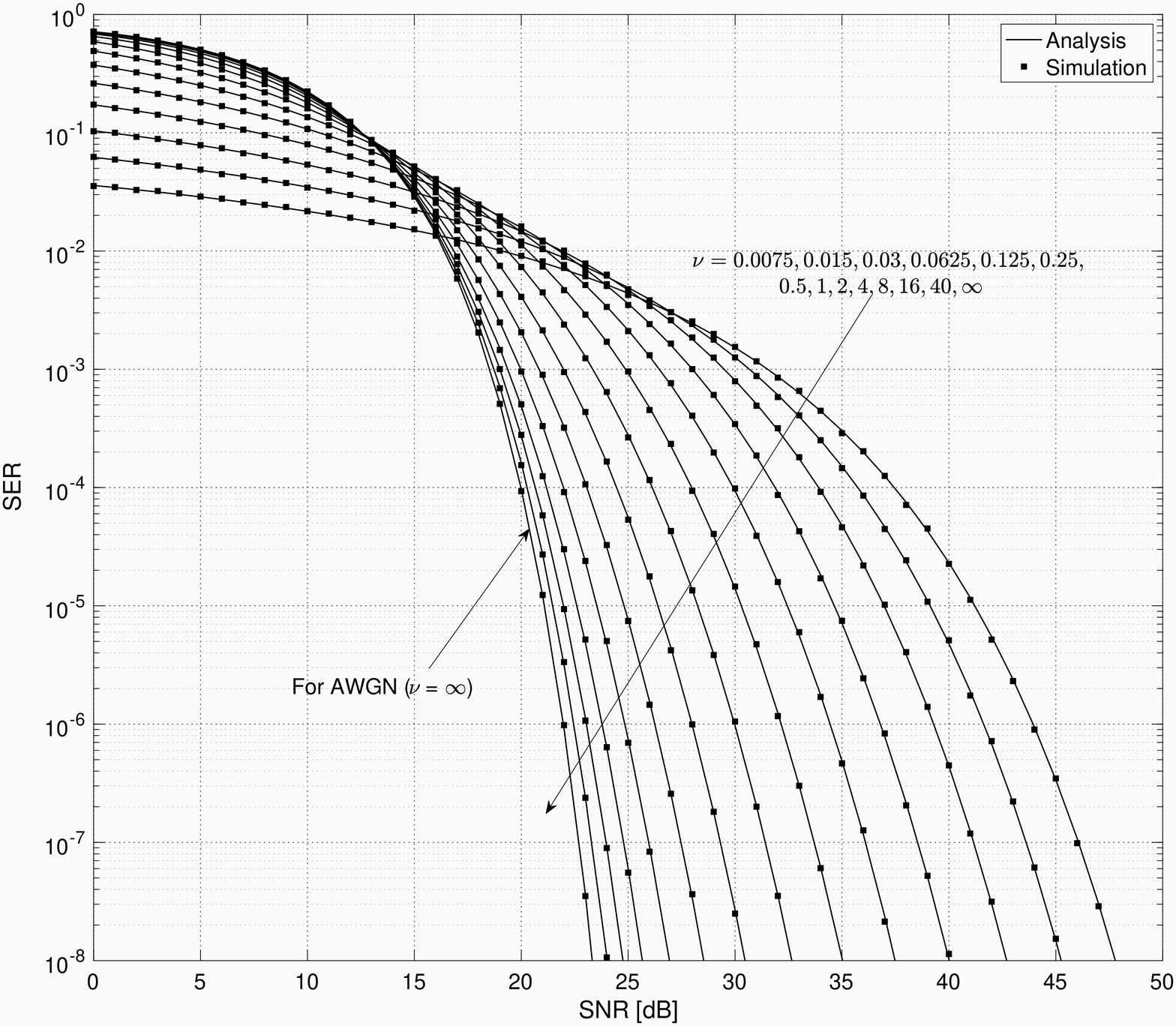}
    \caption{Modulation level $M=8$.}
    \label{Figure:ConditionalSEPForNoncoherentMDPSKModulationC}
\end{subfigure}
{~~~}
\begin{subfigure}{0.47\columnwidth}
    \centering
    \includegraphics[clip=true, trim=0mm 0mm 0mm 0mm, width=1.0\columnwidth,height=0.85\columnwidth]{./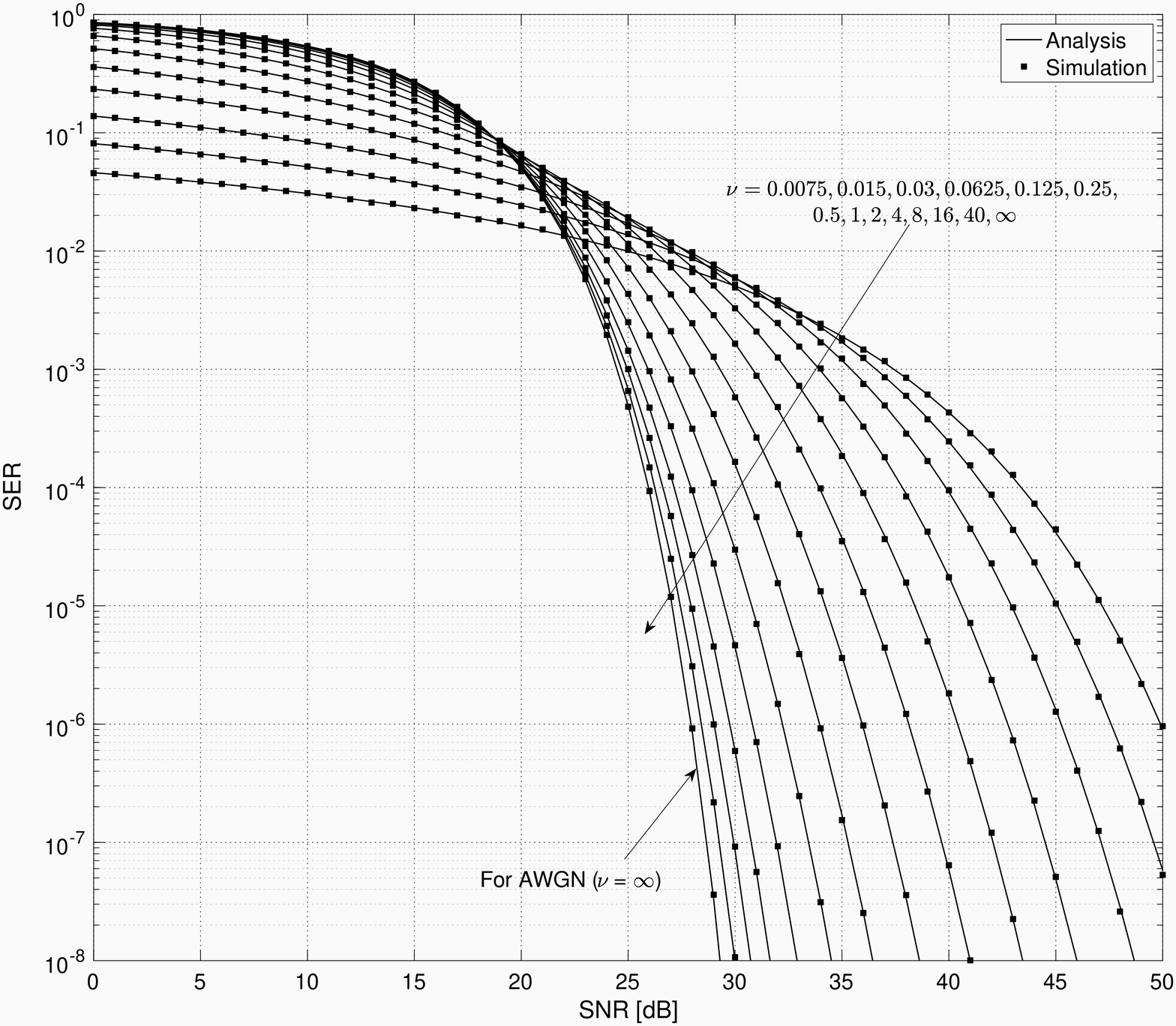}
    \caption{Modulation level $M=16$.}
    \label{Figure:ConditionalSEPForNoncoherentMDPSKModulationD}
\end{subfigure}
\caption{The \ac{SER} of non-coherent \ac{M-DPSK} modulation over \ac{AWMN} channels.}
\label{Figure:ConditionalSEPForNoncoherentMDPSKModulation}
\vspace{-2mm} 
\end{figure*} 

\begin{theorem}
\label{Theorem:MLDecisionErrorProbabilityForMDPSKCoherentSignallingOverAWMNChannels}
The conditional \ac{SER} of the \ac{M-DPSK} signaling with non-coherent \ac{ML} detection is given by
\begin{equation}
\label{Eq:MLDecisionErrorProbabilityForMDPSKCoherentSignallingOverAWMNChannels}
\Pr\{\bigl.e\,|\,H\}=\frac{2}{\pi\Gamma(\nu)\lambda^{\nu}_{0}}
        \int_{0}^{\pi-\frac{\pi}{M}}
        \biggl(
            \frac{2\gamma\sin^2(\frac{\pi}{M})}{1+\cos(\frac{\pi}{M})\cos(\theta)}
        \biggr)^{\!\frac{\nu}{2}}
            {K}_{\nu}\biggl(
                \frac{2}{\lambda_{0}}
                    \sqrt{
                    \frac{2\gamma\sin^2(\frac{\pi}{M})}{1+\cos(\frac{\pi}{M})\cos(\theta)}
                    }\biggr)d\theta,
\end{equation}
where $\gamma\!=\!{H^2E_{\defrvec{S}}}/{N_{0}}$ is the instantaneous \ac{SNR}. 
\end{theorem}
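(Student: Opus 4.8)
The plan is to condition on the impulsive-power variable $G$, thereby reducing the problem to conventional \ac{M-DPSK} over \ac{AWGN}, and then to average over $G$ using the reciprocal moment-generating function already evaluated in \eqref{Eq:ProportionInverseMGF}. First I would invoke the non-coherent \ac{ML} rule of \theoremref{Theorem:NoncoherentUncorrelatedMLDecisionRuleForDifferentialPSK}, by which a decision is made in favour of the symbol whose phase $\phi_m$ is closest to the decision statistic $\Phi$ defined in \eqref{Eq:NoncoherentUncorrelatedMAPDecisionVariableForDifferentialPSK}. Since the \ac{M-DPSK} constellation is rotationally symmetric and the symbols are equiprobable, the conditional \ac{SER} equals $\Pr\{|\Phi|>\pi/M\,|\,H\}$ with the reference symbol $\phi_m=0$ transmitted; it therefore suffices to characterise the distribution of the phase difference $\Phi=\arg(C_2C_1^*)$ of the two complex decision envelopes $C_1$ and $C_2$ recovered over the two consecutive symbol intervals.

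Next I would exploit the decomposition of \theoremref{Theorem:CCSMcLeishDefinition}: conditioned on $G=g$, the envelopes $C_1$ and $C_2$ are mutually independent complex Gaussian variables whose common noise variance is scaled by $g$, so that the effective signal-to-noise ratio is $\gamma/g$. Consequently the conditional error probability $\Pr\{e\,|\,H,g\}$ coincides with the conventional \ac{AWGN} \ac{M-DPSK} error probability evaluated at \ac{SNR} $\gamma/g$. The key analytical ingredient is the single-integral (Pawula/Craig-type) representation $\Pr\{e\,|\,H,g\}=\frac{1}{\pi}\int_0^{\pi-\pi/M}\exp\!\bigl(-\frac{\gamma}{g}\,\frac{\sin^2(\pi/M)}{1+\cos(\pi/M)\cos\theta}\bigr)\,d\theta$, which I would obtain by integrating the phase-difference density of two complex Gaussians over the error wedge $|\Phi|>\pi/M$, using the same amplitude/phase machinery that produced the densities in \theoremref{Theorem:NoncoherentAccurateComponentPDF}.

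Finally I would average over $G\sim\mathcal{G}(\nu,1)$ with the density \eqref{Eq:ProportionPDF}. Interchanging the $\theta$- and $g$-integrations, the inner integral is exactly the reciprocal \ac{MGF} $M_{1/G}(a)=\int_0^\infty e^{-a/g}f_G(g)\,dg$ with $a=\gamma\sin^2(\pi/M)/(1+\cos(\pi/M)\cos\theta)$, which was already evaluated in \eqref{Eq:ProportionInverseMGF}; equivalently, applying \cite[Eq. (3.471/9)]{BibGradshteynRyzhikBook} and substituting $\nu=2/\lambda_0^2$ gives $M_{1/G}(a)=\frac{2}{\Gamma(\nu)\lambda_0^\nu}(2a)^{\nu/2}K_\nu\!\bigl(\frac{2}{\lambda_0}\sqrt{2a}\bigr)$. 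Re-inserting this under the $\theta$-integral collapses the prefactor to $2/(\pi\Gamma(\nu)\lambda_0^\nu)$ and yields \eqref{Eq:MLDecisionErrorProbabilityForMDPSKCoherentSignallingOverAWMNChannels} directly.

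The hard part will be establishing the \ac{AWGN} \ac{M-DPSK} error probability in precisely the stated single-integral form, with the denominator $1+\cos(\pi/M)\cos\theta$ and the finite range $[0,\pi-\pi/M]$: this is the classical and delicate computation of the distribution of the phase difference of two correlated complex Gaussian variables, and every constant (the factor $\sin^2(\pi/M)$, and the $2$ inside the Bessel argument) must be tracked carefully so that the final prefactor and Bessel argument match. Once that representation is secured, the remaining steps are mechanical, and I would verify consistency by letting $\nu\to\infty$, whereupon the $K_\nu$ kernel reduces to the exponential and the expression returns the known \ac{AWGN} \ac{M-DPSK} result.
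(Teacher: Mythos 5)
Your proposal is correct and follows essentially the same route as the paper: condition on the shared impulsive variable $G$ (via \theoremref{Theorem:CCSMcLeishDefinition}), use the single-integral representation for the phase difference of two conditionally Gaussian envelopes, and average over $G\sim\mathcal{G}(\nu,1)$ with \cite[Eq.~(3.478/4)]{BibGradshteynRyzhikBook} to produce the $K_\nu$ kernel. The one step you flag as the hard part is not derived in the paper either — it simply cites Pawula's known result \cite[Eq.~(5)]{BibPawulaCOML1998} for $\Pr\{-\phi<\Phi<\phi\,|\,G\}$, after which the symmetry of the integration range and the $G$-average proceed exactly as you describe.
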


\begin{proof}
According to \eqref{Eq:NoncoherentUncorrelatedMAPDecisionVariableForDifferentialPSK}, the decision variable $\Phi$ is simply defined as the phase difference between ${C}_1\!=\!\defvec{s}^H\defrvec{R}_{1,nc}$ and ${C}_2\!=\!\defvec{s}^H\defrvec{R}_{2,nc}$,~where  
 ${C}_1\!\sim\!\mathcal{CM}_{\nu}(H{e}^{\imaginary\phi_{\Sigma}},{E_{\defrvec{S}}N_{0}}/{4})$ and ${C}_2\!\sim\!\mathcal{CM}_{\nu}(H{e}^{\imaginary(\phi_{\Sigma}+\phi_{m})},{E_{\defrvec{S}}N_{0}}/{4})$ are such two uncorrelated but not independent complex McLeish distributions that, using \theoremref{Theorem:CCSMcLeishDefinition}, their decomposition is written as 
\begin{eqnarray}
  {C}_1&=&\frac{1}{2}HE_{\defrvec{S}}
        \,{e}^{\imaginary\phi_{\Sigma}}+G({X}_{1}+\imaginary{Y}_{1})\\
  {C}_2&=&\frac{1}{2}HE_{\defrvec{S}}
        \,{e}^{\imaginary(\phi_{\Sigma}+\phi_{m})}+G({X}_{2}+\imaginary{Y}_{2}),
\end{eqnarray}
where ${X}_{1}\!\sim\!\mathcal{N}(0,{E_{\defrvec{S}}N_{0}}/{4})$, ${Y}_{1}\!\sim\!\mathcal{N}(0,{E_{\defrvec{S}}N_{0}}/{4})$,  ${X}_{2}\!\sim\!\mathcal{N}(0,{E_{\defrvec{S}}N_{0}}/{4})$ and ${Y}_{2}\!\sim\!\mathcal{N}(0,{E_{\defrvec{S}}N_{0}}/{4})$ are mutually \emph{i.i.d} Gaussian distributions. Further, $G\!\sim\!\mathcal{G}(\nu,1)$ follows the \ac{PDF} given in \eqref{Eq:ProportionPDF}. When conditioned on $G$, both ${C}_1$ and ${C}_2$ follow Gaussian distributions, and hence, $\Phi\!=\!\arg({C}_{2}{C}^{*}_{1})$ conditioned on $G$ is observed as the phase between two independent and identically distributed complex Gaussian distributions. Using \cite[Eq. (5)]{BibPawulaCOML1998}, we have 
\begin{equation}\label{Eq:ConditionalPhaseDistributionProbability}
\!\!\!\!\Pr\bigl\{-\phi<\Phi<\phi\,\bigl|\,G\bigr.\bigr\}=
        1-\frac{1}{2\pi}\int_{\phi-\pi}^{\pi-\phi}
            {e}^{-\frac{\gamma}{G}h(\phi,\theta)}
                d\theta,\!\!
\end{equation}
with $h(\phi,\theta)\!=\!\sin^2(\phi)/(1+\cos(\phi)\cos(\theta))$, where $\gamma$ denotes the instantaneous \ac{SNR} given by $\gamma\!=\!{H^2E_{\defrvec{S}}}/{N_{0}}$. When $\defvec{s}_{m}$ is transmitted, a correct decision is made iff $\phi_{m}-{\pi}/{M}\!<\!\Phi\!<\!\phi_{m}+{\pi}/{M}$ since $\arg(\defvec{s}_{m}\defvec{s}^{*}_{m\pm{1}})\!=\!\pi/M$. With the circularity of complex \ac{AWMN} noise, we notice that $\Pr\{\phi_{m}-{\pi}/{M}<\Phi<\phi_{m}+{\pi}/{M}\}$ and $\Pr\{-{\pi}/{M}<\Phi<{\pi}/{M}\}$ are the same. Hence, we can write the probability of making a correct decision as
\begin{equation}
    \Pr\bigl\{c\,\bigl|\,H,\defvec{s}_{m},G\bigr.\bigr\}=
        \Pr\bigl\{-{\pi}/{M}<\Phi<{\pi}/{M}\Bigr\}.
\end{equation}
Using $\Pr\{e\,|\,H,\defvec{s}_{m},G\}\!=\!1-\Pr\{c\,|\,H,\defvec{s}_{m},G\}$ and \eqref{Eq:ConditionalPhaseDistributionProbability}
and making allowance for the symmetry between the integration from $-(\pi-{\pi}/{M})$ to zero and the integration from zero to $(\pi-{\pi}/{M})$, we have
\begin{equation}\label{Eq:ConditionalMDPSKSymbolErrorProbability}
\!\!\!\!\Pr\bigl\{e\,\bigl|\,H,\defvec{s}_{m},G\bigr.\bigr\}=
        1-\frac{1}{2\pi}\int_{-(\pi-\phi)}^{\pi-\phi}
            {e}^{-\frac{\gamma}{G}h(\pi/M,\theta)}
                d\theta,\!\!
\end{equation}
Noticing $\Pr\bigl\{e\,\bigl|\,H,\defvec{s}_{m},G\bigr.\bigr\}\!=\!\Pr\bigl\{e\,\bigl|\,H,\defvec{s}_{n},G\bigr.\bigr\}$ for all ${m}\!\neq\!{n}$, we can obtain the probability $\Pr\bigl\{e\,\bigl|\,H,G\bigr.\bigr\}$ as follows
\vspace{-2mm}
\begin{subequations}\label{Eq:ConditionalMDPSKErrorProbability}
\begin{eqnarray}
    \label{Eq:ConditionalMDPSKErrorProbabilityA}
    \Pr\{e|H,G\}
        &=&\sum_{m=1}^{M}\Pr\{e|H,\defvec{s}_{m},G\}\Pr\{\defvec{s}_{m}\},\\
    \label{Eq:ConditionalMDPSKErrorProbabilityB}
        &=&\Pr\{e|H,\defvec{s}_{m},G\}.
\end{eqnarray}
\end{subequations}
Hence, the \ac{SER} $\Pr\{e|H\}$ of non-coherent \ac{M-DPSK} signaling over complex \ac{AWMN} channels can be  written as $\Pr\{e|H\}\!=\!\int_{0}^{\infty}\Pr\{e|H,g\}f_{G}(g)dg$, where substituting both \eqref{Eq:ProportionPDF} and \eqref{Eq:ConditionalMDPSKErrorProbability} results in
$\Pr\{e|H\}\!=\!\frac{1}{\pi}\int_{0}^{\pi-{\pi}/{M}}I_{M}(\gamma,\theta)\,d\theta$, where $I_{M}(\gamma,\theta)$ is obtained using \cite[Eq. (3.478/4)]{BibGradshteynRyzhikBook}, that is  
\begin{equation}\label{Eq:MDPSKAuxillaryIntegration}        
    I_{M}(\gamma,\theta)=\frac{2}{\Gamma(\nu)\lambda^{\nu}_{0}}
        \bigl(2\gamma\,{h}_{M}(\pi/M,\theta)\bigl)^{\nu/2}
            {K}_{\nu}\biggl(
                \frac{2}{\lambda_{0}}
                    \sqrt{2\gamma\,{h}(\pi/M,\theta)}\biggr).        
\end{equation}
Finally, using \eqref{Eq:MDPSKAuxillaryIntegration} in $\Pr\{e|H\}$ given above yields \eqref{Eq:MLDecisionErrorProbabilityForMDPSKCoherentSignallingOverAWMNChannels}, which completes the proof of \theoremref{Theorem:MLDecisionErrorProbabilityForMDPSKCoherentSignallingOverAWMNChannels}.
\end{proof}

\vspace{-1mm}
\begin{theorem}
\label{Theorem:MLDecisionErrorProbabilityForBDPSKCoherentSignallingOverAWMNChannels}
The conditional \ac{SER} of the \ac{BDPSK} signaling with non-coherent \ac{ML} detection is given by
\begin{equation}
\label{Eq:MLDecisionErrorProbabilityForBDPSKCoherentSignallingOverAWMNChannels}
\!\!\!\!\Pr\{\bigl.e\,|\,H\}=
    \frac{1}{\Gamma(\nu)\lambda^{\nu}_{0}}
        \bigl(2\gamma\bigr)^{\!\frac{\nu}{2}}
            {K}_{\nu}\biggl(
                \frac{2}{\lambda_{0}}
                    \sqrt{2\gamma}\biggr).
\end{equation}
\end{theorem}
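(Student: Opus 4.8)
The plan is to obtain the \ac{BDPSK} result as the $M=2$ specialization of the general \ac{M-DPSK} expression already established in \theoremref{Theorem:MLDecisionErrorProbabilityForMDPSKCoherentSignallingOverAWMNChannels}, since binary differential phase-shift keying is precisely differential \ac{M-PSK} at modulation level $M=2$. First I would verify that the binary case genuinely fits the hypotheses of that theorem: for $M=2$ the phase transitions are $\phi_{1}=0$ and $\phi_{2}=\pi$, so the two messages are antipodal, the decision boundary sits at $\pi/M=\pi/2$, and the non-coherent \ac{ML} rule of \theoremref{Theorem:NoncoherentUncorrelatedMLDecisionRuleForDifferentialPSK} reduces to comparing $\cos(\Phi)$ against $\cos(\Phi-\pi)$, i.e., to deciding on the sign of $\cos(\Phi)$. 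This confirms that the conditional \ac{SER} integral derived for \ac{M-DPSK} applies verbatim to \ac{BDPSK}.

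Next I would substitute $M=2$ into the integral of \eqref{Eq:MLDecisionErrorProbabilityForMDPSKCoherentSignallingOverAWMNChannels} and simplify the trigonometric factors. Here $\sin^{2}(\pi/M)=\sin^{2}(\pi/2)=1$ and $\cos(\pi/M)=\cos(\pi/2)=0$, so the kernel argument $2\gamma\sin^{2}(\pi/M)/(1+\cos(\pi/M)\cos(\theta))$ collapses to the constant $2\gamma$, entirely independent of the integration variable $\theta$. Consequently both the power factor $(2\gamma)^{\nu/2}$ and the Bessel factor $\BesselK[\nu]{\frac{2}{\lambda_{0}}\sqrt{2\gamma}}$ factor out of the integral.

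Finally, the upper integration limit becomes $\pi-\pi/M=\pi/2$, so the remaining integral $\int_{0}^{\pi/2}d\theta$ contributes only a factor of $\pi/2$; combined with the prefactor $2/(\pi\Gamma(\nu)\lambda_{0}^{\nu})$ this yields $1/(\Gamma(\nu)\lambda_{0}^{\nu})$, reproducing \eqref{Eq:MLDecisionErrorProbabilityForBDPSKCoherentSignallingOverAWMNChannels} exactly. There is no genuine analytical obstacle at this stage: the substantive work — averaging the conditional phase distribution of \eqref{Eq:ConditionalPhaseDistributionProbability} over the Gamma-distributed impulsive factor $G\!\sim\!\mathcal{G}(\nu,1)$ and evaluating the resulting Bessel-type integral via \cite[Eq. (3.478/4)]{BibGradshteynRyzhikBook} — was already carried out in the \ac{M-DPSK} proof. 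The only points demanding care are confirming the $\theta$-independence of the integrand at $M=2$ and the arithmetic cancellation $(2/\pi)\cdot(\pi/2)=1$, both of which are immediate.
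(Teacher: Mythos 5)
Your proposal is correct and follows exactly the paper's own route: the paper proves this theorem by simply invoking \theoremref{Theorem:MLDecisionErrorProbabilityForMDPSKCoherentSignallingOverAWMNChannels} at $M=2$, and your specialization ($\sin^2(\pi/2)=1$, $\cos(\pi/2)=0$ making the integrand constant, upper limit $\pi/2$, and the cancellation $(2/\pi)\cdot(\pi/2)=1$) is precisely the computation the paper leaves implicit. You have merely supplied the details the authors declared obvious, so there is nothing to add.
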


\begin{proof}
The proof is obvious using \theoremref{Theorem:MLDecisionErrorProbabilityForMDPSKCoherentSignallingOverAWMNChannels}.
\end{proof}

For numerical accuracy and analytical validity with respect to \ac{SNR}, normality and modulation levels, we show  in \figref{Figure:ConditionalSEPForNoncoherentMDPSKModulation} the conditional \ac{SER} of non-coherent \ac{M-DPSK} signaling over complex \ac{AWMN} channels, where numerical and simulation-based results are in perfect agreement. We also therein acknowledge that the \ac{SER} performance deteriorates in high-\ac{SNR} regime while negligibly improves in low-\ac{SNR} regime when the impulsive nature of the additive noise increases (i.e., the normality $\nu$ decreases). 

\section{Summary and Conclusions}
\label{Section:SummaryAndConclusion}
In this article, we introduce and investigate a more general additive \mbox{non-Gaussian} distribution that we term as McLeish distribution. We study the basic statistical principles behind the laws of McLeish distribution, not only ranging from non-Gaussian distribution to Gaussian distribution but also starting with the univariate case and continuing through to the multivariate case either in real domains or complex domains. Notably, we propose the following distributions and obtain closed-form \ac{PDF}, \ac{CDF}, \ac{MGF}, and moment expressions for their statistical characterization:
\begin{itemize}
    \item McLeish distribution,
    \item The sum of McLeish distributions,
    \item \ac{CCS}\,/\,\ac{CES} McLeish distribution,
    \item Multivariate McLeish distribution,
    \item Multivariate McLeish distribution with real, symmetric and positive-definite covariance matrix,
    \item Multivariate \ac{CCS}\,/\,\ac{CES} McLeish distribution,
    \item Multivariate \ac{CCS}\,/\,\ac{CES} McLeish distribution with complex, Hermitian symmetric and positive-definite covariance matrix.
\end{itemize}
As a result of \textit{these closed-form expressions, each of which is mathematically tractable and practically (physically) understandable},
we propose the framework of the laws of McLeish distribution for the first time in the literature. Further, we offer solutions to the challenges and problems caused by impulsive effects that lead to the heavy-tailed distribution of non-Gaussian noise. So much so that with this framework, we can obtain \textit{mathematically tractable} results that facilitate the analytical and numerical solutions of many problems in science and engineering. 

Besides, aside from the statistical laws of McLeish distribution, we propose and demonstrate that \textit{the random nature of McLeish distribution can model different impulsive noise environments} commonly encountered in wireless communications. We theoretically justify the existence of McLeish noise distribution in communication systems in case of uncertainty due to that the additive noise distribution has impulsive effects causing the variance of additive noise varies over time. We analyze how these impulsive effects can be reasonably modeled as uncertainty in the variance of additive noise. For the first time in the literature, we use \textit{Allan's variance} to determine the coherence time at which the variance of the additive noise can be considered constant. Concerning this coherence time, we demonstrate how to classify the additive noise channels as i) constant variance, ii) slow-variance uncertainty, and iii) fast-variance uncertainty. Accordingly, we investigate and prove the existence of McLeish noise distribution and show that \textit{the thermal noise in electronic materials follows McLeish distribution rather than Gaussian distribution}. Also, we demonstrate that \textit{\ac{MAI}\,/\,\ac{MUI} follows McLeish distribution rather than Laplacian distribution}. To represent how McLeish distribution can model a wide range of realistic impulsive effects (uncertainty of noise variance), we find out that the \textit{McLeish distribution exhibits a superior fit to the different impulsive noise} from non-Gaussian to Gaussian distribution.

Consequently, as an outcome of modeling \mbox{additive noise} as McLeish distribution, we \textit{present \acf{AWMN} channels} for the first time in the literature. For coherent\,/\,non-coherent signaling over \ac{AWMN} channels, we propose analytical \ac{MAP} and \ac{ML} symbol decision rules for optimum receivers and thereby \textit{obtain closed-form expressions for both \ac{BER} of binary  modulation schemes and \ac{SER} of various M-ary modulation schemes}. We conclude and identify how the non-Gaussian nature of additive noise impacts on the performance of communications systems by using McLeish distribution. Furthermore, we verify the validity and accuracy of our novel \ac{BER}\,/\,\ac{SER} expressions by some selected numerical examples and some computer-based simulations.

\bibliography{IEEEfull,yilmaz_additive_white_mcleish_noise_distribution}
\bibliographystyle{IEEEtran}
\phantomsection
\end{document}